\newtheorem{theorem}{Theorem}[section]
\newtheorem{lemma}[theorem]{Lemma}
\newtheorem{proposition}[theorem]{Proposition}
\newtheorem{corollary}[theorem]{Corollary}
\newtheorem{fact}[theorem]{Observation}
\newtheorem*{repeatedlemma}{Lemma}
\newenvironment{claim-proof}%
{\begin{description}[leftmargin = 0.2cm, labelsep = 0.2cm]
		\item \emph{Proof of Claim:}}{\hfill$\diamond$\end{description}}
\newtheorem{remark}{Remark}%
\newtheorem{problem}[theorem]{Problem}%[section]
\newtheorem{definition}[theorem]{Definition}%
\def\moverlay{\mathpalette\mov@rlay}
\def\mov@rlay#1#2{\leavevmode\vtop{%
		\baselineskip\z@skip \lineskiplimit-\maxdimen
		\ialign{\hfil$\m@th#1##$\hfil\cr#2\crcr}}}
\newcommand{\charfusion}[3][\mathord]{
	#1{\ifx#1\mathop\vphantom{#2}\fi
		\mathpalette\mov@rlay{#2\cr#3}
	}
	\ifx#1\mathop\expandafter\displaylimits\fi}
\DeclareRobustCommand\bigop[1]{%
	\mathop{\vphantom{\sum}\mathpalette\bigop@{#1}}\slimits@
}
\newcommand{\bigop@}[2]{%
	\vcenter{%
		\sbox\z@{$#1\sum$}%
		\hbox{\resizebox{\ifx#1\displaystyle.9\fi\dimexpr\ht\z@+\dp\z@}{!}{$\m@th#2$
		}}%
	}%
}
\newcommand{\cupdot}{\charfusion[\mathbin]{\cup}{\cdot}}
\DeclareMathOperator{\parent}{par}
\DeclareMathOperator{\lca}{lca}
\DeclareMathOperator{\LCA}{LCA}
\DeclareMathOperator{\indeg}{indeg}
\DeclareMathOperator{\outdeg}{outdeg}
\DeclareMathOperator{\cl}{cl}
\DeclareMathOperator{\CC}{\mathtt{C}}
\DeclareMathOperator{\Top}{Top}
\DeclareMathOperator{\child}{child}
\newcommand{\IC}{\mathcal{I}(\mathscr{C})}
\newcommand{\Hasse}[1][]{\mathfrak{H}\ifthenelse{\equal{#1}{}}{}{(#1)}}
\DeclareMathOperator{\expand}{\textsc{expd}}
\DeclareMathOperator{\contract}{\textsc{cntr}}
\DeclareMathOperator{\phylo}{\textsc{phylo}}
\DeclareSymbolFont{tipa}{T3}{cmr}{m}{n}
\DeclareMathAccent{\invbreve}{\mathalpha}{tipa}{16}
\providecommand{\keywords}[1]{\textbf{\textit{Keywords: }} #1}
  \title{Clustering Systems of Phylogenetic Networks}
\author[1,*]{Marc Hellmuth} 
\author[2]{David Schaller}
\author[2-5]{Peter F.\ Stadler}
\affil[1]{Department of Mathematics, Faculty of Science,
	Stockholm University, SE - 106 91 Stockholm,   Sweden \newline 
	\texttt{mhellmuth@mailbox.org}}
\affil[2]{Bioinformatics Group, Department of Computer Science \&
	Interdisciplinary Center for Bioinformatics, Universit{\"a}t Leipzig,
	H{\"a}rtelstra{\ss}e~16--18, D-04107 Leipzig, Germany.}
\affil[3]{Max Planck Institute for Mathematics in the Sciences,
	Inselstra{\ss}e 22, D-04103 Leipzig, Germany}
\affil[4]{Institute for Theoretical Chemistry, University of Vienna,
	W{\"a}hringerstrasse 17, A-1090 Wien, Austria}
\affil[5]{Facultad de Ciencias, Universidad National de Colombia, Sede
	Bogot{\'a}, Colombia}
\affil[6]{Santa Fe Insitute, 1399 Hyde Park Rd., Santa Fe NM 87501,
	USA}
\affil[*]{corresponding author}
\date{\ }
\begin{document}
\sloppy

\maketitle

\abstract{ 
	Rooted acyclic graphs appear naturally when the phylogenetic
	relationship of a set $X$ of taxa involves not only speciations but also
	recombination, horizontal transfer, or hybridization, that cannot be
	captured by trees. A variety of classes of such networks have been
	discussed in the literature, including phylogenetic, level-1, tree-child,
	tree-based, galled tree, regular, or normal networks as models of
	different types of evolutionary processes. Clusters arise in models of
	phylogeny as the sets $\CC(v)$ of descendant taxa of a vertex $v$.  The
	clustering system $\mathscr{C}_N$ comprising the clusters of a network
	$N$ conveys key information on $N$ itself. In the special case of rooted
	phylogenetic trees, $T$ is uniquely determined by its clustering system
	$\mathscr{C}_T$. Although this is no longer true for networks in general,
	it is of interest to relate properties of $N$ and $\mathscr{C}_N$. Here,
	we systematically investigate the relationships of several well-studied
	classes of networks and their clustering systems. The main results are
	correspondences of classes of networks and clustering system of the
	following form: If $N$ is a network of type $\mathbb{X}$, then
	$\mathcal{C}_N$ satisfies $\mathbb{Y}$, and conversely if $\mathscr{C}$
	is a clustering system satisfying $\mathbb{Y}$ then there is network $N$
	of type $\mathbb{X}$ such that $\mathscr{C}\subseteq\mathscr{C}_N$.This,
	in turn, allows us to investigate the mutual dependencies between the
	distinct types of networks in much detail.
}
\smallskip

\noindent 
\keywords{compatibility, level-k, hybrid, evolution, cluster, network,
	phylogenetics, least common ancestor
}

\sloppy

\section{Introduction}

Networks used to model phylogenetic relationships typically are directed
acyclic graphs (DAGs) with a single root, i.e., a unique vertex from which
all other vertices can be reached from. As usual in phylogenetics, the
subset $X$ of vertices without descendants (the leaves of the network)
represents the extant taxa, while the remaining vertices model their
ancestors. In this contribution, we are interested in the relationships
between the structure of networks $N$ with leaf set $X$ and their
associated clustering systems $\mathscr{C}_N$, which contains, for each
vertex $v$ of $N$, the subset $\CC(v)\subseteq X$ of leaves that can be
reached from $v$ \cite{Nakhleh:05,Huson:08}.  In the literature on
phylogenetic networks, the sets $C\in\mathscr{C}_N$ are often called the
``hardwired clusters'' of $N$.

As a special case, there is a well-known 1-to-1 correspondence between
rooted phylogenetic trees $T$ and hierarchies $\mathscr{C}$
\cite{sem-ste-03a}, i.e., clustering systems that do not contain pairs of
overlapping clusters. Therefore, $T$ is uniquely determined by
$\mathscr{C}_T=\mathscr{C}$.  This simple correspondence, however, is no
longer true for (phylogenetic) networks.  Nevertheless, it is not difficult
to find some network $N$ for a given clustering system $\mathscr{C}$ such
that $\mathscr{C}_N= \mathscr{C}$: If suffices to determine the Hasse
diagram $N = \Hasse[\mathscr{C}]$ of the inclusion partial order of the
clustering system $\mathscr{C}$ to obtain such a network.  For a
phylogenetic trees $T$, the Hasse diagram $\Hasse[\mathscr{C}_T]$ and $T$
are isomorphic. For general networks, however, the situation is much more
complicated \cite{Nakhleh:05,Huson:08,Zhang:19}.

A broad array of different types of networks have been studied in the
literature in order to model different modes of non-tree-like evolution
such as horizontal gene transfer, recombination, or hybridization, see
\cite{Kong:22} for a current review.  Naturally, the question arises how
much information about the structure of $N$ is contained in the clustering
system $\mathscr{C}_N$.  We will in particular be concerned with the
following, inter-related questions\smallskip

\begin{enumerate}
\item Which types of networks $N$ satisfy $N\simeq \Hasse[\mathscr{C}_N]$?
\item What are necessary properties of the clustering systems
  $\mathscr{C}_N$ obtained for networks $N$ of a given class?
\item Which types of networks $N$, if any, can be characterized in
  terms of properties of their clustering systems $\mathscr{C}_N$?
\item When is a network $N$ uniquely determined by $\mathscr{C}_N$ or
  by the corresponding multiset of clusters $\mathscr{M}_N$?
\item When is a clustering system $\mathscr{C}$ compatible with a specified
  type of network $N$ in the sense that there is network $N$ of given type
  such that  $\mathscr{C} \subseteq \mathscr{C}_N$? \smallskip
\end{enumerate}

While addressing these questions, we will also consider the implications
between the defining properties of the various network classes. To help the
reader navigate this contribution, we summarize the properties of interest
in Table~\ref{tab:sum-def-N} and point to their formal definitions.
Complementarily, properties of clustering systems are compiled in
Table~\ref{tab:sum-def-C}. Many of the results established here
are summarized in Table~\ref{tab:summary-table}.

It is important to note that the literature on phylogenetic networks does
not always utilize the same nomenclature. In particular, properties such as
\emph{binary}, \emph{separated}, \emph{conventional}, or
\emph{phylogenetic} are -- more often than not -- taken for granted in a
given publication and explicitly or tacitly included in the definition of
``phylogenetic network''.  Here, we start from a very general setting of
rooted DAGs, called ``networks'' throughout. All additional properties are
made explicit throughout. We furthermore strive to prove all statements as
general as possible. The reader will therefore on occasion find results
that are well known in the field, although earlier proofs pertain to a more
restrictive setting.

This paper is organized as follows. In Section \ref{sec:prelim}, we provide
the basic terminology and definitions used throughout this paper. In
Section \ref{sec:NetworksClusteringSystems}, we start with a closer look at
phylogenetic networks (Sect.~\ref{ssec:BasicConcept-N}) and related
concepts which includes graph modifications such as arc-contractions or
expansions (Sect.~\ref{sec:def}) as well as the structural properties of
non-trivial biconnected components (called blocks) in networks
(Sect.\ \ref{sec:blocks}).  We then continue in
Section~\ref{sec:cluster-hasse} to characterize the structure of the Hasse
diagram of clustering systems. In particular, we provide new
characterizations of regular networks, that is, networks that are
isomorphic to the Hasse diagram of some clustering system.

In Section~\ref{sec:semi-reg}, we consider semi-regular networks, i.e.,
networks that do not contain so-called shortcuts and satisfy the
path-cluster-comparability (PCC) property as introduced in
Section~\ref{ssec:PCC}.  (PCC) simply ensures that one of the clusters
$\CC(u)$ and $\CC(v)$ is subsets of the other one whenever the vertices $u$
and $v$ are connected by a directed path in $N$. Although this property
does not seem to have been studied so far, it turns out to play a
fundamental role in the relationships between networks and their clustering
systems.  Regular networks, as it turns out, are precisely the semi-regular
networks that do not contain vertices with outdegree $1$.  In addition, we
show how to obtain regular networks $N'$ from networks $N$ that only
satisfy (PCC) such that $\mathscr{C}_N = \mathscr{C}_{N'}$.  In
Section~\ref{ssec:separated}, we consider separated networks (networks for
which each vertex with indegree greater than $1$ has outdegree $1$) and
cluster networks in the sense of \cite{Huson:08} (whose definition is
somewhat more involved). As we shall see, cluster networks are precisely
the networks that are semi-regular, separated, and phylogenetic.  We then
show in Section \ref{sec:Multi-SemiReg}, that semi-regular networks are
uniquely determined by their multiset of clusters and that, in turn,
cluster networks, as a subclass of regular networks, are uniquely
determined by their clustering systems.
Section~\ref{sec:tree-child-normal-tree-based} makes a short excursion to
so-called tree-child, normal, and tree-based networks and 
their mutual relationships.

In Section~\ref{sec:lcaN-main}, we then have a closer look at the concept
of least common ancestors ($\lca$) in networks. In contrast to rooted
trees, the $\lca$ of a pair of leaves (or more generally, a subset of
leaves) is in general not uniquely defined. We introduce in
Section~\ref{ssec:LCA-N-basic} several classes of networks in which the
$\lca$ is unique for at least certain subsets of leaves.  This leads to the
cluster-lca property (CL) which is satisfied by a network $N$ whenever
$\lca(\CC(v))$ is uniquely determined for all $v\in V(N)$.  We shall see
that every network that satisfies (PCC), and this in particular includes
all normal networks, also satisfy (CL).  In Section~\ref{ssec:lcaN}, we
consider lca-networks, i.e., networks in which $\lca(A)$ is uniquely
determined for all leaf sets $A\subseteq X$. Among other results, we show
that a clustering system $\mathscr{C}$ is closed (under intersection) if
only if it is the clustering systems $\mathscr{C}_N$ of an lca-network.  We
then consider in Section~\ref{ssec:lcaN-strong} the subclass strong
lca-networks, in which, for all $A\subseteq X$, it holds that
$\lca(A)=\lca(\{x,y\})$ for a suitably chosen pair of leaves $x$ and
$y$. These are closely related to weak hierarchies.

A very prominent role in phylogenetics is played by level-1 networks.
Section \ref{sec:LEVEL-1} is devoted to establishing structural results for
level-1 networks and their underlying clustering systems.
After establishing basic results, we derive in Section~\ref{ssec:Level1-L} a
simple condition, called property (L), for clustering systems that is defined
in terms of the intersection of its elements.  As a main result of this
contribution, we obtain in Section~\ref{ssec:level1-char} a simple
characterization of the clustering systems of (phylogenetic, separated)
level-1 network as the ones that are closed and satisfy (L). We then
show in Section \ref{sec:compatibility} that property (L) is sufficient to
ensure that clustering systems are ``compatible'' with a (phylogenetic,
separated) level-1 network.
Moreover, we provide a polynomial-time algorithm to check if
$\mathscr{C}$ is compatible with some level-1 network and, in the
affirmative case, to construct such a network. We finally consider in
Section \ref{sec:specialL1} several subclasses of
level-1 networks as e.g.\ galled trees or binary, conventional, separated
or quasi-binary level-1 networks and characterize
their clustering systems. Finally, we show that quasi-binary level-1
networks are encoded by their multisets of clusters.
In Section \ref{sec:sum}, we provide a summary of the main results (see also
Table \ref{tab:summary-table}).

\section{Preliminaries}
\label{sec:prelim}

The power set of a given set $X$ is denoted by $2^X$. Two sets $A$ and $B$
overlap if $A\cap B\notin \{\emptyset, A,B\}$.

We consider graphs $G=(V,E)$ with vertex set $V(G)\coloneqq V$ and arc set
$E(G)\coloneqq E$. A graph $G$ is \emph{undirected} if $E$ is a subset of
the set of two-element subsets of $V$ and $G$ is \emph{directed} if
$E\subseteq (V\times V)\setminus \{(v,v)\mid v\in V \}$.  Thus, arcs $e\in
E$ in an undirected graph $G$ are of the form $e=\{x,y\}$ and in directed
graphs of the form $e=(x,y)$ with $x,y\in V$ being distinct.  The
\emph{degree} of a vertex $v\in V$ in an undirected or directed graph $G$,
denoted by $\deg_G(v)$, is the number of arcs that are incident with
$v$. If $G$ is directed, we furthermore distinguish the indegree
$\indeg_{G}(v) = \vert \left\{u \mid (u,v)\in E\right\}\vert$ and the
outdegree $\outdeg_{G}(v) = \vert \{u \mid (v,u)\in E\}\vert$.  We write
$H \subseteq G$ if $H$ is a subgraph of $G$ and $G[W]$ for the subgraph in
$G$ that is induced by some subset $W \subseteq V$.  Moreover $G-v$ denotes
the induced subgraph $G[V\setminus \{v\}]$.

A path $P$ in an undirected (resp.\ directed) graph $G$ is a subgraph of
$k\ge 1$ vertices and arcs $\{v_i,v_{i+1}\}\in E$
(resp.\ $(v_i,v_{i+1})\in E$) for all $1\le i\le k-1$.  Moreover,
(directed) paths connecting two vertices $x$ and $y$ are also called
\emph{$xy$-paths}. We will often write \emph{undirected path} for a
subgraph $P$ of a directed graph $G$ that has vertices
$\{v_1,v_2,\dots,v_k\}$, $k\ge 1$, and the \emph{forward arc}
$(v_i,v_{i+1})$ or the corresponding \emph{backward arc} $(v_{i+1},v_i)$
for all $1\le i\le k-1$.  The vertices $v_1$ and $v_k$ in a directed or
undirected path $P$ are the \emph{endpoints} of $P$ and all other vertices
(in $P$) are its \emph{inner vertices}.  A path $P$ with vertices
$\{v_1,v_2,\dots,v_k\}$ in a directed graph G is \emph{induced (in $G$)} if
$(v_i,v_j) \in E(G)$ precisely if $j = i+1$, for all $i\in
\{1,\dots,k-1\}$.

A directed cycle $K$ in a directed graph $G$ is a subgraph with vertices
$\{v_1,v_2,\dots,v_k\}$, $k\ge 2$, and arcs $(v_i,v_{i+1})\in E$ for all
$1\le i\le k-1$ and additionally $(v_k, v_1)\in E$.  In analogy to
undirected paths, an \emph{undirected cycle} $K$ in a directed graph $G$ is
a subgraph with $k\ge 3$ vertices that can be ordered in the form
$\{v_1,v_2,\dots,v_k\}$ such that the forward arc $(v_i,v_{i+1})$ or the
corresponding backward arc $(v_{i+1},v_i)$ for $1\le i\le k-1$, as well as
the forward arc $(v_k,v_1)$ or the backward arc $(v_1,v_k)$ are exactly
the arcs of $K$.

An undirected graph $G=(V,E)$ is \emph{bipartite} if there is a partition
of $V$ into subsets $W$ and $W'$ such that every arc in $G$ connects one
vertex in $W$ to one vertex in $W'$.  If in addition, $x\in W$ and $x'\in
W'$ implies $\{x,x'\}\in E$, then $G$ is \emph{complete bipartite}.

\paragraph{Graph connectivity.}

An undirected graph is \emph{connected} if, for every two vertices $u,v\in
V$, there is a path connecting $u$ and $v$. A directed graph is
\emph{connected} if its underlying undirected graph is connected.  A
connected component of $G$ is a maximal induced subgraph that is connected.
A vertex $v$ is a cut vertex in a graph $G$ if $G[V(G)\setminus\{v\}]$
consists of more connected components than $G$. Similarly, a directed or
undirected arc $(u,v)$ is a cut arc in $G$ if the graph $G'$ with vertex
set $V(G')=V(G)$ and arc set $E(G')=E(G)\setminus\{(u,v)\}$ consists of
more connected components than $G$.

An undirected or directed graph is \emph{biconnected} if it contains no
vertex whose removal disconnects the graph. A \emph{block} of an undirected
or a directed graph is a maximal biconnected subgraph. A block $B$ is
called \emph{non-trivial} if it contains an (underlying undirected)
cycle. Equivalently, a block is non-trivial if it is not a single vertex or
a single arc. An arc that is at the same time a (trivial) block is a cut
arc.  For later reference, we state here the following observations that 
are immediate consequences of the fact that two distinct blocks in a graph 
share at most one vertex \cite[Prop.~4.1.19]{West:01}:
\begin{fact}
  \label{obs:identical-block}
  If two biconnected subgraphs share two vertices, then their union is
  contained in a common block.
\end{fact}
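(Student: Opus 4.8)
The plan is to reduce the whole statement to the block structure of the underlying undirected graph and then invoke the already noted fact that two distinct blocks share at most one vertex. Because biconnectivity of a directed graph is defined through its underlying undirected graph, and a block is a maximal biconnected subgraph, I would first pass to the underlying undirected graph of the ambient graph $G$; this costs nothing, since the two shared vertices, the subgraph containments, and the union $H_1\cup H_2$ are all preserved and the arc directions merely come along for the ride.

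The key intermediate step, and the part that needs the most care, is a lemma: every biconnected subgraph $H$ of $G$ on at least two vertices is contained in a single block of $G$. I would prove this by separating the trivial and non-trivial cases, noting first that a connected biconnected graph on at least two vertices is either a single edge or contains a cycle (an acyclic connected graph on three or more vertices has a cut vertex and is therefore not biconnected, so these two cases are exhaustive). If $H$ is a single edge, that edge belongs to exactly one block, which then contains both of its endpoints. If $H$ contains a cycle, then $H$ is $2$-connected, and I would use two standard facts of block theory: (i) in a $2$-connected graph any two edges lie on a common cycle, and (ii) two edges of $G$ belong to the same block precisely when they coincide or lie on a common cycle. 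Together these force all edges of $H$ into one block $B$; since $H$ is connected with at least one edge, each of its vertices is an endpoint of such an edge and hence also lies in $V(B)$, so $H\subseteq B$.

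With this lemma in hand the statement follows quickly. Let $H_1$ and $H_2$ be the two biconnected subgraphs and let $u,v$ be two vertices they share. By the lemma each $H_i$ is contained in some block $B_i$, and since $u,v\in V(H_i)\subseteq V(B_i)$, both $u$ and $v$ lie in $V(B_1)\cap V(B_2)$. If $B_1$ and $B_2$ were distinct, they would share the two vertices $u$ and $v$, contradicting the fact that two distinct blocks share at most one vertex. Hence $B_1=B_2=:B$, and then $H_1\cup H_2\subseteq B$, which is exactly the assertion.

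The main obstacle is the intermediate lemma that a biconnected subgraph sits inside a single block; once that is available, the appeal to the block-intersection fact is immediate. I would pay particular attention to the degenerate cases, since an isolated cut vertex can belong to several blocks at once; this is precisely why the hypothesis that $H_1$ and $H_2$ share \emph{two} vertices (rather than one) is essential, and it is what rules out the two containing blocks being different.
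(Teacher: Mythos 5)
Your proof is correct and follows essentially the same route as the paper, which justifies this observation as an immediate consequence of the fact (cited from West) that two distinct blocks of a graph share at most one vertex. Your intermediate lemma---that every biconnected subgraph on at least two vertices lies inside some block---is precisely the detail the paper treats as implicit, and your derivation of it via standard block theory (edges in a common block iff on a common cycle) is sound.
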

\begin{fact}\label{obs:biConn-arc-disjoint}
  If $B$ and $B'$ are distinct blocks of a directed graph, then $B$ and
  $B'$ are arc-disjoint.
\end{fact}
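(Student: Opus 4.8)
The plan is to derive this directly from the cited fact that two distinct blocks of a graph share at most one vertex \cite[Prop.~4.1.19]{West:01}, arguing by contradiction. So I would suppose that the distinct blocks $B$ and $B'$ are \emph{not} arc-disjoint and show that this forces them to share two vertices, which is impossible.

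First I would fix a putative common arc $e \in E(B)\cap E(B')$. Here the key (and essentially only) point to verify is that sharing an arc forces sharing \emph{two} vertices. Since we are working in a directed graph $G$ with $E\subseteq (V\times V)\setminus\{(v,v)\mid v\in V\}$, the arc has the form $e=(u,v)$ with $u\neq v$. Because $B$ and $B'$ are subgraphs of $G$ that each contain $e$, each of them must contain both endpoints of $e$; hence $\{u,v\}\subseteq V(B)\cap V(B')$, so $B$ and $B'$ share at least the two distinct vertices $u$ and $v$.

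Then I would invoke the cited fact: two distinct blocks share at most one vertex. This contradicts $|V(B)\cap V(B')|\geq 2$, and therefore no common arc can exist, i.e., $B$ and $B'$ are arc-disjoint.

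There is no real obstacle here; the statement is an immediate corollary, exactly as the surrounding text advertises. The only point deserving a moment's care is the transition from ``shared arc'' to ``two shared vertices,'' which relies on the loop-free convention for directed graphs so that an arc always has two distinct endpoints, together with the fact that a block, being a subgraph, contains the endpoints of every arc it contains. Once that is noted, the cited bound on shared vertices closes the argument.
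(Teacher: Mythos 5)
Your proof is correct, but it takes a slightly different route than the paper's own justification. The paper argues via maximality: if $B$ and $B'$ share an arc, then $B\cup B'$ is biconnected, and since blocks are \emph{maximal} biconnected subgraphs, this forces $B=B'$, contradicting distinctness. You instead reduce directly to the cited vertex-sharing bound: a shared arc $(u,v)$ with $u\neq v$ (by the loop-free convention) yields two shared vertices, contradicting \cite[Prop.~4.1.19]{West:01}. Both arguments are immediate, and indeed the paper introduces this observation as a consequence of exactly the fact you invoke, so your route is arguably the more faithful one to that framing. The trade-off is minor: your version leans on the external proposition and on the observation (which you correctly flag) that a subgraph containing an arc contains both of its endpoints, while the paper's version is self-contained modulo the standard fact that the union of two biconnected subgraphs sharing an arc is again biconnected. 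Either way the statement is established.
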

The latter is justified by the fact that if blocks $B$ and $B'$ share a
common arc, then $B\cup B'$ is biconnected and thus, $B=B'$ since blocks
are always \emph{maximal} biconnected subgraphs.  
We will frequently make use of
\begin{theorem}\textnormal{\cite[Thm.~4.2.4]{West:01}}
  For a graph $G$ with at least three vertices, the following statements are 
  equivalent:
  \begin{enumerate}[noitemsep, nolistsep]
    \item $G$ is biconnected.
    \item For all $x,y\in V(G)$, there are at least two internally 
    vertex-disjoint (undirected) paths connecting $x$ and $y$.
    \item For all $x,y\in V(G)$, there is an (undirected) cycle containing $x$ 
    and $y$.
  \end{enumerate}
\end{theorem}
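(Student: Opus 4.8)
The plan is to establish the three equivalences cyclically, proving $(1)\Rightarrow(2)\Rightarrow(3)\Rightarrow(1)$, since the implications $(2)\Rightarrow(3)$ and $(3)\Rightarrow(1)$ are essentially bookkeeping while the real content lies in $(1)\Rightarrow(2)$. Throughout I would work in the underlying undirected graph, so that ``path'' and ``cycle'' mean undirected path and cycle as defined in the excerpt; the statement is then exactly Whitney's classical characterization of $2$-connected graphs, which is why the paper simply attributes it to \cite[Thm.~4.2.4]{West:01}.

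For $(2)\Rightarrow(3)$: given distinct $x,y$, take two internally vertex-disjoint $x$--$y$ paths $P$ and $Q$ provided by $(2)$. Traversing $P$ from $x$ to $y$ and then $Q$ backwards from $y$ to $x$ yields a closed walk whose only repeated vertices are the endpoints $x,y$; hence $P\cup Q$ is an undirected cycle containing both $x$ and $y$ (in a simple graph at least one of $P,Q$ has an internal vertex, so the cycle has length $\geq 3$). For $(3)\Rightarrow(1)$: condition $(3)$ forces $G$ to be connected, and for any vertex $v$ and any $x,y\in V(G)\setminus\{v\}$ a cycle $C$ through $x$ and $y$ either avoids $v$, or, if $v\in V(C)$, then $C-v$ is still a path containing $x$ and $y$; in both cases $x$ and $y$ remain connected in $G-v$. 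Thus no vertex is a cut vertex and $G$ is biconnected.

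The substantial step is $(1)\Rightarrow(2)$, which I would prove by induction on the distance $d\coloneqq \operatorname{dist}_G(u,v)$. In the base case $d=1$ the edge $uv$ is one path; for the second I would argue that a biconnected graph on $\geq 3$ vertices has no bridge: if $uv$ were a bridge, deleting it would split $G$ into components $C_u\ni u$ and $C_v\ni v$, and since $|V(G)|\geq 3$ one side, say $C_u$, contains a further vertex $w$; then every $w$--$v$ path must use the unique crossing edge $uv$ and hence $u$, so $u$ would be a cut vertex --- a contradiction. Therefore $u$ and $v$ are still joined by a path in $G-uv$, internally disjoint from the edge $uv$. For the inductive step with $d\geq 2$, let $w$ be the neighbour of $v$ on a shortest $u$--$v$ path, so $\operatorname{dist}_G(u,w)=d-1$ and the induction hypothesis yields internally disjoint $u$--$w$ paths $P$ and $Q$; since $G-w$ is connected there is a $u$--$v$ path $R$ avoiding $w$. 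I would then splice these together: if $v$ already lies on $P$ or $Q$ (say on $Q$), then $P$ followed by the edge $wv$ and the initial segment of $Q$ from $u$ to $v$ are the two desired paths; otherwise, letting $x$ be the last vertex of $R$ lying on $P\cup Q$ (WLOG $x\in V(P)$, and $x\neq w$ because $w\notin V(R)$), the path running along $P$ from $u$ to $x$ and then along $R$ from $x$ to $v$, together with the path running along $Q$ from $u$ to $w$ and then across the edge $wv$, are two internally disjoint $u$--$v$ paths.

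The main obstacle is precisely the bookkeeping in this last splicing: one must verify in each case that the two constructed paths share no internal vertex. This hinges on (i) the internal disjointness of $P$ and $Q$, (ii) the maximality in the choice of $x$, which guarantees that the tail of $R$ beyond $x$ meets $P\cup Q$ only at $x$, and (iii) the observations that $x\neq w$ and that $w$ does not lie on the initial segment of $P$ from $u$ to $x$. Since the theorem is used here only as a cited tool, one could alternatively omit the argument; but the induction above is the standard self-contained route.
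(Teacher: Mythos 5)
Your proof is correct. Note, however, that the paper does not prove this statement at all: it is quoted verbatim as an external result, \cite[Thm.~4.2.4]{West:01} (Whitney's characterization of $2$-connected graphs), so there is no internal proof to compare against. What you have written is essentially the standard textbook argument that the cited source itself uses: the easy implications $(2)\Rightarrow(3)$ and $(3)\Rightarrow(1)$, and then $(1)\Rightarrow(2)$ by induction on $\operatorname{dist}_G(u,v)$, with the splicing via the last vertex $x$ of $R$ on $P\cup Q$. Your bookkeeping in the splicing step is sound: internal disjointness follows from (i) the internal disjointness of $P$ and $Q$, (ii) the choice of $x$ as the \emph{last} such vertex, and (iii) $x\neq w$ and $w\notin P[u..x]$, exactly as you list. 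Two small points worth making explicit if this were to replace the citation: in the base case and inductive step you implicitly use that ``biconnected'' entails connected (so that distances are defined and $G-w$ is connected), which matches the intended reading of the paper's definition; and in statement (2) the pair $x,y$ should be taken distinct, the case $x=y$ being vacuous.
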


\begin{corollary}
  \label{cor:block-cycle}
  Any two vertices of a non-trivial block $B$ lie on a common (undirected)
  cycle in $B$.
\end{corollary}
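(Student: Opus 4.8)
The plan is to obtain the statement as an immediate consequence of the preceding theorem \cite[Thm.~4.2.4]{West:01} applied with $G=B$. The only thing that genuinely needs checking is that its hypotheses are met for a non-trivial block, namely that $B$ is biconnected and that it has at least three vertices.

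First I would note that $B$ is biconnected simply by the definition of a block (a maximal biconnected subgraph). Next, since $B$ is \emph{non-trivial}, it contains an (underlying undirected) cycle $K$; by the definition of an undirected cycle in a directed graph, $K$ has $k\ge 3$ vertices, and therefore $|V(B)|\ge 3$. This is the single place where non-triviality enters, and it is exactly what supplies the three-vertex requirement of the theorem.

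With both hypotheses verified, the equivalence of statements (1) and (3) of the theorem yields, for every pair $x,y\in V(B)$, an undirected cycle containing both $x$ and $y$. Because the theorem is applied to the graph $B$ itself, any such cycle is a subgraph of $B$ and hence lies in $B$, which is precisely the assertion. The argument has no real obstacle; the only subtlety worth stating is the need for $|V(B)|\ge 3$, which fails for trivial blocks (single vertices or single arcs) and is exactly why the hypothesis of non-triviality cannot be dropped.
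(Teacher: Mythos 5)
Your proof is correct and matches the paper's intent exactly: the corollary is stated as an immediate consequence of the cited theorem \cite[Thm.~4.2.4]{West:01}, applied to $B$ itself, with biconnectivity coming from the definition of a block and the three-vertex hypothesis coming from non-triviality (an undirected cycle in a directed graph has $k\ge 3$ vertices by the paper's definition). Nothing is missing.
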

The following well-known result will also be useful throughout:
\begin{proposition}\textnormal{\cite[Prop.~3.1.1]{Diestel:17}}
  \label{prop:H-path}
  Let $H$ be a biconnected subgraph of $G$ and $P$ be a path in $G$ that only
  shares its endpoints with $H$. Then the subgraph obtained by adding $P$ to
  $H$ is again biconnected.
\end{proposition}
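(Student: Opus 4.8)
The plan is to show that $H'\coloneqq H\cup P$ satisfies the paper's definition of biconnectedness directly, namely that no single vertex of $H'$ is a cut vertex. Write $x,y$ for the two endpoints of $P$; both lie in $V(H)$, and I may assume $x\neq y$ and that $P$ has at least one inner vertex or is a single arc joining $x$ and $y$ (if $P$ is the trivial one-vertex path it contributes nothing, so $H'=H$ is biconnected by hypothesis). By assumption the inner vertices of $P$ lie outside $H$, so $V(H')=V(H)\cup I$, where $I$ denotes the (possibly empty) set of inner vertices of $P$ and $I\cap V(H)=\emptyset$. I would then fix an arbitrary $w\in V(H')$ and argue that $H'-w$ is connected, distinguishing whether $w$ lies in $I$ or in $V(H)$.

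If $w\in I$ is an inner vertex of $P$, then deleting it breaks $P$ into two (possibly empty) subpaths, one incident to $x$ and one incident to $y$. Each subpath remains attached to $H$ through its respective endpoint, and since $H\subseteq H'-w$ is itself connected, every vertex of $H'-w$ is connected to $H$; hence $H'-w$ is connected. If instead $w\in V(H)$, then $H-w$ is connected because $H$ is biconnected (this is exactly its defining property; the only point to check is that $V(H)\setminus\{w\}$ is nonempty, which holds since $x,y\in V(H)$ are distinct, so $\lvert V(H)\rvert\ge 2$). The inner vertices of $P$ form a path whose two ends attach to $H$ at $x$ and $y$, and since $x\neq y$ at least one of them differs from $w$ and survives in $H-w$; through that surviving endpoint the entire inner part of $P$ stays connected to $H-w$. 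Thus $H'-w$ is connected in this case as well. As $w$ was arbitrary, $H'$ has no cut vertex and is therefore biconnected.

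The argument is essentially bookkeeping, and the only genuinely delicate points—where I would be most careful—are the degenerate boundary cases. I must ensure the conclusion is drawn against the paper's definition of biconnected (\emph{no vertex whose removal disconnects the graph}) rather than the three-vertex cycle characterization of \cite[Thm.~4.2.4]{West:01}, so that the proof also covers small $H$, for instance $H$ a single arc, where $H'$ is just a cycle. I also rely on $x\neq y$ precisely to guarantee that $H-w$ is nonempty and connected for every $w\in V(H)$; this is where the hypothesis that $P$ shares only its two distinct endpoints with $H$ is used. An equivalent but heavier route would invoke Corollary~\ref{cor:block-cycle} to obtain a cycle through $x$ and $y$ inside $H$, regard $P$ as a third internally vertex-disjoint $xy$-path, and then exhibit two internally vertex-disjoint paths between any pair of vertices of $H'$. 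I prefer the cut-vertex formulation, since it dispenses with the $\lvert V(H)\rvert\ge 3$ assumption and treats all cases uniformly.
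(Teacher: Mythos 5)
Your proof is correct. There is nothing in the paper to compare it against line by line: the proposition is imported from Diestel [Prop.~3.1.1] (where it is the building block of the ear-decomposition characterization of $2$-connected graphs), and the paper gives no proof of its own. Your route --- verifying directly that $H'-w$ stays connected for every vertex $w$ of $H'\coloneqq H\cup P$, split according to whether $w$ is an inner vertex of $P$ or a vertex of $H$ --- is essentially the standard textbook argument, but it buys something the bare citation does not: Diestel's statement lives in the setting of $2$-connectedness, which presupposes at least three vertices, whereas the paper's notion of \emph{biconnected} also admits the trivial blocks (a single vertex or a single arc). Your explicit handling of the degenerate cases ($P$ a one-vertex path, $\lvert V(H)\rvert\ge 2$ forced by $x\ne y$, $P$ a single arc with no inner vertices) makes the statement valid in exactly the generality in which the paper states and uses it. Two small points you could make explicit. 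First, the hypothesis that ``$P$ only shares its endpoints with $H$'' must be read as ``both endpoints of $P$ lie in $H$, and no other vertex of $P$ does'' --- the reading you adopt; under the weaker reading in which merely $V(P)\cap V(H)$ is contained in the set of endpoints, the conclusion fails (a pendant path attached at one vertex creates a cut vertex), so it is worth stating that the proposition forces this interpretation. Second, the paper applies the proposition to directed graphs, where connectivity and biconnectivity refer to the underlying undirected graph; your argument only ever speaks about undirected connectivity, so it transfers verbatim, but a sentence saying so would make the applicability to networks immediate.
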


\paragraph{Directed acyclic graphs.}

A directed graph $G=(V,E)$ is \emph{acyclic} if it does not contain a
directed cycle. In particular, every undirected cycle in a directed acyclic
graph (DAG) contains at least one forward and one backward arc.  In a DAG
$G$, a vertex $u\in V$ is called an \emph{ancestor} of $v\in V$ and $v$ a
\emph{descendant} of $u$, in symbols $v \preceq_G u$, if there is a
directed path (possibly reduced to a single vertex) in $G$ from $u$ to
$v$. We write $v \prec_G u$ if $v \preceq_N u$ and $u\neq v$. If $u
\preceq_N v$ or $v \preceq_G u$, then $u$ and $v$ are
\emph{$\preceq_G$-comparable} and otherwise,
\emph{$\preceq_G$-incomparable}. Moreover, if $(u,v)\in E$, we say that $u$
is a \emph{parent} of $v$, $u\in\parent_{G}(v)$, and $v$ is a \emph{child}
of $u$, $v\in \child_{G}(u)$.  Following \cite{Huber:19}, we call a vertex
$v$ that is $\preceq_{G}$-minimal in a block $B$ a \emph{terminal
vertex} (of $B$). Note that every terminal vertex $v$ of a non-trivial
block $B$ must always have indegree at least $2$ since, by
  Cor.~\ref{cor:block-cycle}, $v$ lies on some undirected cycle in $B$ and,
  by $\preceq_{G}$-minimality of $v$ in $B$, its two incident vertices on
  this cycle must both be in-neighbors. Below we will consider DAGs
  in which terminal vertices are a type of so-called hybrid vertices.

An arc $(u,w)$ in a DAG $G$ is a \emph{shortcut} if there is a vertex
$v\in\child(u)\setminus\{w\}$ such that $w\prec_G v$ (or, equivalently, if
there is a vertex $v'\in V(G)$ such that $w\prec_G v'\prec_G u$).  A DAG
without shortcuts is \emph{shortcut-free}.
\begin{fact}
  \label{obs:shortcut}
  Let $G$ be a DAG. The following statements are equivalent:
  \begin{enumerate}
    \item $N$ is shortcut-free.
    \item For all $u\in V(N)$, $v,w\in\child_G(u)$ are
      $\preceq_G$-comparable if and only if $v=w$.
    \item For all $u\in V(N)$, $v,w\in\parent_G(u)$ are
      $\preceq_G$-comparable if and only if $v=w$.
  \end{enumerate}
\end{fact}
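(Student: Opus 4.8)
The plan is to establish the two equivalences $(1)\Leftrightarrow(2)$ and $(1)\Leftrightarrow(3)$ separately, in each case arguing by contraposition and exploiting one of the two equivalent descriptions of a shortcut supplied in the definition. I note first that in both (2) and (3) the implication ``$v=w\Rightarrow\preceq_G$-comparable'' is trivial because $v\preceq_G v$; hence the whole content of these conditions is that distinct children (resp.\ distinct parents) of a common vertex are $\preceq_G$-incomparable precisely when $G$ is shortcut-free.

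For $(1)\Leftrightarrow(2)$ I would use the first form of the definition, namely that $(u,w)$ is a shortcut iff some $v\in\child_G(u)\setminus\{w\}$ satisfies $w\prec_G v$. If (2) fails, some vertex $u$ has two distinct $\preceq_G$-comparable children; relabelling so that $w\prec_G v$, the arc $(u,w)$ (which exists since $w\in\child_G(u)$) is witnessed as a shortcut by $v$, so (1) fails. Conversely, a shortcut $(u,w)$ directly exhibits the distinct children $w$ and $v$ of $u$ with $w\prec_G v$, i.e.\ a comparable pair, so (2) fails. This half is essentially immediate from unravelling the definition.

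For $(1)\Leftrightarrow(3)$ I would switch to the second, ``path'' form: $(u,w)$ is a shortcut iff there is a $v'$ with $w\prec_G v'\prec_G u$. If (3) fails, some vertex $x$ has two distinct comparable parents $p,q$, say $q\prec_G p$; since $(q,x)$ is an arc we have $x\prec_G q$, hence $x\prec_G q\prec_G p$, which certifies the arc $(p,x)$ as a shortcut. For the converse, suppose $(u,w)$ is a shortcut with $w\prec_G v'\prec_G u$. The arc $(u,w)$ gives $u\in\parent_G(w)$, and the nontrivial directed path realising $w\prec_G v'$ has a last arc $(q,w)$, so $q\in\parent_G(w)$ with $q\preceq_G v'\prec_G u$, whence $q\prec_G u$. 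Thus $w$ has two distinct $\preceq_G$-comparable parents $q$ and $u$, so (3) fails.

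The genuinely delicate points are bookkeeping rather than conceptual: I must keep the orientation of $\preceq_G$ straight (a parent is an ancestor, a child a descendant), verify that the two vertices produced are really distinct (which follows from strictness of $\prec_G$ together with acyclicity, so that $q\prec_G u$ forces $q\neq u$), and, in the converse for (3), extract a parent of $w$ from the interior of the path witnessing $w\prec_G v'$ rather than from $v'$ itself (the degenerate case where the path is a single arc, giving $q=v'$, must be checked but causes no trouble). The asymmetry of using the child-form for (2) and the path-form for (3) is exactly what makes both halves fall out cleanly, and I expect no substantial obstacle beyond this careful handling of directions.
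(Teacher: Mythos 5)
Your proof is correct: both equivalences follow exactly as you argue, and your handling of the distinctness and orientation issues (including extracting the last arc $(q,w)$ of the path witnessing $w\prec_G v'$) is sound. The paper states this result as an Observation with no proof at all, treating it as immediate from the definition of a shortcut, and your argument is precisely the definition-unravelling -- using the child-form for~(2) and the path-form for~(3) -- that the paper takes for granted.
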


\section{Networks and Clustering Systems}
\label{sec:NetworksClusteringSystems}

\subsection{Basic Concepts}
\label{ssec:BasicConcept-N}
We define (phylogenetic) networks here
as a slightly more general class of
DAGs than what is customarily considered in most of the literature on the
topic.
\begin{definition}
  \label{def:N}
  A \emph{(rooted) network} is a directed acyclic graph $N=(V,E)$ such that
  \begin{itemize}
  \item[(N1)] There is a unique vertex $\rho_N$, called the \emph{root}
    of $N$, with $\indeg(\rho_N)=0$.
  \end{itemize}
  A network is \emph{phylogenetic} if
  \begin{itemize}
  \item[(N2)] There is no vertex $v\in V$ with
    $\outdeg(v)=1$ and $\indeg(v)\le 1$.
  \end{itemize}
  A vertex with $v\in V$ is a \emph{leaf} if $\outdeg(v)=0$, a
  \emph{hybrid vertex} if $\indeg(v)>1$, and \emph{tree vertex} if
  $\indeg(v)\le 1$. The set of leaves is denoted by $X$.
\end{definition}
We note that a leaf $x\in X$ is always either a hybrid vertex or a tree
vertex. A leaf $x\in X$ is a \emph{strict descendant} of $v\in V$ if every
directed path from the root $\rho_N$ to $x$ contains $v$. In contrast to
the even more general definition \cite[Def.~3]{Huson:11}, we use the term
``phylogenetic'' here to mean that vertices with indegree $1$ and outdegree
$1$ do not appear. Moreover, the root is either the single leaf or has
$\outdeg(\rho_N)\ge 2$. Rooted phylogenetic networks thus generalize rooted
phylogenetic trees. Since the root is an ancestor of all vertices, $N$ is
connected.  We emphasize that all networks considered here are rooted and
thus, we always use the term ``network'' instead of ``rooted network''.

For a vertex $v$ of $N$, the \emph{subnetwork $N(v)$ of $N$ rooted at $v$},
is the network obtained from the subgraph $N[W]$ induced by the vertices in
$W\coloneqq \{w\in V(N)\mid w\preceq_N v\}$ and by suppression of $w$ if it
has indegree $0$ and outdegree $1$ in $N[W]$ or hybrid-vertices of $N$ that
have in- and outdegree $1$ in $N[W]$.

\begin{table}[t]\footnotesize
  \caption{Summary of networks considered in this paper.}
  \label{tab:sum-def-N}
  \setlength{\tabcolsep}{8pt} % Default value: 6pt
  \renewcommand{\arraystretch}{1.5} % Default value: 1
  \begin{tabular}{p{2.cm}|p{7cm}l}
    \multicolumn{2}{l}{\em Network $N$ is/satisfies}  & Ref. \\  \hline
    \emph{tree (level-$0$)} & $N$ does not contain hybrid vertices & \\
    \emph{shortcut-free} & $N$ does not contain shortcuts & \\
    \emph{phylogenetic} & cf.\ Def.~\ref{def:N} (N2) &
    Def.~\ref{def:N} \\
    \emph{separated} & all hybrid vertices of $N$ have outdegree
    $1$.& Def.~\ref{def:sep}\\
    \emph{binary} &   every tree vertex 	 $v$ is either a leaf or
    has $\outdeg(v)=2$, and every hybrid vertex $v$
    satisfies $\indeg(v)=2$ and $\outdeg(v)=1$. &
    Def.~\ref{def:binary} \\
    \emph{level-$k$} & each block $B$ of $N$ contains at
    most $k$ hybrid vertices (distinct from the ``root''
    of $B$) & Def.~\ref{def:level-k} \\
    \emph{regular} & there is a prescribed isomorphism between
    $N$ and the Hasse diagram $\Hasse[\mathscr{C}_N]$ of the clusters in $N$.
    &Def.~\ref{def:regular-N}\\
    \emph{path-cluster-comparability (PCC)} &for all $u,v\in V(N)$,
    $u$ and $v$ are $\preceq_N$-comparable if and only if
    $\CC(u)\subseteq \CC(v)$ or $\CC(v)\subseteq \CC(u)$.
    & Def.~\ref{def:PCC}\\
    \emph{semi-regular} & $N$ is shortcut-free and satisfies
    (PCC).& Def.~\ref{def:semi-regular}\\
    \emph{cluster network}  & $N$ satisfies (PCC), and, three additional
    properties based on the clusters and respective vertices in $N$ &
    Def.~\ref{def:cluster-network}\\
    \emph{tree-child} & for every
    $v\in V^0(N)$, there is a ``tree-child'', i.e., $u\in\child(v)$ with
    $\indeg(u)=1$. & Def.~\ref{def:tree-child}\\
    \emph{normal} & $N$ is tree-child and shortcut-free. &
    Def.~\ref{def:normal}\\
    \emph{tree-based} &
    there is a base tree $T$ of $N$ that can be obtained from $N$ in a
    prescribed manner & Def.~\ref{def:tree-based}\\
    \emph{cluster-lca (CL)} & $\lca(\CC(v))$ is defined for all $v\in
    V(N)$ & Def.~\ref{def:CL}\\
    \emph{lca-network} & $\lca(A)$ is well-defined,
    i.e., if $\vert\LCA(A)\vert=1$ for all non-empty subsets $A\subseteq X$.
    & Def.~\ref{def:lcaN}\\
    \emph{strong lca-network} & $N$ is an lca-network and, for every
    non-empty subset $A\subseteq X$, there are $x,y\in A$ such that
    $\lca(\{x,y\})=\lca(A)$. & Def.~\ref{def:stronglca}\\
    \emph{galled tree} & every non-trivial block in $N$ is an
    (undirected) cycle. & Def.~\ref{def:galledtree}\\
    \emph{conventional} & (i) all leaves have indegree at
    most $1$ and (ii) every hybrid vertex is contained in a unique
    non-trivial block.
    & Def.~\ref{def:conventional}\\
    \emph{quasi-binary} &
    $\indeg_N(w)=2$ and $\outdeg_{N}(w)=1$ for every hybrid vertex $w\in V(N)$
    and,  additionally, $\outdeg_N(\max B) = 2$
    for every non-trivial block $B$ in $N$. 	&Def.~\ref{def:quasi-bin} \\
    \hline %		\multicolumn{3}{l}{}
  \end{tabular}
\end{table}

\begin{table}[t]\footnotesize
  \caption{Properties of clustering systems considered in this paper.}
  \label{tab:sum-def-C}
  \setlength{\tabcolsep}{8pt} % Default value: 6pt
  \renewcommand{\arraystretch}{1.4} % Default value: 1
  \begin{tabular}{p{2.cm}|p{7cm}l}
    \multicolumn{2}{l}{\em Clustering system $\mathscr{C}$ is/satisfies}  &
    Ref. \\  \hline
    \emph{hierarchy} & for all $C,C'\in\mathscr{C}$, it holds
    $C\cap C'\in\{\emptyset,C,C'\}$. & Def.~\ref{def:Csys}~\\
    \emph{closed} & $\mathcal{C}$ is closed under intersection, i.e.,
    $\bigcap_{C\in \mathscr{C}'} C \in \mathscr{C}\cupdot\{\emptyset\}$ holds
    for all $\mathscr{C}'\subseteq \mathscr{C}$. & Def.~\ref{def:closed}  \\
    \emph{pre-binary} & for every pair $x,y\in X$, there is a unique
    inclusion-minimal cluster $C$ such that $\{x,y\}\subseteq C$ &
    Def.~\ref{def:prebinary} \\
    \emph{binary} & pre-binary and, for every
    $C\in\mathscr{C}$, there is a pair of vertices $x,y\in X$ such that $C$
    is the unique inclusion-minimal cluster containing $x$ and $y$. &
    Def.~\ref{def:wH-and-binary} \\
    \emph{weak hierarchy} & for all $C_1,C_2,C_3\in\mathscr{C}$, it holds
    $C_1\cap C_2\cap C_3 \in\{C_1\cap C_2,C_1\cap C_3, C_2\cap
    C_3,\emptyset\}$. & Def.~\ref{def:wH-and-binary} \\
    \emph{(L)} & $C_1\cap C_2=C_1\cap C_3$ for all $C_1,C_2,C_3\in\mathscr{C}$
    where $C_1$ overlaps both $C_2$ and $C_3$. & Def.~\ref{def:L} \\
    \emph{(N3O)} & $\mathscr{C}$ contains no three distinct pairwise
    overlapping clusters. & Def.~\ref{def:N3O} \\
    \emph{paired hierarchy} & every $C\in\mathscr{C}$ overlaps with at most one
    other
    cluster in $\mathscr{C}$. & Def.~\ref{def:pairedH}\\
    \emph{(2-Inc)} & for all
    clusters $C\in\mathscr{C}$, there are at most two inclusion-maximal
    clusters $A,B\in\mathscr{C}$ with $A,B\subsetneq C$ and at most two
    inclusion-minimal clusters $A,B\in\mathscr{C}$ with $C\subsetneq A,B$. &
    Def.~\ref{def:2-Inc} \\
    \hline
%    \multicolumn{3}{l}{}
  \end{tabular}
\end{table}

A network $N$ with leaf set $X$ is often called a \emph{network on $X$}.
Two networks $N_1 =(V_1,E_1)$ and $N_2=(V_2,E_2)$ on $X$ are \emph{graph
isomorphic}, in symbols $N_1\sim N_2$, if there is a \emph{graph
isomorphism}, i.e., a bijection $\varphi\colon V_1\to V_2$ such that
$(u,v)\in E_1$ if and only if $(\varphi(u),\varphi(v))\in E_2$ for all
$u,v\in V_1$. Moreover, if additionally $N_1$ and $N_2$ are networks on the
same leaf set $X$, we say that $N_1$ and $N_2$ are \emph{isomorphic} in
symbols $N_1\simeq N_2$ if $N_1\sim N_2$ (via the graph isomorphism
$\varphi$) and $\varphi(x)=x$ for all $x\in X$.  We say that a network $N$
on $X$ is \emph{unique} w.r.t.\ some property (or some set of properties),
if $N\simeq N'$ for every network $N'$ that also satisfies the desired
property (or properties).

Many studies into phylogenetic networks require that reticulation events
and speciation events are separated, i.e., $\outdeg(v)=1$ for all hybrid
vertices.
\begin{definition}\label{def:sep}
  A network $N$ is \emph{separated} if all hybrid vertices have outdegree
  $1$.
\end{definition}
In particular, all leaves have indegree $1$ in a separated network (or
indegree zero if the network consists of a single vertex).

The properties \emph{phylogenetic} and \emph{separated} are part of
the definition of networks in many publications in the field, see
e.g.\ \cite{Jetten:18,Pons:19, Zhang:19}.  However, we opted for the more
general definition of networks for several reasons. On the one hand, we aim
to explore which restrictions are actually needed to establish the
relationship of different properties or classes of networks. On the other
hand, separated networks do not include regular networks \cite{Baroni:05},
which are, as we shall see, a class of networks that is intimately linked
with clustering systems.

An even more restrictive class of networks that is often considered are
binary networks \cite{Gambette:12,Bordewich:2016if, Kong:22}:
\begin{definition}\label{def:binary}
  A network is \emph{binary} if every tree vertex $v$ is either a leaf or
  has $\outdeg(v)=2$, and every hybrid vertex $v$ satisfies $\indeg(v)=2$
  and $\outdeg(v)=1$.
\end{definition}
By construction, binary networks are always phylogenetic and separated.

Throughout this paper, several other properties and distinct classes of
networks are considered.  For convenience, all these types are listed in
Table~\ref{tab:sum-def-N}.  More formal definitions or more precise
explanations are given in the remainder of the paper.  A further essential
ingredient to our paper are clusters and clustering systems as defined
next.

\begin{definition}
  \label{def:cluster}
  Let $N$ be a network with vertex set $V$, leaf set $X$ and partial order
  $\preceq_N$. Then, for each $v\in V$, the associated \emph{cluster} is
  $\CC(v)\coloneqq \CC_N(v)\coloneqq \{x\in X\mid x\preceq_N v\}$.
  Furthermore, we write $\mathscr{C}\coloneqq\mathscr{C}_N\coloneqq\{
  \CC(v)\mid v\in V\}$.
\end{definition}
Note, that $\CC(v)=\CC(w)$ may be possible for distinct $v,w\in V$.  However,
$\mathscr{C}$ is considered as a \emph{set} and thus, each cluster appears
only once in $\mathscr{C}_N$.  The clusters in $\mathscr{C}_N$ are usually
called the \emph{hardwired clusters} of $N$, see e.g.\ \cite{Huson:11}.

\begin{definition} \cite{Barthelemy:08,sem-ste-03a}
  A \emph{clustering system} on $X$ is a set $\mathscr{C}\subseteq 2^X$
  such that (i) $\emptyset\notin\mathscr{C}$, (ii) $X\in\mathscr{C}$, and
  (iii) $\{x\}\in\mathscr{C}$ for all $x\in X$.  A clustering system is a
  \emph{hierarchy} if it does not contain pairwise overlapping sets.
  \label{def:Csys}
\end{definition}

\begin{figure}[t]
  \begin{center}
    \includegraphics[width=0.65\textwidth]{./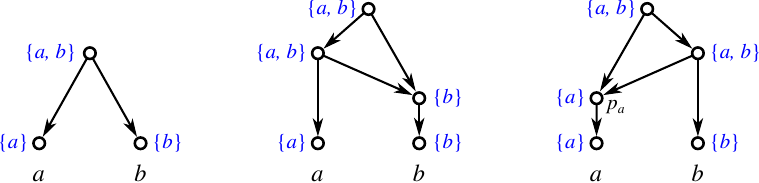}
  \end{center}
  \caption{Three non-isomorphic (binary) level-1 networks with the
    same clustering system $\mathscr{C} =
    \{\{a\},\{b\},\{a,b\}\}$. While they are indistinguishable in
    terms of their clustering systems, they are encoded by their multiset
    of clusters, see Thm.~\ref{thm:level-1-alt}, i.e., they are
    distinguished by the multiplicities of the clusters $\{a\}$, $\{b\}$,
    and $\{a,b\}$.}
  \label{fig:non-isomorphic}
\end{figure}

We will mainly focus on clustering systems $\mathscr{C}_N$ of networks $N$
(cf.\ Lemma~\ref{lem:C-N}). As shown in Fig.~\ref{fig:non-isomorphic}, the
information conveyed by $\mathscr{C}_N$ is often insufficient to determine
$N$, i.e., there are non-isomorphic networks $N$ and $N'$ for which
$\mathscr{C}_N = \mathscr{C}_{N'}$.  A natural generalization is to
consider the \emph{multiset of clusters} $\mathscr{M}_N$, in which each
cluster $C\in \mathscr{C}_N$ appears once for every vertex $v\in V(N)$ with
$\CC(v)=C$. We say that $\mathscr{M}_N$ \emph{encodes} $N$
\textbf\emph{within a given class $\mathbb{P}$ of networks} if
$N'\in\mathbb{P}$ and $\mathscr{M}_{N'}=\mathscr{M}_N$ implies $N'\simeq
N$.

As for networks, we will also consider plenty of different types of clustering
systems equipped with certain properties and, for convenience, list them in
Table~\ref{tab:sum-def-C}.

\subsection{Arc-Expansion and Arc-Contraction}
\label{sec:def}

As mentioned above, often only separated networks are considered,
stipulating that (1) leaves, i.e., vertices $v$ with $\outdeg(v)=0$ have
$\indeg(v)=1$; (2) hybrid vertices $v$ have $\indeg(v)\ge 2$ and
$\outdeg(v)=1$.  Such networks are obtained from the ones in
Def.~\ref{def:N} by means of a simple refinement operation that replaces
every ``offending'' vertex by a pair of vertices connected by single
arc. More precisely, we define the following operation on a network $N$,
which is also part of \cite[Alg.~6.4.2]{Huson:11}:\medskip
\begin{description}
\item[$\expand(v)$] Create a new vertex $v'$, replace arcs $(u,v)$ by
  $(u,v')$ for all $u\in\parent_N (v)$, and add the arc $(v',v)$.
\end{description}\medskip
It will also useful to consider the reversed operation for arcs $(v',v)$
that are not shortcuts:\medskip
\begin{description}
\item[$\contract(v',v)$] Replace arcs $(u,v')$ by $(u,v)$ for all
  $u\in\parent_N (v')\setminus \parent_N (v)$; replace arcs $(v',w)$ by
  $(v,w)$ for all $w\in\child_N (v')\setminus \child_N (v)$; and finally
  delete $(v',v)$ and $v'$.
\end{description}\medskip
The notation $\contract$ is chosen in compliance with the literature where
arc contraction is a commonly used operation. Fur our purpose, however, it
will be useful to have this more formal definition in order to precisely
keep track of the vertex sets upon execution of multiple operations.
Observe that, e.g.\ since $(u,v'),(u,v)\in E$ is possible, applying first
$\contract(v',v)$ and then $\expand(v)$ does not necessarily yield a
network that is isomorphic to the original network.  Furthermore, we remark
that the condition that $(v',v)$ is not a shortcut cannot be dropped since
otherwise directed cycles are introduced
(cf.\ Fig.~\ref{fig:cntr-issues}(A)).

\begin{figure}[t]
  \begin{center}
    \includegraphics[width=0.85\textwidth]{./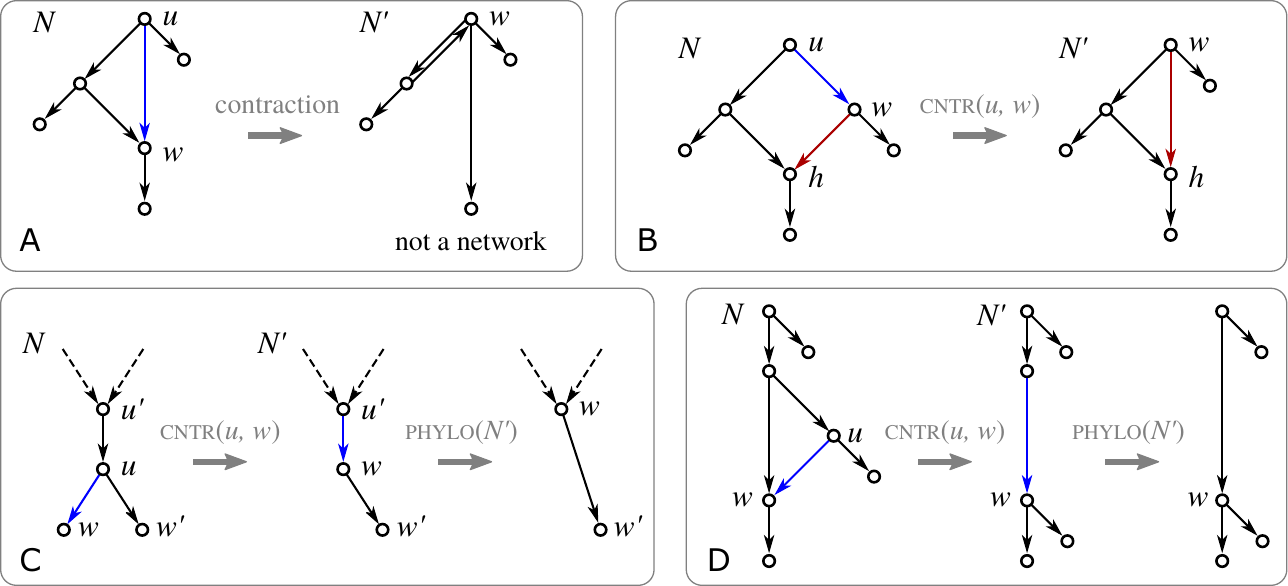}
  \end{center}
  \caption{ Complications arising in the contraction of arcs. The arcs to
    be contracted are highlighted in blue.  (A) Contraction of a shortcut
    $(u,w)$ introduced directed cycles.  (B) Application of
    $\contract(u,w)$ to a shortcut-free network $N$ can result in a network
    $N'$ that contains a shortcut (r.h.s., shortcut indicated by the red
    arc). (C),(D) Contraction of an arc $(u,w)$ in a phylogenetic network
    $N$ can yield a network $N'$ that is no longer
    phylogenetic. Application of $\phylo(N')$ can resolve this issue.}
  \label{fig:cntr-issues}
\end{figure}

We are now in the position to define least-resolved networks:
\begin{definition}
  \label{def:least-resolved}
  A network $N$ is \emph{least-resolved} (w.r.t.\ its clustering system
  $\mathscr{C}\coloneqq\mathscr{C}_N$) if there is no network $N'$ with
  $\mathscr{C}_{N}=\mathscr{C}_{N'}$ that can be obtained from $N$ by a
  non-empty series of shortcut removal and application of $\contract(v',v)$
  for some arc $(v',v)$ that is not a shortcut.
\end{definition}

In many applications, phylogenetic networks are considered.  However,
$\contract(w',w)$ applied on a phylogenetic network may result in a
non-phylogenetic network. By way of example, see
Fig.~\ref{fig:cntr-issues}(C), if $u$ is a tree vertex with parent $u'$ and
two children $w$ and $w'$ which are leaves, then $\contract(u,w)$ will
``locally'' result in a path with arcs $(u',w)$ and $(w,w')$, i.e.,
$\indeg(w) = \outdeg(w)=1$.  Similarly, $\contract(u,w)$ in a block that
contains a shortcut can result in a network $N'$ that is not phylogenetic,
see Fig.~\ref{fig:cntr-issues}(D).  To circumvent this issue, we must
``suppress'' $w$ to obtain a phylogenetic network.  To this end, we define
the following operation to make a network $N$ phylogenetic:\medskip
\begin{description}
  \item[$\phylo(N)$]
  Repeatedly apply $\contract(u,w)$ for an arc $(u,w)$ such that
  $\outdeg(u)=1$ and $\indeg(u)\le 1$ until no such operation is possible.
\end{description}\medskip
\noindent
Now contraction of an arcs $(v',v)$ that is not a shortcut and
``suppression'' of superfluous vertices can be combined in:\medskip
\begin{description}
  \item[$\contract^{\star}(v',v)$]
  Apply $\contract(v',v)$ to obtain $N'$ and then $\phylo(N')$.
\end{description}

\begin{definition}
  Let $N$ and $N'$ be networks such that $V(N')\subseteq V(N)$.  Then, $N$
  and $N'$ are \emph{$(N,N',W)$-ancestor-preserving} if for all
  $v,v'\in W\subseteq V(N')$, it holds that $v\preceq_{N} v'$ if and only
  if $v\preceq_{N'} v'$. If $N$ and $N'$ are
  $(N,N',V(N'))$-ancestor-preserving, then we say for simplicity that $N$
  and $N'$ are $(N,N')$-ancestor-preserving.
\end{definition}

\begin{lemma}
  \label{lem:shortcut-deletion}
  Let $N$ be a network on $X$ and $(u,w)\in E(N)$ be a shortcut.  Then
  removal of $(u,w)$ in $N$ results in a network $N'$ with leaf set $X$ and
  $V(N)=V(N')$. In particular,  $\CC_N(v)=\CC_{N'}(v)$ for all
  $v\in V(N')=V(N)$ and thus, $\mathscr{C}_N=\mathscr{C}_{N'}$.
  Moreover, $N$ and $N'$ are $(N,N')$-ancestor-preserving.
\end{lemma}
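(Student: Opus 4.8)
The plan is to extract a single witness from the shortcut hypothesis and reuse it for every part of the claim. By definition of a shortcut, there is a child $v\in\child_N(u)\setminus\{w\}$ with $w\prec_N v$; since $v\in\child_N(u)$ we have $v\prec_N u$, and hence $w\prec_N v\prec_N u$. Write $N'$ for the digraph obtained from $N$ by deleting the single arc $(u,w)$. Then $V(N')=V(N)$, no other arc is removed, and $N'$ is acyclic as a subgraph of the DAG $N$. I would first show that $N'$ is a network on $X$, then establish ancestor-preservation, and finally read off the cluster identities.

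To see that $N'$ is a network, note that deleting $(u,w)$ changes only $\indeg(w)$ and $\outdeg(u)$, each dropping by one. For (N1), the root $\rho_N$ still has indegree $0$, and the only vertex that could newly acquire indegree $0$ is $w$. But $w\prec_N v$ yields a directed path from $v$ to $w$ in $N$, and the parent $p$ of $w$ on this path satisfies $p\neq u$: indeed $p=u$ would place $u$ on a directed path from $v$ to $w$, forcing $u\preceq_N v$ and contradicting $v\prec_N u$ by antisymmetry of $\preceq_N$. Hence $w$ retains the parent $p$ in $N'$, so $\rho_N$ remains the unique source; as $N'$ is a finite DAG with a single source, $\rho_N$ is an ancestor of every vertex and $N'$ is connected. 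Finally, $\outdeg_{N'}(u)\geq 1$ because $u$ keeps the child $v\neq w$, so $u$ is not a leaf; all other outdegrees are unchanged, and therefore the leaf set of $N'$ is again $X$.

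For the ancestor-preservation, fix $a,b\in V(N')=V(N)$. Since $N'\subseteq N$, any directed path from $b$ to $a$ in $N'$ is one in $N$, so $a\preceq_{N'}b$ implies $a\preceq_N b$. Conversely, assume $a\preceq_N b$ and take a directed path $P$ from $b$ to $a$ in $N$. If $P$ avoids $(u,w)$ it already lies in $N'$; otherwise $P$ traverses $(u,w)$ exactly once, and I would reroute $P$ by replacing this arc with the arc $(u,v)$ followed by a directed path $Q$ from $v$ to $w$ (which exists since $w\prec_N v$). The crucial point — and the only place acyclicity is really needed — is that $Q$ cannot pass through $u$, since $u$ on a directed path from $v$ to $w$ would give $u\preceq_N v$, contradicting $v\prec_N u$; in particular $Q$ uses no arc incident to $u$ and thus avoids $(u,w)$, while the replacement arc $(u,v)\neq(u,w)$ because $v\neq w$. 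The resulting directed walk from $b$ to $a$ therefore lies entirely in $N'$, and since a directed walk in a DAG contains a directed path with the same endpoints, we obtain $a\preceq_{N'}b$. Hence $N$ and $N'$ are $(N,N')$-ancestor-preserving.

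It remains to transfer this to clusters, which is immediate: for every $x\in X\subseteq V(N')$ and every $s\in V(N')$, ancestor-preservation gives $x\preceq_{N'}s$ if and only if $x\preceq_N s$, whence $\CC_{N'}(s)=\CC_N(s)$ for all $s\in V(N')=V(N)$, and collecting these clusters yields $\mathscr{C}_{N'}=\mathscr{C}_N$. The only genuinely delicate step is the rerouting in the ancestor-preservation argument, where one must be certain the detour through $v$ neither reuses the deleted arc nor creates a cycle; both are guaranteed by the antisymmetry of $\preceq_N$ in the DAG $N$, and everything else is bookkeeping on the two degrees that change.
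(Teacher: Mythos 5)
Your proposal is correct and follows essentially the same route as the paper's proof: extract the witness child of the shortcut, use acyclicity to show the detour from $u$ to $w$ avoids $u$ (hence $w$ keeps a second parent and root uniqueness survives), reroute any path traversing $(u,w)$ through that detour to get ancestor-preservation, and read off the cluster identities from the preserved leaf-ancestry. Your explicit handling of the walk-versus-path issue and of the leaf-set preservation is slightly more careful than the paper's wording, but the underlying argument is the same.
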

\begin{proof}
  Let $N$ be a network on $X$ and $(u,w)$ be a shortcut in $N$.
  Since $(u,w)$ is a shortcut, there is a $w'\in\child(u)\setminus\{w\}$
  such that $w\prec_N w'$.  Hence, there is a $w'w$-path $P$ in $N$. Since
  $N$ is acyclic and $w'\prec_N u$, $u$ is not a vertex in $P$ since
  otherwise $u\preceq_{N} w'$.  Therefore, $w$ has indegree larger than $1$
  in $N$.  In particular, there is a $uw$-path $P'$ in $N'$ formed by the
  arc $(u,w')$ and $w'w$-path $P$.  Since removal of $(u,w)$ only
  decreases the indegree of $w$ and $\indeg_N(w)\ge 2$, $\rho_N=\rho_{N'}$
  is still the only vertex with indegree $0$ in $N'$.  Moreover, removal of
  arcs clearly preserves acyclicity, and thus $N'$ is a rooted network.

  Now let $v,v'\in V(N)=V(N')$. If $v \not\preceq_{N} v'$, then there is no
  $v'v$-path in $N$.  Clearly, removal of arcs changes nothing about this
  and thus $v \not\preceq_{N'} v'$.  Suppose now that $v \preceq_{N} v'$
  and thus let $P_{v'v}$ be a $v'v$-path in $N$.  If $P_{v'v}$ does not
  contain the arc $(u,w)$, then $P_{v'v}$ is still a $v'v$-path in $N'$.
  Otherwise, the path obtained from $P_{v'v}$ by replacing $(u,w)$ by the
  $uw$-path $P'$ is a $v'v$-path in $N'$.  Hence, $v \preceq_{N} v'$ holds
  in both cases.  In summary, we have $v\preceq_{N} v'$ if and only if
  $v\preceq_{N'} v'$.

  By the latter arguments, $N'$ is a network with leaf set $X$
  and we have $x\preceq_{N} v$ if and only if $x\preceq_{N'} v$ for all
  $x\in X$ and all $v\in V(N)=V(N')$. Therefore, $x\in \CC_N(v)$ if and only
  if $x\in \CC_{N'}(v)$ for all $x\in X$ and all $v\in V(N)=V(N')$, and thus,
  $\CC_N(v)=\CC_{N'}(v)$. Together with $V(N)=V(N')$, this implies
  $\mathscr{C}_N=\mathscr{C}_{N'}$.
\end{proof}
Note that deletion of a shortcut from a phylogenetic network does not
necessarily result in a phylogenetic network.

\begin{lemma}
  \label{lem:siblings}
  If a network $N$ is shortcut-free and has no vertex of outdegree $1$,
  then for every vertex $w\in V(N)\setminus\{\rho_N\}$, there is a vertex
  $v\in \child_N(\parent_N(w))$ such that $v$ and $w$ are
  $\preceq_N$-incomparable.  In this case, $N$ is phylogenetic.
\end{lemma}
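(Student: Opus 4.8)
The plan is to treat the two assertions separately, the first being the substantive one and the second essentially a triviality given the hypotheses. For the first, fix an arbitrary non-root vertex $w\in V(N)\setminus\{\rho_N\}$. Since $\rho_N$ is the \emph{unique} vertex with $\indeg(\rho_N)=0$ by (N1), every other vertex has indegree at least $1$, so $w$ has at least one parent; I would fix some $u\in\parent_N(w)$. Because $w\in\child_N(u)$ we have $\outdeg(u)\ge 1$, and since $N$ is assumed to have no vertex of outdegree $1$, this forces $\outdeg(u)\ge 2$. Consequently there exists a child $v\in\child_N(u)\setminus\{w\}$, and this $v$ is the candidate sibling.

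It then remains to verify that $v$ and $w$ are $\preceq_N$-incomparable, which is exactly where shortcut-freeness is used. By the equivalence recorded in Observation~\ref{obs:shortcut} (statement (2)), in a shortcut-free DAG two children of a common vertex $u$ are $\preceq_N$-comparable if and only if they coincide. Since $v,w\in\child_N(u)$ are distinct, it follows immediately that $v$ and $w$ are $\preceq_N$-incomparable, establishing the first assertion.

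For the second assertion, recall that being phylogenetic means, by (N2) in Definition~\ref{def:N}, that no vertex $v$ has simultaneously $\outdeg(v)=1$ and $\indeg(v)\le 1$. Under the present hypothesis $N$ has \emph{no} vertex of outdegree $1$ at all, so (N2) holds vacuously and $N$ is phylogenetic. (The degenerate case in which $N$ is a single vertex is harmless: there are then no non-root vertices, so the first claim is vacuous, and the root has outdegree $0$, so phylogeneticity holds as well.)

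The argument is elementary throughout, so I do not anticipate a genuine obstacle; the one point requiring care is the correct invocation of Observation~\ref{obs:shortcut}, and the precise use of the hypothesis. The assumption ``no vertex of outdegree $1$'' is needed exactly once, to upgrade $\outdeg(u)\ge 1$ to $\outdeg(u)\ge 2$ and thereby guarantee the existence of a second child; shortcut-freeness is then what converts ``second child'' into ``incomparable sibling.''
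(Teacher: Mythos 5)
Your proof is correct and follows essentially the same route as the paper's: shortcut-freeness (via Observation~\ref{obs:shortcut}) gives incomparability of distinct siblings, the root-uniqueness condition (N1) gives a parent, and the absence of outdegree-$1$ vertices upgrades that parent to having a second child. Your explicit verification of the phylogenetic claim via (N2) holding vacuously is a harmless elaboration of what the paper leaves implicit.
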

\begin{proof}
  Since $N$ is shortcut-free, siblings $v',v''\in\child_N(u)$, $v'\ne v''$
  are $\preceq_N$-incomparable. Thus there is $v\in\child_N(\parent_N(w))$
  that is $\preceq_N$-incomparable with $w$ if and only if
  $\parent_N(w)\ne\emptyset$ and $\outdeg(\parent_N(w))>1$. Both conditions are
  satisfied by assumption.
\end{proof}

\begin{lemma}
  \label{lem:contraction}
  Let $N$ be a network on $X$ and $(u,w) \in E(N)$ be an arc that is not a
  shortcut.  Then, $\contract(u,w)$ applied on $N$ results in a network
  $N'$ with leaf set $X$ or $X\setminus\{w\}$ and
  $V(N')=V(N)\setminus\{u\}$.  Moreover, for all $v,v'\in V(N')$,
  \begin{enumerate}
  \item $v\preceq_N v'$ implies $v \preceq_{N'} v'$, and
  \item $v\preceq_{N'} v'$ implies
    (i) $v \preceq_{N} v'$ or
    (ii) $w\preceq_N v'$ and $v\preceq_N w'$ for some
    $w'\in\child_N(u)\setminus \{w\}$ that is $\preceq_{N}$-incomparable
    with $w$.
  \end{enumerate}
  In particular, $v\prec_{N'} v'$ always implies $v\prec_{N} v'$ or $v$ and
  $v'$ are $\preceq_{N}$-incomparable.
\end{lemma}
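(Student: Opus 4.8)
The plan is to treat $\contract(u,w)$ as the identification of $u$ with $w$ and to bookkeep arcs accordingly. Call an arc of $N'$ \emph{new} if it is one of the re-routed arcs $(p,w)$ with $p\in\parent_N(u)\setminus\parent_N(w)$ (each coming from an arc $(p,u)\in E(N)$) or $(w,c)$ with $c\in\child_N(u)\setminus(\child_N(w)\cup\{w\})$ (each coming from an arc $(u,c)\in E(N)$); every other arc of $N'$ already belongs to $N$, and every arc of $N$ not incident with $u$ survives unchanged in $N'$. Since the operation deletes only $u$ and creates no vertices, $V(N')=V(N)\setminus\{u\}$ is immediate, and in particular $v,v'\neq u$ for all $v,v'\in V(N')$.

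I would next verify that $N'$ is a network, and this is where the non-shortcut hypothesis is essential. For acyclicity I argue by contradiction: a directed cycle of $N'$ cannot avoid $w$ (otherwise it uses only arcs of $N$ and already lies in the acyclic $N$), so write it as $w\to z_1\to\cdots\to z_m\to w$ with every $z_i\neq w$ and every interior arc belonging to $N$. Distinguishing whether the arc leaving $w$ and the arc entering $w$ are old or new leaves four combinations; three of them directly yield a directed cycle in $N$ (for one of them one closes the cycle through the arc $(u,w)$, using $w\prec_N u$), while the remaining case---leaving $w$ by a new arc $(w,z_1)$ and entering $w$ by an old arc---produces a child $z_1\in\child_N(u)\setminus\{w\}$ with $w\prec_N z_1$, i.e.\ exactly a shortcut $(u,w)$, contradicting the hypothesis. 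For the root, every vertex of $N'$ other than $\rho_N$ still has a parent: a vertex $z\neq w$ that had $u$ as a parent now has $w$ as a parent through $(w,z)\in E(N')$, and $w$ itself retains a parent whenever $\parent_N(u)\neq\emptyset$, i.e.\ whenever $u\neq\rho_N$; hence $\rho_N$ stays the unique source when $u\neq\rho_N$, while if $u=\rho_N$ the non-shortcut hypothesis forces $u$ to be the only parent of $w$, so $w$ becomes the unique source. Finally, the only vertex whose out-degree can change from zero is $w$: it never loses children (as $u\notin\child_N(w)$ by acyclicity of $N$) and gains children precisely when $\outdeg_N(u)\ge 2$, so a former leaf $w$ becomes internal exactly in that case, giving leaf set $X\setminus\{w\}$ and otherwise $X$.

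The two ancestor implications are proved by lifting paths. For (1), take a $v'v$-path $P$ in $N$; if it avoids $u$ its arcs all survive in $N'$. If $u$ occurs, necessarily as an interior vertex $a\to u\to b$, replace this by $a\to w$ (which lies in $N'$ since $a\neq w$ by acyclicity) followed, when $b\neq w$, by $w\to b$; all remaining arcs avoid $u$ and survive, so a $v'v$-walk in $N'$ results and $v\preceq_{N'}v'$. For (2), take a $v'v$-path $P'$ in $N'$. If $P'$ uses no new arc it lies in $N$ and gives (i). Otherwise $P'$ meets $w$ exactly once, so the only possibly-new arcs of $P'$ are the last arc of the subpath from $v'$ to $w$ and the first arc of the subpath from $w$ to $v$. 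Reading these off yields $w\preceq_N v'$ or $u\preceq_N v'$ from the first subpath, and $v\preceq_N w$ or $v\preceq_N w'$ for some $w'\in\child_N(u)\setminus\{w\}$ from the second; combining with $w\prec_N u$, every combination except ``old into $w$, new out of $w$'' collapses to $v\preceq_N v'$, i.e.\ (i).

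The crux---and the place where a genuinely new argument is needed---is this surviving combination, in which $w\preceq_N v'$ and $v\preceq_N w'$ with $w'\in\child_N(u)\setminus\{w\}$. Since $(u,w)$ is not a shortcut, $w\prec_N w'$ is impossible, so $w$ and $w'$ are either $\preceq_N$-incomparable, which gives (ii) outright, or satisfy $w'\prec_N w$, whence $v\preceq_N w'\prec_N w\preceq_N v'$ collapses to (i). For the concluding ``in particular'': case (i) gives $v\prec_N v'$ once $v\neq v'$, while in case (ii) the assumption $v'\prec_N v$ would force $w\preceq_N v'\preceq_N v\preceq_N w'$ and hence $w\preceq_N w'$, contradicting the incomparability of $w$ and $w'$; thus whenever $v\prec_{N'}v'$ and $v,v'$ are $\preceq_N$-comparable the relation is necessarily $v\prec_N v'$, and otherwise they are $\preceq_N$-incomparable.
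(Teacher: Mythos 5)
Your proof is correct and follows essentially the same route as the paper's: both verify that $N'$ is a network by a case analysis on which of the (at most two) arcs incident with $w$ on a putative directed cycle are re-routed arcs, invoking the non-shortcut hypothesis at exactly the same points, and both obtain (1) and (2) by lifting paths through the contraction with the same case distinction, isolating the same surviving case (old arc into $w$, new arc out of $w$) that yields alternative (ii) via the dichotomy ``$w'\prec_N w$ collapses to (i), otherwise $w$ and $w'$ are incomparable.'' The only cosmetic differences are that you work with walks in part (1), which neatly sidesteps the paper's separate argument that $w$ cannot occur on the lifted path, and that you prove the final ``in particular'' claim by direct case analysis on (i)/(ii) where the paper instead combines part (1) with acyclicity of $N'$.
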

\begin{proof}
  The proof is rather lengthy and technical and is, therefore, placed to
  Section~\ref{sec:appx-blocks} in the Appendix.
\end{proof}

\begin{lemma}
  \label{lem:outdeg-1-contraction}
  Let $N$ be a network on $X$ and $(u,w)\in E(N)$ such that
  $\outdeg_{N}(u)=1$.  Then $\contract(u,w)$ results in a network $N'$ with
  leaf set $X$ and $V(N')=V(N)\setminus\{u\}$ that is
  $(N,N')$-ancestor-preserving.
  Moreover, $\CC_N(v)=\CC_{N'}(v)$ for all $v\in V(N')=V(N)\setminus\{u\}$ and,
  in particular, $\mathscr{C}_N=\mathscr{C}_{N'}$.
\end{lemma}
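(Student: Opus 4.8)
The plan is to obtain this as a short corollary of Lemma~\ref{lem:contraction}, using the hypothesis $\outdeg_N(u)=1$ to collapse its case distinction. First I would record that $(u,w)$ is not a shortcut: a shortcut would require some $v\in\child_N(u)\setminus\{w\}$ with $w\prec_N v$, yet $\outdeg_N(u)=1$ together with $(u,w)\in E(N)$ forces $\child_N(u)=\{w\}$, so $\child_N(u)\setminus\{w\}=\emptyset$ and no such $v$ exists. Hence Lemma~\ref{lem:contraction} applies and delivers, with no extra work, that $N'$ is a network, that $V(N')=V(N)\setminus\{u\}$, that its leaf set is $X$ or $X\setminus\{w\}$, and that for all $v,v'\in V(N')$ both $v\preceq_N v'\Rightarrow v\preceq_{N'}v'$ and the dichotomy in part~(2) hold.

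The decisive observation is that the same identity $\child_N(u)=\{w\}$ renders alternative~(ii) of Lemma~\ref{lem:contraction}(2) vacuous: it would require a witness $w'\in\child_N(u)\setminus\{w\}$, of which there is none. Consequently $v\preceq_{N'}v'$ must fall under alternative~(i), i.e.\ $v\preceq_N v'$; together with part~(1) this yields $v\preceq_N v'\Leftrightarrow v\preceq_{N'}v'$ for all $v,v'\in V(N')$, so $N$ and $N'$ are $(N,N')$-ancestor-preserving.

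Next I would fix the leaf set as exactly $X$, excluding the $X\setminus\{w\}$ option. In the general contraction, $w$ can lose its leaf status only by inheriting children of $u$; but here $u$ has no child besides $w$, so the child-rewiring step of $\contract(u,w)$ gives $w$ no new out-arc and $\outdeg_{N'}(w)=\outdeg_N(w)$. Hence $w$ is a leaf of $N'$ if and only if it is a leaf of $N$, and as $u$ is itself not a leaf ($\outdeg_N(u)=1$), the leaf set of $N'$ is precisely $X$; in particular $X\subseteq V(N')$.

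The cluster claims then follow formally. Applying ancestor-preservation with the first argument ranging over leaves $x\in X\subseteq V(N')$ gives $x\preceq_N v\Leftrightarrow x\preceq_{N'}v$ for every $v\in V(N')$, whence $\CC_N(v)=\CC_{N'}(v)$. For the equality of clustering systems, the only vertex of $N$ absent from $N'$ is $u$, and since $\child_N(u)=\{w\}$ every leaf reachable from $u$ is reachable from $w$, so $\CC_N(u)=\CC_N(w)=\CC_{N'}(w)\in\mathscr{C}_{N'}$; deleting $u$ therefore removes no cluster and $\mathscr{C}_N=\mathscr{C}_{N'}$. The only place demanding care is the joint bookkeeping that simultaneously discharges alternative~(ii) of Lemma~\ref{lem:contraction} and pins down the leaf set, and both reduce to the single fact $\child_N(u)=\{w\}$.
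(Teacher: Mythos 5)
Your proof is correct and follows essentially the same route as the paper's own argument: both derive the statement from Lemma~\ref{lem:contraction} by observing that $\outdeg_N(u)=1$ forces $\child_N(u)=\{w\}$, which simultaneously rules out the shortcut case and alternative~(ii) of Lemma~\ref{lem:contraction}(2), and both pin down the leaf set as $X$ by noting that $w$ inherits no new out-neighbors. The handling of the cluster $\CC_N(u)=\CC_N(w)$ to conclude $\mathscr{C}_N=\mathscr{C}_{N'}$ also matches the paper (cf.\ Obs.~\ref{obs:outdeg-1-cluster}).
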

\begin{proof}
  Let $N$ be a network on $X$ and $(u,w)\in E(N)$ such that $\outdeg_{N}(u)=1$.
  Since an arc $(u,w)\in E(N)$ with $\outdeg_{N}(u)=1$ cannot be a
  shortcut and satisfies $\child_N(u)\setminus\{w\}=\emptyset$, and thus
  condition~(ii) in Lemma~\ref{lem:contraction} cannot occur, $N'$ is a
  network with leaf set $X$ or $X\setminus\{w\}$ and
  $(N,N')$-ancestor-preserving.  Moreover, since $w$ is the only
  out-neighbor of $u$, we do not add any out-neighbors for $w$. Hence, $N'$
  has leaf set $X$.

  By the latter argument, $N'$ is a network with leaf
  set $X$ and we have $x\preceq_{N} v$ if and only if $x\preceq_{N'} v$ for
  all $x\in X$ and all $v\in V(N')=V(N)\setminus\{u\}$. Therefore,
  $x\in \CC_N(v)$ if and only if $x\in \CC_{N'}(v)$ holds for all $x\in X$
  and all $v\in V(N')=V(N)\setminus \{u\}$.  Hence, we have
  $\CC_N(v)=\CC_{N'}(v)$ for all $v\in V(N')=V(N)\setminus\{u\}$.
  Moreover, since $w$ is the unique out-neighbor of $u$,
  one can easily verify that $\CC_N(u)=\CC_N(w)$ (cf.\
  Obs.~\ref{obs:outdeg-1-cluster} for further arguments) and thus
  $\CC_N(u)=\CC_{N'}(w)\in \mathscr{C}_{N'}$.  Taken together, we obtain
  $\mathscr{C}_N=\mathscr{C}_{N'}$.
\end{proof}

As an immediate  consequence of Lemma~\ref{lem:shortcut-deletion} 
and~\ref{lem:outdeg-1-contraction}, we obtain
\begin{corollary}\label{cor:lrN-sf-out1}
  Every least-resolved network $N$ is shortcut-free and does not
  contain vertices $v$ with $\outdeg_N(v)=1$.
\end{corollary}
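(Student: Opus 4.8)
The plan is to prove the corollary directly from the two lemmas it cites, by a contrapositive argument on the definition of least-resolved. Recall that Definition~\ref{def:least-resolved} says $N$ is least-resolved if there is \emph{no} network $N'$ with $\mathscr{C}_N=\mathscr{C}_{N'}$ obtainable from $N$ by a non-empty sequence of shortcut removals and contractions $\contract(v',v)$ of non-shortcut arcs. So to show that a least-resolved $N$ is shortcut-free and has no outdegree-$1$ vertex, I would assume the negation of one of these conclusions and exhibit exactly such a forbidden operation, producing a strictly modified network with the same clustering system. This contradicts least-resolvedness.

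First I would dispose of shortcuts. Suppose $N$ contains a shortcut $(u,w)$. Then Lemma~\ref{lem:shortcut-deletion} tells us that deleting $(u,w)$ yields a network $N'$ with $V(N')=V(N)$ and $\mathscr{C}_{N'}=\mathscr{C}_N$. This is a non-empty series of (a single) shortcut removal that preserves the clustering system, so by Definition~\ref{def:least-resolved} such an $N'$ cannot exist if $N$ is least-resolved; hence $N$ has no shortcut, i.e.\ $N$ is shortcut-free.

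Next I would handle outdegree-$1$ vertices. Suppose some $v\in V(N)$ has $\outdeg_N(v)=1$, and let $w$ be its unique child, so $(v,w)\in E(N)$. An arc with $\outdeg_N(v)=1$ is never a shortcut (as already noted in the proof of Lemma~\ref{lem:outdeg-1-contraction}, since there is no sibling $w'\in\child_N(v)\setminus\{w\}$), so $\contract(v,w)$ is a legitimate contraction of a non-shortcut arc. By Lemma~\ref{lem:outdeg-1-contraction}, the resulting $N'$ has the same leaf set $X$, satisfies $V(N')=V(N)\setminus\{v\}$, and has $\mathscr{C}_{N'}=\mathscr{C}_N$. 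Because $V(N')$ is strictly smaller than $V(N)$, this is again a non-empty operation preserving the clustering system, contradicting the assumption that $N$ is least-resolved. Hence no vertex of $N$ has outdegree $1$.

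There is essentially no hard step here; the corollary is a routine ``immediate consequence.'' The only point requiring a moment's care is the observation that an outdegree-$1$ arc cannot be a shortcut (so that $\contract$ is applicable in the sense required by the definition), but this is exactly the fact invoked at the start of the proof of Lemma~\ref{lem:outdeg-1-contraction}, so I would simply cite it. Combining the two cases, every least-resolved network is shortcut-free and free of outdegree-$1$ vertices, which is the claim.
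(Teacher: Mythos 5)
Your proof is correct and follows exactly the route the paper intends: the corollary is stated there as an immediate consequence of Lemma~\ref{lem:shortcut-deletion} and Lemma~\ref{lem:outdeg-1-contraction}, and you simply make that argument explicit via the contrapositive of Definition~\ref{def:least-resolved}, including the right observation that an outdegree-$1$ arc can never be a shortcut so the contraction is admissible.
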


\begin{lemma}
  \label{lem:expand}
  Let $N$ be a network and $N'$ be obtained from $N$ by applying
  $\expand(w)$ for some $w\in V(N)$. Then $N'$ is a network such that $N$
  and $N'$ are $(N',N)$-ancestor-preserving.
  Moreover, $\CC_N(v)=\CC_{N'}(v)$ for all $v\in V(N)\subseteq
  V(N')$ and, in particular, it holds $\mathscr{C}_N=\mathscr{C}_{N'}$.
  Moreover, if $N$ is phylogenetic, then $N'$ is phylogenetic if and only if 
  $w$ is a hybrid vertex and $\outdeg_{N}(w)\ne 1$.
\end{lemma}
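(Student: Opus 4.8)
The plan is to treat $\expand(w)$ as the exact inverse of an outdegree-$1$ contraction, and thereby reduce the ancestor- and cluster-statements to Lemma~\ref{lem:outdeg-1-contraction}, leaving only the verification that $N'$ is a network and the degree bookkeeping for the phylogenetic equivalence to be carried out by hand. First I would record the effect of $\expand(w)$ on degrees: writing $w'$ for the newly created vertex, we have $\indeg_{N'}(w')=\indeg_N(w)$ and $\outdeg_{N'}(w')=1$, while $\indeg_{N'}(w)=1$ and $\outdeg_{N'}(w)=\outdeg_N(w)$; the in- and outdegrees of all vertices in $V(N)\setminus\{w\}$ are unchanged, since redirecting an arc $(u,w)$ to $(u,w')$ alters neither $\indeg(u)$ nor $\outdeg(u)$. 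Two consequences are immediate. The set of outdegree-$0$ vertices is unchanged, because $\outdeg_{N'}(w)=\outdeg_N(w)$, every other old vertex keeps its outdegree, and $w'$ has outdegree $1$; hence the leaf set of $N'$ is $X$. Moreover $\contract(w',w)$ with $\outdeg_{N'}(w')=1$ applied to $N'$ returns exactly $N$: its parent-rerouting step restores every arc $(u,w)$ with $u\in\parent_N(w)$, its child-rerouting step is vacuous because $w$ is the only child of $w'$, and deleting $w'$ together with $(w',w)$ recovers the original arc set.

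Next I would check the network axioms for $N'$. For the acyclicity demanded of a DAG, a directed cycle avoiding $w'$ consists only of arcs already present in $N$ and hence contradicts acyclicity of $N$, whereas a cycle through $w'$ must enter $w'$ from some $u\in\parent_N(w)$ and leave to $w$, so replacing the segment $u\to w'\to w$ by the arc $(u,w)\in E(N)$ again yields a cycle in $N$; thus $N'$ is acyclic. For the unique source (N1) I would distinguish two cases: if $w\neq\rho_N$ then $\indeg_{N'}(w')=\indeg_N(w)\ge 1$ and $\indeg_{N'}(w)=1$, so $\rho_N$ remains the only indegree-$0$ vertex; if $w=\rho_N$ then $\indeg_{N'}(w')=0$ and $w$ acquires indegree $1$, so the root simply migrates to $w'$. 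Either way $N'$ is a network with leaf set $X$, and this root-migration case is the one subtlety worth spelling out explicitly.

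With $N'$ established as a network, I would invoke Lemma~\ref{lem:outdeg-1-contraction} applied to $N'$ and the arc $(w',w)$, which satisfies $\outdeg_{N'}(w')=1$. Since that operation returns precisely $N$, the lemma yields at once that $N'$ and $N$ are $(N',N)$-ancestor-preserving, that $\CC_{N'}(v)=\CC_N(v)$ for every $v\in V(N)=V(N')\setminus\{w'\}$, and that $\mathscr{C}_{N'}=\mathscr{C}_N$; the latter already accounts for the extra vertex, since $\CC_{N'}(w')=\CC_{N'}(w)=\CC_N(w)\in\mathscr{C}_N$ as $w'$ has $w$ as its unique child. Should one prefer a self-contained argument, the same equivalence $v\preceq_N v'\iff v\preceq_{N'} v'$ for $v,v'\in V(N)$ follows by lifting each path of $N$ across the detour $(u,w'),(w',w)$ and, conversely, contracting every visit of $w'$ in a path of $N'$ back to the arc $(u,w)$.

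Finally, for the phylogenetic equivalence I assume $N$ phylogenetic and observe that only $w$ and $w'$ have degrees differing from their values in $N$, so every other vertex still satisfies (N2) in $N'$. Hence $N'$ is phylogenetic if and only if both $w$ and $w'$ satisfy (N2). The vertex $w'$ has $\outdeg_{N'}(w')=1$, so it satisfies (N2) exactly when $\indeg_{N'}(w')=\indeg_N(w)\ge 2$, i.e.\ when $w$ is a hybrid vertex; the vertex $w$ has $\indeg_{N'}(w)=1\le 1$, so it satisfies (N2) exactly when $\outdeg_{N'}(w)=\outdeg_N(w)\neq 1$. Combining these two conditions gives precisely the claimed criterion. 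I do not expect any deep obstacle here: the only points demanding care are the root-migration case $w=\rho_N$ in the network check and the bookkeeping that isolates $w$ and $w'$ as the sole vertices whose degrees change.
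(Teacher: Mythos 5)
Your proposal is correct and follows essentially the same route as the paper's own proof: verify directly that $N'$ is acyclic with a unique root (including the case $w=\rho_N$), observe that $\contract(w',w)$ applied to $N'$ recovers $N$ exactly, invoke Lemma~\ref{lem:outdeg-1-contraction} to obtain ancestor-preservation and the cluster statements, and settle the phylogenetic equivalence by degree bookkeeping at $w$ and $w'$. Your organization of the last step (isolating $w$ and $w'$ as the only vertices whose degrees change and checking (N2) for each) is a slightly tidier packaging of the paper's case analysis, but it is the same argument.
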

\begin{proof}
  Let $N$ be a network on $X$.  We show first that $N'$ is a network. By
  construction, $w$ is the only vertex in $N$ whose in-neighborhood changes
  and it has the new vertex $u$ as its unique in-neighbor in $N'$. If $w\ne
  \rho_N$, then $w$ has at least one in-neighbor in $N$, which becomes an
  in-neighbor of $u$. Hence, $\rho_N$ is still the only vertex with
  indegree $0$ in $N'$. If $w=\rho_{N}$, then it has no in-neighbors in $N$
  and thus $u$ has no in-neighbors in $N'$. Together with the fact that $w$
  no longer has indegree $0$, $u$ is the only vertex with indegree $0$ in
  $N'$ in this case.  Now assume that $N'$ contains a directed cycle $K$
  comprising the vertices $v_1, v_2, \dots, v_k$, $k\ge 2$, in this order,
  i.e., $(v_i,v_{i+1})$, $1\leq i\leq k-1$ and $(v_k,v_1)$ are arcs in
  $N'$. If all arcs in $K$ are in $N$, then $K$ is a directed cycle in $N$;
  a contradiction.  If $K$ contains an arc that is not in $N$, then $K$
  must contain the new vertex $u$ since all new arcs are incident with
  $u$. Suppose w.l.o.g.\ that $u=v_1$. Since $u$ has a unique out-neighbor
  $w$ and exactly the vertices in $\parent_N(w)$ as in-neighbors, we must
  have $v_2=w$ and $v_k\in\parent_N(w)$, respectively. In particular, this
  implies $v_2\ne v_k$ and $(v_k,v_2)\in E(N)$.  Since $u$ appears in $K$
  at most once, $(v_k, v_1)$ and $(v_1, v_2)$ are the only arcs in $K$ that
  are incident with $u$, and thus all other arcs of $K$ are also arcs in
  $N$. In particular, there is a $v_2 v_k$-path in $N$.  Together with the
  fact that $(v_k,v_2)\in E(N)$, this implies that $N$ contains a directed
  cycle; a contradiction.  Therefore, $N'$ must be acyclic. Since moreover
  $N'$ has a unique root, it is a network.

  The operation $\expand(w)$ on a network $N$ creates a network $N'$ with
  an additional vertex $u$ such that $w$ is the unique out-neighbor of $u$
  and $\parent_{N'}(u) = \parent_N (w)$.  Therefore, $N$ is recovered from
  $N'$ by applying $\contract(u,w)$.  This observation together with
  Lemma.~\ref{lem:outdeg-1-contraction} implies that $N$ and $N'$ are
  $(N',N)$-ancestor-preserving.

  Suppose now that $N$ is phylogenetic.  Assume first that $w$ is a hybrid
  vertex and $\outdeg_{N}(w)\ne 1$. Then, by construction, the newly
  created vertex $w$ satisfies $\indeg_{N'}(w')=\indeg_{N}(w)\ge 2$ and
  moreover, we have $\outdeg_{N'}(w)=\outdeg_{N}(w)\ne 2$. The only other
  vertices whose neighborhoods are affected are the vertices
  $u\in\parent_N(w)$. More precisely, their outneighbor $w$ is replaced by
  an outneighbor $w'$ and thus $\indeg_{N'}(u)=\indeg_{N}(u)$ for any
  $u\in\parent_N(w)$. Together with the fact that $N$ is phylogenetic, the
  latter arguments imply that there is no vertex $v\in V(N')$ with
  $\outdeg_{N'}(v)=1$ and $\indeg_{N'}(v)\le 1$.  Hence, $N'$ is
  phylogenetic.  Now assume that $w$ is a not hybrid vertex or
  $\outdeg_{N}(w)= 1$.  If $w$ is not a hybrid vertex, then
  $\indeg_{N'}(w')=\indeg_{N}(w)\le 1$.  Moreover, $\outdeg_{N'}(w')=1$
  holds by construction, and thus $N'$ is not phylogenetic. If
  $\outdeg_{N}(w)= 1$, then $\outdeg_{N'}(w)= 1$ since the outneighborhood
  of $w$ does not change. In addition, $w'$ is the unique in-neighbor of
  $w$ in $N'$ by construction. Hence, $N'$ is not phylogenetic.  In
  summary, it holds that $N'$ is phylogenetic if and only if $w$ is a
  hybrid vertex and $\outdeg_{N}(w)\ne 1$.

  By the latter arguments, $N'$ is a network with leaf set $X$.  The newly 
  created vertex $w'$ has a unique child $w$.
  The statement ``$\CC_N(v)=\CC_{N'}(v)$ for all $v\in V(N)\subseteq
  V(N')$ and, in particular, $\mathscr{C}_N=\mathscr{C}_{N'}$''
  therefore follows immediately from 
  Lemma~\ref{lem:outdeg-1-contraction}
  and the fact that $N$ is recovered from $N'$ by applying $\contract(w',w)$.
\end{proof}

The following result shows that the expansion operation does not introduce
  shortcuts and is an immediate consequence of Lemma~\ref{lem:expand-shortcuts}
  in Section~\ref{sec:appx-blocks}.
\begin{corollary}
  \label{cor:expand-shortcuts}
  Let $N$ be a network and $N'$ be the network obtained from $N$ by applying
  $\expand(w)$ for some $w\in V(N)$. Then $N$ is shortcut-free if and only
  if $N'$ is shortcut-free.
\end{corollary}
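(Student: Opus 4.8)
The plan is to argue directly from the definition of $\expand(w)$ together with the ancestor-preservation already established in Lemma~\ref{lem:expand}. Write $w'$ for the vertex created by $\expand(w)$, so that in $N'$ we have $\parent_{N'}(w')=\parent_N(w)$, $\child_{N'}(w')=\{w\}$ and $\parent_{N'}(w)=\{w'\}$, while every other vertex keeps its in- and out-neighborhood (each $u\in\parent_N(w)$ simply has $w'$ in place of $w$ among its children). Since $N$ is recovered from $N'$ by $\contract(w',w)$ with $\outdeg_{N'}(w')=1$, Lemma~\ref{lem:expand} tells us that $N$ and $N'$ are $(N',N)$-ancestor-preserving, i.e.\ for all $a,b\in V(N)$ we have $a\preceq_N b$ if and only if $a\preceq_{N'}b$. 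The one extra fact I need is that $w'$ occupies exactly the ``position just above $w$'': its strict ancestors in $N'$ coincide with the strict ancestors of $w$ in $N$ (both equal $\parent_N(w)$ together with their ancestors), and its strict descendants in $N'$ are $\{w\}$ together with the descendants of $w$ in $N$. Consequently, for every old vertex $c\in V(N)$ we have $w'\prec_{N'}c$ if and only if $w\prec_N c$, and $b\prec_{N'}w'$ if and only if $b\preceq_N w$.

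For ``$N$ shortcut-free $\Rightarrow$ $N'$ shortcut-free'' I would argue by contraposition: take a shortcut $(a,b)$ in $N'$ and distinguish three arc types. The arc $(w',w)$ is never a shortcut, since $\child_{N'}(w')\setminus\{w\}=\emptyset$. If $(a,b)=(u,w')$ with $u\in\parent_N(w)$, then a witness $c\in\child_{N'}(u)\setminus\{w'\}$ with $w'\prec_{N'}c$ is an old child $c\in\child_N(u)\setminus\{w\}$ satisfying $w\prec_N c$ by the position fact, so $(u,w)$ is a shortcut in $N$. Finally, if $(a,b)$ is an unmodified arc of $N$ (so $a\neq w'$, $b\neq w$), a witness $c$ is either an old child, in which case $b\prec_{N'}c$ transfers to $b\prec_N c$ and $(a,b)$ is a shortcut in $N$, or $c=w'$, in which case $b\prec_{N'}w'$ gives $b\prec_N w$ and then $w\in\child_N(a)\setminus\{b\}$ witnesses that $(a,b)$ is a shortcut in $N$. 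Thus any shortcut in $N'$ produces one in $N$.

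The converse is entirely symmetric: starting from a shortcut $(a,b)$ in $N$ with witness $c\in\child_N(a)\setminus\{b\}$ and $b\prec_N c$, I translate it into $N'$ according to whether $b=w$, $c=w$, or neither, each time replacing $w$ by $w'$ exactly where $\expand(w)$ has rerouted an arc and invoking the equivalences $w'\prec_{N'}c\iff w\prec_N c$ and $b\prec_{N'}w'\iff b\preceq_N w$. I expect the only real care to lie in the bookkeeping: keeping straight which arcs incident to $w$ are rerouted to $w'$, and in which subcase the witnessing child must be switched between $w$ and $w'$. Beyond this finite case analysis there is no genuine obstacle, since ancestor-preservation collapses every comparison among old vertices to the corresponding comparison in $N$, and the new vertex $w'$ is comparability-equivalent to $w$ both upward and downward.
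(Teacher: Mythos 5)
Your proposal is correct and follows essentially the same route as the paper: the paper derives this corollary from Lemma~\ref{lem:expand-shortcuts} in the appendix, whose proof likewise combines the $(N',N)$-ancestor-preservation of Lemma~\ref{lem:expand} with a case analysis on whether the arc in question is $(w',w)$, has head $w'$, or is an unmodified arc of $N$ (with the witnessing child possibly being $w'$ resp.\ $w$). Your ``position fact'' about $w'$ being comparability-equivalent to $w$ is exactly the inline argument the paper makes in each case, so the two proofs differ only in bookkeeping, not in substance.
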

We remark that an analogue of Cor.~\ref{cor:expand-shortcuts} does not hold
for the contraction operation $\contract(u,w)$.
Fig.~\ref{fig:cntr-issues}(B) shows an example where contraction
introduces a shortcut.

\subsection{Blocks}
\label{sec:blocks}

The blocks of $N$ will play a key role in the following. We first establish
several technical results that will allow us efficiently reason about the
block structure of a network.

\begin{lemma}
  \label{lem:upper-path}
  Let $N$ be a network and $u,v\in V(N)$ be two $\preceq_N$-incomparable
  vertices.  Then $u$ and $v$ are connected by an undirected path $P$ that
  contains at least 3 vertices and of which all inner vertices $w$ satisfy
  $u\prec_N w$ or $v\prec_N w$.  In addition, we have $w\not\preceq_N u$
  and $w\not\preceq_N v$ for every such inner vertex $w$.
\end{lemma}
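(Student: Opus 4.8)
The plan is to build the required undirected path by climbing from $u$ up to a common ancestor $z$ of $u$ and $v$ along a directed path, and then descending from $z$ down to $v$ along a second directed path, arranging the two directed paths so that they meet \emph{only} in $z$. Since the root $\rho_N$ is an ancestor of every vertex, there exist directed paths $P_u$ from $\rho_N$ to $u$ and $P_v$ from $\rho_N$ to $v$; write $P_u = (\rho_N = a_0, a_1, \dots, a_m = u)$. Both paths contain $\rho_N$, so their common vertex set $S \coloneqq V(P_u) \cap V(P_v)$ is nonempty.

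The key step is choosing the split vertex correctly. I let $z \coloneqq a_i$, where $i$ is the \emph{largest} index with $a_i \in S$, i.e.\ the last vertex of $P_u$ that also lies on $P_v$. By maximality of $i$, none of $a_{i+1}, \dots, a_m$ lies on $P_v$; hence the directed subpath $P_u' \coloneqq P_u[z \to u]$ meets $P_v$ only in $z$, and in particular the directed subpaths $P_u'$ and $P_v' \coloneqq P_v[z \to v]$ share exactly the vertex $z$. Reversing $P_u'$ and concatenating it with $P_v'$ then yields a genuine (vertex-simple) undirected path $P$ running from $u$ through $z$ to $v$.

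It remains to verify the stated properties. First, $z$ is a common ancestor of $u$ and $v$ with $z \neq u$ and $z \neq v$: indeed $z = u$ would force $u$ to be an ancestor of $v$ and $z = v$ would force $v$ to be an ancestor of $u$, both contradicting $\preceq_N$-incomparability; since also $u \neq v$, the path $P$ has at least the three distinct vertices $u, z, v$. Next, every inner vertex $w$ of $P$ lies on $P_u'$ or on $P_v'$ and is distinct from the respective endpoint $u$ or $v$; as $P_u'$ (resp.\ $P_v'$) is a directed path ending in $u$ (resp.\ $v$), such $w$ satisfies $u \prec_N w$ (resp.\ $v \prec_N w$), which is exactly the claimed condition.

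Finally, the ``in addition'' part is immediate from acyclicity and incomparability and requires no new construction: if $u \prec_N w$, then $w \preceq_N u$ would yield $w \prec_N w$, which is impossible, so $w \not\preceq_N u$; and $w \preceq_N v$ together with $u \prec_N w$ would give a directed path $v \to w \to u$, i.e.\ $u \preceq_N v$, contradicting incomparability, so $w \not\preceq_N v$; the case $v \prec_N w$ is symmetric. The only genuine obstacle in the argument is securing two \emph{internally vertex-disjoint} directed descending paths from a single common ancestor, and the ``last common vertex along $P_u$'' choice resolves precisely this.
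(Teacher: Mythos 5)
Your proof is correct and follows essentially the same route as the paper: both arguments take directed root-to-$u$ and root-to-$v$ paths, split at the last common vertex (your largest-index choice along $P_u$ coincides with the paper's $\preceq_N$-minimal common vertex, since a directed path is linearly ordered by $\preceq_N$), and glue the two internally disjoint descending subpaths into the undirected path. The verification of the three vertices, the inner-vertex condition, and the final acyclicity/transitivity contradiction likewise mirror the paper's proof.
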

\begin{proof}
  There are directed paths $P_u$ and $P_v$ from $\rho_N$ to both $u$ and
  $v$, respectively.  Let $w^*$ be the $\preceq_N$-minimal vertex of $P_u$
  that is also a vertex of $P_v$, which exists since at least $\rho_N$ is
  contained in both paths.  It must hold that $w^*\notin \{u,v\}$ since
  otherwise $u$ and $v$ would be $\preceq_N$-comparable.  In particular,
  $u \prec_N w^*$ and $v \prec_N w^*$.  Let $P'_u$ and $P'_v$ be the
  subpaths of $P_u$ and $P_v$ from $w^*$ to $u$ and $v$, respectively. By
  construction, $P'_u$ and $P'_v$ only have vertex $w^*$ in common, which
  moreover is an outer vertex of both paths.  Now consider the path $P$
  that is the union of the underlying undirected version of $P'_u$ and
  $P'_v$.  By construction, $P$ contains at least the three vertices $u$,
  $v$, and $w^*$ and all of its inner vertices $w$ satisfy $u\prec_N w$ or
  $v\prec_N w$.  Assume, for contradiction, that $w\preceq_N u$ for some of
  these inner vertices. Since $u\prec_N w$ is not possible, we must have
  $v\prec_N w$.  But then $v\prec_N w$ and $w\preceq_N u$ imply that $v$
  and $u$ are $\preceq_N$-comparable; a contradiction.  Hence,
  $w\not\preceq_N u$ must hold. One shows analogously that
  $w\not\preceq_N v$.
\end{proof}
Paths of the form described in Lemma \ref{lem:upper-path} connecting two
leaves $u$ and $v$ are called ``up-down-paths'' in \cite{Bordewich:16}.

\begin{lemma}
  \label{lem:block-prec-sandwich}
  Let $B$ be a block in a network $N$ and $u,v\in V(B)$ such that
  $v\preceq_{N} u$. Then every $uv$-path in $N$ is completely contained in
  $B$.
\end{lemma}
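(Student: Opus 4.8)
The plan is to combine the H-path Proposition~\ref{prop:H-path} with the maximality of blocks. Let $P$ be a $uv$-path in $N$; such a path exists and is directed from $u$ to $v$ because $v\preceq_N u$. If $u=v$ the path is a single vertex and the claim is trivial, so assume $u\neq v$. First I would mark the vertices of $P$ that belong to $V(B)$ and list them as $u=q_0,q_1,\dots,q_m=v$ in the order in which $P$ traverses them. Since $P$ is a (simple) directed path its vertices are pairwise distinct, so the $q_j$ are distinct; and by the very choice of the $q_j$, every vertex of $P$ strictly between $q_j$ and $q_{j+1}$ lies outside $V(B)$. Writing $P_j$ for the subpath of $P$ from $q_j$ to $q_{j+1}$, this means each $P_j$ meets $B$ in exactly its two endpoints $q_j,q_{j+1}\in V(B)$.

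The key step is then purely local. Because $q_j\neq q_{j+1}$ and $P_j$ shares only these endpoints with the biconnected subgraph $B$, Proposition~\ref{prop:H-path} guarantees that $B\cup P_j$ is again biconnected. As $B$ is a \emph{maximal} biconnected subgraph and $B\subseteq B\cup P_j$, maximality forces $B\cup P_j=B$, i.e.\ $P_j\subseteq B$. In particular all inner vertices of $P_j$ lie in $V(B)$, which (by the choice of the $q_j$) is only possible if $P_j$ has no inner vertex at all, so $P_j$ is the single arc $(q_j,q_{j+1})$ and this arc belongs to $B$. Taking the union over all segments yields $P=\bigcup_{j}P_j\subseteq B$, which is the assertion.

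I expect the only point needing care to be the bookkeeping: correctly decomposing $P$ into the segments $P_j$ and checking that each segment meets $B$ in precisely its endpoints. Once this decomposition is set up, Proposition~\ref{prop:H-path} and the maximality of blocks do the remaining work mechanically. Note that acyclicity and the hypothesis $v\preceq_N u$ are used only to ensure the existence and simplicity of the directed $uv$-path; the argument otherwise treats $P$ as an undirected path, so it applies uniformly and also covers the degenerate cases where $B$ is a single vertex or a single (cut) arc.
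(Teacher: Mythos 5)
Your proof is correct and takes essentially the same approach as the paper's: both arguments rest on Proposition~\ref{prop:H-path} together with the maximality of blocks, applied to subpaths of $P$ that meet $B$ only in their endpoints. The paper organizes this as a contradiction around a single vertex of $P$ outside $B$ (and recovers the arcs via the fact that blocks are induced subgraphs), whereas you decompose $P$ into segments and argue directly; this is a purely cosmetic difference.
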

\begin{proof}
  Let $P$ be a $uv$-path in $N$, which exists since $v\preceq_{N} u$. The
  statement holds trivially if $B$ is an isolated vertex, $v=u$, or $B$ is
  the arc $(u,v)$.  Thus suppose $B$ is a non-trivial block. Suppose, for
  contradiction, there is a vertex $w\in V(P)\setminus \{V(B)\}$.  Let
  $w_a$ and $w_d$ be the $\preceq_{N}$-minimal ancestor and the
  $\preceq_{N}$-maximal descendant, resp., of $w$ in $P$ (both of which
  exist since $u,v\in V(P)$).  Consider the subpath $P'$ of $P$ from $w_a$
  to $w_d$.  By Prop.~\ref{prop:H-path}, the subgraph of $N$ obtained by
  adding $P'$ to $B$ is again biconnected. Together with $w\in
  V(P')\setminus \{V(B)\}$, this contradict that $B$ is a block. Hence such
  a vertex cannot exist.  Therefore and since blocks are always induced
  subgraphs, the statement follows.
\end{proof}

\begin{lemma}
  \label{lem:max-B-unique}
  Every block $B$ in a network $N$ has a unique $\preceq_N$-maximal vertex
  $\max B$.  In particular, for every $v\in V(B)$, there is a directed path
  from $\max B$ to $v$ and every such path is completely contained in $B$.
\end{lemma}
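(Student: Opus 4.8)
The plan is to treat $(V(B),\preceq_N)$ as a finite poset and to show it possesses a greatest element, which will then serve as $\max B$; both parts of the ``in particular'' claim follow once this is in hand. Since $V(B)$ is finite and non-empty, it certainly has at least one $\preceq_N$-maximal vertex, so existence is immediate and the whole difficulty lies in uniqueness. (For a trivial block the statement is clear directly, so the argument below is only needed for non-trivial $B$.)

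For uniqueness, I would argue by contradiction. Suppose $u_1\neq u_2$ are both $\preceq_N$-maximal in $B$. They must be $\preceq_N$-incomparable, since $u_1\prec_N u_2$ (or the symmetric relation) would contradict the maximality of $u_1$ (resp.\ $u_2$). Applying Lemma~\ref{lem:upper-path} to this incomparable pair yields an undirected path $P$ connecting $u_1$ and $u_2$ with at least three vertices, each inner vertex $w$ of which satisfies $u_1\prec_N w$ or $u_2\prec_N w$. The crucial observation is that no inner vertex of $P$ can lie in $V(B)$: if some inner $w\in V(B)$, then $u_1\prec_N w$ or $u_2\prec_N w$ with $w\in V(B)$ would directly contradict the maximality of $u_1$ or $u_2$ in $B$. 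Hence $P$ meets $B$ in exactly its two endpoints $u_1,u_2$, so Proposition~\ref{prop:H-path} shows that $B\cup P$ is biconnected. Since $P$ has an inner vertex outside $V(B)$, the subgraph $B\cup P$ properly contains $B$, contradicting the maximality of the block $B$. Thus the $\preceq_N$-maximal vertex is unique, and I denote it $\max B$.

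Next I would upgrade ``unique maximal'' to ``greatest'', i.e.\ prove $v\preceq_N \max B$ for every $v\in V(B)$. This is the standard fact that a unique maximal element of a finite poset is a maximum: for any $v\in V(B)$, the up-set $\{z\in V(B)\mid v\preceq_N z\}$ is finite and non-empty, hence has a maximal element $y$; any such $y$ is in fact maximal in all of $V(B)$, because a strict ancestor of $y$ inside $V(B)$ would again belong to the up-set and contradict the choice of $y$. By uniqueness $y=\max B$, whence $v\preceq_N\max B$. Finally, $v\preceq_N\max B$ means by definition that there is a directed path from $\max B$ to $v$, which settles the existence statement; and since $\max B$ and $v$ both lie in $B$ with $v\preceq_N\max B$, Lemma~\ref{lem:block-prec-sandwich} (applied with $u=\max B$) guarantees that every such $(\max B)\,v$-path is completely contained in $B$.

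The main obstacle is the uniqueness step, and specifically verifying that the connecting path produced by Lemma~\ref{lem:upper-path} is a genuine $H$-path for $H=B$. This hinges exactly on the structural information that its inner vertices are proper ancestors of $u_1$ or $u_2$, which is what forces them out of $V(B)$ and lets Proposition~\ref{prop:H-path} deliver the contradiction with block-maximality. The passage from unique maximal to greatest and the path-containment conclusion are then routine consequences of finiteness and of Lemma~\ref{lem:block-prec-sandwich}.
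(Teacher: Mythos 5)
Your proof is correct and follows essentially the same route as the paper's: rule out two $\preceq_N$-maximal vertices via Lemma~\ref{lem:upper-path} and Prop.~\ref{prop:H-path} (the inner vertices of the connecting path being proper ancestors, hence outside $B$), then obtain path containment from Lemma~\ref{lem:block-prec-sandwich}. The only difference is that you spell out the routine finite-poset step from ``unique maximal'' to ``greatest'', which the paper leaves implicit.
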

\begin{proof}
  The statement is trivial for a block that consists only of a single
  vertex or arc. Otherwise, suppose there are two distinct
  $\preceq_N$-maximal vertices $v_1$ and $v_2$ in $B$.  By assumption,
  $v_1$ and $v_2$ must be $\preceq_N$-incomparable.  By
  Lemma~\ref{lem:upper-path}, $v_1$ and $v_2$ are connected by an
  undirected path $P$ that contains at least 3 vertices and of which all
  inner vertices $w$ satisfy $u\prec_N w$ or $v\prec_N w$. By
  $\preceq_N$-maximality of $v_1$ and $v_2$, none of these inner vertices
  can be contained in $B$. By Prop.~\ref{prop:H-path}, adding $P$ to $B$
  preserves biconnectivity, and thus $B$ is not a maximal biconnected
  subgraph; a contradiction.  In particular, for every $v\in V(B)$, we have
  $v\preceq_{N}\max B$, i.e., there is a path from $\max B$ to $v$ and by
  Lemma~\ref{lem:block-prec-sandwich}, each every such path is completely
  contained in $B$.
\end{proof}

\begin{corollary}
  \label{cor:maxB-outdegree}
  If $B$ is a non-trivial block in network $N$, then $\max B$ has at least
  two out-neighbors in $B$.
\end{corollary}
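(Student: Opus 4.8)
The plan is to exhibit an undirected cycle through $\max B$ inside $B$ and to argue that the two arcs of this cycle incident to $\max B$ must both point away from $\max B$, thereby producing two distinct out-neighbors.

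First I would produce the cycle. Since $B$ is non-trivial, it contains at least three vertices, so there is some $v\in V(B)$ with $v\ne\max B$. By Corollary~\ref{cor:block-cycle}, the vertices $\max B$ and $v$ lie on a common undirected cycle $K$ in $B$. By the definition of an undirected cycle, $K$ has $k\ge 3$ vertices and $\max B$ has exactly two neighbors on $K$, say $a$ and $b$; since $k\ge 3$, these two neighbors are distinct.

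Next I would pin down the orientation of the two cycle arcs incident to $\max B$. Each such arc is either a forward arc $(\max B,a)$ (resp.\ $(\max B,b)$) or a backward arc $(a,\max B)$ (resp.\ $(b,\max B)$). Suppose, for contradiction, that one of them is a backward arc, say $(a,\max B)\in E(N)$. Then $a$ is a parent of $\max B$, so $\max B\prec_N a$. But $a\in V(K)\subseteq V(B)$, which contradicts the $\preceq_N$-maximality of $\max B$ in $B$ guaranteed by Lemma~\ref{lem:max-B-unique}. The same argument applies to the arc joining $\max B$ and $b$. Hence both arcs must be forward arcs, i.e.\ $(\max B,a),(\max B,b)\in E(B)$.

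Therefore $a$ and $b$ are two distinct out-neighbors of $\max B$ in $B$, which is exactly the claim. The only points requiring care are that the two cycle-neighbors of $\max B$ are genuinely distinct and that both incident arcs are outgoing; the former follows directly from the definition of an undirected cycle (at least three vertices) and the latter from the maximality of $\max B$ within $B$, so no substantial obstacle arises.
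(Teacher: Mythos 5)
Your proof is correct and follows essentially the same argument as the paper: since $B$ is non-trivial, $\max B$ lies on an undirected cycle in $B$ and is thus incident with two distinct vertices of $B$, which by the $\preceq_N$-maximality of $\max B$ (Lemma~\ref{lem:max-B-unique}) must both be out-neighbors. Your version merely makes explicit the appeal to Corollary~\ref{cor:block-cycle} and the case check on arc orientation, which the paper leaves implicit.
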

\begin{proof}
  Since $B$ is non-trivial, $\max B$ lies on an undirected cycle in $B$ and
  thus is incident with two distinct vertices in $B$. By
  $\preceq_{N}$-maximality of $\max B$ in $B$, these must be out-neighbors of
  $\max B$.
\end{proof}

\begin{lemma}
  \label{lem:block-identity}
  Let $N$ be a network and suppose that $v\in V(N)$ is contained in the
  blocks $B$ and $B'$ of $N$.  If $v\notin \{\max B, \max B'\}$, then
  $B=B'$.
\end{lemma}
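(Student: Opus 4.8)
The plan is to argue by contradiction: assume $B\neq B'$ and show that $B$ and $B'$ would then have to share a common arc or lie on a common undirected cycle, which is impossible because distinct blocks are arc-disjoint (Fact~\ref{obs:biConn-arc-disjoint}). First I would use the hypothesis $v\notin\{\max B,\max B'\}$ to produce an \emph{incoming} arc of $v$ inside each block. By Lemma~\ref{lem:max-B-unique} there is a directed path from $\max B$ to $v$ that is completely contained in $B$; since $v\neq\max B$ this path has length at least $1$, so its last arc is of the form $(p,v)$ with $p\in V(B)\setminus\{v\}$ and $(p,v)\in E(B)$. The same argument applied to $B'$ yields an arc $(p',v)\in E(B')$. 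This step is routine and uniform for trivial and non-trivial blocks; a one-vertex block $B=\{v\}$ is excluded outright, as it would force $v=\max B$.

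Next I would compare the two arcs. If $(p,v)=(p',v)$, then $B$ and $B'$ share this arc, and Fact~\ref{obs:biConn-arc-disjoint} already gives $B=B'$. Hence I may assume $p\neq p'$, so $v$ has two distinct in-neighbours. The crux of the proof -- and the step I expect to be the main obstacle -- is to exhibit a \emph{single} undirected cycle $K$ passing through both arcs $(p,v)$ and $(p',v)$. I would split into two cases according to whether $p$ and $p'$ are $\preceq_N$-comparable. If they are incomparable, Lemma~\ref{lem:upper-path} supplies an undirected $pp'$-path whose inner vertices $w$ all satisfy $w\not\preceq_N p$ and $w\not\preceq_N p'$; since $v\prec_N p$ we have $v\preceq_N p$, so $v$ cannot be an inner vertex and the two arcs $(p,v),(p',v)$ close this path into an undirected cycle through $v$. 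If instead, say, $p\prec_N p'$, then there is a directed $p'p$-path $Q$; every vertex of $Q$ satisfies $\cdot\succeq_N p\succ_N v$ and hence differs from $v$, so $Q$ together with $(p,v)$ and $(p',v)$ again closes into an undirected cycle through $v$. Checking that the interiors genuinely avoid $v$ is the only delicate bookkeeping, and it is exactly where the rooted/DAG structure (via Lemmas~\ref{lem:max-B-unique} and~\ref{lem:upper-path}) is used.

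Finally I would promote the cycle to a block. Being a cycle, $K$ is biconnected and therefore lies entirely within a single block $B^\ast$ (uniqueness holds because two blocks sharing the $\ge 3$ vertices of $K$ must coincide, cf.\ Fact~\ref{obs:identical-block}), so in particular $E(K)\subseteq E(B^\ast)$. Thus $(p,v)\in E(B)\cap E(B^\ast)$ and $(p',v)\in E(B')\cap E(B^\ast)$; since distinct blocks are arc-disjoint (Fact~\ref{obs:biConn-arc-disjoint}), each non-empty intersection forces $B=B^\ast$ and $B'=B^\ast$, whence $B=B'$, contradicting $B\neq B'$. This completes the argument.
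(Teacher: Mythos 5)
Your proof is correct, and it takes a genuinely different route from the paper's. The paper also starts from the two directed paths $P\subseteq B$ and $P'\subseteq B'$ from $\max B$ and $\max B'$ down to $v$ (via Lemma~\ref{lem:max-B-unique}), and also dispatches the case of a shared arc via Fact~\ref{obs:biConn-arc-disjoint}; but from there it keeps the \emph{entire} paths and performs its case analysis on the $\preceq_N$-relationship of the two block maxima $\max B$ and $\max B'$. In the comparable case it carries out a rather delicate construction of three pairwise arc-disjoint paths to exhibit two arc-disjoint undirected $\max B'$--$v$ paths, and in the incomparable case it glues the union of $P$ and $P'$ to the Lemma~\ref{lem:upper-path} path between the maxima; in both cases it lands in an auxiliary block $B''$ and finishes, as you do, by arc-disjointness. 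You instead discard everything except the \emph{last} arcs $(p,v)\in E(B)$ and $(p',v)\in E(B')$ and do the case analysis locally on the comparability of the two in-neighbours $p,p'$ of $v$, producing in each case a single undirected cycle through both arcs; your bookkeeping that $v$ avoids the interior of the connecting path (via $v\prec_N p,p'$ against the conclusions of Lemma~\ref{lem:upper-path}, respectively against $w\succeq_N p$ on a directed $p'p$-path) is sound. What your version buys is economy: it avoids the paper's hardest step (the arc-disjoint path system in the comparable-maxima case) and in fact anticipates exactly the technique the paper deploys only later, in the proof of Lemma~\ref{lem:hybrid-properly-contained}, that a vertex with two in-neighbours lies on a cycle through both in-arcs. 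What the paper's version buys is the by-product that the two maxima themselves end up in a common block, information your more local argument never produces but which the statement does not require.
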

\begin{proof}
  Assume that vertex $v$ is contained in the blocks $B$ and $B'$ of $N$ but
  $v\notin \{\max B, \max B'\}$.  By Lemma~\ref{lem:max-B-unique}, there
  exists a directed path $P$ in $B$ from $\max B$ to $v$. Similarly, there
  is a directed path $P'$ in $B'$ from $\max B'$ to $v$.  Since
  $v\notin \{\max B, \max B'\}$, both $P$ and $P'$ contain at least one
  arc.

  Assume first that $P$ and $P'$ share an arc $e$ and thus, that $B$ and
  $B'$ share the arc $e$. In this case, contraposition of Obs.\
  \ref{obs:biConn-arc-disjoint} implies that $B=B'$.  Hence, in the
  following we assume that $P$ and $P'$ are arc-disjoint.

  Consider first the case $\max B' \preceq_{N} \max B$.  Let $u$ be the
  unique $\preceq_N$-minimal vertex in $P$ such that
  $\max B'\preceq_{N} u$. Together with $v\prec_N \max B'$, this implies
  that $u\ne v$.  Let $P_{u,v}$ be the subpath of $P$ from $u$ to $v$ and
  note that $P_{u,v}$ contains at least one arc.  Since
  $\max B'\preceq_{N} u$, we can find a directed path $P_{u, \max B'}$
  (possible only containing a single vertex $u=\max B'$) from $u$ to
  $\max B'$.  The paths $P_{u,v}$ and $P_{u, \max B'}$ only have vertex $u$
  in common since $u$ is the unique $\preceq_N$-minimal vertex in $P$ with
  $\max B'\preceq_{N} u$.  Since $N$ is acyclic, $P_{u, \max B'}$ and $P'$
  are arc-disjoint.  In summary, $P'$, $P_{u,v}$, and $P_{u, \max B'}$ are
  pairwise arc-disjoint.  Hence, $\max B'$ and $v$ are connected by two
  arc-disjoint undirected paths that correspond to $P'$ and the union of
  $P_{u,v}$ and $P_{u, \max B'}$.  Therefore, $\max B'$ and $v$ are
  contained in a common block $B''$.  In particular, $B$ and $B''$ share
  all arcs in $P_{u,v}$, and thus at least one arc. Similarly, $B'$ and
  $B''$ share all arcs in $P'$, and thus at least one arc.  By Obs.\
  \ref{obs:biConn-arc-disjoint}, it follows that $B=B''=B'$.  Similarly,
  $\max B \preceq_{N} \max B'$ implies $B=B'$.

  Suppose now that $\max B$ and $\max B'$ are $\preceq_N$-incomparable.
  Recall that $P$ and $P'$ are arc-disjoint and each contain at least one
  arc.  Let $\breve{P}$ be the undirected path corresponding to the union
  of $P$ and $P'$ and observe that all of its inner vertices $w$ that
  satisfy $w\preceq_N \max B$ or $w\preceq_N \max B'$.  Since $\max B$ and
  $\max B'$ are $\preceq_N$-incomparable, Lemma~\ref{lem:upper-path}
  implies that they are connected by an undirected path $\invbreve{P}$ that
  contains at least 3 vertices and of which all inner vertices $w'$ satisfy
  $w'\not\preceq_N \max B$ and $w'\not\preceq_N \max B'$.  As a
  consequence, $\breve{P}$ and $\invbreve{P}$ only have their endpoints
  $\max B$ and $\max B'$ in common.  Hence, $\max B$ and $\max B'$ are
  contained in a common block $B''$.  In particular, $B$ and $B''$ share
  all arcs in $P$, and thus at least one arc. Similarly, $B'$ and $B''$
  share all arcs in $P'$, and thus at least one arc.  By Obs.\
  \ref{obs:biConn-arc-disjoint}, it follows that $B=B''=B'$.
\end{proof}

By definition, $N$ is tree if and only if it contains no undirected cycle,
i.e., if all blocks are trivial. Thus $N$ is a tree if and only if there
are no hybrid vertices.

\begin{definition}
  \label{def:intB}
  Let $N$ be a network and $B$ a non-trivial block in $N$ with
  terminal vertices $\{m_1,m_2,\dots, m_h\}$, $h\ge 1$. Then
  \begin{equation}
    B^0 \coloneqq B\setminus\{ \max B, m_1, m_2,\dots m_h\}
  \end{equation}
  is the \emph{interior} of $B$.
\end{definition}
As an immediate consequence of Lemma~\ref{lem:block-identity}, we have
\begin{fact}
  \label{obs:block-identity}
  Let $B_1$ and $B_2$ be two distinct blocks in $N$. Then $B_1^0\cap
  B_2^0=\emptyset$.
\end{fact}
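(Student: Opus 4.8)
The statement to prove is Observation~\ref{obs:block-identity}: for two distinct blocks $B_1, B_2$ in $N$, their interiors satisfy $B_1^0 \cap B_2^0 = \emptyset$.

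The plan is to argue by contraposition via a direct contradiction. Suppose some vertex $v \in B_1^0 \cap B_2^0$ exists. By the definition of the interior $B_i^0 = B_i \setminus \{\max B_i, m_1^{(i)}, \dots, m_{h_i}^{(i)}\}$, the vertex $v$ lies in both blocks $B_1$ and $B_2$, and in particular $v \neq \max B_1$ and $v \neq \max B_2$. This is precisely the hypothesis of Lemma~\ref{lem:block-identity}, which states that if $v$ is contained in both blocks $B$ and $B'$ with $v \notin \{\max B, \max B'\}$, then $B = B'$. Applying this lemma directly yields $B_1 = B_2$, contradicting the assumption that $B_1$ and $B_2$ are distinct. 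Hence no such $v$ can exist, and $B_1^0 \cap B_2^0 = \emptyset$.

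The only subtlety worth checking is that membership $v \in B_i^0$ genuinely implies $v \in V(B_i)$ as a vertex (so that Lemma~\ref{lem:block-identity} applies), and that the excluded set in the definition of $B_i^0$ includes $\max B_i$, guaranteeing $v \neq \max B_i$. Both are immediate from Definition~\ref{def:intB}: the interior is formed by removing exactly the maximal vertex and the terminal vertices from the block, so every interior vertex is in particular a non-maximal vertex of the block. There is no real obstacle here; the observation is an immediate corollary, and the entire content has already been absorbed into Lemma~\ref{lem:block-identity}.

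Thus the proof is a single application of the preceding lemma, and I would write it as a two-sentence contradiction argument rather than anything longer.
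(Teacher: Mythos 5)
Your proof is correct and is exactly the paper's intended argument: the paper states this observation as an immediate consequence of Lemma~\ref{lem:block-identity}, and your two-sentence contradiction argument (an interior vertex of a block is a non-maximal vertex of that block, so a common interior vertex forces $B_1 = B_2$) spells out precisely that deduction. Nothing is missing.
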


\begin{lemma}
  \label{lem:hybrid-properly-contained}
  Let $N$ be a network and $w\in V(N)$ be a hybrid vertex. Then $w$ and all
  of its in-neighbors are contained in a non-trivial block $B$.
\end{lemma}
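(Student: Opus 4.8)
The plan is to reduce the statement to a claim about pairs of parents of $w$ and then to glue the resulting blocks together. Since $w$ is hybrid, $\indeg_N(w)\ge 2$, so $w$ has at least two distinct in-neighbors. I would first prove the following local statement: for any two distinct $p,q\in\parent_N(w)$, the three vertices $w$, $p$, $q$ lie on a common undirected cycle, and hence in a common non-trivial block. The overall result then follows by merging these blocks, using that two distinct blocks share at most one vertex \cite[Prop.~4.1.19]{West:01}.

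For the local statement, the idea is to connect $p$ and $q$ by an undirected path $P$ that avoids $w$ and then close it into a cycle using the two arcs $(p,w)$ and $(q,w)$. To obtain $P$ I would distinguish two cases according to the $\preceq_N$-comparability of $p$ and $q$. If $p$ and $q$ are $\preceq_N$-incomparable, Lemma~\ref{lem:upper-path} provides an undirected $pq$-path all of whose inner vertices $w'$ satisfy $w'\not\preceq_N p$ and $w'\not\preceq_N q$; since $w\prec_N p$ (as $w$ is a child of $p$), the vertex $w$ can be neither an inner vertex nor an endpoint of $P$, so $P$ avoids $w$. If instead $p$ and $q$ are comparable, say $q\prec_N p$, I would take a directed $pq$-path $P$; here $w$ cannot lie on $P$, for otherwise every vertex of $P$ being $\preceq_N$-between $q$ and $p$ would give $q\preceq_N w$, which together with $w\prec_N q$ contradicts acyclicity of $N$. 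In either case $w\notin V(P)$ and $w\ne p,q$.

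Appending the forward arc $(q,w)$ and the backward arc $(p,w)$ to $P$ then yields an undirected cycle on $V(P)\cup\{w\}$ containing the three distinct vertices $w,p,q$. Since any undirected cycle is biconnected and a block is a maximal biconnected subgraph, this cycle is contained in a unique block, which is non-trivial. Now write $p_1,\dots,p_k$ ($k\ge 2$) for the in-neighbors of $w$ and let $B_{1j}$ denote the non-trivial block containing $w,p_1,p_j$ for $2\le j\le k$. Each pair $B_{12}$, $B_{1j}$ shares the two distinct vertices $w$ and $p_1$, so by the fact that distinct blocks share at most one vertex they must coincide; hence all $B_{1j}$ equal a single block $B$. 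This $B$ contains $w$, contains $p_1$ (via $B_{12}$), and contains each $p_j$ (via $B_{1j}=B$), i.e.\ $w$ together with all of its in-neighbors, as claimed.

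I expect the only delicate point to be checking that the connecting path $P$ genuinely avoids $w$ in each comparability case; the comparable case in particular relies on a small acyclicity argument. Once that is secured, the cycle construction and the gluing step via the ``at most one shared vertex'' property are routine, and the non-triviality of $B$ is automatic since $B$ contains an undirected cycle.
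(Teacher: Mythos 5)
Your proof is correct and takes essentially the same route as the paper's: the same case split on $\preceq_N$-comparability of two parents (a directed path in the comparable case, Lemma~\ref{lem:upper-path} in the incomparable one), the same verification that $w$ avoids the connecting path, and the same closing of the path into an undirected cycle through $w$. The only cosmetic difference is the gluing step, where you identify the blocks via the two shared vertices $w$ and $p_1$ while the paper glues along the shared arc $(v,w)$ using Obs.~\ref{obs:identical-block}; both rest on the same fact that distinct blocks share at most one vertex.
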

\begin{proof}
  Let $w$ be a hybrid vertex, i.e., $\indeg_N(w)\ge 2$, and let $v$
  and $v'$ be two distinct in-neighbors of $w$. If $v'\prec_N v$, then there is
  a directed path $P$ from $v$ to $v'$ that contains at least one arc.
  Moreover, $w$ is not a vertex of $P$ since otherwise $v'\preceq_{N} w$ would
  contradict $w\prec_N v'$. Therefore, $P$ together with $w$ and arcs $vw$ and
  $v'w$ form an undirected cycle.  An analogous argument applies if
  $v\prec_N v'$.  If $v$ and $v'$ are $\preceq_{N}$-incomparable, then
  Lemma~\ref{lem:upper-path} implies that they are connected by an
  undirected path $P$ that contains at least 3 vertices and of which all
  inner vertices $w'$ satisfy $w'\not\preceq_N v$ and $w'\not\preceq_N v'$.
  Together with $w\prec_N v,v'$, this implies that $w$ is not contained in
  $P$.  Therefore, $P$ together with $w$ and arcs $vw$ and $v'w$ form an
  undirected cycle.  In summary, in all cases, $w$ is contained in a
  non-trivial block $B_{v'}$ that, in particular, also contains $v$, $v'$,
  and the arc $vw$.  Since $v'$ was chosen arbitrarily among the
  in-neighbors of $w$ that are distinct from $v$ and the blocks $B_{v'}$
  for all of these vertices share the arc $vw$,
  Obs.~\ref{obs:identical-block} implies that $w$ and all of its
  in-neighbors are contained in a non-trivial block $B$.
\end{proof}
Following \cite{Huson:11}, we say that a hybrid vertex $w$ is
\emph{properly contained} in a block $B$ if all of its in-neighbors are
contained in $B$. As an immediate consequence of
Lemma~\ref{lem:hybrid-properly-contained}, every hybrid vertex is properly
contained in exactly one block.
\begin{lemma}
  \label{lem:properly-contained}
  Let $N$ be a network, $w$ a hybrid vertex in $N$, and $B$ be a block of $N$.
  Then the following statements are equivalent:
  \begin{enumerate}[noitemsep,nolistsep]
  \item $w$ is properly contained in $B$, i.e., $w$ and all of its parents
    are contained in $B$.
  \item $w$ and one of its parents $u$ are contained in $B$.
  \item $w\in V(B)\setminus\{ \max B \}$.
  \end{enumerate}
\end{lemma}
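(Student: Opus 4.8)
The plan is to establish the cyclic chain of implications $(1)\Rightarrow(2)\Rightarrow(3)\Rightarrow(1)$. The implication $(1)\Rightarrow(2)$ is immediate: since $w$ is a hybrid vertex it has $\indeg_N(w)\ge 2$, so it possesses at least one parent $u$, and by $(1)$ this parent lies in $B$ together with $w$.

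For $(2)\Rightarrow(3)$ I would argue directly with the order $\preceq_N$. Let $u$ be a parent of $w$ with $u,w\in V(B)$. Since $(u,w)\in E(N)$ we have $w\prec_N u$, and since $u\in V(B)$, Lemma~\ref{lem:max-B-unique} gives $u\preceq_N\max B$. Combining these yields $w\prec_N\max B$, so in particular $w\ne\max B$; as $w\in V(B)$ by assumption, this is exactly $(3)$.

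The substantial step is $(3)\Rightarrow(1)$, and my approach is to identify $B$ with the block guaranteed by Lemma~\ref{lem:hybrid-properly-contained}. First, from $w\in V(B)\setminus\{\max B\}$ and Lemma~\ref{lem:max-B-unique} there is a directed path from $\max B$ to $w$ lying entirely in $B$; since $w\ne\max B$ this path has a final arc $(u',w)$ with $u'\in V(B)$, so $w$ has a parent $u'$ inside $B$. Next, Lemma~\ref{lem:hybrid-properly-contained} provides a (non-trivial) block $B^{*}$ in which $w$ is properly contained, i.e.\ $w$ and all of its parents---in particular $u'$---lie in $B^{*}$. Because $u'\in V(B^{*})$ is a parent of $w$, we get $w\prec_N u'\preceq_N\max B^{*}$ and hence $w\ne\max B^{*}$. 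Thus $w$ is a common vertex of $B$ and $B^{*}$ with $w\notin\{\max B,\max B^{*}\}$, so Lemma~\ref{lem:block-identity} forces $B=B^{*}$. Consequently every parent of $w$ lies in $B^{*}=B$, which is precisely $(1)$.

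The main obstacle is this last implication: the crux is recognizing that the parent $u'$ recovered from the $\max B$-to-$w$ path is also a parent witnessing proper containment of $w$ in $B^{*}$, so that $w$ becomes a \emph{non-maximal} common vertex of $B$ and $B^{*}$ and Lemma~\ref{lem:block-identity} can be applied; everything else reduces to routine comparability bookkeeping. A minor point worth noting is that this same argument shows the trivial-block case of $(3)$ cannot arise for a hybrid $w$, since Lemma~\ref{lem:block-identity} would otherwise equate a trivial $B$ with the non-trivial $B^{*}$, consistent with Lemma~\ref{lem:hybrid-properly-contained}.
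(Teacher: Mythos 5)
Your proof is correct and follows essentially the same route as the paper: both arguments extract a parent of $w$ inside $B$ from the directed path from $\max B$ to $w$ (Lemma~\ref{lem:max-B-unique}), invoke Lemma~\ref{lem:hybrid-properly-contained} to obtain the ambient non-trivial block, and then identify that block with $B$. The differences are cosmetic -- you traverse the implication cycle in the opposite direction and close the identification step with Lemma~\ref{lem:block-identity} (using that $w$ is non-maximal in both blocks), whereas the paper applies Observation~\ref{obs:identical-block} to the two shared vertices $w$ and $u$.
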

\begin{proof}
  \textit{(3) $\implies$ (2).}
  Since $\max B$ is the unique $\preceq_{N}$-maximal vertex in $B$, we have
  $w\prec_N \max B$. By Lemma~\ref{lem:max-B-unique}, there is a directed
  path from $\max
  B$ to $w$ that is completely contained in $B$. Clearly, $P$ contains a
  parent of $w$, which is there also contained in $B$.
  \textit{(2) $\implies$ (1).}
  By Lemma~\ref{lem:hybrid-properly-contained}, $w$ and all of its parents
  are contained in a non-trivial block $B'$.
  Hence, $B$ and $B'$ share the distinct vertices $w$ and $u$. By
  Obs.~\ref{obs:identical-block}, $B=B'$.
  \textit{(1) $\implies$ (3).}
  If $w$ and all of its (at least two) parents are contained in $B$, then
  clearly $w\in V(B)\setminus\{ \max B \}$.
\end{proof}

As a consequence, if a hybrid vertex $w$ is contained in a block $B$ but
not properly contained, then it must hold $w=\max B$. This motivates the
following definition of level-$k$ networks:
\begin{definition}
  \label{def:level-k}
  A network $N$ is \emph{level-$k$} if each block $B$ of $N$ contains at
  most $k$ hybrid vertices distinct from $\max B$.
\end{definition}

Equivalently, by Lemma~\ref{lem:hybrid-properly-contained}, $N$ is
level-$k$ if each block $B$ of $N$ properly contains at most $k$ hybrid
vertices.  In \cite{CJSS04}, level-$k$ networks are simply defined by
having no more than $k$ hybrid vertices within any given block. We note
that this is equivalent to our definition in a setting where hybrid
vertices are restricted to having outdegree $1$. Def.~\ref{def:level-k}
also accommodates the contraction of  out-arcs of hybrid vertices $v$
with $\outdeg(v)=1$, see Fig.~\ref{fig:network-8}.
\begin{figure}
  \begin{center}
    \includegraphics[width=0.45\textwidth]{./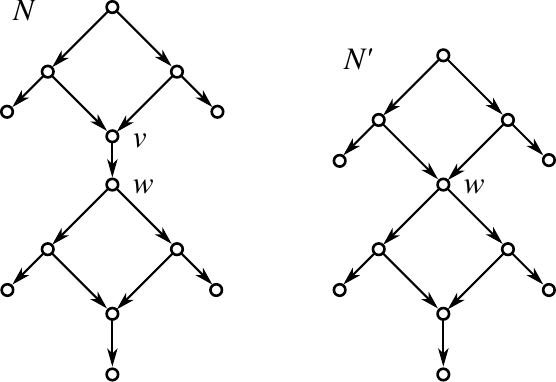}
  \end{center}
  \caption{The network $N$ contains a hybrid vertex $v$ with
    $\outdeg(v)=1$. Network $N'$ is obtained from $N$ by contraction of the
    arc $(v,w)$, i.e., the operation $\contract(v,w)$ which preserves
    vertex $w$.  Vertex $w$ is now a hybrid vertex that is contained in two
    blocks of $N'$.  However, only the upper block properly contains it.}
  \label{fig:network-8}
\end{figure}

The following two lemmas show that neither arc contraction nor expansion
increases the level of a network. Since their proofs are rather lengthy and
technical, they are given in Section~\ref{sec:appx-blocks} in the Appendix.

\begin{lemma}
  \label{lem:contract-level-k}
  Let $N$ be a network, $(w',w) \in E(N)$ be an arc that is not a
  shortcut, and $N'$ be the network obtained from $N$ by applying
  $\contract(w',w)$.  If $N$ is level-$k$, then $N'$ is also level-$k$.
\end{lemma}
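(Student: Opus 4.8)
The plan is to show that the block structure of $N'$ (obtained by contracting the non-shortcut arc $(w',w)$) is controlled by that of $N$, in the sense that every block of $N'$ arises from a block of $N$ by contracting $(w',w)$, and that this contraction does not increase the number of properly-contained hybrid vertices. First I would recall that by Lemma~\ref{lem:contraction}, $\contract(w',w)$ produces a network $N'$ with $V(N')=V(N)\setminus\{w'\}$, and that the ancestor relation can only \emph{shrink} in the sense made precise there: $v\prec_{N'}v'$ implies $v\prec_N v'$ or $v,v'$ are $\preceq_N$-incomparable. Since level-$k$ is a statement about blocks together with the partition of their vertices into $\max$, terminal (hybrid) vertices, and interior, I would analyze how both the blocks and the hybrid vertices transform.

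The key step is to establish a correspondence between the non-trivial blocks of $N'$ and those of $N$. My intended claim is: if $B'$ is a non-trivial block of $N'$, then there is a block $B$ of $N$ such that $V(B')\subseteq (V(B)\setminus\{w'\})\cup\{w\}$, i.e., $B'$ is obtained from $B$ either by leaving it untouched (if $w'\notin V(B)$) or by contracting $(w',w)$ within it. To prove this I would take two vertices on a common undirected cycle $K'$ in $B'$ (they exist by Cor.~\ref{cor:block-cycle}) and lift $K'$ to an undirected closed walk in $N$: every arc of $N'$ is either an arc of $N$ or arose from rerouting an arc through $w'$, so each arc of $K'$ either already lies in $N$ or can be replaced by a length-two detour through $w'$. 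This produces an undirected cycle (or a biconnected subgraph, via Prop.~\ref{prop:H-path}) in $N$ whose vertices, apart from the possibly-inserted $w'$, are exactly those of $K'$; by Obs.~\ref{obs:identical-block} all these vertices lie in one block $B$ of $N$. The main obstacle here is the bookkeeping: I must check that the lifted walk really stays within a single biconnected component rather than spreading across several blocks glued at cut vertices, and that after contraction the image of $B$ is still biconnected (so that it genuinely \emph{is} the block $B'$, not merely a biconnected subgraph sitting inside a larger block).

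With the block correspondence in hand, the counting step is comparatively direct. A hybrid vertex $u'$ of $N'$ properly contained in $B'$ must, by Lemma~\ref{lem:properly-contained}, satisfy $u'\in V(B')\setminus\{\max B'\}$ and have all its $N'$-parents in $B'$. Pulling back along the correspondence, $u'$ (or its preimage) is a hybrid vertex of $N$ properly contained in the associated block $B$: contracting $(w',w)$ can merge the parent sets of $w'$ and $w$ but creates no genuinely new hybrid vertex, and the image of $\max B$ is $\max B'$. Thus the map $B'\mapsto B$ sends the properly-contained hybrid vertices of $B'$ injectively into those of $B$, so the count for $B'$ is at most the count for $B$, which is at most $k$ by hypothesis. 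I would treat separately the degenerate cases where $w$ itself becomes $\max B'$ or where $w'=\max B$ (so that $w$ inherits the maximal role), checking in each that no previously-interior or non-hybrid vertex is promoted to a properly-contained hybrid vertex; this edge-case verification, together with confirming that distinct blocks of $N'$ pull back to distinct blocks of $N$, is the part most likely to require care, but it reduces to the structural facts about $\max B$ and terminal vertices already recorded in Lemmas~\ref{lem:max-B-unique} and~\ref{lem:properly-contained}.
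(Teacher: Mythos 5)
Your proposal takes essentially the same route as the paper's own proof: the appendix establishes precisely your block correspondence (Lemmas~\ref{lem:contract-common-block} and~\ref{lem:contract-block-contained} show $V(B')\setminus\{w\}\subseteq V(B)$ for a block $B$ of $N$, with $w'\in V(B)$ as fallback when $w\notin V(B)$) by lifting undirected cycles of $N'$ back to $N$ through $w'$, and then counts properly contained hybrid vertices via the same injective pull-back in which $w$ may map to $w'$. The one detail to watch — part of the bookkeeping you flagged — is that when both arcs of $K'$ incident with $w$ are new arcs, the length-two detours would revisit $w'$ and do not yield a cycle, so one must instead substitute $w'$ for $w$ in $K'$; this is exactly why the correspondence must except $w$, as both your statement and the paper's do.
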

\begin{proof}
  See Section~\ref{sec:appx-blocks} in the Appendix.
\end{proof}

\begin{figure}[t]
  \begin{center}
    \includegraphics[width=0.5\textwidth]{./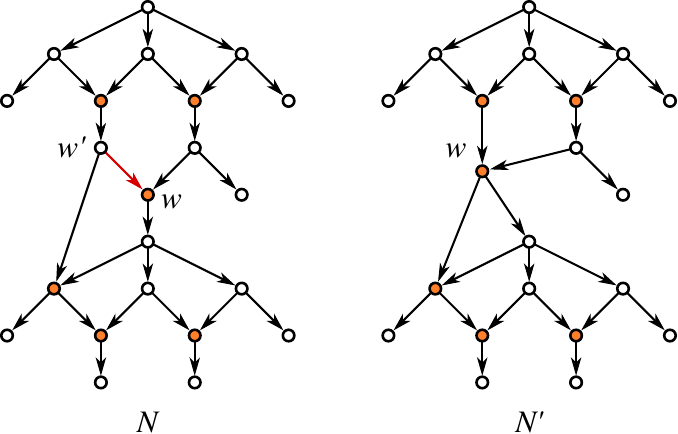}
  \end{center}
  \caption{The level-$3$ network $N'$ is obtained from the level-$6$
    network by application of $\contract(w',w)$. The hybrid vertices are
    highlighted in orange.}
  \label{fig:cntr-decreases-level}
\end{figure}

The converse of Lemma~\ref{lem:contract-level-k}, however, is not true as
shown by the example in Fig.~\ref{fig:cntr-decreases-level}. This example
also shows that even mitigated versions ``\emph{if $N'$ is level-$k$, then
$N$ is level-$(k+1)$}'' do not hold. As an immediate consequence of the
definition of $\phylo(N)$, Lemma~\ref{lem:outdeg-1-contraction} and
\ref{lem:contract-level-k}, we obtain
\begin{corollary}
  Let $N$ be a level-$k$ network. Then the network $N'$ obtained by
  operation $\phylo(N)$ is a phylogenetic level-$k$ network such that
  $\mathscr{C}_{N}=\mathscr{C}_{N'}$.
\end{corollary}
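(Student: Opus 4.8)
The plan is to argue by induction on the number of contraction steps carried out by $\phylo(N)$. First I would observe that $\phylo(N)$ is a finite process: by Lemma~\ref{lem:outdeg-1-contraction}, each application of $\contract(u,w)$ with $\outdeg(u)=1$ produces a network whose vertex set is $V(N)\setminus\{u\}$, so the number of vertices strictly decreases at every step and the procedure must terminate after finitely many contractions. I would thus fix the sequence of intermediate networks $N=N_0, N_1,\dots,N_m=N'$ produced by $\phylo$.

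Next I would verify that the two invariants ``level-$k$'' and ``unchanged clustering system'' are preserved at each step. Each step contracts an arc $(u,w)$ with $\outdeg_{N_i}(u)=1$ and $\indeg_{N_i}(u)\le 1$. Since $\outdeg_{N_i}(u)=1$ forces $\child(u)\setminus\{w\}=\emptyset$, the arc $(u,w)$ is not a shortcut. Consequently Lemma~\ref{lem:outdeg-1-contraction} applies and yields that $N_{i+1}$ is again a network with $\mathscr{C}_{N_{i+1}}=\mathscr{C}_{N_i}$, while Lemma~\ref{lem:contract-level-k} (whose hypothesis that the contracted arc is not a shortcut is met) gives that $N_{i+1}$ is level-$k$ whenever $N_i$ is. Chaining these equalities and implications from $i=0$ to $m-1$, and using that $N=N_0$ is level-$k$ by assumption, shows that $N'$ is level-$k$ and $\mathscr{C}_{N'}=\mathscr{C}_{N}$.

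Finally I would argue that $N'$ is phylogenetic. By definition, $\phylo$ terminates exactly when $N'$ contains no arc $(u,w)$ with $\outdeg_{N'}(u)=1$ and $\indeg_{N'}(u)\le 1$. Because any vertex $u$ with $\outdeg_{N'}(u)=1$ has a (unique) outgoing arc, the absence of a contractible arc is equivalent to the absence of any vertex $u$ with $\outdeg_{N'}(u)=1$ and $\indeg_{N'}(u)\le 1$, which is precisely condition~(N2). Hence $N'$ is phylogenetic, completing the argument.

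I do not anticipate a genuine obstacle here; the statement is essentially an assembly of the cited lemmas. The only point demanding care is the bookkeeping across the iteration: one must confirm inductively that each $N_i$ is still a network (so that both lemmas remain applicable), which is exactly what Lemma~\ref{lem:outdeg-1-contraction} guarantees, and one must note that the condition $\outdeg(u)=1$ simultaneously rules out $(u,w)$ being a shortcut so that Lemma~\ref{lem:contract-level-k} can be invoked at every step.
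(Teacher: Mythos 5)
Your proof is correct and follows essentially the same route as the paper, which derives the corollary as an immediate consequence of the definition of $\phylo(N)$, Lemma~\ref{lem:outdeg-1-contraction}, and Lemma~\ref{lem:contract-level-k}; you simply make explicit the termination argument, the observation that $\outdeg(u)=1$ rules out shortcuts, and the fact that the stopping condition of $\phylo$ is exactly condition~(N2). No gaps.
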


\begin{lemma}
  \label{lem:expand-level-k}
  Let $N$ be a network and $N'$ be the network obtained from $N$ by applying
  $\expand(w)$ for some $w\in V(N)$. Then $N$ is level-$k$ if and only if
  $N'$ is level-$k$.
\end{lemma}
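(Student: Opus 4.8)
The plan is to prove the two implications separately, obtaining one of them essentially for free from the already established behaviour of contractions, and reserving the block analysis for the other.

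For the direction ``$N'$ level-$k$ $\Rightarrow$ $N$ level-$k$'', I would first recall (as recorded in the proof of Lemma~\ref{lem:expand}) that $N$ is recovered from $N'$ by applying $\contract(w',w)$, where $w'$ is the newly created vertex. Since $w'$ has $w$ as its unique child in $N'$, the set $\child_{N'}(w')\setminus\{w\}$ is empty and hence $(w',w)$ is not a shortcut. Consequently Lemma~\ref{lem:contract-level-k} applies to $N'$ and the arc $(w',w)$, yielding that the network obtained from $N'$ by $\contract(w',w)$, namely $N$, is level-$k$ whenever $N'$ is. This disposes of one implication without any further work.

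For the converse ``$N$ level-$k$ $\Rightarrow$ $N'$ level-$k$'' I would set up a correspondence between the non-trivial blocks of $N$ and those of $N'$ at the level of arcs. The operation $\expand(w)$ affects only the arcs incident with $w$: each in-arc $(p,w)$ with $p\in\parent_N(w)$ is replaced by $(p,w')$, the single new arc $(w',w)$ is added, and every other arc is left untouched. I would then analyse the undirected cycles of $N'$ and show that each arises from a unique undirected cycle of $N$ by the substitution $w\mapsto w'$: a cycle of $N'$ avoiding $w'$ is already a cycle of $N$ (through $w$ it can only use out-arcs, which are unchanged); a cycle of $N'$ meeting $w'$ through two in-arcs $(p,w'),(q,w')$ corresponds to the $N$-cycle through $w$ using $(p,w),(q,w)$; and a cycle of $N'$ using the new arc $(w',w)$ together with some in-arc $(p,w')$ corresponds to the $N$-cycle obtained by replacing the segment $p$--$w'$--$w$ by the single arc $(p,w)$. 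Since blocks partition the arc set (Obs.~\ref{obs:biConn-arc-disjoint}) and, by Cor.~\ref{cor:block-cycle}, two arcs lie in a common non-trivial block precisely when they lie on a common undirected cycle, this correspondence induces a bijection between the non-trivial blocks of $N$ and those of $N'$ that never merges two distinct blocks of $N$.

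It then remains to verify that this bijection preserves the number of \emph{properly contained} hybrid vertices of each block, which by Def.~\ref{def:level-k} and Lemma~\ref{lem:properly-contained} controls the level. Here I would use that the hybrid vertices of $N'$ are exactly those of $N$ with the single modification that, if $w$ is a hybrid vertex of $N$, then $w'$ inherits all in-arcs and becomes hybrid while $w$ acquires indegree $1$ and becomes a tree vertex; if $w$ is already a tree vertex, then neither $w$ nor $w'$ is hybrid in $N'$. For every hybrid vertex $h\neq w$ the parent set is unchanged, so by Lemma~\ref{lem:hybrid-properly-contained} it is properly contained in the block corresponding to its block in $N$. The only case requiring care is the block $B$ properly containing $w$ in $N$ (unique by Lemma~\ref{lem:hybrid-properly-contained}): under the correspondence $w'$ takes over the role of $w$ as a properly contained hybrid vertex of the corresponding block, while $w$ itself either leaves this block or survives in it only as a tree vertex, so the count is unchanged. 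Blocks $B$ with $w=\max B$ need separate attention, but there $w$ was never counted (it is not properly contained, Lemma~\ref{lem:properly-contained}) and its out-arcs in $B$, which exist by Cor.~\ref{cor:maxB-outdegree}, are untouched by $\expand(w)$, so again nothing changes. Combining these, every non-trivial block of $N'$ properly contains at most as many hybrid vertices as the corresponding block of $N$, whence $N'$ is level-$k$.

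I expect the main obstacle to be the bookkeeping in the last two paragraphs: making the cycle-to-cycle correspondence precise enough to conclude both that no two distinct blocks of $N$ merge in $N'$ and that the new vertex $w'$ never becomes an ``extra'' properly contained hybrid vertex beyond the one it replaces. All remaining ingredients---that $N'$ is a network, and the basic behaviour of hybrid vertices and blocks---are already available from Lemma~\ref{lem:expand} and the block lemmas of Section~\ref{sec:blocks}.
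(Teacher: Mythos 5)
Your first direction (recovering $N$ from $N'$ via $\contract(w',w)$, noting this arc cannot be a shortcut, and invoking Lemma~\ref{lem:contract-level-k}) is correct and is exactly what the paper does. The genuine gap lies in your converse direction, specifically in case (b) of your cycle correspondence. You claim that a cycle of $N'$ meeting $w'$ through two in-arcs $(p,w'),(q,w')$ ``corresponds to the $N$-cycle through $w$ using $(p,w),(q,w)$'' under the substitution $w'\mapsto w$. This fails when the cycle \emph{also} passes through $w$, which is possible without using the arc $(w',w)$: since the out-arcs of $w$ are unchanged, $N'$ can contain a cycle of the form $w',p,\dots,a,w,b,\dots,q,w'$ with $p,q\in\parent_N(w)$ and $a,b\in\child_N(w)$. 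For a concrete instance take $N$ with arcs $(\rho,p),(\rho,q),(p,w),(q,w),(w,a),(w,b),(p,a),(q,b)$; after $\expand(w)$ the cycle on $w',p,a,w,b,q$ has exactly this shape. Substituting $w'\mapsto w$ in such a cycle yields a closed walk visiting $w$ twice, not a cycle of $N$; it decomposes into two cycles of $N$ sharing only the vertex $w$, which a priori could lie in two distinct blocks of $N$ glued at the cut vertex $w$. So your claimed bijection---and in particular the conclusion that no block of $N'$ merges two blocks of $N$, which is precisely what the level bound requires---does not follow from the three cases you list.

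The gap is closable, but it needs an ingredient you never invoke: in the problematic configuration $w$ has two parents in $N$, hence is a hybrid vertex, and each of the two $N$-cycles produced by the substitution contains $w$ together with one of its parents; thus $w$ is not the $\preceq_N$-maximal vertex of either containing block, and Lemma~\ref{lem:block-identity} forces those blocks to coincide (equivalently, one can appeal to Lemma~\ref{lem:hybrid-properly-contained}). This is exactly the subcase ``$w\in V(K')$'' handled in the paper's Lemma~\ref{lem:expd-common-block}, on which Lemma~\ref{lem:expand-block-contained} and the published proof of the present lemma rest. Without this step your block correspondence is not well defined, and the hybrid-vertex bookkeeping in your final paragraph, which presupposes that correspondence, inherits the gap.
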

\begin{proof}
  See Section~\ref{sec:appx-blocks} in the Appendix.
\end{proof}

The definition of $\phylo(N)$ and $\contract^{\star}(v',v)$ and
Lemma~\ref{lem:contract-level-k} yield
\begin{corollary}
  Let $N$ be a level-$k$ network.  If $N'$ is the the network obtained from
  $N$ by applying $\phylo(N)$ or $\contract^{\star}(v',v)$ for some arc
  $(v',v) \in E(N)$ that is not a shortcut, then $N'$ is phylogenetic and
  level-$k$.
\end{corollary}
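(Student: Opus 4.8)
The plan is to treat the two operations separately, in each case reducing to repeated contraction of \emph{non-shortcut} arcs so that Lemma~\ref{lem:contract-level-k} applies at every step, and using the fact that $\phylo$ terminates exactly when condition (N2) holds.

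First I would consider $N'=\phylo(N)$. By definition, $\phylo$ performs a sequence of operations $\contract(u,w)$ in which each contracted arc $(u,w)$ satisfies $\outdeg(u)=1$ and $\indeg(u)\le 1$. The crucial observation is that any arc $(u,w)$ with $\outdeg(u)=1$ is never a shortcut, since a shortcut requires a sibling $v\in\child(u)\setminus\{w\}$ while here $\child(u)=\{w\}$. Consequently each intermediate step contracts a non-shortcut arc, so Lemma~\ref{lem:outdeg-1-contraction} guarantees that the step yields a network and Lemma~\ref{lem:contract-level-k} guarantees that level-$k$-ness is preserved; a routine induction on the number of contractions then shows that the final network is level-$k$. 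For the phylogenetic property I would observe that $\phylo$ halts precisely when no vertex $u$ with $\outdeg(u)=1$ and $\indeg(u)\le 1$ remains, since any such vertex supplies a contractible arc through its unique out-arc; this termination condition is literally (N2), so $N'$ is phylogenetic.

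Next, for $N'=\contract^{\star}(v',v)$ with $(v',v)$ not a shortcut, I would split the operation into its two defining stages. The first stage applies $\contract(v',v)$ to the level-$k$ network $N$; because $(v',v)$ is not a shortcut, Lemma~\ref{lem:contraction} ensures the outcome is a network and Lemma~\ref{lem:contract-level-k} ensures it is again level-$k$. The second stage applies $\phylo$ to this intermediate network, which by the argument above preserves level-$k$ and produces a phylogenetic network. Composing the two stages gives that $N'$ is phylogenetic and level-$k$.

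The argument is essentially bookkeeping, so I do not anticipate a genuine obstacle; the only two points needing care are confirming that \emph{every} arc contracted during $\phylo$ is a non-shortcut (so that Lemma~\ref{lem:contract-level-k} is applicable at each step) and that the halting condition of $\phylo$ coincides exactly with (N2). Both are immediate from the defining inequalities $\outdeg(u)=1$ and $\indeg(u)\le 1$.
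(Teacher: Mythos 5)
Your proposal is correct and follows essentially the same route as the paper, which derives this corollary directly from the definitions of $\phylo$ and $\contract^{\star}$ together with Lemma~\ref{lem:outdeg-1-contraction} and Lemma~\ref{lem:contract-level-k}; your two points of care (that an arc $(u,w)$ with $\outdeg(u)=1$ can never be a shortcut, and that the halting condition of $\phylo$ is literally (N2)) are exactly the observations the paper leaves implicit.
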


\subsection{Clusters, Hasse Diagrams, and Regular Networks}
\label{sec:cluster-hasse}

In this section we consider general properties of the set of clusters
$\mathscr{C}_N$ of a phylogenetic network as specified in
Def.~\ref{def:cluster} above.

\begin{lemma}
  \label{lem:C-N}
  For all networks $N$ on $X$, the set $\mathscr{C}_N$ is a clustering
  system.
\end{lemma}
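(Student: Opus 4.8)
The plan is to verify the three defining axioms of a clustering system from Def.~\ref{def:Csys} directly for the set $\mathscr{C}_N = \{\CC(v)\mid v\in V(N)\}$, using only the definition of clusters $\CC(v)=\{x\in X\mid x\preceq_N v\}$ and the basic structure of a network from Def.~\ref{def:N}. Recall that we must show (i) $\emptyset\notin\mathscr{C}_N$, (ii) $X\in\mathscr{C}_N$, and (iii) $\{x\}\in\mathscr{C}_N$ for every $x\in X$.

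For axiom~(ii), I would use the fact that the root $\rho_N$ is an ancestor of every vertex (as noted after Def.~\ref{def:N}, since $N$ is connected and $\rho_N$ is the unique source). In particular every leaf $x\in X$ satisfies $x\preceq_N\rho_N$, so $\CC(\rho_N)=X$ and hence $X\in\mathscr{C}_N$. For axiom~(iii), I would observe that each leaf $x\in X$ is itself a vertex of $N$; since $\preceq_N$ is a partial order, the only leaf below $x$ is $x$ itself (any $y\preceq_N x$ with $y$ a leaf would need $y\prec_N x$ or $y=x$, but a leaf has outdegree $0$ and thus no descendants other than itself). Therefore $\CC(x)=\{x\}$, giving $\{x\}\in\mathscr{C}_N$.

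The slightly more delicate point is axiom~(i): I must rule out $\emptyset\in\mathscr{C}_N$, i.e.\ show that every vertex $v$ has at least one leaf descendant, so that $\CC(v)\neq\emptyset$. The key step is that $N$ is a finite DAG, so starting from any $v$ and repeatedly passing to an out-neighbor produces a directed path that cannot cycle (by acyclicity) and hence must terminate at a vertex $x$ with $\outdeg(x)=0$, which is by definition a leaf; this $x$ satisfies $x\preceq_N v$, so $x\in\CC(v)$. I expect this finiteness/termination argument to be the main (and only nontrivial) obstacle, though it is entirely routine.

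Putting it together, the three axioms hold, so $\mathscr{C}_N$ is a clustering system. I would present the argument for~(i) first (as it establishes the descendant-leaf existence that underlies the nonemptiness), then~(ii) and~(iii), which follow quickly from the root and leaf properties respectively. No appeal to phylogenetic-ness, shortcut-freeness, or any block structure is needed; the statement holds for arbitrary networks in the sense of Def.~\ref{def:N}, which matches the generality of the lemma.
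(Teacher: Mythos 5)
Your proposal is correct and follows essentially the same route as the paper's proof: verify axioms (i)--(iii) directly, using finiteness and acyclicity to extend a directed path from any vertex to a leaf (giving $\CC(v)\neq\emptyset$), the root's ancestorship of all vertices (giving $\CC(\rho_N)=X$), and the fact that leaves have outdegree $0$ (giving $\CC(x)=\{x\}$). The only difference is presentational ordering, which is immaterial.
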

\begin{proof}
  Every non-leaf vertex $v\in V\setminus X$ has at least one out-neighbor
  and $N$ is acyclic and finite. Thus every directed path in $N$ can be
  extended to a directed path that eventually ends in a leaf, implying
  $\CC(v)\ne\emptyset$. Since $\CC(v)\ne\emptyset$ for all $v\in V$ and since
  $N$ contains at least a root $\rho_N$ as a vertex, we have
  $\emptyset\notin\mathscr{C}_N$ and thus Condition (i) holds. Since
  $v\preceq_N \rho_N$ for all $v\in V$, we have $\CC(\rho_N)=X$ and (ii) is
  satisfied.  To see that Condition (iii) holds, observe that for all
  $x\in X$, we have $\outdeg(x)=0$ and thus $\CC(x)=\{x\}$.
\end{proof}
This simple observation connects phylogenetic networks to a host of
literature on clustering systems, which have been studied with motivations
often unrelated to evolution or phylogenetics
\cite{Jardine:71,Barthelemy:08,Janowitz:10}.

A particular difficulty in the characterization of certain types of
networks by means of their clustering systems is that even rather simple
clustering systems such as hierarchies can be explained by very
complex networks.
\begin{lemma}\label{lem:complexNsimpleC}
  Let $n$ be a positive integer. Then, for all $k\in \{0,2,\dots,n\}$, there
  is a phylogenetic, shortcut-free level-$k$ network $N$ on $n$ leaves that
  is not level-$(k-1)$ such that $\mathscr{C}_N$ is a hierarchy.
\end{lemma}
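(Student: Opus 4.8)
The plan is to exhibit explicit witnesses, handling $k=0$ trivially and concentrating the real work on $2\le k\le n$. For $k=0$ I take any tree on $X$, say a caterpillar: a tree contains no hybrid vertices, hence is level-$0$, and by Lemma~\ref{lem:C-N} its clustering system is a clustering system which, for a tree, contains no two overlapping clusters and is therefore a hierarchy; the condition ``not level-$(-1)$'' is vacuous. For $2\le k\le n$ the key is a gadget $G_k$ that forces $k$ hybrid vertices into a \emph{single} block while keeping all of its interior clusters equal, so that no overlaps can arise.

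Concretely, I would build $G_k$ from two \emph{side} vertices $a,b$, from $k$ vertices $h_1,\dots,h_k$, and from a vertex $r$, taking all $2k$ arcs $(a,h_i),(b,h_i)$ (a copy of $K_{2,k}$ oriented from $\{a,b\}$ towards $\{h_1,\dots,h_k\}$) together with the arcs $(r,a),(r,b)$; the $h_i$ are leaves. Then each $h_i$ is a hybrid vertex ($\indeg=2$), while $a,b$ are tree vertices with $\outdeg=k\ge 2$ and $r$ has $\outdeg=2$, so $G_k$ is phylogenetic — here I use that hybrid leaves are admissible because (N2) forbids $\outdeg=1$ only together with $\indeg\le 1$. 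Moreover $\CC(a)=\CC(b)=\CC(r)=\{h_1,\dots,h_k\}$, call this set $X_G$, and $\CC(h_i)=\{h_i\}$, so $\mathscr{C}_{G_k}=\{X_G\}\cup\{\{h_i\}\mid 1\le i\le k\}$ is a (star) hierarchy. If $n=k$, then $G_k$ already is the desired network. If $n>k$, I attach $G_k$ as a pendant subnetwork below a caterpillar on $n-k$ fresh leaves $y_1,\dots,y_{n-k}$, making the bottom of the caterpillar spine a new parent of $r$. The only additional clusters are the singletons $\{y_j\}$ and the nested chain $X_G\subsetneq X_G\cup\{y_{n-k}\}\subsetneq\cdots\subsetneq X$, each of which either contains or is disjoint from every cluster already present, so $\mathscr{C}_N$ stays a hierarchy; the new vertices are tree vertices of outdegree $2$, keeping $N$ phylogenetic. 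Shortcut-freeness follows from Obs.~\ref{obs:shortcut}, since for every arc the remaining children (resp.\ parents) of its endpoints are pairwise $\preceq_N$-incomparable.

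It remains to pin the level to exactly $k$. Since $K_{2,k}$ is biconnected for $k\ge 2$, and adjoining $r$ (joined to $a$ and $b$) preserves biconnectivity via the cycle $r\,a\,h_1\,b\,r$, all of $r,a,b,h_1,\dots,h_k$ lie in one non-trivial block $B$; by Lemma~\ref{lem:max-B-unique} it has a unique $\preceq_N$-maximal vertex, namely $\max B=r$ (the arc from the caterpillar into $r$ is a cut arc, so the caterpillar spine does not enter $B$). The hybrid vertices of $B$ are exactly $h_1,\dots,h_k$, all distinct from $\max B$, whereas every other block (the cut arcs of the caterpillar and the pendant leaf arcs) contains none. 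Hence by Def.~\ref{def:level-k} the network is level-$k$ but not level-$(k-1)$.

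The main obstacle is exactly this simultaneous balancing act. The obvious attempts to pack many hybrids into one block — stacking cycles or diamonds — fail in one of two ways: if the cycles meet in a single vertex that vertex becomes a cut vertex and the construction splits into several blocks (each then only level-$1$), while once the ``side'' vertices of a cycle are given the extra child \emph{forced} by the phylogenetic condition, their clusters overlap and the hierarchy is destroyed (this is also the structural reason why $k=1$ must be excluded from the statement). The $K_{2,k}$ gadget circumvents both problems by funnelling both side vertices into the \emph{same} set of hybrid leaves, so that all interior clusters collapse to $X_G$ and no overlap can occur, while biconnectivity of $K_{2,k}$ keeps everything inside one block.
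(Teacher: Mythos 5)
Your proof is correct and takes essentially the same approach as the paper: the paper's witness is also a complete bipartite graph directed into $k$ hybrid leaves (it uses $K_{k,k}$ between $k$ former leaves of a star tree and $k$ new hybrid leaves, where you use $K_{2,k}$ hung below a caterpillar), so that all in-neighbors of the hybrids carry the identical cluster and the clustering system remains a hierarchy, while biconnectivity of the bipartite gadget keeps all $k$ hybrids inside a single block. The differences are cosmetic only: two side vertices instead of $k$, and a caterpillar instead of a star for the remaining $n-k$ leaves.
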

\begin{proof}
  If $n=1$, then $k=0$ and the single vertex graphs serves as an example
  (since a network contains at least one vertex and thus a level-$(-1)$
  cannot exist by definition).
  Let $n\geq 2$.  For $k=0$, simply take a tree whose root is adjacent to
  the $n$ leaves only. Again, this tree is level-$0$ but not level-$(-1)$.
  We refer to this tree as a star tree.  For $k\geq
  2$, take a star tree $T$ and randomly collect $k$ of its leaves
  $l_1,\dots l_k$.  Now add new leaves $x_1,\dots,x_k$ and edges such that
  the induced subgraph $N[\{l_1,\dots l_k,x_1,\dots,x_k\}$ is graph
  isomorphic to a complete bipartite graphs where  one part of the
  bipartition contains all $l_1,\dots,l_k$ and  the other part
  all $x_1,\dots,x_k$  (see Fig.~\ref{fig:proof-construction}
  for a generic example).
  It is easy to verify that $N$ is shortcut-free,
  phylogenetic, level-$k$ but not level-$(k-1)$.  In all cases, $\mathscr{C}_N$
  just consist of the clusters $\{x_1,\dots,x_k\}$, $X$, and the singletons
  $\{x\}$,  $x\in X$ and is, therefore, a hierarchy.
\end{proof}

\begin{figure}
  \begin{center}
    \includegraphics[width=0.65\textwidth]{./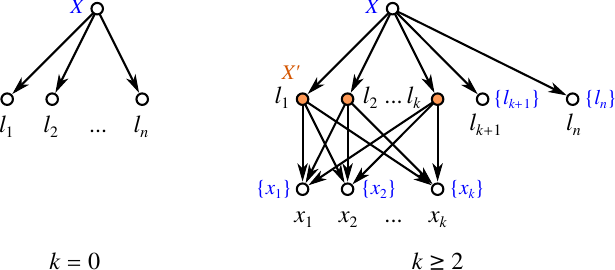}
  \end{center}
  \caption{A generic framework that shows that, for every $n\geq 2$ and
    $k\in \{0,2,\dots,n\}$, there is a phylogenetic, shortcut-free level-$k$
    network $N$ on $n$ leaves that is not level-$(k-1)$ and where
    $\mathscr{C}_N$ is a hierarchy. The network shown left is level-0 and its
    clustering system is trivially a hierarchy. The clustering system
    $\mathscr{C}_N$ of the level-$k$ network $N$ shown right consists of the
    clusters $X'\coloneqq \{x_1,\dots,x_k\}$ (for which the corresponding
    vertices are highlighted in orange),
    $X = X'\cup \{l_{k+1}, \dots, l_n\}$, and the
    singletons and, therefore, $\mathscr{C}_N$ is a hierarchy.}
  \label{fig:proof-construction}
\end{figure}
As we shall in Lemma \ref{lem:complexNsimpleC-level1} there is no
phylogenetic shortcut-free level-1 network $N$ (that is not a tree) for
which $\mathscr{C}_N$ is a hierarchy.

For a clustering system $\mathscr{C}$ on $X$ and a subset $A\subseteq 2^X$,
we define the \emph{closure operator} as the map $\cl\colon 2^X\to 2^X$
defined by
\begin{equation}\label{eq:closure}
  \cl(A)\coloneqq \bigcap_{\substack{C\in\mathscr{C}\\ A \subseteq C}} C.
\end{equation}
It is well defined, isotonic
[$A\subseteq B \implies \cl(A)\subseteq \cl(B)$], enlarging
[$A\subseteq\cl(A)$], and idempotent $\cl(\cl(A))=\cl(A)$. For $\vert
X\vert>1$, we have $\cl(\emptyset)=\emptyset$.

\begin{definition}
  \label{def:closed}
  A clustering system $\mathscr{C}$ is
  \emph{closed} if, for all \emph{non-empty} $A\in 2^X$, the following
  condition holds: $\cl(A)=A \iff A\in\mathscr{C}$.
\end{definition}

The following result is well know in the clustering literature.
\begin{lemma}\label{lem:simple-closed}
  A clustering system $\mathscr{C}$ is closed if and only if
  $A,B \in \mathscr{C}$ and $A\cap B\ne\emptyset$ implies
  $A\cap B\in \mathscr{C}$.
\end{lemma}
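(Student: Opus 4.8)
The plan is to prove the two implications separately, relying throughout on the three properties of the closure operator recorded just before Def.~\ref{def:closed} (it is enlarging, isotonic, and idempotent) together with the fact that $\mathscr{C}$ is finite, being a subset of $2^X$ with $X$ finite.

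First I would prove the forward implication. Assume $\mathscr{C}$ is closed in the sense of Def.~\ref{def:closed}, and take $A,B\in\mathscr{C}$ with $A\cap B\ne\emptyset$. The key observation is that both $A$ and $B$ are clusters containing $A\cap B$, so both occur among the sets intersected in $\cl(A\cap B)=\bigcap_{C\in\mathscr{C},\,A\cap B\subseteq C} C$; hence $\cl(A\cap B)\subseteq A\cap B$. Combining this with the enlarging property $A\cap B\subseteq\cl(A\cap B)$ yields $\cl(A\cap B)=A\cap B$. Since $A\cap B$ is non-empty, closedness of $\mathscr{C}$ forces $A\cap B\in\mathscr{C}$, as required.

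For the converse, assume that the intersection of any two clusters with non-empty intersection is again a cluster; I must verify the defining equivalence of Def.~\ref{def:closed} for every non-empty $A\in 2^X$. One direction is automatic and uses none of the hypothesis: if $A\in\mathscr{C}$, then $A$ itself occurs in the intersection defining $\cl(A)$, so $\cl(A)\subseteq A$, and the enlarging property gives the reverse inclusion, whence $\cl(A)=A$. The substantive direction is showing that $\cl(A)=A$ implies $A\in\mathscr{C}$. Here I would note that the index set $\{C\in\mathscr{C}\mid A\subseteq C\}$ is non-empty, since $X\in\mathscr{C}$ contains $A$, and finite, since $\mathscr{C}$ is. Writing its members as $C_1,\dots,C_m$, each contains the non-empty set $A$, so every partial intersection $C_1\cap\dots\cap C_i$ also contains $A$ and is therefore non-empty. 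An induction on $i$ using the hypothesis — at each step intersecting the cluster $C_1\cap\dots\cap C_{i-1}$ with the cluster $C_i$, their intersection being non-empty because it contains $A$ — shows that $\cl(A)=\bigcap_{i=1}^m C_i\in\mathscr{C}$. Since $\cl(A)=A$ by assumption, we conclude $A\in\mathscr{C}$.

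I do not expect a genuine obstacle here; the only points requiring care are bookkeeping ones — making sure the collection of clusters containing $A$ is non-empty and finite, and checking that every intermediate intersection stays non-empty so that the pairwise hypothesis applies at each inductive step, both of which follow from $A\ne\emptyset$ and the finiteness of $\mathscr{C}$. It is worth flagging that this equivalence effectively upgrades closure under pairwise intersections to closure under arbitrary (finite) intersections, thereby reconciling the closure-operator formulation of Def.~\ref{def:closed} with the intersection-closure description given in Table~\ref{tab:sum-def-C}.
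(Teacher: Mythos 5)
Your proposal is correct and follows essentially the same route as the paper's proof: the forward direction via $\cl(A\cap B)\subseteq A\cap B\subseteq\cl(A\cap B)$, and the converse by writing $\cl(A)$ as a finite intersection of the clusters containing $A$ and inducting over partial intersections using the pairwise hypothesis. Your explicit bookkeeping that each partial intersection contains the non-empty set $A$ (so the pairwise hypothesis applies) is a point the paper leaves implicit, but the argument is the same.
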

\begin{proof}
  For completeness, a proof is provided in Section~\ref{sec:appx-closed} in
  the Appendix.
\end{proof}

We continue with three simple observations concerning the clusters of
networks.
\begin{lemma}
  \label{lem:inclusion}
  Let $N$ be a network. Then $v\preceq_N w$ implies $\CC(v)\subseteq \CC(w)$.
\end{lemma}
\begin{proof}
  By construction, $x\in \CC(v)$ if and only if $x\in X$ and $v$ lies on a
  directed path from the root $\rho_N$ to $x$. Furthermore, $v\preceq_N w$
  implies that $w$ lies on a directed path from $\rho_N$ to $v$. By (N1)
  and since $N$ is a DAG, there is directed path from $\rho_N$ to $x$ that
  contains $w$, and thus $x\preceq_N w$, i.e., $x\in \CC(w)$.
\end{proof}

We note in passing that the converse of Lemma \ref{lem:inclusion} is not
always satisfied (even in level-$1$ networks): If $v$ is a hybrid vertex
with unique child $w$, we have $\CC(v) = \CC(w)$ and thus
$\CC(v)\subseteq\CC(w)$, but $v\not\preceq_N w$, (cf.\ the network $N$ in
Fig.~\ref{fig:network-8}).  A result similar to Lemma~\ref{lem:upper-path}
ensures the existence of a path $P$ connecting $\preceq_N$-incomparable
vertices $u,v\in V(N)$ that contains only vertices that are below $u$ or
$v$.  However, it requires that $u$ and $v$ have at least one descendant
leaf in common, i.e., that $\CC(u)\cap\CC(v)\ne\emptyset$:
\begin{lemma}
  \label{lem:lower-path}
  Let $N$ be a network and $u,v\in V(N)$ be two $\preceq_N$-incomparable
  vertices such that $\CC(u)\cap \CC(v)\ne \emptyset$.  Then, for every
  $x\in \CC(u)\cap \CC(v)$, $u$ and $v$ are connected by an undirected path
  $P=(w_1\coloneqq u, \dots, w_h,\dots, w_k\coloneqq v)$, $1<h<k$, such
  that
  \begin{description}
    \item[(i)] $(w_i, w_{i+1})\in E(N)$ for all $1\le i< h$,
    $(w_{i+1}, w_i)\in E(N)$ for all $h\le i< k$, and $w_h$ is a hybrid
    vertex satisfying $w_h\prec_N u$ and $w_h\prec_N v$.
    \item[(ii)] $x\in \CC(w_h)$,
  \end{description}
  In particular, $k\ge 3$,
  all inner vertices $w_i$ of $P$ satisfy $w_i\prec_N u$ or $w_i\prec_N v$,
  and $P$ is a subgraph of a non-trivial block $B$.
\end{lemma}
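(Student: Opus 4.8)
The plan is to mirror the proof of Lemma~\ref{lem:upper-path}, but working ``downward'' from a common descendant leaf instead of ``upward'' from a common ancestor. Since $x\in\CC(u)\cap\CC(v)$ means $x\preceq_N u$ and $x\preceq_N v$, there exist directed paths $P_u$ from $u$ to $x$ and $P_v$ from $v$ to $x$. Scanning $P_u$ starting at $u$, let $w_h$ be the first vertex of $P_u$ that also lies on $P_v$; such a vertex exists because $x$ is common to both paths. Let $P'_u$ be the subpath of $P_u$ from $u$ to $w_h$ and $P'_v$ the subpath of $P_v$ from $v$ to $w_h$. By the choice of $w_h$, every vertex of $P'_u$ other than $w_h$ lies strictly before $w_h$ on $P_u$ and hence not on $P_v$, so $P'_u$ and $P'_v$ meet only in $w_h$. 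I then define $P$ to be the undirected path obtained by traversing $P'_u$ from $u$ down to $w_h$ and then $P'_v$ backwards from $w_h$ up to $v$; writing its vertices as $w_1=u,\dots,w_h,\dots,w_k=v$ gives exactly the arc pattern in (i).

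Next I would verify the asserted properties. Since $w_h$ lies on $P_u$ (a directed path from $u$ to $x$) and on $P_v$, we have $w_h\preceq_N u$ and $w_h\preceq_N v$; moreover $w_h\neq u$ (else $u\in V(P_v)$ would give $u\preceq_N v$) and $w_h\neq v$ (symmetrically), so $w_h\prec_N u$ and $w_h\prec_N v$ and thus $1<h<k$, giving $k\ge 3$. For (ii), $x\preceq_N w_h$ because $w_h$ lies on the $w_h x$-subpath of $P_u$, whence $x\in\CC(w_h)$. To see that $w_h$ is a hybrid vertex, let $a=w_{h-1}$ be its predecessor on $P'_u$ and $b=w_{h+1}$ its predecessor on $P'_v$; both satisfy $(a,w_h),(b,w_h)\in E(N)$. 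Here the key point is that $a\notin V(P_v)$ (by the minimality in the choice of $w_h$) while $b\in V(P_v)$, so $a\neq b$ and therefore $\indeg_N(w_h)\ge 2$. Finally, each inner vertex of $P$ lies on $P'_u$ and hence satisfies $w_i\prec_N u$, or lies on $P'_v$ and hence satisfies $w_i\prec_N v$.

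The last, and most delicate, claim is that $P$ is contained in a non-trivial block. The plan is to produce a second, internally disjoint $uv$-path and invoke biconnectivity. Because $u$ and $v$ are $\preceq_N$-incomparable, Lemma~\ref{lem:upper-path} supplies an undirected $uv$-path $Q$ all of whose inner vertices $w'$ satisfy $w'\not\preceq_N u$ and $w'\not\preceq_N v$. By contrast, every inner vertex $w_i$ of $P$ satisfies $w_i\prec_N u$ or $w_i\prec_N v$, and in particular $w_i\preceq_N u$ or $w_i\preceq_N v$. These conditions are incompatible, so $P$ and $Q$ share no inner vertex; as $u$ and $v$ are incomparable they are non-adjacent, so $P$ and $Q$ share neither an edge nor an inner vertex and hence meet exactly in the endpoints $u,v$. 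Consequently $P\cup Q$ is an undirected cycle. A cycle is biconnected and therefore lies in a single (necessarily non-trivial) block $B$, and since $P\subseteq P\cup Q$ we conclude $P\subseteq B$. I expect the main obstacles to be exactly the two disjointness/distinctness arguments --- that $a\neq b$ (so $w_h$ is genuinely hybrid) and that $P$ and $Q$ share only their endpoints (so their union is a cycle) --- both of which hinge on cleanly separating the ``descendant'' vertices of $P$ from the ``ancestor'' vertices of $Q$.
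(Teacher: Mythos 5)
Your proof is correct and follows essentially the same route as the paper's: take directed paths from $u$ and $v$ to $x$, merge them at the $\preceq_N$-maximal (equivalently, first-encountered) common vertex $w_h$, and then invoke Lemma~\ref{lem:upper-path} to obtain a second, internally disjoint $uv$-path whose union with $P$ forms a cycle, placing $P$ inside a non-trivial block. The only difference is that you spell out the distinct-in-neighbors argument for why $w_h$ is a hybrid vertex, which the paper leaves as "easy to verify."
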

\begin{proof}
  There are directed paths $P_u$ and $P_v$ from $u$ and $v$, respectively,
  to the leaf $x$.  Let $w^*$ be the $\preceq_N$-maximal vertex of $P_u$
  that is also a vertex of $P_v$, which exists since at least $x$ is
  contained in both paths.  It must hold that $w^*\notin \{u,v\}$ since
  otherwise $u$ and $v$ would be $\preceq_N$-comparable.  In particular,
  $w^*\prec_N u$ and $w^*\prec_N v$.  Let $P'_u$ and $P'_v$ be the subpaths
  of $P_u$ and $P_v$ from $u$ and $v$, respectively, to $w^*$. By
  construction, $w^*$ must be a hybrid vertex, $x\in \CC(w^*)$, and $P'_u$
  and $P'_v$ only have vertex $w^*$ in common, which moreover is an outer
  vertex of both paths.  Now consider the path
  $P=(w_1\coloneqq u, \dots, w_h\coloneqq w^*,\dots, w_k\coloneqq v)$, that
  is the union of the underlying undirected version of $P'_u$ and $P'_v$.
  It is now easy to verify that $P$ satisfies all of the desired
  properties.  Two see that $P$ is a subgraph of a non-trivial block $B$,
  observe that, by Lemma~\ref{lem:upper-path}, the two
  $\preceq_N$-incomparable vertices $u$ and $v$ are connected by an
  undirected path $\invbreve{P}$ that contains at least 3 vertices and of
  which all inner vertices $w'$ satisfy $w'\not\preceq_N u$ and
  $w'\not\preceq_N v$.  Hence, $P$ and $\invbreve{P}$ cannot have any inner
  vertices in common.  Therefore, $u$ and $v$ are connected by two distinct
  paths that both have at least 3 vertices and that only have the endpoints
  $u$ and $v$ in common. Hence, $u$ and $v$ lie on a cycle $K$ and thus in
  a common block $B$ of $N$. In particular, $P$ is a subgraph of $K$ and
  thus of $B$.
\end{proof}

\begin{lemma}\label{lem:overlap-B0}
  Let $N$ be a network and $u, v\in V(N)$. If $\CC_N(u)$ and $\CC_N(v)$
  overlap, then $u$ and $v$ are $\preceq_{N}$-incomparable and
  $u,v\in B^0$ for a non-trivial block $B$ of $N$.
\end{lemma}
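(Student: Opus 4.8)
The plan is to establish the two conclusions in order: first that $u$ and $v$ must be $\preceq_N$-incomparable, and then that they both lie in the interior $B^0$ of a common non-trivial block $B$.

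For the first part, I would argue by contraposition using Lemma~\ref{lem:inclusion}. Recall that $\CC_N(u)$ and $\CC_N(v)$ \emph{overlap} means $\CC_N(u)\cap\CC_N(v)\notin\{\emptyset,\CC_N(u),\CC_N(v)\}$. If $u$ and $v$ were $\preceq_N$-comparable, say $v\preceq_N u$, then Lemma~\ref{lem:inclusion} gives $\CC_N(v)\subseteq\CC_N(u)$, whence $\CC_N(u)\cap\CC_N(v)=\CC_N(v)$, contradicting the overlap condition. The symmetric case $u\preceq_N v$ is handled the same way. Hence $u$ and $v$ are $\preceq_N$-incomparable.

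For the second part, the overlap condition guarantees $\CC_N(u)\cap\CC_N(v)\ne\emptyset$, so I can pick some $x\in\CC_N(u)\cap\CC_N(v)$. Now I am exactly in the hypotheses of Lemma~\ref{lem:lower-path}: two $\preceq_N$-incomparable vertices with a common descendant leaf. That lemma directly produces an undirected path $P$ from $u$ to $v$ passing through a hybrid vertex $w_h$ with $w_h\prec_N u$ and $w_h\prec_N v$, and asserts that $P$ is a subgraph of a non-trivial block $B$. In particular $u,v\in V(B)$. It remains to upgrade membership in $B$ to membership in the interior $B^0=B\setminus\{\max B, m_1,\dots,m_h\}$, i.e.\ to rule out that $u$ or $v$ equals $\max B$ or a terminal vertex of $B$.

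The main obstacle is precisely this last upgrade to $B^0$. To exclude $u=\max B$ (and symmetrically $v=\max B$), I would use that $w_h\prec_N u$ together with the fact that $w_h\in V(B)$ lies properly below $u$; but a cleaner route is to note that if $u=\max B$ then $v\in V(B)$ would satisfy $v\preceq_N\max B=u$ by Lemma~\ref{lem:max-B-unique}, contradicting incomparability of $u$ and $v$. To exclude that $u$ is a terminal vertex $m_i$ of $B$, recall a terminal vertex is $\preceq_N$-minimal in $B$; since $w_h\in V(B)$ with $w_h\prec_N u$, the vertex $u$ is not $\preceq_N$-minimal in $B$, so $u\ne m_i$ for every $i$. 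The same two arguments apply verbatim to $v$ using $w_h\prec_N v$ and Lemma~\ref{lem:max-B-unique}. Combining these exclusions yields $u,v\in B^0$, completing the proof.
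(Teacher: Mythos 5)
Your proof is correct and takes essentially the same route as the paper: Lemma~\ref{lem:inclusion} gives $\preceq_N$-incomparability, Lemma~\ref{lem:lower-path} places $u$ and $v$ in a common non-trivial block $B$, and the hybrid vertex $w_h\prec_N u,v$ lying in $B$ rules out that $u$ or $v$ is a terminal vertex. The only cosmetic difference is in excluding $u,v=\max B$: you invoke incomparability together with Lemma~\ref{lem:max-B-unique}, while the paper notes that $\CC(w)\subseteq\CC(\max B)$ for all $w\in B$ so $\CC(\max B)$ cannot overlap any cluster of a vertex in $B$ --- both arguments rest on the same fact that every vertex of $B$ lies below $\max B$.
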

\begin{proof}
If two clusters $\CC_N(u)$ and $\CC_N(v)$ overlap, then
Lemma~\ref{lem:inclusion} implies that $u$ and $v$ are
$\preceq_{N}$-incomparable. Lemma~\ref{lem:lower-path} implies
that $u$ and $v$ are contained in
a common non-trivial block $B$ of $N$.
Since $\CC(u)\subseteq \CC(\max B)$ for all $u\in B$, $\CC(\max B)$
does not overlap any cluster $\CC(v)$ with $v\in B$, and thus $u,v\ne
\max B$. Now suppose $u$ is a terminal vertex of $B$ and $v\in B$ such
that $\CC(u)$ and $\CC(v)$ overlap.  By Lemma~\ref{lem:lower-path} there
is $w\in B$ in $x\in\CC(u)\cap\CC(v)$ with $w\prec_N u$, contradicting
that $u$ is terminal. Therefore $u,v\in B^0$.
\end{proof}

Clusters of outdegree-1 vertices $w$ are redundant in the sense that every
directed path from $w$ to one of its descendant leaves necessarily passes
through the unique child $v$ of $w$. Thus we have $\CC(w)\subseteq \CC(v)$.
Moreover, $v\prec_N w$ and Lemma~\ref{lem:inclusion} imply $\CC(v)\subseteq
\CC(w)$, and thus, $\CC(v)=\CC(w)$. Hence, we have
\begin{fact}
  \label{obs:outdeg-1-cluster}
  Let $N$ be a network. If $v$ is the unique child of $w$ in $N$, then
  $\CC(v)=\CC(w)$.
\end{fact}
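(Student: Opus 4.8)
The plan is to establish the equality $\CC(v)=\CC(w)$ by proving the two inclusions separately, using Lemma~\ref{lem:inclusion} for one direction and a direct path argument for the other. The inclusion $\CC(v)\subseteq\CC(w)$ is immediate: since $v$ is a child of $w$ we have $(w,v)\in E(N)$ and hence $v\prec_N w$, so Lemma~\ref{lem:inclusion} yields $\CC(v)\subseteq\CC(w)$ at once.

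For the reverse inclusion I would argue directly from the definition of clusters. Take any $x\in\CC(w)$, that is, $x\in X$ with $x\preceq_N w$, so that there is a directed path from $w$ to $x$. Since $w$ has an out-neighbor, $\outdeg_N(w)\ge 1$ and $w$ is not a leaf; consequently $x\neq w$ and this path contains at least one arc. Because $v$ is the \emph{unique} child of $w$, the first arc of any directed path leaving $w$ must be $(w,v)$, so the path passes through $v$. Its subpath from $v$ to $x$ then witnesses $x\preceq_N v$, whence $x\in\CC(v)$. This gives $\CC(w)\subseteq\CC(v)$, and combined with the first inclusion we conclude $\CC(v)=\CC(w)$.

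I do not anticipate any real obstacle here; the statement is essentially a routing observation. The only point that needs to be stated carefully is that a descendant leaf of $w$ is necessarily distinct from $w$, which is exactly what guarantees that the path to it leaves $w$ through its unique child $v$ and thus justifies the routing argument.
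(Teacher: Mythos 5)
Your proof is correct and matches the paper's own argument: the paper likewise obtains $\CC(v)\subseteq\CC(w)$ from $v\prec_N w$ via Lemma~\ref{lem:inclusion}, and $\CC(w)\subseteq\CC(v)$ by observing that every directed path from $w$ to a descendant leaf must pass through its unique child $v$. Your write-up merely makes the routing step (and the fact that a descendant leaf is distinct from $w$) explicit.
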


\begin{lemma}
  \label{lem:union-hybrid-clusters}
  Let $N$ be a network, $B$ a block in $N$ and $u,v\in V(B)$. Moreover, let
  $H$ be the set of hybrid vertices $h$ that are properly contained in $B$
  and satisfy $h\preceq_{N} u,v$. Then it holds
  $\CC(u)\cap \CC(v)\in \{\CC(u),\CC(v), \bigcup_{h\in H} \CC(h)\}$.
\end{lemma}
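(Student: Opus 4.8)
The plan is to split on whether $u$ and $v$ are $\preceq_N$-comparable. If they are, say $v\preceq_N u$, then Lemma~\ref{lem:inclusion} gives $\CC(v)\subseteq\CC(u)$, whence $\CC(u)\cap\CC(v)=\CC(v)$; the symmetric subcase yields $\CC(u)$. Thus the comparable case lands in the first two alternatives of the claimed set with essentially no work, and all of the content sits in the incomparable case, where I aim to prove the single identity $\CC(u)\cap\CC(v)=\bigcup_{h\in H}\CC(h)$. (Note that if $u,v$ are incomparable then $B$ must be non-trivial, since a trivial block---a single vertex or a single arc---contains only pairwise $\preceq_N$-comparable vertices.)

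For the incomparable case I would establish the two inclusions separately. The inclusion $\bigcup_{h\in H}\CC(h)\subseteq\CC(u)\cap\CC(v)$ is immediate: by definition every $h\in H$ satisfies $h\preceq_N u$ and $h\preceq_N v$, so $\CC(h)\subseteq\CC(u)$ and $\CC(h)\subseteq\CC(v)$ by Lemma~\ref{lem:inclusion}, and taking the union preserves the containment. As a by-product, if $\CC(u)\cap\CC(v)=\emptyset$ then necessarily $H=\emptyset$ and both sides are empty, so the empty-intersection case needs no separate treatment.

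The reverse inclusion is the heart of the argument. Given $x\in\CC(u)\cap\CC(v)$, I would apply Lemma~\ref{lem:lower-path} to the incomparable pair $u,v$ with common descendant leaf $x$: it produces an undirected $uv$-path $P$ whose distinguished inner vertex $w^\ast$ is a hybrid vertex with $w^\ast\prec_N u$, $w^\ast\prec_N v$, and $x\in\CC(w^\ast)$, and moreover $P$ lies inside some non-trivial block $B'$. It then remains to place $w^\ast$ in $H$, i.e., to show that $w^\ast$ is properly contained in the \emph{prescribed} block $B$. Since $u\neq v$ both lie in $V(B)$ and in $V(B')$, the two blocks share two distinct vertices and therefore coincide, by Fact~\ref{obs:identical-block} together with the maximality of blocks; hence $w^\ast\in V(B')=V(B)$. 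Finally, $w^\ast\prec_N u\preceq_N\max B$ (using Lemma~\ref{lem:max-B-unique}) forces $w^\ast\neq\max B$, so by the equivalence in Lemma~\ref{lem:properly-contained} the vertex $w^\ast\in V(B)\setminus\{\max B\}$ is properly contained in $B$; as $w^\ast$ is hybrid and $w^\ast\preceq_N u,v$, this gives $w^\ast\in H$, and $x\in\CC(w^\ast)\subseteq\bigcup_{h\in H}\CC(h)$ completes the inclusion.

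The step I expect to be the main obstacle is exactly this block-identification: Lemma~\ref{lem:lower-path} only guarantees that $w^\ast$ sits in \emph{some} non-trivial block, whereas the statement demands proper containment in the \emph{given} block $B$. Pinning down $B'=B$ via the two-shared-vertices principle, and then excluding the degenerate possibility $w^\ast=\max B$, is where the care lies; everything else reduces to routine bookkeeping with Lemma~\ref{lem:inclusion}.
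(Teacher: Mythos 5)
Your proof is correct and follows essentially the same route as the paper's: reduce to the $\preceq_N$-incomparable case via Lemma~\ref{lem:inclusion}, get the easy inclusion $\bigcup_{h\in H}\CC(h)\subseteq\CC(u)\cap\CC(v)$ from Lemma~\ref{lem:inclusion}, and for the converse use Lemma~\ref{lem:lower-path} to produce a hybrid vertex below $u$ and $v$, identify its block with $B$ via Obs.~\ref{obs:identical-block}, rule out $\max B$, and conclude proper containment. The only cosmetic difference is that the paper assumes $\CC(u)\cap\CC(v)\notin\{\CC(u),\CC(v)\}$ and derives incomparability, whereas you split on comparability directly (proving the slightly stronger fact that incomparability alone yields the identity), and you cite Lemma~\ref{lem:properly-contained} where the paper cites Lemma~\ref{lem:hybrid-properly-contained}; both are adequate.
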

\begin{proof}
  It suffices to show that $\CC(u)\cap \CC(v)\notin \{\CC(u),\CC(v)\}$ implies
  $\CC(u)\cap \CC(v)= \bigcup_{h\in H} \CC(h) \eqqcolon C$.  Hence, suppose
  $\CC(u)\cap \CC(v)\notin \{\CC(u),\CC(v)\}$. Then Lemma~\ref{lem:inclusion}
  implies that $u$ and $v$ are $\preceq_{N}$-incomparable.  If $x\in C$,
  then $x\in \CC(h)$ for some $h\in H$.  Since $h\preceq_{N} u,v$,
  Lemma~\ref{lem:inclusion} implies $\CC(h)\subseteq \CC(u)$ and
  $\CC(h)\subseteq \CC(v)$ and thus, $x\in \CC(h)\subseteq \CC(u)\cap \CC(v)$.  Now
  suppose $x\in \CC(u)\cap \CC(v)$.  By Lemma~\ref{lem:lower-path}, the
  $\preceq_{N}$-incomparable vertices $u$ and $v$ are connected by an
  undirected path $P$ which contains a hybrid vertex $h \prec_N u,v$ with
  $x\in \CC(h)$ and is a subgraph of a non-trivial block $B'$ of $N$.  Since
  $B$ and $B'$ share the two distinct vertices $u$ and $v$,
  Obs.~\ref{obs:identical-block} implies $B=B'$.  In particular,
  $u,v,h\in V(B)$ and $h \prec_N u,v$ imply $h\ne \max B$, and thus, $h$
  must be properly contained in $B$ by
  Lemma~\ref{lem:hybrid-properly-contained}.  Hence, we have $h\in H$ and
  thus $x\in \CC(h) \subseteq C$.  In summary, we have $x\in \CC(u)\cap \CC(v)$
  if and only if $x\in C$, and thus $\CC(u)\cap \CC(v)=C$.
\end{proof}
Note that $H=\emptyset$ and thus, $C=\bigcup_{h\in H} \CC(h)=\emptyset$
in Lemma~\ref{lem:union-hybrid-clusters} is possible.

The \emph{Hasse diagram} $\Hasse\coloneqq \Hasse[\mathscr{C}]$ of
$\mathscr{C}$ w.r.t.\ to set inclusion is a DAG whose vertices are the
clusters in $\mathscr{C}$. There is a directed arc $(C,C')\in\Hasse$ if
$C'\subsetneq C$ and there is no $C''\in\mathscr{C}$ with $C'\subsetneq
C''\subsetneq C$. Since $X\in\mathscr{C}$, the Hasse diagram is connected
and has $X$ as its unique root. The singletons $\{x\}$, $x\in X$, are
exactly the inclusion-minimal vertices in $\mathscr{C}$ and thus, they have
outdegree $0$ but not necessarily indegree $1$ in $\Hasse$.  Another simple
property of $\Hasse$ is the following:
\begin{lemma}
  \label{lem:Hasse-outdeg}
  Let $\mathscr{C}$ be a clustering system on $X$. Then every non-singleton set
  $C\in\mathscr{C}$ satisfies $\outdeg_{\Hasse}(C)\ge 2$ in the Hasse diagram
  $\Hasse$ of $\mathscr{C}$.
\end{lemma}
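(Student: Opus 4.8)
The plan is to work directly from the definition of the Hasse diagram together with the fact that a clustering system contains all singletons. Recall that an out-arc $(C,C')$ of $\Hasse$ corresponds precisely to an inclusion-\emph{maximal} element of the set $\mathscr{D}\coloneqq\{D\in\mathscr{C}\mid D\subsetneq C\}$ of proper subclusters of $C$; hence $\outdeg_{\Hasse}(C)$ is exactly the number of such maximal elements. Thus the whole statement amounts to showing that $\mathscr{D}$ has at least two inclusion-maximal members whenever $C$ is non-singleton. Throughout I use that $X$ (and hence $\mathscr{C}\subseteq 2^X$) is finite, so that every element of $\mathscr{D}$ is contained in some inclusion-maximal element of $\mathscr{D}$.

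First I would record the elementary input. Since $C$ is non-singleton, $|C|\ge 2$, so we may pick two distinct $x,y\in C$. By property (iii) of Def.~\ref{def:Csys}, the singletons $\{x\}$ and $\{y\}$ belong to $\mathscr{C}$, and since $|C|\ge 2$ both are \emph{proper} subsets of $C$; hence $\{x\},\{y\}\in\mathscr{D}$, so in particular $\mathscr{D}\ne\emptyset$ and $\outdeg_{\Hasse}(C)\ge 1$.

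The core of the argument is to rule out $\outdeg_{\Hasse}(C)=1$ by contradiction. Suppose $C$ had a \emph{unique} child $C'$, i.e.\ a unique inclusion-maximal element of $\mathscr{D}$. Then every $D\in\mathscr{D}$ is contained in some maximal element of $\mathscr{D}$, which can only be $C'$; so $D\subseteq C'$ for all $D\in\mathscr{D}$. Applying this to the singletons, for each $z\in C$ we have $\{z\}\in\mathscr{D}$ and hence $\{z\}\subseteq C'$, i.e.\ $z\in C'$. As this holds for every $z\in C$, we obtain $C\subseteq C'$, contradicting $C'\subsetneq C$. Therefore $C$ cannot have a single child, and together with $\outdeg_{\Hasse}(C)\ge 1$ this yields $\outdeg_{\Hasse}(C)\ge 2$.

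I do not expect any genuine obstacle here: the only point requiring a moment's care is the appeal to finiteness, which guarantees that maximal proper subclusters exist and that each proper subcluster lies below such a maximal one (so that "unique child'' really forces every proper subcluster into $C'$). The essential structural insight is simply that the mandatory presence of all singletons prevents a single maximal proper subcluster from existing, since such a subcluster would be forced to contain every element of $C$.
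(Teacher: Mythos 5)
Your proof is correct and rests on the same key facts as the paper's: every proper subcluster of $C$ lies below an inclusion-maximal one (a child of $C$ in $\Hasse$), and the mandatory singletons $\{z\}$, $z\in C$, cannot all fit under a single proper subcluster of $C$. The paper phrases this directly (it exhibits a first child $C'$ and then a second child containing some $x''\in C\setminus C'$), whereas you argue by contradiction from uniqueness of the child, but this is only a stylistic difference, not a different argument.
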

\begin{proof}
  Let $C\in \mathscr{C}$ be a non-singleton set, i.e., $\vert C\vert\ge 2$.
  Therefore and since $\{x\}\in\mathscr{C}$ for all $x\in X$, there is a
  directed path in $\Hasse$ from $C$ to some singleton set
  $\{x'\}\in\mathscr{C}$. In particular, this path contains at least the two
  distinct clusters $C$ and $\{x'\}$, and thus, $C$ has a child $C'$ in
  $\Hasse$ with $\{x'\}\subseteq C'\subsetneq C$.
  Now pick an element $x''\in C\setminus C'\ne \emptyset$. Since
  $\{x''\}\in\mathscr{C}$ and $\{x''\}\subseteq C$, we can argue similarly
  as before to conclude that $C$ has a child $C''$ in $\Hasse$ with
  $\{x''\}\subseteq C''\subsetneq C$.
  Since $x''\notin C'$, we have $C'\ne C''$ and thus, $C$ satisfies
  $\outdeg_{\Hasse}(C)\ge 2$.
\end{proof}
\begin{lemma}
  \label{lem:Hasse-is-Network}
  Let $\mathscr{C}$ be a clustering system on $X$ with corresponding Hasse
  diagram $\Hasse$.  Then $\Hasse$ is a phylogenetic network with leaf set
  $X_{\Hasse}\coloneqq \{ \{x\} \mid x \in X \}$.
\end{lemma}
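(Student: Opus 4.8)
The plan is to verify the three defining features in turn: that $\Hasse$ is a rooted DAG (property (N1) of Def.~\ref{def:N}), that it is phylogenetic (property (N2)), and that its leaf set coincides with $X_{\Hasse}$.

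First I would argue that $\Hasse$ is acyclic with a unique root. Every arc $(C,C')$ of $\Hasse$ satisfies $C'\subsetneq C$, so any directed path from $C$ to some $D$ forces $D\subsetneq C$; since proper inclusion is a strict partial order, no directed path can return to its starting vertex, and hence $\Hasse$ is a DAG. For (N1), recall that $X\in\mathscr{C}$ by Def.~\ref{def:Csys}(ii); any other cluster $C$ satisfies $C\subsetneq X$, and choosing an inclusion-minimal element among the finitely many, nonempty clusters properly containing $C$ yields a parent of $C$ in $\Hasse$. Thus exactly $X$ has indegree $0$, so $\Hasse$ is a network with root $X$.

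The crux, and the step that makes the statement fall out cleanly, is Lemma~\ref{lem:Hasse-outdeg}: every non-singleton cluster has outdegree at least $2$ in $\Hasse$, while each singleton $\{x\}$ is inclusion-minimal (its only nonempty subset is itself, and $\emptyset\notin\mathscr{C}$) and therefore has outdegree $0$. Consequently every vertex of $\Hasse$ has outdegree in $\{0\}\cup\{2,3,\dots\}$; in particular no vertex has outdegree exactly $1$. Since (N2) forbids only vertices with $\outdeg=1$ and $\indeg\le 1$, it holds vacuously, and $\Hasse$ is phylogenetic.

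Finally I would identify the leaf set. A vertex of $\Hasse$ is a leaf precisely when its outdegree is $0$, i.e.\ when the corresponding cluster is inclusion-minimal in $\mathscr{C}$. By the outdegree dichotomy just established, the inclusion-minimal clusters are exactly the singletons $\{x\}$, which belong to $\mathscr{C}$ by Def.~\ref{def:Csys}(iii). Hence the leaf set equals $\{\{x\}\mid x\in X\}=X_{\Hasse}$, and since $x\mapsto\{x\}$ is a bijection $X\to X_{\Hasse}$, the claim follows. There is no serious obstacle here; the only point requiring care is confirming that no \emph{non}-singleton cluster can be inclusion-minimal, which is exactly what Lemma~\ref{lem:Hasse-outdeg} guarantees.
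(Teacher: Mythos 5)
Your proposal is correct and follows essentially the same route as the paper's own proof: acyclicity and the unique root $X$ give (N1), and the dichotomy from Lemma~\ref{lem:Hasse-outdeg} (non-singletons have outdegree at least $2$, singletons have outdegree $0$ since $\emptyset\notin\mathscr{C}$) yields (N2) vacuously and identifies the leaves as exactly $X_{\Hasse}$. The only difference is that you spell out in slightly more detail why every cluster $C\neq X$ has a parent, which the paper leaves implicit.
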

\begin{proof}
  Clearly, $\Hasse$ is a DAG.  Since $X\in\mathscr{C}$ and $C\subseteq X$
  for all $C\in\mathscr{C}$, $X$ is the unique cluster in $\mathscr{C}$
  with indegree 0, i.e., $X$ is the root in $\Hasse$ and $\Hasse$ satisfies
  (N1).  By definition of clustering systems, we have
  $X_{\Hasse}\subseteq \mathscr{C}$. Now consider a cluster
  $\{x\}\in X_{\Hasse}$. Since $\emptyset\notin \mathscr{C}$, $\{x\}$ has
  outdegree zero in $\Hasse$.  Lemma~\ref{lem:Hasse-outdeg} implies
  $\outdeg_{\Hasse}(C)\ge 2$ for all $C\in\mathscr{C}$ with $\vert C\vert>1$,
  i.e.,
  for all $C\in\mathscr{C}\setminus X_{\Hasse}$.  Taken together, the
  latter arguments imply that the elements in $X_{\Hasse}$ are exactly the
  leaves of $\Hasse$ and that (N2) is satisfied.
\end{proof}
For a given a clustering system $\mathscr{C}$ and a cluster
$C\in\mathscr{C}$, we will moreover make use of the subsets
\begin{equation}\label{eq:D}
  \mathcal{D}(C)\coloneqq\{D\in\mathscr{C}\mid D\subsetneq C\}
  \qquad\text{and} \qquad
  \overline{\mathcal{D}}(C)\coloneqq\{D\in\mathscr{C}\mid D \not\subseteq C\}.
\end{equation}
Note that, by definition, we have
$\mathcal{D}(C)\cupdot \overline{\mathcal{D}}(C)\cupdot\{C\}=\mathscr{C}$
for all $C\in\mathscr{C}$, $\mathcal{D}(C)=\emptyset$ if and only if $C$ is a
singleton, and $\overline{\mathcal{D}}(C)=\emptyset$ if and only if $C=X$.

\begin{lemma}
  \label{lem:cutvertex}
  Let $\Hasse$ be the Hasse diagram of a clustering system $\mathscr{C}$
  and $C\in\mathscr{C}$ such that $C$ does not overlap any other set. Then,
  there is no undirected cycle in $\Hasse$ that intersects both
  $\mathcal{D}(C)$ and $\overline{\mathcal{D}}(C)$. In particular, if
  $C\neq X$ and $\vert C\vert>1$, then $C$ is a cut vertex in $\Hasse$.
\end{lemma}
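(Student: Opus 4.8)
The plan is to reduce everything to one structural fact about $\Hasse$: there is no arc, in either direction, between a vertex of $\mathcal{D}(C)$ and a vertex of $\overline{\mathcal{D}}(C)$. Recall that $V(\Hasse)=\mathcal{D}(C)\cupdot\{C\}\cupdot\overline{\mathcal{D}}(C)$ and that every arc of $\Hasse$ runs from a larger cluster to a smaller one. Establishing this fact is where the non-overlap hypothesis on $C$ is used, and it is the part I expect to require the most care; the two claims of the lemma then drop out by elementary connectivity arguments.

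To prove the structural fact I would fix $A\in\mathcal{D}(C)$ and $B\in\overline{\mathcal{D}}(C)$ and examine the two possible orientations of an arc between them. If the arc is $(A,B)$, then $B\subsetneq A\subsetneq C$ forces $B\subseteq C$, contradicting $B\in\overline{\mathcal{D}}(C)$. If the arc is $(B,A)$, then $A\subsetneq B$ while $A\subsetneq C$ and $B\not\subseteq C$; here I would invoke that $C$ does not overlap $B$, so $B\cap C\in\{\emptyset,B,C\}$, and eliminate each case: $B\cap C=\emptyset$ forces $A\subseteq B\cap C=\emptyset$ (impossible since $A\in\mathscr{C}$), $B\cap C=B$ gives $B\subseteq C$ (impossible), and $B\cap C=C$ gives $C\subsetneq B$, so that $A\subsetneq C\subsetneq B$ exhibits the cluster $C$ strictly between $A$ and $B$, contradicting the definition of a Hasse arc. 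The delicate point is this last case, where one must combine non-overlap with the ``no intermediate cluster'' condition of $\Hasse$.

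Given the structural fact, the first claim follows directly. An undirected cycle $K$ is connected and simple, so if $C\notin V(K)$ then $K$ lives entirely in $\mathcal{D}(C)\cup\overline{\mathcal{D}}(C)$ with no cross-arcs and hence cannot meet both parts. If $C\in V(K)$, deleting $C$ leaves a single path whose arcs all avoid $C$; since no arc joins the two parts, this path lies wholly in one of them, so $K$ again meets at most one of $\mathcal{D}(C)$ and $\overline{\mathcal{D}}(C)$ besides $C$ itself.

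For the ``in particular'' statement, assume $C\neq X$ and $|C|>1$. Then $\overline{\mathcal{D}}(C)\neq\emptyset$ because $X\in\overline{\mathcal{D}}(C)$, and $\mathcal{D}(C)\neq\emptyset$ because Lemma~\ref{lem:Hasse-outdeg} gives $\outdeg_{\Hasse}(C)\ge 2$, so $C$ has children, which lie in $\mathcal{D}(C)$. Since $\Hasse$ is connected but carries no arc between these two non-empty parts, every path in $\Hasse-C$ stays within one part; thus $\mathcal{D}(C)$ and $\overline{\mathcal{D}}(C)$ fall into different connected components of $\Hasse-C$, which therefore has strictly more components than $\Hasse$. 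Hence $C$ is a cut vertex.
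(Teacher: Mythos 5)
Your proof is correct and follows essentially the same route as the paper's: both rest on the key fact that no Hasse arc joins $\mathcal{D}(C)$ to $\overline{\mathcal{D}}(C)$, established via the same dichotomy (either the two clusters are disjoint, or $C$ sits strictly between them, killing any Hasse arc), and both then derive the cycle and cut-vertex statements by elementary connectivity. Your only departures are organizational --- you handle the degenerate cases $C=X$ and $\vert C\vert=1$ uniformly via the vacuous no-crossing-arc fact rather than separately, and you split the arc argument by orientation --- neither of which changes the substance.
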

\begin{proof}
  Suppose that $C$ does not overlap with any other cluster.  If $C=X$ then,
  $\overline{\mathcal{D}}(C) = \emptyset$ and if $\vert C\vert =1$ then
  $\mathcal{D}(C) = \emptyset$ and thus, for any cycle $K$ in $\Hasse$ we
  have $K\cap \overline{\mathcal{D}}(C) = \emptyset$ or $K\cap
  {\mathcal{D}}(C) = \emptyset$. Hence, $K$ cannot intersect both. Now,
  assume that $C\neq X$ and $\vert C\vert>1$.  Since $C$ is neither a singleton
  nor
  $X$, both $\mathcal{D}(C)$ and $\overline{\mathcal{D}}(C)$ are
  non-empty. Furthermore, $\mathcal{D}(C)\cup\overline{\mathcal{D}}(C)=
  \mathscr{C}\setminus\{C\}$.  Let $C_1\in\mathcal{D}(C)$ and
  $C_2\in\overline{\mathcal{D}}(C)$. By assumption, we have $C_1\subsetneq
  C$ and either (i) $C_2\cap C=\emptyset$ or (ii) $C\subsetneq C_2$.  In
  case (i), we have $C_1\cap C_2=\emptyset$ and in case (ii), it holds
  $C_1\subsetneq C\subsetneq C_2$. Therefore and since
  $C_1\in\mathcal{D}(C)$ and $C_2\in\overline{\mathcal{D}}(C)$ were chosen
  arbitrarily, $\Hasse$ contains no arc connecting a cluster in
  $\mathcal{D}(C)$ and a cluster in $\overline{\mathcal{D}}(C)$. Together
  with
  $\mathcal{D}(C)\cup\overline{\mathcal{D}}(C)=\mathscr{C}\setminus\{C\}$,
  this implies that the subgraph of $\Hasse$ obtained by removing $C$ is
  disconnected and thus, $C$ is a cut vertex.  In particular, every
  undirected path connecting a cluster in $\mathcal{D}(C)$ and a cluster in
  $\overline{\mathcal{D}}(C)$ has to pass through $C$ and thus the second
  statement of the lemma follows as an immediate consequence.
\end{proof}

\begin{figure}
  \begin{center}
    \includegraphics[width=0.75\textwidth]{./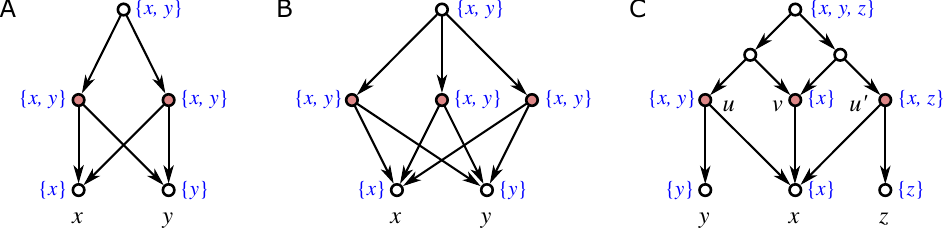}
  \end{center}
  \caption{Both the rooted $K_{2,3}$ (A) and $K_{3,3}$ (B) are a
    phylogenetic networks that have only two leaves, denoted $x$ and $y$
    here. The clustering system therefore consists only of $X=\{x,y\}$ and
    the two singletons $\{x\}$ and $\{y\}$. The same clustering system can
    be represented by a rooted tree with a single root that is adjacent to
    the two leaves $x$ and $y$. In particular, both networks do not satisfy
    (PCC) since the highlighted vertices are $\preceq_{N}$-incomparable but
    share the cluster $\{x,y\}$.  (C) A network showing that
    $\CC(v)\subsetneq \CC(u)$ is also possible for two
    $\preceq$-incomparable vertices $u$ and $v$.}
  \label{fig:no-PCC}
\end{figure}

Every phylogenetic tree $T$ is isomorphic to the Hasse diagram of its
clustering systems $\mathscr{C}$ by virtue of the map
$\varphi \colon V(T)\to \mathscr{C},\, v\mapsto \CC(v)$, see e.g.\
\cite{sem-ste-03a}.  Fig.~\ref{fig:no-PCC} shows that this is not the
case for phylogenetic networks in general. The rooted networks that share
this property with phylogenetic trees have been introduced and studied in
\cite{Baroni:05,Baroni:06,Willson:10}.
\begin{definition} \cite{Baroni:05}
  \label{def:regular-N}
  A network $N=(V,E)$ is \emph{regular} if the map
  $\varphi\colon V\to V(\Hasse[\mathscr{C}_N])\colon v\mapsto \CC(v)$ is a
  graph isomorphism between $N$ and $\Hasse[\mathscr{C}_N]$.
\end{definition}
The graph isomorphism in Def.~\ref{def:regular-N} is quite constrained. In
particular, it is not obvious that an arbitrary graph isomorphism $\varphi$
between the two networks $N$ and $\Hasse[\mathscr{C}_N]$ implies that $N$
is regular, as $\varphi(v)\neq \CC(v)$ may be possible.  As we shall see in
Cor.~\ref{cor:hasse-iso}, however, $N\sim\Hasse[\mathscr{C}_N]$ if and only
if $N$ is regular.  Even more, as noted without proof in \cite{Baroni:05},
a rooted network $N$ with leaf set $X$ is regular if and only if it is
graph isomorphic to the Hasse diagram $\Hasse[\mathscr{C}]$ for some
clustering system $\mathscr{C}\subseteq 2^X$. This result will be an
immediate consequence of the results established below and will be
summarized and proven in Prop.~\ref{prop:N-regular-someC}.

\begin{proposition}
  \label{prop:regular-unique}
  For every clustering system $\mathscr{C}$, there is a unique regular
  network $N$ with $\mathscr{C}_N=\mathscr{C}$.
\end{proposition}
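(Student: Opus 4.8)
The plan is to exhibit the Hasse diagram $\Hasse[\mathscr{C}]$ (with its leaves relabeled) as the required regular network, and then to argue that any regular network realizing $\mathscr{C}$ must be isomorphic to it. First I would settle \emph{existence}. By Lemma~\ref{lem:Hasse-is-Network}, $\Hasse\coloneqq\Hasse[\mathscr{C}]$ is a phylogenetic network whose leaves are the singletons $\{x\}$, $x\in X$; let $N$ be the network on $X$ obtained by relabeling each leaf $\{x\}$ as $x$. The crucial point is to verify $\mathscr{C}_N=\mathscr{C}$. For a vertex $C\in\mathscr{C}$ of $N$, the network-cluster $\CC_N(C)$ is the set of leaves reachable from $C$ by a directed path in $\Hasse$. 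Here I would invoke the standard fact that in the Hasse diagram of the inclusion order there is a directed path from $C$ to $D$ exactly when $D\subseteq C$: a strict inclusion $D\subsetneq C$ refines to a maximal chain of cover relations, each of which is an arc of $\Hasse$. Together with the clustering-system axiom $\{x\}\in\mathscr{C}$ for all $x\in X$, this shows the reachable leaves are precisely the $\{x\}$ with $x\in C$, so $\CC_N(C)=C$ and hence $\mathscr{C}_N=\{\CC_N(C)\mid C\in\mathscr{C}\}=\mathscr{C}$.

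Regularity of $N$ is then immediate. Since $\mathscr{C}_N=\mathscr{C}$, we have $\Hasse[\mathscr{C}_N]=\Hasse$, and the map $\varphi\colon v\mapsto\CC_N(v)$ sends each internal vertex $C$ to $C$ and each leaf $x$ to $\{x\}$. This is exactly the relabeling bijection between $N$ and $\Hasse$, which is a graph isomorphism by construction; thus $\varphi$ is a graph isomorphism between $N$ and $\Hasse[\mathscr{C}_N]$, and $N$ is regular in the sense of Def.~\ref{def:regular-N}.

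For \emph{uniqueness}, suppose $N'$ is any regular network with $\mathscr{C}_{N'}=\mathscr{C}$. By Def.~\ref{def:regular-N}, the map $\varphi'\colon v\mapsto\CC_{N'}(v)$ is a graph isomorphism from $N'$ onto $\Hasse[\mathscr{C}_{N'}]=\Hasse$. Since $\CC_{N'}(x)=\{x\}$ for every leaf $x\in X$, the isomorphism $\varphi'$ maps the leaf $x$ of $N'$ to the leaf $\{x\}$ of $\Hasse$; composing with the relabeling isomorphism $\Hasse\to N$ (which sends $\{x\}\mapsto x$ and $C\mapsto C$) yields a graph isomorphism $N'\to N$ that fixes every leaf of $X$. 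Hence $N'\simeq N$, so the regular network with clustering system $\mathscr{C}$ is unique.

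The proof is largely bookkeeping once Lemma~\ref{lem:Hasse-is-Network} is in hand, and the one step that genuinely needs care is the identity $\mathscr{C}_N=\mathscr{C}$: both directions of the reachability-versus-inclusion correspondence in $\Hasse$ must be checked, and the identification of the leaf $\{x\}$ with $x\in X$ must be tracked consistently through the relabeling so that the isomorphisms produced in both the existence and uniqueness parts are genuinely leaf-preserving (i.e.\ witness $\simeq$ rather than merely $\sim$). I expect this identification to be the only place where a slip could occur.
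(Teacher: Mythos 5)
Your proof is correct and follows essentially the same route as the paper: build $N$ from $\Hasse[\mathscr{C}]$ by relabeling the singleton leaves, observe that the relabeling map is exactly $v\mapsto\CC_N(v)$ so $N$ is regular, and obtain uniqueness by composing the defining isomorphism $\varphi'$ of any other regular $N'$ with the relabeling isomorphism to get a leaf-preserving isomorphism $N'\simeq N$. The only difference is that you spell out the reachability-equals-inclusion argument behind $\mathscr{C}_N=\mathscr{C}$, which the paper dismisses with ``clearly''; this is a harmless (and welcome) elaboration, not a different approach.
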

\begin{proof}
  Let $\mathscr{C}$ be a clustering system on $X$. By
  Lemma~\ref{lem:Hasse-is-Network}, $\Hasse[\mathscr{C}]$ is a network with
  leaf set $X_{\Hasse}\coloneqq \{ \{x\} \mid x \in X \}$.
  Replacing all leaves $\{x\}$ in $\Hasse[\mathscr{C}]$ with the single
  element $x$ that they contain clearly yields a network $N$ such that
  $\mathscr{C}_{N}=\mathscr{C}$ and $\varphi\colon V(N)\to
  V(\Hasse[\mathscr{C}])\colon v\mapsto \CC_{N}(v)$ is an isomorphism
  between $N$ and $\Hasse[\mathscr{C}]$. By definition, $N$ is regular.

  Now let $N'$ be a regular network with $\mathscr{C}_{N'}=\mathscr{C}$,
  i.e, there is an isomorphism
  $\varphi'\colon V(N')\to V(\Hasse[\mathscr{C}])\colon v\mapsto \CC_{N'}(v)$
  between $N'$ and $\Hasse[\mathscr{C}]$.
  In particular, we have $\varphi(x)=\varphi'(x)=\{x\}$ for all $x\in X$ and
  thus $\varphi'^{-1}(\varphi(x))=x$.
  Hence, $\varphi'^{-1} \circ \varphi$ is an isomorphism between $N$ and $N$
  that is the identity on $X$. Hence, $N$ is the unique regular network with
  $\mathscr{C}_N=\mathscr{C}$.
\end{proof}

\begin{remark}
  By a slight abuse of notation, we also write $\Hasse[\mathscr{C}]$ for the
  unique regular network of a clustering system $\mathscr{C}$ since it is
  obtained from the Hasse diagram by relabeling all leaves $\{x\}$ with $x$.
\end{remark}

The following characterization is a slight rephrasing of Prop.~4.1 in
\cite{Baroni:05}:
\begin{proposition}
  \label{prop:Baroni}
  A network $N$ is regular if and only if
  \begin{itemize}
  \item[(i)] $\CC(u)\subseteq \CC(v) \iff u\preceq_N v$ for all $u,v\in V$,
    and
  \item[(ii)] $N$ is shortcut-free.
  \end{itemize}
\end{proposition}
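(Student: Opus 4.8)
The plan is to prove both implications directly from Definition~\ref{def:regular-N}, recalling that $\varphi\colon v\mapsto \CC(v)$ is a graph isomorphism precisely when it is a bijection $V(N)\to V(\Hasse[\mathscr{C}_N])=\mathscr{C}_N$ that preserves arcs in both directions. Before either direction I would record one auxiliary observation about the target graph: in $\Hasse[\mathscr{C}]$ one has $C'\preceq_{\Hasse}C$ if and only if $C'\subseteq C$. Indeed, a directed path from $C$ to $C'$ composes covering arcs, each of which strictly shrinks the set, forcing $C'\subseteq C$; conversely, if $C'\subsetneq C$ then a maximal chain between them realizes a directed path whose steps are exactly Hasse arcs. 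From this it follows immediately that $\Hasse[\mathscr{C}]$ is always shortcut-free: a shortcut on an arc $(C,C')$ would furnish a cluster $C''$ with $C'\subsetneq C''\subsetneq C$, contradicting that $(C,C')$ is a covering relation.

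For the direction ``regular $\Rightarrow$ (i),(ii)'', assume $\varphi$ is a graph isomorphism; in particular it is injective. The implication $u\preceq_N v\Rightarrow \CC(u)\subseteq\CC(v)$ is exactly Lemma~\ref{lem:inclusion} and holds in any network. For the converse, suppose $\CC(u)\subseteq\CC(v)$ with $u\ne v$; injectivity gives $\CC(u)\subsetneq\CC(v)$, hence $\CC(u)\preceq_{\Hasse}\CC(v)$ by the auxiliary observation, and since $\varphi$ is an isomorphism it preserves reachability, yielding $u\preceq_N v$. This establishes (i). Property (ii) then follows because shortcut-freeness is an isomorphism invariant (it is defined purely through arcs and reachability) and $\Hasse[\mathscr{C}_N]$ is shortcut-free.

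For the converse direction, assume (i) and (ii). First I would verify that $\varphi$ is a bijection onto $\mathscr{C}_N$: surjectivity is by definition of $\mathscr{C}_N$, and injectivity follows from (i), since $\CC(u)=\CC(v)$ forces both $u\preceq_N v$ and $v\preceq_N u$, hence $u=v$ by antisymmetry of $\preceq_N$. The substantive part is arc preservation. If $(u,v)\in E(N)$ then $v\prec_N u$ gives $\CC(v)\subsetneq\CC(u)$ (strict by injectivity); a cluster $\CC(w)$ strictly between them would, by (i), yield $v\prec_N w\prec_N u$, making $(u,v)$ a shortcut and contradicting (ii), so $(\CC(u),\CC(v))$ is a Hasse arc. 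Conversely, if $(\CC(u),\CC(v))$ is a Hasse arc then $\CC(v)\subsetneq\CC(u)$ and (i) gives $v\prec_N u$; were $(u,v)\notin E(N)$, any directed $uv$-path would have length at least $2$ and thus supply an interior vertex $w$ with $v\prec_N w\prec_N u$, whence $\CC(v)\subsetneq\CC(w)\subsetneq\CC(u)$ by Lemma~\ref{lem:inclusion} and injectivity, contradicting that $(\CC(u),\CC(v))$ is a covering relation. Hence $(u,v)\in E(N)$, so $\varphi$ is an isomorphism and $N$ is regular.

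The bijectivity bookkeeping and the translation between inclusion and reachability are routine. The one step demanding genuine care is arc preservation, and this is where I expect the main obstacle: one must establish cleanly, in \emph{both} directions, the equivalence between an arc of $N$ being a non-shortcut cover and the corresponding pair being a covering relation in the inclusion order. The two hypotheses interlock exactly here, with (ii) ruling out ``skipped'' intermediate vertices in $N$ and (i) guaranteeing that the inclusion order on $\mathscr{C}_N$ faithfully mirrors $\preceq_N$.
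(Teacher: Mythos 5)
Your proof is correct, but it takes a genuinely different route from the paper. The paper does not argue from Definition~\ref{def:regular-N} at all: it invokes the published characterization of regularity from \cite{Baroni:05} (Prop.~4.1 there), which says $N$ is regular iff (a) $\CC(u)=\CC(v)\implies u=v$, (b) $\CC(u)\subsetneq\CC(v)\implies u\prec_N v$, and (c) no arc of $N$ is a shortcut; the paper's entire proof then consists of checking that your condition (i) is logically equivalent to (a)$\wedge$(b) (using Lemma~\ref{lem:inclusion} for one inclusion) and that (ii) is a restatement of (c). Your argument instead proves the proposition from scratch: you establish that reachability in $\Hasse[\mathscr{C}]$ coincides with inclusion, that Hasse diagrams are automatically shortcut-free, and then verify bijectivity and arc-preservation of $\varphi$ in both directions, with (ii) ruling out skipped vertices in $N$ and the covering property ruling out skipped clusters in $\Hasse[\mathscr{C}_N]$. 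What the paper's route buys is brevity, at the price of resting on an external result; what your route buys is self-containedness and an explicit explanation of \emph{why} the two hypotheses interlock at the level of individual arcs (non-shortcut covers in $N$ correspond exactly to covering relations in the inclusion order), which the citation-based proof leaves implicit. All the individual steps you flag as delicate (strictness via injectivity, the interior-vertex argument using the paper's equivalent formulation of shortcuts, the maximal-chain argument for reachability in a finite Hasse diagram) are sound.
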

\begin{proof}
  Prop.~4.1 in \cite{Baroni:05} states that $N$ is regular if and only if
  the following three conditions hold: (a) $u\ne v$ implies $\CC(u)\ne
  \CC(v)$, i.e., $\CC(u)=\CC(v)\implies u=v$; (b) if $\CC(u)\subsetneq
  \CC(v)$, then there is a directed path from $v$ to $u$, i.e., $u\prec_N
  v$; and (c) if there are two distinct directed paths connecting $u$ and
  $v$, then neither path consists of a single arc, i.e., $(u,v)$ is not a
  shortcut. Clearly, conditions~(ii) and~(c) are equivalent. It therefore
  suffices to show that condition~(i) holds if and only if conditions~(a)
  and~(b) are satisfied.  Together, (a), (b) and Lemma~\ref{lem:inclusion}
  obviously imply (i). Now suppose (i) is satisfied. Then $\CC(u)=\CC(v)$
  implies $\CC(u)\subseteq \CC(v)$ and $\CC(v)\subseteq \CC(u)$ and thus we
  have both $u\preceq_N v$ and $v\preceq_N u$, and hence $u=v$, i.e., (a)
  holds.  Assuming $\CC(u)\subsetneq \CC(v)$, i.e, $\CC(u)\subseteq \CC(v)$
  and $\CC(u)\ne \CC(v)$ implies $u\preceq_N v$ by (i) and $u\ne v$ by (a),
  and thus $u\prec_N v$, i.e., (b) holds as well.
\end{proof}

Prop.~\ref{prop:regular-unique}	and~\ref{prop:Baroni} imply
\begin{corollary} \label{cor:cluster-n-with-strongprec}
  For every clustering system $\mathscr{C}$ there is a network $N$ with
  $\mathscr{C}_N = \mathscr{C}$ such that $\CC(u)\subseteq \CC(v) \iff
  u\preceq_N v$ for all $u,v\in V$.
\end{corollary}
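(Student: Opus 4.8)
The plan is to obtain the corollary as an immediate consequence of the two preceding propositions, since the property demanded here is literally one half of the characterization of regular networks. First I would invoke Prop.~\ref{prop:regular-unique} with the given clustering system $\mathscr{C}$ to produce a regular network $N$ with $\mathscr{C}_N=\mathscr{C}$; concretely, this $N$ is the regular network $\Hasse[\mathscr{C}]$ (the Hasse diagram with its singleton leaves $\{x\}$ relabeled by $x$), although for the present statement only its existence matters and the uniqueness part of Prop.~\ref{prop:regular-unique} is not needed.

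Next I would apply Prop.~\ref{prop:Baroni} to this $N$. Since $N$ is regular, that proposition guarantees in particular that condition~(i) holds, namely $\CC(u)\subseteq\CC(v)\iff u\preceq_N v$ for all $u,v\in V$. This is exactly the assertion of the corollary, so combining the two propositions closes the argument in a single line. (The complementary half, condition~(ii), says $N$ is shortcut-free; it is produced as a byproduct but is not claimed here.)

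I do not expect any genuine obstacle: all the substance has already been carried out in establishing Prop.~\ref{prop:regular-unique} (the construction of a network realizing an arbitrary clustering system, resting on Lemma~\ref{lem:Hasse-is-Network}) and Prop.~\ref{prop:Baroni} (the rephrasing of Baroni et al.'s characterization, whose forward direction in turn uses Lemma~\ref{lem:inclusion} to supply the ``$\Leftarrow$'' implication $u\preceq_N v\Rightarrow\CC(u)\subseteq\CC(v)$). The only thing to check is that the equivalence stated in Prop.~\ref{prop:Baroni}(i) is quoted verbatim, so that no further manipulation is required; once $N$ is chosen regular, the desired biconditional is available directly. Hence the proof is a two-step invocation with no hidden technical content.
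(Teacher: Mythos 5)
Your proposal is correct and matches the paper exactly: the corollary is stated there as an immediate consequence of Prop.~\ref{prop:regular-unique} (existence of a regular network realizing $\mathscr{C}$) together with condition~(i) of Prop.~\ref{prop:Baroni}, which is precisely your two-step invocation. No gaps; the observation that uniqueness and the shortcut-free condition~(ii) are unused byproducts is also accurate.
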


\begin{corollary}
  \label{cor:regular-least-resolved}
  Every regular network is least-resolved.
\end{corollary}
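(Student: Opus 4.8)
The plan is to argue by a vertex count. The two ingredients I would extract from regularity are as follows. First, by Prop.~\ref{prop:Baroni}, a regular network $N$ is shortcut-free and satisfies $\CC(u)\subseteq\CC(v)\iff u\preceq_N v$; in particular $\CC(u)=\CC(v)$ forces $u\preceq_N v$ and $v\preceq_N u$, hence $u=v$, so the map $v\mapsto\CC(v)$ is a bijection from $V(N)$ onto $\mathscr{C}_N$ and therefore $\vert V(N)\vert=\vert\mathscr{C}_N\vert$. This equality is the crux of the whole argument.

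Next I would suppose, for contradiction, that $N$ is \emph{not} least-resolved, so that there is a network $N'$ with $\mathscr{C}_{N'}=\mathscr{C}_N$ obtained from $N$ by a non-empty series of shortcut removals and contractions of non-shortcut arcs. Because $N$ is shortcut-free, the first operation in this series cannot be a shortcut removal, so it must be a contraction $\contract(v',v)$ of some non-shortcut arc; by Lemma~\ref{lem:contraction} this deletes exactly one vertex. Each further operation either leaves the vertex set unchanged (shortcut removal, Lemma~\ref{lem:shortcut-deletion}) or removes a single vertex (contraction, Lemma~\ref{lem:contraction}), so the vertex count never increases after the first step. Hence $\vert V(N')\vert\le \vert V(N)\vert-1$.

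To close the argument I would use that, for \emph{any} network realizing $\mathscr{C}_N$, the assignment $v\mapsto\CC_{N'}(v)$ is a surjection onto $\mathscr{C}_{N'}=\mathscr{C}_N$, whence $\vert V(N')\vert\ge\vert\mathscr{C}_N\vert=\vert V(N)\vert$, contradicting $\vert V(N')\vert\le\vert V(N)\vert-1$. The only point that needs care is that a contraction may create a shortcut (cf.\ Fig.~\ref{fig:cntr-issues}(B)), so the series could legitimately interleave later shortcut removals; I expect this to be the main thing to get right, but it causes no difficulty for the count, since the essential observation is merely that the series must contain at least one contraction and that contractions strictly decrease, while no operation ever increases, the number of vertices.
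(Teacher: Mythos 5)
Your proof is correct and follows essentially the same route as the paper's: both argue by contradiction via a vertex count, using Prop.~\ref{prop:Baroni} to get shortcut-freeness (forcing the first operation to be a contraction, which strictly decreases $\vert V\vert$) and the bijection $v\mapsto\CC(v)$ from regularity to get $\vert V(N)\vert=\vert\mathscr{C}_N\vert$, contradicting the surjectivity bound $\vert V(N')\vert\ge\vert\mathscr{C}_{N'}\vert$. Your remarks on the surjectivity step and on interleaved shortcut removals merely make explicit what the paper leaves implicit.
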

\begin{proof}
  Suppose, for contradiction, that the regular network $N$ is not
  least-resolved, i.e., there is a network $N'$ with
  $\mathscr{C}_N=\mathscr{C}_{N'}$ that can be obtained from $N$ by a non-empty
  sequence of shortcut removals and application of $\contract(v',v)$.
  By Prop.~\ref{prop:Baroni}, $N$ is shortcut-free and, therefore,
  the operation that is applied first must be a contraction.
  Therefore and since no new vertices are introduced, it must hold
  $\vert V(N')\vert <\vert V(N)\vert$. Since $N$ is regular, it holds $\vert
  V(N)\vert =\vert \mathscr{C}_N\vert$.
  Hence, we have $\vert V(N')\vert<\vert \mathscr{C}_N\vert=\vert
  \mathscr{C}_{N'} \vert$; a contradiction.
\end{proof}

The converse of Cor.~\ref{cor:regular-least-resolved}, however, is not
satisfied, see Fig.~\ref{fig:counter-reg-lr}. The network $N$ is
shortcut-free and satisfies $\vert V(N) \vert = \vert \mathscr{C}_N\vert$.
Hence, $\mathscr{C}_N$ is
least-resolved. The unique regular network $N'$ for $\mathscr{C}_N$ on the
r.h.s.\ of Fig.~\ref{fig:counter-reg-lr} is not isomorphic to $N$. Hence, we
obtain
\begin{fact}
  Not every least-resolved network is regular.
\end{fact}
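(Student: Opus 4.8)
The plan is to refute the converse of Cor.~\ref{cor:regular-least-resolved} by exhibiting a single network $N$ that is least-resolved but fails to be regular; the network drawn in Fig.~\ref{fig:counter-reg-lr} serves this purpose. Concretely, I would take $X=\{a,b,c,d\}$, a root $\rho$ whose children are the leaf $d$ and two tree vertices $u$ and $v$, together with the arcs making $a,b,c$ the children of $u$ and $a,b$ the children of $v$ (so that $a$ and $b$ become hybrid leaves with parents $u$ and $v$). A direct inspection of the eight arcs shows that $N$ is phylogenetic and shortcut-free, and the clusters $\CC(\rho)=\{a,b,c,d\}$, $\CC(u)=\{a,b,c\}$, $\CC(v)=\{a,b\}$ together with the four singletons are pairwise distinct. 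Hence the map $w\mapsto\CC_N(w)$ is injective, so $\lvert V(N)\rvert=\lvert\mathscr{C}_N\rvert$.

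For least-resolvedness I would simply re-run the counting argument from the proof of Cor.~\ref{cor:regular-least-resolved}. Suppose some $N'$ with $\mathscr{C}_{N'}=\mathscr{C}_N$ were obtained from $N$ by a non-empty series of shortcut removals and contractions of non-shortcut arcs. Since $N$ is shortcut-free, the first operation cannot remove a shortcut and must therefore be a contraction; by Lemma~\ref{lem:contraction} a contraction deletes exactly one vertex, while by Lemma~\ref{lem:shortcut-deletion} a shortcut removal leaves the vertex set unchanged. Thus $\lvert V(N')\rvert\le\lvert V(N)\rvert-1<\lvert\mathscr{C}_N\rvert=\lvert\mathscr{C}_{N'}\rvert$, contradicting the trivial inequality $\lvert\mathscr{C}_{N'}\rvert\le\lvert V(N')\rvert$ (the cluster map is always surjective onto $\mathscr{C}_{N'}$). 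Hence no such $N'$ exists and $N$ is least-resolved.

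To see that $N$ is not regular I would invoke Prop.~\ref{prop:Baroni}. The vertices $u$ and $v$ both have $\rho$ as their only parent and neither reaches the other, so they are $\preceq_N$-incomparable; yet $\CC(v)=\{a,b\}\subsetneq\{a,b,c\}=\CC(u)$. This violates condition~(i) of Prop.~\ref{prop:Baroni}, namely the implication $\CC(v)\subseteq\CC(u)\Rightarrow v\preceq_N u$ (the converse always holding by Lemma~\ref{lem:inclusion}), so $N$ cannot be regular.

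The main obstacle is the \emph{construction}, not the verification: one must realise an incomparable pair $u,v$ with a strict cluster inclusion $\CC(v)\subsetneq\CC(u)$ while simultaneously keeping the cluster map injective (so that $\lvert V(N)\rvert=\lvert\mathscr{C}_N\rvert$ and the least-resolved engine applies) and the network shortcut-free. These requirements pull against one another --- the naive way of letting $\CC(u)$ swallow $\CC(v)$ tends to produce either a repeated cluster (for instance $\CC(\rho)=\CC(u)$, which would already make $N$ contractible and hence not least-resolved) or a shortcut --- and the role of the extra leaf $d$ is precisely to separate $\rho$ from $u$ and thereby restore injectivity. Once a configuration meeting all three constraints is fixed, the remaining checks are routine.
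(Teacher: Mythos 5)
Your proof is correct and takes essentially the same route as the paper: exhibit a concrete shortcut-free network with $\vert V(N)\vert=\vert\mathscr{C}_N\vert$ (the paper's Fig.~\ref{fig:counter-reg-lr}), deduce least-resolvedness by re-running the counting argument from Cor.~\ref{cor:regular-least-resolved}, and then observe that the network fails to be regular. The only cosmetic difference is that you certify non-regularity through a violation of condition~(i) of Prop.~\ref{prop:Baroni} (an incomparable pair $u,v$ with $\CC(v)\subsetneq\CC(u)$), whereas the paper points out that $N\not\simeq\Hasse[\mathscr{C}_N]$; these are interchangeable, and your explicit network is a valid stand-in for the one depicted in the paper's figure.
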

Nevertheless, for level-1 networks the terms least-resolved and regular
coincide as shown in Cor.~\ref{cor:regular-IFF-least-resolved}.

\begin{figure}[t]
  \centering
  \includegraphics[width=0.75\textwidth]{./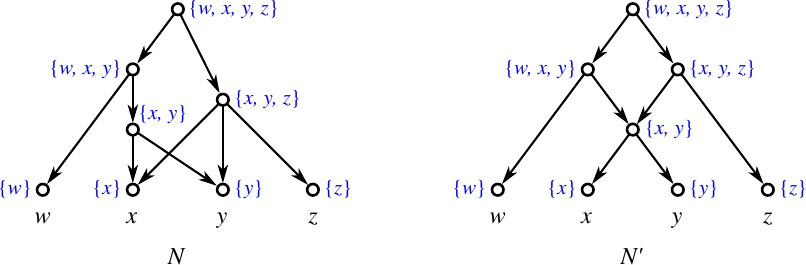}
  \caption{The network $N$  is least-resolved but
    not isomorphic to $N'\sim\Hasse[\mathscr{C}_N]$. Consequently,
    $N$ is not regular.}
  \label{fig:counter-reg-lr}
\end{figure}

A clustering system $\mathscr{C}$, by definition, is a hierarchy if and
only if, for all $C,C'\in\mathscr{C}$ holds $C\cap
C'\in\{\emptyset,C,C'\}$.
\begin{corollary}\label{cor:hasse-tree}
  A clustering system $\mathscr{C}$ on $X$ is a hierarchy if and only if
  $\Hasse[\mathscr{C}]$ is a phylogenetic tree.  Moreover, $N$ is a
  phylogenetic tree if and only if $\Hasse[\mathscr{C}_N]\sim N$ and
  $\mathscr{C}_N$ is a hierarchy.
\end{corollary}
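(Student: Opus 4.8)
The plan is to treat the two biconditionals in turn. For the first one I would lean on Lemma~\ref{lem:Hasse-is-Network}, which already tells us that $\Hasse[\mathscr{C}]$ is a phylogenetic network, so that only the ``tree'' aspect remains to be argued; for the second one I would additionally invoke the classical fact recorded just before Def.~\ref{def:regular-N} (cf.\ \cite{sem-ste-03a}) that a phylogenetic tree is isomorphic to the Hasse diagram of its own clustering system.

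For the first statement, since a network is a tree precisely when it has no hybrid vertices, it suffices to control indegrees in $\Hasse[\mathscr{C}]$. For the forward direction I would assume $\mathscr{C}$ is a hierarchy and show no cluster $C$ can have two distinct parents $C_1,C_2$: both satisfy $C\subsetneq C_1,C_2$, so $C_1\cap C_2\supseteq C\neq\emptyset$, and the hierarchy property forces $C_1\subseteq C_2$ or $C_2\subseteq C_1$; either inclusion places one of them strictly between $C$ and the other, contradicting that both \emph{cover} $C$. Hence every non-root vertex has indegree $\le 1$ and $\Hasse[\mathscr{C}]$ is a phylogenetic tree. For the converse I would take $C,C'\in\mathscr{C}$ sharing an element $c$; then $\{c\}\subseteq C$ and $\{c\}\subseteq C'$ yield directed paths from $C$ and from $C'$ down to the leaf $\{c\}$, so both $C$ and $C'$ are ancestors of $\{c\}$ in $\Hasse[\mathscr{C}]$. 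In a rooted tree the ancestors of a fixed vertex form a chain, so $C$ and $C'$ are $\subseteq$-comparable and $C\cap C'\in\{C,C'\}$; thus $\mathscr{C}$ is a hierarchy.

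For the second statement, the forward direction starts from a phylogenetic tree $N$: two $\preceq_N$-incomparable vertices cannot share a descendant leaf $x$ (otherwise both lie on the unique root-to-$x$ path and are comparable), so their clusters are disjoint, while $\preceq_N$-comparable vertices have nested clusters by Lemma~\ref{lem:inclusion}; hence $\mathscr{C}_N$ is a hierarchy. The first statement then gives that $\Hasse[\mathscr{C}_N]$ is a phylogenetic tree, and the quoted isomorphism of a phylogenetic tree with the Hasse diagram of its clusters yields $N\simeq\Hasse[\mathscr{C}_N]$, in particular $N\sim\Hasse[\mathscr{C}_N]$. For the converse I would assume $\mathscr{C}_N$ is a hierarchy and $N\sim\Hasse[\mathscr{C}_N]$; by the first statement $\Hasse[\mathscr{C}_N]$ is a phylogenetic tree, and since both ``phylogenetic'' (a condition on in- and outdegrees) and ``tree'' (absence of hybrid vertices, i.e.\ of vertices of indegree $\ge 2$) are invariant under graph isomorphism, the network $N$, being graph isomorphic to $\Hasse[\mathscr{C}_N]$, is itself a phylogenetic tree.

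I expect the delicate points to be bookkeeping rather than genuine obstacles. In the converse of the first statement one must justify that being an ``ancestor of $\{c\}$'' is exactly upward-reachability in $\Hasse[\mathscr{C}]$ and that these ancestors form a chain in the tree. In the converse of the second statement the subtlety is that $\sim$ need not fix the leaves, so one must be explicit that the relevant properties survive such an isomorphism; the reason it is safe is that they are defined purely via in- and outdegrees, which every graph isomorphism preserves.
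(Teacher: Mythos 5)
Your proposal is correct, but it takes a genuinely different route from the paper. The paper's entire proof is a one-line appeal to the classical 1-to-1 correspondence between hierarchies and rooted phylogenetic trees, citing \cite[Thm.~3.5.2]{sem-ste-03a}; it does no internal work at all. You instead derive the corollary from the paper's own machinery: Lemma~\ref{lem:Hasse-is-Network} supplies the phylogenetic property of $\Hasse[\mathscr{C}]$, your covering argument (two distinct parents $C_1,C_2$ of $C$ would have to be $\subseteq$-comparable by the hierarchy property, contradicting that both cover $C$) rules out hybrid vertices, the chain-of-ancestors argument in a tree gives the converse, and Lemma~\ref{lem:inclusion} together with uniqueness of root-to-leaf paths handles the claim that a phylogenetic tree has a hierarchy as clustering system; only the isomorphism $T\simeq\Hasse[\mathscr{C}_T]$ for phylogenetic trees is imported, and that fact is itself quoted (with the same citation) just before Def.~\ref{def:regular-N}, so you are entitled to it. What each approach buys: the paper's citation is maximally concise but leaves the reader to check that the Semple--Steel statement, formulated for $X$-trees, really yields both biconditionals in the present network setting — in particular the second one, which involves the leaf-agnostic isomorphism $\sim$ rather than $\simeq$. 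Your proof makes exactly these bookkeeping points explicit, most usefully the observation that ``tree'' and ``phylogenetic'' are pure degree conditions and hence survive a graph isomorphism that need not fix $X$; that is precisely the subtlety a reader of the paper's one-line proof has to supply on their own.
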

\begin{proof}
  The 1-to-1 correspondence of hierarchies and phylogenetic trees is well
  known, see e.g.\ \cite[Thm.~3.5.2]{sem-ste-03a}.
\end{proof}

\section{Semi-Regular Networks}
\label{sec:semi-reg}

\subsection{Path-Cluster Comparability}
\label{ssec:PCC}

Regularity as characterized in Prop.~\ref{prop:Baroni} is a bit too
restrictive for our purposes. We therefore consider a slightly weaker
condition, which we will call \emph{semi-regularity}. More precisely, we
relax condition (i) in Prop.~\ref{prop:Baroni}:
\begin{definition}\label{def:PCC}
  A network $N$ has the \emph{path-cluster-comparability (PCC)} property if
  it satisfies, for all $u,v\in V(N)$,
  \begin{description}
  \item[(PCC)] $u$ and $v$ are $\preceq_N$-comparable if and only if
    $\CC(u)\subseteq \CC(v)$ or $\CC(v)\subseteq \CC(u)$.
  \end{description}
\end{definition}
\begin{definition}
  \label{def:semi-regular}
  A network is \emph{semi-regular} if it is shortcut-free and satisfies
  (PCC).
\end{definition}
We introduce the term ``semi-regular'' because, as we shall see in
Thm.~\ref{thm:semiregular}, it is a moderate generalization of regularity
that preserves many of useful properties of regular networks. We
emphasize that (PCC) is a quite restrictive property. For instance, the
rooted $K_{3,3}$ in Fig.~\ref{fig:no-PCC}(B) violates (PCC).  The example
in Fig.~\ref{fig:no-PCC}(C) shows that even $\CC(v)\subsetneq\CC(u)$ is
possible for two $\preceq_{N}$-incomparable vertices $u$ and $v$.

\begin{fact}
  \label{obs:subsetneq-implies-below}
  Let $N$ be a network satisfying (PCC) and $u, v\in V(N)$. Then
  $\CC(u)\subsetneq \CC(v)$ implies $u\prec_N v$.
\end{fact}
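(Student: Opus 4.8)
The plan is to extract $\preceq_N$-comparability of $u$ and $v$ directly from (PCC), and then to eliminate the ``wrong'' direction of comparability using Lemma~\ref{lem:inclusion}. The statement is close to immediate once one keeps carefully apart what (PCC) gives (comparability) from what it does not give (the direction of that comparability).

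First I would observe that $\CC(u)\subsetneq\CC(v)$ in particular yields $\CC(u)\subseteq\CC(v)$, so one of the two clusters is contained in the other. By the (PCC) property (Def.~\ref{def:PCC}), this immediately implies that $u$ and $v$ are $\preceq_N$-comparable; hence either $u\preceq_N v$ or $v\preceq_N u$.

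Next I would rule out the case $v\preceq_N u$. If $v\preceq_N u$ held, then Lemma~\ref{lem:inclusion} would give $\CC(v)\subseteq\CC(u)$. Combined with the assumed inclusion $\CC(u)\subseteq\CC(v)$, this forces $\CC(u)=\CC(v)$, contradicting the strictness of $\CC(u)\subsetneq\CC(v)$. Therefore $u\preceq_N v$ must hold.

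Finally, since $\CC(u)\ne\CC(v)$ forces $u\ne v$ (two equal vertices would trivially share the same cluster), I conclude $u\prec_N v$, as claimed. I do not expect any genuine obstacle here: the only point requiring care is that (PCC) on its own guarantees comparability but not its orientation, and supplying the orientation is exactly the role played by Lemma~\ref{lem:inclusion} together with the strict inclusion.
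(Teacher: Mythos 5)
Your proof is correct and follows essentially the same route as the paper's: invoke (PCC) to get $\preceq_N$-comparability, then use Lemma~\ref{lem:inclusion} to show $v\preceq_N u$ would force $\CC(v)\subseteq\CC(u)$ and hence equality, contradicting strictness. Your explicit final remark that $u\ne v$ is handled implicitly in the paper (ruling out $v\preceq_N u$ already excludes $u=v$), but the argument is the same.
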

\begin{proof}
  Suppose $\CC(u)\subsetneq \CC(v)$. Then (PCC) implies that $u$ and $v$
  are $\preceq_{N}$-comparable.  If $v\preceq_N u$, then
  Lemma~\ref{lem:inclusion} implies $\CC(v)\subseteq \CC(u)$; a
  contradiction. Hence, only $u\prec_N v$ is possible.
\end{proof}
Property (PCC) still allows that distinct vertices are associated with the
same clusters. It requires, however, that such vertices lie along a common
directed path.
\begin{lemma}
  \label{lem:equal-outn}
  Let $N$ be a network. Then $\CC(u)=\CC(v)$ and $u\prec_N v$ imply that there
  is $w\in\child_N(v)$ such that $u\preceq_N w \prec_N v$ and $\CC(w)=\CC(v)$.
\end{lemma}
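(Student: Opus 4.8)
The plan is to exhibit the required child $w$ of $v$ directly from a directed path witnessing $u \prec_N v$, and then to pin down its cluster by a short inclusion chain. First I would use the hypothesis $u\prec_N v$ to fix a directed $vu$-path $P$ in $N$; since $u\ne v$, this path contains at least one arc, and so its second vertex is a child $w\in\child_N(v)$ of $v$. The subpath of $P$ running from $w$ to $u$ is itself a directed path, which shows $u\preceq_N w$, while the arc $(v,w)$ together with acyclicity of $N$ (ensuring $w\ne v$) gives $w\prec_N v$. This already establishes the desired $u\preceq_N w\prec_N v$.

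It then remains only to verify $\CC(w)=\CC(v)$, and here I would appeal to the monotonicity of clusters along $\preceq_N$ provided by Lemma~\ref{lem:inclusion}. From $u\preceq_N w$ we obtain $\CC(u)\subseteq\CC(w)$, and from $w\preceq_N v$ we obtain $\CC(w)\subseteq\CC(v)$. Chaining these inclusions yields $\CC(u)\subseteq\CC(w)\subseteq\CC(v)$. Substituting the hypothesis $\CC(u)=\CC(v)$ forces all three sets to coincide, so in particular $\CC(w)=\CC(v)$, completing the argument.

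There is essentially no serious obstacle here: once the child $w$ on a $vu$-path has been selected, the conclusion follows from a two-step inclusion sandwich and the assumption $\CC(u)=\CC(v)$. The only points requiring a moment's care are confirming that the chosen second vertex of $P$ is genuinely an element of $\child_N(v)$ (immediate, since $P$ is directed out of $v$) and that $w\ne v$ (immediate from acyclicity), so that the strict relation $w\prec_N v$ holds rather than merely $w\preceq_N v$.
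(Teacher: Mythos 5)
Your proof is correct and follows essentially the same route as the paper: take the child $w$ of $v$ lying on a directed $vu$-path to get $u\preceq_N w\prec_N v$, then apply Lemma~\ref{lem:inclusion} twice and use $\CC(u)=\CC(v)$ to squeeze $\CC(w)$ between equal sets. Your version merely spells out the minor details (that the second vertex of the path is a child, and that acyclicity gives $w\ne v$) that the paper leaves implicit.
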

\begin{proof}
  Since $u\prec_N v$, there is a $vw$-path which passes through some child
  $w\in\child_N(v)$. Hence, we have $u\preceq_N w \prec_N
  v$. Lemma~\ref{lem:inclusion} implies $\CC(u)\subseteq \CC(w)\subseteq
  \CC(v)$. Together with $\CC(u)=\CC(v)$, this yields $\CC(v)=\CC(w)=\CC(u)$.
\end{proof}

\begin{lemma}
  \label{lem:outdeg1}
  Let $N$ be a semi-regular network and let $v\in V(N)$. Then, there is a
  vertex $u\in V(N)$ with $\CC(u)=\CC(v)$ and $u\prec_N v$ if and only if
  $\outdeg(v)=1$. If, moreover, $N$ is phylogenetic, then $u$ is the unique
  child of $v$ in this case.
\end{lemma}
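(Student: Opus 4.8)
The plan is to prove the equivalence by treating the two implications separately, and then to obtain the phylogenetic addendum by refining the forward implication. The reverse implication is immediate: if $\outdeg(v)=1$ and $w$ denotes the unique child of $v$, then Obs.~\ref{obs:outdeg-1-cluster} yields $\CC(w)=\CC(v)$, while $w\prec_N v$ holds since $w$ is a child of $v$; hence $u\coloneqq w$ is the desired witness, and neither (PCC) nor shortcut-freeness is needed here.

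For the forward implication, I would start from a vertex $u$ with $\CC(u)=\CC(v)$ and $u\prec_N v$ and invoke Lemma~\ref{lem:equal-outn} to obtain a child $w\in\child_N(v)$ with $u\preceq_N w\prec_N v$ and $\CC(w)=\CC(v)$; in particular $\outdeg(v)\ge 1$. The key step is to exclude $\outdeg(v)\ge 2$: if $v$ had a second child $w'\ne w$, then shortcut-freeness via Obs.~\ref{obs:shortcut} forces $w$ and $w'$ to be $\preceq_N$-incomparable, whereas $w'\prec_N v$ together with Lemma~\ref{lem:inclusion} gives $\CC(w')\subseteq\CC(v)=\CC(w)$, so (PCC) renders $w$ and $w'$ comparable --- a contradiction. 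Thus $\outdeg(v)=1$.

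For the phylogenetic addendum, the aim is to upgrade the relation $u\preceq_N w$ supplied by Lemma~\ref{lem:equal-outn} to the equality $u=w$, where $w$ is now the (unique) child of $v$. I would argue by contradiction, assuming $u\prec_N w$. Then $\CC(u)=\CC(w)$ with $u\prec_N w$, so the already-proven forward implication, applied to $w$ in place of $v$, gives $\outdeg(w)=1$; the phylogenetic property (N2) then forces $\indeg(w)\ge 2$, so $w$ has a parent $c\ne v$ (recall $v$ is itself a parent of $w$). The crux is to show that $(c,w)$ is a shortcut. From $\CC(v)=\CC(w)\subseteq\CC(c)$ and (PCC) the vertices $v$ and $c$ are comparable; the case $c\prec_N v$ is impossible because $\outdeg(v)=1$ forces every path out of $v$ through $w$, making $c$ simultaneously a descendant and an ancestor of $w$, so $v\prec_N c$. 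Taking a directed path from $c$ to $v$, its first arc reaches a child $z\ne w$ of $c$ with $v\preceq_N z$, whence $w\prec_N z$; this exhibits $(c,w)$ as a shortcut and contradicts shortcut-freeness, so $u=w$.

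I expect the phylogenetic addendum to be the main obstacle, and within it the verification that $(c,w)$ is a shortcut. The two delicate points are resolving the comparability of $v$ and $c$ in favor of $v\prec_N c$ (which relies essentially on $\outdeg(v)=1$) and confirming that the child $z$ of $c$ on a $c,v$-path is both distinct from $w$ and a strict ancestor of $w$. By contrast, the reverse implication and the exclusion of $\outdeg(v)\ge 2$ are routine consequences of Obs.~\ref{obs:outdeg-1-cluster}, Obs.~\ref{obs:shortcut}, Lemma~\ref{lem:inclusion}, and Lemma~\ref{lem:equal-outn}.
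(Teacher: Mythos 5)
Your proof is correct and takes essentially the same route as the paper's: the reverse direction via Obs.~\ref{obs:outdeg-1-cluster}, the forward direction via Lemma~\ref{lem:equal-outn}, Lemma~\ref{lem:inclusion}, (PCC) and Obs.~\ref{obs:shortcut}, and the phylogenetic addendum by applying the forward implication to the child $w$, invoking (N2) to get a second parent $c$ of $w$, and deriving a contradiction with shortcut-freeness. The only (harmless) difference is at the very end: where the paper simply cites Obs.~\ref{obs:shortcut} to conclude that the two $\preceq_N$-comparable parents $v$ and $c$ of $w$ would have to coincide, you re-derive that fact by hand, pinning down the direction $v\prec_N c$ and explicitly exhibiting $(c,w)$ as a shortcut.
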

\begin{proof}
  Suppose first that $\outdeg(v)=1$ and thus let $u$ be the unique child of
  $v$. Thus, it holds $u\prec_N v$ and, by Obs.~\ref{obs:outdeg-1-cluster},
  we have $\CC(u)=\CC(v)$.  Conversely, suppose that $\CC(u)=\CC(v)$ and
  $u\prec_N v$.  Lemma~\ref{lem:equal-outn} implies that there is
  $w\in\child(v)$ with $u\preceq_N w \prec_N v$ and $\CC(w)=\CC(v)$. Suppose
  there is another child $w'\in\child(v)$ with $w'\ne w$. By
  Lemma~\ref{lem:inclusion}, we have $\CC(w')\subseteq \CC(v)=\CC(w)$. Hence,
  (PCC) implies that $w$ and $w'$ are $\preceq_{N}$-comparable.  But then
  Obs.~\ref{obs:shortcut} and $N$ being shortcut-free imply $w=w'$; a
  contradiction.  Hence, $w$ is the unique child of $v$.

  Now suppose, in addition, that $N$ is phylogenetic and assume, for
  contradiction, that $u\ne w$ and thus $u\prec_N w$.  We can apply similar
  arguments as before to conclude that $w$ has a unique child. Therefore
  and since $N$ is phylogenetic, there must be a vertex
  $v'\in\parent_N(w)\setminus \{v\}$ since otherwise
  $\indeg(w)=\outdeg(w)=1$. By Lemma~\ref{lem:inclusion}, we have
  $\CC(v)=\CC(w)\subseteq \CC(v')$. This together with (PCC) implies that
  $v$ and $v'$ are $\preceq_{N}$-comparable. But then
  Obs.~\ref{obs:shortcut} and $N$ being shortcut-free imply $v=v'$; a
  contradiction.  Therefore, $u=w$ is the unique child of $v$.
\end{proof}

As a consequence of Lemmas~\ref{lem:inclusion} and~\ref{lem:outdeg1}, we
have
\begin{corollary}
  \label{cor:gt2}
  Let $N$ be a semi-regular network and let $v\in V^0$.  Then
  $\CC(u)\subsetneq \CC(v)$ for all $u\in\child_N(v)$ if and only if
  $\outdeg(v)\ge 2$.
\end{corollary}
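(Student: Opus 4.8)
The plan is to combine the two cited lemmas, treating the statement as an equivalence between ``no child of $v$ carries the full cluster $\CC(v)$'' and ``$v$ has at least two children.'' First I would record the trivial inclusion: since $v\in V^0$ is a non-leaf vertex we have $\outdeg(v)\ge 1$, and every child $u\in\child_N(v)$ satisfies $u\prec_N v$, so Lemma~\ref{lem:inclusion} gives $\CC(u)\subseteq\CC(v)$ for all such $u$. Thus the whole question reduces to whether equality $\CC(u)=\CC(v)$ can occur for some child $u$: strictness fails for a child precisely when that child carries the full cluster.

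For the forward implication I would argue by contraposition. Since $v$ is a non-leaf vertex, the negation of $\outdeg(v)\ge 2$ is $\outdeg(v)=1$. In that case $v$ has a unique child $u$, and Observation~\ref{obs:outdeg-1-cluster} (equivalently, the ``only if'' part of Lemma~\ref{lem:outdeg1}) yields $\CC(u)=\CC(v)$; hence the condition ``$\CC(u)\subsetneq\CC(v)$ for all children'' is violated, as required.

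For the converse, assume $\outdeg(v)\ge 2$. Here the key ingredient is Lemma~\ref{lem:outdeg1}, which for a semi-regular $N$ guarantees that a vertex $u$ with $\CC(u)=\CC(v)$ and $u\prec_N v$ can exist \emph{only} when $\outdeg(v)=1$. Since $\outdeg(v)\ge 2$, no such vertex $u$ exists at all, and in particular no child $u$ can satisfy $\CC(u)=\CC(v)$. Combined with the inclusion $\CC(u)\subseteq\CC(v)$ from the first step, this forces $\CC(u)\subsetneq\CC(v)$ for every child $u$.

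I do not anticipate a genuine obstacle: the argument is a direct bookkeeping exercise built on Lemma~\ref{lem:outdeg1}. The only point that needs care is the role of the hypothesis $v\in V^0$, which ensures $\outdeg(v)\ge 1$, so that ``$\outdeg(v)\ge 2$'' is exactly the negation of ``$\outdeg(v)=1$.'' This is what allows the two one-directional halves of Lemma~\ref{lem:outdeg1} to align into the claimed equivalence.
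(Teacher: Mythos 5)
Your proof is correct and follows exactly the route the paper intends: the corollary is stated there as an immediate consequence of Lemma~\ref{lem:inclusion} and Lemma~\ref{lem:outdeg1}, and your argument is precisely that derivation, with the contrapositive split handled cleanly. Your care in invoking Obs.~\ref{obs:outdeg-1-cluster} for the $\outdeg(v)=1$ case (rather than relying on Lemma~\ref{lem:outdeg1} to produce a \emph{child} as witness, which it only guarantees in the phylogenetic setting) is exactly the right touch.
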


\begin{lemma}
  \label{lem:C=path}
  Let $N$ be a semi-regular, $u\in V(N)$, and let $Q(u)\coloneqq \{u'\in
  V(N) \mid \CC(u')=\CC(u)\}$.  Then the vertices of $Q(u)$ are pairwise
  $\preceq_N$-comparable and lie consecutively along an induced directed
  path in $N$. Moreover, if $\vert Q(u)\vert>1$, either $Q(u)$ is contained
  in a non-trivial block, or adjacent pairs of vertices in $Q(u)$ form a
  trivial block.
\end{lemma}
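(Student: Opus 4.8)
The plan is to proceed in three stages: first upgrade equality of clusters to total $\preceq_N$-comparability, then pin down a rigid ``unique parent'' structure that forces the induced path, and finally read off the block dichotomy. To begin, for any $u',u''\in Q(u)$ we have $\CC(u')=\CC(u'')=\CC(u)$, so in particular $\CC(u')\subseteq\CC(u'')$, and (PCC) forces $u'$ and $u''$ to be $\preceq_N$-comparable. Since $\preceq_N$ is a partial order and $Q(u)$ is finite, this pairwise comparability yields a linear order $u_1\prec_N u_2\prec_N\cdots\prec_N u_m$, where $m=\vert Q(u)\vert$.

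The heart of the argument is the claim that for every $j<m$ the vertex $u_j$ has $u_{j+1}$ as its \emph{unique} parent. First I would note that $u_{j+1}$ is indeed a parent: since $u_j\prec_N u_{j+1}$ and $\CC(u_j)=\CC(u_{j+1})$, Lemma~\ref{lem:outdeg1} gives $\outdeg(u_{j+1})=1$ and Lemma~\ref{lem:equal-outn} produces a child $w\in\child_N(u_{j+1})$ with $u_j\preceq_N w\prec_N u_{j+1}$ and $\CC(w)=\CC(u)$, so $w\in Q(u)$; consecutiveness of $u_j,u_{j+1}$ in the chain forces $w=u_j$, hence $(u_{j+1},u_j)\in E(N)$. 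For uniqueness I would argue by contradiction: a second parent $p\neq u_{j+1}$ of $u_j$ is $\preceq_N$-incomparable with $u_{j+1}$ because $N$ is shortcut-free (Obs.~\ref{obs:shortcut}); but $u_j\prec_N p$ gives $\CC(u)=\CC(u_j)\subseteq\CC(p)$ by Lemma~\ref{lem:inclusion}, while (PCC) applied to the incomparable pair $p,u_{j+1}$ gives $\CC(u_{j+1})=\CC(u)\not\subseteq\CC(p)$, a contradiction. Thus $\indeg(u_j)=1$ with parent $u_{j+1}$ for all $j<m$. This immediately rules out any non-consecutive arc among the $u_j$, since an arc $(u_a,u_b)$ would make $u_a$ a parent of $u_b$ and hence force $a=b+1$; therefore $Q(u)$ is exactly the vertex set of the induced directed path $u_m\to u_{m-1}\to\cdots\to u_1$, which establishes the first two assertions.

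For the block statement I would assume $m\ge 2$ and exploit that every interior vertex $u_j$ with $2\le j\le m-1$ has $\indeg(u_j)=1$ and, by Lemma~\ref{lem:outdeg1}, $\outdeg(u_j)=1$, so its only incident arcs are $e_{j}\coloneqq(u_{j+1},u_j)$ and $e_{j-1}\coloneqq(u_j,u_{j-1})$. Any undirected cycle through $u_j$ uses exactly two of its incident arcs, and since there are only these two, $u_j$ lies on a cycle precisely when both $e_j$ and $e_{j-1}$ do. Consequently, using Cor.~\ref{cor:block-cycle} together with arc-disjointness of distinct blocks (Obs.~\ref{obs:biConn-arc-disjoint}), at each interior $u_j$ either both incident arcs are cut arcs or both lie in one common non-trivial block. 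Chaining this linkage across $u_2,\dots,u_{m-1}$ (two consecutive common blocks share the arc $e_j$ and hence coincide) shows that all of $e_1,\dots,e_{m-1}$ share one type: either they are all cut arcs, so adjacent pairs of vertices in $Q(u)$ form trivial blocks, or they all lie in a single non-trivial block $B$, whence $Q(u)\subseteq V(B)$. The case $m=2$ is immediate, there being a single arc.

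I expect the block dichotomy to be the main obstacle, specifically ruling out a ``mixed'' configuration in which the path is partly inside a non-trivial block and partly built from cut arcs. The decisive simplification is to prove the indegree-one fact first, as this collapses the analysis at interior vertices to degree-two path vertices (one in-arc, one out-arc), where the ``both arcs together'' alternative is transparent. Attempting the dichotomy directly via Lemma~\ref{lem:block-identity} would instead require a more delicate case analysis on whether the shared vertices are block maxima (a trivial out-block has its tail as its maximum, so the lemma does not apply there), which the indegree-one route circumvents.
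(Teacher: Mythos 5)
Your proof is correct and takes essentially the same route as the paper's: (PCC) gives the linear order, the shortcut-free/(PCC) contradiction shows that non-maximal vertices of $Q(u)$ have a unique parent (the paper phrases this as ``only $\max Q(u)$ can be a hybrid vertex in $Q(u)$''), and the block dichotomy follows from interior vertices having degree two, so any undirected cycle through them must use both incident arcs. The only cosmetic differences are that you invoke Lemma~\ref{lem:outdeg1} for the outdegree-one fact where the paper uses a symmetric parent/child argument, and you make explicit the chaining via Obs.~\ref{obs:biConn-arc-disjoint} that the paper treats as immediate.
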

\begin{proof}
  By (PCC), $\CC(v)=\CC(w)$ and $v\ne w$ implies $v\prec_N w$ or $w\prec_N
  v$ for $v,w\in Q(u)$, i.e., the vertices in $Q(u)$ are pairwise
  $\preceq_N$-comparable and thus linearly ordered w.r.t.\ $\preceq_N$.
  Using Lemma~\ref{lem:equal-outn}, one easily verifies that $Q(u)$ forms a
  directed path in $N$ and there are unique $\preceq_{N}$-minimal and
  $\preceq_{N}$-maximal vertices $\min Q(u)$ and $\max Q(u)$.  In
  particular, since $N$ is acyclic and shortcut-free, this path must be an
  induced subgraph.

  Now suppose $u$ is a hybrid vertex and suppose there is $v\in Q(u)$ with
  $(v,u)\in E(N)$, i.e., $u$ has a parent in $Q(u)$. Then there is
  $v'\in\parent(u)\setminus\{v\}$.  By Lemma~\ref{lem:inclusion},
  $\CC(u)\subseteq \CC(v')$, and thus by (PCC), $v$ and $v'$ are
  $\preceq_N$-comparable. However, since $N$ is shortcut-free, $v$ and $v'$
  are $\preceq_N$-incomparable by Obs.~\ref{obs:shortcut}; a contradiction.
  Thus only $\max Q(u)$ can be a hybrid vertex in $Q(u)$. By similar
  argument, only $\min Q(u)$ can have outdegree greater that one.  Hence,
  all inner vertices in the directed path $P$ formed by $Q(u)$ have degree
  $2$. Therefore, if one arc in $P$ is contained in an undirected cycle,
  then all arcs in $P$ are contained in this cycle, in which case $Q(u)$
  is contained in a non-trivial block. Otherwise $Q(u)$ consists of a
  sequence of consecutive cut-arcs.
\end{proof}
The examples in Fig.~\ref{fig:no-PCC}(A) and (B) show that (PCC) is
necessary in Lemma~\ref{lem:C=path}. We note, moreover, that the
cardinality $\vert Q(u)\vert$ equals the multiplicity of the cluster
$\CC(u)$ in $\mathscr{M}_N$.

\begin{theorem}
  \label{thm:semiregular}
  A network is regular if and only if it is semi-regular and there is no
  vertex with outdegree $1$.
\end{theorem}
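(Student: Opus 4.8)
The plan is to prove both implications by leaning on the characterization of regularity in Prop~\ref{prop:Baroni}, which states that $N$ is regular exactly when (i) $\CC(u)\subseteq\CC(v)\iff u\preceq_N v$ for all $u,v\in V$ and (ii) $N$ is shortcut-free. Thus the whole statement reduces to showing that, among shortcut-free networks, property~(i) is equivalent to the conjunction of (PCC) and the absence of outdegree-$1$ vertices. Shortcut-freeness appears verbatim on both sides (as part of semi-regularity and as~(ii)), so it plays no active role and can be carried along.

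For the forward direction, suppose $N$ is regular. Then Prop~\ref{prop:Baroni} already supplies that $N$ is shortcut-free and satisfies~(i). First I would observe that~(i) immediately yields (PCC): if $u,v$ are $\preceq_N$-comparable, say $u\preceq_N v$, then $\CC(u)\subseteq\CC(v)$ by~(i), and conversely any containment of the two clusters forces comparability by~(i); hence $N$ is semi-regular. Next, (i) forces the map $v\mapsto\CC(v)$ to be injective, since $\CC(u)=\CC(v)$ gives both $u\preceq_N v$ and $v\preceq_N u$, whence $u=v$. If some vertex $v$ had $\outdeg(v)=1$ with unique child $u$, then Obs~\ref{obs:outdeg-1-cluster} would give $\CC(u)=\CC(v)$ with $u\neq v$, contradicting this injectivity. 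So $N$ has no vertex of outdegree~$1$.

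For the converse, suppose $N$ is semi-regular and has no vertex of outdegree~$1$. Since shortcut-freeness is exactly~(ii), it remains to establish~(i). The implication $u\preceq_N v\Rightarrow\CC(u)\subseteq\CC(v)$ is Lemma~\ref{lem:inclusion}, so I only need the reverse. Assuming $\CC(u)\subseteq\CC(v)$, I would split into two cases. If $\CC(u)\subsetneq\CC(v)$, then Obs~\ref{obs:subsetneq-implies-below} (available since $N$ satisfies (PCC)) yields $u\prec_N v$, hence $u\preceq_N v$. If $\CC(u)=\CC(v)$, then by (PCC) the vertices $u,v$ are $\preceq_N$-comparable; assuming $u\neq v$, without loss of generality $u\prec_N v$, and then Lemma~\ref{lem:outdeg1} would force $\outdeg(v)=1$, contradicting the hypothesis. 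Therefore $u=v$, and in particular $u\preceq_N v$. In both cases~(i) holds, so Prop~\ref{prop:Baroni} lets me conclude that $N$ is regular.

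The argument is essentially bookkeeping of earlier results, so I do not anticipate a genuine obstacle; the one place deserving care is the equality case in the converse, where the hypothesis ``no outdegree-$1$ vertex'' must be converted, via Lemma~\ref{lem:outdeg1}, into the statement that no two distinct vertices share a cluster. That lemma is precisely the bridge that ties the combinatorial condition on outdegrees to the injectivity of $v\mapsto\CC(v)$ required for the isomorphism with $\Hasse[\mathscr{C}_N]$.
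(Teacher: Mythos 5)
Your proof is correct and follows essentially the same route as the paper's: both directions reduce to the characterization in Prop.~\ref{prop:Baroni}, with Lemma~\ref{lem:outdeg1} (resp.\ Obs.~\ref{obs:outdeg-1-cluster}) serving as the bridge between outdegree-$1$ vertices and the injectivity of $v\mapsto\CC(v)$. The only cosmetic difference is that you explicitly invoke Obs.~\ref{obs:subsetneq-implies-below} for the strict-inclusion case, which the paper leaves implicit in its appeal to Lemma~\ref{lem:inclusion}.
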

\begin{proof}
  If $N$ is regular, it is in particular also semi-regular, and thus
  satisfies (PCC). Furthermore, then $\CC(u)=\CC(v)$ implies $u=v$ and thus
  there are no two vertices $u,v$ satisfying $\CC(u)=\CC(v)$ and $u\prec_N
  v$. Lemma~\ref{lem:outdeg1} thus implies that there is no vertex with
  outdegree $1$.  Conversely, assume that $N$ is semi-regular (and thus
  shortcut-free) and suppose there is not vertex with outdegree $1$. Then
  Lemma~\ref{lem:outdeg1} implies that there is no pair of vertices $u,v$
  with $\CC(u)=\CC(v)$ and $u\prec_N v$ or $v\prec_N u$, i.e.,
  $\CC(u)=\CC(v)$ implies $u=v$. Therefore and by
  Lemma~\ref{lem:inclusion}, we have $\CC(u)\subseteq \CC(v) \iff
  u\preceq_N v$. By Prop.~\ref{prop:Baroni}, therefore, $N$ is regular.
\end{proof}

By Thm.~\ref{thm:semiregular}, every regular network is semi-regular and thus,
satisfies (PCC). This together with Cor.~\ref{cor:cluster-n-with-strongprec}
and Def.~\ref{def:PCC} implies
\begin{corollary} \label{cor:cluster-n-with-PCC}
  For every clustering system $\mathscr{C}$ there is a network $N$ with
  $\mathscr{C}_N = \mathscr{C}$ that that satisfies (PCC).
\end{corollary}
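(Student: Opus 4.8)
The plan is to reduce the statement entirely to the already-established Cor.~\ref{cor:cluster-n-with-strongprec}, which for any clustering system $\mathscr{C}$ hands us a concrete network $N$ with $\mathscr{C}_N=\mathscr{C}$ together with the strong equivalence $\CC(u)\subseteq\CC(v)\iff u\preceq_N v$ for all $u,v\in V(N)$. Since this equivalence is precisely condition~(i) of Prop.~\ref{prop:Baroni}, the only thing left is the purely logical observation that it entails (PCC) in the sense of Def.~\ref{def:PCC}. So I would take this $N$ as the witness network and verify (PCC) directly on it.

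First I would unfold the definition of $\preceq_N$-comparability: two vertices $u,v$ are comparable exactly when $u\preceq_N v$ or $v\preceq_N u$. For the forward implication of (PCC), assume $u$ and $v$ are $\preceq_N$-comparable; without loss of generality $u\preceq_N v$, whence the equivalence from Cor.~\ref{cor:cluster-n-with-strongprec} gives $\CC(u)\subseteq\CC(v)$, so one cluster contains the other. For the reverse implication, assume $\CC(u)\subseteq\CC(v)$ or $\CC(v)\subseteq\CC(u)$; again without loss of generality $\CC(u)\subseteq\CC(v)$, and the same equivalence yields $u\preceq_N v$, so $u$ and $v$ are comparable. This establishes (PCC) for $N$ and completes the argument.

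There is essentially no obstacle here: the substance sits in Cor.~\ref{cor:cluster-n-with-strongprec} (itself resting on the construction of the regular network in Prop.~\ref{prop:regular-unique} combined with Baroni's characterization in Prop.~\ref{prop:Baroni}), while the present step is a one-line definition chase. If a more structural phrasing is preferred, an equivalent route is to invoke Prop.~\ref{prop:regular-unique} to obtain the unique regular network $N$ with $\mathscr{C}_N=\mathscr{C}$ and then cite Thm.~\ref{thm:semiregular}, by which $N$ is semi-regular and hence satisfies (PCC) by Def.~\ref{def:semi-regular}; both routes deliver the same witness. The only point demanding minor care is the symmetric ``without loss of generality'' handling of the two orderings and of the two subset inclusions, so that neither direction of the (PCC) biconditional is accidentally dropped.
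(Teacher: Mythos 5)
Your proposal is correct and matches the paper's own (one-line) justification, which likewise derives the statement from Cor.~\ref{cor:cluster-n-with-strongprec} together with Def.~\ref{def:PCC}, noting via Thm.~\ref{thm:semiregular} that regular networks are semi-regular and hence satisfy (PCC). Both of your routes — the definition chase from condition~(i) of Prop.~\ref{prop:Baroni} and the regular/semi-regular citation — are exactly the ingredients the paper combines, so there is nothing to add.
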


\begin{proposition}\label{prop:N-regular-someC}
  Let $N$ be a network on $X$.  Then, $N\sim \Hasse[\mathscr{C}]$ for some
  clustering system $\mathscr{C}\subseteq 2^X$ if and only if $N$ is
  regular.
\end{proposition}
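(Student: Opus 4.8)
The plan is to treat the two implications separately; the forward implication is essentially a matter of unwinding definitions, while the converse is where the real work lies, namely showing that regularity is preserved under an arbitrary graph isomorphism.

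For the ``if'' direction, suppose $N$ is regular. Then by Def.~\ref{def:regular-N} the map $v\mapsto\CC_N(v)$ is itself a graph isomorphism between $N$ and $\Hasse[\mathscr{C}_N]$. Since $\mathscr{C}_N\subseteq 2^X$ is a clustering system by Lemma~\ref{lem:C-N}, taking $\mathscr{C}\coloneqq\mathscr{C}_N$ immediately yields $N\sim\Hasse[\mathscr{C}]$, as required.

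For the ``only if'' direction, suppose $\psi\colon V(N)\to V(N')$ is a graph isomorphism, where $N'\coloneqq\Hasse[\mathscr{C}]$. By Prop.~\ref{prop:regular-unique}, $N'$ is regular and hence satisfies conditions (i) and (ii) of Prop.~\ref{prop:Baroni}; my goal is to transfer both conditions back to $N$ along $\psi$ and then invoke Prop.~\ref{prop:Baroni} to conclude that $N$ is regular. The central step is the observation that $\psi$ maps leaves (the outdegree-$0$ vertices) bijectively to leaves and preserves the ancestor relation, so that $x\preceq_N v\iff\psi(x)\preceq_{N'}\psi(v)$ for every leaf $x$, whence $\CC_{N'}(\psi(v))=\{\psi(x)\mid x\in\CC_N(v)\}$ for all $v\in V(N)$. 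Given this cluster-image identity, bijectivity of $\psi$ on leaves immediately yields $\CC_N(u)\subseteq\CC_N(v)\iff\CC_{N'}(\psi(u))\subseteq\CC_{N'}(\psi(v))$ and, separately, $u\preceq_N v\iff\psi(u)\preceq_{N'}\psi(v)$.

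Chaining these two equivalences through condition (i) for $N'$ then gives condition (i) for $N$. For condition (ii) I would note that shortcut-freeness is expressed purely in terms of arcs and the relation $\prec$ (cf.\ Obs.~\ref{obs:shortcut}) and is therefore an isomorphism invariant, so $N$ inherits it from $N'$; Prop.~\ref{prop:Baroni} then delivers regularity of $N$. The main obstacle—and exactly the point flagged in the discussion following Def.~\ref{def:regular-N}—is that $\psi$ need not fix the leaves, so one cannot assume $\psi(v)=\CC_N(v)$. The cluster-image identity $\CC_{N'}(\psi(v))=\{\psi(x)\mid x\in\CC_N(v)\}$ is precisely what repairs this gap and allows the characterization of Prop.~\ref{prop:Baroni} to be pushed across an arbitrary graph isomorphism.
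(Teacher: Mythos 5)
Your proof is correct and follows essentially the same route as the paper: in both, the direction from regularity is definitional (take $\mathscr{C}=\mathscr{C}_N$), and the converse transfers isomorphism-invariant characterizing properties of $\Hasse[\mathscr{C}]$ back to $N$. The only difference is cosmetic—you invoke Prop.~\ref{prop:Baroni} directly, transferring its condition~(i) together with shortcut-freeness, whereas the paper transfers (PCC), shortcut-freeness, and the absence of outdegree-$1$ vertices and then applies Thm.~\ref{thm:semiregular}; your explicit cluster-image identity $\CC_{N'}(\psi(v))=\{\psi(x)\mid x\in\CC_N(v)\}$ merely spells out the invariance argument that the paper leaves implicit.
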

\begin{proof}
  Assume first that $N\sim \Hasse[\mathscr{C}]$.  In this case, $N$ is
  shortcut-free, satisfies (PCC), and has no vertex with outdegree $1$
  (since $\Hasse[\mathscr{C}]$ has these properties). By
  Thm.~\ref{thm:semiregular}, $N$ is regular.  Assume now that $N$ is
  regular. By Def.~\ref{def:regular-N}, $N\sim
  \Hasse[\mathscr{C_N}]$. Thus, we can put $\mathscr{C} = \mathscr{C}_N$
  and obtain $N\sim \Hasse[\mathscr{C}]$.
\end{proof}

It should be noted that $N\sim \Hasse[\mathscr{C}]$ does not necessarily
imply that $\mathscr{C}_N = \mathscr{C}$. By way of example, consider a
binary phylogenetic rooted tree $T$ on $X$ with $\mathscr{C}_T = \{\{1\},
\{2\}, \{3\}, \{1,2\}, \{1,2,3\}\}$ and the clustering system $\mathscr{C}
= \{\{1\}, \{2\}, \{3\}, \{2,3\}, \{1,2,3\}\}$. It can easily be verified
that $T\sim \Hasse[\mathscr{C}]$ although $\mathscr{C}_T\neq \mathscr{C}$.
Nevertheless, Prop.\ \ref{prop:N-regular-someC} together with
Def.~\ref{def:regular-N} immediately implies

\begin{corollary}\label{cor:hasse-iso}
  $N\sim \Hasse[\mathscr{C}_N]$ if and only if $N$ is regular.
\end{corollary}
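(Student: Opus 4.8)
The plan is to obtain the corollary as a direct specialization of Prop.~\ref{prop:N-regular-someC}, using only that $\mathscr{C}_N$ is itself a clustering system. There are two implications, and neither requires new work beyond what is already established.

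For the ``if'' direction I would argue as follows. Suppose $N$ is regular. Then by Def.~\ref{def:regular-N} the canonical map $\varphi\colon v\mapsto\CC(v)$ is \emph{already} a graph isomorphism between $N$ and $\Hasse[\mathscr{C}_N]$. In particular $N\sim\Hasse[\mathscr{C}_N]$, so this direction is immediate from the definition and needs no further justification.

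For the ``only if'' direction, suppose $N\sim\Hasse[\mathscr{C}_N]$. By Lemma~\ref{lem:C-N}, $\mathscr{C}_N$ is a clustering system, and trivially $\mathscr{C}_N\subseteq 2^X$. Hence $N\sim\Hasse[\mathscr{C}]$ holds for the particular choice $\mathscr{C}=\mathscr{C}_N$, and Prop.~\ref{prop:N-regular-someC} then yields at once that $N$ is regular. This is the whole content: the corollary is the instance $\mathscr{C}\mapsto\mathscr{C}_N$ of that proposition.

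The point worth flagging is not an obstacle but a subtlety that has \emph{already been absorbed} into Prop.~\ref{prop:N-regular-someC}. A graph isomorphism witnessing $N\sim\Hasse[\mathscr{C}_N]$ need not coincide with the canonical map $v\mapsto\CC(v)$ demanded by Def.~\ref{def:regular-N}; indeed, as noted just before the corollary, $N\sim\Hasse[\mathscr{C}]$ does not even force $\mathscr{C}=\mathscr{C}_N$. Reconciling an arbitrary isomorphism with the prescribed map is exactly what the proof of Prop.~\ref{prop:N-regular-someC} handles: from $N\sim\Hasse[\mathscr{C}]$ one transports shortcut-freeness, (PCC), and the absence of outdegree-$1$ vertices (all preserved under graph isomorphism, since $\Hasse[\mathscr{C}]$ enjoys them) back to $N$, and then invokes Thm.~\ref{thm:semiregular} to conclude regularity. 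Consequently, once Prop.~\ref{prop:N-regular-someC} is available, the present statement is a one-line deduction, and I would keep the proof correspondingly short.
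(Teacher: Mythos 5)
Your proof is correct and takes exactly the paper's route: the paper derives the corollary as an immediate consequence of Prop.~\ref{prop:N-regular-someC} (instantiated with $\mathscr{C}=\mathscr{C}_N$, legitimate since $\mathscr{C}_N$ is a clustering system by Lemma~\ref{lem:C-N}) together with Def.~\ref{def:regular-N} for the converse direction. Your added remark that the arbitrary-isomorphism-versus-canonical-map subtlety is already absorbed into the proof of Prop.~\ref{prop:N-regular-someC} matches the paper's discussion preceding the corollary.
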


A crucial link between a network and its clustering systems is the ability
to identify the non-trivial blocks. The following result shows that, at
least in semi-regular networks, key information is provided by the overlaps
of clusters.
\begin{lemma}
\label{lem:semiregular-overlap}
  Let $B$ be a non-trivial block in a semi-regular network $N$.  Then for
  every $u\in B^0$ there is a $v\in B^0$ such that $\CC(u)$ and $\CC(v)$
  overlap.
\end{lemma}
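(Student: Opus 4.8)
The plan is to reduce the statement to finding, for a given $u\in B^0$, a vertex $v\in V(B)$ that is $\preceq_N$-incomparable to $u$ and shares a descendant leaf with $u$. Indeed, by (PCC) two vertices are $\preceq_N$-incomparable precisely when their clusters are inclusion-incomparable; hence if $u$ and $v$ are incomparable and $\CC(u)\cap\CC(v)\neq\emptyset$, then $\CC(u)\cap\CC(v)\notin\{\emptyset,\CC(u),\CC(v)\}$, i.e.\ the two clusters overlap. Once such a $v$ is found, Lemma~\ref{lem:overlap-B0} places $u$ and $v$ in the interior of a common non-trivial block $\tilde B$, and since $u\in B^0\cap\tilde B^0$ is distinct from $\max B$ and $\max\tilde B$, Lemma~\ref{lem:block-identity} forces $\tilde B=B$, so $v\in B^0$ as required.

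To construct $v$, I would first locate a parent $a$ and a child $c$ of $u$ lying in $B$. Since $u\neq\max B$, Lemma~\ref{lem:max-B-unique} gives a directed path from $\max B$ to $u$ inside $B$, whose last arc supplies a parent $a\in V(B)$ with $u\prec_N a$; since $u$ is not terminal, it is not $\preceq_N$-minimal in $B$, so it has a descendant in $B$, and Lemma~\ref{lem:block-prec-sandwich} keeps the corresponding directed path inside $B$, supplying a child $c\in V(B)$ with $c\prec_N u$. As $B$ is biconnected, $B-u$ is connected, so there is an undirected path $P=(c=d_0,d_1,\dots,d_\ell=a)$ in $B$ avoiding $u$.

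The key observation, and the heart of the argument, is that shortcut-freeness forbids any arc joining a strict descendant of $u$ to a strict ancestor of $u$: if $x\prec_N u\prec_N y$, then an arc between $x$ and $y$ must be $(y,x)$ (the orientation $(x,y)$ would give $y\prec_N x$, contradicting $x\prec_N u\prec_N y$ and acyclicity), and $(y,x)$ is a shortcut witnessed by $u$. Now $P$ starts at the descendant $c\prec_N u$ and ends at the ancestor $a\succ_N u$. Walking along $P$ from $c$, let $v\coloneqq d_j$ be the first vertex that is not a strict descendant of $u$. Its predecessor satisfies $d_{j-1}\prec_N u$; by the key observation $v$ cannot be an ancestor of $u$, so $v$ is incomparable to $u$; and the arc between $d_{j-1}$ and $v$ cannot be $(d_{j-1},v)$ (that would force $v\prec_N d_{j-1}\prec_N u$, contradicting incomparability), hence it is $(v,d_{j-1})$, giving $d_{j-1}\prec_N v$. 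Thus $d_{j-1}\prec_N u$ and $d_{j-1}\prec_N v$, so $\emptyset\neq\CC(d_{j-1})\subseteq\CC(u)\cap\CC(v)$, and together with the incomparability of $u$ and $v$ this yields the desired overlap.

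The main obstacle is not producing an incomparable vertex — an ``intermediate value'' argument along $P$, using the shortcut-free observation to rule out direct descendant/ancestor transitions, yields one easily — but rather ensuring that the vertex found genuinely shares a descendant with $u$. This is precisely why I would traverse $P$ from the descendant side $c$ and take the \emph{first} non-descendant vertex: its immediate predecessor is then automatically a common descendant of $u$ and $v$, which is exactly what makes the intersection nonempty and upgrades incomparability to overlap.
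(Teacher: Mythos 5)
Your proof is correct and follows essentially the same route as the paper's: both exploit shortcut-freeness together with acyclicity to rule out arcs between strict descendants and strict ancestors of $u$, then walk along an undirected path in $B$ that avoids $u$ from a descendant of $u$ to an ancestor of $u$, and take the transition arc, whose endpoints provide the $\preceq_N$-incomparable vertex $v$ and a common descendant of $u$ and $v$ that, via (PCC), upgrades the nonempty intersection to an overlap. The only cosmetic differences are that the paper obtains the $u$-avoiding path from an undirected cycle through the chosen ancestor/descendant pair rather than from connectivity of $B-u$, and it concludes $v\in B^0$ directly ($v\ne \max B$ by incomparability, and $v$ is not terminal because it has a descendant in $B$) instead of invoking Lemma~\ref{lem:overlap-B0} together with Lemma~\ref{lem:block-identity}.
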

\begin{proof}
  Suppose $u\in B^0$ and consider the two disjoint sets $A\coloneqq\{w\in
  V(B) \mid u\prec_{N} w\}$ and $D\coloneqq\{w\in V(B) \mid w\prec_{N}
  u\}$, i.e., the ancestors and descendants, resp., of $u$ in $B$.  Note
  that both sets are non-empty since $u\in B^0$.  There is no arc
  connecting a vertex in $A$ with a vertex in $D$. To see this,
  consider $a\in A$ and $d\in D$. Since $d\prec_{N} u \prec_{N} a$ and $N$
  is acyclic, there is a directed path $P$ from $a$ to $d$ passing through
  $u$.  Thus an arc $(a,d)$ would be a shortcut, contradicting that $N$ is
  semi-regular, and an arc $(d,a)$ would imply $a\prec_{N} d$
  contradicting $d\prec_{N} u \prec_{N} a$.  Since $a,d\in V(B)$ and $B$ is
  a non-trivial block, $a$ and $d$ lie on an undirected cycle $K$ in $B$.
  In particular, they are connected by two undirected paths that do not
  share any inner vertices. Thus there is an undirected path $P=(d=v_1,
  v_2,\dots,a=v_k)$ that does not contain $u$.  Let $v_i$ be the unique
  vertex in $P$ such that $v_i\in D$ and there is no vertex $v_j\in D$ with
  $i<j\le k$. Such a vertex exists since $v_1\in D$. Moreover,
  $v_k=a\in A$ implies $i< k$.  Thus consider the vertex $v\coloneqq
  v_{i+1}$. We have $v\notin D$ by construction and $v\notin A$ since
  $v_i\in D$ is not adjacent to any vertex in $A$. Since $P$ does not contain
  $u$ and $v\in V(B)\setminus (A\cupdot D)$, we see that $u$ and $v$ are
  $\preceq_{N}$-incomparable.  Since $v_i\in D$, we
  have $v_i\prec_N u$. Hence, $(v_i,v_{i+1})$ cannot be an arc in $N$
  since otherwise $v_{i+1}\prec_{N} v_i\prec_N u$ would imply $v_{i+1}\in
  D$; a contradiction. Therefore, it have $(v, v_i)=(v_{i+1}, v_i)\in E(N)$
  and thus $v_i\prec_{N} u,v$.  By Lemma~\ref{lem:inclusion}, this implies
  $\emptyset\ne \CC_{N}(v_i)\subseteq \CC_{N}(u)\cap\CC_{N}(v)$.  Together
  with (PCC) and the fact that $u$ and $v$ are $\preceq_{N}$-incomparable,
  this yields that $\CC(u)$ and $\CC(v)$ overlap.  In particular, $v\ne
  \max B$ since $u$ and $v$ are $\preceq_{N}$-incomparable and $v$ is not a
  terminal vertex since $v_{i}\prec_{N} v$.  Therefore, we have $v\in B^0$.
\end{proof}
We note that semi-regularity cannot be omitted in
Lemma~\ref{lem:semiregular-overlap}, since the statement is not true for
the rooted $K_{3,3}$ of Fig.~\ref{fig:no-PCC}.  Lemma~\ref{lem:overlap-B0}
and Lemma~\ref{lem:semiregular-overlap} together show that in semi-regular
networks all vertices in the interior of non-trivial blocks are identified
by the fact their clusters overlap. It remains an open question, however,
whether the information of cluster overlaps is sufficient to identify the
non-trivial blocks.

We continue by showing that, whenever (PCC) is satisfied, least-resolved
networks are precisely the regular network, To this end, we consider first
the implications given by the operations $\expand$ and $\contract$, and by
the removal of short-cuts, respectively.
\begin{lemma}
  \label{lem:expand-PCC}
  Let $N$ be a network, $w\in V(N)$, and $N'$ the network obtained from $N$
  by applying $\expand(w)$. Then $N$ satisfies (PCC) if and only if $N'$
  satisfies (PCC).
\end{lemma}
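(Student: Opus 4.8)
The plan is to invoke Lemma~\ref{lem:expand} and then show that the vertex $w'$ created by $\expand(w)$ is interchangeable with $w$ as far as comparability and cluster containment are concerned, so that (PCC) for pairs involving $w'$ reduces to (PCC) for the corresponding pair involving $w$. Write $w'$ for the new vertex, so that $V(N')=V(N)\cupdot\{w'\}$, $\parent_{N'}(w')=\parent_N(w)$, and $w$ is the unique child of $w'$. By Lemma~\ref{lem:expand}, $N'$ is a network with $\CC_{N'}(v)=\CC_N(v)$ for all $v\in V(N)$, and $N$ and $N'$ are $(N',N)$-ancestor-preserving, i.e.\ $v\preceq_{N'}v'\iff v\preceq_N v'$ for all $v,v'\in V(N)$. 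Thus both the partial order and the clusters agree on the common vertex set $V(N)$. The backward implication is then immediate: if $N'$ satisfies (PCC), then for any $a,b\in V(N)\subseteq V(N')$ the comparability of $a,b$ and the containment of their clusters in $N$ coincide with those in $N'$, so (PCC) is inherited by $N$.

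For the forward implication I would first record how $w'$ relates to the old vertices. Since $w$ is the unique child of $w'$, Obs.~\ref{obs:outdeg-1-cluster} gives $\CC_{N'}(w')=\CC_{N'}(w)=\CC_N(w)$. The crucial claim is that, for every $z\in V(N)$, the vertices $w'$ and $z$ are $\preceq_{N'}$-comparable if and only if $w$ and $z$ are $\preceq_N$-comparable. To establish it, note that the proper ancestors of $w'$ in $N'$ are exactly the ancestors of the vertices in $\parent_{N'}(w')=\parent_N(w)$; since ancestry on $V(N)$ is preserved, these are precisely the proper ancestors of $w$ in $N$, whence $w'\prec_{N'}z\iff w\prec_N z$ for $z\in V(N)$. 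On the descendant side, as $w$ is the unique child of $w'$, every directed path from $w'$ into $V(N)$ factors through $w$, so $z\prec_{N'}w'\iff z\preceq_{N'}w\iff z\preceq_N w$. Combining the two gives: $w'$ and $z$ are $\preceq_{N'}$-comparable iff $w\prec_N z$ or $z\preceq_N w$, i.e.\ iff $w$ and $z$ are $\preceq_N$-comparable (the case $z=w$ being absorbed into $z\preceq_N w$).

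With this in hand the forward direction is bookkeeping. Assuming $N$ satisfies (PCC), take $a,b\in V(N')$. If $a,b\in V(N)$, the agreement of $\preceq$ and of clusters on $V(N)$ reduces (PCC) for $(a,b)$ in $N'$ to (PCC) for $(a,b)$ in $N$. If $a=w'$ and $b=z\in V(N)$ (the case $a=b=w'$ being trivial, as both conditions hold), then the claim gives comparability of $w',z$ in $N'$ iff comparability of $w,z$ in $N$, while $\CC_{N'}(w')=\CC_N(w)$ and $\CC_{N'}(z)=\CC_N(z)$ show that one of $\CC_{N'}(w'),\CC_{N'}(z)$ contains the other iff one of $\CC_N(w),\CC_N(z)$ does; applying (PCC) in $N$ to the pair $(w,z)$ then yields the required equivalence for $(w',z)$. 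Hence $N'$ satisfies (PCC). I expect the only genuinely delicate point to be the comparability claim for $w'$, and within it the descendant argument: one must verify that $z\prec_{N'}w'$ is equivalent to $z\preceq_N w$ rather than merely $z\prec_N w$, so that the boundary case $z=w$ is handled correctly. Everything else rests directly on Lemma~\ref{lem:expand}.
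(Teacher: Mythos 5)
Your proof is correct and follows essentially the same route as the paper's: both reduce (PCC) for pairs involving the new vertex $w'$ to (PCC) for the corresponding pair with $w$, using Lemma~\ref{lem:expand} (ancestor-preservation and cluster agreement on $V(N)$), the identity $\CC_{N'}(w')=\CC_{N'}(w)$ from Obs.~\ref{obs:outdeg-1-cluster}, and the fact that $w'$ has unique child $w$ and $w$ has unique parent $w'$ in $N'$ to establish the comparability equivalence. The only cosmetic difference is that you state the comparability claim as an equivalence between $\preceq_{N'}$-comparability of $(w',z)$ and $\preceq_N$-comparability of $(w,z)$, while the paper first passes through $\preceq_{N'}$-comparability of $(w,z)$; the underlying argument is identical.
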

\begin{proof}
  By Lemma~\ref{lem:expand}, $N$ and $N'$ are $(N',N)$-ancestor-preserving,
  i.e., $v\preceq_{N} v'$ if and only if $v\preceq_{N} v'$ holds for all
  $v,v'\in V(N)$.  By Lemma~\ref{lem:expand},
  $\CC_N(v)=\CC_{N'}(v)$ for all $v\in V(N)\subseteq V(N')$.  Let $w'$ be
  the unique vertex in $V(N')\setminus V(N)$. By construction, $w$ is the
  unique child of $w'$ in $N'$ and thus, by
  Obs.~\ref{obs:outdeg-1-cluster}, $\CC_{N}(w)=\CC_{N'}(w)=\CC_{N'}(w')$.

  Suppose first $N'$ satisfies (PCC), i.e., for all $u, v\in V(N')$, it
  holds that $u$ and $v$ are $\preceq_{N'}$-comparable if and only if
  $\CC_{N'}(u)\subseteq \CC_{N'}(v)$ or $\CC_{N'}(v)\subseteq \CC_{N'}(u)$.
  To see that $N$ also satisfies (PCC), consider $u, v\in V(N)\subseteq
  V(N')$.  Suppose $u$ and $v$ are $\preceq_{N}$-comparable. Hence, $u$ and
  $v$ are also $\preceq_{N'}$-comparable, and thus
  $\CC_{N}(u)=\CC_{N'}(u)\subseteq \CC_{N'}(v)=\CC_{N}(v)$ or
  $\CC_{N}(v)=\CC(v)_{N'}\subseteq C_{N'}(u)=\CC_{N}(u)$.  Conversely, if
  $\CC_{N}(u)\subseteq \CC_{N}(v)$ or $\CC_{N}(v)\subseteq \CC_{N}(u)$,
  then also $\CC_{N'}(u)\subseteq \CC_{N'}(v)$ or $\CC_{N'}(v)\subseteq
  \CC_{N'}(u)$. Hence, $u$ and $v$ are $\preceq_{N'}$-comparable and thus
  also $\preceq_{N}$-comparable.

  Now suppose $N$ satisfies (PCC).  By similar argument as above, it holds,
  for all $u, v\in V(N)=V(N')\setminus \{w'\}$ that $u$ and $v$ are
  $\preceq_N$-comparable if and only if $\CC_N(u)\subseteq \CC_N(v)$ or
  $\CC_N(v)\subseteq \CC_N(u)$.  To show that $N'$ satisfies (PCC), it
  therefore only remains to consider $w'$ and some vertex $v\in V(N)$.  It
  holds that $v$ and $w'$ are $\preceq_{N'}$-comparable if and only if $v$
  and $w$ are $\preceq_{N'}$-comparable.  To see this, suppose first $v$
  and $w'$ are $\preceq_{N'}$-comparable. If $v\prec_{N'} w'$, then
  $v\preceq_{N'} w$ since $w$ is the unique child of $w'$. If $w'\prec_{N'}
  v$, then $w\prec_{N'} w'$ implies $w\prec_{N'} v$.  Now suppose $v$ and
  $w$ are $\preceq_{N'}$-comparable.  If $v\preceq_{N'} w$, then
  $w\prec_{N'} w'$ implies $v\prec_{N'} w'$.  If $w\prec_{N'} v$, then
  $w'\preceq_{N'} v$ (and thus $w'\prec_{N'} v$) since $w'$ is the unique
  parent of $w$ in $N'$.  Taken together and since $v,w\in V(N)$, the
  arguments so far imply that $v$ and $w'$ are $\preceq_{N'}$-comparable if
  and only if $v$ and $w$ are $\preceq_{N'}$-comparable if and only if
  $\CC_{N'}(v)\subseteq \CC_{N'}(w)=\CC_{N'}(w')$ or
  $\CC_{N'}(w')=\CC_{N'}(w)\subseteq \CC_{N'}(v)$ In summary, therefore,
  $N'$ satisfies (PCC).
\end{proof}

Cor.~\ref{cor:expand-shortcuts} and Lemma~\ref{lem:expand-PCC} imply that
semi-regularity is preserved by $\expand$ applied on arbitrary vertices and
$\contract(u,w)$ applied to arcs $(u,w)$ where $\outdeg(u)=1$.
\begin{corollary}
  \label{cor:expand-semiregular}
  Let $N$ be a network, $w\in V(N)$, and $N'$ the network obtained from $N$
  by applying $\expand(w)$.
  Then $N$ is semi-regular if and only if $N'$ is semi-regular.
\end{corollary}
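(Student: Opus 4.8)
The plan is to unwind the definition of semi-regularity into its two constituent conditions and then invoke the two equivalences that have just been established. By Def.~\ref{def:semi-regular}, a network is semi-regular precisely when it is simultaneously shortcut-free and satisfies (PCC). Hence the claimed biconditional splits into two independent pieces, each of which is already available: the shortcut-free part is handled by Cor.~\ref{cor:expand-shortcuts}, and the (PCC) part by Lemma~\ref{lem:expand-PCC}.

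Concretely, I would argue by a chain of logical equivalences. First, $N$ is semi-regular if and only if $N$ is shortcut-free and satisfies (PCC). Next, Cor.~\ref{cor:expand-shortcuts} gives that $N$ is shortcut-free if and only if $N'$ is shortcut-free, while Lemma~\ref{lem:expand-PCC} gives that $N$ satisfies (PCC) if and only if $N'$ satisfies (PCC); conjoining these two biconditionals shows that ``$N$ is shortcut-free and satisfies (PCC)'' holds if and only if ``$N'$ is shortcut-free and satisfies (PCC)'' holds. Finally, applying Def.~\ref{def:semi-regular} once more, now to $N'$, identifies the latter conjunction with $N'$ being semi-regular. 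Reading the chain from end to end proves the corollary in both directions simultaneously.

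There is essentially no obstacle here: since each of the two defining conditions of semi-regularity is preserved in both directions under $\expand$, and since these preservation statements have already been proved as the separate results cited above, the corollary is a purely formal consequence requiring no new combinatorial work. The only point worth emphasizing is that the conjunction of two biconditionals is again a biconditional, so no case distinction on the direction of the implication is needed; the single chain of ``if and only if'' steps covers both implications at once.
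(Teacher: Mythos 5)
Your proof is correct and matches the paper's own argument exactly: the paper derives Cor.~\ref{cor:expand-semiregular} as an immediate consequence of Cor.~\ref{cor:expand-shortcuts} and Lemma~\ref{lem:expand-PCC}, combined via Def.~\ref{def:semi-regular}, precisely as you do. No further commentary is needed.
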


\begin{proposition}
  \label{prop:edit-N-to-Hasse}
  Let $N$ be a network satisfying (PCC).  The unique regular network
  $\Hasse[\mathscr{C}_N]$ is obtained from $N$ by repeatedly executing one of
  the following operations until neither of them is possible:
  \begin{description}
  \item[(1)] remove a shortcut $(u,w)$, and
  \item[(2)] apply $\contract(u,w)$ for an arc $(u,w)$ with
    $\outdeg(u)=1$.
  \end{description}
\end{proposition}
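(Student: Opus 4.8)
The plan is to track three invariants along the reduction and then invoke the characterization of regular networks. First I would establish termination by exhibiting a strictly decreasing monovariant. Operation~(1) deletes a single arc while leaving the vertex set intact (Lemma~\ref{lem:shortcut-deletion}), and operation~(2) deletes the vertex $u$ together with its unique out-arc (Lemma~\ref{lem:outdeg-1-contraction}); in either case $\vert V(N)\vert+\vert E(N)\vert$ strictly decreases. As this is a non-negative integer, only finitely many operations can be applied, so the process halts at some network $\hat N$.

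Next I would show that both $\mathscr{C}_N$ and property (PCC) are preserved at every step. Preservation of the clustering system is immediate: Lemma~\ref{lem:shortcut-deletion} handles operation~(1), and Lemma~\ref{lem:outdeg-1-contraction} handles operation~(2) (an arc $(u,w)$ with $\outdeg(u)=1$ is never a shortcut, so $\contract(u,w)$ is well-defined and yields a network with leaf set $X$). For (PCC), both lemmas additionally guarantee that the operation is $(N,N')$-ancestor-preserving on the surviving vertex set, i.e.\ $\preceq$-comparability between any two retained vertices is unchanged; since the clusters of the retained vertices are also unchanged, the equivalence in Def.~\ref{def:PCC} transfers verbatim from $N$ to the new network. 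By induction, every intermediate network---and in particular $\hat N$---satisfies (PCC) and has clustering system $\mathscr{C}_N$.

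I would then read off the stopping conditions. Inapplicability of operation~(1) means $\hat N$ is shortcut-free; inapplicability of operation~(2) means $\hat N$ has no vertex of outdegree $1$, since such a vertex would supply a contractible out-arc. Hence $\hat N$ is shortcut-free and satisfies (PCC), i.e.\ it is semi-regular (Def.~\ref{def:semi-regular}), and it has no outdegree-$1$ vertex. Theorem~\ref{thm:semiregular} then yields that $\hat N$ is regular. Finally, because $\mathscr{C}_{\hat N}=\mathscr{C}_N$ and $\hat N$ is regular, Proposition~\ref{prop:regular-unique} forces $\hat N\simeq\Hasse[\mathscr{C}_N]$, the unique regular network realizing $\mathscr{C}_N$; in particular the outcome does not depend on the order in which the operations are carried out.

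The only delicate point is the preservation of (PCC) under operation~(2): since contraction deletes a vertex, one must verify that the defining equivalence persists for \emph{all} remaining pairs rather than merely for those not involving the deleted vertex $u$. This is exactly what the ancestor-preserving conclusion of Lemma~\ref{lem:outdeg-1-contraction}, together with equality of the retained clusters, supplies, so I expect this transfer to be the main obstacle; everything else is routine bookkeeping atop the cited lemmas and Theorem~\ref{thm:semiregular}.
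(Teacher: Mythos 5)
Your proof is correct and follows essentially the same route as the paper's: preserve the clustering system and $\preceq$-comparability of surviving vertices via Lemmas~\ref{lem:shortcut-deletion} and~\ref{lem:outdeg-1-contraction}, transfer (PCC) through these invariants, read off shortcut-freeness and the absence of outdegree-$1$ vertices from the stopping conditions, and conclude regularity via Thm.~\ref{thm:semiregular} and uniqueness via Prop.~\ref{prop:regular-unique}. Your explicit termination monovariant $\vert V\vert+\vert E\vert$ and the observation that an outdegree-$1$ arc can never be a shortcut are minor additions of rigor that the paper leaves implicit.
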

\begin{proof}
  Let $N'$ be the network obtained by applying one of the following
  operations until neither of them is possible.  By construction, we have
  $V(N')\subseteq V(N)$.  We can repeatedly apply
  Lemma~\ref{lem:shortcut-deletion} and \ref{lem:outdeg-1-contraction}
  to conclude that $N'$ is a network with leaf set $X$ and that, for all
  $v,v'\in V(N')$, it holds $v\preceq_{N} v'$ if and only if $v\preceq_{N'}
  v'$.  Similarly, Lemma~\ref{lem:shortcut-deletion} and
  \ref{lem:outdeg-1-contraction} imply that $\CC_N(v)=\CC_{N'}(v)$
  holds for all $v\in V(N')$ and, in particular,
  $\mathscr{C}_N=\mathscr{C}_{N'}$.

  By assumption, $N$ satisfies (PCC), i.e., for all $u, v\in V(N)$, it
  holds that $u$ and $v$ are $\preceq_N$-comparable if and only if
  $\CC_N(u)\subseteq \CC_N(v)$ or $\CC_N(v)\subseteq \CC_N(u)$.  To see
  that $N'$ also satisfies (PCC), consider $u, v\in V(N')$.  Suppose $u$
  and $v$ are $\preceq_{N'}$-comparable. Hence, $u$ and $v$ are also
  $\preceq_{N}$-comparable, and thus $\CC_{N'}(u)=\CC_N(u)\subseteq
  \CC_N(v)=\CC_{N'}(v)$ or $\CC_{N'}(v)=\CC_N(v)\subseteq
  \CC_N(u)=\CC_{N'}(u)$.  Conversely, if $\CC_{N'}(u)\subseteq \CC_{N'}(v)$
  or $\CC_{N'}(v)\subseteq \CC_{N'}(u)$, then also $\CC_{N}(u)\subseteq
  \CC_{N}(v)$ or $\CC_{N}(v)\subseteq \CC_{N}(u)$. Hence, $u$ and $v$ are
  $\preceq_N$-comparable and thus also $\preceq_{N'}$-comparable.
  Therefore, $N'$ satisfies (PCC).  Since moreover $N'$ is shortcut-free by
  construction, $N'$ is semi-regular.  This together with
  Thm.~\ref{thm:semiregular} and the fact that, by construction, there is
  no vertex $v$ with $\outdeg_{N'}(v)=1$ implies that $N'$ is regular.
  This, together with $\mathscr{C}_N=\mathscr{C}_{N'}$, implies that $N'$
  is the unique regular network $\Hasse[\mathscr{C}]$.
\end{proof}

\begin{proposition}
  \label{prop:lr-and-PCC->regular}
  Let $N$ be a least-resolved network satisfying (PCC). Then $N$ is the
  unique regular network for $\mathscr{C}_N$.
\end{proposition}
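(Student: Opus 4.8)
The plan is to combine the structural restrictions forced by least-resolvedness with the characterization of regular networks among semi-regular ones, so that the proof reduces to chaining three already-established results. First I would invoke Corollary~\ref{cor:lrN-sf-out1}, which records that every least-resolved network is shortcut-free and contains no vertex of outdegree~$1$. This is precisely the payoff of the contraction and shortcut-removal lemmas, and it immediately eliminates the two features that any nontrivial reduction step would have to target; note that an arc $(u,w)$ with $\outdeg(u)=1$ is automatically not a shortcut, so such contractions are exactly the forbidden moves in Definition~\ref{def:least-resolved}.

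Given that $N$ is shortcut-free by the previous step and satisfies (PCC) by hypothesis, $N$ is semi-regular directly from Definition~\ref{def:semi-regular}. I would then apply Theorem~\ref{thm:semiregular}: a semi-regular network with no outdegree-$1$ vertex is regular. This pins down $N$ as a regular network whose clustering system is $\mathscr{C}_N$. Finally, to obtain the word ``unique'', I would quote Proposition~\ref{prop:regular-unique}, which guarantees that a given clustering system is realized by exactly one regular network; since $N$ is itself regular with clustering system $\mathscr{C}_N$, it must coincide (up to the leaf-preserving isomorphism $\simeq$) with that unique regular network, namely $\Hasse[\mathscr{C}_N]$.

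I do not expect a genuine obstacle, since essentially all the work has been front-loaded into Corollary~\ref{cor:lrN-sf-out1} and Theorem~\ref{thm:semiregular}. The only point needing a little care is the logical bookkeeping, namely verifying that the hypothesis (PCC) is exactly the missing ingredient that upgrades ``shortcut-free'' to ``semi-regular'', and that Theorem~\ref{thm:semiregular} then supplies regularity.

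As an alternative derivation that leans on Proposition~\ref{prop:edit-N-to-Hasse} rather than Corollary~\ref{cor:lrN-sf-out1}, one could argue by contradiction: that proposition produces $\Hasse[\mathscr{C}_N]$ from $N$ by a sequence of shortcut removals and contractions of arcs $(u,w)$ with $\outdeg(u)=1$, where each such arc is a non-shortcut and each operation preserves $\mathscr{C}_N$ by Lemmas~\ref{lem:shortcut-deletion} and~\ref{lem:outdeg-1-contraction}. Were this sequence nonempty, it would exhibit a network with strictly fewer vertices and the same clustering system, contradicting that $N$ is least-resolved; hence the sequence is empty, so $N=\Hasse[\mathscr{C}_N]$ and $N$ is the (unique) regular network for $\mathscr{C}_N$.
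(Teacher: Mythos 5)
Your main argument is exactly the paper's proof: Cor.~\ref{cor:lrN-sf-out1} gives shortcut-freeness and the absence of outdegree-$1$ vertices, (PCC) then makes $N$ semi-regular by Def.~\ref{def:semi-regular}, Thm.~\ref{thm:semiregular} yields regularity, and Prop.~\ref{prop:regular-unique} supplies uniqueness. Your alternative route via Prop.~\ref{prop:edit-N-to-Hasse} is also sound, except that the justification ``strictly fewer vertices'' is not quite right (shortcut removal deletes arcs, not vertices); the contradiction instead follows directly from Def.~\ref{def:least-resolved}, which forbids \emph{any} non-empty sequence of these operations that preserves $\mathscr{C}_N$.
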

\begin{proof}
  By Cor.~\ref{cor:lrN-sf-out1}, $N$ is shortcut-free (and thus, 
  semi-regular) and contains no vertex with outdegree $1$.
  By Thm.~\ref{thm:semiregular}, therefore, $N$ is regular.
\end{proof}
As a consequence of 
Cor.~\ref{cor:regular-least-resolved}, Thm.~\ref{thm:semiregular}, and
Prop.~\ref{prop:lr-and-PCC->regular}, we obtain
\begin{theorem}
  \label{thm:PCC-reg-iff-lr}
  A network $N$ is regular if and only if $N$ is least-resolved and satisfies 
  (PCC).
\end{theorem}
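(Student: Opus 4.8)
The plan is to assemble the biconditional directly from the three results already established, handling the two implications separately. Neither direction requires new arguments; the work is entirely in checking that the hypotheses of the cited statements match up.

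For the forward direction, suppose $N$ is regular. First I would invoke Cor.~\ref{cor:regular-least-resolved}, which asserts that every regular network is least-resolved, giving at once that $N$ is least-resolved. To obtain (PCC), I would appeal to Thm.~\ref{thm:semiregular}: since that theorem characterizes regular networks as exactly the semi-regular networks without outdegree-$1$ vertices, regularity of $N$ implies in particular that $N$ is semi-regular. By Def.~\ref{def:semi-regular}, semi-regularity means shortcut-free \emph{and} (PCC), so $N$ satisfies (PCC). This completes the ``only if'' direction.

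For the converse, suppose $N$ is least-resolved and satisfies (PCC). These are precisely the hypotheses of Prop.~\ref{prop:lr-and-PCC->regular}, whose conclusion is that $N$ is the unique regular network for $\mathscr{C}_N$; in particular, $N$ is regular. That finishes the ``if'' direction and hence the theorem.

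Because both implications collapse to a direct citation, I do not anticipate a genuine obstacle. The only point requiring minor care is to confirm that the property names line up verbatim: that ``semi-regular'' in Thm.~\ref{thm:semiregular} unpacks (via Def.~\ref{def:semi-regular}) into the two conjuncts shortcut-free and (PCC), so that regularity really does deliver (PCC) and not merely some weaker comparability condition. Once that identification is made explicit, the proof is a two-step chaining of the cited lemmas and proposition.
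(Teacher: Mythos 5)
Your proposal is correct and follows exactly the route the paper takes: the paper derives this theorem as a consequence of Cor.~\ref{cor:regular-least-resolved} (regular $\Rightarrow$ least-resolved), Thm.~\ref{thm:semiregular} (regular $\Rightarrow$ semi-regular, hence (PCC) via Def.~\ref{def:semi-regular}), and Prop.~\ref{prop:lr-and-PCC->regular} (least-resolved $+$ (PCC) $\Rightarrow$ regular), which are precisely the three citations you chain together in the same way.
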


\subsection{Separated Networks and Cluster Networks}
\label{ssec:separated}

Recall that a network $N$ is \emph{separated} if all hybrid vertices have
outdegree $1$.
We have already seen above that the Hasse diagrams of the clustering
systems cannot produce separated networks with hybrid vertices
because $\outdeg(v)=1$ implies that $\CC(v)=\CC(u)$ whenever $u$ is
the only child of $v$. By Thm.~\ref{thm:semiregular}, furthermore, a
regular network does not have any vertex with outdegree~$1$.  Therefore, a
regular network cannot be separated whenever it contains at least one
hybrid vertex and \textit{vice versa}.

The \emph{Cluster-popping} algorithm \cite{Huson:08} constructs a separated
network for a given clustering system $\mathscr{C}$ by first constructing the
Hasse diagram, and thus the unique regular network $\Hasse[\mathscr{C}]$, and
then applying $\expand(w)$ to all hybrid vertices $w\in
V(\Hasse[\mathscr{C}])$. In particular, the resulting network is a so-called
\emph{cluster network} \cite{Huson:08,Huson:11}:
\begin{definition}
  \label{def:cluster-network}
  A network $N$ is a \emph{cluster network} if (i) it satisfies (PCC), and,
  for all $u,v\in V(N)$, it holds
  \begin{enumerate}[noitemsep]
    \item[(ii)] $\CC_{N}(u)= \CC_{N}(v)$ if and only if $u= v$ or $v$ is a
      hybrid vertex and parent of $u$ or \textit{vice versa},
    \item[(iii)] if $u$ is a child of $v$, then there exists no node $w$
      with $\CC_{N}(u) \subsetneq \CC_{N}(w) \subsetneq \CC_{N}(v)$, and
    \item[(iv)] every hybrid vertex $v$ has exactly one child, which is a
      tree node.
  \end{enumerate}
\end{definition}
We note that the definition of cluster networks usually is expressed using
the following condition instead of (PCC):
\begin{itemize}[noitemsep]
\item[(i')] $\CC_{N}(u)\subseteq \CC_{N}(v)$ if and only if
  $u\preceq_{N} v$ for all
  $u,v\in V(N)$ \cite{Huson:08,Zhang:19}.
\end{itemize}
However, this contradicts the existence of hybrid vertices $v$, which are
required to have exactly one child $u$ by (iv). To see this, observe that
$u\prec_{N} v$ and, by Obs.~\ref{obs:outdeg-1-cluster}, we have
$\CC_{N}(u)= \CC_{N}(v)$. The latter means that $\CC_{N}(v)\subseteq_{N}
\CC_{N}(u)$ is satisfied and thus (i') implies $v\preceq_{N} u$; a
contradiction.

We can rephrase conditions~(i)-(iv), and thus the definition of cluster
networks, as follows:
\begin{theorem}
  \label{thm:cluster-network-charac}
  A network $N$ is a cluster network if and only if it is semi-regular,
  separated, and phylogenetic.
\end{theorem}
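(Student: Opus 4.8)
The plan is to prove the equivalence by matching the four defining conditions (i)--(iv) of a cluster network against the three properties ``semi-regular + separated + phylogenetic'', leaning throughout on the structural lemmas already available for networks satisfying (PCC). Since condition (i) is literally (PCC), which is part of semi-regularity, the substance lies in the interplay of (ii)--(iv) with shortcut-freeness, separatedness, and (N2). For the direction ``cluster network $\Rightarrow$ semi-regular, separated, phylogenetic'', separatedness is immediate from (iv), which forces $\outdeg(v)=1$ for every hybrid $v$. For the phylogenetic property I would suppose a vertex $v$ with $\outdeg(v)=1$ and $\indeg(v)\le 1$, take its unique child $u$ so that $\CC(v)=\CC(u)$ by Obs.~\ref{obs:outdeg-1-cluster}, and contradict (ii): the three permitted cases $u=v$, ``$v$ hybrid and parent of $u$'', and ``$u$ hybrid and parent of $v$'' all fail, since $u\ne v$, since $v$ is a tree vertex, and since $u$, being a child of $v$, cannot also be a parent of $v$ in a DAG.

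The delicate point of the forward direction is shortcut-freeness, which I expect to be the main obstacle. I would assume $(u,w)$ is a shortcut witnessed by $v\in\child(u)\setminus\{w\}$ with $w\prec_N v\prec_N u$, so $\CC(w)\subseteq\CC(v)\subseteq\CC(u)$ by Lemma~\ref{lem:inclusion}. First I would show $\CC(v)\subsetneq\CC(u)$: equality would, via (ii), force $u$ to be a hybrid vertex that is the parent of $v$, whence (iv) makes $v$ the \emph{only} child of $u$, contradicting that $w\ne v$ is also a child of $u$. Then I split on $\CC(w)$ versus $\CC(v)$. If $\CC(w)\subsetneq\CC(v)$, then $\CC(w)\subsetneq\CC(v)\subsetneq\CC(u)$ with $w$ a child of $u$ directly contradicts (iii). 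If $\CC(w)=\CC(v)$, then (ii) (excluding the acyclic-impossible case) makes $v$ a hybrid vertex whose unique child is $w$, which (iv) requires to be a tree node; but $w$ then has the two distinct parents $u$ and $v$, so $\indeg(w)\ge 2$ and $w$ is a hybrid vertex, a contradiction. This two-step case analysis is the heart of the argument; together with (i)=(PCC) it yields semi-regularity.

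For the converse I would assume $N$ is semi-regular, separated, and phylogenetic, and verify (i)--(iv) in turn. Condition (i) is (PCC), built into semi-regularity. For (iv), separatedness gives each hybrid $v$ a single child $u$, and I would show $u$ is a tree vertex: if $u$ were hybrid it would have a second parent $v'$ with $\CC(v)=\CC(u)\subseteq\CC(v')$, forcing $v$ and $v'$ to be $\preceq_N$-comparable by (PCC), contradicting Obs.~\ref{obs:shortcut} since both are parents of $u$ in a shortcut-free network. For (ii), the nontrivial ``$\Rightarrow$'' direction follows from Lemma~\ref{lem:outdeg1}: if $\CC(u)=\CC(v)$ with $u\prec_N v$ (which holds after applying (PCC) and relabelling), then $\outdeg(v)=1$ and $u$ is the unique child of $v$, and since $N$ is phylogenetic a vertex with $\outdeg(v)=1$ must have $\indeg(v)\ge 2$, so $v$ is a hybrid parent of $u$; the ``$\Leftarrow$'' direction is Obs.~\ref{obs:outdeg-1-cluster}. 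Finally, for (iii) I would suppose $u$ is a child of $v$ with some $w$ satisfying $\CC(u)\subsetneq\CC(w)\subsetneq\CC(v)$; Obs.~\ref{obs:subsetneq-implies-below} then gives $u\prec_N w\prec_N v$, exhibiting the arc $(v,u)$ as a shortcut and contradicting shortcut-freeness. Assembling these verifications shows $N$ is a cluster network, completing the equivalence.
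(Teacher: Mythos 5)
Your proof is correct and follows essentially the same route as the paper: both directions match the conditions (i)--(iv) of Def.~\ref{def:cluster-network} against semi-regularity, separatedness, and (N2), using the same key ingredients (Lemma~\ref{lem:inclusion}, Obs.~\ref{obs:outdeg-1-cluster}, Lemma~\ref{lem:outdeg1}, Obs.~\ref{obs:subsetneq-implies-below}, and (PCC)). The only differences are cosmetic: in the shortcut-freeness argument you case-split directly on the witness child instead of introducing an auxiliary parent $w'$ of the shortcut's endpoint as the paper does, and for condition (iv) you argue via (PCC) and Obs.~\ref{obs:shortcut} rather than deriving a shortcut through condition (ii); both variants are valid.
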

\begin{proof}
  Suppose first that $N$ is a cluster network, i.e., it satisfies
  conditions~(i)-(iv) in Def.~\ref{def:cluster-network}. By condition~(i)
  and~(ii), resp., $N$ satisfies (PCC) and is separated.  Hence, it remains
  to show that $N$ is shortcut-free and phylogenetic.  Suppose, for
  contradiction, that $(v,u)$ is a shortcut in $N$.  Then there is $w\in
  \child_{N}(v)\setminus\{u\}$ and $w'\in\parent_{N}(u)\setminus \{v\}$
  such that $u\prec_{N}w'\preceq_{N} w$.  By Lemma~\ref{lem:inclusion},
  $\CC_{N}(u)\subseteq \CC_{N}(w') \subseteq \CC_{N}(w) \subseteq
  \CC_{N}(v)$.  If $\CC_{N}(w)=\CC_{N}(v)$, then conditions~(ii) and~(iv)
  imply that $v$ is a hybrid vertex with exactly one child; a
  contradiction. Therefore, $\CC_{N}(w)\subsetneq\CC_{N}(v)$ must hold.  If
  $\CC_{N}(u)=\CC_{N}(w')$, then conditions~(ii) and~(iv) imply that $w'$
  is a hybrid vertex and its unique child $u$ is a tree node, contradicting
  that $w',v\in \parent_{N}(u)$. Hence, we have $\CC_{N}(u)\subsetneq
  \CC_{N}(w') \subseteq \CC_{N}(w) \subsetneq \CC_{N}(v)$, which
  contradicts~(iii).  Therefore, $N$ must be shortcut-free and thus
  semi-regular.  Suppose, for contradiction, that $N$ is not
  phylogenetic. Hence, there is a vertex $v$ with exactly one child $u$ and
  $\indeg_{N}(v)\le 1$. By Obs.~\ref{obs:outdeg-1-cluster}, we have
  $\CC_{N}(u)= \CC_{N}(v)$ and thus $v$ must be a hybrid vertex by~(ii); a
  contradiction.

  Conversely, suppose $N$ is semi-regular, separated, and phylogenetic.
  Hence, $N$ satisfies~(PCC), i.e., condition~(i).  Condition~(iii) must be
  satisfied since an arc $(v,u)$ with $\CC_{N}(u) \subsetneq \CC_{N}(w)
  \subsetneq \CC_{N}(v)$ for some $w\in V(N)$ would be a shortcut by
  Obs.~\ref{obs:subsetneq-implies-below}.  We continue with
  showing~(ii). Suppose $\CC_{N}(u)= \CC_{N}(v)$ and $u\ne v$.  By (PCC),
  it holds $u\prec_{N} v$ or $v\prec_{N} u$.  Suppose w.l.o.g.\ that
  $u\prec_{N} v$. Then Lemma~\ref{lem:outdeg1} implies that
  $\outdeg_{N}(v)=1$ and $u$ is the unique child of $v$. Since $N$ is
  phylogenetic, $v$ must be a hybrid vertex.  Conversely, a hybrid vertex
  $v$ in a separated network has exactly one child $u$ implying
  $\CC_{N}(u)= \CC_{N}(v)$ by Obs.~\ref{obs:outdeg-1-cluster}.  Since $N$
  is separated, it remains to show that the unique child $u$ of a hybrid
  vertex $v$ is a tree node.  Suppose for contradiction that $u$ is a
  hybrid node. Then $u$ again has a unique child $w$. Hence, we have
  $w\prec_{N} u\prec_{N} v$ and, by Obs.~\ref{obs:outdeg-1-cluster}, it
  holds $\CC_{N}(w)=\CC_{N}(u)= \CC_{N}(v)$.  By~(ii), this implies that
  $(v,w)$ is an arc in $N$ and, in particular, a shortcut; a
  contradiction.  Therefore, condition~(iv) is also satisfied.
\end{proof}

We shall see in Thm.~\ref{thm:unique-cluster-network} in the following
section that cluster networks are uniquely determined by their cluster
sets. To obtain this result it will be convenient to make use of the fact
the semi-regular networks are encoded by their multisets of clusters.

\subsection{Cluster Multisets of Semi-Regular Networks}
\label{sec:Multi-SemiReg}

\begin{lemma}
  \label{lem:semireg-1or2-multiplicity}
  Let $N$ be a semi-regular phylogenetic network.  Then, the multiplicity
  of each cluster $C\in\mathscr{C}$ in the cluster multiset $\mathscr{M}_N$
  is either one or two.  In the latter case, the two distinct vertices
  $u,v\in V(N)$ with $\CC_{N}(u)=\CC_{N}(v)=C$ are adjacent.
\end{lemma}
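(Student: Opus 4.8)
The plan is to identify the multiplicity of a cluster $C\in\mathscr{C}$ in $\mathscr{M}_N$ with the cardinality of $Q\coloneqq\{v\in V(N)\mid \CC_N(v)=C\}$, and to prove $\vert Q\vert\le 2$, with the two vertices adjacent whenever $\vert Q\vert=2$. First I would apply Lemma~\ref{lem:C=path}: since $N$ is semi-regular, the vertices of $Q$ are pairwise $\preceq_N$-comparable and lie consecutively along an induced directed path. In particular $Q$ is linearly ordered by $\preceq_N$ and, being finite and non-empty, has a unique $\preceq_N$-maximal element $v^\ast$, while every other vertex $u$ of $Q$ satisfies $u\prec_N v^\ast$.

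The crucial step is to feed $v^\ast$ into Lemma~\ref{lem:outdeg1}. Assume $\vert Q\vert\ge 2$, so that there is at least one $u\in Q$ with $u\prec_N v^\ast$ and $\CC_N(u)=\CC_N(v^\ast)=C$. Then Lemma~\ref{lem:outdeg1} gives $\outdeg_N(v^\ast)=1$; let $w$ denote the unique child of $v^\ast$. Since $N$ is phylogenetic, the same lemma forces every such $u$ to coincide with this child $w$. Hence $Q\setminus\{v^\ast\}\subseteq\{w\}$, so $\vert Q\vert\le 2$, which already establishes that the multiplicity of $C$ is either one or two.

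It remains to treat the case $\vert Q\vert=2$, where $Q=\{v^\ast,w\}$. By Obs.~\ref{obs:outdeg-1-cluster}, the unique child $w$ of $v^\ast$ indeed satisfies $\CC_N(w)=\CC_N(v^\ast)=C$, so $w\in Q$ as required, and the two vertices are joined by the arc $(v^\ast,w)$ and are therefore adjacent. I do not anticipate a serious obstacle: the entire argument reduces to the phylogenetic refinement in Lemma~\ref{lem:outdeg1}, which pins any ``duplicate'' of $v^\ast$ down to its unique child. The only point needing a little care is to note that once $\outdeg_N(v^\ast)=1$ is known, all vertices of $Q$ below $v^\ast$ must collapse onto the single child $w$, so that no directed path of length greater than one can carry the cluster $C$.
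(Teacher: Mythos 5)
Your proof is correct and takes essentially the same route as the paper: both arguments pivot on the phylogenetic refinement of Lemma~\ref{lem:outdeg1}, which pins any cluster-duplicate strictly below a vertex down to that vertex's unique child. The only difference is organizational — you linearly order $Q$ via Lemma~\ref{lem:C=path} and apply Lemma~\ref{lem:outdeg1} once to the $\preceq_N$-maximum of $Q$, collapsing all other elements onto the unique child in a single step, whereas the paper reaches the same conclusion by an explicit case analysis on a hypothetical third vertex.
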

\begin{proof}
  Recall that a semi-regular network $N$ satisfies (PCC) and is
  shortcut-free. Let $C\in\mathscr{C}$. Thus, there is at least one vertex
  $v\in V(N)$ with $\CC_{N}(v)=C$.  Now suppose there is $u\in
  V(N)\setminus \{v\}$ with $\CC_{N}(u)=C$.  By (PCC), it holds $u\prec_{N}
  v$ or $v\prec_{N} u$.  Suppose that $u\prec_{N} v$. Then
  Lemma~\ref{lem:outdeg1} implies that $\outdeg_{N}(v)=1$ and $u$ is the
  unique child of $v$.  Suppose, for contradiction, there is a third vertex
  $w\in V(N)\setminus \{u,v\}$.  By (PCC), it holds $w\prec_{N} v$ or
  $v\prec_{N} w$.  If $w\prec_{N} v$, then Lemma~\ref{lem:outdeg1} implies
  that $w$ is the unique child of $v$; a contradiction. If $v\prec_{N} w$,
  then we have also $u\prec_{N} v\prec_{N} w$. By Lemma~\ref{lem:outdeg1},
  therefore, $v$ and $u$ are both the unique child of $w$, which is not
  possible since $u\ne v$.  One argues similarly if $v\prec_{N} u$. In
  particular, $u$ and $v$ are adjacent in both cases.
\end{proof}

As we shall see in Theorem~\ref{thm:semireg-multiset}, the property of being
phylogenetic, however, is not necessary for semi-regular networks to be
identified by their cluster multisets.  In order to prove this, the
following map will be of useful.
\begin{definition}
  Let $N$ and $\tilde{N}$ be two networks satisfying (PCC) and
  $\mathscr{M}_N=\mathscr{M}_{\tilde{N}}$.
  Then, the map $\varphi_{PCC}\colon V(N)\to V(\tilde{N})$ is given by the
  following steps for all
  $C\in \mathscr{C}_N=\mathscr{C}_{\tilde{N}}$:
  \begin{enumerate}[noitemsep,nolistsep]
  \item[(i)] sort the $k\ge1$ vertices in $N$ with cluster $C$ such that
    $v_1 \prec_N \dots \prec_N v_k$,
  \item[(ii)] sort the $k$ vertices in $\tilde{N}$ with cluster $C$ such
    that $\tilde{v}_1 \prec_{\tilde{N}} \dots \prec_{\tilde{N}}
    \tilde{v}_k$, and
  \item[(iii)] set $\varphi(v_i)=\tilde{v}_i$ for all $1\le i \le k$.
  \end{enumerate}
\end{definition}
In other words, we map the $\preceq_{N}$-larger vertices with cluster $C$
in $N$ to $\preceq_{\tilde{N}}$-larger vertices with cluster $C$ in
$\tilde{N}$, which is possible since, by (PCC), these vertices are totally
ordered w.r.t.\ $\preceq_{N}$ and $\preceq_{\tilde{N}}$, respectively.

\begin{lemma}
  \label{lem:varphi}
  Let $N$ and $\tilde{N}$ be two networks satisfying (PCC) and
  $\mathscr{M}_N=\mathscr{M}_{\tilde{N}}$.
  Then $\varphi_{PCC}$ is a bijection between $V(N)$ and $V(\tilde{N})$ that
  is the identity on the common leaf set $X$.
  Writing $\tilde{v}\coloneqq \varphi_{PCC}(v)$, it moreover holds
  \begin{enumerate}[noitemsep,nolistsep]
  \item $\CC_{N}(v)=\CC_{\tilde{N}}(\tilde{v})$ for all $v\in V(N)$,
  \item $v$ is a leaf if and only if $\tilde{v}$ is a leaf for all
    $v\in V(N)$, and
  \item $u\prec_{N} v$ if and only if $\tilde{u}\prec_{\tilde{N}}
    \tilde{v}$ for all $u,v\in V(N)$.
  \end{enumerate}
\end{lemma}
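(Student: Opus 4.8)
The plan is to establish, in order, that $\varphi_{PCC}$ is a well-defined bijection, that it restricts to the identity on $X$, and then Assertions~1--3, exploiting throughout that $\varphi_{PCC}$ is built fiber-by-fiber over the common cluster set $\mathscr{C}_N=\mathscr{C}_{\tilde N}$ (the equality of cluster sets follows from $\mathscr{M}_N=\mathscr{M}_{\tilde N}$). First I would check well-definedness and bijectivity. For a fixed $C\in\mathscr{C}_N$, any two vertices $u,u'$ of $N$ with $\CC_N(u)=\CC_N(u')=C$ satisfy $\CC_N(u)\subseteq\CC_N(u')$, so (PCC) forces them to be $\preceq_N$-comparable; hence the vertices of $N$ carrying cluster $C$ form a chain and the linear order in step~(i) is well defined, and likewise step~(ii) in $\tilde N$. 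The hypothesis $\mathscr{M}_N=\mathscr{M}_{\tilde N}$ guarantees that this chain has the same length $k$ in both networks, so step~(iii) matches the two chains rank by rank, giving a bijection on each fiber. As distinct clusters yield disjoint fibers whose union is all of $V(N)$ (resp.\ $V(\tilde N)$), the union of these fiberwise bijections is a bijection $V(N)\to V(\tilde N)$. Assertion~1 is then immediate: $v$ and $\tilde v$ lie in the fiber of the same cluster $C$, whence $\CC_N(v)=C=\CC_{\tilde N}(\tilde v)$.

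Next I would treat leaves, settling both the identity-on-$X$ claim and Assertion~2. The key is an intrinsic description: $v$ is a leaf if and only if $\CC_N(v)$ is a singleton and $v$ is the $\preceq_N$-minimal vertex of its fiber. Indeed, a leaf $x$ has $\CC_N(x)=\{x\}$ and is $\preceq_N$-minimal in $N$, hence minimal in its fiber; conversely, if $w$ is the minimum of the fiber over a singleton $\{x\}$ but is not a leaf, then $w$ has a child $c$ with $\emptyset\ne\CC_N(c)\subseteq\{x\}$ (clusters are non-empty by Lemma~\ref{lem:C-N}), so $\CC_N(c)=\{x\}$ and $c\prec_N w$, contradicting minimality. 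Because $\varphi_{PCC}$ preserves the cluster (Assertion~1) and preserves rank within the fiber by construction, $v$ is such a minimal singleton-vertex if and only if $\tilde v$ is, which yields Assertion~2. In particular, for a leaf $x$ the minimum of the singleton fiber over $\{x\}$ in $\tilde N$ is again $x$, so $\varphi_{PCC}(x)=x$ and $\varphi_{PCC}$ is the identity on $X$.

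Finally, for Assertion~3 I would observe that $\varphi_{PCC}^{-1}$ is exactly the map produced by the same recipe for the ordered pair $(\tilde N,N)$, so by symmetry it suffices to prove the implication $u\prec_N v\implies\tilde u\prec_{\tilde N}\tilde v$. If $\CC_N(u)\ne\CC_N(v)$, then $u\prec_N v$ together with Lemma~\ref{lem:inclusion} gives $\CC_N(u)\subsetneq\CC_N(v)$, hence $\CC_{\tilde N}(\tilde u)\subsetneq\CC_{\tilde N}(\tilde v)$ by Assertion~1, and Obs.~\ref{obs:subsetneq-implies-below}, applied in $\tilde N$ (which satisfies (PCC)), yields $\tilde u\prec_{\tilde N}\tilde v$. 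If instead $\CC_N(u)=\CC_N(v)$, then $u$ and $v$ lie in a common fiber and $u\prec_N v$ simply says that $u$ precedes $v$ in that chain; since $\varphi_{PCC}$ preserves rank and the corresponding chain in $\tilde N$ is ordered by $\prec_{\tilde N}$, we again obtain $\tilde u\prec_{\tilde N}\tilde v$.

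I expect the only genuinely delicate point to be the leaf characterization underlying Assertion~2 and the identity on $X$, where one must argue that the $\preceq_N$-minimum of a singleton fiber is forced to be a leaf; everything else is a direct transport of structure along a rank-preserving, cluster-preserving bijection, with Obs.~\ref{obs:subsetneq-implies-below} and the symmetry $\varphi_{PCC}^{-1}\leftrightarrow(\tilde N,N)$ doing the work for the order relation.
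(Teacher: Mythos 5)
Your proposal is correct and follows essentially the same route as the paper's proof: (PCC) makes each cluster fiber a chain, multiset equality matches fiber sizes so the rank-by-rank matching is a bijection, and Assertion~3 is split into the strict-inclusion case (Lemma~\ref{lem:inclusion} plus Obs.~\ref{obs:subsetneq-implies-below} applied in $\tilde N$) and the equal-cluster case (rank preservation by construction). The only organizational difference is minor: the paper deduces Assertion~2 as a consequence of Assertion~3 (a leaf being a $\preceq$-minimal vertex), whereas you prove it directly via the singleton-cluster/fiber-minimum characterization, which also neatly fills in the paper's ``one easily verifies'' claim about the identity on $X$.
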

\begin{proof}
  Since $\mathscr{M}_N=\mathscr{M}_{\tilde{N}}$, the multiplicity of every
  cluster $C\in\mathscr{C}\coloneqq \mathscr{C}_N=\mathscr{C}_{\tilde{N}}$
  is equal in $\mathscr{M}_N$ and $\mathscr{M}_{\tilde{N}}$, i.e., there
  are $k\ge 1$ vertices with cluster $C$ in $N$ and $k$ vertices with
  cluster $C$ in $\tilde{N}$.  One easily verifies that, by construction,
  $\varphi_{PCC}$ is a bijection between $V(N)$ and $V(\tilde{N})$ that is
  the identity on the common leaf set $X$ and satisfies
  $\CC_{N}(v)=\CC_{\tilde{N}}(\tilde{v})$ for all $v\in V(N)$.  To see that
  $u\prec_{N} v$ if and only if $\tilde{u}\prec_{\tilde{N}} \tilde{v}$,
  suppose $u\prec_{N} v$. By Lemma~\ref{lem:inclusion}, this implies
  $\CC_{N}(u)\subseteq \CC_N(v)$. If $\CC_{N}(u)\subsetneq \CC_N(v)$ (and
  thus $\CC_{\tilde{N}}(\tilde{u})\subsetneq \CC_{\tilde{N}}(\tilde{v})$),
  then Obs.~\ref{obs:subsetneq-implies-below} implies
  $\tilde{u}\prec_{\tilde{N}} \tilde{v}$. If, on the other hand,
  $\CC_{N}(u)= \CC_N(v)$, then $\tilde{u}\prec_{\tilde{N}} \tilde{v}$ holds
  by construction of $\varphi$.  Analogously, $\tilde{u}\prec_{\tilde{N}}
  \tilde{v}$ implies $u\prec_{N} v$.  In particular, this implies that, for
  every $v\in V(N)$, $v$ is a leaf if and only if $\tilde{v}$ is a leaf.
  \end{proof}

\begin{theorem}
  \label{thm:semireg-multiset}
  Let $N$ be a semi-regular network. Then $N$ is the unique semi-regular
  network whose cluster multiset is $\mathscr{M}_N$.
\end{theorem}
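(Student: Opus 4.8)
The plan is to take the candidate bijection $\varphi_{PCC}$ supplied by Lemma~\ref{lem:varphi} and upgrade it to a graph isomorphism. Let $\tilde{N}$ be any semi-regular network with $\mathscr{M}_{\tilde{N}}=\mathscr{M}_N$; in particular $\mathscr{C}_N=\mathscr{C}_{\tilde{N}}$. Since both networks satisfy (PCC) and have equal cluster multisets, Lemma~\ref{lem:varphi} provides a bijection $\varphi\coloneqq\varphi_{PCC}\colon V(N)\to V(\tilde{N})$ that restricts to the identity on the common leaf set $X$, satisfies $\CC_N(v)=\CC_{\tilde{N}}(\varphi(v))$ for all $v\in V(N)$, and preserves the strict ancestor relation in both directions, i.e.\ $u\prec_N v$ if and only if $\varphi(u)\prec_{\tilde{N}}\varphi(v)$. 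Writing $\tilde{v}\coloneqq\varphi(v)$, it then remains only to show that $\varphi$ is arc-preserving in both directions; once this is established, $\varphi$ witnesses $N\simeq\tilde{N}$, and since $\tilde{N}$ was arbitrary, $N$ is the unique semi-regular network with cluster multiset $\mathscr{M}_N$.

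The key step is to recover adjacency from comparability using shortcut-freeness. Suppose $(u,v)\in E(N)$; then $v\prec_N u$, and hence $\tilde{v}\prec_{\tilde{N}}\tilde{u}$ by Lemma~\ref{lem:varphi}. I would argue by contradiction that $(\tilde{u},\tilde{v})\in E(\tilde{N})$: if it were not an arc, then every directed $\tilde{u}\tilde{v}$-path in $\tilde{N}$ (at least one exists since $\tilde{v}\prec_{\tilde{N}}\tilde{u}$) would have length at least two and therefore contain an inner vertex $\tilde{z}$ with $\tilde{v}\prec_{\tilde{N}}\tilde{z}\prec_{\tilde{N}}\tilde{u}$. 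Pulling this back through $\varphi$ yields $z\in V(N)$ with $v\prec_N z\prec_N u$, which together with the arc $(u,v)$ exhibits $(u,v)$ as a shortcut of $N$, contradicting that $N$ is semi-regular. Hence $(\tilde{u},\tilde{v})\in E(\tilde{N})$. The converse implication, that $(\tilde{u},\tilde{v})\in E(\tilde{N})$ forces $(u,v)\in E(N)$, follows by the symmetric argument, this time invoking shortcut-freeness of $\tilde{N}$ and pushing the intermediate vertex forward along $\varphi$ instead of back.

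The main (and essentially only) obstacle is this translation from the order-theoretic data preserved by $\varphi$ to the combinatorial arc structure; everything else is bookkeeping already carried out in Lemma~\ref{lem:varphi}. It is overcome precisely by the observation that in a shortcut-free DAG an arc $(u,v)$ cannot coexist with a vertex $z$ satisfying $v\prec_N z\prec_N u$, so that the covering relation (and hence adjacency) is characterized by ``$v\prec_N u$ with no intermediate vertex''. I note that the phylogenetic hypothesis required in Lemma~\ref{lem:semireg-1or2-multiplicity} is \emph{not} needed here: the argument invokes only (PCC), through $\varphi$, and shortcut-freeness of the two networks, which is exactly what semi-regularity provides.
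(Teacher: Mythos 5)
Your proposal is correct and follows essentially the same route as the paper's own proof: both take the bijection $\varphi_{PCC}$ from Lemma~\ref{lem:varphi} and upgrade order-preservation to arc-preservation by observing that a missing arc would force an intermediate vertex, whose preimage (resp.\ image) under $\varphi_{PCC}$ turns the corresponding arc into a shortcut, contradicting semi-regularity of $N$ (resp.\ $\tilde{N}$). Your closing remark that the phylogenetic hypothesis of Lemma~\ref{lem:semireg-1or2-multiplicity} is not needed also matches the paper's own observation preceding the theorem.
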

\begin{proof}
  Suppose $N$ and $\tilde{N}$ are semi-regular networks with
  $\mathscr{M}_N=\mathscr{M}_{\tilde{N}}$.  By assumption, both $N$ and
  $\tilde{N}$ are shortcut-free and satisfy (PCC).  We continue with
  showing that $\varphi_{PCC}\colon V(N)\to V(\tilde{N})$ is a graph
  isomorphism. By Lemma~\ref{lem:varphi}, $\varphi_{PCC}$ is a bijection
  that is the identity on the common leaf set $X$.  In the following, we
  write $\tilde{v}\coloneqq \varphi_{PCC}(v)$ for all $v\in V(N)$.  Suppose
  that $(v,u)\in E(N)$. Thus, we have $u\prec_{N} v$ which implies
  $\tilde{u}\prec_{\tilde{N}} \tilde{v}$ by Lemma~\ref{lem:varphi}(3).
  Assume, for contradiction, that $(\tilde{v},\tilde{u})\notin
  E(\tilde{N})$.  Then there must be $\tilde{z}\in V(\tilde{N})$ such that
  $\tilde{u}\prec_{\tilde{N}} \tilde{z} \prec_{\tilde{N}} \tilde{v}$. By
  Lemma~\ref{lem:varphi}(3), we have $u\prec_{N} z\prec_{N} v$. Hence, the
  arc $(v,u)$ must be a shortcut in $N$; a contradiction. Therefore,
  $(\tilde{v},\tilde{u})\in E(\tilde{N})$.  By analogous arguments,
  $(\tilde{v},\tilde{u})\in E(\tilde{N})$ implies $(v,u)\in E(N)$.  Hence,
  $\varphi_{PCC}$ is a graph isomorphism that is the identity on $X$ and
  thus $N\simeq \tilde{N}$. Therefore, $N$ is the unique semi-regular
  network whose cluster multiset is $\mathscr{M}_N$.
\end{proof}

We emphasize that none of the two conditions (PCC) and shortcut-free that
define semi-regular networks can be omitted in
Theorem~\ref{thm:semireg-multiset} as shown by the examples in
Fig.~\ref{fig:multiset-not-semireg}.

\begin{figure}[t]
  \centering
  \includegraphics[width=\textwidth]{./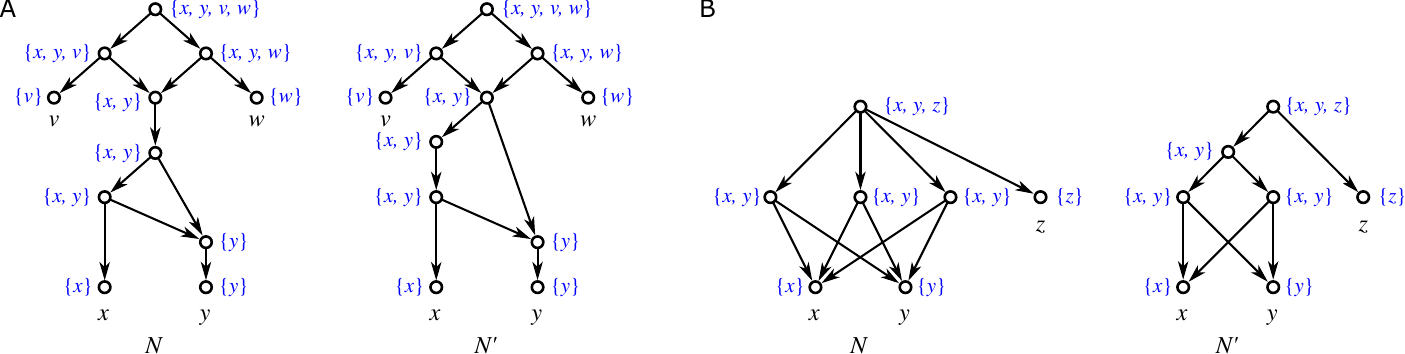}
  \caption{Two pairs of non-isomorphic (phylogenetic) networks $N$ and $N'$
    for which $\mathscr{M}_{N} = \mathscr{M}_{N'}$.  (A) $N$ and $N'$
    satisfy (PCC) but are not shortcut-free.  (B) $N$ and $N'$ are
    shortcut-free but do not satisfy (PCC).}
  \label{fig:multiset-not-semireg}
\end{figure}

It is an easy task to verify that semi-regular networks $N$ (as any other
network for which the property of being phylogenetic has been left out) are
not determined by their clustering systems $\mathscr{C}_N$.  The network
$N$ in Fig.\ \ref{fig:not-TC}(A), for example, is not phylogenetic (but
semi-regular). Suppression of any vertex with in- and outdegree $1$ yields
a network $N'$ with $N\not\simeq N'$ but $\mathscr{C}_N =
\mathscr{C}_{N'}$.  We will see in the following that there is a 1-to-1
correspondence between clustering systems and cluster networks.  To show
this, we will need the following technical result that relates the
occurrence of vertices with equal clusters in such networks to the
structure of the Hasse diagram.

\begin{lemma}
  \label{lem:separated-1or2}
  Let $N$ be a cluster network with clustering system $\mathscr{C}$.  Then,
  for every cluster $C\in\mathscr{C}$, there is either exactly one vertex
  $v\in V(N)$ with $\CC_{N}(v)=C$ or there are exactly two vertices $u,v\in
  V(N)$ with $\CC_{N}(u)=\CC_{N}(v)=C$. The latter case occurs if and only
  if $C$ has indegree at least $2$ in $\Hasse[\mathscr{C}]$.  Moreover, in
  this case, $u$ and $v$ are adjacent in $N$.
\end{lemma}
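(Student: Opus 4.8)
The plan is to reduce the statement to a clean equivalence and then prove that. By Thm.~\ref{thm:cluster-network-charac} a cluster network is semi-regular, separated and phylogenetic, so Lemma~\ref{lem:semireg-1or2-multiplicity} already yields that every cluster has multiplicity one or two and that in the multiplicity-two case the two witnessing vertices are adjacent; this disposes of the first and the last sentence. It remains to show that the multiplicity of $C$ is two if and only if $\indeg_{\Hasse[\mathscr{C}]}(C)\ge 2$, where an in-neighbor of $C$ in $\Hasse[\mathscr{C}]$ is precisely an \emph{upper cover} of $C$, i.e.\ a cluster that is inclusion-minimal among those properly containing $C$. I would route this through the intermediate statement that the multiplicity of $C$ is two exactly when $N$ has a hybrid vertex $v$ with $\CC_N(v)=C$. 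For one direction, a hybrid vertex $v$ is separated, so $\outdeg_N(v)=1$, and Obs.~\ref{obs:outdeg-1-cluster} gives that its unique child also carries $C$, forcing multiplicity two; conversely, writing the two adjacent witnesses as $u\prec_N v$, Lemma~\ref{lem:outdeg1} gives $\outdeg_N(v)=1$, whence $v$ is hybrid because $N$ is phylogenetic.

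First I would prove that a hybrid vertex $v$ with $\CC_N(v)=C$ forces $\indeg_{\Hasse[\mathscr{C}]}(C)\ge 2$. Let $p_1,\dots,p_k$ with $k\ge 2$ be the parents of $v$. As $N$ is shortcut-free, Obs.~\ref{obs:shortcut} makes the $p_i$ pairwise $\preceq_N$-incomparable, so by (PCC) their clusters are pairwise incomparable and in particular distinct. Since (by the first part and the intermediate statement) the only vertices with cluster $C$ are $v$ and its child, no $p_i$ carries $C$, and Lemma~\ref{lem:inclusion} upgrades $C\subseteq\CC_N(p_i)$ to $C\subsetneq\CC_N(p_i)$. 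Because $v$ is a child of each $p_i$, condition~(iii) of Def.~\ref{def:cluster-network} forbids any cluster strictly between $C$ and $\CC_N(p_i)$, so each $\CC_N(p_i)$ is an upper cover of $C$; the $k\ge 2$ distinct upper covers give $\indeg_{\Hasse[\mathscr{C}]}(C)\ge 2$.

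For the converse I would argue by contraposition: assume the multiplicity of $C$ is one, so its unique carrier $v$ is not hybrid and $\indeg_N(v)\le 1$. If $\indeg_N(v)=0$ then $v=\rho_N$ and $C=X$, which has no upper cover. Otherwise $v$ has a single parent $p$, and uniqueness of $v$ together with Lemma~\ref{lem:inclusion} gives $C\subsetneq\CC_N(p)$. For an arbitrary upper cover $C'$ of $C$, pick $w$ with $\CC_N(w)=C'$; then $C\subsetneq C'$ and Obs.~\ref{obs:subsetneq-implies-below} give $v\prec_N w$, so every directed $wv$-path ends with the arc $(p,v)$, and hence $p\preceq_N w$. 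Thus $C\subsetneq\CC_N(p)\subseteq C'$, and minimality of $C'$ forces $\CC_N(p)=C'$; so $\CC_N(p)$ is the unique upper cover and $\indeg_{\Hasse[\mathscr{C}]}(C)\le 1$.

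I expect this last step to be the main obstacle: it is the only part that reasons globally about $\Hasse[\mathscr{C}]$ rather than locally in $N$, and the crux is using the uniqueness of $v$'s incoming arc $(p,v)$ to collapse every upper cover of $C$ onto the single cluster $\CC_N(p)$, together with correctly handling the degenerate root case $C=X$.
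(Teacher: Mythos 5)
Your proof is correct, and while its overall skeleton (reduce via Thm.~\ref{thm:cluster-network-charac} and Lemma~\ref{lem:semireg-1or2-multiplicity}, then pivot on ``multiplicity two iff the carrier of $C$ is hybrid'') matches the paper's opening moves, the two halves of the remaining equivalence are handled differently. In the forward direction you and the paper both show that the clusters of the pairwise $\preceq_N$-incomparable parents of the hybrid carrier are $\ge 2$ distinct upper covers of $C$; you get the covering property by citing condition~(iii) of Def.~\ref{def:cluster-network} directly, whereas the paper works from the characterization (semi-regular, separated, phylogenetic) and derives it from shortcut-freeness via Obs.~\ref{obs:subsetneq-implies-below} --- an intermediate cluster would force a shortcut. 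The genuine divergence is the converse. The paper proves ``$\indeg_{\Hasse[\mathscr{C}]}(C)\ge 2$ implies multiplicity two'' head-on: it picks vertices realizing the two Hasse parents, takes directed paths from each down to a carrier $v$ of $C$, and inspects their $\preceq_N$-maximal common vertex $v'$; either $v'=v$ is hybrid (and separatedness yields the second carrier as its child) or $\CC_N(v')$ is squeezed between $C$ and both parents, forcing $\CC_N(v')=C$. You instead prove the contrapositive: a unique carrier $v$ must be a tree vertex, and its at most one parent $p$ absorbs every upper cover of $C$ (with the root case $C=X$ handled separately), so the Hasse indegree is at most one. Your version is shorter and more local, avoiding the path-intersection analysis entirely; the paper's version has the advantage of explicitly constructing the second vertex with cluster $C$, which also makes the adjacency claim visible inside the same argument rather than delegated to Lemma~\ref{lem:semireg-1or2-multiplicity}.
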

\begin{proof}
  By Theorem~\ref{thm:cluster-network-charac}, $N$ is phylogenetic,
  separated, and semi-regular, i.e., it satisfies (PCC) and is
  shortcut-free.  By Lemma~\ref{lem:semireg-1or2-multiplicity}, it only
  remains to show that there are two distinct vertices $u,v\in V(N)$ with
  $\CC_{N}(u)=\CC_{N}(v)=C$ if and only if $C$ has indegree at least $2$ in
  $\Hasse[\mathscr{C}]$.

  Suppose $C=\CC_{N}(v)=\CC_{N}(u)$ for two distinct vertices $v,u\in V(N)$
  and assume w.l.o.g.\ that $u\prec_{N} v$. By Lemma~\ref{lem:outdeg1}, $u$
  must be the unique child of $v$. Since $N$ is phylogenetic, this implies
  that $\indeg_{N}(v)\ge 2$.  Thus let $v_1$ and $v_2$ be two distinct
  parents of $v$. Since $N$ is shortcut-free and by
  Obs.~\ref{obs:shortcut}, $v_1$ and $v_2$ are
  $\prec_{N}$-incomparable. Using (PCC), we conclude that $C_1\coloneqq
  \CC_{N}(v_1)$ and $C_2\coloneqq \CC_{N}(v_2)$ are distinct and none of
  them is contained in the other.  Moreover, we have $C\subseteq C_1\cap
  C_2$ by Lemma~\ref{lem:inclusion}, and thus, and $C\subsetneq C_1,C_2$.
  Suppose there is $C'\in \mathscr{C}$ such that $C\subsetneq C'\subsetneq
  C_1$. Let $v'\in V(N)$ be a vertex with $\CC_{N}(v)=C'$.  By
  Obs.~\ref{obs:subsetneq-implies-below}, we have $u\prec_{N} v'\prec_{N}
  v$. Therefore, $(v,u)\in V(N)$ must be a shortcut; a
  contradiction. Hence, there is no $C'\in \mathscr{C}$ with $C\subsetneq
  C'\subsetneq C_1$ and $C_1$ must be a parent of $C$ in
  $\Hasse[\mathscr{C}]$. By similar arguments, $C_2$ is a parent of $C$ in
  $\Hasse[\mathscr{C}]$, which together with $C_1\ne C_2$ implies that $C$
  has indegree at least $2$ in $\Hasse[\mathscr{C}]$.

  Conversely, suppose $C$ has at least two distinct parents $C_1$ and $C_2$
  in $\Hasse[\mathscr{C}]$. Hence, it holds $C\subsetneq C_1$ and
  $C\subsetneq C_2$ and none of $C_1$ and $C_2$ is contained in the other.
  Let $v, v_1, v_2\in V(N)$ be vertices with $C= \CC_{N}(v)$, $C_1=
  \CC_{N}(v_1)$, and $C_2= \CC_{N}(v_2)$.  Clearly, $v$, $v_1$, and $v_2$
  are pairwise distinct.  By Obs.~\ref{obs:subsetneq-implies-below}, we
  have $v \prec_{N} v_1$ and $v \prec_{N} v_2$, i.e., there are a $v_1
  v$-path $P_1$ and a $v_2 v$-path $P_2$ in $N$.  Let $v'$ be the
  $\preceq_{N}$-maximal vertex in $P_1$ that is also a vertex in $P_2$.  We
  distinguish the two cases $v'= v$ and $v'\ne v$.  If $v'= v$, then $v$
  has a parent in each of $P_1$ and $P_2$ which are distinct by
  construction. Thus $v$ is a hybrid vertex. Since moreover $N$ is
  separated, $v$ has a unique child $u$. By
  Obs.~\ref{obs:outdeg-1-cluster}, this implies $\CC_{N}(u)=\CC_{N}(v)=C$.
  Now suppose $v'\ne v$. Lemma~\ref{lem:inclusion} and $v\prec_{N}
  v'\preceq_{N} v_1, v_2$ imply $C\subseteq C'\coloneqq \CC_{N}(v')$,
  $C'\subseteq C_1$ and $C'\subseteq C_2$. Since none of $C_1$ and $C_2$ is
  contained in the other, we must have $C\subseteq C'\subsetneq C_1$ and
  $C\subseteq C'\subsetneq C_2$. Since $C_1$ and $C_2$ are parents of $C$
  in $\Hasse[\mathscr{C}]$, the latter is only possible if $C=C'$.  Hence,
  in both cases, there are two distinct vertices in $N$ with cluster $C$.
  \end{proof}

\begin{theorem}
  \label{thm:unique-cluster-network}
  For every clustering system $\mathscr{C}$, there is a unique cluster
  network $N$ with $\mathscr{C}=\mathscr{C}_N$.  It is obtained from the
  unique regular network $\Hasse[\mathscr{C}]$ of $\mathscr{C}$ by applying
  $\expand(v)$ to all hybrid vertices.  In particular, $N$ is the unique
  semi-regular separated phylogenetic network with clustering system
  $\mathscr{C}$.
\end{theorem}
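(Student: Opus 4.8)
The plan is to split the statement into three parts: the concrete construction, the fact that it yields a cluster network with the prescribed clustering system (existence), and uniqueness. The ``in particular'' clause will then come for free, since Thm.~\ref{thm:cluster-network-charac} identifies cluster networks with exactly the semi-regular, separated, phylogenetic networks.

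For the construction and existence, I would start from the unique regular network $\Hasse[\mathscr{C}]$ guaranteed by Prop.~\ref{prop:regular-unique}, which satisfies $\mathscr{C}_{\Hasse[\mathscr{C}]}=\mathscr{C}$, is phylogenetic by Lemma~\ref{lem:Hasse-is-Network}, and has no vertex of outdegree $1$ by Thm.~\ref{thm:semiregular}. I then apply $\expand(v)$ to each hybrid vertex $v$ of $\Hasse[\mathscr{C}]$ in turn and call the result $N$. Each such $v$ is a cluster $C$ whose outdegree is $0$ (if $C$ is a singleton) or at least $2$ (otherwise, by Lemma~\ref{lem:Hasse-outdeg}); in either case $\outdeg(v)\ne 1$, so the phylogenetic-preservation clause of Lemma~\ref{lem:expand} applies at each step. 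Since processing one hybrid vertex only relabels the out-arcs of its parents and replaces $v$'s in-arcs by those of the freshly created vertex $v'$, the remaining original hybrid vertices retain both their in- and outdegrees; hence the hypotheses persist throughout and $N$ is phylogenetic. By Cor.~\ref{cor:expand-semiregular}, semi-regularity is preserved at every step, so $N$ is semi-regular, and each $\expand$ leaves $\mathscr{C}$ unchanged by Lemma~\ref{lem:expand}, whence $\mathscr{C}_N=\mathscr{C}$. Finally, after all expansions the hybrid vertices of $N$ are precisely the introduced vertices $v'$, each with the single child $v$ (the originals now having indegree $1$), so every hybrid vertex of $N$ has outdegree $1$ and $N$ is separated. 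By Thm.~\ref{thm:cluster-network-charac}, $N$ is a cluster network with $\mathscr{C}_N=\mathscr{C}$.

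For uniqueness, I would show that the multiset of clusters is forced by $\mathscr{C}$ alone. Let $N'$ be any cluster network with $\mathscr{C}_{N'}=\mathscr{C}$. By Lemma~\ref{lem:separated-1or2}, each $C\in\mathscr{C}$ occurs in $\mathscr{M}_{N'}$ with multiplicity $1$, or with multiplicity $2$ exactly when $C$ has indegree at least $2$ in $\Hasse[\mathscr{C}]$. As these indegrees depend only on $\mathscr{C}$ and not on $N'$, every cluster network with clustering system $\mathscr{C}$ shares one and the same cluster multiset; in particular $\mathscr{M}_{N'}=\mathscr{M}_N$. Because cluster networks are semi-regular (Thm.~\ref{thm:cluster-network-charac}), Thm.~\ref{thm:semireg-multiset} forces $N'\simeq N$. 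This proves uniqueness, and the ``in particular'' statement is then immediate from Thm.~\ref{thm:cluster-network-charac}.

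The main obstacle I anticipate is not the uniqueness argument---which reduces cleanly to the already-established multiset encoding of semi-regular networks---but the bookkeeping in the construction: one must check that applying $\expand$ successively to the original hybrid vertices never violates the hypotheses of Lemma~\ref{lem:expand}, and that the final hybrid vertices are exactly the introduced copies. Verifying that the outdegrees of the not-yet-processed hybrid vertices are unaffected (so that the $\outdeg\ne 1$ condition persists) is the delicate point; everything else is a direct application of the cited preservation results.
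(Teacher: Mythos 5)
Your proposal is correct and follows essentially the same route as the paper's own proof: expand all hybrid vertices of $\Hasse[\mathscr{C}]$ (using Lemma~\ref{lem:expand} and Cor.~\ref{cor:expand-semiregular} for preservation), conclude via Thm.~\ref{thm:cluster-network-charac} that the result is a cluster network, and obtain uniqueness by combining Lemma~\ref{lem:separated-1or2} with the multiset encoding of Thm.~\ref{thm:semireg-multiset}. The degree-bookkeeping you flag as delicate is handled in the paper by the same observation you make, namely that each expansion leaves the remaining hybrid vertices' degrees untouched and reduces their number by one.
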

\begin{proof}
  By Thm.~\ref{thm:semiregular}, the unique regular network
  $\Hasse[\mathscr{C}]$ is shortcut-free, satisfies (PCC), and has no
  vertex with outdegree~$1$.  In particular, $\Hasse[\mathscr{C}]$ is
  phylogenetic.  Now let $N$ be the network obtained from $N$ by repeatedly
  applying $\expand(w)$ to some hybrid vertex whose outdegree is not $1$
  until no such vertex exists.  Clearly this is achieved by applying
  $\expand(w)$ to all hybrid vertices $w\in V(N)$ since they all satisfy
  $\outdeg_{\Hasse[\mathscr{C}]}(w)\ne 1$ and, moreover, no expansion step
  introduces new such vertices but reduces their number by $1$.  We can
  repeatedly (i.e., in each expansion step) apply Lemma~\ref{lem:expand} to
  conclude that the resulting digraph $N$ is a phylogenetic network,
  Lemma~\ref{lem:expand} to conclude that $N$ satisfies
  $\mathscr{C}_{N}=\mathscr{C}_{\Hasse[\mathscr{C}]}=\mathscr{C}$, and
  Cor.~\ref{cor:expand-semiregular} to conclude that $N$ is semi-regular.
  In particular, by construction, all hybrid vertices in $N$ have outdegree
  $1$, i.e., $N$ is separated.  By Theorem~\ref{thm:cluster-network-charac},
  $N$ is a cluster network.

  It remains to show that $N$ is the unique cluster network with clustering
  system $\mathscr{C}$.  To this end, let $\tilde{N}$ be a cluster network
  with $\mathscr{C}=\mathscr{C}_{\tilde{N}}$. By
  Theorem~\ref{thm:cluster-network-charac}, both $N$ and $\tilde{N}$ are
  semi-regular, shortcut-free, and phylogenetic. Moreover, by
  Lemma~\ref{lem:separated-1or2}, for every cluster $C\in\mathscr{C}$, $C$
  has multiplicity $2$ in $\mathscr{M}_N$ if and only if $C$ has indegree
  at least $2$ in $\Hasse[\mathscr{C}]$ if and only if $C$ has multiplicity
  $2$ in $\mathscr{M}_{\tilde{N}}$; and multiplicity $1$ in both
  $\mathscr{M}_N$ and $\mathscr{M}_{\tilde{N}}$ otherwise. Hence, we have
  $\mathscr{M}_N=\mathscr{M}_{\tilde{N}}$. By
  Theorem~\ref{thm:semireg-multiset}, we conclude that $N\simeq \tilde{N}$,
  and thus, $N$ is the unique cluster network.  In particular, by
  Theorem~\ref{thm:cluster-network-charac}, $N$ is the unique semi-regular
  separated phylogenetic network with clustering system $\mathscr{C}$.
\end{proof}

The uniqueness of cluster networks for a given clustering system
$\mathscr{C}$ has been proved in the framework of \emph{reticulate
networks} in \cite[Thm.~3.9]{ALRR:14}, using alternative arguments. A
network $N$ is \emph{reticulate} in the sense of \cite[Def.~2.3]{ALRR:14}
if (a) every hybrid vertex has exactly one child which, moreover, must
be a tree vertex, and (b) if a vertex $v$ has $\outdeg_{N}(v)=1$ and
$\indeg_{N}(v)\le 1$ then $v$ has a unique child and parent, both of which
are hybrid vertices. We observe that the additional condition that $N$
is reticulate does not affect cluster networks, and thus
Thm.~\ref{thm:unique-cluster-network} and \cite[Thm.~3.9]{ALRR:14} are
equivalent.
\begin{proposition}
  If $N$ is a cluster network, then $N$ is reticulate.
\end{proposition}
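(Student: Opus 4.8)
The plan is to verify directly the two defining conditions (a) and (b) of a \emph{reticulate} network in the sense of \cite[Def.~2.3]{ALRR:14} for a given cluster network $N$, reducing each to a property already established for cluster networks.

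Condition~(a) requires that every hybrid vertex of $N$ has exactly one child, and that this child is a tree vertex. This is nothing but condition~(iv) in Def.~\ref{def:cluster-network}, which is part of the hypothesis that $N$ is a cluster network. Hence (a) holds verbatim, with no further argument needed.

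For condition~(b), the key observation is that it is \emph{vacuously} satisfied. Condition~(b) makes a claim only about vertices $v$ with $\outdeg_N(v)=1$ and $\indeg_N(v)\le 1$. I would invoke Thm.~\ref{thm:cluster-network-charac}, which asserts that every cluster network is (semi-regular, separated, and) phylogenetic. By property~(N2) in Def.~\ref{def:N}, a phylogenetic network contains no vertex $v$ with $\outdeg_N(v)=1$ and $\indeg_N(v)\le 1$ at all. Since there are no vertices of the form constrained by~(b) in $N$, condition~(b) is trivially true.

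The proof therefore collapses to a two-line verification: (a) is condition~(iv) of the cluster-network definition, and (b) is vacuous once $N$ is known to be phylogenetic via Thm.~\ref{thm:cluster-network-charac}. There is no genuine obstacle here; the only point worth stating explicitly is the vacuity of~(b), which hinges on the previously established fact that cluster networks are phylogenetic.
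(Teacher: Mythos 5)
Your proof is correct, and for condition~(b) it coincides exactly with the paper's argument: both invoke Thm.~\ref{thm:cluster-network-charac} to conclude that $N$ is phylogenetic, whence (N2) of Def.~\ref{def:N} makes (b) vacuous. Where you diverge is condition~(a): you observe that reticulate condition~(a) is word-for-word condition~(iv) of Def.~\ref{def:cluster-network}, so it holds by definition with no argument at all. The paper instead takes the longer route of re-deriving (a) from the characterization: separatedness gives each hybrid vertex a unique child $u$, and then Obs.~\ref{obs:outdeg-1-cluster}, Lemma~\ref{lem:inclusion}, shortcut-freeness, and (PCC) are combined to rule out $u$ being a hybrid vertex. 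Both are valid; your shortcut is the more economical one, and it is legitimate precisely because the identity of the two conditions is purely syntactic. What the paper's detour buys is a demonstration that (a) follows already from the coarser package ``semi-regular, separated, phylogenetic'' without ever appealing to condition~(iv) as an axiom -- in other words, the paper's proof would survive even if one adopted Thm.~\ref{thm:cluster-network-charac} as the \emph{definition} of cluster networks, whereas yours is tied to the specific axiomatization in Def.~\ref{def:cluster-network}. Since the paper itself proves the two formulations equivalent, this is a matter of presentation rather than of mathematical substance.
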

\begin{proof}
  By Theorem~\ref{thm:cluster-network-charac}, $N$ is semi-regular,
  phylogenetic, and separated. Since $N$ is phylogenetic, it does not
  contain a vertex satisfying condition~(b), and hence (b) is satisfied
  trivially.  Since $N$ is separated, every hybrid vertex $v$ has exactly
  one child $u$. By Obs~\ref{obs:outdeg-1-cluster}, we have
  $\CC(v)=\CC(u)$. Suppose that $u$ is also a hybrid vertex, i.e., there is
  $v'\in\parent_{N}(u)\setminus\{v\}$. Since $N$ is shortcut-free, $v$ and
  $v'$ are $\preceq_{N}$-incomparable. However, by
  Lemma~\ref{lem:inclusion}, we have $\CC(v)=\CC(u) \subseteq \CC(v')$ and
  thus (PCC) implies that $v$ and $v'$ must be $\preceq_{N}$-comparable; a
  contradiction. Hence, $N$ also satisfies condition~(a).  is also
  satisfied for any cluster network.
\end{proof}

\section{Tree-Child, Normal, and Tree-Based Networks}
\label{sec:tree-child-normal-tree-based}

\begin{definition}
  \textnormal{\cite{Cardona:09}}
  \label{def:tree-child}
  A  network $N$ has the \emph{tree-child} property if, for every
  $v\in V^0$, there is a ``tree-child'', i.e., $u\in\child(v)$ with
  $\indeg(u)=1$.
\end{definition}

Tree-child networks are not necessarily phylogenetic, see
Fig.~\ref{fig:not-TC}(A) for an example. As a shown in
\cite[Lemma~2]{Cardona:09}, $N$ is a tree-child network if and only if
every vertex $v\in V$ has a strict descendant, i.e., a leaf $x\in X$ such
that every directed path from the root $\rho_N$ to $x$ contains $v$.

\begin{lemma}
  \label{lem:TC-incomp-overlap}
  Suppose $N$ is tree-child and $u$ and $v$ are $\preceq_N$-incomparable.
  Then there is a vertex $x\in \CC(v)$ such that $x\notin \CC(u)$ and $y\in
  \CC(u)$ such
  that $y\notin \CC(v)$.
\end{lemma}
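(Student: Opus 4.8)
The plan is to lean on the strict-descendant characterization of tree-child networks quoted just before the statement, namely that in a tree-child network every vertex has a strict descendant. First I would invoke this to obtain a strict descendant $y\in X$ of $u$, i.e., a leaf such that every directed path from $\rho_N$ to $y$ passes through $u$. Since $u$ itself lies on such a path, we have $y\preceq_N u$, that is $y\in\CC(u)$. By the symmetric choice, $v$ also has a strict descendant $x\in X$, giving $x\in\CC(v)$. It then remains to show $y\notin\CC(v)$ and, symmetrically, $x\notin\CC(u)$.

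The key step is the argument that $y\notin\CC(v)$. Suppose for contradiction that $y\in\CC(v)$, i.e., $y\preceq_N v$. Take a directed path from $\rho_N$ to $v$ and a directed $v$-$y$ path; their concatenation is a directed walk from $\rho_N$ to $y$ through $v$, and since $N$ is acyclic it cannot repeat a vertex (a repeated vertex $w$ would satisfy $v\preceq_N w\preceq_N v$, forcing $w=v$), so it is a genuine directed path $P$ through $v$. Because $y$ is a strict descendant of $u$, the vertex $u$ must lie on $P$. As $u\ne v$ (incomparable vertices are distinct), $u$ occurs on $P$ strictly before or strictly after $v$: in the former case the subpath of $P$ from $u$ to $v$ yields $v\preceq_N u$, and in the latter case the subpath from $v$ to $u$ yields $u\preceq_N v$. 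Either way $u$ and $v$ would be $\preceq_N$-comparable, contradicting the hypothesis. Hence $y\notin\CC(v)$, so $y\in\CC(u)\setminus\CC(v)$. Exchanging the roles of $u$ and $v$ gives $x\in\CC(v)\setminus\CC(u)$ by the identical argument.

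I do not anticipate a serious obstacle: once the strict-descendant characterization is available, the proof reduces to a short path-concatenation bookkeeping. The only point requiring a little care is confirming that the concatenated root-to-$v$-to-$y$ walk is an honest directed path (so that the defining property of a strict descendant applies to it) and that the forced position of $u$ on it contradicts incomparability; both follow immediately from acyclicity and the definition of strict descendant, together with the observation that a leaf $y$ lies in $\CC(w)$ precisely when $y\preceq_N w$.
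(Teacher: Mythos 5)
Your proof is correct and follows essentially the same route as the paper: both invoke the strict-descendant characterization of tree-child networks and then derive a contradiction with $\preceq_N$-incomparability from the fact that $u$ and $v$ would lie on a common root-to-leaf directed path. The extra bookkeeping you include (verifying that the concatenated root-to-$v$-to-$y$ walk is a genuine path) is a detail the paper leaves implicit, but the argument is identical in substance.
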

\begin{proof}
  Since $N$ is tree-child, there is a strict descendant $x$ of $v$.  It
  satisfies $x\in \CC(v)$ and every path from the root $\rho_N$ to $x$ runs
  through $v$. Now suppose $x\in \CC(u)$. Then there is a directed path
  from $\rho_N$ to $x$ that contains $u$. Since any such path also contains
  $v$, the vertices $u$ and $v$ must be $\preceq_N$-comparable; a
  contradiction.  Thus $x\notin \CC(u)$. The same argument shows that there
  is $y\in \CC(v)$ with $y\notin \CC(u)$.
\end{proof}

\begin{corollary}
  \label{cor:TC->PCC}
  Every tree-child network satisfies (PCC).
\end{corollary}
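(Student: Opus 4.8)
The plan is to verify the two implications of (PCC) separately, noting that only one of them actually requires the tree-child hypothesis. Recall that (PCC) asserts, for all $u,v\in V(N)$, that $u$ and $v$ are $\preceq_N$-comparable if and only if $\CC(u)\subseteq\CC(v)$ or $\CC(v)\subseteq\CC(u)$.

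First I would dispatch the ``only if'' direction, which holds for \emph{every} network and needs nothing beyond Lemma~\ref{lem:inclusion}: if $u$ and $v$ are $\preceq_N$-comparable, then $u\preceq_N v$ or $v\preceq_N u$, and in either case Lemma~\ref{lem:inclusion} yields $\CC(u)\subseteq\CC(v)$ or $\CC(v)\subseteq\CC(u)$, respectively. This part is purely routine.

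The substantive content is the converse, which I would prove by contraposition. Assume $u$ and $v$ are $\preceq_N$-incomparable; the goal is to show that neither cluster contains the other. This is exactly the conclusion delivered by Lemma~\ref{lem:TC-incomp-overlap}: since $N$ is tree-child, that lemma produces a leaf $x\in\CC(v)\setminus\CC(u)$ and a leaf $y\in\CC(u)\setminus\CC(v)$. The existence of $x$ shows $\CC(v)\not\subseteq\CC(u)$, and the existence of $y$ shows $\CC(u)\not\subseteq\CC(v)$. Hence $\preceq_N$-incomparability forces both containments to fail, which is precisely the contrapositive of the ``if'' direction of (PCC).

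I do not anticipate a genuine obstacle here: once Lemma~\ref{lem:TC-incomp-overlap} is in hand, the corollary is essentially a restatement. The only conceptual point worth flagging is \emph{why} tree-childness is the right hypothesis, namely that it guarantees every vertex has a strict descendant leaf, and strict descendants are exactly the witnesses that distinguish the clusters of incomparable vertices. Combining the two directions yields (PCC) for every tree-child network, completing the proof.
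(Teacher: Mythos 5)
Your proof is correct and follows the same route as the paper: Lemma~\ref{lem:inclusion} for the direction from $\preceq_N$-comparability to cluster containment, and Lemma~\ref{lem:TC-incomp-overlap} (via contraposition) to show that incomparable vertices have incomparable clusters. No gaps; this matches the paper's argument essentially verbatim.
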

\begin{proof}
  Let $N$ be a tree-child network and $u,v\in V(N)$.  By
  Lemma~\ref{lem:inclusion}, $u\preceq_N v$ implies
  $\CC(u)\subseteq \CC(v)$. On the other hand, if $u$ and $v$ are
  $\preceq_N$-incomparable, then Lemma~\ref{lem:TC-incomp-overlap} implies
  that either $\CC(u)\cap \CC(v)=\emptyset$ or $\CC(u)$ and $\CC(v)$
  overlap and thus neither $\CC(u)\subseteq \CC(v)$ nor
  $\CC(v)\subseteq \CC(u)$ is satisfied.
\end{proof}
The converse is not true. Fig.~\ref{fig:not-TC}(B) shows an example of a
network that satisfies (PCC) but does not have the tree-child property.

\begin{figure}
  \begin{center}
    \includegraphics[width=0.6\textwidth]{./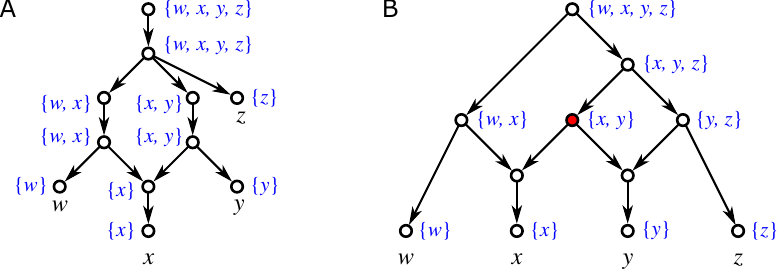}
  \end{center}
  \caption{(A) Example of a network that is semi-regular and
  	tree-child but
    not phylogenetic.  (B) Example of a cluster network, thus satisfying
    (PCC), that is not tree-child. The central node marked in red does not
    have a tree-child. One easily checks that it satisfies (PCC) since
    vertices associated with overlapping pairs of clusters are
    incomparable.}
  \label{fig:not-TC}
\end{figure}

\begin{definition}\label{def:normal}
  A  network is \emph{normal} if it is tree-child and shortcut-free.
\end{definition}

Willson \cite{Willson:10b} studies normal networks in a somewhat different
setting, in which $X$ comprises not only the leaves in our sense but also
the root and all vertices with outdegree $1$. Under this assumption,
\cite[Thm.~3.4]{Willson:10b} states that ``$N$ is regular whenever it is
normal''. The absence of vertices with outdegree $1$ can be included as an
extra condition. The analog of Willson's result in our setting follows
immediately from Cor.~\ref{cor:TC->PCC}, Thm.~\ref{thm:semiregular},
and the absence of shortcuts:
\begin{corollary}
  \label{cor:normal->semireg}
  Let $N$ be a  network. If $N$ is normal, then $N$ is semi-regular.
  If, in addition, there are no vertices with outdegree $1$, then $N$ is
  regular.
\end{corollary}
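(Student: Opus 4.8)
The plan is to obtain the corollary directly by combining the definition of normality with two results already established. For the first assertion, I would unfold Def.~\ref{def:normal}: a normal network $N$ is by definition both tree-child and shortcut-free. The tree-child property immediately gives, via Cor.~\ref{cor:TC->PCC}, that $N$ satisfies (PCC). Since $N$ is moreover shortcut-free, both defining conditions of semi-regularity in Def.~\ref{def:semi-regular} hold, and hence $N$ is semi-regular. No further work is required here, because the substantive step---deducing (PCC) from the existence of strict descendants in tree-child networks, as handled in Lemma~\ref{lem:TC-incomp-overlap}---is already packaged inside Cor.~\ref{cor:TC->PCC}.

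For the second assertion, I would simply append the extra hypothesis that $N$ has no vertex of outdegree $1$ to the conclusion just reached. Having shown $N$ to be semi-regular, I then invoke the equivalence of Thm.~\ref{thm:semiregular}, namely that a network is regular precisely when it is semi-regular and contains no vertex of outdegree $1$; its ``if'' direction delivers the regularity of $N$ at once.

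I expect no genuine obstacle in this proof: the corollary is purely a matter of assembling Cor.~\ref{cor:TC->PCC} and Thm.~\ref{thm:semiregular}, both of which carry the real content. The only point meriting a moment's attention is to ensure that shortcut-freeness, which is built into Def.~\ref{def:normal}, is carried through the argument---it is needed both to apply Def.~\ref{def:semi-regular} in the first step and to match the hypotheses of Thm.~\ref{thm:semiregular} in the second.
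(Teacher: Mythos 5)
Your proposal is correct and follows exactly the paper's own argument: the paper likewise derives the corollary from Cor.~\ref{cor:TC->PCC} (tree-child implies (PCC)), the shortcut-freeness built into Def.~\ref{def:normal}, and the equivalence in Thm.~\ref{thm:semiregular} for the regularity claim. Nothing is missing.
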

The converse is not true, there are (semi-)regular networks that are not
normal, see Fig.~1 of \cite{Willson:10b} and also Fig.~\ref{fig:not-TC}(B)
for a semi-regular example. Hence, we have
\begin{remark}
  Not every semi-regular network, and in particular not every cluster
  network, is normal.
\end{remark}

\begin{proposition}
  \label{prop:unique-normal}
  Let $\mathscr{C}$ be a clustering system. If there is a phylogenetic,
  separated, normal network $N$ with $\mathscr{C}=\mathscr{C}_N$, then $N$
  is unique w.r.t.\ these properties. In particular, $N$ is the unique
  cluster network with $\mathscr{C}=\mathscr{C}_N$.
\end{proposition}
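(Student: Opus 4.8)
The plan is to observe that the three hypotheses placed on $N$---phylogenetic, separated, and normal---already pin $N$ down as a cluster network, after which the uniqueness of cluster networks for a fixed clustering system finishes the argument. First I would exploit the fact that $N$ is normal. By Cor.~\ref{cor:normal->semireg}, every normal network is semi-regular; this is the one step worth stating explicitly, since it quietly bundles together that \emph{tree-child} implies (PCC) (Cor.~\ref{cor:TC->PCC}) and that shortcut-freeness is part of the definition of normal. Consequently $N$ is simultaneously semi-regular, separated, and phylogenetic, and Thm.~\ref{thm:cluster-network-charac} then identifies $N$ as a cluster network with $\mathscr{C}_N=\mathscr{C}$.

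For the uniqueness statement, I would take an arbitrary phylogenetic, separated, normal network $N'$ with $\mathscr{C}_{N'}=\mathscr{C}$ and run the identical chain of implications: $N'$ is semi-regular by Cor.~\ref{cor:normal->semireg}, hence semi-regular, separated, and phylogenetic, hence a cluster network with clustering system $\mathscr{C}$ by Thm.~\ref{thm:cluster-network-charac}. Since $N$ and $N'$ are now both cluster networks sharing the clustering system $\mathscr{C}$, Thm.~\ref{thm:unique-cluster-network}---which asserts that each clustering system admits exactly one cluster network---yields $N\simeq N'$. This is precisely the assertion that $N$ is unique with respect to the stated properties.

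The ``in particular'' clause then comes for free: I have already exhibited $N$ itself as a cluster network with $\mathscr{C}_N=\mathscr{C}$, and Thm.~\ref{thm:unique-cluster-network} guarantees that it is the only cluster network with that clustering system, so $N$ is \emph{the} unique cluster network for $\mathscr{C}$. I do not anticipate any genuine obstacle here, as all the work has been done in the earlier characterization (Thm.~\ref{thm:cluster-network-charac}) and uniqueness (Thm.~\ref{thm:unique-cluster-network}) results; the sole point demanding a moment of care is the passage from \emph{normal} to \emph{semi-regular}, which must cite Cor.~\ref{cor:normal->semireg} rather than being taken as self-evident.
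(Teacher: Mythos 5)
Your proposal is correct and follows exactly the paper's own argument: normal implies semi-regular (Cor.~\ref{cor:normal->semireg}), then semi-regular plus separated plus phylogenetic identifies the network as a cluster network (Thm.~\ref{thm:cluster-network-charac}), and uniqueness follows from Thm.~\ref{thm:unique-cluster-network}. The only difference is that you spell out the second network $N'$ explicitly, which the paper leaves implicit.
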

\begin{proof}
  Suppose $N$ is phylogenetic, separated, and normal and satisfies
  $\mathscr{C}=\mathscr{C}_N$. By Cor~\ref{cor:normal->semireg}, $N$ is
  semi-regular and thus, by Theorem~\ref{thm:cluster-network-charac}, a
  cluster network.  By Thm.~\ref{thm:unique-cluster-network}, $N$ is
  unique.
\end{proof}

From Prop.~\ref{prop:unique-normal} and the definition of binary networks,
we immediately obtain
\begin{corollary}\label{cor:normal-unique-new}
  Let $N$ be a binary normal network. Then $N$ is the unique binary normal
  network whose cluster set is $\mathscr{C}_N$. In particular, $N$ is a
  cluster network.
\end{corollary}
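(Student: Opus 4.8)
The plan is to reduce the statement to Prop.~\ref{prop:unique-normal} by observing that the binary hypothesis already supplies the two extra properties (phylogenetic and separated) that this proposition requires. First I would recall that, by the remark immediately following Def.~\ref{def:binary}, every binary network is automatically phylogenetic and separated. Hence the given network $N$, being binary and normal, is in particular a phylogenetic, separated, normal network with clustering system $\mathscr{C}\coloneqq\mathscr{C}_N$. This shows that the hypothesis of Prop.~\ref{prop:unique-normal} is met for $\mathscr{C}$, so $N$ is unique w.r.t.\ the properties ``phylogenetic, separated, normal'' and, moreover, is the unique cluster network with clustering system $\mathscr{C}$. The final assertion that $N$ is a cluster network is then immediate from this ``in particular'' clause of Prop.~\ref{prop:unique-normal}.

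Next I would verify uniqueness within the (a priori smaller) class of binary normal networks. Let $N'$ be an arbitrary binary normal network with $\mathscr{C}_{N'}=\mathscr{C}_N$. Since $N'$ is binary, it too is phylogenetic and separated; together with normality this makes $N'$ a phylogenetic, separated, normal network with the same clustering system $\mathscr{C}$. Appealing again to the uniqueness assertion of Prop.~\ref{prop:unique-normal}, together with the definition of ``unique'' (namely that $N\simeq N'$ holds for every network $N'$ satisfying the prescribed properties), we obtain $N\simeq N'$. As $N'$ was an arbitrary binary normal network with the same cluster set, $N$ is the unique binary normal network whose cluster set is $\mathscr{C}_N$.

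I do not expect a genuine obstacle here; the only point requiring care is logical rather than technical. The conclusion restricts attention to binary normal competitors, whereas Prop.~\ref{prop:unique-normal} asserts uniqueness against the larger pool of phylogenetic, separated, normal competitors. The argument works precisely because ``binary'' is a strengthening of ``phylogenetic and separated'', so every binary normal competitor already lies inside the class controlled by Prop.~\ref{prop:unique-normal}; uniqueness in the larger class therefore yields uniqueness in the smaller one for free, with no additional case analysis needed.
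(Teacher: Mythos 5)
Your proposal is correct and follows exactly the paper's (implicit) argument: the paper derives Cor.~\ref{cor:normal-unique-new} directly from Prop.~\ref{prop:unique-normal} together with the observation, recorded after Def.~\ref{def:binary}, that binary networks are always phylogenetic and separated. Your additional remark that uniqueness against the larger class of phylogenetic, separated, normal competitors transfers to the subclass of binary normal competitors is precisely the step the paper leaves implicit, and you have justified it correctly.
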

The following result appears to be well known, see e.g.\
\cite{Murakami:19}. An argument for binary tree-child and level-1
networks can be found \cite{huber2019orienting}. A direct proof for our
more general setting is included here for completeness.
\begin{proposition}
  \label{prop:phy-level1->TC}
  Every phylogenetic level-1 network is tree-child.
\end{proposition}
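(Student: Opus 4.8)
The plan is to argue by contradiction. Suppose $N$ is phylogenetic and level-$1$ but not tree-child, so there is an internal vertex $v\in V^0$ all of whose children are hybrid vertices. The goal is to exhibit a single non-trivial block that properly contains two distinct hybrid vertices, directly contradicting level-$1$ (Def.~\ref{def:level-k}). First I would fix one child $u\in\child(v)$; since $u$ is hybrid and $v$ is a parent of $u$, Lemma~\ref{lem:hybrid-properly-contained} yields a non-trivial block $B$ that properly contains $u$ and also contains $v$, and Lemma~\ref{lem:properly-contained} gives $u\in V(B)\setminus\{\max B\}$. Everything then hinges on the position of $v$ in $B$.

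The main case split is on whether $v=\max B$. If $v=\max B$, then by Cor.~\ref{cor:maxB-outdegree} the vertex $v$ has at least two out-neighbours inside $B$, so there is a child $w\neq u$ of $v$ with $w\in V(B)\setminus\{\max B\}$. By hypothesis $w$ is hybrid, and by Lemma~\ref{lem:properly-contained} it is properly contained in $B$; hence $B$ properly contains the two distinct hybrid vertices $u$ and $w$, contradicting level-$1$.

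If $v\neq\max B$, I would split according to $\outdeg(v)$. When $\outdeg(v)\ge 2$, pick a second child $u'$ of $v$; it is hybrid, so it lies in a non-trivial block $B'$ together with $v$, again with $u'\neq\max B'$. If $v=\max B'$ we are back in the previous case (applied to $B'$); otherwise $v\notin\{\max B,\max B'\}$, so Lemma~\ref{lem:block-identity} forces $B=B'$, and then $B$ properly contains the distinct hybrids $u$ and $u'$ --- contradiction. When $\outdeg(v)=1$, the vertex $u$ is the unique child of $v$; here the phylogenetic hypothesis (N2) is used: $\outdeg(v)=1$ forces $\indeg(v)\ge 2$, so $v$ is itself hybrid. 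Lemma~\ref{lem:hybrid-properly-contained} then places $v$ in a non-trivial block $B'$ with $v\neq\max B'$, and since $v\in V(B)\cap V(B')$ with $v\notin\{\max B,\max B'\}$, Lemma~\ref{lem:block-identity} again gives $B=B'$; now $u$ and $v$ are distinct hybrid vertices properly contained in $B$ (distinct because $u\prec_N v$), contradicting level-$1$.

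The delicate point --- and where I expect the real work to lie --- is the $v=\max B$ configuration: a priori $v$ could sit at the top of several distinct blocks, so no single block is forced to be shared. The resolution is that maximality of $v$ in $B$ supplies, via Cor.~\ref{cor:maxB-outdegree}, a second child of $v$ inside that very block, which under the ``all children hybrid'' hypothesis is a second hybrid properly contained in $B$. I would also take care to note that the phylogenetic assumption is genuinely needed only in the outdegree-$1$ case, where it upgrades $v$ to a hybrid vertex so that $v$ itself serves as the second hybrid in the shared block.
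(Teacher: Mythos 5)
Your proof is correct, and it follows the same overall strategy as the paper's --- a contradiction obtained by exhibiting two hybrid vertices distinct from $\max B$ inside a single block, using the same toolbox of block lemmas --- but it resolves the critical configuration by a different key lemma. The paper cases on $\outdeg(v)$ first, and its hardest sub-case ($\outdeg(v)\ge 2$ with $v=\max B_i$ for the block $B_i$ of some hybrid child $u_i$) is handled by taking a second parent $w_i$ of $u_i$, using acyclicity to find another child $u_j$ of $v$ with $w_i\preceq_N u_j\prec_N v$, and invoking Lemma~\ref{lem:block-prec-sandwich} to force $u_j\in V(B_i)$. You instead case on whether $v=\max B$ first and dispose of that configuration in one step with Cor.~\ref{cor:maxB-outdegree}: the maximum of a non-trivial block has two out-neighbors inside the block, which under the all-children-hybrid hypothesis are immediately two hybrid vertices of $B$ distinct from $\max B$. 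This is genuinely simpler --- it avoids the second-parent/acyclicity/path argument entirely --- and it uses no stronger hypotheses, so nothing is lost. Where the two proofs overlap they agree: Lemma~\ref{lem:block-identity} merges the blocks of two children when $v$ is the maximum of neither, and (N2) is used exactly once, to make $v$ itself hybrid when $\outdeg(v)=1$ (the paper rules out $v=\max B_u$ there via Cor.~\ref{cor:maxB-outdegree}, while you have it by case hypothesis). One small redundancy on your side: in the outdegree-$1$ case the excursion through $B'$ and a second application of Lemma~\ref{lem:block-identity} is unnecessary, since at that point you already have $u,v\in V(B)$, both hybrid and both distinct from $\max B$, which contradicts Def.~\ref{def:level-k} directly.
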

\begin{proof}
  Let $N$ be a phylogenetic level-$1$ network.  If $u$ is a hybrid vertex,
  then, by Lemma~\ref{lem:hybrid-properly-contained}, there is a
  non-trivial block $B_v$ that contains $v$ and all its parents.  Suppose,
  for contradiction, there is a non-leaf vertex $v$ whose children are all
  hybrid vertices.  Suppose first that $\outdeg(v)=1$. Since $N$ is
  phylogenetic, this implies that $v$ is hybrid vertex.  Let $u$ be the
  unique child of $v$, which is a hybrid vertex by assumption. The hybrid
  vertices $u$ and $v$ are contained in a common non-trivial block $B_u$.
  In particular, $u\prec_{N} v$ implies $u\ne \max B_u$. Additionally,
  $v\ne \max B_u$ since $\outdeg(v)=1$ but $\outdeg(\max B_u)>1$ by
  Lemma~\ref{cor:maxB-outdegree}; contradicting that $N$ is level-1.  Now
  suppose that $\outdeg(v)\ge 2$. Since all $u_i\in\child(v)$ are hybrid
  vertices, $v$ and $u_i$ are contained in blocks $B_i\coloneqq B_{u_i}$.
  Note that $u_i\ne \max B_i$. If there are two distinct
  $u_i,u_j\in\child(v)$ such that $v\notin \{\max B_i, \max B_j\}$, then
  Lemma~\ref{lem:block-identity} implies $B_i=B_j$ and thus $B_i$ contains
  the hybrid vertices $u_i,u_j\ne \max B_i$; this contradicts that $N$ is
  level-1. Otherwise, there is at least one $u_i\in\child(v)$ such that
  $v=\max B_i$. Since $u_i$ is a hybrid vertex, there is $w_i\in
  \parent_{N}(u_i)\setminus \{v\}$, which is also contained in
  $B_i$. Hence, we have $u_i\prec_{N} w_i\prec_{N} \max B_i = v$. Therefore,
  and because $N$ is acyclic, there is $u_j\in\child_{N}(v)$ such that
  $w_i\preceq_{N} u_j\prec_{N} v$. By assumption, $u_j$ is a hybrid vertex
  and moreover $u_j\notin \{u_i, v=\max B_i \}$. By
  Lemma~\ref{lem:block-prec-sandwich}, $w_i\preceq_{N} u_j\prec_{N} v$ and
  $w_i,v\in V(B_i)$ imply $u_j\in V(B_i)$.  Hence, $B_i$ contains two
  distinct hybrid vertices $u_i, u_j\ne\max B_i$, contradicting that $N$ is
  level-1.
\end{proof}

Note that ``phylogenetic'' cannot be omitted in
Prop.~\ref{prop:phy-level1->TC}: Consider a tree-vertex $v$ with a hybrid
child $u\in\child(v)$. Subdivision of the arc $(v,u)$ creates a new tree
vertex $u'\in\child(v)$ with the hybrid $u$ as its only child.  The
modified network is still level-$1$ but no longer tree-child.

Next we consider the overlapping clusters in tree-child networks in
some more detail:
\begin{lemma}
  \label{lem:tree-child-intersections}
  Let $N$ be a tree-child network and suppose $u$ and $v$ are
  $\preceq_N$-incomparable. Then either $\CC(u)\cap \CC(v)=\CC(h)$ for some
  hybrid vertex $h\in V(N)$ or $\CC(u)\cap \CC(v)\notin\mathscr{C}_N$.
\end{lemma}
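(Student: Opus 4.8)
The statement is a dichotomy, so the plan is to assume that $C\coloneqq\CC(u)\cap\CC(v)$ is a cluster of $N$ and to deduce that then $C=\CC(h)$ for some hybrid vertex $h$; if $C\notin\mathscr{C}_N$ there is nothing to show. Note first that $C\in\mathscr{C}_N$ forces $C\neq\emptyset$, since $\emptyset\notin\mathscr{C}_N$. Since $N$ is tree-child, it satisfies (PCC) by Cor.~\ref{cor:TC->PCC}. As $u$ and $v$ are $\preceq_N$-incomparable, (PCC) guarantees that neither of $\CC(u),\CC(v)$ contains the other, whence $C\subsetneq\CC(u)$ and $C\subsetneq\CC(v)$ both hold strictly.

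Now fix a vertex $w$ with $\CC(w)=C$. Because $C\subsetneq\CC(u)$ and $C\subsetneq\CC(v)$, Obs.~\ref{obs:subsetneq-implies-below} yields $w\prec_N u$ and $w\prec_N v$. The idea is to ``climb upward'' from $w$ while keeping the cluster equal to $C$: consider the set
\[
  S\coloneqq\{\,z\in V(N)\mid \CC(z)=C,\ z\prec_N u,\ z\prec_N v\,\},
\]
which is finite and contains $w$. I would pick a $\preceq_N$-maximal element $h\in S$ and claim that $h$ is necessarily a hybrid vertex; since $\CC(h)=C$, this proves the lemma.

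The heart of the argument is to rule out that the maximal element $h$ is a tree vertex. Assuming $h$ is a tree vertex, it is not the root (as $h\prec_N u$), so $\indeg_N(h)=1$ and $h$ has a unique parent $p$. Every directed $uh$-path must end with the arc $p\to h$, so $p$ lies on it and $h\prec_N p\preceq_N u$; the delicate point --- and the only place where incomparability of $u$ and $v$ is used --- is to exclude $p=u$: if $p=u$, then the analogous $vh$-path would also end with $u\to h$ and thus show $u\preceq_N v$, contradicting incomparability. Hence $p\prec_N u$, and symmetrically $p\prec_N v$. Lemma~\ref{lem:inclusion} then squeezes $C=\CC(h)\subseteq\CC(p)\subseteq\CC(u)\cap\CC(v)=C$, so $\CC(p)=C$ and $p\in S$ with $h\prec_N p$, contradicting maximality of $h$. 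Thus $h$ must be hybrid. I expect the exclusion of $p\in\{u,v\}$ to be the main obstacle; everything else is bookkeeping with Lemma~\ref{lem:inclusion} and (PCC). I also note that this argument in fact only uses (PCC), so it applies verbatim to any network satisfying (PCC), with tree-childness entering solely through Cor.~\ref{cor:TC->PCC}.
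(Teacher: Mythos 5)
Your proof is correct, and it takes a genuinely different route from the paper's. The paper places $u$ and $v$ in a common non-trivial block via Lemma~\ref{lem:lower-path}, invokes Lemma~\ref{lem:uvmin} to rule out the level-1 case, and then applies Lemma~\ref{lem:union-hybrid-clusters} to write $\CC(u)\cap\CC(v)=\bigcup_{h\in H}\CC(h)$ for a set $H$ of at least two hybrid vertices; the contradiction is then obtained from the tree-child property itself, by following a path of tree vertices from a hypothetical non-hybrid witness $w$ down to a leaf $x\in\CC(w)$ that cannot lie below any $h\in H$. Your argument avoids all of this block machinery: you take a $\preceq_N$-maximal vertex $h$ with $\CC(h)=\CC(u)\cap\CC(v)$ below both $u$ and $v$, and observe that if $h$ were a tree vertex, its unique parent $p$ would again lie strictly below both $u$ and $v$ (the exclusion of $p\in\{u,v\}$ via incomparability is handled correctly) and would carry the same cluster by the squeeze $\CC(h)\subseteq\CC(p)\subseteq\CC(u)\cap\CC(v)$, contradicting maximality. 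Each approach buys something: yours is more elementary and, as you note, proves the stronger statement that the dichotomy holds in \emph{every} network satisfying (PCC), since tree-childness enters only through Cor.~\ref{cor:TC->PCC}, whereas the paper's proof uses tree-childness essentially in its final step. The paper's route, on the other hand, yields the finer structural byproduct that in a tree-child network the intersection of overlapping clusters is always the union of the clusters of the properly contained hybrid vertices below both endpoints, which describes what $\CC(u)\cap\CC(v)$ looks like even when it fails to be a cluster.
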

\begin{proof}
  By Cor.~\ref{cor:TC->PCC}, $N$ satisfies (PCC) and thus either
  $\CC(u)\cap \CC(v)=\emptyset$, in which case the assertion is obviously
  true, or $\CC(u)$ and $\CC(v)$ overlap. In the latter case,
  Lemma~\ref{lem:lower-path} implies that $u,v$ are contained in a common
  non-trivial block $B$.  Set $A\coloneqq \CC(u)\cap \CC(v)$ and assume
  $A\ne \CC(h)$ for any hybrid vertex $h\in V(N)$. By
  Lemma~\ref{lem:uvmin}, this implies that $N$ is level-$k$ for some $k\ge
  2$.  Lemma~\ref{lem:union-hybrid-clusters} implies that $A=\bigcup_{h\in
    H} \CC(h)$ for some set $H$ of $\vert H\vert\ge 2$ hybrid vertices.
  Furthermore, we have $\CC(h)\subsetneq A$ for all $h\in H$.  Now suppose,
  for contradiction, that there is a non-hybrid vertex $w\in V(N)$ such
  that $\CC(w)=\bigcup_{h\in H} \CC(h)$. Then, for all $h\in H$, we have
  $\CC(h)\subsetneq \CC(w)$, which together with
  Cor.~\ref{obs:subsetneq-implies-below} implies $h\prec_N w$. Moreover,
  all elements in $\CC(w)$ are descendants of one of the hybrid vertices in
  $H$.  Since $N$ is tree-child, there is a leaf $x$ that is reachable from
  $w$ along a path consisting entirely of tree vertices and thus cannot be
  a descendant of any hybrid vertex $h\prec_N w$. That is, there is $x\in
  \CC(w)$ and $x\notin \CC(h)$ for all $h\in H$; a contradiction. Therefore
  $A\notin\mathscr{C}_N$.
\end{proof}

In particular, the case $\CC(u)\cap \CC(v)\notin\mathscr{C}_N$ in
Lemma~\ref{lem:tree-child-intersections} can indeed occur even if
$\CC(u)\cap \CC(v)\ne\emptyset$ as the example in Fig.~\ref{fig:not-closed}
shows.  Hence, the clustering system $\mathscr{C}_N$ of a tree-child
network $N$ is not necessarily closed.
\begin{figure}[t]
  \centering
  \includegraphics[width=0.35\textwidth]{./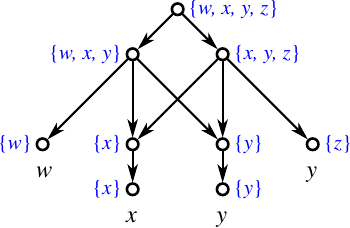}
  \caption{A tree-child network whose clustering system $\mathscr{C}$ is
    not closed since $\{w,x,y\}\cap \{x,y,z\}=\{x,y\}\notin\mathscr{C}$.}
  \label{fig:not-closed}
\end{figure}

\begin{corollary}
  Let $N$ be a tree-child network with clustering system $\mathscr{C}$ and let
  $C_1,C_2\in\mathscr{C}$ be a pair of overlapping clusters. Then $C_1\cap
  C_2\in\mathscr{C}$ if and only if there is a hybrid vertex $h\in V(N)$ such
  that $C_1\cap C_2=\CC(h)$.
\end{corollary}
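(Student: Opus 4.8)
The plan is to deduce this corollary directly from Lemma~\ref{lem:tree-child-intersections}, since it is essentially a reformulation of that lemma in the special case of overlapping clusters. First I would fix vertices $u,v\in V(N)$ with $\CC(u)=C_1$ and $\CC(v)=C_2$, which exist because $C_1,C_2\in\mathscr{C}=\mathscr{C}_N$. The key preliminary observation is that $u$ and $v$ must be $\preceq_N$-incomparable: if, say, $u\preceq_N v$, then Lemma~\ref{lem:inclusion} would give $C_1=\CC(u)\subseteq\CC(v)=C_2$, contradicting that $C_1$ and $C_2$ overlap, and symmetrically for $v\preceq_N u$. Hence the hypotheses of Lemma~\ref{lem:tree-child-intersections} are satisfied for the pair $u,v$.

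For the ``if'' direction, suppose there is a hybrid vertex $h\in V(N)$ with $C_1\cap C_2=\CC(h)$. Since $\CC(h)\in\mathscr{C}_N=\mathscr{C}$ for every vertex of $N$, this immediately yields $C_1\cap C_2\in\mathscr{C}$. For the ``only if'' direction, I would assume $C_1\cap C_2\in\mathscr{C}$ and apply Lemma~\ref{lem:tree-child-intersections} to the incomparable pair $u,v$. That lemma gives exactly two alternatives: either $C_1\cap C_2=\CC(u)\cap\CC(v)=\CC(h)$ for some hybrid vertex $h$, or $C_1\cap C_2=\CC(u)\cap\CC(v)\notin\mathscr{C}_N$. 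The latter is ruled out by the assumption $C_1\cap C_2\in\mathscr{C}=\mathscr{C}_N$, so the former must hold, producing the desired hybrid vertex $h$.

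There is no substantial obstacle here, as the entire content is packaged in Lemma~\ref{lem:tree-child-intersections}. The only point that requires care is the translation between the corollary's hypothesis (\emph{overlapping clusters}) and the lemma's hypothesis (\emph{$\preceq_N$-incomparable vertices}); this is handled by Lemma~\ref{lem:inclusion} as above, so I would make sure to establish incomparability explicitly before invoking the lemma.
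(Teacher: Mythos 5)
Your proof is correct and follows exactly the route the paper intends: the corollary is stated as an immediate consequence of Lemma~\ref{lem:tree-child-intersections}, and your writeup simply makes explicit the routine translation from overlapping clusters to $\preceq_N$-incomparable vertices via Lemma~\ref{lem:inclusion}, together with the trivial ``if'' direction. Nothing is missing.
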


Another class of networks that has received considerable attention in the
last decade are tree-based networks
\cite{Francis:2015kj,Zhang:16,Jetten:18,Pons:19}. They capture the idea
that networks can be obtained from (the subdivision of) a tree by inserting
additional arcs:
\begin{definition}\label{def:tree-based}
  A network $N$ is called \emph{tree-based} with base tree $T$ if $N$ can be
  obtained from $T$ by (a) subdividing the arcs of $T$ by introducing
  vertices with in- and outdegree $1$ (called \emph{attachment points}),
  and (b) adding arcs (called \emph{linking arcs}) between pairs of
  vertices, so that $N$ remains acyclic.
\end{definition}
This definition further generalizes the original one for non-binary
tree-based networks in \cite{Jetten:18} in the sense that we do not require
the two properties phylogenetic and separated and that we allow to have
additional arcs between non-attachment points and attachment points may
have in- and outdegree greater than one. This generalization ensures that
all trees, i.e., in particular non-phylogenetic trees remain tree-based.

Equivalently, a network $N$ on $X$ is tree-based if and only if there is a
rooted (not necessarily phylogenetic) spanning tree $T$ with leaf set $X$,
i.e., there are no \emph{dummy leaves} in $T$ that correspond to inner
vertices in $N$. Clearly, the unique incoming arc $(u,v)$ of a tree vertex
$v$ must be contained in every rooted spanning tree $T$ of a network $N$
and thus $u$ cannot be a dummy leaf in $T$.  As an immediate consequence,
the well-known fact that tree-child network are always tree-based
\cite{Pons:19}, remains also true in our generalized setting: and does in
particular, not require the properties phylogenetic and separated:
\begin{fact}
  \label{obs:treechild->treebased}
  Every tree-child network is tree-based.
\end{fact}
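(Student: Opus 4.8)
The plan is to use the spanning-tree characterization of tree-basedness established just above the observation: it suffices to exhibit a rooted spanning tree $T$ of $N$ (a spanning subgraph that is a tree) whose leaf set is exactly $X$, i.e.\ one without dummy leaves. Such a rooted spanning tree can be specified by selecting, for every vertex $v\neq\rho_N$, precisely one incoming arc of $N$; the entire task then reduces to choosing these arcs so that no inner vertex of $N$ ends up childless in $T$.

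First I would fix the arc selection using the tree-child structure. For every tree vertex $v\neq\rho_N$ (so $\indeg_N(v)=1$), its unique incoming arc is forced into $T$; for every hybrid vertex $w$ (so $\indeg_N(w)\ge 2$), I pick one of its incoming arcs arbitrarily. Let $T$ be the subgraph of $N$ with vertex set $V(N)$ and exactly these selected arcs. I would then verify that $T$ is a rooted spanning tree: every vertex other than $\rho_N$ has exactly one parent in $T$ and $\rho_N$ has none, so $T$ has $|V(N)|-1$ arcs. Starting from any vertex and repeatedly passing to the selected parent strictly ascends with respect to $\preceq_N$ (a parent $p$ of $v$ satisfies $v\prec_N p$), so by finiteness and acyclicity of $N$ this chain terminates at the unique vertex of indegree $0$, namely $\rho_N$ by (N1). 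Hence every vertex is connected to $\rho_N$ in $T$, and a connected graph on $|V(N)|$ vertices with $|V(N)|-1$ arcs is a tree.

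Next I would show that $T$ has no dummy leaves. Let $v\in V^0$ be an inner vertex of $N$. By the tree-child property (Def.~\ref{def:tree-child}) there is a child $u\in\child_N(v)$ with $\indeg_N(u)=1$. Since $u$ has indegree $1$, its unique incoming arc $(v,u)$ is exactly the one forced into $T$, so $u$ is a child of $v$ in $T$; thus $v$ is not a leaf of $T$. Conversely, every $x\in X$ has outdegree $0$ in $N$ and hence in $T$, so the leaves of $T$ are precisely the vertices of $X$. (The degenerate case $V(N)=\{\rho_N\}=X$ is a single leaf and is trivially tree-based.) By the spanning-tree characterization, $N$ is therefore tree-based.

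I expect the only genuine subtlety to be the verification that the selected arcs form a tree rather than a forest or a graph containing an undirected cycle; this is exactly where acyclicity of $N$ and uniqueness of the root are used. The link between tree-child and the absence of dummy leaves is then immediate: because tree-children have indegree $1$, their membership in $T$ is not a choice but is forced, which automatically supplies every inner vertex with a child in $T$.
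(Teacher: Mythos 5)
Your proposal is correct and follows essentially the same route as the paper: both invoke the spanning-tree characterization of tree-basedness and exploit the fact that the unique incoming arc of an indegree-$1$ tree-child is forced into the spanning tree, so no inner vertex can be a dummy leaf. The only difference is that the paper treats this as an immediate remark (applying to every rooted spanning tree), while you explicitly construct one spanning tree and verify it is a tree, which fills in details the paper leaves implicit.
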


\begin{figure}[t]
  \begin{center}
    \includegraphics[width=0.65\textwidth]{./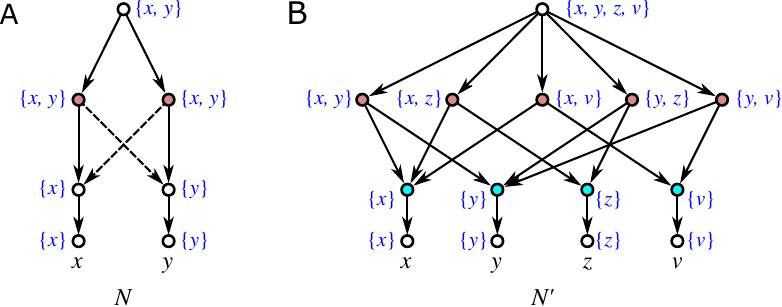}
  \end{center}
  \caption{(A) A network that is tree-based (with a possible base tree
    indicated by the solid-line arcs) but that does not satisfy
    (PCC). (B) A cluster network (thus satisfying (PCC)) that is not
    tree-based, see details in the text.}
  \label{fig:tree-based}
\end{figure}

Fig.~\ref{fig:tree-based}(A) shows a network $N$ that is tree-based since
removal of the dashed arcs results in a rooted spanning tree whose leaves
are exactly the leaves of $N$. However, $N$ does not satisfy (PCC) since
the two vertices highlighted in orange correspond to the same cluster but
are $\preceq_{N}$-incomparable.  Conversely, the example in
Fig.~\ref{fig:tree-based}(B) shows that (PCC), or even semi-regularity,
does not imply that a network is tree-based.  In particular, the network
$N'$ is the cluster network for $\mathscr{C}= \{ \{x\}, \{y\}, \{z\},
\{v\}, \{x,y\}, \{x,z\}, \{x,v\}, \{y,z\}, \{y,v\}, \{x,y,z,v\} \}$.  To
see that $N'$ is not tree-based, consider the set $U$ of all inner vertices
whose children are all hybrid vertices, called \emph{omnians} in
\cite{Jetten:18}. These vertices are highlighted in orange.  In any rooted
spanning tree, each of the four hybrid vertices (highlighted in cyan) has
exactly one parent. Since, in addition, none of the five omnians has a
child that is not one of the four hybrid vertices, one easily verifies that
at least one omnian must be a dummy leaf in every spanning tree. Therefore,
$N'$ is not tree-based. The latter observation is in line with Cor.~3.6 in
\cite{Jetten:18}, which holds for a more restricted definition of
tree-based and states that a (phylogenetic, separated) network is
tree-based if and only if, for all $S\subseteq U$, the number of different
children of the vertices in $S$ is greater than or equal to $\vert S\vert$.
Clearly, the latter is not satisfied for $S=U$ in the example.  We
summarize the latter findings in
\begin{fact}
  Not every tree-based network satisfies (PCC). Moreover, there are cluster
  networks and thus, phylogenetic separated networks that do not satisfy
  (PCC).
\end{fact}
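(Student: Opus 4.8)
The plan is to settle the two assertions as independent non-implications, each by an explicit counterexample; in fact one small network serves for both. First I would fix the network $N$ on $X=\{x,y\}$ with vertex set $\{\rho,s,t,h_1,h_2,x,y\}$ and arcs $(\rho,s),(\rho,t),(s,h_1),(s,h_2),(t,h_1),(t,h_2),(h_1,x),(h_2,y)$. I would verify that $N$ is a network ($\rho$ is the unique indegree-$0$ vertex and $N$ is acyclic, being layered), that it is phylogenetic (the only vertices of outdegree $1$ are $h_1,h_2$, both of indegree $2$, so (N2) of Def.~\ref{def:N} holds), and that it is separated (its only hybrid vertices $h_1,h_2$ have outdegree $1$, and both leaves have indegree $1$).

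Next I would compute $\CC(h_1)=\{x\}$, $\CC(h_2)=\{y\}$, and $\CC(s)=\CC(t)=\{x,y\}$. Since $s$ and $t$ are distinct children of $\rho$ with neither a descendant of the other, they are $\preceq_N$-incomparable; yet $\CC(s)\subseteq\CC(t)$, so $N$ violates (PCC). This proves the second assertion, as $N$ is a phylogenetic separated network failing (PCC). The same $N$ also yields the first assertion: the rooted spanning tree with arcs $(\rho,s),(\rho,t),(s,h_1),(t,h_2),(h_1,x),(h_2,y)$ has leaf set $X$, so by Def.~\ref{def:tree-based} $N$ is tree-based, and hence $N$ is a tree-based network failing (PCC) as well.

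Finally I would make explicit the single conceptual point at stake: a network witnessing the second assertion can never itself be a cluster network, because every cluster network satisfies (PCC) by Def.~\ref{def:cluster-network}(i), equivalently by Thm.~\ref{thm:cluster-network-charac}. Thus the correct content is that the ambient class of phylogenetic separated networks, from which cluster networks are carved out, is strictly larger: phylogenetic and separated alone do not entail (PCC), so (PCC) is indispensable in Thm.~\ref{thm:cluster-network-charac}. The only real work is the simultaneous verification that $N$ enjoys all the stated properties, which reduces to routine degree bookkeeping, the two cluster computations, and the one exhibited spanning tree; I expect no genuine obstacle beyond this bookkeeping and the clarifying remark just made.
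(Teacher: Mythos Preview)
Your proof of the first assertion is correct and self-contained; your network is essentially a separated version of the rooted $K_{2,3}$ that the paper already flags in Fig.~\ref{fig:no-PCC}, whereas the paper itself appeals to the (larger) example in Fig.~\ref{fig:tree-based}(A). Both work.

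On the second assertion you have correctly spotted that the statement, read literally, is false: every cluster network satisfies (PCC) by Def.~\ref{def:cluster-network}(i), so no cluster network can fail (PCC). Your reinterpretation---``phylogenetic and separated do not force (PCC)''---is a true and useful observation, and your single example establishes it. However, this is \emph{not} what the paper intends. From the surrounding discussion it is clear that the printed phrase ``that do not satisfy (PCC)'' is a slip for ``that are not tree-based'': the paragraph immediately preceding the Fact exhibits in Fig.~\ref{fig:tree-based}(B) a cluster network (on four leaves, with five omnians and four hybrid children) that fails to be tree-based, invoking the omnian-count argument of Jetten--van Iersel. The Fact is then meant to summarize the two directions ``tree-based $\not\Rightarrow$ (PCC)'' and ``cluster network $\not\Rightarrow$ tree-based''.

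So your write-up is internally sound, but it proves a different second clause than the paper does. If you want to match the paper, you would need to replace your second paragraph by an example of a cluster network that is not tree-based; your small two-leaf network cannot serve here, since it \emph{is} tree-based but is \emph{not} a cluster network (it violates (PCC)). The paper's Fig.~\ref{fig:tree-based}(B) example, or any other cluster network whose omnians outnumber their hybrid children, would do. Your closing remark about the indispensability of (PCC) in Thm.~\ref{thm:cluster-network-charac} is correct and worth keeping as a side comment, but it is orthogonal to the paper's intended content of the Fact.
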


\section{Least Common Ancestors and Lca-Networks}
\label{sec:lcaN-main}

\subsection{Basics}
\label{ssec:LCA-N-basic}

\begin{definition} \cite{Bender:01}\
  \label{def:lca}
  A \emph{least common ancestor} (LCA) of a subset $Y\subseteq V$ in a DAG
  $N$ is a $\preceq_N$-minimal vertex of $V$ that is an ancestor of all
  vertices in $Y$.
\end{definition}
In general DAGs $N$, an LCA does not necessarily exist for a given vertex
set.  Moreover, the LCA is not unique in general. We write $\LCA(Y)$ for the
(possibly) empty set of $\preceq_{N}$-minimal ancestors of the elements in $Y$.
In a (phylogenetic) network $N$, the root $\rho_N$ is an ancestor of all
vertices in $V$, and thus a least common ancestor exists for all $Y\subseteq
V$. The LCA sets retain key information on the partial order $\preceq_N$:
\begin{fact}
  \label{obs:LCA-PO}
  Let $N$ be a network on $X$ and $Z\subseteq Y\subseteq V$. Then for
  every $y\in\LCA(Y)$ there is $z\in\LCA(Z)$ such that $z\preceq_N y$.
\end{fact}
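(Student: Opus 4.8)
The plan is to exploit the simple monotonicity of the common-ancestor relation under subset inclusion. Since $Z\subseteq Y$, any vertex that is an ancestor of all elements of $Y$ is, in particular, an ancestor of all elements of $Z$. So the first step is to fix an arbitrary $y\in\LCA(Y)$ and observe that $y$ is a common ancestor of $Z$ (i.e.\ $z\preceq_N y$ for all $z\in Z$), although $y$ need not be a $\preceq_N$-minimal one. This reduces the task to the following order-theoretic fact: below any common ancestor of $Z$ one can always find a $\preceq_N$-minimal common ancestor of $Z$.

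Next I would consider the truncated set $S\coloneqq\{\,w\in V(N)\mid w\preceq_N y \text{ and } w \text{ is a common ancestor of } Z\,\}$. This set is non-empty, because $y\in S$, and finite because $V(N)$ is finite. Hence $S$ has a $\preceq_N$-minimal element $z$, which by construction satisfies $z\preceq_N y$ and is a common ancestor of $Z$. (Note that $\LCA(Z)\neq\emptyset$ is guaranteed anyway, since $\rho_N$ is an ancestor of every vertex, but here I locate the witness explicitly below $y$.)

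It then remains to upgrade the minimality of $z$ from $S$ to the full set of common ancestors of $Z$, thereby showing $z\in\LCA(Z)$. Suppose, for contradiction, that there were a common ancestor $z'$ of $Z$ with $z'\prec_N z$. Transitivity of $\preceq_N$ together with $z\preceq_N y$ would give $z'\preceq_N y$, so $z'\in S$ with $z'\prec_N z$, contradicting the $\preceq_N$-minimality of $z$ in $S$. Hence $z$ is $\preceq_N$-minimal among all common ancestors of $Z$, i.e.\ $z\in\LCA(Z)$, and $z\preceq_N y$, as required.

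I do not anticipate any genuine obstacle here, as the statement is essentially an order-theoretic triviality. The only point demanding care is the last step, where minimality within the truncated set $S$ must be promoted to minimality in the full collection of common ancestors of $Z$; this is precisely where transitivity of $\preceq_N$ enters, and it is what makes the explicit restriction to ancestors below $y$ worthwhile rather than simply invoking existence of some element of $\LCA(Z)$.
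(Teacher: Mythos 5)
Your proof is correct and follows essentially the same route as the paper: observe that $y$ is a common ancestor of $Z$, pick a $\preceq_N$-minimal common ancestor of $Z$ among the descendants of $y$, and conclude it lies in $\LCA(Z)$. The only difference is that you make explicit (via transitivity of $\preceq_N$) the step promoting minimality in the truncated set to minimality among all common ancestors, which the paper's one-line proof leaves implicit.
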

\begin{proof}
  Consider $y\in\LCA(Y)$. Then $y$ is also an ancestor of all vertices in
  $Z$ and thus there is a $\preceq_N$ minimal descendant $z\preceq_N y$
  that is an ancestor of all vertices in $Z$, i.e., $z\in\LCA(Z)$.
\end{proof}

If $\LCA(Y) = \{u\}$ consists of a single element $u$ only, we write
$\lca(Y)=u$. In other words, $\lca(Y) = u$ always implies that the
$\preceq_{N}$-minimal ancestor of the elements in $Y$ exists and is
uniquely determined. We leave $\lca(Y)$ undefined for all $Y$ with
$\vert\LCA(Y)\vert\ne1$.

In \cite{HS18}, least common ancestors $u$ are defined in terms of the fact
that no child of $u$ is an ancestor of all vertices in $Y$. These
definitions are equivalent:
\begin{lemma}
  \label{lem:lca-equivalence}
  Let $N$ be a network and $\emptyset\ne Y \subseteq V$.  Then $u\in V$ is
  a least common ancestor of $Y$ if and only if $u$ is an ancestor of all
  vertices in $Y$ but there is no $v\in\child_N(u)$ that is an ancestor of
  all vertices in $Y$.
\end{lemma}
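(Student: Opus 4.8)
The plan is to prove the two implications of the equivalence directly from Def.~\ref{def:lca}, using only that an LCA is a $\preceq_N$-minimal common ancestor and that $\preceq_N$ is transitive (which holds since it is defined via concatenation of directed paths). Throughout I abbreviate ``$u$ is an ancestor of $Y$'' for ``$y\preceq_N u$ for all $y\in Y$''.

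First I would establish the forward direction. Suppose $u$ is a least common ancestor of $Y$. By Def.~\ref{def:lca}, $u$ is an ancestor of all of $Y$ and is $\preceq_N$-minimal among such ancestors. Assume, for contradiction, that some child $v\in\child_N(u)$ is an ancestor of $Y$. Since $N$ is a DAG, $v\neq u$ and hence $v\prec_N u$, contradicting the $\preceq_N$-minimality of $u$ among ancestors of $Y$. Therefore no child of $u$ is an ancestor of $Y$, as claimed.

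For the converse, suppose $u$ is an ancestor of $Y$ such that no child of $u$ is an ancestor of $Y$. To conclude that $u$ is an LCA, it suffices to verify its $\preceq_N$-minimality among ancestors of $Y$. Assume, for contradiction, that there is a vertex $w\prec_N u$ that is also an ancestor of $Y$. Since $w\prec_N u$, there is a directed $uw$-path of length at least one; let $v\in\child_N(u)$ be its second vertex, so that $w\preceq_N v\prec_N u$. For each $y\in Y$ there is a $wy$-path because $w$ is an ancestor of $Y$; concatenating the $vw$-subpath with this $wy$-path gives $y\preceq_N v$, so $v$ is an ancestor of $Y$. This contradicts the hypothesis that no child of $u$ is an ancestor of $Y$. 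Hence $u$ is $\preceq_N$-minimal and thus a least common ancestor of $Y$.

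No serious obstacle is expected: the argument is elementary once one observes that any ancestor of $Y$ lying strictly below $u$ forces the child of $u$ on the corresponding directed path to be a common ancestor as well, by transitivity of $\preceq_N$. The only point requiring minor care is extracting $v$ as the second vertex of the $uw$-path and confirming $w\preceq_N v$, which is immediate from the path structure.
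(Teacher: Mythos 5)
Your proof is correct and follows essentially the same route as the paper: the forward direction is immediate from $\preceq_N$-minimality, and the converse hinges on the same key observation, namely that any strict descendant $w\prec_N u$ that is an ancestor of $Y$ lies below some child $v\in\child_N(u)$ on a directed $uw$-path, so transitivity makes $v$ a common ancestor of $Y$ as well. The only cosmetic difference is that you argue the converse by contradiction, whereas the paper states it directly in terms of descendant sets ($D_w\subseteq D_v$ and $Y\not\subseteq D_v$ imply $Y\not\subseteq D_w$).
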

\begin{proof}
  By definition, $u\in V$ is a least common ancestor of $Y$ if it is
  ancestor of all vertices in $Y$ and $\preceq_{N}$-minimal w.r.t.\ this
  property. Thus the \emph{only if}-part of the statement follows
  immediately.  Conversely, suppose $u$ is an ancestor of all vertices in
  $Y$ but there is no $v\in\child_N(u)$ with this property.  Writing $D_x$
  for the set of descendants of a vertex $x\in V$, suppose conversely that
  $u$ is an ancestor of all vertices in $Y$ but $Y\not\subseteq D_v$ for
  each $v\in\child_N(u)$.  For every vertex $w\in V(N)$ with $w\prec_N u$,
  there is a directed path passing through some child $v\in\child_N(u)$,
  i.e., $w\preceq_N v$. Hence, we have $D_w\subseteq D_v$. Together with
  $Y\not\subseteq D_v$, this implies $Y\not\subseteq D_w$.  Hence, $u$ is a
  least common ancestor of $Y$.
\end{proof}

We will in particular be concerned here with LCAs of leaves, i.e., the sets
$\LCA(A)$ for non-empty subsets $A\subseteq X$. We can then express LCAs in
terms of clusters:
\begin{fact}
$v\in\LCA(A)$ if and only if $A\subseteq \CC(v)$ and there is no
  vertex $u\prec_N v$ such that $A\subseteq \CC(u)$.
\end{fact}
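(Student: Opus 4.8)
The plan is to unwind the definitions, since both sides of the claimed equivalence are merely reformulations of the defining condition for $\LCA(A)$ in terms of clusters. Recall from Definition~\ref{def:lca} that $v\in\LCA(A)$ holds precisely when $v$ is $\preceq_N$-minimal among the vertices that are ancestors of every element of $A$. The whole argument consists of translating ``ancestor of every element of $A$'' into a containment of clusters, and then reading off the minimality condition in the same language.

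First I would handle the ancestor condition. For any vertex $v$ and any $a\in A\subseteq X$, the vertex $v$ is an ancestor of $a$ exactly when $a\preceq_N v$, which by Definition~\ref{def:cluster} is equivalent to $a\in\CC(v)$. Quantifying over all $a\in A$, the vertex $v$ is an ancestor of every element of $A$ if and only if $A\subseteq\CC(v)$. Applied to $v$ itself, this yields the first half of the right-hand side, namely $A\subseteq\CC(v)$.

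Next I would handle the minimality condition. Given that $v$ is an ancestor of all of $A$, it fails to be $\preceq_N$-minimal with this property exactly when some $u\prec_N v$ is also an ancestor of every element of $A$; by the previous paragraph this occurs if and only if there is $u\prec_N v$ with $A\subseteq\CC(u)$. Hence $v$ is $\preceq_N$-minimal among ancestors of $A$ if and only if no such $u$ exists. Combining the two translations gives $v\in\LCA(A)$ if and only if $A\subseteq\CC(v)$ and there is no $u\prec_N v$ with $A\subseteq\CC(u)$, which is the assertion. I expect no genuine obstacle; the only point requiring a little care is that $A$ consists of leaves, so that the ancestor relation $a\preceq_N v$ coincides with membership $a\in\CC(v)$ directly by the definition of $\CC(v)$.
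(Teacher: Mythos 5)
Your proof is correct, and it coincides with the paper's treatment: the paper states this as an Observation with no proof, precisely because it follows by the definitional unwinding you carry out (ancestorship of a leaf $a$ means $a\preceq_N a\preceq_N v$, i.e.\ $a\in\CC(v)$, and $\preceq_N$-minimality among common ancestors is exactly the nonexistence of $u\prec_N v$ with $A\subseteq\CC(u)$). Your closing remark that the equivalence of $a\preceq_N v$ and $a\in\CC(v)$ uses $A\subseteq X$ is exactly the one point of care the paper's definition of $\CC(v)$ silently relies on.
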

Suppose $\lca(A)\eqqcolon q$ is defined for some non-empty $A\subseteq X$.
Then, by assumption, every vertex $v$ with $A\subseteq\CC(v)$ satisfies
$q\preceq_N v$ and thus $\CC(q)\subseteq\CC(v)$ by
Lemma~\ref{lem:inclusion}. Since every vertex $v'$ for which
$\CC(q)\subseteq \CC(v')$ in particular also satisfies $A\subseteq\CC(v')$,
we conclude that $\lca(\CC(q))=q$. Thus we have
\begin{fact}
  \label{obs:deflcaY}
  Let $N$ be a network, $\emptyset\ne A\subseteq X$, and suppose $\lca(A)$
  is defined. Then the following is satisfied:
  \begin{enumerate}[noitemsep,nolistsep]
  \item[(i)] $\lca(A)\preceq_{N} v$ for all $v$ with $A\subseteq\CC(v)$.
  \item[(ii)] $\CC(\lca(A))$ is the unique inclusion-minimal cluster in
    $\mathscr{C}_N$ containing $A$.
  \item[(iii)] $\lca(\CC(\lca(A)))=\lca(A)$.
  \end{enumerate}
\end{fact}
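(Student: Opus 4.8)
The three assertions are tightly linked, so the plan is to establish (i) first and read off (ii) and (iii) as short consequences. Throughout I would write $q\coloneqq\lca(A)$, so that by hypothesis $\LCA(A)=\{q\}$, and I would keep in mind that $A\subseteq X\subseteq V$, so that the elements of $A$ and the vertices dominating them are all vertices of $N$. For (i), I would fix an arbitrary $v$ with $A\subseteq\CC(v)$ and route the argument through Observation~\ref{obs:LCA-PO}. The key preliminary observation is that $v$ is the \emph{unique} $\preceq_N$-minimal common ancestor of $A\cup\{v\}$: it is a common ancestor of $A\cup\{v\}$ because $A\subseteq\CC(v)$ gives $a\preceq_N v$ for every $a\in A$ and $v\preceq_N v$ trivially, while any common ancestor of $A\cup\{v\}$ must in particular be an ancestor of $v$ and hence $\succeq_N v$. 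Thus $\LCA(A\cup\{v\})=\{v\}$. Applying Observation~\ref{obs:LCA-PO} with $Z\coloneqq A\subseteq Y\coloneqq A\cup\{v\}$ and $y\coloneqq v$ yields some $z\in\LCA(A)$ with $z\preceq_N v$; since $\LCA(A)=\{q\}$ this forces $z=q$ and hence $q\preceq_N v$, which is exactly (i). (A direct alternative is to descend from $v$ to a $\preceq_N$-minimal common ancestor of $A$ lying below $v$ and check it must belong to $\LCA(A)=\{q\}$, but invoking Observation~\ref{obs:LCA-PO} is cleaner.)

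Part (ii) then drops out. Since $q\in\LCA(A)$ is a common ancestor of $A$, every $x\in A$ satisfies $x\preceq_N q$, i.e.\ $A\subseteq\CC(q)$, so $\CC(q)$ is indeed a cluster containing $A$. Conversely, let $C\in\mathscr{C}_N$ be \emph{any} cluster with $A\subseteq C$, say $C=\CC(v)$. Then $A\subseteq\CC(v)$, so (i) gives $q\preceq_N v$, and Lemma~\ref{lem:inclusion} yields $\CC(q)\subseteq\CC(v)=C$. Hence $\CC(q)$ is contained in every cluster of $\mathscr{C}_N$ that contains $A$; being itself such a cluster, it is the minimum, and in particular the unique inclusion-minimal cluster containing $A$.

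Part (iii) likewise follows from (i) together with antisymmetry of the partial order $\preceq_N$. The vertex $q$ is trivially a common ancestor of its own cluster $\CC(q)$. If $w$ is any common ancestor of the leaf set $\CC(q)$, then $\CC(q)\subseteq\CC(w)$, so $A\subseteq\CC(q)\subseteq\CC(w)$, and (i) applied to $w$ forces $q\preceq_N w$. Thus $q$ lies $\preceq_N$-below every common ancestor of $\CC(q)$ while being one itself, so it is the unique $\preceq_N$-minimal common ancestor; that is, $\LCA(\CC(q))=\{q\}$ and therefore $\lca(\CC(q))=q$. The only genuinely substantive step is (i); everything else is a one-line deduction via Lemma~\ref{lem:inclusion} and antisymmetry. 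The point I expect to require the most care is the uniqueness clause in (iii) — namely that $\lca(\CC(q))$ is even \emph{defined} — since it is tempting to argue only at the level of clusters (matching $\CC(\lca(\CC(q)))$ with $\CC(q)$ via (ii)); the comparability-plus-minimality argument above is what actually pins the LCA set down to the single vertex $q$ rather than merely to a cluster.
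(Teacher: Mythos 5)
Your proof is correct and takes essentially the same route as the paper: part (i) carries all the content, and (ii) and (iii) are then read off exactly as the paper does, via Lemma~\ref{lem:inclusion} and the fact that every common ancestor of $\CC(\lca(A))$ is in particular a common ancestor of $A$. The only difference is that the paper treats (i) as immediate from the uniqueness of $\lca(A)$, whereas you justify it formally through Observation~\ref{obs:LCA-PO} applied to $A\subseteq A\cup\{v\}$ — a more careful elaboration of the same step, not a different argument.
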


In much of the literature on least common ancestors in DAGs, only pairwise
LCAs are considered. Networks with unique pairwise LCAs are of interest
because of a close connection with so-called binary clustering systems
\cite{Barthelemy:08} and monotone transit functions
\cite{Changat:19a,Changat:21w}.
\begin{definition} \cite{Barthelemy:08} \label{def:prebinary}
  A clustering system $\mathscr{C}$ on $X$ is \emph{pre-binary} if, for every
  pair $x,y\in X$, there is a unique inclusion-minimal cluster $C$ such that
  $\{x,y\}\subseteq C$.
\end{definition}
From Obs.~\ref{obs:deflcaY}(ii), we immediately obtain
\begin{fact}
  \label{obs:lca->prebinary}
  If $N$ is a network on $X$ such that $\lca(\{x,y\})$ is defined
  for all $x,y\in X$, then $\mathscr{C}_N$ is pre-binary.
\end{fact}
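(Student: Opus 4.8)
The plan is to deduce this directly from Obs.~\ref{obs:deflcaY}(ii), which already packages exactly the uniqueness statement we need. First I would fix an arbitrary pair $x,y\in X$ and set $A\coloneqq\{x,y\}$. By hypothesis $\lca(\{x,y\})=\lca(A)$ is defined, so the premise of Obs.~\ref{obs:deflcaY} is satisfied for this particular $A$.

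Next I would invoke Obs.~\ref{obs:deflcaY}(ii) to conclude that $\CC(\lca(\{x,y\}))$ is the unique inclusion-minimal cluster in $\mathscr{C}_N$ containing $\{x,y\}$. Since $x$ and $y$ were arbitrary, this holds for every pair of elements of $X$, and by Lemma~\ref{lem:C-N} the collection $\mathscr{C}_N$ is a clustering system on $X$. The existence of a unique inclusion-minimal cluster containing each pair is precisely the defining condition of a pre-binary clustering system (Def.~\ref{def:prebinary}), so $\mathscr{C}_N$ is pre-binary.

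There is essentially no obstacle here: the substantive work has already been carried out in Obs.~\ref{obs:deflcaY}, whose part~(ii) was itself derived from Lemma~\ref{lem:inclusion} together with the definition of $\lca$. The only point worth spelling out is that the hypothesis ``$\lca(\{x,y\})$ is defined for all $x,y\in X$'' matches exactly the instantiation of Obs.~\ref{obs:deflcaY} at every two-element subset $A=\{x,y\}$, so that its conclusion can be quantified over all pairs to yield the pre-binary property.
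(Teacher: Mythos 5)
Your proposal is correct and follows exactly the paper's route: the paper derives this observation ``immediately'' from Obs.~\ref{obs:deflcaY}(ii), which is precisely your instantiation at $A=\{x,y\}$ for every pair, matched against Def.~\ref{def:prebinary}.
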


\begin{figure}[t]
  \begin{center}
    \includegraphics[width=0.6\textwidth]{./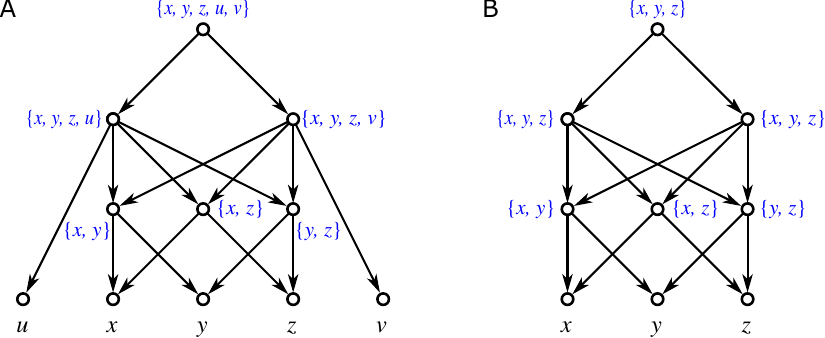}
  \end{center}
  \caption{(A) A regular network that is not an lca-network even though all
    $\lca(\{x,y\})$ is defined for all pairs. For $A=\{x,y,z\}$ both
    children of the root are contained in $\LCA(A)$.  Here, $Y$ is not a
    cluster. (B) The existence of pairwise LCAs is also insufficient to
    ensure that property (CL) is satisfied, i.e., that $\lca(\CC(v))$ is
    defined for all $v\in V(N)$ since $A$ is a cluster in this example.}
  \label{fig:lca-xy-no-Y}
\end{figure}

The first example in Fig.~\ref{fig:lca-xy-no-Y} shows, however, that
unique pairwise LCAs are not sufficient to ensure that $\lca(A)$ is also
defined for larger sets. The second example shows that unique pairwise
LCAs also do not ensure that all clusters $\CC(v)$ have a unique LCA.

Willson \cite[Thm.~3.3]{Willson:10b} showed that $\lca(\CC(v))$ (there
called ``mrca'') is well defined for all vertices of a normal network.
However, \cite{Willson:10b} uses a different definition of $X$ as the
``base set'' comprising the root, leaves, and all vertices with outdegree
$1$. We therefore adapt Thm.~3.3 of \cite{Willson:10b} to our setting
and include a proof for completeness.
\begin{definition}\label{def:CL}
  A network $N$ has the \emph{cluster-lca} property (CL) if
  \begin{itemize}
  \item[(CL)] For every $v\in V(N)$, $\lca(\CC(v))$ is defined.
  \end{itemize}
\end{definition}
\begin{lemma}
  \label{lem:lca(C(v))<=v}
  Suppose $N$ has property (CL). Then for all $v\in V(N)$ holds
  $\lca(\CC(v))\preceq_N v$ and $\CC(\lca(\CC(v)))=\CC(v)$.
\end{lemma}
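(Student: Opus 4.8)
The plan is to set $q \coloneqq \lca(\CC(v))$, which exists and is a single well-defined vertex precisely because $N$ has property (CL), and then to read off both assertions almost directly from the observations already in hand. First I would establish $\lca(\CC(v)) \preceq_N v$. Since trivially $\CC(v) \subseteq \CC(v)$, the vertex $v$ itself is one of the vertices $w$ satisfying $\CC(v) \subseteq \CC(w)$, so Obs.~\ref{obs:deflcaY}(i), applied with $A = \CC(v)$, yields $q = \lca(\CC(v)) \preceq_N v$ at once. This settles the first half of the statement with essentially no work.

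For the equality $\CC(\lca(\CC(v))) = \CC(v)$, I would argue by two inclusions. On the one hand, $q \preceq_N v$ together with Lemma~\ref{lem:inclusion} gives $\CC(q) \subseteq \CC(v)$. On the other hand, by definition of least common ancestor $q$ is an ancestor of every element of $\CC(v)$; that is, $x \preceq_N q$ and hence $x \in \CC(q)$ for each $x \in \CC(v)$, so $\CC(v) \subseteq \CC(q)$. Combining the two inclusions gives $\CC(\lca(\CC(v))) = \CC(q) = \CC(v)$, as claimed. (Alternatively, the second half is just Obs.~\ref{obs:deflcaY}(ii) specialized to $A = \CC(v)$: the unique inclusion-minimal cluster containing $\CC(v)$ is $\CC(v)$ itself.)

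The only point that needs care --- and it is more a matter of bookkeeping than a genuine obstacle --- is unwinding the definitions correctly: that ``(CL) holds'' means $\vert \LCA(\CC(v))\vert = 1$, so that $\lca(\CC(v))$ is a well-defined single vertex, and that membership of $q$ in $\LCA(\CC(v))$ means precisely that $q$ is a common ancestor of all leaves in $\CC(v)$. Once these are spelled out, neither inclusion requires any further structural property of $N$, so no case analysis or appeal to blocks, shortcuts, or (PCC) is needed; the result is immediate from (CL), Lemma~\ref{lem:inclusion}, and Obs.~\ref{obs:deflcaY}.
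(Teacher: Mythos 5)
Your proof is correct, and it takes a slightly different route from the paper's. The paper argues the first claim directly: if $v\notin\LCA(\CC(v))$, then any common ancestor of $\CC(v)$ strictly below $v$ must have the same cluster as $v$, a $\preceq_N$-minimal such descendant lies in $\LCA(\CC(v))$, and uniqueness under (CL) forces $\lca(\CC(v))$ to coincide with $v$ or one of its descendants; the second claim is then dismissed as immediate from the definition. You instead reduce everything to the previously established Obs.~\ref{obs:deflcaY}: part (i) with $A=\CC(v)$ and the trivial inclusion $\CC(v)\subseteq\CC(v)$ gives $\lca(\CC(v))\preceq_N v$, and the cluster equality follows from your double inclusion (Lemma~\ref{lem:inclusion} in one direction, the definition of common ancestor in the other) or, as you note, from Obs.~\ref{obs:deflcaY}(ii). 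This is legitimate and non-circular: Obs.~\ref{obs:deflcaY} precedes Lemma~\ref{lem:lca(C(v))<=v} in the paper, and its hypothesis --- that $\lca(\CC(v))$ is defined --- is exactly what (CL) supplies. What your route buys is brevity and an explicit treatment of the second half, which the paper leaves to the reader; what the paper's direct argument buys is that it makes visible the structural fact that all vertices between $\lca(\CC(v))$ and $v$ carry the same cluster, a fact reused right afterwards in the analysis of the sets $Q(u)$. Be aware that Obs.~\ref{obs:deflcaY}(i) itself rests on the terse uniqueness argument stated just before it, so your proof implicitly leans on the same minimality reasoning the paper spells out --- but citing an established observation for that purpose is perfectly sound.
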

\begin{proof}
  If $v\notin\LCA(v)$ then there is a descendant $v'\preceq v$ such that
  $\CC(v)=\CC(v')$, and every $\preceq_N$-minimal descendant $v'$ with this
  property satisfies $v'\in\LCA(\CC(v))$. By property (CL), $\LCA(\CC(v))$
  contains only a single vertex $\lca(\CC(v))$, which therefore must
  coincide either with $v$ or one of its descendants. The second statement
  now follows directly from the definition.
\end{proof}

In Lemma~\ref{lem:C=path}, we saw that $Q(u)\coloneqq\{u'\in V(N) \mid
\CC(u')=\CC(u)\}$ forms an induced path in semi-regular
networks. Property (CL) imposes a weaker structure.
\begin{lemma}
  Let $N$ be a network satisfying (CL). Then (i) $Q(u)$ has a unique
  $\preceq_N$-minimal element, namely $\lca(\CC(u))=\min Q(u)$, and
  (ii) if $u,v\in Q(u)$ and $w$ is contained in a directed path from
  $u$ to $v$, then $w\in Q(u)$.
\end{lemma}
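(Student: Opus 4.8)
The plan is to handle the two parts separately: part~(i) relies on the structural consequences of (CL) established just above, while part~(ii) is a pure monotonicity argument that does not even use (CL).

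For part~(i), I would first exhibit $\lca(\CC(u))$ as an element of $Q(u)$ and then show it lies below every vertex of $Q(u)$. Property (CL) guarantees that $\lca(\CC(u))$ is defined, and Lemma~\ref{lem:lca(C(v))<=v} gives $\CC(\lca(\CC(u)))=\CC(u)$; by the definition of $Q(u)$ this immediately yields $\lca(\CC(u))\in Q(u)$. Next I would argue that $\lca(\CC(u))$ is a $\preceq_N$-lower bound for all of $Q(u)$: for every $v\in Q(u)$ we have $\CC(u)=\CC(v)\subseteq\CC(v)$, so Observation~\ref{obs:deflcaY}(i) (equivalently, Lemma~\ref{lem:lca(C(v))<=v} applied to $v$, noting $\lca(\CC(v))=\lca(\CC(u))$ since $\CC(v)=\CC(u)$) gives $\lca(\CC(u))\preceq_N v$. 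Being simultaneously a member of $Q(u)$ and a lower bound for $Q(u)$, the vertex $\lca(\CC(u))$ is the minimum of $Q(u)$ and hence its unique $\preceq_N$-minimal element, so $\min Q(u)=\lca(\CC(u))$.

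For part~(ii), the key observation is that clusters grow monotonically along directed paths. I would take $a,b\in Q(u)$ together with a directed path from $a$ to $b$ passing through $w$, so that $b\preceq_N w\preceq_N a$. Applying Lemma~\ref{lem:inclusion} to each half of this sandwich yields $\CC(b)\subseteq\CC(w)\subseteq\CC(a)$. Since $a,b\in Q(u)$ forces $\CC(a)=\CC(b)=\CC(u)$, the two inclusions collapse to $\CC(w)=\CC(u)$, which is exactly the statement $w\in Q(u)$.

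I do not expect any genuine obstacle here; the argument is a short assembly of previously established facts. The only point requiring mild care is the attribution in part~(i): both that $\lca(\CC(u))$ belongs to $Q(u)$ and that it lies below every element of $Q(u)$ depend on property (CL), via Lemma~\ref{lem:lca(C(v))<=v} and Observation~\ref{obs:deflcaY}. By contrast, part~(ii) uses neither (CL) nor the uniqueness of the lca and therefore holds verbatim in any network.
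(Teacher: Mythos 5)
Your proof is correct and follows essentially the same route as the paper: membership $\lca(\CC(u))\in Q(u)$ via Lemma~\ref{lem:lca(C(v))<=v}, the lower-bound property of the (unique) lca to get minimality, and Lemma~\ref{lem:inclusion} sandwiching $\CC(w)$ between equal clusters for part~(ii). Your remark that part~(ii) holds in any network without (CL) is accurate and consistent with the paper, which derives it directly from Lemma~\ref{lem:inclusion}.
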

\begin{proof}
  The second statement in Lemma~\ref{lem:lca(C(v))<=v} implies that
  $q\coloneqq \lca(\CC(u))\in Q(u)$. By definition $q$ is the unique
  $\preceq_N$-minimal vertex that has all leaves in $\CC(q)$ as its
  descendants and thus $q\preceq u'$ for all $u'\in Q(u)$, establishing
  statement (i). Statement (ii) is a direct consequence of
  Lemma~\ref{lem:inclusion}.
\end{proof}

\begin{figure}[t]
  \begin{center}
    \includegraphics[width=0.13\textwidth]{./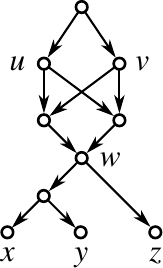}
  \end{center}
  \caption{The network $N$ satisfies (CL) but not (PCC). In addition to the
    singletons, only $\{x,y\}$ and $A=\{x,y,z\}$ appear as clusters. Both
    have unique last common ancestors, hence (CL) is satisfied. In
    particular, $\lca_{N}(A)=w$. However, we have $A=\CC_{N}(u)=\CC_{N}(v)$
    for the two $\preceq_{N}$-incomparable vertices $u$ and $v$. Hence, $N$
    does not satisfy (PCC).}
  \label{fig:CL-no-PCC}
\end{figure}

\begin{lemma}
  \label{lem:PCC->CL}
  If a network $N$ satisfies (PCC), then it satisfies (CL).
\end{lemma}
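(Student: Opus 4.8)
The plan is to prove that $\lca(\CC(v))$ is defined for every $v\in V(N)$, i.e.\ that $\LCA(\CC(v))$ is a singleton. Recall from the cluster-characterization of $\LCA$ (the observation just before Obs.~\ref{obs:deflcaY}) that, for the leaf set $\CC(v)$, a vertex $w$ lies in $\LCA(\CC(v))$ precisely when $w$ is a $\preceq_N$-minimal element of the set $S_v\coloneqq\{w\in V(N)\mid \CC(v)\subseteq \CC(w)\}$. First I would record that $S_v$ is non-empty (it contains $v$, since $\CC(v)\subseteq\CC(v)$, and it contains $\rho_N$) and finite, so $\LCA(\CC(v))\ne\emptyset$; it then only remains to establish uniqueness.

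The crux of the argument, and really the only step that requires an idea, is the observation that every $w\in\LCA(\CC(v))$ must already satisfy $\CC(w)=\CC(v)$. Indeed, $\CC(v)\subseteq\CC(w)$ together with (PCC) forces $v$ and $w$ to be $\preceq_N$-comparable. Since $v\in S_v$, minimality of $w$ in $S_v$ excludes $v\prec_N w$; hence $w\preceq_N v$. Lemma~\ref{lem:inclusion} then gives $\CC(w)\subseteq\CC(v)$, and combined with $\CC(v)\subseteq\CC(w)$ this yields $\CC(w)=\CC(v)$, as claimed.

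With this reduction in hand, uniqueness is immediate. Given $w_1,w_2\in\LCA(\CC(v))$, the previous step shows $\CC(w_1)=\CC(v)=\CC(w_2)$, so in particular $\CC(w_1)\subseteq\CC(w_2)$; by (PCC) the vertices $w_1$ and $w_2$ are $\preceq_N$-comparable. But $w_1\prec_N w_2$ would contradict the minimality of $w_2$ in $S_v$ (and symmetrically for $w_2\prec_N w_1$), so $w_1=w_2$. Hence $\vert\LCA(\CC(v))\vert=1$, i.e.\ $\lca(\CC(v))$ is defined for all $v\in V(N)$, and $N$ satisfies (CL).

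I do not anticipate a genuine obstacle here: once one spots the reduction ``$w\in\LCA(\CC(v))\implies\CC(w)=\CC(v)$'', the remainder is a short comparability argument. The only points demanding care are keeping the orientation of $\preceq_N$ straight (descendant versus ancestor) and checking non-emptiness of $\LCA(\CC(v))$ before arguing uniqueness. I would also note in passing that the implication is strictly one-directional: the network of Fig.~\ref{fig:CL-no-PCC} satisfies (CL) but not (PCC), so no converse is being claimed.
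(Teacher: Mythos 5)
Your proof is correct and rests on essentially the same ideas as the paper's: both arguments combine (PCC), which forces $\preceq_N$-comparability of vertices with nested clusters, with Lemma~\ref{lem:inclusion}. The paper proceeds bottom-up, taking the unique $\preceq_N$-minimal element of the chain $W=\{w \mid \CC(w)=\CC(u)\}$ and verifying via Obs.~\ref{obs:subsetneq-implies-below} that it is $\lca(\CC(u))$, whereas you proceed top-down by showing every element of $\LCA(\CC(v))$ carries the cluster $\CC(v)$ and is therefore unique; the mathematical content is the same.
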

\begin{proof}
  Suppose that $N$ satisfies (PCC).  Given a cluster $\CC(u)$, we consider
  the non-empty set $W\coloneqq\{w\in V \mid \CC(u)=\CC(w)\}\subseteq V$.
  (PCC) implies that the elements of $W$ are pairwise
  $\preceq_N$-comparable, and thus there is a unique $\preceq_N$-minimal
  element $w\in W$. Furthermore, $\CC(u)=\CC(w)\subsetneq \CC(v)$ implies
  $w\prec_N v$ by Obs.~\ref{obs:subsetneq-implies-below}. Therefore,
  $\lca(\CC(u))=w$. Since $u\in W$ by construction, we have $w\preceq_N u$,
  and thus $\lca(\CC(u))\preceq_N u$.
\end{proof}
Fig.~\ref{fig:CL-no-PCC} shows that the converse is not true, i.e., (CL)
does not imply (PCC).  Cor.~\ref{cor:cluster-n-with-PCC} and
Lemma~\ref{lem:PCC->CL} imply
\begin{proposition} \label{prop:cluster-n-with-CL}
	For every clustering system $\mathscr{C}$ there is a network $N$ with
	$\mathscr{C}_N = \mathscr{C}$ that satisfies (CL).
\end{proposition}

Since a normal network is tree-child, it satisfies (PCC) by
Cor.~\ref{cor:TC->PCC}. This together with Lemma~\ref{lem:PCC->CL} implies
\begin{corollary}
  \label{cor:normal->CL}
  Every normal network $N$ satisfies (CL).
\end{corollary}
Property (CL) however does not imply that $\lca$ is well-defined for all
subsets $A\subseteq X$.

\subsection{Lca-Networks}
\label{ssec:lcaN}

\begin{definition}\label{def:lcaN}
  A network $N$ is an \emph{lca-network} if $\lca(A)$ is well-defined,
  i.e., if $\vert\LCA(A)\vert=1$ for all non-empty subsets $A\subseteq X$.
\end{definition}
From Obs.~\ref{obs:LCA-PO} and uniqueness of the least common ancestors,
we immediately obtain
\begin{fact}
  \label{obs:lca-PO}
  Let $N$ be an lca-network and $A\subseteq B \subseteq X$. Then
  $\lca(A)\preceq_N\lca(B)$.
\end{fact}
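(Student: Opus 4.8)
The plan is to obtain the claim as an immediate consequence of Obs.~\ref{obs:LCA-PO} together with the defining property of an lca-network, namely that $\LCA(Y)$ is a singleton for every non-empty $Y\subseteq X$. The crucial setup is to notice that, since $A\subseteq B\subseteq X$ and $X\subseteq V$, the pair $A\subseteq B$ is exactly a pair of the form $Z\subseteq Y$ with $Z,Y\subseteq V$, so Obs.~\ref{obs:LCA-PO} applies directly with $Z=A$ and $Y=B$.

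First I would invoke Obs.~\ref{obs:LCA-PO} for the vertex $\lca(B)\in\LCA(B)$: it guarantees the existence of some $z\in\LCA(A)$ with $z\preceq_N\lca(B)$. Here both $\lca(A)$ and $\lca(B)$ are genuinely well-defined precisely because $N$ is an lca-network, so $\LCA(A)$ and $\LCA(B)$ are singletons. In particular $\LCA(A)=\{\lca(A)\}$, which forces $z=\lca(A)$. Substituting then yields $\lca(A)=z\preceq_N\lca(B)$, which is the desired inequality.

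There is essentially no obstacle here; the statement is a one-line corollary. The only points to keep straight are that the lca-network hypothesis is used twice — once to ensure $\lca(B)$ makes sense as the unique element of $\LCA(B)$ to which we apply Obs.~\ref{obs:LCA-PO}, and once to collapse the existential ``there is $z\in\LCA(A)$'' of Obs.~\ref{obs:LCA-PO} into a statement about the single vertex $\lca(A)$. The containment $A\subseteq B$ is exactly what allows Obs.~\ref{obs:LCA-PO} to fire, and the uniqueness supplied by Def.~\ref{def:lcaN} is what promotes the conclusion from a statement about some minimal ancestor to a statement about $\lca(A)$ itself.
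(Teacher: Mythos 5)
Your proof is correct and matches the paper's own reasoning exactly: the paper also derives this fact immediately from Obs.~\ref{obs:LCA-PO} combined with the uniqueness of least common ancestors in an lca-network. Nothing to add.
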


\begin{corollary}
  \label{cor:lca-below}
  Let $N$ be an lca-network on $X$ and $\emptyset\ne A\subseteq \CC_{N}(v)$
  for some $v\in V(N)$. Then, it holds $\lca_N(A)\preceq_{N} v$.
\end{corollary}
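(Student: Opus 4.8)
The plan is to observe that the statement is essentially an immediate consequence of the cluster-level description of least common ancestors recorded in Observation~\ref{obs:deflcaY}. First I would note that, since $N$ is an lca-network and $A\subseteq X$ is non-empty, Definition~\ref{def:lcaN} guarantees $\vert\LCA(A)\vert=1$, so that $\lca(A)$ is defined and Observation~\ref{obs:deflcaY} is applicable.

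Next I would rephrase the hypothesis: $A\subseteq\CC_{N}(v)$ says precisely that $v$ is one of the vertices $w$ satisfying $A\subseteq\CC(w)$. Observation~\ref{obs:deflcaY}(i) asserts that $\lca(A)\preceq_{N}w$ for \emph{every} such $w$; applying this with $w=v$ yields $\lca_N(A)\preceq_{N}v$, which is exactly the claim. Thus the whole argument reduces to checking that the hypotheses of Observation~\ref{obs:deflcaY} hold, and this is guaranteed by the lca-network assumption.

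I do not expect any genuine obstacle here, since the work has already been done in establishing Observation~\ref{obs:deflcaY}. If a more self-contained route were preferred, one could instead argue via monotonicity: taking $B\coloneqq\CC_{N}(v)$ in Observation~\ref{obs:lca-PO} gives $\lca(A)\preceq_{N}\lca(\CC_{N}(v))$, and since any lca-network trivially satisfies (CL), Lemma~\ref{lem:lca(C(v))<=v} gives $\lca(\CC_{N}(v))\preceq_{N}v$; transitivity of $\preceq_{N}$ then delivers $\lca_N(A)\preceq_{N}v$. The direct appeal to Observation~\ref{obs:deflcaY}(i) is the cleaner of the two and is the one I would write up.
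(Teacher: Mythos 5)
Your proposal is correct and in substance coincides with the paper's own proof: the paper extracts a $\preceq_N$-minimal vertex $u\preceq_N v$ with $A\subseteq\CC_N(u)$ and identifies $u=\lca_N(A)$ by uniqueness of the LCA, which is exactly the content of Obs.~\ref{obs:deflcaY}(i) that you invoke (the observation's justification being this very argument). Your fallback route via Obs.~\ref{obs:lca-PO}, the fact that lca-networks satisfy (CL), and Lemma~\ref{lem:lca(C(v))<=v} is also sound, but the direct appeal to Obs.~\ref{obs:deflcaY}(i) is, as you say, the cleaner write-up and the one matching the paper.
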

\begin{proof}
  Since $Y\subseteq \CC_{N}(v)$, there clearly exists some vertex
  $u\preceq_{N} v$ such that $A\subseteq \CC_N(u)$ but $A\not\subseteq
  \CC(w)$ for all vertices $w\in V(N)$ with $w\prec_N u$.  By
  Lemma~\ref{lem:lca-unique}, $u$ must be the unique vertex $\lca_{N}(A)$.
\end{proof}

\begin{lemma}
  \label{lem:PCC+cl=>lca}
  If $N$ is a network with a closed clustering system $\mathscr{C}_N$
  and $N$ satisfies (PCC), then it is an lca-network.
\end{lemma}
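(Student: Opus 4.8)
The plan is to reduce the computation of $\lca(A)$ for an arbitrary non-empty leaf set $A\subseteq X$ to the computation of the $\lca$ of a \emph{cluster}, which is already controlled by property (CL) (recall that (PCC) implies (CL) by Lemma~\ref{lem:PCC->CL}). The bridge between the two is the closure operator. First I would observe that, since $\mathscr{C}_N$ is closed, the set $C\coloneqq \cl(A)=\bigcap_{C'\in\mathscr{C}_N,\,A\subseteq C'}C'$ is itself a cluster of $N$: it is non-empty because $A\subseteq\cl(A)$ and $A\ne\emptyset$, and idempotency gives $\cl(C)=C$, so $C\in\mathscr{C}_N$ by Def.~\ref{def:closed} (equivalently one may invoke the finite-intersection form in Lemma~\ref{lem:simple-closed}). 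Moreover $C$ is the unique inclusion-minimal cluster containing $A$, since any cluster $C'\supseteq A$ satisfies $\cl(A)\subseteq C'$ by definition of $\cl$.

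Next I would fix a vertex $w$ with $\CC(w)=C$ and set $q\coloneqq \lca(\CC(w))=\lca(C)$, which is well-defined by (CL). By Lemma~\ref{lem:lca(C(v))<=v} it satisfies $\CC(q)=C$, and by construction $q$ is the unique $\preceq_N$-minimal ancestor of all leaves of $C$. Since $A\subseteq C=\CC(q)$, the vertex $q$ is an ancestor of every leaf of $A$, so $q$ is a candidate least common ancestor of $A$. Because the root is an ancestor of all vertices we have $\LCA(A)\ne\emptyset$, so it suffices to prove that every $p\in\LCA(A)$ coincides with $q$.

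For this uniqueness step, take any $p\in\LCA(A)$. From $A\subseteq\CC(p)$ and minimality of the closure we get $C=\cl(A)\subseteq\CC(p)$. If the inclusion were strict, then $\CC(q)=C\subsetneq\CC(p)$ would yield $q\prec_N p$ by Obs.~\ref{obs:subsetneq-implies-below}; but $q$ is an ancestor of all of $A$, contradicting the $\preceq_N$-minimality of $p$. Hence $\CC(p)=C$, so $p$ is an ancestor of all leaves of $C$, and minimality of $q=\lca(C)$ forces $q\preceq_N p$. A strict inequality $q\prec_N p$ would again contradict minimality of $p$, so $p=q$. Thus $\LCA(A)=\{q\}$ for every non-empty $A\subseteq X$, i.e.\ $N$ is an lca-network.

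The main obstacle I anticipate is the very first reduction, namely verifying that $\cl(A)$ is genuinely a cluster of $N$ and the inclusion-minimal one containing $A$: this is exactly where closedness is used, and one must invoke the correct characterization (Def.~\ref{def:closed} together with idempotency, or the intersection form of Lemma~\ref{lem:simple-closed}) rather than merely closure under intersection of overlapping pairs. Once $\cl(A)\in\mathscr{C}_N$ is in hand, the remainder is a routine comparability argument driven by (PCC) via Obs.~\ref{obs:subsetneq-implies-below} and the minimality properties of $\lca$ on clusters supplied by (CL).
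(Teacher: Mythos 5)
Your proof is correct, but it takes a genuinely different route from the paper's. The paper argues by contradiction and stays entirely at the level of pairwise intersections: if $u,v\in\LCA(A)$ were distinct, they would be $\preceq_N$-incomparable, so by (PCC) the clusters $\CC(u)$ and $\CC(v)$ must overlap; closedness (in the form of Lemma~\ref{lem:simple-closed}) then provides a vertex $w$ with $\CC(w)=\CC(u)\cap\CC(v)\supseteq A$, and Obs.~\ref{obs:subsetneq-implies-below} places $w$ strictly below both $u$ and $v$, contradicting their minimality. You instead argue constructively: you use the full closure operator to identify $C=\cl(A)$ as the unique inclusion-minimal cluster containing $A$, route through Lemma~\ref{lem:PCC->CL} to obtain (CL), take $q=\lca(C)$ via (CL) and Lemma~\ref{lem:lca(C(v))<=v}, and then show every element of $\LCA(A)$ equals $q$. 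The paper's proof is shorter and needs only closure under intersections of overlapping pairs; yours has a longer dependency chain but buys more, since it explicitly exhibits $\lca(A)=\lca(\cl(A))$ with $\CC(\lca(A))=\cl(A)$, which anticipates the content of Lemma~\ref{lem:lca-cl-identities} and the forward direction of Thm.~\ref{thm:PCC::lca<->closed}. One step is stated a bit tersely: "minimality of $q=\lca(C)$ forces $q\preceq_N p$" deserves the one-line justification that any common ancestor of $C$ has some element of $\LCA(C)$ weakly below it (the argument of Obs.~\ref{obs:LCA-PO}), which is a singleton here; alternatively, since $\CC(p)=\CC(q)$, (PCC) makes $p$ and $q$ comparable and minimality of $q$ in $\LCA(C)$ rules out $p\prec_N q$. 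Either patch is immediate, so the proof stands.
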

\begin{proof}
  Assume, for contradiction, that there is some set $A$ with two distinct
  vertices $u,v\in\LCA(A)$. Then $u,v$ are $\preceq_N$-incomparable and
  $A\subseteq \CC(u)$ and $A\subseteq \CC(v)$. If $\CC(u)\subseteq \CC(v)$,
  then (PCC) implies $u\preceq_N v$ or $v\preceq_N u$; a
  contradiction. Similarly, $\CC(v)\subseteq \CC(u)$ is not possible. Thus
  $\CC(u)$ and $\CC(v)$ overlap. Since $\mathscr{C}_N$ is closed, there is
  a cluster $\CC(w)=\CC(u)\cap \CC(v)$ for some $w\in V$.  Since
  $A\subseteq \CC(w)\subsetneq \CC(u)$ and $A\subseteq \CC(w)\subsetneq
  \CC(v)$, Obs.~\ref{obs:subsetneq-implies-below} implies $w\prec_N u$ and
  $w\prec_N v$, contradicting $u,v\in \LCA(A)$.
\end{proof}

\begin{theorem}
  \label{thm:PCC::lca<->closed}
  Let $N$ be a network satisfying (PCC). Then $N$ is an lca-network if and
  only if its clustering system $\mathscr{C}_N$ is closed.
\end{theorem}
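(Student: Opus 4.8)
The plan is to prove the two implications separately, noting that one of them has essentially already been established. For the direction ``$\mathscr{C}_N$ closed $\implies$ $N$ is an lca-network'', there is nothing new to do: since $N$ satisfies (PCC) by hypothesis, this is precisely the content of Lemma~\ref{lem:PCC+cl=>lca}. So the entire burden of the proof lies in the converse implication, ``$N$ is an lca-network $\implies$ $\mathscr{C}_N$ is closed''.

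For that direction I would reduce closedness to a statement about pairwise intersections via Lemma~\ref{lem:simple-closed}: it suffices to show that whenever $C,C'\in\mathscr{C}_N$ with $C\cap C'\ne\emptyset$, the intersection $C\cap C'$ again lies in $\mathscr{C}_N$. Write $C=\CC(u)$ and $C'=\CC(v)$. If $C$ and $C'$ are comparable under inclusion, then $C\cap C'\in\{C,C'\}\subseteq\mathscr{C}_N$ and we are done, so the only interesting case is when $C$ and $C'$ overlap. Here I would exploit the lca-network hypothesis: since $C\cap C'$ is a non-empty subset of $X$, the vertex $w\coloneqq\lca(C\cap C')$ is well-defined, and by definition $C\cap C'\subseteq\CC(w)$. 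The key step is to prove the reverse inclusion $\CC(w)\subseteq C\cap C'$. Because $C\cap C'\subseteq C=\CC(u)$, Corollary~\ref{cor:lca-below} yields $w=\lca(C\cap C')\preceq_N u$, and then Lemma~\ref{lem:inclusion} gives $\CC(w)\subseteq\CC(u)=C$; the symmetric argument with $v$ gives $\CC(w)\subseteq C'$. Combining, $\CC(w)=C\cap C'$, so $C\cap C'\in\mathscr{C}_N$.

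The argument is short because all of the real work is packaged into the earlier results: Lemma~\ref{lem:simple-closed} turns ``closed'' into a condition on binary intersections, and Corollary~\ref{cor:lca-below} is exactly the tool that places $\lca(C\cap C')$ below any vertex whose cluster contains $C\cap C'$. I do not expect a genuine obstacle in the converse direction; the only points requiring care are to confirm that Corollary~\ref{cor:lca-below} applies in the overlapping case (its hypothesis $\emptyset\ne C\cap C'\subseteq\CC(u)$ holds precisely because we assumed $C\cap C'\ne\emptyset$), and to keep in mind that (PCC) is in fact not needed for ``lca-network $\implies$ closed'' --- it enters the theorem only through the reverse implication via Lemma~\ref{lem:PCC+cl=>lca}.
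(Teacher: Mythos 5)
Your proposal is correct, and the ``closed $\implies$ lca-network'' direction is handled exactly as in the paper (both simply invoke Lemma~\ref{lem:PCC+cl=>lca}); but your argument for ``lca-network $\implies$ closed'' takes a genuinely different route. The paper works with the closure operator directly: assuming $\cl(A)=A\notin\mathscr{C}_N$, it deduces that there are two distinct inclusion-minimal clusters $C',C''$ containing $A$, uses (PCC) (via Lemma~\ref{lem:C=path}) to pick the $\preceq_N$-minimal vertices realizing $C'$ and $C''$, and shows both lie in $\LCA(A)$, contradicting uniqueness of the lca. You instead reduce closedness to binary intersections via Lemma~\ref{lem:simple-closed} and then \emph{exhibit} the intersection as a cluster: $C\cap C'=\CC(\lca(C\cap C'))$, with Cor.~\ref{cor:lca-below} giving $\lca(C\cap C')\preceq_N u,v$ and Lemma~\ref{lem:inclusion} giving the reverse inclusion $\CC(\lca(C\cap C'))\subseteq C\cap C'$. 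Your version is constructive rather than by contradiction, and your closing remark is accurate and worth emphasizing: your argument never uses (PCC), so it establishes the stronger fact that \emph{every} lca-network has a closed clustering system. The paper's write-up, by contrast, formally leans on Lemma~\ref{lem:C=path}, whose hypothesis is semi-regularity (shortcut-freeness plus (PCC)), even though only the comparability of vertices with equal clusters is actually needed there; your route sidesteps this entirely. This strengthening is not merely cosmetic: it is exactly what the ``only if'' part of Prop.~\ref{prop:lca-iff-closed} needs for lca-networks that are not known to satisfy (PCC). What the paper's approach buys in exchange is modest: it verifies the definition of closedness (Def.~\ref{def:closed}) directly through the closure operator, without routing through Lemma~\ref{lem:simple-closed}.
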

\begin{proof}
  Let $N$ be an lca-network satisfying (PCC).  We show that $\mathscr{C}_N$
  is closed, i.e., for all non-empty $A\in 2^X$, it holds $\cl(A)=A \iff
  A\in \mathscr{C}_N$.  From the definitions of $\LCA$, clusters, and the
  closure operator we obtain
  \begin{equation*}
    A \subseteq
    \bigcap_{\substack{v \in V\\ A\subseteq \CC(v)}} \CC(v) =
    \bigcap_{\substack{C\in\mathscr{C}_N\\ A\subseteq C}} C = \cl(A) \,.
  \end{equation*}
  If $A\in \mathscr{C}_N$, then clearly $\cl(A)=A$.  Now suppose
  $A\notin\mathscr{C}_N$ and assume, for contradiction, that
  $\cl(A)=A$. Thus, we have $\cl(A)\notin\mathscr{C}_N$.  Then there are at
  least two distinct inclusion-minimal clusters $C'$ and $C''$ such that
  $A\subsetneq C',C''$.  By Lemma~\ref{lem:C=path}, there are unique
  $\preceq_{N}$-minimal vertices $u'$ and $u''$ with $C'=\CC(u')$ and
  $C''=\CC(u'')$.  Therefore and by Lemma~\ref{lem:inclusion}, we obtain,
  for all $v\prec_{N} u'$, that $\CC(v)\subsetneq\CC(u')=C'$ and thus
  $A\not\subseteq \CC(v)$ by inclusion-minimality of $C'$. Hence, $u'$ is a
  last common ancestor of $A$.  By analogous arguments, $u''$ is a last
  common ancestor of $A$.  Since $C'\ne C''$, $u'$ and $u''$ are
  distinct. Together with $\{u',u''\}\subseteq\LCA(A)$, this contradicts
  that $N$ is an lca-network.  Therefore, $A\notin\mathscr{C}_N$ implies
  $A\subsetneq \cl(A)$. Isotony of the closure function together with
  the contraposition of the latter statement shows that $A=\cl(A)\implies
  A\in\mathscr{C}_N$. In summary, therefore, $\mathscr{C}_N$ is closed.
  Conversely, by Lemma~\ref{lem:PCC+cl=>lca}, a network satisfying (PCC)
  with a closed clustering system is an lca-network.
\end{proof}
The example in Fig.~\ref{fig:CL-no-PCC} shows that (PCC) cannot be omitted
in Theorem~\ref{thm:PCC::lca<->closed}.  A trivial consequence of
Theorem~\ref{thm:PCC::lca<->closed} is
\begin{corollary}
  A semi-regular networks $N$ is an lca-network if and only if its clustering
  system $\mathscr{C}_N$ is closed.
\end{corollary}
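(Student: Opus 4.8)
The plan is to observe that this corollary is an immediate specialization of Theorem~\ref{thm:PCC::lca<->closed}. By Definition~\ref{def:semi-regular}, a network is semi-regular precisely when it is shortcut-free and satisfies (PCC). In particular, every semi-regular network satisfies the single hypothesis of Theorem~\ref{thm:PCC::lca<->closed}, namely property (PCC). That theorem asserts that a network satisfying (PCC) is an lca-network if and only if its clustering system $\mathscr{C}_N$ is closed, which is exactly the equivalence claimed here. Hence both directions of the corollary follow by direct invocation of the theorem, with ``semi-regular'' playing the role of the weaker assumption ``satisfies (PCC).''

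I do not expect any genuine obstacle in this argument, since the work has already been done in Theorem~\ref{thm:PCC::lca<->closed}. The only substantive point worth flagging is that the shortcut-free half of the semi-regularity hypothesis is not actually used: Theorem~\ref{thm:PCC::lca<->closed} holds under (PCC) alone, so semi-regularity is stronger than needed for this particular statement. The corollary is therefore most naturally read as recording, for convenience, the consequence of the more general theorem in the well-behaved class of semi-regular networks that is of primary interest in this section. A one-line proof suffices: \emph{Since semi-regular networks satisfy (PCC) by Def.~\ref{def:semi-regular}, the claim is a direct consequence of Thm.~\ref{thm:PCC::lca<->closed}.}
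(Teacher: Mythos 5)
Your proposal is correct and matches the paper exactly: the paper presents this corollary as "a trivial consequence of Theorem~\ref{thm:PCC::lca<->closed}," relying precisely on the fact that semi-regular networks satisfy (PCC) by Definition~\ref{def:semi-regular}. Your additional observation that the shortcut-free half of semi-regularity is unused is also accurate and consistent with the paper's framing.
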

Since every tree-child network satisfies (PCC) we also obtain
\begin{corollary}
  A tree-child network $N$ is an lca-network if and only if its clustering
  system $\mathscr{C}_N$ is closed.
\end{corollary}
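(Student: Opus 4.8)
The plan is to derive this statement as an immediate corollary of Theorem~\ref{thm:PCC::lca<->closed}, which already establishes the equivalence ``lca-network $\iff$ closed clustering system'' for the broader class of networks satisfying (PCC). The only gap to bridge is that the hypothesis of that theorem is (PCC) rather than tree-child, so the whole task reduces to showing that the tree-child property implies (PCC). Fortunately, this implication is exactly the content of Corollary~\ref{cor:TC->PCC} established earlier in Section~\ref{sec:tree-child-normal-tree-based}, so no new argument is needed.

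Concretely, I would proceed in two lines. First, invoke Corollary~\ref{cor:TC->PCC} to conclude that the given tree-child network $N$ satisfies (PCC). Second, apply Theorem~\ref{thm:PCC::lca<->closed} to this $N$: since $N$ satisfies (PCC), the theorem yields that $N$ is an lca-network if and only if $\mathscr{C}_N$ is closed, which is precisely the claimed equivalence. This mirrors verbatim the structure of the immediately preceding corollary for semi-regular networks, where semi-regularity supplied (PCC) in place of the tree-child property here.

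There is no genuine obstacle to overcome: all the substantive work has already been carried out in the proof of Theorem~\ref{thm:PCC::lca<->closed} (the forward direction constructing two distinct last common ancestors $u',u''$ from a non-closed cluster, and the reverse direction supplied by Lemma~\ref{lem:PCC+cl=>lca}), and the reduction from tree-child to (PCC) is handled entirely by Corollary~\ref{cor:TC->PCC}. The only point worth a moment's care is simply to state the logical chain cleanly so that the reader sees the corollary is a direct specialization and not an independent result. I expect the entire proof to be a one-sentence citation of the two results, exactly in the spirit of the sentence ``Since every tree-child network satisfies (PCC) we also obtain'' that introduces the statement.
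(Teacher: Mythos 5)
Your proposal is correct and matches the paper's own argument exactly: the paper introduces this corollary with the sentence ``Since every tree-child network satisfies (PCC) we also obtain,'' i.e., it combines Corollary~\ref{cor:TC->PCC} with Theorem~\ref{thm:PCC::lca<->closed} precisely as you do. Nothing is missing.
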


Moreover, we have
\begin{proposition}
  \label{prop:lca-iff-closed}
  A clustering system $\mathscr{C}$ is closed if and only if it
  is the clustering system of some lca-network.  In this case, the unique
  regular network and the unique cluster network of $\mathscr{C}$ are
  lca-networks.
\end{proposition}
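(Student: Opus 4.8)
The plan is to split the biconditional into its two implications and to handle the ``in this case'' addendum together with the forward implication. For the implication that a closed $\mathscr{C}$ is realized by an lca-network, I would invoke the uniqueness results already established: by Prop.~\ref{prop:regular-unique} there is a regular network $N$ with $\mathscr{C}_N=\mathscr{C}$, and by Thm.~\ref{thm:unique-cluster-network} there is a cluster network $N'$ with $\mathscr{C}_{N'}=\mathscr{C}$. Both classes satisfy (PCC): regular networks are semi-regular by Thm.~\ref{thm:semiregular}, and cluster networks are semi-regular by Thm.~\ref{thm:cluster-network-charac}, so in each case Def.~\ref{def:semi-regular} supplies (PCC). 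Since their clustering system $\mathscr{C}$ is closed by hypothesis, Thm.~\ref{thm:PCC::lca<->closed} immediately yields that $N$ and $N'$ are lca-networks. This simultaneously produces a witnessing lca-network for the forward direction and proves the final sentence about the regular and cluster networks.

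For the converse --- that $\mathscr{C}=\mathscr{C}_N$ for an lca-network $N$ forces $\mathscr{C}$ to be closed --- I would argue directly rather than re-using Thm.~\ref{thm:PCC::lca<->closed}, because an arbitrary lca-network need not satisfy (PCC), so that theorem does not apply; this is the one genuine obstacle to watch for. Instead I would reduce closure to intersections of pairs via Lemma~\ref{lem:simple-closed}: it suffices to show that $A,B\in\mathscr{C}$ with $A\cap B\neq\emptyset$ implies $A\cap B\in\mathscr{C}$. Writing $A=\CC(a)$ and $B=\CC(b)$ for suitable vertices and setting $q\coloneqq\lca(A\cap B)$, which exists and is unique since $N$ is an lca-network, I obtain $A\cap B\subseteq\CC(q)$ directly from the defining property of $q$.

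The key step is to pin down $\CC(q)$ exactly. Applying Cor.~\ref{cor:lca-below} to the non-empty set $A\cap B\subseteq\CC(a)$ gives $q\preceq_N a$, and hence $\CC(q)\subseteq\CC(a)=A$ by Lemma~\ref{lem:inclusion}; the symmetric application to $A\cap B\subseteq\CC(b)$ gives $\CC(q)\subseteq B$. Thus $\CC(q)\subseteq A\cap B$, which together with $A\cap B\subseteq\CC(q)$ forces $\CC(q)=A\cap B$. Since $\CC(q)\in\mathscr{C}_N=\mathscr{C}$, the intersection $A\cap B$ lies in $\mathscr{C}$, and Lemma~\ref{lem:simple-closed} then delivers closure. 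I expect no further complications; the entire weight of the converse rests on Cor.~\ref{cor:lca-below}, and the only point requiring care is the explicit observation that one must \emph{not} route the converse through (PCC).
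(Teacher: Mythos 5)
Your proof is correct, and for the forward implication together with the ``in this case'' addendum it coincides with the paper's argument: both take the unique regular network (Prop.~\ref{prop:regular-unique}) and the unique cluster network (Thm.~\ref{thm:unique-cluster-network}), observe via Thm.~\ref{thm:semiregular} and Thm.~\ref{thm:cluster-network-charac} that each is semi-regular and hence satisfies (PCC), and then invoke Thm.~\ref{thm:PCC::lca<->closed}. Where you genuinely diverge is the converse. The paper's entire proof of the converse is the same one-liner ``now apply Thm.~\ref{thm:PCC::lca<->closed}'', which only yields: the \emph{regular} (or \emph{cluster}) network of $\mathscr{C}$ is an lca-network if and only if $\mathscr{C}$ is closed. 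It does not, as written, cover the case where $\mathscr{C}=\mathscr{C}_M$ for an \emph{arbitrary} lca-network $M$, since such an $M$ need not satisfy (PCC) and Thm.~\ref{thm:PCC::lca<->closed} is then inapplicable --- exactly the obstacle you flagged. Your direct argument closes this gap: writing $A=\CC(a)$, $B=\CC(b)$ and $q=\lca(A\cap B)$, you get $A\cap B\subseteq\CC(q)$ from the definition of a common ancestor, and $\CC(q)\subseteq A\cap B$ from Cor.~\ref{cor:lca-below} combined with Lemma~\ref{lem:inclusion}, so $A\cap B=\CC(q)\in\mathscr{C}$ and Lemma~\ref{lem:simple-closed} gives closedness. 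So your proposal is not merely a different route; it is strictly more careful than the paper's proof, which tacitly assumes the ``only if'' direction holds for all lca-networks while only verifying it for those satisfying (PCC). The cost is a few extra lines; what it buys is an argument that actually applies to every lca-network, which is what the statement asserts.
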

\begin{proof}
  Let $\mathscr{C}$ be the clustering system.  By
  Prop.~\ref{prop:regular-unique} and Thm.~\ref{thm:unique-cluster-network},
  there is a unique regular network $N$ and a unique cluster network $N'$
  with $\mathscr{C}=\mathscr{C}_N=\mathscr{C}_{N'}$.  By
  Thm.~\ref{thm:semiregular} and \ref{thm:cluster-network-charac},
  resp., both networks are semi-regular and thus satisfy (PCC).  Now apply
  Thm.~\ref{thm:PCC::lca<->closed}.
  \end{proof}

As in the case of trees, there is a simple connection of the LCA with the
closure operator:
\begin{lemma}
  \label{lem:lca-cl-identities}
  Let $N$ be an lca-network with clustering system $\mathscr{C}$.
  Then the following identity holds:
  \begin{equation}
    \label{eq:cl-lca}
    \CC(\lca(Y))=\cl(Y) \quad\text{ for all } \emptyset\ne Y\subseteq X.
  \end{equation}
  Furthermore, we have
  \begin{equation}
    \label{eq:lca-C-equals-C}
    \CC(\lca(C))=\cl(C)=C \quad\text{ for all } C\in\mathscr{C}.
  \end{equation}
\end{lemma}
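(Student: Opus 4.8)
The plan is to prove the first identity directly and then obtain the second as an immediate corollary. Fix a non-empty $Y\subseteq X$. Since $N$ is an lca-network, $q\coloneqq\lca(Y)$ is well-defined, so all parts of Obs.~\ref{obs:deflcaY} are at my disposal. I would establish the equality $\CC(q)=\cl(Y)$ by proving the two inclusions separately. For $\cl(Y)\subseteq\CC(q)$, I note that $q$ is an ancestor of every element of $Y$, so $y\preceq_N q$ and hence $y\in\CC(q)$ for all $y\in Y$; thus $\CC(q)$ is itself a cluster with $Y\subseteq\CC(q)$, and since $\cl(Y)$ is by definition the intersection of \emph{all} clusters containing $Y$, it follows that $\cl(Y)\subseteq\CC(q)$.

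For the reverse inclusion $\CC(q)\subseteq\cl(Y)$, it suffices to show $\CC(q)\subseteq C$ for every cluster $C\in\mathscr{C}$ with $Y\subseteq C$. Given such a $C$, I would choose $v\in V(N)$ with $\CC(v)=C$; then $Y\subseteq\CC(v)$, so Obs.~\ref{obs:deflcaY}(i) yields $q\preceq_N v$, and Lemma~\ref{lem:inclusion} gives $\CC(q)\subseteq\CC(v)=C$. Intersecting over all such $C$ gives $\CC(q)\subseteq\cl(Y)$, and combined with the previous paragraph this establishes $\CC(\lca(Y))=\cl(Y)$, i.e.\ Eq.~\eqref{eq:cl-lca}.

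Finally, for the second identity I would apply the first one with $Y=C$ for an arbitrary $C\in\mathscr{C}$, which is legitimate since $C\subseteq X$ and $C\neq\emptyset$. It then remains only to observe that $\cl(C)=C$ holds for every cluster: the enlarging property of the closure operator gives $C\subseteq\cl(C)$, while $C$ appears among the clusters intersected in the definition of $\cl(C)$, forcing $\cl(C)\subseteq C$. Hence $\CC(\lca(C))=\cl(C)=C$. I expect no genuine obstacle here; the only substantive point is recognizing that the lca-network hypothesis is used solely to guarantee that $\lca(Y)$ is defined so that Obs.~\ref{obs:deflcaY}(i) applies, after which both identities reduce to the two elementary inclusions above.
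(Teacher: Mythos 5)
Your proof is correct, but it is organized differently from the paper's and rests on strictly lighter prerequisites. The paper first invokes Prop.~\ref{prop:lca-iff-closed} to conclude that $\mathscr{C}$ is closed, so that $\cl(Y)$ is itself a cluster, namely the unique inclusion-minimal cluster containing $Y$; it then picks a witness $u$ with $\CC(u)=\cl(Y)$, applies Cor.~\ref{cor:lca-below} and Lemma~\ref{lem:inclusion} to get $Y\subseteq\CC(\lca(Y))\subseteq\cl(Y)$, and finishes by inclusion-minimality. You instead prove the two inclusions directly: $\cl(Y)\subseteq\CC(\lca(Y))$ because $\CC(\lca(Y))$ is one of the clusters intersected in the definition of $\cl$, and $\CC(\lca(Y))\subseteq\cl(Y)$ by applying Obs.~\ref{obs:deflcaY}(i) and Lemma~\ref{lem:inclusion} to \emph{every} cluster containing $Y$ rather than to a single minimal one. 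The ingredients are essentially the same (Obs.~\ref{obs:deflcaY}(i) is the general-network version of Cor.~\ref{cor:lca-below}), but your arrangement never needs closedness of $\mathscr{C}$. That buys two things: the proof is self-contained modulo two elementary facts, and it actually yields closedness as a byproduct, since it shows $\cl(Y)=\CC(\lca(Y))\in\mathscr{C}$ for every non-empty $Y\subseteq X$. This is a genuine gain, because the paper's own route to ``lca-network $\Rightarrow$ closed'' (the \emph{if} direction of Prop.~\ref{prop:lca-iff-closed}) passes through Thm.~\ref{thm:PCC::lca<->closed}, which assumes (PCC) --- a property an arbitrary lca-network need not have --- so your argument would in fact serve to close that dependency more cleanly. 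What the paper's version buys in exchange is brevity, given that its closedness machinery is already in place at that point in the text.
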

\begin{proof}
    Let $\emptyset\ne Y\subseteq X$.  By Prop.~\ref{prop:lca-iff-closed},
    $\mathscr{C}$ is closed and thus $\cl(Y)\in \mathscr{C}$.  In
    particular, $\cl(Y)$ is the unique inclusion-minimal cluster in
    $\mathscr{C}$ containing $Y$.  Let $u$ be a vertex in $N$ such that
    $\CC(u)=\cl(Y)$.  Since $Y\subseteq \cl(Y)=\CC(u)$,
    Cor.~\ref{cor:lca-below} yields $\lca(Y)\preceq_{N} u$.  By
    Lemma~\ref{lem:inclusion}, we have $Y\subseteq\CC(\lca(Y))\subseteq
    \CC(u)=\cl(Y)$.  Since $\cl(Y)$ is the unique inclusion-minimal cluster
    in $\mathscr{C}$ containing $Y$, this implies $\CC(\lca(Y))=\cl(Y)$.
    For every cluster $C\in\mathscr{C}$, we have $C=\cl(C)$ by
    Equ.~\eqref{eq:closure} and thus Equ.~\eqref{eq:lca-C-equals-C} follows
    immediately.
\end{proof}

Next we show that two set sets have the same LCA whenever their LCAs are
associated with the same cluster.
\begin{lemma}
  \label{lem:lca-comparable}
  Let $N$ be an lca-network on $X$ and let $Y,Y'\subseteq X$.  Then (i)
  $\lca(\CC(\lca(Y)))=\lca(Y)$ and (ii) $\CC(\lca(Y))=\CC(\lca(Y'))$
  implies $\lca(Y)=\lca(Y')$
\end{lemma}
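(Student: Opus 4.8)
The plan is to reduce both claims to Observation~\ref{obs:deflcaY}, which was stated for arbitrary networks whenever the relevant $\lca$ exists. First I would record the standing assumption that $Y$ and $Y'$ are non-empty, so that $\lca(Y)$ and $\lca(Y')$ are defined by Def.~\ref{def:lcaN} (as $N$ is an lca-network); moreover every cluster $\CC(v)$ is a non-empty subset of $X$ by Lemma~\ref{lem:C-N}, so $\lca(\CC(\lca(Y)))$ and $\lca(\CC(\lca(Y')))$ are defined as well. For statement~(i), I observe that it is exactly Observation~\ref{obs:deflcaY}(iii) applied with $A\coloneqq Y$: the hypothesis ``$\lca(A)$ is defined'' is automatic here, so $\lca(\CC(\lca(Y)))=\lca(Y)$ follows immediately. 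If a self-contained argument were preferred, I would set $q\coloneqq\lca(Y)$ and check directly that $q$ is the unique least common ancestor of $\CC(q)$: it is trivially a common ancestor of $\CC(q)=\{x\in X\mid x\preceq_N q\}$, and any common ancestor $w$ of $\CC(q)$ satisfies $Y\subseteq\CC(q)\subseteq\CC(w)$, hence is a common ancestor of $Y$ and thus $q\preceq_N w$ by minimality of $q=\lca(Y)$; uniqueness of the minimal such $w$ follows from Obs.~\ref{obs:LCA-PO} together with $\LCA(Y)=\{q\}$.

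With~(i) in hand, statement~(ii) is a one-line consequence. Applying~(i) to $Y$ and to $Y'$ gives $\lca(Y)=\lca(\CC(\lca(Y)))$ and $\lca(Y')=\lca(\CC(\lca(Y')))$. Under the hypothesis $\CC(\lca(Y))=\CC(\lca(Y'))$, the two right-hand sides are the value of $\lca$ on one and the same subset of $X$, hence equal; chaining the three equalities yields $\lca(Y)=\lca(Y')$.

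I expect no substantial obstacle: the lemma is essentially a bookkeeping consequence of Observation~\ref{obs:deflcaY}, and the only point requiring care is well-definedness, i.e.\ ensuring that each occurrence of $\lca$ is a genuine singleton. This is handled uniformly by the lca-network hypothesis together with the non-emptiness of clusters (Lemma~\ref{lem:C-N}), so the whole argument stays at the level of order-theoretic identities already established. As an alternative route for~(i), one could instead use Lemma~\ref{lem:lca-cl-identities} via $\CC(\lca(Y))=\cl(Y)$ and idempotency of $\cl$, but this is a longer path to the same conclusion and I would not pursue it.
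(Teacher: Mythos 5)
Your proof is correct, and for part (i) it takes a genuinely different (and shorter) route than the paper. You reduce (i) to Obs.~\ref{obs:deflcaY}(iii), which the paper established earlier for \emph{arbitrary} networks whenever $\lca(A)$ exists; the lca-network hypothesis only enters to guarantee that $\lca(Y)$ and $\lca(\CC(\lca(Y)))$ are defined, which you check. The paper instead re-derives (i) from scratch inside the lca-network setting: it invokes the closure identity $\CC(\lca(Y))=\cl(Y)$ of Lemma~\ref{lem:lca-cl-identities} and the enlarging property of $\cl$ to obtain $Y\subseteq\CC(\lca(Y))$, applies Obs.~\ref{obs:lca-PO} to get $\lca(Y)\preceq_N\lca(\CC(\lca(Y)))$, gets the reverse relation from the fact that $\lca(Y)$ is a common ancestor of $\CC(\lca(Y))$, and concludes by antisymmetry. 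Your route avoids this lca-network-specific machinery (closedness of $\mathscr{C}_N$ enters the paper's proof through Lemma~\ref{lem:lca-cl-identities}) and exposes that statement (i) is essentially a restatement of an already-proved observation; indeed the paper itself uses Obs.~\ref{obs:deflcaY}(iii) in exactly your manner later, in the proof of Prop.~\ref{prop:wh-stlca}. What the paper's longer derivation buys is a self-contained illustration of the interplay between $\lca$ and the closure operator, which it exploits in the surrounding results. Your part (ii) coincides with the paper's: apply (i) to $Y$ and $Y'$ and use that $\lca$ is a well-defined function, so equal clusters yield equal images. Your attention to non-emptiness of $Y$, $Y'$, and of clusters is a point the paper leaves tacit, and your fallback direct argument for (i) is also sound.
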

\begin{proof}
  Since $\cl$ is enlarging, i.e., $Y\subseteq\cl(Y)$, we have
  $Y\subseteq\CC(\lca(Y))$ by Lemma~\ref{lem:lca-cl-identities}.  Thus
  Obs.~\ref{obs:lca-PO} implies $\lca(Y)\preceq_N\lca(\CC(\lca(Y)))$. On
  the other hand, all leaves in $\CC(\lca(Y))$ are descendants of
  $\lca(Y)$, and thus $\lca(\CC(\lca(Y)))\preceq_N \lca(Y)$. Thus statement
  (i) holds.  Now suppose $\CC(\lca(Y))=\CC(\lca(Y'))$. Uniqueness of the
  LCA implies $\lca(\CC(\lca(Y)))=\lca(\CC(\lca(Y')))$ and thus statement
  (i) implies $\lca(Y)=\lca(Y')$.
\end{proof}

\subsection{Strong Lca-Networks and Weak Hierarchies}
\label{ssec:lcaN-strong}

In this section, we consider an interesting subclass of lca-networks.
\begin{definition}
  \label{def:stronglca}
  A network $N$ on $X$ is a \emph{strong lca-network} if it is an
  lca-network and, for every non-empty subset $A\subseteq X$, there are
  $x,y\in A$ such that $\lca(\{x,y\})=\lca(A)$.
\end{definition}
Fig.~\ref{fig:lca-no-xy} shows an lca-network that is not a strong
lca-network.
\begin{figure}[t]
  \begin{center}
    \includegraphics[width=0.2\textwidth]{./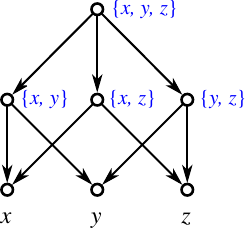}
  \end{center}
  \caption{An lca-network with a subset of leaves $A\coloneqq
    \{x,y,z\}\subseteq X$ in which there are no $x',y'\in A$ such that
    $\lca(\{x',y'\})=\lca(A)$.}
  \label{fig:lca-no-xy}
\end{figure}
We shall see below that strong lca-networks are intimately connected with
well-studied types of clustering systems.
\begin{definition}
  \label{def:wH-and-binary}
  A clustering system $\mathscr{C}$ on $X$ is
  \begin{itemize}
  \item[] a \emph{weak hierarchy} if $C_1\cap C_2\cap C_3 \in\{C_1\cap
    C_2,C_1\cap C_3, C_2\cap C_3,\emptyset\}$ for all
    $C_1,C_2,C_3\in\mathscr{C}$; and
  \item[] \emph{binary} if it is pre-binary and, for every
    $C\in\mathscr{C}$, there is a pair of vertices $x,y\in X$ such that $C$
    is the unique inclusion-minimal cluster containing $x$ and $y$.
  \end{itemize}
\end{definition}

Weak hierarchies have been studied in detail in \cite{Bandelt:89}. Binary
clustering system are considered systematically in \cite{Barthelemy:08}.
We first consider the lca-networks with binary clustering systems:

\begin{lemma}
  \label{lem:binC}
  Let $N$ be an lca-network. Then the following conditions are equivalent:
  \begin{itemize}
    \item[(i)] For all $v\in N$, there is $x,y\in\CC(v)$ such that
      $\lca(\{x,y\})=\lca(\CC(v))$.
    \item[(ii)] $\mathscr{C}_N$ is binary.
  \end{itemize}
\end{lemma}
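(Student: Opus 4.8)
The plan is to prove the equivalence by establishing both directions using the characterization of $\lca$ via clusters (Obs.~\ref{obs:deflcaY} and Lemma~\ref{lem:lca-cl-identities}) together with the definition of pre-binary and binary clustering systems. The key bridge is the identity $\CC(\lca(Y))=\cl(Y)$ for all non-empty $Y\subseteq X$ valid in lca-networks (Lemma~\ref{lem:lca-cl-identities}), which lets me translate statements about $\lca$ into statements about inclusion-minimal clusters. First I would observe that, since $N$ is an lca-network, $\mathscr{C}_N$ is automatically pre-binary: for any $x,y\in X$, Obs.~\ref{obs:deflcaY}(ii) says $\CC(\lca(\{x,y\}))$ is the unique inclusion-minimal cluster containing $\{x,y\}$. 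Thus in both directions I only need to worry about the second clause in the definition of binary, namely that every cluster $C\in\mathscr{C}_N$ arises as the unique inclusion-minimal cluster containing some pair $x,y\in X$.

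For the direction (i)$\implies$(ii), I would take an arbitrary $C\in\mathscr{C}_N$ and a vertex $v$ with $\CC(v)=C$. By Lemma~\ref{lem:lca-cl-identities}, $\lca(C)$ is defined with $\CC(\lca(C))=C$, so I may replace $v$ by $\lca(C)$ if needed, or simply apply (i) directly: there exist $x,y\in\CC(v)=C$ with $\lca(\{x,y\})=\lca(\CC(v))$. Now I would argue that $\CC(\lca(\{x,y\}))=\CC(\lca(\CC(v)))=\CC(v)=C$ using Lemma~\ref{lem:lca(C(v))<=v} or Obs.~\ref{obs:deflcaY}(iii) to identify $\CC(\lca(\CC(v)))$ with $C$. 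Since $\CC(\lca(\{x,y\}))$ is the unique inclusion-minimal cluster containing $\{x,y\}$ (pre-binarity), this shows $C$ is the unique inclusion-minimal cluster containing the pair $x,y$, which is exactly the required property. Hence $\mathscr{C}_N$ is binary.

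For the converse (ii)$\implies$(i), I would fix $v\in V(N)$ and set $C\coloneqq\CC(v)\in\mathscr{C}_N$. Since $\mathscr{C}_N$ is binary, there exist $x,y\in X$ such that $C$ is the unique inclusion-minimal cluster containing $\{x,y\}$; in particular $x,y\in C=\CC(v)$. By pre-binarity and Obs.~\ref{obs:deflcaY}(ii), $\CC(\lca(\{x,y\}))$ equals this unique inclusion-minimal cluster, so $\CC(\lca(\{x,y\}))=C=\CC(v)$. Then I would invoke Lemma~\ref{lem:lca-comparable}(ii): since $\CC(\lca(\{x,y\}))=\CC(v)=\CC(\lca(\CC(v)))$ (the last equality from Lemma~\ref{lem:lca(C(v))<=v} or Lemma~\ref{lem:lca-comparable}(i)), uniqueness of the LCA forces $\lca(\{x,y\})=\lca(\CC(v))$, which is precisely (i).

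The main obstacle I anticipate is keeping the bookkeeping straight between a cluster $C$, the vertices $v$ realizing it, and the canonical vertex $\lca(C)$; there are several clusters that coincide as \emph{sets} but correspond to distinct vertices, and the argument must consistently use that in an lca-network $\lca$ of a cluster is uniquely determined and satisfies $\CC(\lca(C))=C$. The cleanest way to avoid errors is to funnel everything through the single identity $\CC(\lca(Y))=\cl(Y)$ and the uniqueness statement in Lemma~\ref{lem:lca-comparable}, so that the pairwise-versus-arbitrary-subset distinction reduces entirely to comparing two clusters for equality rather than manipulating vertices directly.
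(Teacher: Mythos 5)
Your proposal is correct and follows essentially the same route as the paper's proof: both directions hinge on pre-binarity of $\mathscr{C}_N$ (Obs.~\ref{obs:lca->prebinary}), the identity $\CC(\lca(\CC(v)))=\CC(v)$ from Lemma~\ref{lem:lca(C(v))<=v}, and Obs.~\ref{obs:deflcaY}(ii) to identify $\CC(\lca(\{x,y\}))$ with the unique inclusion-minimal cluster containing the pair. Your substitution of Lemma~\ref{lem:lca-comparable}(ii) for the paper's use of Obs.~\ref{obs:deflcaY}(iii) in the converse direction is an equivalent packaging of the same uniqueness argument, not a different method.
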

\begin{proof}
  Since $N$ is an lca-network, it is in particular pre-binary (cf.\
  Obs.~\ref{obs:lca->prebinary}) and satisfies (CL). Property (i) and
  Lemma~\ref{lem:lca(C(v))<=v} imply
  $\CC(v)=\CC(\lca(\CC(v)))=\CC(\lca(\{x,y\}))$ for two vertices
  $x,y\in\CC(v)$. By Obs.~\ref{obs:deflcaY}(ii), $\CC(v)$ is the unique
  inclusion-minimal cluster containing $x$ and $y$, i.e., $\mathscr{C}_N$
  is binary.  Conversely, suppose $N$ is an lca-network with a binary
  clustering system. Then for every $v\in N$, there is $x,y\in X$ such that
  $\CC(v)$ is the unique inclusion-minimal cluster that contains $x$ and
  $y$.  By Obs.~\ref{obs:deflcaY}(ii), this implies
  $\CC(v)=\CC(\lca(\{x,y\}))$.  Hence, we have
  $\lca(\CC(v))=\lca(\CC(\lca(\{x,y\})))$, and thus, by
  Obs.~\ref{obs:deflcaY}(iii),
  $\lca(\CC(v))=\lca(\CC(\lca(\{x,y\})))=\lca(\{x,y\})$, i.e., property (i)
  holds.
\end{proof}

In particular, therefore, strong lca-networks give rise to binary
clustering systems:
\begin{corollary}
  The clustering system of a strong lca-network is binary.
\end{corollary}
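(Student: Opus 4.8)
The plan is to reduce this corollary directly to Lemma~\ref{lem:binC}, which gives an equivalent characterization of when an lca-network has a binary clustering system. Since a strong lca-network is by definition an lca-network, Lemma~\ref{lem:binC} applies, and it suffices to verify condition~(i) of that lemma: for every $v\in V(N)$, there exist $x,y\in\CC(v)$ such that $\lca(\{x,y\})=\lca(\CC(v))$.

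First I would fix an arbitrary vertex $v\in V(N)$ and consider the set $A\coloneqq\CC(v)$. The only preliminary point to record is that $A$ is a non-empty subset of $X$; this is immediate from Lemma~\ref{lem:C-N} (every cluster is non-empty) together with Def.~\ref{def:cluster}, so $A$ is an admissible argument for the strong lca-network property. Applying Def.~\ref{def:stronglca} to the non-empty set $A=\CC(v)$ then yields leaves $x,y\in A=\CC(v)$ with $\lca(\{x,y\})=\lca(A)=\lca(\CC(v))$.

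This is precisely condition~(i) of Lemma~\ref{lem:binC}, and since $v$ was arbitrary it holds for all $v\in V(N)$. By the equivalence established in Lemma~\ref{lem:binC}, $\mathscr{C}_N$ is therefore binary. The argument is essentially a one-line specialization of the strong lca-property to the clusters $\CC(v)$, so there is no genuine obstacle here; the work has already been done in setting up Lemma~\ref{lem:binC}, whose proof in turn relies on the fact that $N$ is pre-binary (Obs.~\ref{obs:lca->prebinary}) and satisfies (CL). The only subtlety worth flagging is the need for $\CC(v)$ to be non-empty so that the defining property of a strong lca-network can legitimately be invoked, which is guaranteed by Lemma~\ref{lem:C-N}.
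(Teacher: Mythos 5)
Your proof is correct and is exactly the argument the paper intends: the corollary is stated as an immediate consequence of Lemma~\ref{lem:binC}, obtained by specializing the strong lca-property (Def.~\ref{def:stronglca}) to the non-empty sets $A=\CC(v)$, which verifies condition~(i) of that lemma. Your extra care in noting that $\CC(v)\ne\emptyset$ (via Lemma~\ref{lem:C-N}) is a fine, if minor, addition.
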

The converse is not true in general, since condition (i) in
Lemma~\ref{lem:binC} requires only that the LCAs of clusters but not
necessarily the LCAs of all sets are determined by the LCA of a leaf
pair. The latter, stronger condition, is related to weak hierarchies.  To
investigate this connection, we recall
\begin{proposition}
  \textnormal{\cite[Lemma~1]{Bandelt:89}}
  \label{prop:BD89}
  A clustering system $\mathscr{C}$ on $X$ is a weak hierarchy if and only if
  for every nonempty subset $A\subseteq X$ there exist $x,y\in A$ such that
  $\cl(A)=\cl(\{x,y\})$.
\end{proposition}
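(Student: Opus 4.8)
The plan is to prove the two implications separately. The direction from the pairwise‑generation property to the weak‑hierarchy condition is the routine one and I would dispatch it first; the converse is where I expect the genuine difficulty, since a weak hierarchy a priori only forces $\cl(A)$ to be \emph{assembled from} pairwise closures, not to coincide with a single one.

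For the ``if'' direction I would fix $C_1,C_2,C_3\in\mathscr{C}$, set $T\coloneqq C_1\cap C_2\cap C_3$, and note $T\subseteq C_i\cap C_j$ for every pair, so it suffices to exclude the case $T\neq\emptyset$ with $T\subsetneq C_i\cap C_j$ for all three pairs. Assuming this, I would pick for each $\{i,j,k\}=\{1,2,3\}$ a witness $c_k\in (C_i\cap C_j)\setminus C_k$. A one‑line check shows $c_1,c_2,c_3$ are pairwise distinct and that each cluster $C_k$ contains the two witnesses $c_i,c_j$ with $i,j\neq k$ but omits $c_k$; hence $c_k\notin\cl(\{c_i,c_j\})$, witnessed by $C_k$. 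Applying the pair property to $A=\{c_1,c_2,c_3\}$ returns a pair whose closure equals $\cl(A)$ and therefore contains the omitted element, contradicting the previous sentence. The degenerate case in which the returned ``pair'' is a singleton $\{c\}$ is handled identically, using $\cl(\{c\})\subseteq\cl(\{c,c'\})$. This forces $T=\emptyset$ or $T=C_i\cap C_j$ for some pair, which is exactly the weak‑hierarchy condition.

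For the converse I would consider the family of pairwise closures $\{\cl(\{x,y\}) : x,y\in A\}$, all contained in $\cl(A)$, and choose an inclusion‑maximal member $M=\cl(\{x_0,y_0\})$; by enlargement, isotony and idempotency of $\cl$, showing $A\subseteq M$ is equivalent to $M=\cl(A)$. Supposing $z\in A\setminus M$, maximality of $M$ yields $y_0\notin\cl(\{x_0,z\})$ and $x_0\notin\cl(\{y_0,z\})$ (otherwise that pairwise closure would strictly contain $M$), and together with $z\notin\cl(\{x_0,y_0\})$ these three non‑memberships produce clusters $C_1,C_2,C_3$ on which one applies the weak‑hierarchy identity. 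I expect this final step to be the main obstacle: the configuration only forces $C_1\cap C_2\cap C_3=\emptyset$, which is not in itself contradictory, so the argument does not close with a single maximal pair alone. Overcoming this requires the sharper fact that in a weak hierarchy $\cl(A)=\bigcup_{x,y\in A}\cl(\{x,y\})$, i.e.\ that every point of $\cl(A)$ already lies in some pairwise closure, and then a separate argument that this union is exhausted by one dominating member. Establishing this union representation—the genuine content of the Bandelt--Dress lemma—and reducing it to a single pair drawn from $A$ is therefore the step I would devote the most care to.
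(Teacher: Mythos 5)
First, a caveat on the comparison itself: the paper contains no proof of Proposition~\ref{prop:BD89} at all -- it is imported by citation from Bandelt and Dress -- so your attempt has to be judged on its own merits. Your ``if'' direction is correct and complete: the witnesses $c_k\in(C_i\cap C_j)\setminus C_k$ are pairwise distinct, each $C_k$ contains $c_i,c_j$ but not $c_k$, hence $c_k\notin\cl(\{c_i,c_j\})$, and applying the pair property to $A=\{c_1,c_2,c_3\}$ (including the degenerate singleton case, which you handle) yields the contradiction. In fact your argument never uses $T\neq\emptyset$, so it proves something slightly stronger, namely that the pair property forces $C_1\cap C_2\cap C_3$ to \emph{equal} one of the pairwise intersections.

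The converse is a genuine gap, as you yourself say, but the obstacle is not what you think it is. Your maximal-pair setup and your diagnosis are both exactly right: from $z\in A\setminus M$ one gets clusters $C_1\supseteq\{x_0,z\}$ with $y_0\notin C_1$, $C_2\supseteq\{y_0,z\}$ with $x_0\notin C_2$, and $C_3\supseteq\{x_0,y_0\}$ with $z\notin C_3$, and the condition of Definition~\ref{def:wH-and-binary} then only forces $C_1\cap C_2\cap C_3=\emptyset$, which is no contradiction. The reason you cannot do better is that under the paper's definition -- which admits $\emptyset$ as a fourth allowed value of the triple intersection -- the ``only if'' direction is in fact \emph{false}: for $X=\{a,b,c\}$, the clustering system $\mathscr{C}=2^X\setminus\{\emptyset\}$ satisfies that definition (every triple intersection is empty or equals a pairwise one), yet $\cl(X)=X$ while $\cl(\{x,y\})=\{x,y\}$ for every pair, so no pair of elements of $A=X$ generates $\cl(A)$. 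Consequently, the repair you propose -- establishing $\cl(A)=\bigcup_{x,y\in A}\cl(\{x,y\})$ and then finding a dominating member -- cannot succeed either: in this example the union representation holds, but no single member dominates. The proposition is true only under the original Bandelt--Dress definition, in which the triple intersection must \emph{equal} one of the three pairwise intersections, with no $\emptyset$ escape. And under that definition you were already done: in your configuration $z\in(C_1\cap C_2)\setminus C_3$, $x_0\in(C_1\cap C_3)\setminus C_2$ and $y_0\in(C_2\cap C_3)\setminus C_1$, so the triple intersection differs from all three pairwise intersections -- an immediate contradiction, with no union representation or induction needed. In short: your converse stalls exactly where it must stall given the definition in force; replacing the paper's variant of the weak-hierarchy condition by the Bandelt--Dress original turns the configuration you already constructed into a complete proof.
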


\begin{proposition}
  \label{prop:wh-stlca}
  Let $N$ be an lca-network on $N$. Then $N$ is a strong lca-network if and
  only of $\mathscr{C}_N$ is a weak hierarchy.
\end{proposition}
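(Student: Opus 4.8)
The plan is to prove the equivalence by connecting the defining property of strong lca-networks directly to the characterization of weak hierarchies given in Prop.~\ref{prop:BD89}. The key bridge is Lemma~\ref{lem:lca-cl-identities}, which tells us that in an lca-network $\CC(\lca(Y))=\cl(Y)$ for all nonempty $Y\subseteq X$. This identity lets me translate statements about $\lca$ into statements about $\cl$ and back.

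Let me think about what I need. Prop.~\ref{prop:BD89} says $\mathscr{C}_N$ is a weak hierarchy iff for every nonempty $A\subseteq X$ there are $x,y\in A$ with $\cl(A)=\cl(\{x,y\})$. The definition of strong lca-network says: for every nonempty $A\subseteq X$ there are $x,y\in A$ with $\lca(\{x,y\})=\lca(A)$. So the whole proof reduces to showing that, in an lca-network, "$\lca(\{x,y\})=\lca(A)$ for some $x,y\in A$" is equivalent to "$\cl(\{x,y\})=\cl(A)$ for the same $x,y$". I expect this equivalence to follow cleanly from Eq.~\eqref{eq:cl-lca} together with the uniqueness of the LCA.

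Concretely, first I would fix a nonempty $A\subseteq X$ and a pair $x,y\in A$. Applying $\CC$ to the equation $\lca(\{x,y\})=\lca(A)$ immediately gives $\CC(\lca(\{x,y\}))=\CC(\lca(A))$, which by Eq.~\eqref{eq:cl-lca} is exactly $\cl(\{x,y\})=\cl(A)$; this handles one direction. For the converse, suppose $\cl(\{x,y\})=\cl(A)$. By Eq.~\eqref{eq:cl-lca} this reads $\CC(\lca(\{x,y\}))=\CC(\lca(A))$, and then Lemma~\ref{lem:lca-comparable}(ii) — which states that equality of the clusters $\CC(\lca(Y))=\CC(\lca(Y'))$ forces $\lca(Y)=\lca(Y')$ in an lca-network — yields $\lca(\{x,y\})=\lca(A)$. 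Thus for each fixed $A$, the existence of a good pair in the $\lca$-sense is equivalent to the existence of a good pair in the $\cl$-sense. Quantifying over all nonempty $A\subseteq X$ gives: $N$ is a strong lca-network $\iff$ for every nonempty $A$ there exist $x,y\in A$ with $\cl(A)=\cl(\{x,y\})$ $\iff$ (by Prop.~\ref{prop:BD89}) $\mathscr{C}_N$ is a weak hierarchy.

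The reasoning is short because the heavy lifting has already been done in Lemma~\ref{lem:lca-cl-identities} and Lemma~\ref{lem:lca-comparable}; there is no real obstacle, only the need to be careful that the \emph{same} pair $x,y$ is used on both sides of each translation (which it is, since both $\CC(\lca(\cdot))=\cl(\cdot)$ and Lemma~\ref{lem:lca-comparable}(ii) act pairwise and preserve the chosen pair). The one point worth double-checking is that Prop.~\ref{prop:BD89} is stated for the abstract clustering system $\mathscr{C}$ and requires no further structure, so it applies verbatim to $\mathscr{C}_N$; and that $\mathscr{C}_N$ is indeed closed, which is guaranteed here because $N$ is an lca-network (needed so that $\cl(Y)\in\mathscr{C}_N$ and Lemma~\ref{lem:lca-cl-identities} applies). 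I would therefore open the proof by invoking that $\mathscr{C}_N$ is closed via Prop.~\ref{prop:lca-iff-closed}, then run the two-line equivalence above.
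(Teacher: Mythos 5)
Your proof is correct and takes essentially the same route as the paper's: both directions translate between $\lca$ and $\cl$ via Eq.~\eqref{eq:cl-lca} and then invoke Prop.~\ref{prop:BD89}, with your appeal to Lemma~\ref{lem:lca-comparable}(ii) in the converse being just a packaged form of the paper's direct use of Obs.~\ref{obs:deflcaY}(iii).
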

\begin{proof}
  Def.~\ref{def:stronglca} and Equ.~\eqref{eq:cl-lca} imply that for every
  $\emptyset\ne A\subseteq X$ there is $x,y\in A$ such that
  $\cl(A)=\CC(\lca(A))=\CC(\lca(\{x,y\}))=\cl(\{x,y\})$, and thus
  $\mathscr{C}_N$ is a weak hierarchy by Prop.~\ref{prop:BD89}. Conversely,
  if $N$ is an lca-network such that $\mathscr{C}_N$ is a weak hierarchy,
  then for all $\emptyset\ne A\subseteq X$ there is $x,y\in A$ such that
  $\CC(\lca(A))=\CC(\lca(\{x,y\}))$. By Obs.~\ref{obs:deflcaY} we have
  $\lca(A)=\lca(\CC(\lca(A)))=\lca(\CC(\lca(\{x,y\})))=\lca(\{x,y\})$, and
  thus $N$ is a strong lca-network.
\end{proof}

From Prop.~\ref{prop:lca-iff-closed} and Prop.~\ref{prop:wh-stlca}, we obtain
\begin{corollary}
  \label{cor:lca-iff-closed-weak}
  Let $N$ be a network satisfying (PCC). Then $N$ is a strong lca-network
  if and only if $\mathscr{C}_N$ is a closed weak hierarchy.
\end{corollary}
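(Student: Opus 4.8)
The plan is to obtain both implications by stringing together the two equivalences already established under the hypothesis that $N$ satisfies (PCC): namely Thm.~\ref{thm:PCC::lca<->closed}, which says that such an $N$ is an lca-network if and only if $\mathscr{C}_N$ is closed, and Prop.~\ref{prop:wh-stlca}, which says that an lca-network is a \emph{strong} lca-network if and only if its clustering system is a weak hierarchy. The only care needed is bookkeeping: the notion of a strong lca-network presupposes that $N$ is an lca-network, and Prop.~\ref{prop:wh-stlca} is stated only for lca-networks, so the lca-network property must be secured before that proposition can be invoked. The hypothesis (PCC) is precisely what lets us produce this property from closedness via Thm.~\ref{thm:PCC::lca<->closed}.

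For the forward direction, I would assume $N$ is a strong lca-network. By Def.~\ref{def:stronglca} it is in particular an lca-network, so Thm.~\ref{thm:PCC::lca<->closed} (using (PCC)) yields that $\mathscr{C}_N$ is closed; equivalently, since $\mathscr{C}_N$ is then the clustering system of the lca-network $N$, closedness also follows from Prop.~\ref{prop:lca-iff-closed}. Since $N$ is an lca-network, Prop.~\ref{prop:wh-stlca} applies and, because $N$ is moreover a strong lca-network, gives that $\mathscr{C}_N$ is a weak hierarchy. Combining the two conclusions, $\mathscr{C}_N$ is a closed weak hierarchy.

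For the converse, I would assume $\mathscr{C}_N$ is a closed weak hierarchy. Closedness together with (PCC) gives, via the ``if'' direction of Thm.~\ref{thm:PCC::lca<->closed} (equivalently Lemma~\ref{lem:PCC+cl=>lca}), that $N$ is an lca-network. Now Prop.~\ref{prop:wh-stlca} is applicable, and since $\mathscr{C}_N$ is a weak hierarchy it tells us that $N$ is a strong lca-network, as desired.

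I do not expect a genuine obstacle here, as the substantive work resides in the cited results. The one point demanding attention is the \emph{order} of the deductions: establishing that $N$ is an lca-network must precede any appeal to Prop.~\ref{prop:wh-stlca}, since that proposition is not formulated for arbitrary networks. Note also that for the converse one cannot lean on Prop.~\ref{prop:lca-iff-closed} alone, as it only guarantees the existence of \emph{some} lca-network with clustering system $\mathscr{C}_N$ rather than that $N$ itself is one; this is exactly why Thm.~\ref{thm:PCC::lca<->closed} is the appropriate per-network tool. Thus (PCC) plays a double role, simultaneously bridging closedness with the lca-network property and enabling the weak-hierarchy characterisation of strong lca-networks.
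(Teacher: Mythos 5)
Your proposal is correct and takes essentially the same route as the paper, which obtains the corollary in one line by combining the lca-network/closedness equivalence with Prop.~\ref{prop:wh-stlca}. If anything, your bookkeeping is more careful than the paper's own citation: the paper invokes Prop.~\ref{prop:lca-iff-closed}, which (as you observe) only yields \emph{some} lca-network with clustering system $\mathscr{C}_N$ and hence suffices only for the forward direction, whereas the converse genuinely requires the per-network statement of Thm.~\ref{thm:PCC::lca<->closed} (equivalently Lemma~\ref{lem:PCC+cl=>lca}) under (PCC), exactly as you use it.
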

Furthermore, we can use the same arguments in the proof of
Prop.~\ref{prop:lca-iff-closed} together with
Cor.~\ref{cor:lca-iff-closed-weak} to derive the final result of this
section:
\begin{proposition}
  \label{prop:cwh}
  A clustering system $\mathscr{C}$ is a closed weak hierarchy if and only
  if it is the clustering system of a strong lca-network.  In this case,
  the unique regular network and the unique cluster network of
  $\mathscr{C}$ are strong lca-networks.
\end{proposition}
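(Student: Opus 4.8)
The plan is to follow the proof of Prop.~\ref{prop:lca-iff-closed} almost verbatim, systematically replacing the word ``closed'' by ``closed weak hierarchy'' and ``lca-network'' by ``strong lca-network'', and invoking Cor.~\ref{cor:lca-iff-closed-weak} wherever that earlier proof used Thm.~\ref{thm:PCC::lca<->closed}. The only genuinely new bookkeeping is to carry the weak-hierarchy condition along with closedness; both survive the canonical constructions precisely because those constructions leave the clustering system unchanged.

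For the direction ``$\mathscr{C}$ is a closed weak hierarchy $\Rightarrow$ $\mathscr{C}$ is realized by a strong lca-network'', I would first invoke Prop.~\ref{prop:regular-unique} and Thm.~\ref{thm:unique-cluster-network} to obtain the unique regular network $N$ and the unique cluster network $N'$ with $\mathscr{C}=\mathscr{C}_N=\mathscr{C}_{N'}$. By Thm.~\ref{thm:semiregular} and Thm.~\ref{thm:cluster-network-charac}, respectively, both $N$ and $N'$ are semi-regular and hence satisfy (PCC). Since $\mathscr{C}_N=\mathscr{C}_{N'}=\mathscr{C}$ is by assumption a closed weak hierarchy, Cor.~\ref{cor:lca-iff-closed-weak} applies to each of them and yields that $N$ and $N'$ are strong lca-networks. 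This simultaneously establishes the existence of a strong lca-network with clustering system $\mathscr{C}$ and the ``in this case'' addendum of the proposition.

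For the converse, suppose $\mathscr{C}=\mathscr{C}_N$ for some strong lca-network $N$. Here I would \emph{not} route through Cor.~\ref{cor:lca-iff-closed-weak}, since a strong lca-network is not assumed to satisfy (PCC); instead I would argue directly. As $N$ is in particular an lca-network, Prop.~\ref{prop:lca-iff-closed} gives that $\mathscr{C}$ is closed, and Prop.~\ref{prop:wh-stlca} gives that $\mathscr{C}_N$ is a weak hierarchy; together these show that $\mathscr{C}$ is a closed weak hierarchy. The one point requiring care is precisely this asymmetry: the forward direction may exploit the (PCC) of the canonical networks and hence Cor.~\ref{cor:lca-iff-closed-weak}, whereas the converse must avoid it and lean on Prop.~\ref{prop:wh-stlca} and Prop.~\ref{prop:lca-iff-closed}, both of which hold for arbitrary (strong) lca-networks with no comparability hypothesis. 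No step calls for new computation, so I expect no real obstacle beyond tracking the hypotheses of each cited result.
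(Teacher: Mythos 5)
Your proposal is correct and takes essentially the same route as the paper, whose entire proof is the one-line instruction to reuse the argument of Prop.~\ref{prop:lca-iff-closed} (form the unique regular and cluster networks, note they are semi-regular and hence satisfy (PCC)) with Cor.~\ref{cor:lca-iff-closed-weak} substituted for Thm.~\ref{thm:PCC::lca<->closed}. Your explicit handling of the converse---going through Prop.~\ref{prop:lca-iff-closed} and Prop.~\ref{prop:wh-stlca} rather than Cor.~\ref{cor:lca-iff-closed-weak}, precisely because the given strong lca-network need not satisfy (PCC)---is the right reading of the paper's terse argument and makes explicit a point the paper glosses over.
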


\section{Level-1 Networks}
\label{sec:LEVEL-1}

\subsection{Basic Properties}
\label{sec:LEVEL-1-basics}

We start by showing that all phylogenetic level-1 networks have the
path-cluster-comparability property (PCC).
\begin{lemma}
  \label{lem:orderiff}
  Every phylogenetic level-1 network satisfies (PCC).
\end{lemma}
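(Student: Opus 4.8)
The plan is to deduce this immediately from two facts already established earlier in the paper, rather than to re-derive the block structure of level-1 networks from scratch. Recall that (PCC) has two implications to check; by Lemma~\ref{lem:inclusion} the direction ``$\preceq_N$-comparable $\Rightarrow$ clusters nested'' holds in \emph{every} network, so the entire content of (PCC) is the converse, namely that two $\preceq_N$-incomparable vertices $u,v$ never satisfy $\CC(u)\subseteq\CC(v)$ or $\CC(v)\subseteq\CC(u)$.

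First I would invoke Prop.~\ref{prop:phy-level1->TC}, which states that every phylogenetic level-1 network is tree-child. Then I would invoke Cor.~\ref{cor:TC->PCC}, which states that every tree-child network satisfies (PCC). Chaining the two gives the claim directly: a phylogenetic level-1 network is tree-child, and hence satisfies (PCC). This is essentially a one-line argument, so the proof itself carries no real combinatorial burden.

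It is worth recording where the actual work sits, since all the substance has been pushed into the two cited results. The content of Cor.~\ref{cor:TC->PCC} is Lemma~\ref{lem:TC-incomp-overlap}: in a tree-child network, whenever $u$ and $v$ are $\preceq_N$-incomparable one produces a leaf in $\CC(v)\setminus\CC(u)$ and, symmetrically, a leaf in $\CC(u)\setminus\CC(v)$ (each obtained as a strict descendant), which rules out either cluster containing the other. The content of Prop.~\ref{prop:phy-level1->TC} is the level-1 block analysis establishing that no non-leaf vertex can have all of its children be hybrid vertices, which is what forces the tree-child property. Thus the ``hard part'' has already been discharged before we reach this lemma.

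If a self-contained direct proof were instead preferred, I would combine Lemma~\ref{lem:lower-path} with the level-1 hypothesis: for incomparable $u,v$ with $\CC(u)\cap\CC(v)\neq\emptyset$, these place $u$ and $v$ in a common non-trivial block $B$ containing a hybrid vertex $h\prec_N u,v$, and the level-1 condition forces $h$ to be the unique hybrid properly contained in $B$; one would then have to exhibit a leaf below $u$ that avoids $\CC(v)$, and symmetrically. The main obstacle along that route is exactly the guarantee of such a ``private'' descendant leaf below each of $u$ and $v$ — which is precisely the tree-child phenomenon. Since that obstacle is already resolved, the tree-child route is the economical one, and I expect no further difficulty.
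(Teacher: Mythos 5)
Your proposal is correct and is exactly the paper's own proof: the paper likewise derives Lemma~\ref{lem:orderiff} by chaining Prop.~\ref{prop:phy-level1->TC} (phylogenetic level-1 implies tree-child) with Cor.~\ref{cor:TC->PCC} (tree-child implies (PCC)), with no circularity since both results precede this lemma. Your accounting of where the real work lies (Lemma~\ref{lem:TC-incomp-overlap} and the block analysis behind Prop.~\ref{prop:phy-level1->TC}) is also accurate.
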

\begin{proof}
  If $N$ is a phylogenetic level-1 network, then it is tree-child by
  Prop.~\ref{prop:phy-level1->TC}, and in turn every phylogenetic tree-child
  network satisfies (PCC) by Cor.~\ref{cor:TC->PCC}.
\end{proof}

We note that ``phylogenetic'' cannot be dropped in
Lemma~\ref{lem:orderiff}.  To see this, consider the level-$1$ network $N$
in Fig.~\ref{fig:level2-multiset}(B) where the two parents of the hybrid
vertex both correspond to cluster $\{a\}$ but they are
$\preceq_{N}$-incomparable; a violation of (PCC).

\begin{corollary}
  \label{cor:regular-IFF-least-resolved}
  Let $N$ be a level-1 network. Then, $N$ is least-resolved if and only if
  $N$ is regular.
\end{corollary}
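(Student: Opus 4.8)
The plan is to derive both implications from Theorem~\ref{thm:PCC-reg-iff-lr}, which characterizes regular networks as precisely those that are least-resolved and satisfy (PCC). The level-$1$ hypothesis will only be needed in one direction, namely to supply the (PCC) property for free.

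The backward direction is immediate and does not even use the level-$1$ assumption: if $N$ is regular, then Cor.~\ref{cor:regular-least-resolved} shows directly that $N$ is least-resolved. (Alternatively, this follows from the ``only if'' part of Thm.~\ref{thm:PCC-reg-iff-lr}.)

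For the forward direction, suppose $N$ is a least-resolved level-$1$ network. First I would invoke Cor.~\ref{cor:lrN-sf-out1}, which guarantees that every least-resolved network is shortcut-free and contains no vertex $v$ with $\outdeg_N(v)=1$. The crucial observation—and the step I expect to be the conceptual heart of the argument, short as it is—is that the absence of any outdegree-$1$ vertex forces $N$ to be phylogenetic: condition (N2) merely forbids vertices with $\outdeg(v)=1$ and $\indeg(v)\le 1$, and this is satisfied \emph{vacuously} once no vertex has outdegree $1$ at all. Thus $N$ is a phylogenetic level-$1$ network, and Lemma~\ref{lem:orderiff} applies to yield that $N$ satisfies (PCC).

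Having established that $N$ is least-resolved and satisfies (PCC), I would close the argument by applying Thm.~\ref{thm:PCC-reg-iff-lr} once more, now in the direction that least-resolved together with (PCC) implies regular. This completes the forward implication and hence the equivalence. The only subtlety to keep in view is that $N$ is assumed level-$1$ throughout, so that Lemma~\ref{lem:orderiff} is genuinely available; no further case analysis or block-structure reasoning is required, since all the work has already been packaged into Cor.~\ref{cor:lrN-sf-out1}, Lemma~\ref{lem:orderiff}, and Thm.~\ref{thm:PCC-reg-iff-lr}.
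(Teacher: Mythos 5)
Your proof is correct and follows essentially the same route as the paper: both establish that a least-resolved network has no outdegree-$1$ vertices (Cor.~\ref{cor:lrN-sf-out1}), hence is phylogenetic, so that Lemma~\ref{lem:orderiff} supplies (PCC) and Thm.~\ref{thm:PCC-reg-iff-lr} closes the argument. Your use of Cor.~\ref{cor:regular-least-resolved} for the backward direction is a trivial variant of the paper's appeal to Thm.~\ref{thm:semiregular} plus Thm.~\ref{thm:PCC-reg-iff-lr}, not a genuinely different approach.
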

\begin{proof}
  By Cor.~\ref{cor:lrN-sf-out1} and Thm.~\ref{thm:semiregular}, resp.,
  least-resolved and regular networks do not contain vertices with
  outdegree $1$, and thus they are phylogenetic. The statement now follows
  immediately from Lemma~\ref{lem:orderiff} and
  Thm.~\ref{thm:PCC-reg-iff-lr}.
\end{proof}
We emphasize, however, that there can exist least-resolved networks $N$ for
a given clustering system $\mathscr{C}$ that are not regular, as the
example in Fig.~\ref{fig:counter-reg-lr} shows. In this example, the
regular network $N'$ is level-$1$.  Next we show that Lemma
\ref{lem:complexNsimpleC} does not hold for level-1 networks:
\begin{lemma}\label{lem:complexNsimpleC-level1}
  Let $n$ be a positive integer. Then, there is no phylogenetic,
  shortcut-free level-$1$ network $N$ on $n$ leaves that is not a tree and
  where $\mathscr{C}_N$ is a hierarchy.
\end{lemma}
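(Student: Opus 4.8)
The plan is to argue by contradiction. Suppose $N$ is a phylogenetic, shortcut-free, level-$1$ network on $n$ leaves that is \emph{not} a tree but whose clustering system $\mathscr{C}_N$ is a hierarchy. Since $N$ is not a tree, it contains a hybrid vertex, and by Lemma~\ref{lem:hybrid-properly-contained} this vertex lies in a non-trivial block $B$. Because $N$ is phylogenetic and level-$1$, Lemma~\ref{lem:orderiff} guarantees that $N$ satisfies (PCC); together with shortcut-freeness this makes $N$ semi-regular (Def.~\ref{def:semi-regular}). The goal is then to exhibit two overlapping clusters coming from vertices of $B$, which by Def.~\ref{def:Csys} contradicts that $\mathscr{C}_N$ is a hierarchy.

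The key intermediate step is to show that the interior $B^0$ is non-empty. First I would recall from the preliminaries that every terminal vertex of a non-trivial block has indegree at least $2$, hence is a hybrid vertex, and that $\max B$ is not $\preceq_N$-minimal in $B$, so every terminal vertex is distinct from $\max B$. Since $N$ is level-$1$, the block $B$ can therefore contain at most one terminal vertex. On the other hand, by Cor.~\ref{cor:maxB-outdegree} the vertex $\max B$ has at least two distinct out-neighbours $c_1,c_2$ in $B$, and neither of them equals $\max B$. As $B$ has at most one terminal vertex, at most one of $c_1,c_2$ is terminal, so at least one of them, say $c$, is neither $\max B$ nor terminal; by Def.~\ref{def:intB} this means $c\in B^0$, and hence $B^0\neq\emptyset$.

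With $B^0\neq\emptyset$ established, I would invoke Lemma~\ref{lem:semiregular-overlap} for the semi-regular network $N$: for the vertex $c\in B^0$ there exists $v\in B^0$ such that $\CC(c)$ and $\CC(v)$ overlap. Thus $\mathscr{C}_N$ contains a pair of overlapping clusters, contradicting the assumption that $\mathscr{C}_N$ is a hierarchy. This yields the desired contradiction and proves the statement.

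I expect the non-emptiness of $B^0$ to be the step where the hypotheses genuinely bite, and this is exactly where the level-$1$ assumption is used: it bounds the number of terminal (hybrid) vertices of $B$ by one, forcing an out-neighbour of $\max B$ into the interior. This is precisely the mechanism that fails for level-$k$ with $k\ge 2$, which is why Lemma~\ref{lem:complexNsimpleC} can realise hierarchies at higher levels. It is equally worth stressing that semi-regularity, and thus the (PCC) property coming from the phylogenetic hypothesis via Lemma~\ref{lem:orderiff}, is indispensable, since Lemma~\ref{lem:semiregular-overlap} fails without it; this is consistent with the fact that the non-phylogenetic level-$1$ network in Fig.~\ref{fig:level2-multiset}(B) violates (PCC).
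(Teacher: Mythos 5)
Your proof is correct and follows essentially the same route as the paper: phylogenetic plus level-$1$ gives (PCC) via Lemma~\ref{lem:orderiff}, shortcut-freeness then yields semi-regularity, and Lemma~\ref{lem:semiregular-overlap} applied to a non-trivial block produces an overlapping pair of clusters, contradicting the hierarchy assumption. You are in fact slightly more careful than the paper, whose proof invokes Lemma~\ref{lem:semiregular-overlap} without verifying that $B^0\neq\emptyset$ (the lemma is vacuous otherwise); your level-$1$ argument that $B$ has at most one terminal vertex, so that at least one out-neighbour of $\max B$ guaranteed by Cor.~\ref{cor:maxB-outdegree} lies in $B^0$, closes exactly that small gap.
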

\begin{proof}
  Let $N$ be a phylogenetic, shortcut-free level-$1$ that is not a tree. By
  Lemma \ref{lem:orderiff}, $N$ satisfies (PCC). Since, in addition, $N$ is
  shortcut-free, $N$ is semi-regular.  Since $N$ is not a tree, it must
  contain a non-trivial block $B$. By Lemma \ref{lem:semiregular-overlap},
  there are at least two vertices $u$ and $v$ such that $\CC(u)$ and
  $\CC(v)$ overlap. Hence, $\mathscr{C}_N$ is not a hierarchy.
\end{proof}

As an immediate consequence of Thm.~\ref{thm:semiregular} and
Lemma~\ref{lem:orderiff}, we also obtain the following
\begin{proposition}\label{prop:semi-reg}
  A phylogenetic level-1 network is semi-regular if and only if it is
  shortcut-free. Furthermore, a level-1 network is regular if and only if
  it is shortcut-free and has no vertex with outdegree $1$.
\end{proposition}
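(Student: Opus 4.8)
The plan is to derive both equivalences directly from Lemma~\ref{lem:orderiff}, which guarantees that every phylogenetic level-1 network satisfies (PCC), and from Thm.~\ref{thm:semiregular}, which characterizes regular networks as exactly the semi-regular networks without vertices of outdegree $1$. The key observation that glues the two statements together is that a network with no vertex of outdegree $1$ is automatically phylogenetic: condition (N2) forbids only vertices that simultaneously have outdegree $1$ and indegree at most $1$, so the outright absence of outdegree-$1$ vertices makes (N2) vacuously true. This is what lets me invoke the phylogenetic-specific Lemma~\ref{lem:orderiff} in the otherwise more general level-1 setting of the second statement.

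For the first statement, I would note that the forward implication is immediate, since semi-regularity includes shortcut-freeness by Def.~\ref{def:semi-regular}. For the converse, I take a phylogenetic level-1 network $N$ that is shortcut-free; by Lemma~\ref{lem:orderiff}, $N$ satisfies (PCC), and combined with shortcut-freeness this yields semi-regularity directly from Def.~\ref{def:semi-regular}.

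For the second statement, I first suppose $N$ is a regular level-1 network. By Thm.~\ref{thm:semiregular}, $N$ is semi-regular and has no vertex of outdegree $1$; in particular $N$ is shortcut-free, again by Def.~\ref{def:semi-regular}. Conversely, I suppose the level-1 network $N$ is shortcut-free and has no vertex of outdegree $1$. By the observation above, $N$ is phylogenetic, so the first statement applies and shows that $N$ is semi-regular. Since $N$ additionally has no vertex of outdegree $1$, Thm.~\ref{thm:semiregular} gives that $N$ is regular.

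Since each step is a direct appeal to an already-established result, I do not anticipate any genuine obstacle. The only subtlety worth flagging explicitly is precisely the implication ``no outdegree-$1$ vertex $\Rightarrow$ phylogenetic,'' which bridges the general level-1 hypothesis of the second statement to the phylogenetic regime where Lemma~\ref{lem:orderiff} is available; everything else is bookkeeping with the definitions of semi-regular and regular networks.
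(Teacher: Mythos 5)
Your proof is correct and follows essentially the same route as the paper, which derives the proposition directly as an immediate consequence of Thm.~\ref{thm:semiregular} and Lemma~\ref{lem:orderiff}. Your explicit bridging observation that the absence of outdegree-$1$ vertices makes condition (N2) vacuous, and hence the network phylogenetic, is exactly the bookkeeping step the paper leaves implicit (and uses elsewhere, e.g.\ in the proof of Cor.~\ref{cor:regular-IFF-least-resolved}).
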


\begin{lemma}\label{lem:parents-of-hybrid-cluster}
  Let $N$ be a phylogenetic level-1 network and $v$ be a hybrid-vertex of
  $N$.  Then $\CC(v)\subsetneq \CC(u)$ for every $u\in V(N)$ with $v\prec_N
  u$.
\end{lemma}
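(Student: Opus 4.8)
The plan is to prove the strict inclusion by showing that no proper ancestor of $v$ can share its cluster, localizing the whole argument inside the block that properly contains $v$. First I would record the two standing inputs: by Lemma~\ref{lem:inclusion}, $v\prec_N u$ already gives $\CC(v)\subseteq\CC(u)$, so only $\CC(v)\neq\CC(u)$ remains to be shown; and since $N$ is phylogenetic and level-$1$ it satisfies (PCC) by Lemma~\ref{lem:orderiff}. Next I would invoke Lemma~\ref{lem:hybrid-properly-contained} to obtain the unique non-trivial block $B$ that properly contains the hybrid vertex $v$. The key structural observation is that, since $N$ is level-$1$, there is at most one hybrid vertex of $B$ distinct from $\max B$; as $v$ is such a hybrid (it is properly contained, hence $v\neq\max B$ by Lemma~\ref{lem:properly-contained}) and the terminal (i.e.\ $\preceq_N$-minimal) vertex of $B$ is also a non-maximal hybrid, $v$ must be the unique terminal vertex of $B$. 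In particular $v$ is $\preceq_N$-minimal in $B$, so no $c\in V(B)$ satisfies $c\prec_N v$.

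The heart of the argument is the claim that $R:=\{z\in V(B)\mid \CC(z)=\CC(v)\text{ and }v\prec_N z\}$ is empty. Assuming $R\neq\emptyset$, I would pick a $\preceq_N$-minimal element $r\in R$ and first check that $r$ is a parent of $v$: on a directed $r$-to-$v$ path in $B$ (which stays in $B$ by Lemma~\ref{lem:block-prec-sandwich}) the child of $r$ again has cluster $\CC(v)$ by Lemma~\ref{lem:inclusion}, so minimality of $r$ forces this child to be $v$. If $r\neq\max B$ then, $v$ being the unique non-maximal hybrid, $r$ is a tree vertex; as $N$ is phylogenetic and $r$ has the child $v$, this yields $\outdeg(r)\geq 2$ and hence a second child $c\neq v$ with $\CC(c)\subseteq\CC(r)=\CC(v)$. (If $r=\max B$, I would instead use any parent $p_2\neq\max B$ of $v$, which exists because $\indeg(v)\geq 2$ and, by Cor.~\ref{cor:maxB-outdegree}, not all parents equal $\max B$; such $p_2$ is again a tree vertex with a second child $c$.) By (PCC), $c$ and $v$ are $\preceq_N$-comparable. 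If $c\prec_N v$, then the arc from the common parent to $c$ is a shortcut, and together with the detour $\cdot\to v\leadsto c$ it forms an undirected cycle; this cycle lies in a block containing $v$ as a non-maximal vertex, which must be $B$, so $c\in V(B)$ with $c\prec_N v$, contradicting minimality of $v$ in $B$. If instead $v\prec_N c$, then $\CC(v)\subseteq\CC(c)\subseteq\CC(v)$ forces $\CC(c)=\CC(v)$, and the same shortcut-cycle argument places $c$ in $B$; thus $c\in R$ with $c\prec_N r$, contradicting minimality of $r$. Either way we reach a contradiction, so $R=\emptyset$.

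Finally I would close the argument. Given $u$ with $v\prec_N u$, take any directed $u$-to-$v$ path; the vertex preceding $v$ on it is a parent $w$ of $v$, and by Lemma~\ref{lem:hybrid-properly-contained} every parent of $v$ lies in $B$, so $w\in V(B)$ with $v\prec_N w\preceq_N u$. If $\CC(u)=\CC(v)$, then $\CC(v)\subseteq\CC(w)\subseteq\CC(u)=\CC(v)$ gives $\CC(w)=\CC(v)$, whence $w\in R$, contradicting $R=\emptyset$. Therefore $\CC(u)\neq\CC(v)$, and combined with $\CC(v)\subseteq\CC(u)$ this yields $\CC(v)\subsetneq\CC(u)$. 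The main obstacle is that phylogenetic level-$1$ networks need not be shortcut-free, so the naive ``equal clusters force a common directed path'' reasoning can be derailed by shortcuts; the device that handles this is to work with a $\preceq_N$-minimal witness $r\in R$ and to convert every stray comparability produced by (PCC) into either a shortcut-induced cycle trapped inside $B$ (contradicting $\preceq_N$-minimality of $v$ in $B$) or a smaller element of $R$ (contradicting minimality of $r$).
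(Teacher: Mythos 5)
Your proof is correct and takes essentially the same route as the paper's: both arguments localize to the non-trivial block $B$ properly containing $v$, use level-1 together with the phylogenetic property to produce a tree-vertex parent of $v$ with a second child $c$ satisfying $\CC(c)\subseteq\CC(v)$, invoke (PCC) to make $c$ and $v$ comparable, and kill both cases with the shortcut/undirected-cycle/block-identity argument, finishing via an arbitrary parent $w$ of $v$ sandwiched as $v\prec_N w\preceq_N u$. The only differences are cosmetic: you package the descent as a minimal counterexample $r\in R$ where the paper iterates along parents $w_1\succ_N w_2\succ_N\cdots$ and appeals to finiteness of $V(N)$, and in your parenthetical case $r=\max B$ you should add the one-line observation that $\CC(p_2)=\CC(v)$ (from $\CC(v)\subseteq\CC(p_2)\subseteq\CC(\max B)=\CC(r)=\CC(v)$ via Lemma~\ref{lem:inclusion}), since this is what licenses applying (PCC) to $c$ and $v$ there -- and in fact it makes that case collapse at once, because then $p_2\in R$ with $p_2\prec_N r$ already contradicts minimality of $r$.
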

\begin{proof}
  Let $N$ be a phylogenetic level-1 network and $v$ be a hybrid vertex of
  $N$.  By Lemma~\ref{lem:hybrid-properly-contained}, $v$ and all of its
  (at least two) parents are contained in a common non-trivial block
  $B$. Hence, consider first one of the parents $w_1$ of $v$ such that
  $w_1\ne \max B$.  By Lemma~\ref{lem:inclusion}, $\CC(v)\subseteq
  \CC(w_1)$.  Assume, for contradiction, that $\CC(v) = \CC(w_1)$.  Since
  $v$ and $w_1$ are contained in the same non-trivial block $B$ and $N$ is
  level-1, $w_1$ cannot be a hybrid vertex and thus, since $N$ is
  phylogenetic, we have $\outdeg_N(w_1)\geq 2$. Let $w'\neq v$ be another
  child of $w_1$.  Again, by~Lemma \ref{lem:inclusion}, $\CC(w')\subseteq
  \CC(w_1)$ and thus, $\CC(w')\subseteq \CC(v)$.  Lemma~\ref{lem:orderiff}
  implies that $N$ satisfies (PCC), and thus, $v$ and $w'$ are
  $\preceq_N$-comparable.  Hence, we distinguish the two cases (a)
  $w'\prec_N v$ and (b) $v\prec_{N} w'$.

  In Case~(a), the arc $(w_1,w')$ must be a shortcut since $w'\prec_N v$
  and $v\in\child_{N}(w_1)\setminus\{w'\}$. In particular, $w'$ must be a
  hybrid vertex and there is a directed path from $w_1$ to $w'$ passing
  through $v$, which together with the arc $(w_1,w')$ forms an undirected
  cycle. Hence, $w_1$, $v$, and $w'$ are contained in a common block that
  shares the arc $(w_1,v)$ with $B$ and thus equals $B$. But then $B$
  contains two hybrid vertices $v$ and $w'$ that are distinct from $\max
  B$; a contradiction.

  Now consider Case~(b), i.e., $v\prec_{N} w' \prec_{N} w_1$.  In this
  case, $v$ has a parent $w_2$ such that $v\prec_N w_2\prec_{N} w_1
  \prec_N\max B$. In particular, $w_2$ lies in $B$ and
  Lemma~\ref{lem:inclusion} implies $\CC(v)\subseteq \CC(w_2) \subseteq
  \CC(w_1)$ and thus $\CC(v)= \CC(w_2)$.  Now we can repeat the latter
  arguments for parent $w_2$ and eventually encounter a contradiction as in
  Case~(a) or, if we never obtain such a contradiction, we end in an
  infinite chain of vertices $w_1\succ_{N} w_2 \succ_{N} \dots$; a
  contradiction to $V(N)$ being finite.  The latter together with the fact
  that $w_1\ne \max B$ was chosen arbitrarily implies that
  $\CC(v)\subsetneq \CC(w)$ for every parent $w$ of $v$ that is distinct
  from $\max B$. Now suppose that $w=\max B$ is a parent of $v$. Since $v$
  is a hybrid vertex, it has another parent $w'$, which is also contained
  in $B$ and satisfies $v\prec_{N} w'\prec_{N}
  w$. Lemma~\ref{lem:inclusion} and the arguments above thus yield
  $\CC(v)\subsetneq \CC(w')\subseteq \CC(w)$.

  Finally note that $v\prec_N u$ if and only if $v\prec_N w\preceq_N u$
  where $w$ is a parent of $v$.  This together with Lemma
  \ref{lem:inclusion} implies that $\CC(v)\subsetneq \CC(w)\subseteq
  \CC(u)$ for all $u\in V(N)$ with $v\prec_N w\preceq_N u$ and where $w$ is
  a parent of $v$.
\end{proof}

\begin{lemma}
  \label{lem:eta=min}
  Let $N$ be a level-1 network. Then every block $B$ has a unique
  $\preceq_N$-minimal vertex $\min B$ and a unique $\preceq_N$-maximal
  vertex $\max B$.  In case $B$ is not a single vertex or arc,
  $\min B$ is the unique properly contained hybrid vertex in $B$ and
  $\max B$ is the unique root of $B$.
\end{lemma}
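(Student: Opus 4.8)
The plan is to treat the two extremal vertices separately, leaning on results already available for general networks and invoking the level-$1$ hypothesis only to pin down the minimal vertex. The uniqueness of $\max B$ is already handed to us by Lemma~\ref{lem:max-B-unique}, and this vertex is precisely what is meant by the ``root of $B$'', so that half of the statement needs no further work. Likewise, if $B$ is a single vertex or a single arc $(u,w)$, then both $\min B$ and $\max B$ are unique by inspection (they coincide in the first case, and equal $w$ and $u$ respectively in the second). Thus I would reduce immediately to the case of a non-trivial block $B$ and concentrate on showing that $B$ has a unique $\preceq_N$-minimal vertex.

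For the core argument I would identify the $\preceq_N$-minimal vertices of $B$ with its terminal vertices and use the observation recorded right after the definition of terminal vertex, namely that every terminal vertex of a non-trivial block has indegree at least $2$ and is therefore a hybrid vertex. Since $B$ is non-trivial, $\max B$ has an out-neighbour in $B$ by Cor.~\ref{cor:maxB-outdegree} and hence is not $\preceq_N$-minimal, so any terminal vertex $v$ satisfies $v\prec_N\max B$, i.e.\ $v\in V(B)\setminus\{\max B\}$; Lemma~\ref{lem:properly-contained} then tells us that $v$ is properly contained in $B$. Now the level-$1$ hypothesis enters: by Def.~\ref{def:level-k}, in its equivalent ``properly contains'' formulation, $B$ properly contains at most one hybrid vertex, so $B$ has at most one terminal vertex. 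As a non-trivial block always possesses at least one $\preceq_N$-minimal vertex, it has exactly one, which is the desired unique $\min B$.

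Finally I would read off the characterisation of $\min B$. The preceding step already exhibits $\min B$ as a hybrid vertex that is properly contained in $B$, and the level-$1$ bound on properly contained hybrid vertices forces it to be the only one, yielding ``$\min B$ is the unique properly contained hybrid vertex''. I expect no genuine obstacle here, since the argument is a short assembly of earlier lemmas; the only points requiring care are bookkeeping ones: applying the level-$1$ bound in its \emph{properly contained} form (so that interior, non-minimal hybrid vertices are excluded as well, not merely the terminal ones), and explicitly noting that a non-trivial block is guaranteed to have a $\preceq_N$-minimal vertex distinct from $\max B$ before concluding that it has exactly one.
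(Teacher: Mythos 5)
Your proof is correct and follows essentially the same route as the paper's: the paper likewise argues that a $\preceq_N$-minimal vertex of a non-trivial block lies on an undirected cycle in $B$, hence has two in-neighbors in $B$, is therefore a properly contained hybrid vertex, and is then unique by the level-$1$ bound. The only difference is cosmetic -- you cite the remark following the definition of terminal vertices and Lemma~\ref{lem:properly-contained}, whereas the paper rederives the cycle argument inline and invokes Lemma~\ref{lem:hybrid-properly-contained}.
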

\begin{proof}
  The statement is trivial for a block that consists only of a single
  vertex or arc.  Uniqueness of the $\preceq_N$-maximal vertex in $B$
  follows from Lemma~\ref{lem:max-B-unique}.  Otherwise, every $v\in V(B)$
  lies on an undirected cycle.  Since $B$ is acyclic, a $\preceq_N$-minimal
  vertex $u$ in $B$ does not have an out-neighbor along the cycle, and
  therefore, $u$ has at least two in-neighbors that are contained in $B$.
  Thus $u$ is a hybrid vertex and, by
  Lemma~\ref{lem:hybrid-properly-contained}, $u$ is properly contained in
  $B$.  By definition of level-1, there is at most one such vertex in $B$.
\end{proof}

As an immediate consequence, we have
\begin{corollary}
  \label{cor:min-max-B}
  Let $N$ be a level-1 network and $B$ a block of $N$.  For every $v\in B$,
  it holds $\min B \preceq_N v \preceq_N \max B$ and
  $\CC(\min B)\subseteq \CC(v)\subseteq \CC(\max B)$.
\end{corollary}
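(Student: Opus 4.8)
The plan is to derive both chains of relations directly from the existence and uniqueness of $\min B$ and $\max B$ guaranteed by Lemma~\ref{lem:eta=min}, together with Lemma~\ref{lem:max-B-unique} and Lemma~\ref{lem:inclusion}. The statement is trivial whenever $B$ is a single vertex or a single arc, so I would focus on the case in which $B$ is non-trivial and fix an arbitrary $v\in V(B)$.

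For the upper bound $v\preceq_N \max B$, I would simply invoke Lemma~\ref{lem:max-B-unique}, which already asserts that there is a directed path from $\max B$ to every vertex of $B$; hence $v\preceq_N \max B$ holds immediately.

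For the lower bound $\min B\preceq_N v$, I would argue through a local-to-global minimality step. Consider the set $D\coloneqq\{w\in V(B)\mid w\preceq_N v\}$ of descendants of $v$ inside $B$; it is non-empty since $v\in D$. Let $m$ be a $\preceq_N$-minimal element of $D$. I claim $m$ is in fact $\preceq_N$-minimal in all of $B$: if some $w\in V(B)$ satisfied $w\prec_N m$, then transitivity with $m\preceq_N v$ would give $w\prec_N v$, so $w\in D$, contradicting the minimality of $m$ in $D$. By the uniqueness part of Lemma~\ref{lem:eta=min}, the only $\preceq_N$-minimal vertex of $B$ is $\min B$, whence $m=\min B$ and therefore $\min B\preceq_N v$.

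Combining the two bounds yields $\min B\preceq_N v\preceq_N \max B$, and the cluster inclusions $\CC(\min B)\subseteq\CC(v)\subseteq\CC(\max B)$ then follow at once by applying Lemma~\ref{lem:inclusion} to each of these two relations. The only step that is not completely mechanical is the local-to-global minimality argument for $\min B\preceq_N v$, since it is precisely here that the level-$1$ hypothesis enters (through the uniqueness of $\min B$ in Lemma~\ref{lem:eta=min}); everything else is a direct citation. I expect no genuine obstacle, which is consistent with the corollary being labelled an immediate consequence of Lemma~\ref{lem:eta=min}.
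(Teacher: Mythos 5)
Your proof is correct and follows exactly the route the paper intends: the corollary is stated as an immediate consequence of Lemma~\ref{lem:eta=min} (uniqueness of $\min B$ and $\max B$), with the upper bound coming from Lemma~\ref{lem:max-B-unique} and the cluster inclusions from Lemma~\ref{lem:inclusion}. Your local-to-global minimality argument (a $\preceq_N$-minimal element of the descendants of $v$ inside $B$ is $\preceq_N$-minimal in all of $B$, hence equals $\min B$) is precisely the step the paper leaves implicit, and it is sound.
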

The fact that every block in a level-1 contains at most one hybrid vertex, 
implies that $B^0 = B\setminus\{\min B,\max B\}$ for every block $B$ (cf.\ 
Def.~\ref{def:intB}).
Recall that a block is non-trivial if it is not a single vertex or a single
arc. Hence, a block $B$ is non-trivial precisely if $B^0\ne\emptyset$. In
the absence of shortcuts and in case $B$ is non-trivial, the subnetwork
induced by $B^0$ is a forest consisting of at least two non-empty trees.

\begin{lemma}
  \label{lem:uvmin}
  Let $N=(V,E)$ be a level-1 network and suppose $u,v\in V$ are
  $\preceq_N$-incomparable. Then $u$ and $v$ are located in a common block
  $B$ of $N$ if and only if $\CC(u)\cap \CC(v)\ne\emptyset$. In particular,
  they share exactly the descendants of the $\preceq_N$-minimal element
  $\min B$ of $B$, i.e., in this case we have $\CC(u)\cap \CC(v)=\CC(\min
  B)$.
\end{lemma}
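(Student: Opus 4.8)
The plan is to prove the biconditional together with the identity $\CC(u)\cap\CC(v)=\CC(\min B)$ in one go, leaning almost entirely on Lemma~\ref{lem:lower-path} and on the level-1 block structure supplied by Lemma~\ref{lem:eta=min}. For the implication that a non-empty intersection forces a common block, I would argue directly: since $u$ and $v$ are $\preceq_N$-incomparable with $\CC(u)\cap\CC(v)\ne\emptyset$, Lemma~\ref{lem:lower-path} produces, for any $x$ in the intersection, an undirected $uv$-path that is a subgraph of a \emph{non-trivial} block $B$. In particular $u,v\in V(B)$, so $u$ and $v$ share a block. This direction uses nothing beyond Lemma~\ref{lem:lower-path} and would hold for arbitrary networks.

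For the converse and the forward inclusion of the identity, suppose $u,v$ lie in a common block $B$. Since they are incomparable (hence distinct and neither comparable to the other), $B$ cannot be a single vertex or a single arc, whose vertices are pairwise $\preceq_N$-comparable; thus $B$ is non-trivial. By Lemma~\ref{lem:eta=min}, $B$ then has a unique $\preceq_N$-minimal vertex $\min B$, which is its unique properly contained hybrid vertex. Corollary~\ref{cor:min-max-B} gives $\min B\preceq_N u$ and $\min B\preceq_N v$, so Lemma~\ref{lem:inclusion} yields $\CC(\min B)\subseteq\CC(u)\cap\CC(v)$; as clusters are non-empty, this simultaneously establishes $\CC(u)\cap\CC(v)\ne\emptyset$ and one of the two inclusions needed for the identity.

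For the reverse inclusion, I would fix $x\in\CC(u)\cap\CC(v)$ and apply Lemma~\ref{lem:lower-path} once more to obtain a hybrid vertex $w_h$ with $w_h\prec_N u$, $w_h\prec_N v$, and $x\in\CC(w_h)$, lying on a $uv$-path inside some non-trivial block $B'$. The crux is to identify $w_h$ with $\min B$. Because $u$ and $v$ are incomparable and both belong to $B$ and to $B'$, neither can equal $\max B$ or $\max B'$ (otherwise the other would be $\preceq_N$-below it), so $u\notin\{\max B,\max B'\}$ and Lemma~\ref{lem:block-identity} forces $B=B'$. Then $w_h\in V(B)$ is a hybrid vertex with $w_h\prec_N\max B$, hence properly contained in $B$ by Lemma~\ref{lem:properly-contained}; since $N$ is level-1, Lemma~\ref{lem:eta=min} makes $\min B$ the \emph{only} such vertex, so $w_h=\min B$ and therefore $x\preceq_N\min B$, i.e.\ $x\in\CC(\min B)$. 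Together with the previous paragraph this gives $\CC(u)\cap\CC(v)=\CC(\min B)$.

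The main obstacle I anticipate is precisely the identification $w_h=\min B$: the path handed back by Lemma~\ref{lem:lower-path} a priori lives in a block $B'$ that is only asserted to be non-trivial, not obviously the same as $B$, so one must first pin down $B=B'$ before invoking the level-1 uniqueness of the properly contained hybrid vertex. Verifying $u\notin\{\max B,\max B'\}$ so that Lemma~\ref{lem:block-identity} applies is exactly where the incomparability of $u$ and $v$ is used a second time, and it is worth stating carefully to avoid a circular appeal to the block structure one is trying to establish.
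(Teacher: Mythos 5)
Your proof is correct and follows essentially the same route as the paper's: Cor.~\ref{cor:min-max-B} for the easy inclusion, and Lemma~\ref{lem:lower-path} combined with the level-1 uniqueness of the properly contained hybrid vertex (Lemma~\ref{lem:eta=min}) for the reverse inclusion. The only cosmetic difference is the block-identification step: you pin down $B=B'$ via Lemma~\ref{lem:block-identity} after checking $u\notin\{\max B,\max B'\}$, whereas the paper identifies the blocks arising from different choices of $x$ directly via Obs.~\ref{obs:identical-block}, since they share the two vertices $u$ and $v$; both arguments are valid.
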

\begin{proof}
  If $u$ and $v$ are both located in block $B$, then
  Cor.~\ref{cor:min-max-B} implies $\emptyset\ne \CC(\min B)\subseteq
  \CC(u)\cap \CC(v)$.  Conversely, if $\CC(u)\cap \CC(v)\ne\emptyset$, then
  Lemma~\ref{lem:lower-path} implies that, for every $x\in\CC(u)\cap
  \CC(v)$, $u$ and $v$ are contained in a common block $B$, and $B$
  contains, in addition, a hybrid vertex $w$ such that $w\prec_N u,v$ and
  $x\in \CC(w)$. Since $N$ is level-$1$ and $w\ne \max B$, we have $w=\min
  B$. Moreover, for all $x\in\CC(u)\cap \CC(v)$, the corresponding blocks
  $B$ share $u$ and $v$ and are therefore identical by
  Obs.~\ref{obs:identical-block}. Hence, we obtain $\CC(u)\cap \CC(v)
  \subseteq \CC(\min B)$ and thus $\CC(u)\cap \CC(v) = \CC(\min B)$.
\end{proof}

\subsection{Clusters and Least Common Ancestors}
\label{ssec:level1-clusters}

An important property of level-1 networks that is not true in general
phylogenetic networks is the following.
\begin{lemma}
  \label{lem:lca-unique}
  Every level-$1$ network is an lca-network.
\end{lemma}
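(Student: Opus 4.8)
The plan is to fix a non-empty subset $A\subseteq X$ and argue that $\LCA(A)$ is a singleton. Non-emptiness of $\LCA(A)$ is immediate: the root $\rho_N$ is an ancestor of every vertex, so the set of common ancestors of $A$ is non-empty, and since $N$ is finite it contains $\preceq_N$-minimal elements. Thus it suffices to establish uniqueness, which I would do by contradiction: suppose there are two distinct vertices $u,v\in\LCA(A)$. By $\preceq_N$-minimality, $u$ and $v$ must be $\preceq_N$-incomparable. Moreover, every $x\in A$ satisfies $x\preceq_N u$ and $x\preceq_N v$, hence $x\in\CC(u)\cap\CC(v)$; since $A\ne\emptyset$, this gives $\CC(u)\cap\CC(v)\ne\emptyset$.

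The key step is to invoke Lemma~\ref{lem:uvmin}. Since $u$ and $v$ are $\preceq_N$-incomparable with $\CC(u)\cap\CC(v)\ne\emptyset$, that lemma places them in a common block $B$ of $N$ and identifies their shared descendants precisely as $\CC(u)\cap\CC(v)=\CC(\min B)$, where $\min B$ is the unique $\preceq_N$-minimal vertex of $B$ guaranteed by Lemma~\ref{lem:eta=min}. In particular $A\subseteq\CC(u)\cap\CC(v)=\CC(\min B)$, so $\min B$ is itself a common ancestor of all elements of $A$.

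It then remains to derive the contradiction. By Corollary~\ref{cor:min-max-B}, both $u$ and $v$ lie in $B$ and satisfy $\min B\preceq_N u$ and $\min B\preceq_N v$. If we had $\min B=u$, then $\min B\preceq_N v$ would force $u\preceq_N v$, contradicting incomparability; the symmetric argument rules out $\min B=v$. Hence $\min B\prec_N u$. This exhibits a common ancestor $\min B$ of $A$ that is strictly below $u$, contradicting the assumption that $u$ is a $\preceq_N$-minimal common ancestor of $A$, i.e.\ that $u\in\LCA(A)$. Therefore $\vert\LCA(A)\vert=1$ for every non-empty $A\subseteq X$, so $N$ is an lca-network. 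I do not expect any serious obstacle here, as the result follows almost directly once Lemma~\ref{lem:uvmin} is applied; the only point requiring mild care is the verification that $\min B$ is strictly below both $u$ and $v$, which uses incomparability together with the uniqueness of $\min B$.
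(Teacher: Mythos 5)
Your proof is correct and follows essentially the same route as the paper's: a contradiction argument that takes two distinct minimal common ancestors, notes their $\preceq_N$-incomparability, applies Lemma~\ref{lem:uvmin} to obtain the common block $B$ with $\CC(u)\cap\CC(v)=\CC(\min B)$, and then uses Cor.~\ref{cor:min-max-B} to place $\min B$ strictly below both vertices, contradicting minimality. The only difference is that you spell out the existence of an LCA explicitly, which the paper handles in the surrounding discussion rather than in the proof itself.
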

\begin{proof}
  Let $N$ be a level-1 network on $X$ and $\emptyset\ne Y\subseteq X$.
  Suppose for contradiction that there are two distinct such vertices $u$
  and $u'$ for which $Y\subseteq \CC(u), \CC(u')$ and such that
  $Y\not\subseteq \CC(v)$ whenever $v\prec_N u,u'$.  Clearly, $u$ and $u'$
  must be $\preceq_N$-incomparable. From Lemma~\ref{lem:uvmin} and
  $\emptyset\ne Y\subseteq \CC(u)\cap \CC(u')$, we obtain that $u$ and $u'$
  are located in the same block $B$ and $\CC(u)\cap \CC(u')=\CC(\min B)$.
  In particular, therefore, $Y\subseteq \CC(\min B)$.  Since $\min
  B\preceq_T u,u'$ by Cor.~\ref{cor:min-max-B} and $u$ and $u'$ are
  $\preceq_N$-incomparable, we have $u\ne\min B$ and $u'\ne\min B$, and
  thus $\min B\prec_N u,u'$; a contradiction to $Y\not\subseteq \CC(v)$ for
  all $v\prec_N u,u'$.  Therefore, the least common ancestor is unique and
  $\lca_{N}(u)$ is well-defined for all $\emptyset\ne Y\subseteq X$.
\end{proof}

By Lemma~\ref{lem:lca-unique}, every leaf set in a level-1 network $N$ has
a unique LCA. As a further consequence of Lemmas~\ref{lem:lca-equivalence}
and~\ref{lem:lca-unique}, the following result, which was stated without
proof in \cite{HS18} for binary level-1 networks, also holds in our more
general setting:
\begin{corollary}
  \label{cor:lca-unique}
  Let $N$ be a level-1 network on $X$ and $\emptyset\ne Y\subseteq X$. Then
  there is a unique vertex $u$ such that $Y\subseteq \CC(u)$ but
  $Y\not\subseteq \CC(v)$ for all $v\in\child(u)$. In this case, $u =
  \lca_N(Y)$.
\end{corollary}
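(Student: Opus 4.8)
The plan is to recognize that the stated characterization is nothing more than the cluster-language reformulation of Lemma~\ref{lem:lca-equivalence}, combined with the uniqueness guaranteed by Lemma~\ref{lem:lca-unique}. Since $N$ is level-1, Lemma~\ref{lem:lca-unique} tells us that $N$ is an lca-network, so $\LCA(Y)$ is a singleton $\{\lca_N(Y)\}$ for every non-empty $Y\subseteq X$. It therefore suffices to show that the vertices $u$ described in the statement are precisely the elements of $\LCA(Y)$.

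First I would translate the ancestry condition of Lemma~\ref{lem:lca-equivalence} into cluster terms. Because $Y\subseteq X$ consists of leaves, a vertex $w$ is an ancestor of every element of $Y$ exactly when $y\preceq_N w$ for all $y\in Y$, which by the definition $\CC(w)=\{x\in X\mid x\preceq_N w\}$ is equivalent to $Y\subseteq\CC(w)$. Applying this both to $u$ itself and to each child $v\in\child(u)$, Lemma~\ref{lem:lca-equivalence} yields: $u\in\LCA(Y)$ if and only if $Y\subseteq\CC(u)$ but $Y\not\subseteq\CC(v)$ for all $v\in\child(u)$. This is exactly the condition in the corollary.

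Combining the two observations, the set of vertices satisfying the stated condition coincides with $\LCA(Y)$, which by Lemma~\ref{lem:lca-unique} equals the singleton $\{\lca_N(Y)\}$. This simultaneously yields existence of such a $u$, its uniqueness, and the identity $u=\lca_N(Y)$.

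The argument is essentially immediate, so there is no serious obstacle; the only point requiring a little care is the translation step in the second paragraph, namely verifying that for leaf sets the phrases \emph{``ancestor of all of $Y$''} and \emph{``$Y$ contained in the cluster''} express the same thing. Once that equivalence is recorded, the rest is a direct substitution into the two cited lemmas, and I would keep the write-up correspondingly brief.
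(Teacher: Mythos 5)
Your proposal is correct and follows exactly the route the paper intends: the paper states this corollary as "a further consequence of Lemmas~\ref{lem:lca-equivalence} and~\ref{lem:lca-unique}," i.e., uniqueness of the LCA from the lca-network property of level-1 networks plus the child-based characterization of LCAs, with the same leaf-set translation between ancestry and cluster containment that you spell out. Nothing is missing; your write-up simply makes explicit the translation step the paper leaves implicit.
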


Prop.~1 of \cite{HS18} also states the following result (without proof) for
binary level-1 networks:
\begin{lemma}
  \label{lem:lca-xy}
  Every level-1 network is a strong lca-network.
\end{lemma}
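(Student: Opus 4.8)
The plan is to reduce the claim to a property of the clustering system and then verify that property from the block structure of level-$1$ networks. By Lemma~\ref{lem:lca-unique}, $N$ is an lca-network, so Prop.~\ref{prop:wh-stlca} applies and tells us that $N$ is a strong lca-network if and only if $\mathscr{C}_N$ is a weak hierarchy. Hence it suffices to show that $\mathscr{C}_N$ is a weak hierarchy, i.e.\ that $C_1\cap C_2\cap C_3\in\{C_1\cap C_2,\,C_1\cap C_3,\,C_2\cap C_3,\,\emptyset\}$ for all $C_1,C_2,C_3\in\mathscr{C}_N$. This reformulation is attractive because it lets me work purely with clusters and the results of Section~\ref{sec:LEVEL-1}, avoiding a direct hunt for the leaf pair $x,y$.

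First I would argue by contradiction, assuming a genuine failure: the triple intersection $T\coloneqq C_1\cap C_2\cap C_3$ is nonempty and differs from each pairwise intersection. Since $T\subseteq C_i\cap C_j$ always holds, a failure forces $T\subsetneq C_i\cap C_j$ for every pair, so there are witnesses $a\in(C_1\cap C_2)\setminus C_3$, $b\in(C_1\cap C_3)\setminus C_2$, and $c\in(C_2\cap C_3)\setminus C_1$. A short check with $a,b,c$ shows that $C_1,C_2,C_3$ are pairwise distinct and pairwise \emph{overlap}: each pairwise intersection is nonempty, while $b,c$ (resp.\ $a,c$; resp.\ $a,b$) certify that neither cluster of a pair contains the other.

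The heart of the argument is then to show that three pairwise overlapping clusters cannot occur in a level-$1$ network. Fix vertices $u_1,u_2,u_3$ with $\CC(u_i)=C_i$. Since $\CC(u_1)$ and $\CC(u_2)$ overlap, Lemma~\ref{lem:overlap-B0} places $u_1,u_2$ in the interior $B_{12}^0$ of a common non-trivial block, and likewise $u_1,u_3\in B_{13}^0$. As $u_1$ lies in both interiors, and distinct blocks have disjoint interiors by Obs.~\ref{obs:block-identity}, we get $B_{12}=B_{13}=:B$; consequently $u_1,u_2,u_3\in B^0$ are pairwise $\preceq_N$-incomparable. Now Lemma~\ref{lem:uvmin} collapses every pairwise intersection onto the unique minimal vertex of this common block: $C_i\cap C_j=\CC(\min B)$ for all three pairs. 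In particular $C_1\cap C_2=\CC(\min B)\subseteq\CC(u_3)=C_3$, using $\min B\preceq_N u_3$ from Cor.~\ref{cor:min-max-B} together with Lemma~\ref{lem:inclusion}. This contradicts $a\notin C_3$, completing the contradiction and establishing that $\mathscr{C}_N$ is a weak hierarchy, whence $N$ is a strong lca-network.

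I expect the only delicate point to be this block-collapsing step, namely verifying that arbitrary representatives $u_1,u_2,u_3$ of the three clusters can indeed be forced into a single common interior $B^0$; the disjointness of block interiors is exactly what makes the three a~priori different blocks $B_{12},B_{13}$ coincide. Once all three vertices are pinned to $B^0$, Lemma~\ref{lem:uvmin} does the real work by equating all pairwise intersections with $\CC(\min B)$, which is incompatible with an assumed genuine three-way split. It is worth noting that this route uses neither the phylogenetic nor the shortcut-free hypothesis, matching the full generality of Lemma~\ref{lem:lca-unique}, and that it simultaneously shows the stronger structural fact that a level-$1$ clustering system contains no three pairwise overlapping clusters.
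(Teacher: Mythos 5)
Your core argument is correct, and it takes a genuinely different route from the paper's. The paper proves Lemma~\ref{lem:lca-xy} directly from Definition~\ref{def:stronglca}: given $\emptyset\ne Y\subseteq X$ it sets $v=\lca(Y)$, uses Cor.~\ref{cor:lca-unique} to pick children $v',v''$ of $v$ and leaves $x\in Y\cap\CC(v')\setminus\CC(v'')$, $y\in Y\cap\CC(v'')\setminus\CC(v')$, and then rules out $\lca(\{x,y\})\prec_N v$ by a fairly long block analysis (Lemmas~\ref{lem:uvmin} and~\ref{lem:block-prec-sandwich}, Obs.~\ref{obs:identical-block} and~\ref{obs:biConn-arc-disjoint}). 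You instead reduce the statement to a property of the clustering system via Prop.~\ref{prop:wh-stlca} (for lca-networks, strong lca-network is equivalent to $\mathscr{C}_N$ being a weak hierarchy) and then verify the weak-hierarchy condition by the collapsing argument: overlapping clusters force their representatives into a common block interior (Lemma~\ref{lem:overlap-B0}, Obs.~\ref{obs:block-identity}), whence Lemma~\ref{lem:uvmin} makes all pairwise intersections equal to $\CC(\min B)$, contradicting the witness $a\in(C_1\cap C_2)\setminus C_3$. Each step checks out, there is no circularity (everything you cite precedes Lemma~\ref{lem:lca-xy}), and your route is shorter and more modular; indeed the paper itself gestures at it in Cor.~\ref{cor:l1->closed-weak-hierarchy} (``or alternatively Lemmas~\ref{lem:inclusion} and~\ref{lem:uvmin}''). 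What the paper's direct proof buys in exchange is self-containedness: it needs nothing beyond the Section~\ref{sec:LEVEL-1} basics, whereas you import the machinery behind Prop.~\ref{prop:wh-stlca}, i.e., the closure identities of Lemma~\ref{lem:lca-cl-identities} and the Bandelt--Dress characterization Prop.~\ref{prop:BD89}.

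One claim in your write-up is wrong, although it does not affect the validity of the proof itself: your argument does not show, and it is false, that the clustering system of a level-$1$ network ``contains no three pairwise overlapping clusters'' (property (N3O)). Counterexample: let the root $r$ have three children $u_1,u_2,u_3$, give each $u_i$ its own leaf child $x_i$, and let all three $u_i$ share a common hybrid child $h$ (indegree $3$) with a single leaf child $z$. This network is phylogenetic, shortcut-free, and level-$1$ (its unique non-trivial block contains only the hybrid $h\ne\max B$), yet $\CC(u_1)=\{x_1,z\}$, $\CC(u_2)=\{x_2,z\}$, $\CC(u_3)=\{x_3,z\}$ pairwise overlap. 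The paper obtains (N3O) only for galled trees, i.e., level-$1$ networks all of whose hybrid vertices have indegree $2$ (Lemma~\ref{lem:galled-implies-N3O}, Thm.~\ref{thm:galled-no3overlap}). What your argument actually establishes is that three pairwise overlapping clusters force all their pairwise intersections to coincide with $\CC(\min B)$ --- which is precisely property (L) in the sense of Lemma~\ref{lem:overlaprule} --- and the contradiction you derive is with the witnesses $a,b,c$ produced by the assumed failure of the weak-hierarchy condition, not with the mere existence of three pairwise overlaps. Accordingly, the framing sentence ``the heart of the argument is to show that three pairwise overlapping clusters cannot occur in a level-$1$ network'' and the final sentence claiming (N3O) as a by-product should both be deleted or corrected.
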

\begin{proof}
  Let $N$ be a level-1 network. By Lemma~\ref{lem:lca-unique}, $N$ is an
  lca-network. Thus, it remains to show that, for every $\emptyset\ne
  Y\subseteq X$, there are leaves $x,y\in X$ such that
  $\lca(Y)=\lca(\{x,y\})$.  The statement holds trivially if $Y=\{x\}$,
  since then $\lca(\{x,x\})=\lca(Y)$.  Hence, suppose now that $\vert
  Y\vert \ge 2$ and thus that $v\coloneqq \lca(Y)$ is not a leaf. By
  Cor.~\ref{cor:lca-unique}, every child $v'\in\child(v)$ satisfies
  $Y\not\subseteq \CC(v')$ and $Y\subseteq
  \CC(v)=\bigcup_{v'\in\child(v)}\CC(v')$, there are two distinct children
  $v',v''\in\child (v)$ such that there is $x\in Y\cap \CC(v')\setminus
  \CC(v'')\ne\emptyset$ and $y\in Y\cap \CC(v'')\setminus
  \CC(v')\ne\emptyset$. Since $\{x,y\}\subseteq Y$, we have
  $\lca(\{x,y\})\preceq_N v$ by Lemma~\ref{lem:lca-unique} and
  Obs.~\ref{obs:lca-PO}.  Suppose for contradiction that
  $\lca(\{x,y\})\prec_N v$.  Contraposition of Cor.~\ref{cor:lca-unique}
  and $\{x,y\}\subseteq \CC(v)$ implies that there is a child
  $v'''\in\child(v)$ with $\{x,y\} \subseteq \CC(v''')$. By the choice of
  $v'$ and $v''$ we have $v'''\notin \{v',v''\}$.

  We continue by showing that $v$, $v'$, and $v'''$ are located in a common
  block $B$ of $N$.  Consider first the case that $v'$ and $v'''$ are
  $\preceq_N$-comparable.  Then Lemma~\ref{lem:inclusion}, $y\in \CC(v''')$,
  and $y\notin \CC(v')$ imply $v'\prec_N v'''$.  Hence, the three vertices
  $v,v'$ and $v'''$ lie on an undirected circle formed by the arcs
  $(v,v')$ and $(v,v''')$ and a directed path from $v'''$ to $v'$. By Obs.\
  \ref{obs:identical-block}, the vertices $v$, $v'$, and $v'''$ are part of
  a common block $B$.  Assume now that $v'$ and $v'''$ are
  $\preceq_N$-incomparable, then $x\in \CC(v')\cap \CC(v''')$ and
  Lemma~\ref{lem:uvmin} implies that $v'$ and $v'''$ are contained in
  common non-trivial block $B$. If $v$ is not contained in $B$, then $v$
  and arcs $(v,v')$ and $(v,v''')$ can be added to $B$ without losing
  biconnectivity; contradicting that $B$ is a maximal biconnected
  subgraph. Hence, $v$ is also contained in $B$. Similarly, one shows that
  $v$, $v''$, and $v'''$ are located in a common block $B'$ of $N$.  Since
  $B$ and $B'$ share the arc $(v,v''')$, Obs.\
  \ref{obs:biConn-arc-disjoint} implies $B=B'$. In summary, $v$, $v'$,
  $v''$ and $v'''$ are all located in a common block $B$ of $N$.

  Now suppose again that $v'$ and $v'''$ are $\preceq_N$-comparable. As
  argued above, we have $v'\prec_N v'''$ and thus there is a directed path
  $P$ from $v'''$ to $v'$. Since $N$ is acyclic and $(v,v''')\in E(N)$, all
  vertices $w$ in $P$ satisfy $w\prec_N v$. Together with $(v,v')\in E(N)$,
  this implies that $v'$ has at least indegree 2 and thus, $v'$ is a
  hybrid-vertex of $B$.  Since the hybrid vertex in each block of a level-1
  network is unique, we have $v'=\min B$.  But then we have
  $x\in \CC(v')=\CC(\min B)\subseteq \CC(v'')$ by Cor.~\ref{cor:min-max-B}; a
  contradiction.  Hence, $v'$ and $v'''$ must be $\preceq_N$-incomparable
  and we can apply Lemma~\ref{lem:uvmin} to conclude that
  $\CC(v')\cap \CC(v''')=\CC(\min B)$ and thus $x\in \CC(\min B)$.
  Cor.~\ref{cor:min-max-B} therefore implies
  $x\in \CC(\min B)\subseteq \CC(v'')$; a contradiction. In summary, the case
  $\lca(\{x,y\})\prec_N v$ is not possible and hence we must have
  $\lca(\{x,y\})= v$.
\end{proof}

As an immediate consequence of Lemma~\ref{lem:lca-xy} (or alternatively
Lemmas~\ref{lem:inclusion} and~\ref{lem:uvmin}) and Prop.~\ref{prop:cwh},
we have:
\begin{corollary}
  \label{cor:l1->closed-weak-hierarchy}
  The clustering system of a level-1 network is a closed weak hierarchy.
\end{corollary}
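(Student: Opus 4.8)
The plan is to chain two results already established in this section, and the whole argument is essentially a one-line deduction. First I would invoke Lemma~\ref{lem:lca-xy}, which states that every level-1 network is a strong lca-network; crucially this holds for an arbitrary level-1 network without any phylogeneticity or separation hypothesis, so it applies verbatim to the network $N$ at hand. Thus $N$ is a strong lca-network with clustering system $\mathscr{C}_N$.

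Second, I would apply the direction of Prop.~\ref{prop:cwh} asserting that the clustering system of any strong lca-network is a closed weak hierarchy. Taking $\mathscr{C}=\mathscr{C}_N$, with $N$ itself serving as the witnessing strong lca-network, immediately yields that $\mathscr{C}_N$ is a closed weak hierarchy, which is the claim.

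Since both inputs are already in place there is essentially no obstacle left; the one point I would double-check is that the relevant direction of Prop.~\ref{prop:cwh} really requires only that $\mathscr{C}$ arise as $\mathscr{C}_N$ for \emph{some} strong lca-network, with no tacit (PCC) assumption on $N$ itself. This is fine: a strong lca-network is in particular an lca-network, so $\mathscr{C}_N$ is closed by Prop.~\ref{prop:lca-iff-closed}, and it is a weak hierarchy by Prop.~\ref{prop:wh-stlca}; combining these two facts gives ``closed weak hierarchy'' without any further hypothesis.

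Alternatively, and avoiding the lca machinery, I could argue directly from Lemmas~\ref{lem:inclusion} and~\ref{lem:uvmin}. Closedness follows at once via Lemma~\ref{lem:simple-closed}: given $\CC(u),\CC(v)\in\mathscr{C}_N$ with $\CC(u)\cap\CC(v)\ne\emptyset$, either $u,v$ are $\preceq_N$-comparable and then one cluster contains the other by Lemma~\ref{lem:inclusion}, or they are $\preceq_N$-incomparable and then Lemma~\ref{lem:uvmin} gives $\CC(u)\cap\CC(v)=\CC(\min B)\in\mathscr{C}_N$ for their common block $B$. The weak-hierarchy condition would be the only nontrivial point on this route, requiring one to verify that the triple intersection of three clusters collapses to one of the pairwise intersections (or is empty); this is where I would expect to spend most of the effort in the direct approach, and it is precisely the reason routing through Prop.~\ref{prop:cwh}, which already packages the weak-hierarchy structure (ultimately via Prop.~\ref{prop:BD89}), is the cleaner choice.
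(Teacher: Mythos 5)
Your proposal is correct and matches the paper's own derivation exactly: the corollary is stated there as an immediate consequence of Lemma~\ref{lem:lca-xy} (every level-1 network is a strong lca-network) combined with Prop.~\ref{prop:cwh}, with the route via Lemmas~\ref{lem:inclusion} and~\ref{lem:uvmin} mentioned only as a parenthetical alternative, just as in your write-up. Your side remark resolving the potential (PCC) worry—noting that the relevant direction of Prop.~\ref{prop:cwh} follows from Prop.~\ref{prop:lca-iff-closed} and Prop.~\ref{prop:wh-stlca} for an arbitrary strong lca-network—is a sound and welcome extra check, not a deviation from the paper's argument.
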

This result also generalizes \cite[Prop.~1]{Gambette:12}, who showed
that $\mathscr{C}_N$ is a weak hierarchy for binary level-1 networks.

In many applications, vertex- or arc-labeled networks are considered as a
scaffold to explain genomic sequence data
\cite{HM:06,Huber:19,HS18,Hellmuth:15a,HW:16b,HS:21,Hellmuth2019-gs,BHS:21}.
In this context, it is of considerable interest to understand the structure
of  least-resolved networks that still explain the same data and
are obtained from the original network by shortcut removal and contraction
of arcs (cf.\ Def.~\ref{def:least-resolved}). Hence, it is important to
keep track of $\lca$'s after arcs have been contracted. To this end, we
provide the following
\begin{proposition}
  Let $N$ be a level-1 network with leaf set $X$ and $(v',v)$ be an arc
  such that $v$ is neither a hybrid vertex nor a leaf of $N$. Moreover, let
  $N'$ be the network obtained from $N$ by application of
  $\contract(v',v)$.  Then, for all $x,y\in X$, we have $\lca_{N'}(x,y) =
  \lca_N(x,y)$ whenever $\lca_N(x,y)\neq v'$ and, otherwise,
  $\lca_{N'}(x,y) =v$.
\end{proposition}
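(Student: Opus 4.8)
The plan is to reduce everything to the cluster characterisation of least common ancestors in level-1 networks (Cor.~\ref{cor:lca-unique}): $\lca(Y)$ is the unique vertex $u$ with $Y\subseteq\CC(u)$ but $Y\not\subseteq\CC(w)$ for every child $w$ of $u$. First I would record the standing hypotheses. Since $v$ is not a hybrid vertex and $(v',v)\in E(N)$, we have $\indeg_N(v)=1$, so $v'$ is the \emph{unique} parent of $v$; and since $v$ is not a leaf, $\outdeg_N(v)\ge 1$. The arc $(v',v)$ is not a shortcut, because a shortcut $(v',v)$ would require a vertex $z$ with $v\prec_N z\prec_N v'$, and any such ancestor $z\ne v$ of $v$ must lie above the unique parent $v'$, i.e.\ $v'\preceq_N z$, contradicting $z\prec_N v'$. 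Hence $\contract(v',v)$ is admissible, Lemma~\ref{lem:contract-level-k} shows $N'$ is again level-1, and $\outdeg_{N'}(v)\ge\outdeg_N(v)\ge 1$ keeps the leaf set $X$ (Lemma~\ref{lem:contraction}). By Lemma~\ref{lem:lca-unique} both $N$ and $N'$ are lca-networks, so $\lca_N(\{x,y\})$ and $\lca_{N'}(\{x,y\})$ are well-defined.

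The heart of the argument is a \emph{cluster dictionary}: I would prove $\CC_{N'}(v)=\CC_N(v')$ and $\CC_{N'}(a)=\CC_N(a)$ for every $a\in V(N')\setminus\{v\}$. I would feed the ancestor relations of Lemma~\ref{lem:contraction} (applied with the arc's tail $v'$ in the role of $u$ and $v$ in the role of $w$) into the definition of clusters: part~(1) gives $\CC_N(a)\subseteq\CC_{N'}(a)$ for all $a$, while part~(2) says any new descendant in $N'$ arises only through $v$ together with a former child $w'\in\child_N(v')\setminus\{v\}$ of $v'$, so $x\in\CC_{N'}(a)\setminus\CC_N(a)$ forces $v\preceq_N a$ and $x\in\CC_N(v')$. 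Here unique parenthood does the decisive work: if $a\ne v$ and $v\prec_N a$, the path witnessing $v\preceq_N a$ must pass through $v'$, whence $v'\preceq_N a$ and $x\in\CC_N(v')\subseteq\CC_N(a)$ by Lemma~\ref{lem:inclusion}; so no genuinely new leaf appears and $\CC_{N'}(a)=\CC_N(a)$. For $a=v$ the same bookkeeping, together with the fact that each $w'$ becomes a child of $v$ in $N'$, yields $\CC_{N'}(v)=\CC_N(v')$. From the definition of $\contract$ I would also read off the children: $\child_{N'}(v)=\child_N(v)\cup(\child_N(v')\setminus\{v\})$, while for $a\ne v$ the operation at most replaces the child $v'$ by $v$, so the set of clusters $\{\CC_{N'}(w):w\in\child_{N'}(a)\}$ equals $\{\CC_N(w):w\in\child_N(a)\}$ (the only change being that the child $v'$, of cluster $\CC_N(v')$, is replaced by the child $v$, whose cluster $\CC_{N'}(v)$ equals $\CC_N(v')$).

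With the dictionary in hand the proposition follows from Cor.~\ref{cor:lca-unique}. Write $q:=\lca_N(\{x,y\})$. If $q\ne v'$, then $q\in V(N')$ and $\{x,y\}\subseteq\CC_N(q)\subseteq\CC_{N'}(q)$; I would then verify $\{x,y\}\not\subseteq\CC_{N'}(w)$ for every child $w$ of $q$ in $N'$. For $q\ne v$ this is immediate from the children-cluster preservation above. For $q=v$ the only additional children are the moved $w'\in\child_N(v')\setminus\{v\}$, and I would exclude $\{x,y\}\subseteq\CC_N(w')$ by noting it would give $q=v\preceq_N w'$ (Obs.~\ref{obs:deflcaY}(i)), contradicting $v\not\preceq_N w'$ — again by unique parenthood, since $v\preceq_N w'$ would force the cycle $v'\preceq_N w'\prec_N v'$. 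Thus $\lca_{N'}(\{x,y\})=q$. If instead $q=v'$, then $\{x,y\}\subseteq\CC_N(v')=\CC_{N'}(v)$, while $\{x,y\}\not\subseteq\CC_N(w)$ for every $w\in\child_N(v')$; since the children of $v$ in $N'$ are exactly $\child_N(v)\cup(\child_N(v')\setminus\{v\})$, and each of their (unchanged) clusters is contained in $\CC_N(w)$ for some $w\in\child_N(v')$ (the originals lie in $\CC_N(v)$, the moved ones are themselves children of $v'$), we obtain $\{x,y\}\not\subseteq\CC_{N'}(w)$ for all $w\in\child_{N'}(v)$, so Cor.~\ref{cor:lca-unique} gives $\lca_{N'}(\{x,y\})=v$, as claimed.

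The main obstacle I anticipate is the cluster dictionary of the second step, specifically the verification that contracting $(v',v)$ creates no spurious new descendant leaf at any vertex other than $v$. Everything hinges on $v$ having a unique parent — the sole consequence of ``$v$ is not a hybrid vertex'' that is actually used — which forces every new downward route opened up by the contraction to factor through the old position of $v'$. Once this is pinned down, the case analysis via Cor.~\ref{cor:lca-unique} is routine bookkeeping.
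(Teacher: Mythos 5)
Your proof is correct, but it takes a genuinely different route from the paper's. The paper argues directly on the partial orders: writing $z=\lca_N(x,y)$ and $z'=\lca_{N'}(x,y)$, it uses Cor.~\ref{cor:lca-below} and the two implications of Lemma~\ref{lem:contraction} to rule out, case by case, that $z\prec_{N}z'$, that $z'\prec_{N}z$, or that $z,z'$ are $\preceq_N$-incomparable (and, in the case $\lca_N(x,y)=v'$, that $\lca_{N'}(x,y)\prec_{N'}v$), each elimination being a separate contradiction argument. You instead front-load all the combinatorics into a \emph{cluster dictionary} — $\CC_{N'}(a)=\CC_N(a)$ for all $a\in V(N')\setminus\{v\}$, $\CC_{N'}(v)=\CC_N(v')$, together with the bookkeeping $\child_{N'}(v)=\child_N(v)\cup(\child_N(v')\setminus\{v\})$ and the fact that for $a\ne v$ the child $v'$ is merely replaced by $v$ — and then transfer the lca through the cluster characterization of Cor.~\ref{cor:lca-unique}. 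Both proofs exploit the same two ingredients in the same way (Lemma~\ref{lem:contraction} for the ancestor relations, and unique parenthood of $v$, i.e.\ non-hybridicity, to force every new downward route to factor through the deleted vertex $v'$), so neither is "more elementary"; but your decomposition is more modular and slightly stronger: the dictionary is a statement of independent interest, and once it is in place your argument gives $\lca_{N'}(A)$ for \emph{arbitrary} nonempty leaf sets $A\subseteq X$ (not only pairs) at no extra cost, whereas the paper's case analysis is tailored to pairs. The paper's approach, in exchange, never needs to establish the full cluster-preservation statement and works directly with $\preceq$-minimality. One small point worth making explicit if you write this up: your verification that $(v',v)$ is not a shortcut, and that the leaf set is preserved because $\outdeg_{N'}(v)\ge\outdeg_N(v)\ge 1$, correctly discharges the hypotheses needed to invoke Lemma~\ref{lem:contraction} and Lemma~\ref{lem:contract-level-k}; the latter, combined with Lemma~\ref{lem:lca-unique} (or Cor.~\ref{cor:lca-unique}) applied to $N'$, is what licenses speaking of $\lca_{N'}$ at all, exactly as in the paper.
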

\begin{proof}
  Since $v$ is not a hybrid vertex, $e = (v',v)$ is not a shortcut. Thus,
  $\contract(v',v)$ is well-defined. Let $x, y \in X$. If $x=y$, then
  $\lca_N(x,y) = x = \lca_{N'}(x,y)$. Hence, assume that $x\neq y$.  Note,
  by Lemma~\ref{lem:contract-level-k}, $N'$ remains a level-1 network.  By
  Cor.~\ref{cor:lca-unique}, therefore, $\lca_N(x,y)$ and $\lca_{N'}(x,y)$
  are well defined and, in particular, correspond to unique vertices in $N$
  and $N'$, respectively.

  Assume first that $\lca_N(x,y)=v'$. Hence, $x,y\prec_N v'$ and there must
  be children $c,c'$ of $v'$ such that $x\preceq_N c$ and $y\preceq_N
  c'$. By construction, each of $c$ and $c'$ either equals $v$ or becomes a
  child of $v$ in $N'$. This and Lemma~\ref{lem:contraction}(1) implies
  that $x\preceq_{N'} c \preceq_{N'} v$ and $y\preceq_{N'} c' \preceq_{N'}
  v$.  Cor.~\ref{cor:lca-below} and $x,y\preceq_{N'} v$ imply
  $z\preceq_{N'} v$.  Assume, for contradiction, that $z\prec_{N'} v$ Since
  $x,y\preceq_{N'} z$, for $x$ and $z$ (resp.\ $y$ and $z$) one of the
  Cases~(i) or~(ii) as specified in Lemma~\ref{lem:contraction}(2) must
  hold.

  Assume that Case~(i) holds for both $x$ and $z$ as well as $y$ and $z$,
  i.e., we have $x,y\preceq_N z$. Note that Lemma~\ref{lem:contraction}(2)
  must hold for $z$ and $v$ as well.  Hence, we have~(i) $z\preceq_N v$
  or~(ii) $z\preceq w'$ for some child $w'\neq v$ of $v'$ in $N$. For both
  cases, we have $x,y\preceq_N z \prec_N v'$; a contradiction to $v' =
  \lca_N(x,y)$.

  Assume now that Case~(ii) is satisfied for $x$ and $z$.  In this case,
  $x\preceq_{N'} z$ implies, in particular, that $v\preceq_N z$.  This and
  Lemma~\ref{lem:contraction}(1) implies that $v\preceq_{N'} z$. This
  together with $z\preceq_{N'} v$ implies $z=v$; a contradiction. By
  similar arguments, Case~(ii) cannot hold for $y$ and $z$. Hence, neither
  of the Cases~(i) or~(ii) as specified in Lemma \ref{lem:contraction}(2)
  hold for $x$ and $z$ (resp.\ $y$ and $z$); a contradiction. Therefore,
  $\lca_N(x,y)=v'$ implies that $\lca_{N'}(x,y) = v$.

  Assume now that $\lca_N(x,y)\neq v'$. Let $z\coloneqq \lca_N(x,y)$ and
  $z'\coloneqq \lca_{N'}(x,y)$.  Since $z\neq v'$ and $z'\in V(N')$, we can
  conclude that $z,z'\in V(N)\cap V(N')$.  Lemma~\ref{lem:contraction}(1)
  together with $x,y\prec_N z$ implies $x,y\prec_{N'} z$.  Assume, for
  contradiction, that $z\neq z'$.  We distinguish Cases~(a) $z \prec_{N}
  z'$, (b) $z'\prec_{N} z$, and (c) $z$ and $z'$ are
  $\preceq_{N}$-incomparable.

  In Case~(a), $x,y \prec_{N} z \prec_{N} z'$ and
  Lemma~\ref{lem:contraction}(1) imply $x,y \prec_{N'} z \prec_{N'} z'$; a
  contradiction to $z'= \lca_{N'}(x,y)$.

  In Case~(b), suppose first, for contradiction, that $x\not\preceq_{N}
  z'$.  Together with $x\prec_{N'} z'$, this implies that Case~(ii) in
  Lemma~\ref{lem:contraction}(2) must hold, i.e., $v\preceq_{N} z'$ and
  $x\preceq_{N} w'$ for some $w'\in \child_{N}(v')\setminus \{v\}$.  In
  particular, we have $x\preceq_{N} w'\prec_{N} v'$.  Since $v'$ is the
  only parent of $v$ in $N$, the case $v\prec_{N} z'$ is not possible as it
  would imply $v'\preceq_{N} z'$ and thus $x\preceq_{N} w'\prec_{N}
  v'\preceq_{N} z'$; a contradiction. Hence, we have $v=z'$.  Since
  $z'\prec_{N} z$, $v'\ne z$, and $v'$ is the only parent of $v=z'$ in $N$,
  we must have $v'\prec_{N} z$.  Now consider $y\prec_{N'} z'(\prec_{N}
  v')$ which, by Lemma~\ref{lem:contraction}(2), implies (i) $y\preceq_{N}
  z'$ or (ii) $y\preceq_{N} w'$ for some $w''\in\child_{N}(v')\setminus
  \{v\}$. In any of the two cases, it holds $y\prec_{N} v'$. Hence, we have
  $x,y\prec_{N} v'\prec_{N} z$; a contradiction to $z=\lca_{N}(x,y)$.
  Therefore, it must hold $x\preceq_{N} z'$.  By analogous arguments, it
  holds $y\preceq_{N} z'$.  Hence, we have $x,y \preceq_{N} z' \prec_{N}
  z$; a contradiction to $z=\lca_N(x,y)$.

  In Case~(c), $z$ and $z'$ are $\preceq_{N}$-incomparable.  If
  $x,y\preceq_{N} z'$, then Cor.~\ref{cor:lca-below} implies
  $z=\lca_{N}(x,y)\preceq_{N} z'$; a contradiction.  Hence, suppose
  w.l.o.g.\ that $x\not\preceq_{N} z'$.  Re-using the arguments from
  Case~(b), this implies $v=z'$ and $x,y\prec_{N} v'$. The latter together
  with Cor.~\ref{cor:lca-below} and $z\ne v'$ implies
  $z=\lca_{N}(x,y)\prec_{N} v'$.  Hence, there is some child $c\in
  \child_{N}(v')$ with $z\preceq_{N} c$.  Since $z$ and $z'=v$ are
  $\preceq_{N}$-incomparable, it holds $c\in \child_{N}(v')\setminus\{v\}$.
  By construction, therefore, $c$ becomes a child of $v$ in $N'$, and thus,
  $c\prec_{N'} v$.  Together with Lemma~\ref{lem:contraction}(1), the
  latter arguments imply $x,y\prec_{N'} z\preceq_{N'} c\prec_{N'} v=z'$; a
  contradiction to $z'=\lca_{N'}(x,y)$.

  In summary, neither of Cases~(a), (b), and~(c) is possible.  Therefore,
  $z=z'$ must hold.
\end{proof}

\subsection{Property (L)}
\label{ssec:Level1-L}

\begin{lemma}
  \label{lem:overlaprule}
  Let $N$ be a level-1 network with clustering system $\mathscr{C}$,
  suppose $C_1,C_2\in\mathscr{C}$ overlap, i.e.,
  $C_1\cap C_2\notin\{C_1,C_2,\emptyset\}$. Then,
  $C_1\cap C_3\in\{ C_1, C_3, \emptyset, C_1\cap C_2\}$ for all
  $C_3\in\mathscr{C}$.
\end{lemma}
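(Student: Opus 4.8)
The plan is to exploit two structural facts about level-1 networks: overlapping clusters force their realizing vertices into the interior of a non-trivial block, and the intersection of two overlapping clusters realized in a common block is always the cluster of that block's minimal vertex. First I would observe the set-theoretic trichotomy: for $C_3\in\mathscr{C}$, either $C_1$ and $C_3$ do not overlap, in which case $C_1\cap C_3\in\{C_1,C_3,\emptyset\}$ and the claim is immediate, or they overlap, which is the only case requiring work. The target in the overlapping case is to show $C_1\cap C_3 = C_1\cap C_2$.

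The decisive move is to \emph{fix one representative} vertex $u_1\in V(N)$ with $\CC(u_1)=C_1$ and run both overlap arguments against this single $u_1$, so that the possible multiplicity of clusters never enters. Choosing $u_2,u_3$ with $\CC(u_2)=C_2$ and $\CC(u_3)=C_3$, I would apply Lemma~\ref{lem:overlap-B0} to the overlapping clusters $\CC(u_1),\CC(u_2)$ to conclude that $u_1$ and $u_2$ are $\preceq_N$-incomparable and both lie in $B^0$ for some non-trivial block $B$. Since $\CC(u_1)\cap\CC(u_2)=C_1\cap C_2\ne\emptyset$, Lemma~\ref{lem:uvmin} then gives $C_1\cap C_2=\CC(\min B)$. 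Applying Lemma~\ref{lem:overlap-B0} again, now to the overlapping clusters $\CC(u_1),\CC(u_3)$, places $u_1$ and $u_3$ in $B'^{0}$ for some non-trivial block $B'$.

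The key step is that $u_1\in B^0\cap B'^{0}$, so by Obs.~\ref{obs:block-identity} (distinct blocks have disjoint interiors) we must have $B=B'$. Consequently $u_3\in B^0$, the vertices $u_1,u_3$ are $\preceq_N$-incomparable and satisfy $\CC(u_1)\cap\CC(u_3)=C_1\cap C_3\ne\emptyset$, so a second application of Lemma~\ref{lem:uvmin} yields $C_1\cap C_3=\CC(\min B)$. Combining the two computations gives $C_1\cap C_3=\CC(\min B)=C_1\cap C_2$, which is precisely the fourth admissible value. Hence $C_1\cap C_3\in\{C_1,C_3,\emptyset,C_1\cap C_2\}$ in every case.

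I expect the main obstacle to be purely the bookkeeping subtlety that a cluster may be realized by several vertices, so that a careless argument could produce two \emph{a priori} different blocks $B$ and $B'$ for the two overlaps. The remedy is exactly the fixing of the single representative $u_1$ described above: because both overlaps are witnessed by the same $u_1$, the interiors $B^0$ and $B'^{0}$ share $u_1$ and must coincide by Obs.~\ref{obs:block-identity}. Once this is handled, the remainder is a direct invocation of Lemmas~\ref{lem:overlap-B0} and~\ref{lem:uvmin}, with no further calculation required.
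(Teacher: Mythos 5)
Your proof is correct and takes essentially the same route as the paper's: fix a single vertex $u_1$ realizing $C_1$, apply Lemma~\ref{lem:uvmin} twice to identify $C_1\cap C_2=\CC(\min B)$ and $C_1\cap C_3=\CC(\min B')$, and force $B=B'$ because $u_1$ lies in both blocks away from their maxima. The only cosmetic differences are that you case-split on cluster overlap rather than on $\preceq_N$-comparability of the vertices, and you invoke the packaged Lemma~\ref{lem:overlap-B0} together with Obs.~\ref{obs:block-identity} where the paper argues inline via Lemma~\ref{lem:inclusion} and Lemma~\ref{lem:block-identity}.
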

\begin{proof}
  Let $u_1, u_2\in V(N)$ be vertices such that $C_1=\CC(u_2)$ and
  $C_2=\CC(u_2)$.  Since $C_1$ and $C_2$ overlap, Lemma~\ref{lem:inclusion}
  implies that $u_1$ and $u_2$ are $\preceq_N$-incomparable, and thus, by
  Lemma~\ref{lem:uvmin}, $u_1$ and $u_2$ are located in the same block $B$,
  and $C_1\cap C_2= \CC(\min B)$. In particular, we have $u_1\ne \max B$
  since otherwise $u_2\preceq_{N} u_1$ (cf.\ Cor.~\ref{cor:min-max-B}).
  Now consider a vertex $w$ with $C_3=\CC(w)$.  If $w$ and $u_1$ are
  $\preceq_N$-comparable, then $C_3\subseteq C_1$ or $C_1\subseteq C_3$ by
  Lemma~\ref{lem:inclusion}, and thus $C_1\cap C_3\in\{C_1,C_3\}$.  Now
  consider the case where $w$ and $u$ are $\preceq_N$-incomparable.  If
  $C_1\cap C_3=\emptyset$, there is nothing to show.  Otherwise, by
  Lemma~\ref{lem:uvmin}, $u_1$ and $w$ are located in a common block $B'$
  and $C_1\cap C_3= \CC(\min B')$.  We have $u_1\ne \max B'$ since
  otherwise $w\preceq_{N} u_1$.  Hence, we have $u_1\notin\{\max B, \max
  B'\}$ and thus $B=B'$ by Lemma~\ref{lem:block-identity}.  Therefore
  $C_1\cap C_3= \CC(\min B)= C_1\cap C_2$.
\end{proof}

Inspection of the proof of Lemma~\ref{lem:overlaprule} shows that there
are overlapping clusters only if $N$ contains a non-trivial block and
thus a hybrid vertex. In particular, therefore, if $N$ is a rooted tree,
then $\mathscr{C}_N$ is a hierarchy.

Lemma~\ref{lem:overlaprule} can be rephrased in a more concise form
with help of the following
\begin{definition}[Property (L)]
  \label{def:L}
  A clustering system $\mathscr{C}$ satisfies property (L) if
  $C_1\cap C_2=C_1\cap C_3$ for all $C_1,C_2,C_3\in\mathscr{C}$ where $C_1$
  overlaps both $C_2$ and $C_3$.
\end{definition}
For later reference we record an equivalent way of expressing property (L):
\begin{corollary}
  \label{cor:L-more}
  A clustering system $\mathscr{C}$ satisfies property (L) if and only
  if $C_1\cap C_2\in \{\emptyset, C_1, C_2, C_1\cap C\}$
  for all $C\in \mathscr{C}$ that overlap with $C_1$.
\end{corollary}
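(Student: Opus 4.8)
The plan is to prove the two implications of the equivalence directly from the definition of overlap recalled in the Preliminaries, namely that two sets $A,B$ overlap precisely when $A\cap B\notin\{\emptyset,A,B\}$. This single observation is what links property (L) to the set-membership condition, since it says that $C_1$ and $C_2$ \emph{fail} to overlap exactly when $C_1\cap C_2\in\{\emptyset,C_1,C_2\}$.

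For the forward direction, I would assume $\mathscr{C}$ satisfies (L), fix arbitrary $C_1,C_2\in\mathscr{C}$ together with some $C\in\mathscr{C}$ that overlaps $C_1$, and split into two cases. If $C_1$ and $C_2$ do not overlap, then by the observation above $C_1\cap C_2\in\{\emptyset,C_1,C_2\}$ and the desired membership holds trivially. If $C_1$ and $C_2$ do overlap, then $C_1$ overlaps both $C_2$ and $C$, so property (L) applied to the triple $(C_1,C_2,C)$ yields $C_1\cap C_2=C_1\cap C$, again giving membership in $\{\emptyset,C_1,C_2,C_1\cap C\}$.

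For the converse, I would assume the membership condition and verify (L) directly. Suppose $C_1$ overlaps both $C_2$ and $C_3$. Applying the hypothesis to the pair $C_1,C_2$ with the overlapping witness $C\coloneqq C_3$ gives $C_1\cap C_2\in\{\emptyset,C_1,C_2,C_1\cap C_3\}$. Since $C_1$ overlaps $C_2$, the first three options are excluded by the definition of overlap, forcing $C_1\cap C_2=C_1\cap C_3$, which is precisely the conclusion of (L).

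The argument involves no real calculation; the only point requiring care is the bookkeeping of the quantifiers. In particular, one must read the quantification over $C$ in the corollary as ranging over \emph{all} clusters overlapping $C_1$, so that in the converse one is free to instantiate it with the second overlapping cluster $C_3$, and one must recognize that ``$C_1$ and $C_2$ do not overlap'' is synonymous with $C_1\cap C_2\in\{\emptyset,C_1,C_2\}$. This interplay of definitions is the main (and essentially the only) obstacle.
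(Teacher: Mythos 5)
Your proof is correct and follows essentially the same route as the paper's: both arguments hinge on the observation that $C_1$ and $C_2$ fail to overlap exactly when $C_1\cap C_2\in\{\emptyset,C_1,C_2\}$, and both reduce the overlapping case to an instantiation of the quantifier in (L). The only cosmetic difference is that you separate the two implications explicitly, whereas the paper compresses them into a single equivalence chain.
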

\begin{proof}
  Let  $C_1,C_2$ be chosen arbitrarily from $\mathscr{C}$.
  If $C_1\subseteq C_2$, $C_2\subseteq C_1$ or $C_1\cap C_2 = \emptyset$,
  then $C_1\cap C_2\in \{\emptyset, C_1, C_2\}$ and there is nothing to show.
  If neither of the latter cases is
  satisfied, then $C_1$ and $C_2$ overlap.
  Let $\mathscr{C}'\subseteq \mathscr{C}$ be the subset of cluster $C\in
  \mathscr{C}$	that overlap with $C_1$.
  By construction, $C_1$  overlaps with all elements in $\mathscr{C}'$
  and $C_2\in \mathscr{C}'$. Hence, Property (L) holds if and only if
  $C_1\cap C_2 = C_1\cap C$ for all $C\in  \mathscr{C}'$.
\end{proof}

\begin{corollary}
  \label{cor:L}
  The clustering system $\mathscr{C}_N$ of every level-1 network $N$
  satisfies property (L).
\end{corollary}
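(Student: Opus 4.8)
The plan is to read off property (L) directly from Lemma~\ref{lem:overlaprule}, which has already done essentially all of the structural work. Recall that (L) asks: whenever $C_1$ overlaps both $C_2$ and $C_3$, one has $C_1\cap C_2 = C_1\cap C_3$. Lemma~\ref{lem:overlaprule}, on the other hand, tells us that once $C_1$ overlaps \emph{some} cluster $C_2$, then for \emph{every} cluster $C_3\in\mathscr{C}$ the intersection $C_1\cap C_3$ is forced into the small set $\{C_1, C_3, \emptyset, C_1\cap C_2\}$. The idea is that the hypotheses of (L) rule out three of these four options, leaving only the desired equality.

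Concretely, I would fix $C_1,C_2,C_3\in\mathscr{C}_N$ such that $C_1$ overlaps both $C_2$ and $C_3$. Since $C_1$ overlaps $C_2$, Lemma~\ref{lem:overlaprule} applies and yields
\[
  C_1\cap C_3 \in \{C_1,\, C_3,\, \emptyset,\, C_1\cap C_2\}.
\]
Now I invoke the definition of overlapping sets from the preliminaries: because $C_1$ overlaps $C_3$, we have $C_1\cap C_3 \notin \{\emptyset, C_1, C_3\}$ by definition. This excludes exactly the first three members of the list above, so the only remaining possibility is $C_1\cap C_3 = C_1\cap C_2$. Since $C_1,C_2,C_3$ were arbitrary among the triples satisfying the hypothesis of (L), this establishes that $\mathscr{C}_N$ satisfies property (L).

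There is no genuine obstacle here: the statement is an immediate reformulation of Lemma~\ref{lem:overlaprule} once one pairs it with the definition of \emph{overlap}. If anything, the only point that deserves a word of care is making sure the ``symmetric'' role of the two overlaps is handled cleanly---i.e.\ that we apply Lemma~\ref{lem:overlaprule} with the overlapping pair $(C_1,C_2)$ and test it against $C_3$, rather than the reverse---but since (L) is symmetric in $C_2$ and $C_3$, either choice works and the conclusion is the same. The real content, namely that two distinct blocks cannot both create overlaps with the clusters stemming from a single vertex (via the level-1 block structure and Lemma~\ref{lem:block-identity}), has already been absorbed into the proof of Lemma~\ref{lem:overlaprule}, so this corollary is essentially a one-line deduction.
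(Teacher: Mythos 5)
Your proof is correct and follows essentially the same route as the paper: both apply Lemma~\ref{lem:overlaprule} to the overlapping pair $(C_1,C_2)$ tested against $C_3$, and then use the definition of overlap to exclude the cases $C_1\cap C_3\in\{\emptyset,C_1,C_3\}$, forcing $C_1\cap C_3=C_1\cap C_2$. Nothing is missing; this is precisely the paper's one-line deduction.
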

\begin{proof}
  Let $C_1,C_2,C_3\in\mathscr{C}$ such that $C_1$ overlaps both $C_2$ and
  $C_3$. Hence, $C_1\cap C_3\notin\{C_1,C_3,\emptyset\}$.  This together
  with Lemma \ref{lem:overlaprule} implies that
  $C_1\cap C_3\in\{C_1\cap C_2\}$.
\end{proof}
\begin{corollary}\label{cor:L->wH}
  A clustering system $\mathscr{C}$ satisfying Property (L) is a
  weak hierarchy.
\end{corollary}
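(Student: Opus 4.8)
The plan is to verify the weak-hierarchy condition of Def.~\ref{def:wH-and-binary} directly, fixing arbitrary clusters $C_1,C_2,C_3\in\mathscr{C}$ and showing that $C_1\cap C_2\cap C_3\in\{C_1\cap C_2,C_1\cap C_3,C_2\cap C_3,\emptyset\}$. The organizing idea is a case analysis on the pairwise relationships of the three clusters, using the basic fact that any two subsets $A,B$ of $X$ satisfy exactly one of the following: $A\cap B=\emptyset$ (disjoint); $A\subseteq B$ or $B\subseteq A$ ($\subseteq$-comparable); or $A\cap B\notin\{\emptyset,A,B\}$ (overlapping). Thus each of the three pairs among $C_1,C_2,C_3$ lands in one of these three categories, and I would split the argument accordingly.

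The two routine cases require no use of property (L). First, if some pair, say $C_1,C_2$, is disjoint, then $C_1\cap C_2\cap C_3\subseteq C_1\cap C_2=\emptyset$, so the triple intersection equals $\emptyset$. Second, if some pair is $\subseteq$-comparable, say $C_1\subseteq C_2$ (the other containments and pairs being symmetric), then $C_1\cap C_2=C_1$, and hence $C_1\cap C_2\cap C_3=C_1\cap C_3$, which is one of the admitted pairwise intersections. In both situations the conclusion lies in the required set.

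The only remaining case is that in which all three pairs overlap, and this is exactly where property (L) enters. Here $C_1$ overlaps both $C_2$ and $C_3$, so Def.~\ref{def:L} yields $C_1\cap C_2=C_1\cap C_3$. Since $C_1\cap C_3\subseteq C_3$, this gives $C_1\cap C_2\subseteq C_3$, and therefore $C_1\cap C_2\cap C_3=C_1\cap C_2$, again a pairwise intersection. I expect the main (and rather modest) obstacle to be purely organizational: confirming that the case split is genuinely exhaustive and that the symmetric subcases of the comparability case are all handled, so that the chosen pair and containment match up with one of the four allowed outcomes. It is worth emphasizing that the argument never needs to exclude the existence of three pairwise overlapping clusters (property (N3O)); even when such a triple does occur, a single application of (L) to $C_1$ already collapses the triple intersection onto $C_1\cap C_2$.
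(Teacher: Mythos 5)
Your proof is correct and follows essentially the same route as the paper: dispose of the cases where some pair is disjoint or $\subseteq$-comparable trivially, and in the remaining case apply (L) to $C_1$ overlapping both $C_2$ and $C_3$ to get $C_1\cap C_2=C_1\cap C_3=C_1\cap C_2\cap C_3$. Your write-up merely spells out the exhaustiveness of the case split and the containment $C_1\cap C_2\subseteq C_3$ that the paper leaves implicit.
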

\begin{proof}
  If one of the three sets $C_1$, $C_2$ and $C_3$ is contained in another
  one, or if one of three pairwise intersections is empty, then the
  assertion follows immediately. If $C_1$ overlaps both $C_2$ and $C_3$
  then (L) implies $C_1\cap C_2=C_1\cap C_3=C_1\cap C_2\cap C_3$.
\end{proof}

In order to show that closed clustering systems with property (L) define
level-1 networks, we first demonstrate that one can identify the
non-trivial blocks directly in a clustering system provided it satisfies
(L). We start by introducing subsets of clusters with a given overlap: For
a given clustering system $\mathscr{C}$ and a set $C\in\mathscr{C}$, define
\begin{equation}
  \label{eq:B0(C)}
  \mathcal{B}^0(C) \coloneqq \{C'\in\mathscr{C}\setminus\{C\}
  \mid\ \text{there is a } C''\in\mathscr{C}\setminus\{C\} \text{
    s.t. } C'\cap C''=C\}\,.
\end{equation}
Note that the clusters $C'$ and $C''$ appearing Eq.(\ref{eq:B0(C)}) are
different from $C'\cap C''$ and thus, must overlap. Furthermore, we observe
that $C''\in\mathcal{B}^0(C)$ and $\mathcal{B}^0(C)=\emptyset$ if and only
if $C$ is not the intersection of two overlapping clusters. In particular,
we have
\begin{lemma}
  Let $\mathscr{C}$ be a clustering system. Then, $\mathscr{C}$ is closed
  and $\mathcal{B}^0(C)=\emptyset$ for all $C\in \mathscr{C}$ if and only
  if there is a phylogenetic tree $T$ with $\mathscr{C} = \mathscr{C}_T$.
\end{lemma}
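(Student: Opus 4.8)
The plan is to reduce the statement to a claim about hierarchies. By the $1$-to-$1$ correspondence between hierarchies and phylogenetic trees recorded in Cor.~\ref{cor:hasse-tree}, the existence of a phylogenetic tree $T$ with $\mathscr{C}=\mathscr{C}_T$ is equivalent to $\mathscr{C}$ being a hierarchy. Hence it suffices to prove that $\mathscr{C}$ is closed and satisfies $\mathcal{B}^0(C)=\emptyset$ for all $C\in\mathscr{C}$ if and only if $\mathscr{C}$ is a hierarchy. Throughout I would use Lemma~\ref{lem:simple-closed}, which says that $\mathscr{C}$ is closed exactly when $A,B\in\mathscr{C}$ with $A\cap B\ne\emptyset$ forces $A\cap B\in\mathscr{C}$, together with the remark preceding the statement that $\mathcal{B}^0(C)=\emptyset$ holds precisely when $C$ is not the intersection of two overlapping clusters.

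For the direction from hierarchy to the two conditions, recall that a hierarchy contains no pair of overlapping clusters, so for all $A,B\in\mathscr{C}$ one has $A\cap B\in\{\emptyset,A,B\}$. First, closedness is immediate: if $A\cap B\ne\emptyset$, then $A\cap B\in\{A,B\}\subseteq\mathscr{C}$. Second, $\mathcal{B}^0(C)=\emptyset$ for every $C$, since a nonempty $\mathcal{B}^0(C)$ would provide clusters $C',C''\in\mathscr{C}\setminus\{C\}$ with $C'\cap C''=C$; as $C'\cap C''=C\notin\{C',C''\}$, the clusters $C'$ and $C''$ would overlap, contradicting that $\mathscr{C}$ is a hierarchy.

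For the converse I would argue by contraposition. Assume $\mathscr{C}$ is closed and $\mathcal{B}^0(C)=\emptyset$ for all $C$, and suppose $\mathscr{C}$ is not a hierarchy. Then there exist $C_1,C_2\in\mathscr{C}$ that overlap, i.e.\ $C_1\cap C_2\notin\{\emptyset,C_1,C_2\}$. Put $C\coloneqq C_1\cap C_2$. Since $C\ne\emptyset$ and $\mathscr{C}$ is closed, Lemma~\ref{lem:simple-closed} yields $C\in\mathscr{C}$. Moreover $C\subsetneq C_1$ and $C\subsetneq C_2$ because $C_1,C_2$ overlap, so $C_1,C_2\in\mathscr{C}\setminus\{C\}$ and the identity $C_1\cap C_2=C$ witnesses $C_1\in\mathcal{B}^0(C)$; thus $\mathcal{B}^0(C)\ne\emptyset$, a contradiction. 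Hence $\mathscr{C}$ must be a hierarchy, and invoking Cor.~\ref{cor:hasse-tree} once more produces the desired phylogenetic tree $T$ with $\mathscr{C}=\mathscr{C}_T$.

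The proof is short and poses no real technical obstacle; the only points requiring care are invoking the correct prior results, namely Lemma~\ref{lem:simple-closed} and Cor.~\ref{cor:hasse-tree}, and verifying that two overlapping clusters have an intersection that is a proper nonempty subset of each. This last observation is what guarantees that the intersection is a genuine cluster distinct from both, so that it qualifies for the defining condition of $\mathcal{B}^0$.
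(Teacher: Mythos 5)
Your proof is correct and follows essentially the same route as the paper's: both directions hinge on Lemma~\ref{lem:simple-closed} and on the observation that two overlapping clusters would witness a nonempty $\mathcal{B}^0$ at their intersection, followed by the hierarchy--tree correspondence. The only (harmless) difference is that the paper obtains closedness of $\mathscr{C}_T$ by citing Cor.~\ref{cor:l1->closed-weak-hierarchy}, whereas you derive it elementarily from the hierarchy property via Lemma~\ref{lem:simple-closed}, which is slightly more self-contained.
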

\begin{proof}
  First, let $T$ be phylogenetic tree with $\mathscr{C} = \mathscr{C}_T$.
  By Cor.~\ref{cor:l1->closed-weak-hierarchy}, $\mathscr{C}$ must be
  closed. Moreover, since $T$ is a tree, we have
  $C'\cap C'' \in \{C',C'', \emptyset\}$ and thus,
  $\mathcal{B}^0(C)=\emptyset$ since $C',C''$ must satisfy $C',C''\neq C$.

  Assume now that $\mathscr{C}$ is closed and $\mathcal{B}^0(C)=\emptyset$
  for all $C\in \mathscr{C}$.  Since $\mathscr{C}$ is closed, Lemma
  \ref{lem:simple-closed} implies that $C'\cap C''\in \mathscr{C}$ for all
  $C',C''\in \mathscr{C}$ whenever $C'\cap C''\ne\emptyset$.  Therefore, we
  have $C'\cap C'' \in \{C',C'', \emptyset\}$ for all
  $C',C''\in \mathscr{C}$ since otherwise $C'$ and $C''$ overlap and we
  obtain $C',C'\in \mathcal{B}^0(D)\ne\emptyset$ for
  $D=C'\cap C''\in \mathscr{C}$; a contradiction.  Thus, $\mathscr{C}$ is a
  hierarchy and, by \cite[Thm.~3.5.2]{sem-ste-03a}, there is a 1-to-1
  correspondence between hierarchies and phylogenetic trees $T$ such that
  $\mathscr{C} = \mathscr{C}_T$.
\end{proof}

\begin{lemma}
  \label{lem:B0}
  Let $\mathscr{C}$ be a clustering system satisfying (L), and
  $C\in\mathscr{C}$ with $\mathcal{B}^0(C)\ne\emptyset$.  Then every
  cluster $D\in \mathscr{C} \setminus \mathcal{B}^0(C)$ satisfies ones the
  the following alternatives: (i) $D\subseteq C$, (ii) $D\cap C=\emptyset$,
  or (iii) $C'\subsetneq D$ for all $C'\in\mathcal{B}^0(C)$.
\end{lemma}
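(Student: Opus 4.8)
The plan is to exploit a strong rigidity property that (L) forces on the members of $\mathcal{B}^0(C)$. First I would fix an arbitrary $C'\in\mathcal{B}^0(C)$ together with a witness $C''\in\mathscr{C}\setminus\{C\}$ satisfying $C'\cap C''=C$ (such a pair exists by Eq.~\eqref{eq:B0(C)}). Since $C\ne\emptyset$ and $C',C''\ne C$, the equality $C'\cap C''=C$ forces $C'$ and $C''$ to overlap and yields $C\subsetneq C'$ and $C\subsetneq C''$. The key step is then to record, via (L) (equivalently Cor.~\ref{cor:L-more}), that \emph{any} cluster $E$ overlapping $C'$ satisfies $C'\cap E=C'\cap C''=C$. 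Thus every member of $\mathcal{B}^0(C)$ has the rigidity property that all of its proper overlaps meet it in exactly $C$.

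Armed with this, I would take an arbitrary $D\in\mathscr{C}\setminus\mathcal{B}^0(C)$ (the case $D=C$ is trivial as it gives (i)) and first prove the central observation that $D$ cannot overlap any $C'\in\mathcal{B}^0(C)$: were it to, the rigidity property would give $D\cap C'=C$ with $D,C'\in\mathscr{C}\setminus\{C\}$, which by definition of $\mathcal{B}^0(C)$ would place $D\in\mathcal{B}^0(C)$, a contradiction. After this I would split on the relationship between $D$ and $C$. If $D\cap C=\emptyset$ we land in (ii), and if $D\subseteq C$ we land in (i), so only two cases remain.

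For the case $C\subsetneq D$ I aim to establish (iii), i.e.\ $C'\subsetneq D$ for every $C'\in\mathcal{B}^0(C)$. Fixing such a $C'$ with its witness $C''$, from $C\subsetneq C'$ and $C\subsetneq D$ we get $\emptyset\ne C\subseteq C'\cap D$, and since $D$ does not overlap $C'$ (by the observation) this forces $D\subseteq C'$ or $C'\subseteq D$; moreover $D\ne C'$. The possibility $D\subsetneq C'$ can be ruled out because it would give $D\cap C''\subseteq C'\cap C''=C$ while $C\subseteq D\cap C''$, hence $D\cap C''=C$, again putting $D$ into $\mathcal{B}^0(C)$. Therefore $C'\subsetneq D$, and since $C'$ was arbitrary, (iii) holds.

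I expect the genuine obstacle to be the remaining case in which $D$ overlaps $C$: this configuration matches none of the three alternatives, so the real content is to show it cannot occur for $D\notin\mathcal{B}^0(C)$. Here I would use the single fixed pair $C^{\bullet},C^{\circ}\in\mathcal{B}^0(C)$ with $C^{\bullet}\cap C^{\circ}=C$; since $C\subseteq C^{\bullet}$ and $C\cap D\ne\emptyset$, we have $C^{\bullet}\cap D\ne\emptyset$, so the observation forces $D\subseteq C^{\bullet}$ (the alternative $C^{\bullet}\subseteq D$ would give $C\subseteq D$, impossible under overlap), and symmetrically $D\subseteq C^{\circ}$. Then $D\subseteq C^{\bullet}\cap C^{\circ}=C$ contradicts that $D$ overlaps $C$. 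Recognising that the overlap case must be proved vacuous, rather than mapped to one of (i)--(iii), is the one conceptual hurdle; everything else is bookkeeping driven by the rigidity property.
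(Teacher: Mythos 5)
Your proof is correct; every step checks out, including both invocations of the definition of $\mathcal{B}^0(C)$ that produce the contradictions. It is, however, organized differently from the paper's proof, which argues by contraposition: assuming that (i), (ii) and (iii) all fail, the paper fixes some $C'\in\mathcal{B}^0(C)$ with $C'\setminus D\ne\emptyset$ together with its witness $C''$, observes that $D\cap C'\ne\emptyset$, and then shows that either $D$ and $C'$ overlap (whence (L) gives $D\cap C'=C'\cap C''=C$ and thus $D\in\mathcal{B}^0(C)$) or $D\subsetneq C'$, in which case $D$ is shown to overlap $C''$ so that a second application of (L) again forces $D\in\mathcal{B}^0(C)$. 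That argument is uniform: it never distinguishes whether $C\subsetneq D$ or $D$ overlaps $C$, using only $D\cap C\ne\emptyset$ and $D\setminus C\ne\emptyset$. Your version replaces the contraposition by a direct trichotomy on the position of $D$ relative to $C$, after isolating the rigidity/no-overlap observation as a reusable tool. This is slightly longer, but it buys two things. First, in the case $C\subsetneq D$ you rule out $D\subsetneq C'$ by the direct computation $D\cap C''=C$ (pure set bookkeeping plus the definition of $\mathcal{B}^0(C)$), avoiding the paper's second appeal to (L). Second, your vacuity argument for the configuration in which $D$ overlaps $C$ — namely $D\subseteq C^{\bullet}\cap C^{\circ}=C$ — in effect proves Cor.~\ref{cor:B0-C-no-overlap} (that $C$ overlaps no cluster at all) inline, whereas the paper derives that corollary afterwards as a consequence of Lemma~\ref{lem:B0}; so your decomposition makes explicit a structural fact that the paper's uniform chain leaves implicit.
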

\begin{proof}
  Consider a cluster $D\in\mathscr{C}$ with $D\notin\mathcal{B}^0(C)$.  By
  contraposition, assume that none of the alternatives (i), (ii) and (iii)
  are satisfied.  Hence, suppose $D\not\subseteq C$, i.e.,
  $D\setminus C\ne\emptyset$, and $D\cap C\ne \emptyset$, and that there is
  some set $C'\in\mathcal{B}^0(C)$ such that $C'\setminus D\ne
  \emptyset$. The case $C'=D$ cannot occur since
  $D\notin\mathcal{B}^0(C)$. By definition, there is
  $C''\in\mathcal{B}^0(C)$ such that $C'$ and $C''$ overlap with
  $C'\cap C''=C$.  In particular, we have $C\subseteq C'$ and
  $C\subseteq C''$, and thus $D\cap C'\ne \emptyset$ and
  $D\cap C''\ne \emptyset$.  From $D\cap C'\ne \emptyset$ and
  $C'\setminus D\ne \emptyset$ we infer that either $D\subseteq C'$ or $C'$
  and $D$ overlap.  If $C'$ and $D$ overlap, then (L) implies
  $C'\cap D=C'\cap C''=C$, and thus $D\in\mathcal{B}^0(C)$; a
  contradiction. Thus we have $D\subseteq C'$ and hence
  $C''\setminus D\ne \emptyset$.  Moreover, $D\subseteq C'$,
  $C'\cap C''=C$, and $D\setminus C\ne\emptyset$ imply that
  $D\setminus C''\ne\emptyset$.  Together with $D\cap C''\ne \emptyset$, we
  obtain that $C''$ overlaps with both $C'$ and $D$. Now (L) implies
  $C''\cap D=C''\cap C'=C$, and thus $D\in\mathcal{B}^0(C)$.
\end{proof}

\begin{corollary}
  \label{cor:B0-C-no-overlap}
  Let $\mathscr{C}$ be a clustering system satisfying (L), and
  $C\in\mathscr{C}$ with $\mathcal{B}^0(C)\ne\emptyset$.  Then $C$ does not
  overlap with any cluster in $\mathscr{C}$.
\end{corollary}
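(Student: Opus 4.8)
The plan is to argue by contradiction: suppose there is a cluster $D\in\mathscr{C}$ that overlaps $C$, so that $C\cap D\notin\{\emptyset,C,D\}$, and derive a contradiction by feeding $D$ into Lemma~\ref{lem:B0}. The whole argument hinges on a single structural fact recorded just after the definition of $\mathcal{B}^0(C)$, namely that every member of $\mathcal{B}^0(C)$ properly contains $C$: indeed, if $C'\in\mathcal{B}^0(C)$ then there is an overlapping $C''\in\mathscr{C}\setminus\{C\}$ with $C'\cap C''=C$, and since $C',C''$ overlap this forces $C\subsetneq C'$. I would establish this observation first, as it is used twice below.

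Next I would show that $D\notin\mathcal{B}^0(C)$. If $D$ were in $\mathcal{B}^0(C)$, the observation above would give $C\subsetneq D$, whence $C\cap D=C$; but this contradicts the assumption that $C$ and $D$ overlap. Hence $D\in\mathscr{C}\setminus\mathcal{B}^0(C)$, so Lemma~\ref{lem:B0} applies and one of its three alternatives must hold for $D$.

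Finally I would rule out each alternative in turn. If (i) $D\subseteq C$, then $C\cap D=D$; if (ii) $D\cap C=\emptyset$, then $C\cap D=\emptyset$; and if (iii) $C'\subsetneq D$ for every $C'\in\mathcal{B}^0(C)$, then choosing any $C'\in\mathcal{B}^0(C)$ (which exists precisely because $\mathcal{B}^0(C)\ne\emptyset$) together with $C\subsetneq C'$ yields $C\subsetneq D$ and hence $C\cap D=C$. In all three cases $C\cap D\in\{\emptyset,C,D\}$, contradicting that $C$ and $D$ overlap. Since $D$ was an arbitrary cluster, $C$ overlaps no cluster of $\mathscr{C}$, as claimed.

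The argument is short and I do not anticipate a genuine obstacle; the only point requiring slight care is alternative (iii), where one must explicitly invoke the non-emptiness of $\mathcal{B}^0(C)$ to produce a witness $C'$ and then chain $C\subsetneq C'\subsetneq D$. A fully self-contained alternative—bypassing Lemma~\ref{lem:B0} entirely and instead fixing the overlapping pair $C',C''$ with $C'\cap C''=C$ and applying property (L) directly to the possible configurations of $D$ relative to $C'$ and $C''$—is also available, but the route through Lemma~\ref{lem:B0} is cleaner and is the one I would adopt.
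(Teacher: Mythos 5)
Your proposal is correct and follows essentially the same route as the paper's proof: both split on whether the cluster $D$ lies in $\mathcal{B}^0(C)$, handle the membership case via the observation that every element of $\mathcal{B}^0(C)$ properly contains $C$, and dispatch the non-membership case by checking the three alternatives of Lemma~\ref{lem:B0}. The only difference is cosmetic—you phrase it as a proof by contradiction while the paper gives a direct case analysis—so there is nothing substantive to distinguish the two.
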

\begin{proof}
  If $D\in \mathscr{C} \setminus \mathcal{B}^0(C)$, then Lemma~\ref{lem:B0}
  implies that (i) $D\subseteq C$, (ii) $D\cap C=\emptyset$, or (iii)
  $C\subsetneq C'\subsetneq D$ for all $C'\in\mathcal{B}^0(C)\ne
  \emptyset$. Hence, $D$ does not overlap with $C$ in any of the three
  cases.  If, on the other hand, $D\in \mathcal{B}^0(C)$, then $C\subsetneq
  D$, and thus, $D$ and $C$ also do not overlap.
\end{proof}

Alternative (iii) in Lemma~\ref{lem:B0} can be expressed equivalently with
the help of the set
\begin{equation}
  \label{eq:U(C)}
  U(C) \coloneqq \bigcup_{C'\in\mathcal{B}^0(C)} C' \,.
\end{equation}
\begin{lemma}
  \label{lem:B(C)subD}
  Let $\mathscr{C}$ be a clustering system satisfying (L) and let
  $C\in\mathscr{C}$ such that $\mathcal{B}^0(C)\ne\emptyset$.  Then,
  $D\in\mathscr{C}$ satisfies $C'\subsetneq D$ for all
  $C'\in\mathcal{B}^0(C)$ if and only if $U(C)\subseteq D$.
\end{lemma}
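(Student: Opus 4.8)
The plan is to establish both directions directly from the definitions, observing that property (L) enters this particular statement only through the overlap structure already built into $\mathcal{B}^0(C)$.

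First I would handle the forward implication, which is immediate: if $C'\subsetneq D$ for every $C'\in\mathcal{B}^0(C)$, then each such $C'$ satisfies $C'\subseteq D$, and since $U(C)=\bigcup_{C'\in\mathcal{B}^0(C)}C'$ is by definition the union of exactly these clusters, $U(C)\subseteq D$ follows at once.

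For the converse, I would assume $U(C)\subseteq D$. Then for each $C'\in\mathcal{B}^0(C)$ we have $C'\subseteq U(C)\subseteq D$ directly from the definition of $U(C)$, so the inclusion $C'\subseteq D$ is free; the only work lies in upgrading this to a \emph{strict} inclusion $C'\subsetneq D$. To do this I would exploit that every member of $\mathcal{B}^0(C)$ is paired with an overlapping cluster: by the definition in \eqref{eq:B0(C)}, for each $C'\in\mathcal{B}^0(C)$ there is $C''\in\mathcal{B}^0(C)$ with $C'\cap C''=C$, and, as remarked after \eqref{eq:B0(C)}, the clusters $C'$ and $C''$ necessarily overlap.

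The one genuinely delicate point—though it is mild—is ruling out $C'=D$. Suppose, toward a contradiction, that $C'=D$ for some $C'\in\mathcal{B}^0(C)$. Its overlapping partner $C''$ also belongs to $\mathcal{B}^0(C)$, so $C''\subseteq U(C)\subseteq D=C'$; but then $C'\cap C''=C''$, contradicting that $C'$ and $C''$ overlap. Hence $C'\subsetneq D$ for all $C'\in\mathcal{B}^0(C)$, which finishes the proof. I anticipate this strictness argument to be the main (and really quite small) obstacle, and I note in passing that the hypothesis (L) is not actually invoked here—only the definitional fact that each cluster in $\mathcal{B}^0(C)$ sits in an overlapping pair whose intersection equals $C$.
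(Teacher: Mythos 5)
Your proof is correct and follows essentially the same route as the paper's: the forward direction is definitional, and strictness in the converse is obtained from the overlapping partner $C''\in\mathcal{B}^0(C)$ of each $C'$, which satisfies $C''\subseteq U(C)\subseteq D$ and thus forces $C'\neq D$ (the paper phrases this directly—$D$ contains a set overlapping $C'$—rather than by contradiction). Your side remark that (L) is never actually invoked is also accurate; the paper's own argument likewise uses only the definitional fact that each member of $\mathcal{B}^0(C)$ has an overlapping partner in $\mathcal{B}^0(C)$.
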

\begin{proof}
  The ``only if statement'' follows directly from the definition of $U(C)$.
  To see the ``if'' direction, note first that $U(C)\subseteq D$ implies
  $D\notin\mathcal{B}^0(C)$ since otherwise there would be a
  $C''\in \mathcal{B}^0(C)$ overlapping $D$, which is impossible because
  $C''\subseteq U(C)\subseteq D$. Furthermore, we have $C'\subsetneq D$ for
  all $C'\in \mathcal{B}^0(C)$ since $U(C)$, and thus also $D$, contains
  at least one set overlapping $C'$.
\end{proof}

As a consequence we can use the condition $U(C)\subseteq D$ instead of
alternative (iii) in Lemma~\ref{lem:B0}. If the clustering system
$\mathscr{C}$ is closed, $U(C)\subseteq D$ is equivalent to requiring
$\cl(U(C))\subseteq \cl(D)=D$. Therefore we define
\begin{equation}
  \label{eq:B(C)}
  \Top(C)\coloneqq
  \cl\left(U(C)\right)
\end{equation}
Alternative (iii) in Lemma~\ref{lem:B0} can now be expressed as
$\Top(C)\subseteq D$.

\begin{figure}
  \begin{center}
    \includegraphics[width=0.55\textwidth]{./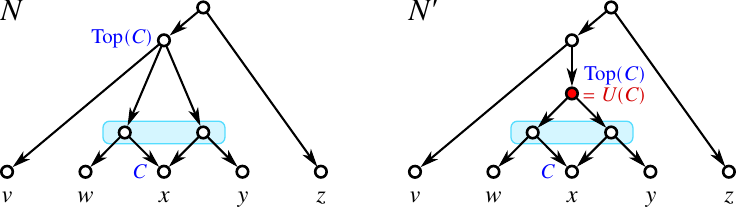}
  \end{center}
  \caption{$U(C)$ is not necessarily a cluster. In both networks
    $\mathcal{B}(C)$ for $C=\CC\{x\}=\{x\}$ is the only non-trivial block.
    The vertices with overlapping clusters, i.e., the set
    $\mathcal{B}^0(C)$ is highlighted in cyan. In both networks
    $U(C)=\{w,x,y\}$. L.h.s.: $U(C)\subsetneq \Top(C)$. The additional
    red vertex r.h.s., ensures that $U(C)=\Top(C)$ is a cluster.}
  \label{fig:UC}
\end{figure}

In general, $U(C)\notin\mathscr{C}$, see Fig.~\ref{fig:UC}. Nevertheless,
$\Top(C)\in\mathscr{C}$ for all $C$ of a closed clustering system
$\mathscr{C}$ that satisfy $\mathcal{B}^0(C)$.
\begin{lemma}\label{lem:BCinClusterSystem}
  Let $\mathscr{C}$ be a closed clustering system. Then,
  $\Top(C)\in\mathscr{C}$ and $C\subseteq \Top(C)$ for all
  $C\in\mathscr{C}$ with $\mathcal{B}^0(C)\neq \emptyset$.
\end{lemma}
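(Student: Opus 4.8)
The plan is to prove the two assertions separately, treating the inclusion $C\subseteq\Top(C)$ first, since it simultaneously supplies the non-emptiness that part one will require. Throughout I would rely only on the two elementary properties of the closure operator recorded just after Equation~\eqref{eq:closure}, namely that $\cl$ is \emph{enlarging} ($A\subseteq\cl(A)$) and \emph{idempotent} ($\cl(\cl(A))=\cl(A)$), together with the fixed-point characterization of closedness in Definition~\ref{def:closed}.

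First I would establish $C\subseteq\Top(C)$. Since $\mathcal{B}^0(C)\neq\emptyset$, fix any $C'\in\mathcal{B}^0(C)$. By Equation~\eqref{eq:B0(C)} there is a cluster $C''\in\mathscr{C}\setminus\{C\}$ with $C'\cap C''=C$, whence $C=C'\cap C''\subseteq C'$. By the definition of $U(C)$ in Equation~\eqref{eq:U(C)} we have $C'\subseteq U(C)$, and because $\cl$ is enlarging, $U(C)\subseteq\cl(U(C))=\Top(C)$. Chaining these inclusions gives $C\subseteq C'\subseteq U(C)\subseteq\Top(C)$, as desired.

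Next I would show $\Top(C)\in\mathscr{C}$. The key point is that $\Top(C)=\cl(U(C))$ is by construction a fixed point of $\cl$: idempotence yields $\cl(\Top(C))=\cl(\cl(U(C)))=\cl(U(C))=\Top(C)$. Moreover $\Top(C)$ is non-empty, since the inclusion established above gives $\emptyset\neq C\subseteq\Top(C)$ (recall $\emptyset\notin\mathscr{C}$, so $C\neq\emptyset$). As $\mathscr{C}$ is closed, Definition~\ref{def:closed} applied to the non-empty set $A\coloneqq\Top(C)$, which satisfies $\cl(A)=A$, forces $A\in\mathscr{C}$; that is, $\Top(C)\in\mathscr{C}$.

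I do not anticipate a genuine obstacle, as the statement is a formal consequence of idempotence of $\cl$ combined with the characterization of closedness through fixed points. The only step demanding a moment's care is verifying that $\Top(C)$ is non-empty \emph{before} invoking Definition~\ref{def:closed}, whose quantifier ranges only over non-empty sets; this is precisely what the first part delivers, which is why I would carry it out first. Note that closedness is used solely via the implication ``$\cl(A)=A\Rightarrow A\in\mathscr{C}$'', so the equivalent intersection-closure form of Lemma~\ref{lem:simple-closed} is not needed here.
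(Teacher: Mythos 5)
Your proof is correct and follows essentially the same route as the paper's: both arguments rest on $\cl$ being enlarging (to get $U(C)\subseteq\Top(C)$), idempotent (to get $\cl(\Top(C))=\Top(C)$), and the fixed-point characterization of closedness to conclude $\Top(C)\in\mathscr{C}$. The only differences are cosmetic — you derive non-emptiness of $\Top(C)$ from $\emptyset\neq C\subseteq\Top(C)$ while the paper uses $\emptyset\neq U(C)\subseteq\Top(C)$, and you spell out the inclusion $C\subseteq U(C)$ that the paper leaves as "straightforward to verify."
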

\begin{proof}
  Let $\mathscr{C}$ be a closed clustering system on $X$.  Since
  $\mathscr{C}$ is closed we have, by definition, $\cl(A)=A$ if and only if
  $A\in \mathscr{C}$ for all non-empty sets $A\subseteq 2^X$. Let
  $C\in\mathscr{C}$ with $\mathcal{B}^0(C)\neq \emptyset$.  Hence,
  $U(C)\neq \emptyset$. Since $\cl$ is enlarging, we have
  $U(C)\subseteq \cl(U(C))=\Top(C)$ and, therefore, $\Top(C)\neq \emptyset$. In
  particular, $\Top(C)\subseteq 2^X$. Since $\cl$ is idempotent, we have
  $\cl(\Top(C)) = \cl(\cl(U(C))) = \cl(U(C)) = \Top(C)$.  Taking the latter
  arguments together, we obtain $\Top(C)\in \mathscr{C}$.  Moreover, if
  $\mathcal{B}^0(C)\neq \emptyset$, then it is straightforward to verify
  that $C\subseteq U(C)\subseteq \Top(C)$.
\end{proof}

Although $U(C)\notin\mathscr{C}$, it has an interesting property:
\begin{lemma}
  \label{lem:U-nix-overlap}
  Let $\mathscr{C}$ be a clustering system satisfying (L), and
  $C\in\mathscr{C}$ with $\mathcal{B}^0(C)\ne\emptyset$. Then $U(C)$ does
  not overlap any cluster $D\in\mathscr{C}$.
\end{lemma}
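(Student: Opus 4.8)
The plan is to partition the clusters $D\in\mathscr{C}$ according to their relationship with $\mathcal{B}^0(C)$ and to check, in each case, that $U(C)\cap D\in\{\emptyset, U(C), D\}$. First I would record two basic facts: every $C'\in\mathcal{B}^0(C)$ contains $C$ (because $C'\cap C''=C$ forces $C\subseteq C'$), so that $C\subseteq U(C)$; and $C\ne\emptyset$, since a clustering system contains only non-empty sets. These will be used repeatedly to derive contradictions from $D\cap C=\emptyset$.

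The easy cases come first. If $D\in\mathcal{B}^0(C)$, then $D\subseteq U(C)$ directly from the definition of $U(C)$ as a union, so $U(C)\cap D=D$ and there is no overlap. For $D\in\mathscr{C}\setminus\mathcal{B}^0(C)$ I would invoke Lemma~\ref{lem:B0}, which guarantees one of its three alternatives. In alternative (i) we have $D\subseteq C\subseteq U(C)$, hence $U(C)\cap D=D$; in alternative (iii), Lemma~\ref{lem:B(C)subD} converts ``$C'\subsetneq D$ for all $C'$'' into $U(C)\subseteq D$, hence $U(C)\cap D=U(C)$. Neither is an overlap.

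The step I expect to be the main obstacle is alternative (ii), $D\cap C=\emptyset$, which I would handle cluster by cluster. Fix $C'\in\mathcal{B}^0(C)$ and choose a partner $C''\in\mathcal{B}^0(C)$ with $C'\cap C''=C$; as noted after the definition of $\mathcal{B}^0(C)$, the clusters $C'$ and $C''$ overlap. I then rule out every relation between $C'$ and $D$ except disjointness and $D\subseteq C'$. The relation $C'\subseteq D$ would give $C\subseteq C'\subseteq D$, contradicting $D\cap C=\emptyset$ together with $C\ne\emptyset$. And if $C'$ and $D$ overlapped, then property (L), applied with $C'$ overlapping both $C''$ and $D$, would yield $C'\cap D=C'\cap C''=C$, whence $C\subseteq D$, again contradicting $D\cap C=\emptyset$. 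Therefore $C'\cap D\in\{\emptyset, D\}$ for every $C'\in\mathcal{B}^0(C)$.

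Finally I would assemble the pointwise conclusions. Since $U(C)\cap D=\bigcup_{C'\in\mathcal{B}^0(C)}(C'\cap D)$ and each summand lies in $\{\emptyset, D\}$, the union is either $\emptyset$ (if all summands vanish) or $D$ (if at least one summand equals $D$); in particular $U(C)\cap D\in\{\emptyset, D\}$, so $U(C)$ does not overlap $D$ in case (ii) either. As the three alternatives of Lemma~\ref{lem:B0} together with the case $D\in\mathcal{B}^0(C)$ exhaust all of $\mathscr{C}$, this shows that $U(C)$ overlaps no cluster, completing the proof.
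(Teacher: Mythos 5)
Your proof is correct and follows essentially the same route as the paper: both split on $D\in\mathcal{B}^0(C)$ versus the three alternatives of Lemma~\ref{lem:B0}, and in the critical case $D\cap C=\emptyset$ both use property (L) to exclude an overlap between $D$ and any $C'\in\mathcal{B}^0(C)$, leaving only disjointness or $D\subseteq C'$. The only (cosmetic) difference is that you aggregate via $U(C)\cap D=\bigcup_{C'}(C'\cap D)$ with each term in $\{\emptyset,D\}$, while the paper subcases on whether some $C'$ meets $D$ and concludes $D\subseteq C'\subseteq U(C)$ directly.
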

\begin{proof}
  Let $D\in\mathscr{C}$.  If $D\in\mathcal{B}^0(C)$, then
  $D\subseteq U\coloneqq U(C)$ by construction.  If
  $D\notin\mathcal{B}^0(C)$, we consider the three alternatives of
  Lemma~\ref{lem:B0}.  In case (i) we have $D\subseteq C\subseteq U$ and in
  case (iii) we have $U\subseteq \Top(C)\subseteq D$. In case (ii) we have
  $D\cap C=\emptyset$. If $D\cap C'=\emptyset$ for all
  $C'\in\mathcal{B}^0(C)$, then $D\cap U=\emptyset$. Otherwise
  $C'\cap D\ne\emptyset$ for some $C'\in\mathcal{B}^0(C)$. Since
  $C\subseteq C'$ and $D\cap C=\emptyset$, we have $C'\not\subseteq
  D$. However, $C'$ and $D$ cannot overlap since in this case (L) implies
  $C'\cap D=C$ and thus, $D\in{B}^0(C)$, a contradiction. Therefore,
  $D\subseteq C'\setminus C$, which implies $D\subsetneq U$.  Thus $D$ does
  not overlap $U$.
\end{proof}

So far, $\Top(C)$ is defined in terms of non-empty sets $\mathcal{B}^0(C)$.
We extend this notion to all clusters of a closed clustering system as
follows.  Let $\mathscr{C}$ be a closed clustering system.  We set
$\Top(X)\coloneqq X$. For the remaining clusters $C\in\mathscr{C}$, i.e.,
those that satisfy $\mathcal{B}^0(C)=\emptyset$ and $C\ne X$, we define
$\Top(C)$ as the unique inclusion-minimal cluster $C'\ne C$ that contains $C$.
To see that $\Top(C)$ is well defined in this case, recall first that
$X\in\mathscr{C}$ and hence there is a cluster properly containing
$C\ne X$. For uniqueness, suppose there are two distinct inclusion-minimal
clusters $C',C''\ne C$ that contain $C$. Clearly, these two supersets
overlap with $C\subseteq C^*\coloneqq C'\cap C''$. If $C= C^*$, then
$\mathcal{B}^0(C)\ne\emptyset$; a contradiction.  If $C\subsetneq C^*$, we
have $C^*\in\mathscr{C}$ since $\mathscr{C}$ is closed and thus
$C\subsetneq C^*\subsetneq C,C'$ contradicts inclusion-minimality of $C$
and $C'$. Now set
\begin{equation}
  \label{eq:calB(C)}
  \mathcal{B}(C)\coloneqq \mathcal{B}^0(C)\cup\{C,\Top(C)\} \text{ for all }
  C\in\mathscr{C}.
\end{equation}

\begin{corollary}\label{cor:BC-in-C}
  If $\mathscr{C}$ is a closed clustering system, then
  $\mathcal{B}(C)\subseteq \mathscr{C}$ for all $C\in\mathscr{C}$.
\end{corollary}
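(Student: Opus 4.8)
The plan is to reduce everything to the single claim that $\Top(C)\in\mathscr{C}$ for every $C\in\mathscr{C}$. Indeed, by Equ.~\eqref{eq:calB(C)} we have $\mathcal{B}(C)=\mathcal{B}^0(C)\cup\{C,\Top(C)\}$, and the inclusion $\mathcal{B}^0(C)\subseteq\mathscr{C}$ holds directly from the defining Equ.~\eqref{eq:B0(C)}, since every element of $\mathcal{B}^0(C)$ is selected from $\mathscr{C}\setminus\{C\}$. Together with $C\in\mathscr{C}$ by assumption, it therefore suffices to verify $\Top(C)\in\mathscr{C}$, and then $\mathcal{B}(C)\subseteq\mathscr{C}$ follows at once.

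To establish $\Top(C)\in\mathscr{C}$, I would split into the three cases that mirror the case distinction in the definition of $\Top$. In the first case, $\mathcal{B}^0(C)\ne\emptyset$; here $\Top(C)=\cl(U(C))$ and the desired membership $\Top(C)\in\mathscr{C}$ is exactly the content of Lemma~\ref{lem:BCinClusterSystem}, so nothing further is needed. In the second case, $C=X$; here $\Top(X)\coloneqq X$ by definition, and $X\in\mathscr{C}$ holds by condition~(ii) of Def.~\ref{def:Csys}. In the third and final case, $\mathcal{B}^0(C)=\emptyset$ and $C\ne X$; here $\Top(C)$ is, by definition, the (well-defined, as argued just before Equ.~\eqref{eq:calB(C)}) unique inclusion-minimal cluster $C'\ne C$ containing $C$, so $\Top(C)\in\mathscr{C}$ is immediate.

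Since these three cases exhaust all possibilities for a cluster $C\in\mathscr{C}$ of a closed clustering system, we obtain $\Top(C)\in\mathscr{C}$ in every case, which completes the argument. There is essentially no obstacle to overcome: the statement is a bookkeeping corollary that packages Lemma~\ref{lem:BCinClusterSystem} together with the two trivial branches of the definition of $\Top$. The only point requiring a moment's care is to make sure the case analysis is aligned precisely with how $\Top(C)$ was defined, in particular that the branch $C=X$ is treated by the separate stipulation $\Top(X)\coloneqq X$ rather than by the ``inclusion-minimal superset'' rule (which would fail to produce a cluster for $C=X$), and that closedness of $\mathscr{C}$ is genuinely used only in the first branch via Lemma~\ref{lem:BCinClusterSystem}.
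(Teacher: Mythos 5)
Your proposal is correct and follows essentially the same route as the paper's proof: a case split aligned with the definition of $\Top(C)$, invoking Lemma~\ref{lem:BCinClusterSystem} when $\mathcal{B}^0(C)\ne\emptyset$ and falling back on the definition of $\Top(C)$ (together with $\mathcal{B}^0(C)\subseteq\mathscr{C}$ and $C\in\mathscr{C}$) otherwise. The only difference is cosmetic: the paper folds $C=X$ into the $\mathcal{B}^0(C)=\emptyset$ branch, whereas you treat it separately, which is if anything slightly more careful about how $\Top(X)$ is stipulated.
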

\begin{proof}
  If $\mathcal{B}^0(C) = \emptyset$, we have by construction,
  $\mathcal{B}(C)\coloneqq \emptyset \cup\{C,C'\} = \{C,C'\}$, where $C'$ is
  the unique inclusion-minimal element in $\mathscr{C}$ that contains $C$.
  Hence, $\mathcal{B}(C)\subseteq \mathscr{C}$. The latter covers in
  particular also the case $C=X$.  Otherwise, if
  $\mathcal{B}^0(C) \neq \emptyset$, then Lemma \ref{lem:BCinClusterSystem}
  implies that $\Top(C)\in \mathscr{C}$.  Moreover, by definition,
  $\mathcal{B}^0(C)\subseteq \mathscr{C}$.  Taken the latter together with
  $C\in\mathscr{C}$ implies $\mathcal{B}(C)\subseteq \mathscr{C}$.
\end{proof}

Lemma~\ref{lem:B0} then implies the following characterization of
$\mathcal{B}(C)$:
\begin{corollary}
  \label{cor:sandwich}
  Let $\mathscr{C}$ be a closed clustering system satisfying (L). Then, for
  all $C,D\in\mathscr{C}$ it holds that $D\in\mathcal{B}(C)$ if and only if
  $C\subseteq D\subseteq \Top(C)$.
\end{corollary}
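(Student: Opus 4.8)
The plan is to prove the two implications of the equivalence separately, organizing everything around the case distinction that already governs the definition of $\Top(C)$, namely whether $\mathcal{B}^0(C)=\emptyset$ or not. The forward implication can be handled uniformly, whereas the reverse implication forces the two definitions of $\Top(C)$ to be treated differently; I expect the subcase $\mathcal{B}^0(C)=\emptyset$ to be the main obstacle. Throughout I would freely use that $C\subseteq\Top(C)$ in all cases: for $\mathcal{B}^0(C)\neq\emptyset$ this is Lemma~\ref{lem:BCinClusterSystem}, for $C=X$ it is immediate since $\Top(X)=X$, and for $C\neq X$ with $\mathcal{B}^0(C)=\emptyset$ it is built into the definition of $\Top(C)$ as a cluster containing $C$.

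For the forward direction I would take $D\in\mathcal{B}(C)=\mathcal{B}^0(C)\cup\{C,\Top(C)\}$ and check the three possibilities. If $D=C$ or $D=\Top(C)$, then $C\subseteq D\subseteq\Top(C)$ follows at once from $C\subseteq\Top(C)$. If $D\in\mathcal{B}^0(C)$ (so in particular $\mathcal{B}^0(C)\neq\emptyset$ and hence $\Top(C)=\cl(U(C))$), then by definition there is a $C''$ with $D\cap C''=C$, whence $C\subseteq D$; and $D\subseteq U(C)\subseteq\cl(U(C))=\Top(C)$ because $\cl$ is enlarging. This settles the ``only if'' part.

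For the reverse direction assume $D\in\mathscr{C}$ with $C\subseteq D\subseteq\Top(C)$, and first suppose $\mathcal{B}^0(C)\neq\emptyset$. If $D\in\mathcal{B}^0(C)$ we are done, so assume $D\notin\mathcal{B}^0(C)$ and invoke Lemma~\ref{lem:B0}, which forces one of its three alternatives. Alternative (ii), $D\cap C=\emptyset$, is impossible since $C\subseteq D$ and $C\neq\emptyset$; alternative (i), $D\subseteq C$, together with $C\subseteq D$ yields $D=C\in\mathcal{B}(C)$; and alternative (iii), $C'\subsetneq D$ for all $C'\in\mathcal{B}^0(C)$, is equivalent to $U(C)\subseteq D$ by Lemma~\ref{lem:B(C)subD}, so closedness ($D=\cl(D)$) gives $\Top(C)=\cl(U(C))\subseteq\cl(D)=D$, and with $D\subseteq\Top(C)$ we conclude $D=\Top(C)\in\mathcal{B}(C)$.

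The remaining, and hardest, case is $\mathcal{B}^0(C)=\emptyset$, where $\mathcal{B}(C)=\{C,\Top(C)\}$ and $\Top(C)$ is the unique inclusion-minimal cluster distinct from $C$ that contains $C$ (with $\Top(X)=X$, which makes the statement trivial for $C=X$). Here I would first establish the auxiliary claim that every cluster $D$ with $C\subsetneq D$ satisfies $\Top(C)\subseteq D$. To do so, set $E\coloneqq\Top(C)\cap D$; since $C\subseteq E$ and $C\neq\emptyset$, closedness gives $E\in\mathscr{C}$ by Lemma~\ref{lem:simple-closed}. If $\Top(C)$ and $D$ were to overlap, then $E=C$ would put $\Top(C),D$ into $\mathcal{B}^0(C)$ (a contradiction), while $C\subsetneq E\subsetneq\Top(C)$ would contradict the inclusion-minimality of $\Top(C)$; hence $\Top(C)$ and $D$ do not overlap. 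As $E\neq\emptyset$, either $\Top(C)\subseteq D$ (done) or $D\subseteq\Top(C)$, and in the latter case $C\subsetneq D\subseteq\Top(C)$ again forces $D=\Top(C)$ by minimality. Granting the claim, if $C\subseteq D\subseteq\Top(C)$ with $D\neq C$, then $C\subsetneq D$ gives $\Top(C)\subseteq D$, so $D=\Top(C)$; thus $D\in\{C,\Top(C)\}=\mathcal{B}(C)$, which completes the argument.
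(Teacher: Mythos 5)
Your proof is correct, and its core coincides with the paper's: the forward direction is the same three-way membership check (with $C\subseteq\Top(C)$ from Lemma~\ref{lem:BCinClusterSystem} resp.\ the definition of $\Top(C)$), and the reverse direction in the case $\mathcal{B}^0(C)\neq\emptyset$ uses exactly the paper's ingredients, namely the alternatives of Lemma~\ref{lem:B0}, the translation of alternative (iii) into $U(C)\subseteq D$ via Lemma~\ref{lem:B(C)subD}, and closedness to pass from $U(C)\subseteq D$ to $\Top(C)=\cl(U(C))\subseteq\cl(D)=D$. The only real divergence is the case $\mathcal{B}^0(C)=\emptyset$, which you anticipated as the main obstacle and settled with a closure-based auxiliary claim (intersecting $\Top(C)$ with $D$, invoking Lemma~\ref{lem:simple-closed}, and excluding overlaps). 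The paper needs no separate case here: it argues by contradiction, so a violating $D$ satisfies $C\subsetneq D\subsetneq\Top(C)$, and the mere existence of such a strictly intermediate cluster already forces $\mathcal{B}^0(C)\neq\emptyset$, since for $\mathcal{B}^0(C)=\emptyset$ and $C\neq X$ the set $\Top(C)$ is by definition the inclusion-minimal cluster distinct from $C$ containing $C$ (and for $C=X$ nothing lies strictly between $X$ and $\Top(X)=X$). Your auxiliary claim is correct but proves more than required: it covers all clusters $D\supsetneq C$, whereas for the sandwiched $D\subseteq\Top(C)$ the inclusion-minimality of $\Top(C)$ alone yields $D=\Top(C)$ immediately, with no appeal to closedness at all. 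So what your route buys is a stronger intermediate statement; what the paper's buys is brevity, by letting the definition of $\Top(C)$ absorb the case you treated by hand.
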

\begin{proof}
  Let $C,D\in\mathscr{C}$ such that
  $D\in\mathcal{B}(C) = \mathcal{B}^0(C)\cup\{C,\Top(C)\}$.  If
  $D \in \mathcal{B}^0(C)$ or $D=C$, then $C\subseteq D$ and, by
  construction, $D \subseteq \Top(C)$.  If $D = \Top(C)$, we can apply
  Lemma \ref{lem:BCinClusterSystem} to conclude that
  $C\subseteq \Top(C)=D$.  Now, let $C\subseteq D\subseteq \Top(C)$ and
  assume, for contradiction, that $D\notin\mathcal{B}(C)$.  Thus,
  $C\subsetneq D\subsetneq \Top(C)$ and
  $D\notin\mathcal{B}^0(C)$. From $D\subsetneq \Top(C)$, we obtain
  $\mathcal{B}^0(C)\neq \emptyset$. Hence, we can apply Lemma~\ref{lem:B0} to
  conclude that the cluster $D$ satisfies one of the alternatives: (i)
  $D\subseteq C$, (ii) $D\cap C=\emptyset$, or (iii) $C'\subsetneq D$ for all
  $C'\in\mathcal{B}^0(C)$.  Based on the latter arguments only case (iii) can
  occur, and hence Lemma~\ref{lem:B(C)subD} and Eq.~\eqref{eq:B(C)} yield
  $\Top(C)\subseteq D$; a contradiction.
\end{proof}

As an immediate consequence of Cor.~\ref{cor:sandwich} we have
\begin{corollary}
  \label{cor:inducedsubgraph}
  Let $\mathscr{C}$ be a closed clustering system satisfying (L). Then, for
  all clusters $C\in\mathscr{C}$, the Hasse diagram
  $\Hasse[\mathcal{B}(C)]$ is an induced subgraph of the Hasse diagram
  $\Hasse[\mathscr{C}]$.
\end{corollary}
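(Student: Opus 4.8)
The plan is to exploit the characterization of $\mathcal{B}(C)$ provided by Cor.~\ref{cor:sandwich}, which states that $\mathcal{B}(C)$ is precisely the set of clusters $D\in\mathscr{C}$ sandwiched between $C$ and $\Top(C)$, i.e.\ an interval of the inclusion order $(\mathscr{C},\subseteq)$. First I would record that $\mathcal{B}(C)\subseteq\mathscr{C}$ by Cor.~\ref{cor:BC-in-C}, so that the vertices of $\Hasse[\mathcal{B}(C)]$ form a subset of the vertices of $\Hasse[\mathscr{C}]$ and it remains only to compare arcs.

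Both Hasse diagrams are defined by the same cover relation, just over different ground sets, so I would fix two clusters $D,D'\in\mathcal{B}(C)$ with $D'\subsetneq D$ and show that $(D,D')$ is an arc of $\Hasse[\mathcal{B}(C)]$ if and only if it is an arc of $\Hasse[\mathscr{C}]$. The ``if'' direction is immediate: if no cluster of $\mathscr{C}$ lies strictly between $D'$ and $D$, then a fortiori no cluster of the subcollection $\mathcal{B}(C)\subseteq\mathscr{C}$ does either.

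For the converse I would argue by contraposition using order-convexity. Suppose $(D,D')$ is not an arc of $\Hasse[\mathscr{C}]$; then there is some $D''\in\mathscr{C}$ with $D'\subsetneq D''\subsetneq D$. Since $D,D'\in\mathcal{B}(C)$, Cor.~\ref{cor:sandwich} gives $C\subseteq D'$ and $D\subseteq\Top(C)$, whence $C\subseteq D''\subseteq\Top(C)$ and thus $D''\in\mathcal{B}(C)$ by Cor.~\ref{cor:sandwich} again. Therefore $D''$ witnesses that $(D,D')$ is not an arc of $\Hasse[\mathcal{B}(C)]$ either. Taken together, the two directions show that the arc set of $\Hasse[\mathcal{B}(C)]$ coincides with the arc set of the subgraph of $\Hasse[\mathscr{C}]$ induced by the vertex set $\mathcal{B}(C)$, which is the assertion.

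The only delicate point, and the step I expect to carry the real content, is this converse direction: in general the Hasse diagram of a subposet need \emph{not} be an induced subgraph of the Hasse diagram of the ambient poset, since a covering pair in the subposet may fail to be a covering pair in the full poset once an intermediate element has been discarded. Cor.~\ref{cor:sandwich} rules exactly this out by guaranteeing that $\mathcal{B}(C)$ is order-convex, so that no intermediate cluster can ever be missing; hence the reduction to Cor.~\ref{cor:sandwich} is what makes the corollary ``immediate''.
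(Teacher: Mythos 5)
Your proof is correct and follows exactly the route the paper intends: the paper states Cor.~\ref{cor:inducedsubgraph} as an immediate consequence of Cor.~\ref{cor:sandwich}, and your argument is precisely the spelled-out version of that, using the sandwich characterization to show $\mathcal{B}(C)$ is order-convex in $(\mathscr{C},\subseteq)$ so that no covering pair in $\mathcal{B}(C)$ can be broken by an intermediate cluster of $\mathscr{C}$. Your closing remark correctly identifies where the content lies, since for an arbitrary (non-convex) subfamily the Hasse diagram of the subposet need not be an induced subgraph of the ambient Hasse diagram.
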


Furthermore, the sets $\mathcal{B}^0(C)$ are pairwise disjoint:
\begin{lemma}
  \label{lem:disjointB0}
  Let $\mathscr{C}$ be a closed clustering system satisfying (L) and let
  $C,C'\in\mathscr{C}$. Then $C'\in\mathcal{B}^0(C)$ implies
  $\mathcal{B}^0(C')=\emptyset$. Furthermore, if
  $\mathcal{B}^0(C)\cap \mathcal{B}^0(C')\ne\emptyset$, then $C=C'$.
  Consequently, $\mathcal{B}^0(C)\cap \mathcal{B}^0(C') =\emptyset$ for all
  distinct $C,C'\in\mathscr{C}$.
\end{lemma}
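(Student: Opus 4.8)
The plan is to dispatch the lemma in three short steps, following the order in which the three claims are stated. The first claim is handled by the already-established non-overlap corollary, the second is a direct application of property (L), and the final ``consequently'' is nothing more than the contrapositive of the second claim. Throughout I will use the defining fact that if $C' \cap C'' = C$ with $C',C'' \neq C$ and $C \neq \emptyset$ (which holds since $\emptyset \notin \mathscr{C}$), then $C \subsetneq C'$ and $C \subsetneq C''$, so that $C'$ and $C''$ overlap with intersection exactly $C$; this reading of the definition of $\mathcal{B}^0$ is the one repeatedly exploited below.

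For the first claim, I would argue by contradiction. Suppose $C' \in \mathcal{B}^0(C)$ but $\mathcal{B}^0(C') \neq \emptyset$. On the one hand, $C' \in \mathcal{B}^0(C)$ furnishes a cluster $C'' \in \mathscr{C}\setminus\{C\}$ with $C' \cap C'' = C$; by the observation above, $C'$ overlaps $C''$, so $C'$ overlaps \emph{some} cluster in $\mathscr{C}$. On the other hand, $\mathcal{B}^0(C') \neq \emptyset$ together with Cor.~\ref{cor:B0-C-no-overlap} says that $C'$ overlaps \emph{no} cluster in $\mathscr{C}$. This contradiction forces $\mathcal{B}^0(C') = \emptyset$.

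For the second claim, suppose $D \in \mathcal{B}^0(C) \cap \mathcal{B}^0(C')$. Unfolding the membership $D \in \mathcal{B}^0(C)$ gives a cluster $E \in \mathscr{C}\setminus\{C\}$ with $D \cap E = C$, and unfolding $D \in \mathcal{B}^0(C')$ gives $E' \in \mathscr{C}\setminus\{C'\}$ with $D \cap E' = C'$. The key point—and what I expect to be the only place requiring care—is to verify that $D$ overlaps \emph{both} $E$ and $E'$, so that property (L) is applicable with $C_1 = D$, $C_2 = E$, $C_3 = E'$. This is exactly the overlap observation: $D \cap E = C \notin \{\emptyset, D, E\}$ since $D \neq C$ and $E \neq C$ and $C \neq \emptyset$, and symmetrically $D \cap E' = C' \notin \{\emptyset, D, E'\}$. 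Property (L) then yields $D \cap E = D \cap E'$, i.e.\ $C = C'$, as desired.

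Finally, the ``consequently'' is immediate: for distinct $C, C' \in \mathscr{C}$, if $\mathcal{B}^0(C) \cap \mathcal{B}^0(C') \neq \emptyset$ held, then the second claim would give $C = C'$, contradicting distinctness; hence $\mathcal{B}^0(C) \cap \mathcal{B}^0(C') = \emptyset$. The whole argument is essentially bookkeeping on top of two prior results, so the only mild obstacle is making the overlap verifications precise enough to invoke Cor.~\ref{cor:B0-C-no-overlap} and property (L) cleanly.
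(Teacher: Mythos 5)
Your proof is correct. For the second claim and the final \emph{consequently}, your argument coincides with the paper's: take $D\in\mathcal{B}^0(C)\cap\mathcal{B}^0(C')$, unfold the two memberships to get witnesses $E$ and $E'$, verify that $D$ overlaps both (which your careful check of $D\cap E=C\notin\{\emptyset,D,E\}$ settles, since $D,E\in\mathscr{C}\setminus\{C\}$ and $\emptyset\notin\mathscr{C}$), and apply (L) once to get $C=D\cap E=D\cap E'=C'$; the last claim is the contrapositive in both treatments. Where you genuinely diverge is the first claim. The paper proves it from scratch: assuming $\mathcal{B}^0(C')\ne\emptyset$ produces overlapping $D,D'$ with $D\cap D'=C'$, one shows the witness $C''$ (with $C'\cap C''=C$) must overlap one of them, say $D$, and then (L) is chained twice, $C=C'\cap C''=C''\cap D=D\cap D'=C'$, which forces $C'=C\subseteq C''$ and contradicts the overlap of $C'$ and $C''$. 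You instead observe that the two hypotheses are flatly incompatible by Cor.~\ref{cor:B0-C-no-overlap}: $C'\in\mathcal{B}^0(C)$ forces $C'$ to overlap some cluster, while $\mathcal{B}^0(C')\ne\emptyset$ would forbid $C'$ from overlapping anything. This is legitimate and non-circular, since Cor.~\ref{cor:B0-C-no-overlap} precedes Lemma~\ref{lem:disjointB0} and rests only on Lemma~\ref{lem:B0}, not on the present lemma. Your route buys brevity by reusing established machinery, at the cost of importing the case analysis of Lemma~\ref{lem:B0}; the paper's route is longer but self-contained, needing nothing beyond property (L) and elementary set reasoning. Neither argument actually uses closedness, so both establish the statement under the weaker hypothesis that $\mathscr{C}$ satisfies (L).
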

\begin{proof}
  Suppose that $C'\in\mathcal{B}^0(C)$. Then there is a set
  $C''\in\mathscr{C}$ such that $C'$ and $C''$ overlap and $C=C'\cap
  C''$. Assume for contradiction that $\mathcal{B}^0(C')\ne\emptyset$,
  i.e., there are two overlapping clusters $D,D'\in\mathscr{C}$ such that
  $C'=D\cap D'$. Since $C''$ overlaps $C'$ it cannot be contained in both
  $D$ and $D'$ since otherwise $C''\subseteq D\cap D'=C'$ and hence, $C'$
  and $C''$ would not overlap. Thus, at least one of $C''\setminus D$ and
  $C''\setminus D'$ is non-empty, say $C''\setminus
  D\ne\emptyset$. Moreover, $C'\subsetneq D$ and $C'$ and $C''$ overlapping
  each other imply that $D\setminus C''\ne\emptyset$ and
  $D\cap C''\ne\emptyset$. Hence, $C''$ and $D$ overlap.  By (L),
  $C = C'\cap C'' = C''\cap D = D'\cap D = C'$ and thus
  $C'=C\subseteq C''$; a contradiction to the assumption that $C'$ and
  $C''$ overlap. Hence, $\mathcal{B}^0(C')=\emptyset$ as claimed.
  Now assume that $\mathcal{B}^0(C)\cap \mathcal{B}^0(C')\ne\emptyset$,
  i.e., there is a cluster $C''\in \mathcal{B}^0(C)\cap \mathcal{B}^0(C')$
  and clusters $D\in \mathcal{B}^0(C)$ and $D'\in\mathcal{B}^0(C')$, both
  of which overlap with $C''$, such that $C=C''\cap D$ and $C'=C''\cap
  D'$. By (L), this implies $C=C'$.
\end{proof}

\begin{lemma}
  \label{lem:multiple-inneighbors}
  Let $\mathscr{C}$ be a closed clustering system satisfying (L).  Then
  $C\in\mathcal{C}$ has indegree greater than one in
  $\Hasse[\mathscr{C}]$ if and only if
  $\mathcal{B}^0(C)\ne\emptyset$.  In this case, all in-neighbors of $C$ in
  $\Hasse[\mathscr{C}]$ are contained in $\mathcal{B}^0(C)$.
\end{lemma}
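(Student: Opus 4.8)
The plan is to prove the equivalence by establishing both implications separately, and to obtain the supplementary claim (that all in-neighbours lie in $\mathcal{B}^0(C)$) as a byproduct of one direction. The first thing I would record is the elementary translation of the statement into the language of set inclusion: by definition of the Hasse diagram, the in-neighbours (parents) of $C$ in $\Hasse[\mathscr{C}]$ are exactly the inclusion-minimal clusters that strictly contain $C$. Hence ``$C$ has indegree greater than one'' means ``$C$ has at least two distinct inclusion-minimal proper supersets in $\mathscr{C}$'', and the supplementary claim becomes ``every inclusion-minimal proper superset of $C$ belongs to $\mathcal{B}^0(C)$''. Throughout I may freely use that clusters are non-empty and that, for $C'\in\mathcal{B}^0(C)$, one has $C=C'\cap C''\subseteq C'$ and $C'\ne C$, hence $C\subsetneq C'$.

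For the direction $\mathcal{B}^0(C)\neq\emptyset \Rightarrow$ indegree exceeds one, I would start from overlapping clusters $C',C''\in\mathscr{C}\setminus\{C\}$ with $C'\cap C''=C$, so that $C\subsetneq C'$ and $C\subsetneq C''$. Inside $C'$ I choose an inclusion-minimal cluster $P$ with $C\subsetneq P\subseteq C'$ (the collection is non-empty since $C'$ lies in it); any cluster strictly between $C$ and $P$ would also be contained in $C'$, so minimality forces $P$ to be a parent of $C$. Symmetrically I obtain a parent $Q$ with $C\subsetneq Q\subseteq C''$, and if $P=Q$ then $P\subseteq C'\cap C''=C$ contradicts $C\subsetneq P$; thus $P\neq Q$ and $C$ has at least two parents. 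For the refinement, take \emph{any} parent $P$ and suppose for contradiction $P\notin\mathcal{B}^0(C)$. Since $C\subsetneq P$ we have $P\not\subseteq C$ and $P\cap C=C\neq\emptyset$, so alternatives (i) and (ii) of Lemma~\ref{lem:B0} are excluded, and alternative (iii) must hold, giving $C'\subsetneq P$ for every $C'\in\mathcal{B}^0(C)$. Picking any such $C'$ (possible since $\mathcal{B}^0(C)\neq\emptyset$) together with $C\subsetneq C'$ yields $C\subsetneq C'\subsetneq P$, contradicting the minimality of the parent $P$; hence $P\in\mathcal{B}^0(C)$.

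The converse I would argue by contraposition: $\mathcal{B}^0(C)=\emptyset$ implies the indegree is at most one. If $C=X$ the indegree is $0$, so assume $C\ne X$. This is precisely the situation in which the text, when extending the definition of $\Top(C)$ (the uniqueness argument just before Eq.~\eqref{eq:calB(C)}), shows that the inclusion-minimal cluster strictly containing $C$ is unique; that unique cluster is then the single parent of $C$, so the indegree equals one. Combining this with the first implication gives the stated equivalence, and the refinement was already obtained above in the case $\mathcal{B}^0(C)\neq\emptyset$.

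The main obstacle I anticipate is the conversion of the purely set-theoretic hypotheses---closedness and property (L)---into statements about \emph{minimal} supersets, where one must be careful that a minimal strict superset of $C$ found inside $C'$ is genuinely a parent in $\Hasse[\mathscr{C}]$. The genuinely delicate step is the refinement argument: one has to rule out that some parent of $C$ sits \emph{above} the entire ``block'' $\Top(C)$ rather than inside one of the overlapping clusters, and this is exactly the point at which the trichotomy of Lemma~\ref{lem:B0} (and hence property (L), not merely closedness) is indispensable.
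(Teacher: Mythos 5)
Your proof is correct, but it takes a genuinely different route from the paper's. The paper handles the forward implication and the refinement in a single stroke using only closedness: given two distinct in-neighbors $D,D'$ of $C$, closedness puts $D\cap D'\in\mathscr{C}$, and the minimality of in-neighbors then forces $C=D\cap D'$, so $D$ and $D'$ overlap and both lie in $\mathcal{B}^0(C)$; since the pair was arbitrary, every in-neighbor lies in $\mathcal{B}^0(C)$. The paper's converse is by contraposition with a path argument: if $C'$ were the unique in-neighbor while $\mathcal{B}^0(C)\ne\emptyset$, the directed paths in $\Hasse[\mathscr{C}]$ from the overlapping witnesses $D,D'$ down to $C$ would both pass through $C'$, giving $C\subsetneq C'\subseteq D\cap D'=C$. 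You instead prove ``$\mathcal{B}^0(C)\ne\emptyset$ implies indegree greater than one'' directly, via a purely order-theoretic construction (inclusion-minimal clusters inside $C'$ and inside $C''$ yield two distinct parents --- a step that needs neither closedness nor (L)); you obtain the refinement from the trichotomy of Lemma~\ref{lem:B0} (which needs (L)); and you obtain the converse from the uniqueness of the inclusion-minimal proper superset established in the text just before Eq.~\eqref{eq:calB(C)} (which needs closedness) --- a legitimate citation, since that passage precedes the lemma. Both decompositions are sound; the paper's is more economical in hypotheses, as it never invokes (L) at all, while yours makes the ``$\Leftarrow$'' direction hypothesis-free. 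One consequence: your closing claim that property (L) is ``indispensable'' for the refinement step is not accurate. The paper's pairing argument shows that closedness alone suffices there; (L) is indispensable only for your particular route through Lemma~\ref{lem:B0}.
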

\begin{proof}
  Suppose $C\in\mathcal{C}$ has indegree greater than one in
  $\Hasse[\mathscr{C}]$. Thus, let $D,D'\in \mathscr{C}$ be two
  distinct in-neighbors of $C$. Hence, $C\subseteq D\cap D'$ and thus, by
  closedness of $\mathscr{C}$ and definition of
  $\Hasse[\mathscr{C}]$, $C= D\cap D'$.  In particular, $D$ and $D'$
  overlap.  Therefore, $D,D'\in\mathcal{B}^0(C)\ne\emptyset$. In
  particular, since the in-neighbors $D$ and $D'$ were chosen arbitrarily,
  all in-neighbors of $C$ in $\Hasse[\mathscr{C}]$ are contained in
  $\mathcal{B}^0(C)$.  Now suppose $C\in\mathcal{C}$ has indegree zero or
  one in $\Hasse[\mathscr{C}]$. Clearly, $C$ has indegree zero if
  and only if $C=X$, in which case $\mathcal{B}^0(C)=\emptyset$.  Suppose
  $C$ has exactly one in-neighbor $C'$ and, for contradiction, that
  $\mathcal{B}^0(C)\ne\emptyset$.  Then there are two overlapping sets
  $D,D'\in \mathcal{B}^0(C)$ such that $C=D\cap D'$, and thus directed
  paths both from $D$ and $D'$ to $C$. Both of these paths must pass
  through $C'$ and thus, $C\subsetneq C'\subseteq D\cap D'$; a
  contradiction.  Hence, the \emph{if}-direction must also hold.
\end{proof}
Lemmas~\ref{lem:disjointB0} and~\ref{lem:multiple-inneighbors} imply
\begin{corollary}
  \label{cor:B0-indegree1}
  Let $\mathscr{C}$ be a closed clustering system satisfying (L) and with
  Hasse diagram $\Hasse$.  For every $C\in\mathscr{C}$, the elements
  $C'\in\mathcal{B}^0(C)$ have a unique in-neighbor in $\Hasse$. In
  particular, this unique in-neighbor of $C'$ is always contained in
  $\mathcal{B}(C)$.
\end{corollary}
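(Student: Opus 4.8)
The plan is to deduce both assertions directly from Lemmas~\ref{lem:disjointB0} and~\ref{lem:multiple-inneighbors}, supplemented by the sandwich criterion of Cor.~\ref{cor:sandwich}. Throughout I fix a cluster $C\in\mathscr{C}$ together with an element $C'\in\mathcal{B}^0(C)$, and I write $C''\in\mathscr{C}\setminus\{C\}$ for a witness with $C'\cap C''=C$, so that $C'$ and $C''$ overlap.

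First I would establish that $C'$ has exactly one in-neighbor in $\Hasse$. By Lemma~\ref{lem:disjointB0}, membership $C'\in\mathcal{B}^0(C)$ forces $\mathcal{B}^0(C')=\emptyset$, and the contrapositive of Lemma~\ref{lem:multiple-inneighbors} then shows that $C'$ cannot have indegree greater than one. It remains to exclude indegree zero. By Lemma~\ref{lem:Hasse-is-Network}, $X$ is the unique root of $\Hasse$, so only $C'=X$ could have indegree zero; but since $C'$ and $C''$ overlap we have $C''\not\subseteq C'$, hence $C'\ne X$. Thus $C'$ has a single in-neighbor $D$.

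Next I would verify $D\in\mathcal{B}(C)$ via the criterion $C\subseteq D\subseteq \Top(C)$ of Cor.~\ref{cor:sandwich}. Since $D$ is an in-neighbor of $C'$ we have $C'\subsetneq D$; combined with $C\subsetneq C'$ (which holds because $C'\cap C''=C$ while $C'$ and $C''$ overlap) this yields $C\subseteq D$. For the upper bound, recall that a nonempty $\mathcal{B}^0(C)$ contains both witnesses $C'$ and $C''$, so $C'\subsetneq C'\cup C''\subseteq U(C)\subseteq \Top(C)$, giving the \emph{strict} inclusion $C'\subsetneq \Top(C)$. Because $\Top(C)\in\mathscr{C}$ by Lemma~\ref{lem:BCinClusterSystem} and $C'\subsetneq \Top(C)$, there is a directed path from $\Top(C)$ to $C'$ in $\Hasse$, every vertex of which is a cluster lying inclusion-wise between $C'$ and $\Top(C)$. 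The arc of this path entering $C'$ originates at an in-neighbor of $C'$, which by uniqueness is $D$; consequently $D\subseteq \Top(C)$. Hence $C\subseteq D\subseteq \Top(C)$, and Cor.~\ref{cor:sandwich} gives $D\in\mathcal{B}(C)$.

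The hard part will be the upper-bound step $D\subseteq \Top(C)$: the sandwich criterion of Cor.~\ref{cor:sandwich} only characterizes clusters already known to sit in the interval between $C$ and $\Top(C)$, so one must \emph{produce} $D$ inside that interval rather than merely test it. The path argument above accomplishes this by exploiting that $C'$ lies strictly below $\Top(C)$ and admits a single covering cluster, which forces $D$ onto the descending path from $\Top(C)$ to $C'$; the strictness $C'\subsetneq\Top(C)$, in turn, relies on the earlier observation that a nonempty $\mathcal{B}^0(C)$ contains at least two overlapping clusters.
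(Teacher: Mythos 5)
Your proof is correct and follows essentially the same route as the paper's: both arguments combine Lemma~\ref{lem:disjointB0} with Lemma~\ref{lem:multiple-inneighbors} to bound the indegree of $C'$ by one, use the strict inclusion $C'\subsetneq\Top(C)$ to obtain a directed path from $\Top(C)$ to $C'$ whose final arc supplies an in-neighbor inside the interval $[C,\Top(C)]$, and invoke Cor.~\ref{cor:sandwich} to place that in-neighbor in $\mathcal{B}(C)$. The only difference is the order of the two halves (you establish uniqueness first, via the observation that $C'\ne X$ is not the root, whereas the paper first extracts the in-neighbor $C^*$ from the path and concludes uniqueness afterwards), which does not change the substance of the argument.
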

\begin{proof}
  The statement is trivially true for all $C\in\mathscr{C}$ with
  $\mathcal{B}^0(C)=\emptyset$.  Thus consider a set $C\in\mathscr{C}$ with
  $\mathcal{B}^0(C)\ne\emptyset$.  Note that $C\ne X$ must hold.  Consider
  a cluster $C'\in \mathcal{B}^0(C)$. It overlaps with some
  $C''\in \mathcal{B}^0(C)$ and thus $C',C''\subsetneq \Top(C)$. Therefore,
  there is a directed path from $\Top(C)$ to $C'$ and thus, $C'$ has an
  in-neighbor $C^*$ that satisfies
  $C\subsetneq C'\subsetneq C^*\subseteq \Top(C)$. By Cor.~\ref{cor:sandwich},
  we have $C^*\in \mathcal{B}(C)$.  Lemma~\ref{lem:disjointB0} and
  $C'\in\mathcal{B}^0(C)$ imply $\mathcal{B}^0(C')=\emptyset$.  Hence, $C'$
  has indegree $1$ by Lemma~\ref{lem:multiple-inneighbors}, i.e.,
  $C^*\in \mathcal{B}(C)$ is the unique in-neighbor of $C$.
\end{proof}

\begin{lemma}
  \label{lem:biconnected}
  Let $\mathscr{C}$ be a closed clustering system satisfying (L). Let
  $C\in\mathscr{C}$ with $\mathcal{B}^0(C)\ne\emptyset$.  Then the induced
  subgraph $\Hasse[\mathcal{B}(C)]$ of $\Hasse[\mathscr{C}]$ is
  biconnected. In particular, $\Hasse[\mathcal{B}(C)]$ is a DAG with unique
  source $\Top(C)$ and unique sink $C$.
\end{lemma}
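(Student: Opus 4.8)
The plan is to first pin down the unique source and sink, which falls out of the sandwich characterization in Cor.~\ref{cor:sandwich}, and then to establish biconnectivity by checking that $\Hasse[\mathcal{B}(C)]$ has no cut vertex. Write $H\coloneqq\Hasse[\mathcal{B}(C)]$ and $T\coloneqq\Top(C)$; by Cor.~\ref{cor:inducedsubgraph} $H$ is an induced subgraph of $\Hasse[\mathscr{C}]$, hence a DAG. Since every $D\in\mathcal{B}(C)$ satisfies $C\subseteq D\subseteq T$ by Cor.~\ref{cor:sandwich}, the set $T$ is the unique $\subseteq$-maximal and $C$ the unique $\subseteq$-minimal element, so $T$ has indegree $0$ and $C$ has outdegree $0$ in $H$. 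For uniqueness I would show every other vertex has positive indegree and outdegree: any $D\ne C$ has $C\subsetneq D$ and hence a child $D'$ in $\Hasse[\mathscr{C}]$ with $C\subseteq D'\subsetneq D\subseteq T$, so $D'\in\mathcal{B}(C)$ and $\outdeg_H(D)\ge 1$; on the other side, $C$ has indegree $\ge 2$ with all in-neighbours in $\mathcal{B}^0(C)\subseteq\mathcal{B}(C)$ by Lemma~\ref{lem:multiple-inneighbors}, and every $D\in\mathcal{B}^0(C)$ has its (unique) in-neighbour inside $\mathcal{B}(C)$ by Cor.~\ref{cor:B0-indegree1}. Thus $T$ is the unique source and $C$ the unique sink.

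For biconnectivity I would verify that $H-w$ is connected for each $w\in\mathcal{B}(C)$; as $H$ is connected (every vertex reaches $C$ along a directed path whose vertices all lie in $\mathcal{B}(C)$ by Cor.~\ref{cor:sandwich}), this rules out cut vertices. The two boundary cases are immediate: deleting $T$ leaves every $D\in\mathcal{B}^0(C)$ joined to $C$ via its downward path, all of whose vertices lie in $\mathcal{B}(C)\setminus\{T\}$; deleting $C$ symmetrically leaves every $D$ joined to $T$ via its upward path in $\mathcal{B}(C)\setminus\{C\}$.

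The crux, and the step I expect to be the main obstacle, is the removal of an interior vertex $w\in\mathcal{B}^0(C)$. Here I would use two observations. First, by definition $w$ overlaps a partner $w''\in\mathcal{B}^0(C)$ with $w\cap w''=C$; since $w$ and $w''$ are incomparable, no cluster on a directed $T$-to-$w''$ path (all $\supseteq w''$) nor on a directed $w''$-to-$C$ path (all $\subseteq w''$) can equal $w$, so $T$ and $C$ remain connected in $H-w$ through $w''$, with every intermediate cluster lying in $\mathcal{B}(C)$ by Cor.~\ref{cor:sandwich}. Second, for any remaining $D\in\mathcal{B}^0(C)\setminus\{w\}$, consider its upward path $P_{\uparrow}$ to $T$ and its downward path $P_{\downarrow}$ to $C$: the cluster $w$ can occur on $P_{\uparrow}$ only if $D\subsetneq w$ and on $P_{\downarrow}$ only if $w\subsetneq D$, so $w$ meets at most one of the two. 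Whichever path avoids $w$ attaches $D$ to the component containing $T$ and $C$, so $H-w$ is connected.

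The delicate points to get right are that all these auxiliary paths stay inside $\mathcal{B}(C)$ (which is exactly what Cor.~\ref{cor:sandwich} guarantees, via the sandwich $C\subseteq E\subseteq T$) and that the incomparability of overlapping clusters is invoked correctly to keep $w$ off the relevant chains; property (L) enters only indirectly, through the earlier lemmas feeding into these corollaries. Finally, since the clusters $C$, $T$, and two overlapping members of $\mathcal{B}^0(C)$ are pairwise distinct, $\lvert\mathcal{B}(C)\rvert\ge 4$, so $H$ is a genuine non-trivial biconnected subgraph rather than a single arc.
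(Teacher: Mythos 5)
Your proof is correct, but its biconnectivity core takes a genuinely different route from the paper's. The paper leans on Cor.~\ref{cor:B0-indegree1}: since every cluster in $\mathcal{B}^0(C)$ has its unique in-neighbor inside $\mathcal{B}(C)$, the induced subgraph on $\mathcal{B}(C)\setminus\{C\}$ is a \emph{tree} rooted at $\Top(C)$; the paper then shows this root has at least two children, that every leaf of the tree is an in-neighbor of $C$ (via Cor.~\ref{cor:sandwich}), and concludes with the elementary fact that a rooted tree whose root has at least two children, together with an apex vertex attached to all its leaves, is biconnected. You instead verify the cut-vertex definition directly: deleting $\Top(C)$ or $C$ is handled by monotone paths, and for an interior $w$ you use the overlapping partner $w''$ (with $w\cap w''=C$) to keep $\Top(C)$ and $C$ connected after removing $w$, plus the observation that $w$ can lie on at most one of the two monotone paths $P_{\uparrow}$, $P_{\downarrow}$ attached to any other interior vertex $D$; throughout, Cor.~\ref{cor:sandwich} keeps all auxiliary paths inside $\mathcal{B}(C)$. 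Your route is more elementary -- for biconnectivity you never invoke the unique-in-neighbor property, only incomparability of overlapping clusters -- while the paper's tree-plus-apex decomposition yields extra structure for free (two internally vertex-disjoint directed $\Top(C)$-to-$C$ paths; every undirected cycle of the block passes through $C$), which matches the picture of level-1 blocks exploited afterwards. One detail you should state explicitly: for $D\in\mathcal{B}^0(C)$ one has $D\subsetneq\Top(C)$, since $D=\Top(C)$ would force $D$ to contain its overlapping partner, which lies in $U(C)\subseteq\Top(C)$; this is what guarantees that your downward paths avoid $\Top(C)$ when you delete it, and that the $\Top(C)$-to-$w''$ path exists in the interior case. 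It is a one-line fill, not a gap.
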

\begin{proof}
  By Cor.~\ref{cor:inducedsubgraph}, $\Hasse[\mathcal{B}(C)]$ is an induced
  subgraph of $\Hasse[\mathscr{C}]$.  By Lemma~\ref{cor:B0-indegree1}, all
  clusters in $\mathcal{B}^0(C)$ have a unique in-neighbor in
  $\mathcal{B}(C)$. By Cor.~\ref{cor:sandwich}, $C'\subseteq \Top(C)$ holds
  for all $C'\in\mathcal{B}^0(C)$.  Therefore, $\Top(C)$ has indegree $0$
  in $\Hasse[\mathcal{B}(C)]$ and, moreover, there exists a directed path
  from $\Top(C)$ to every cluster $C'\in\mathcal{B}^0(C)$. In particular,
  by Cor.~\ref{cor:sandwich}, of the clusters in such paths are again
  contained in $\mathcal{B}(C)$.  Taken together, these arguments imply
  that the Hasse diagram $\Hasse[\mathcal{B}(C)\setminus\{C\}]$ is a tree
  with root $\Top(C)$.  Note that this tree is not necessarily
  phylogenetic, i.e, there may exist clusters with outdegree $1$. However,
  the outdegree of the root $\Top(C)$ in
  $\Hasse[\mathcal{B}(C)\setminus\{C\}]$ is at least two.  To see this, let
  $C'$ be a cluster in $\mathcal{B}^0(C)\ne \emptyset$.  As argued above,
  there is a directed path from $\Top(C)$ to $C'$ and this path only
  contains clusters in $\mathcal{B}(C)$.  Therefore and since
  $C'\ne \Top(C)$, $\Top(C)$ has a child $D'$ in $\Hasse$ with
  $C\subsetneq C' \subseteq D' \subsetneq \Top(C)$. By
  Cor.~\ref{cor:sandwich}, we have $D'\in \mathcal{B}(C)$, and thus,
  $D'\in \mathcal{B}^0(C)$.  Hence, there is $C''\in \mathcal{B}^0(C)$ such
  that $D'$ and $C''$ overlap.  By similar argument as before, there is a
  child $D''\in \mathcal{B}^0(C)$ of $\Top(C)$ such that
  $C'' \subseteq D'' \subsetneq \Top(C)$. Now $C'' \subseteq D''$ and the
  fact that $D'$ and $C''$ overlap imply that $D'\ne D''$. Hence, $\Top(C)$
  has at least two children in $\Hasse[\mathcal{B}(C)\setminus\{C\}]$.
  Using Cor.\ref{cor:sandwich}, we see that each leaf of the tree induced
  by $\mathcal{B}(C)\setminus\{C\}$ is an in-neighbor of $C$.  It is now
  easy to verify the graph obtained from (i) a rooted tree whose root has
  at least two children and (ii) connecting its leaves to an additional
  vertex is biconnected.  Hence, $\Hasse[\mathcal{B}(C)]$ is
  biconnected. In particular, $\Hasse[\mathcal{B}(C)]$ features at least
  two internally vertex disjoint directed path connecting $\Top(C)$ and
  $C$, and any two vertices lie along a common ``undirected'' cycle (which
  necessarily passes through $C$).
\end{proof}

\begin{lemma}
  \label{lem:DD}
  Let $\mathscr{C}$ be a closed clustering system satisfying (L).  Let
  $D\in\mathcal{B}^0(C)$ for some $C\in\mathscr{C}$ and let
  $D'\notin\mathcal{B}(C)$ be adjacent to $D$ in the Hasse diagram $\Hasse$
  of $\mathscr{C}$. Then
  \begin{description}
  \item[(i)] $D$ is the unique in-neighbor of $D'$ in $\Hasse$ and thus
    $D'\subsetneq D$,
  \item[(ii)] $D'\cap C =\emptyset$, and
  \item[(iii)] if $D'$ overlaps with
    some $D''\in\mathscr{C}$, then there is $C'\in\mathscr{C}$ such that
    $D'\in\mathcal{B}^0(C')$ and $\Top(C')=D$.
  \end{description}
\end{lemma}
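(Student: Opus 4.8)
The plan is to first fix the direction of the arc between $D$ and $D'$ using the indegree machinery, and then to handle the three claims essentially by pattern-matching against Lemma~\ref{lem:B0} and the disjointness of the sets $\mathcal{B}^0(\cdot)$. To start, I would note that $D\in\mathcal{B}^0(C)$ forces $D\neq X$ (as $D$ overlaps some cluster) and, by Lemma~\ref{lem:disjointB0}, $\mathcal{B}^0(D)=\emptyset$; hence Lemma~\ref{lem:multiple-inneighbors} gives that $D$ has indegree exactly one in $\Hasse$, and by Cor.~\ref{cor:B0-indegree1} its unique in-neighbor lies in $\mathcal{B}(C)$. Since $D'$ is adjacent to $D$ but $D'\notin\mathcal{B}(C)$, the arc cannot be $(D',D)$, as that would make $D'$ the unique in-neighbor of $D$ and thus put $D'\in\mathcal{B}(C)$. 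Therefore the arc is $(D,D')$, i.e.\ $D'\subsetneq D$ and $D$ is an in-neighbor of $D'$.

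For the uniqueness in part~(i), I would argue by contradiction: if $D'$ had a second in-neighbor, then $D'$ would have indegree $>1$, so Lemma~\ref{lem:multiple-inneighbors} yields $\mathcal{B}^0(D')\neq\emptyset$ with $D\in\mathcal{B}^0(D')$; combined with $D\in\mathcal{B}^0(C)$ and Lemma~\ref{lem:disjointB0} this forces $C=D'$, contradicting $D'\notin\mathcal{B}(C)\ni C$. Hence $D$ is the unique in-neighbor of $D'$, establishing~(i). For part~(ii), since $D'\notin\mathcal{B}^0(C)$ I would invoke Lemma~\ref{lem:B0}, which offers three alternatives. Alternative~(iii), namely $C''\subsetneq D'$ for all $C''\in\mathcal{B}^0(C)$, is immediately ruled out because it gives $D\subsetneq D'$ against the just-established $D'\subsetneq D$. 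Alternative~(i), $D'\subseteq C$, would give $D'\subseteq C\subsetneq D$; since $(D,D')$ is a Hasse arc there is no cluster strictly between $D'$ and $D$, forcing $D'=C$, which again contradicts $D'\notin\mathcal{B}(C)$ while $C\in\mathcal{B}(C)$. Only $D'\cap C=\emptyset$ survives, proving~(ii).

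For part~(iii), assuming $D'$ overlaps some $D''\in\mathscr{C}$, closedness (Lemma~\ref{lem:simple-closed}) gives $C'\coloneqq D'\cap D''\in\mathscr{C}$, and since $D'$ and $D''$ overlap with intersection $C'$, both lie in $\mathcal{B}^0(C')$; in particular $D'\in\mathcal{B}^0(C')$, so $\mathcal{B}^0(C')\neq\emptyset$ and $\Top(C')$ is defined. Applying Cor.~\ref{cor:B0-indegree1} to $C'$, the unique in-neighbor of $D'$ lies in $\mathcal{B}(C')$; by part~(i) that in-neighbor is $D$, so $D\in\mathcal{B}(C')=\mathcal{B}^0(C')\cup\{C',\Top(C')\}$. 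I would then exclude $D=C'$ (since $C'\subsetneq D'\subsetneq D$) and $D\in\mathcal{B}^0(C')$ (which, together with $D\in\mathcal{B}^0(C)$, would force $C=C'$ by Lemma~\ref{lem:disjointB0}, whence $D'\in\mathcal{B}^0(C)\subseteq\mathcal{B}(C)$, a contradiction). This leaves $D=\Top(C')$.

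The argument is largely a bookkeeping exercise once the arc direction and the uniqueness of the in-neighbor are secured, so the main obstacle is really the opening step: correctly combining the indegree characterisation (Lemma~\ref{lem:multiple-inneighbors}, Cor.~\ref{cor:B0-indegree1}) with the disjointness of the $\mathcal{B}^0$-sets (Lemma~\ref{lem:disjointB0}) to force the arc to point from $D$ to $D'$ and to pin down $D$ as the \emph{unique} parent of $D'$. After that, parts~(ii) and~(iii) are forced by Lemma~\ref{lem:B0} and a repeated appeal to Lemma~\ref{lem:disjointB0}, respectively.
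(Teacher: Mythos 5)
Your proof is correct and follows essentially the same route as the paper's: part~(i) via Cor.~\ref{cor:B0-indegree1} to fix the arc direction and Lemmas~\ref{lem:multiple-inneighbors} and~\ref{lem:disjointB0} to force $\mathcal{B}^0(D')=\emptyset$, part~(ii) via the trichotomy of Lemma~\ref{lem:B0}, and part~(iii) via closedness, Cor.~\ref{cor:B0-indegree1} applied to $C'=D'\cap D''$, and the disjointness of the $\mathcal{B}^0$-sets. The only deviations are cosmetic (e.g., in part~(ii) you force $D'=C$ from the Hasse-arc property where the paper derives $D'\subsetneq C\subsetneq D$ and contradicts adjacency), so nothing further is needed.
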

\begin{proof}
  We start with showing Property (i). By Cor.~\ref{cor:B0-indegree1}, the
  unique in-neighbor of $D$ is contained in $\mathcal{B}(C)$. Thus, $D'$
  must be an out-neighbor of $D$, i.e., $D'\subsetneq D$. If
  $\mathcal{B}^0(D')\ne\emptyset$, then
  Lemma~\ref{lem:multiple-inneighbors} implies $D\in\mathcal{B}^0(D')$.
  Lemma~\ref{lem:disjointB0} and $D\in \mathcal{B}^0(C)\cap
  \mathcal{B}^0(D')$ imply $D'=C\in \mathcal{B}(C)$; a contradiction.
  Hence, $\mathcal{B}^0(D')=\emptyset$ and in particular, by
  Lemma~\ref{lem:multiple-inneighbors}, $D'$ has indegree $1$, and thus
  $D$ is its unique in-neighbor.

  We continue with showing Property (ii). Since
  $\mathcal{B}^0(C)\neq \emptyset$ and $D'\notin\mathcal{B}^0(C)$,
  Lemma~\ref{lem:B0} implies that (a) $D'\subseteq C$, (b)
  $D'\cap C=\emptyset$, or (c) $C'\subsetneq D'$ for all
  $C'\in\mathcal{B}^0(C)$.  In Case~(a), $D'\notin\mathcal{B}(C)$ implies
  $D'\subsetneq C$.  Since moreover $D\in\mathcal{B}^0(C)$ and thus
  $C\subsetneq D$, we have $D'\subsetneq C\subsetneq D$, contradicting that
  $D'$ and $D$ are adjacent in $\Hasse$.  In Case~(c), we obtain
  $D\subsetneq D'$ since $D\in \mathcal{B}^0(C)$; contradicting
  $D'\subsetneq D$.  Hence, only Case~(b) $D'\cap C=\emptyset$ can hold.

  Finally, we show Property (iii). Suppose that $D'$ overlaps $D''$ and set
  $C'=D'\cap D''$ and thus, $D'\in\mathcal{B}^0(C')$. Since
  $D'\notin\mathcal{B}^0(C)\subseteq \mathcal{B}(C)$ it must hold that
  $C'\ne C$ and thus $\mathcal{B}^0(C')\cap\mathcal{B}^0(C)=\emptyset$ by
  Lemma~\ref{lem:disjointB0}. Since $D$ is the unique in-neighbor of $D'$
  in $\Hasse$, Cor.\ \ref{cor:B0-indegree1} implies $D\in\mathcal{B}(C')$
  and thus $D'\subsetneq D\subseteq \Top(C')$. On the other hand,
  $D\in\mathcal{B}^0(C)$ implies $D\notin\mathcal{B}^0(C')$ and hence
  $D\not\subsetneq \Top(C')$; Therefore $D=\Top(C')$.
\end{proof}

\begin{lemma}
  \label{lem:B-C-nontrivial-blocks}
  Let $\mathscr{C}$ be a closed clustering system satisfying (L). Then each
  subgraph $\Hasse[\mathcal{B}(C)]$ with $\mathcal{B}^0(C)\ne\emptyset$ is
  a non-trivial block of the Hasse diagram $\Hasse$ of $\mathscr{C}$.
\end{lemma}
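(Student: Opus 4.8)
By Lemma~\ref{lem:biconnected}, $\Hasse[\mathcal{B}(C)]$ is a non-trivial biconnected subgraph of $\Hasse\coloneqq\Hasse[\mathscr{C}]$ (non-trivial since $\mathcal{B}^0(C)\neq\emptyset$). Since two distinct blocks share at most one vertex (Fact~\ref{obs:identical-block}), a non-trivial biconnected subgraph is contained in a \emph{unique} block $B$ of $\Hasse$, and $\mathcal{B}(C)\subseteq V(B)$. As blocks are induced subgraphs and so is $\Hasse[\mathcal{B}(C)]$, it suffices to prove $V(B)=\mathcal{B}(C)$. Suppose not; then, as $B$ is connected, some arc of $B$ joins a vertex $F\in\mathcal{B}(C)$ to a vertex $E\in V(B)\setminus\mathcal{B}(C)$, and I will derive a contradiction by ruling out every possible such $E$. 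Throughout I use that, by Cor.~\ref{cor:sandwich}, $\mathcal{B}(C)=\{D\in\mathscr{C}\mid C\subseteq D\subseteq\Top(C)\}$, and that $\Hasse$ is the regular network of $\mathscr{C}$ (Prop.~\ref{prop:regular-unique}), hence semi-regular (Thm.~\ref{thm:semiregular}); thus Lemma~\ref{lem:overlap-B0} and Lemma~\ref{lem:semiregular-overlap} apply with $\CC(D)=D$.

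First I classify the boundary arc $\{F,E\}$. If $F=C$, then since all in-neighbours of $C$ lie in $\mathcal{B}^0(C)\subseteq\mathcal{B}(C)$ (Lemma~\ref{lem:multiple-inneighbors}), $E$ must be a child $E\subsetneq C$, i.e.\ $E\in\mathcal{D}(C)$. As $C$ overlaps no cluster (Cor.~\ref{cor:B0-C-no-overlap}), Lemma~\ref{lem:cutvertex} forbids any undirected cycle meeting both $\mathcal{D}(C)$ and $\overline{\mathcal{D}}(C)$; but $E\in\mathcal{D}(C)$ and $\Top(C)\in\overline{\mathcal{D}}(C)$ would lie on a common cycle in the non-trivial block $B$ (Cor.~\ref{cor:block-cycle}), a contradiction. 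If $F\in\mathcal{B}^0(C)$, then Lemma~\ref{lem:DD} applies with $(D,D')=(F,E)$, giving $E\subsetneq F$, that $F$ is the unique in-neighbour of $E$ in $\Hasse$, and $E\cap C=\emptyset$. The only remaining case is $F=\Top(C)$ with $E$ a parent of $\Top(C)$; this I handle last via an up-chain argument.

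Now consider the downward escape through $F\in\mathcal{B}^0(C)$. If $E$ overlaps no cluster, then $E$ has indegree $1$ (unique parent $F$) and, by Lemma~\ref{lem:cutvertex} applied to $E$, no cycle meets both $\mathcal{D}(E)$ and $\overline{\mathcal{D}}(E)$; but on any cycle of $B$ through $E$ its two neighbours would be the parent $F\in\overline{\mathcal{D}}(E)$ and a child in $\mathcal{D}(E)$, crossing both sides, so $E$ lies on no cycle and thus $E\notin V(B)$. If instead $E$ overlaps some cluster, Lemma~\ref{lem:DD}(iii) yields $C'$ with $E\in\mathcal{B}^0(C')$ and $\Top(C')=F$; were $E\in V(B)$, the block containing the biconnected $\Hasse[\mathcal{B}(C')]$ (Lemma~\ref{lem:biconnected}) would share the two vertices $E$ and $F$ with $B$, hence equal $B$ (Fact~\ref{obs:identical-block}), forcing $\max B=\Top(C')=F$; but $F\in\mathcal{B}^0(C)$ gives $F\subsetneq\Top(C)\subseteq\max B$, contradicting uniqueness of $\max B$ (Lemma~\ref{lem:max-B-unique}). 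For the escape at $F=\Top(C)$ I use the \emph{up-chain principle}: by Cor.~\ref{cor:B0-indegree1}, Lemma~\ref{lem:disjointB0} and Lemma~\ref{lem:multiple-inneighbors}, every vertex of $\mathcal{B}^0(C)$ has indegree $1$ with its in-neighbour inside $\mathcal{B}(C)$, so any directed path that enters a vertex of $\mathcal{B}^0(C)$ stays within $\mathcal{B}(C)$ back up to $\Top(C)$. Hence a directed path in $B$ from $\max B$ down to $C$ (Lemma~\ref{lem:max-B-unique}) must pass through $\Top(C)$, which together with the previous paragraph pins $\max B=\Top(C)$ and shows no parent of $\Top(C)$ lies in $B$.

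After these reductions the only surviving possibility is that $E$ (a child of $\Top(C)$ or of some $F\in\mathcal{B}^0(C)$, disjoint from $C$) is itself the intersection of two overlapping clusters, i.e.\ $B$ contains a \emph{second} overlap-centre $C_2\neq C$ (a cluster with $\mathcal{B}^0(C_2)\neq\emptyset$) that is disjoint from $C$ and, like $C$, a $\subseteq$-minimal vertex of $B$. Excluding this is the main obstacle, and I plan to attack it with property~(L). The centres $C$ and $C_2$ lie on a common undirected cycle $K$ in $B$ (Cor.~\ref{cor:block-cycle}); on the $K$-arc between them there is a topmost vertex $m$ whose two $K$-neighbours $a,b$ are distinct children of $m$ with $C\subseteq a$ and $C_2\subseteq b$. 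Distinct children are $\subseteq$-incomparable in the shortcut-free $\Hasse$, so $a$ and $b$ either overlap or are disjoint. If they overlap, (L) forces the common intersection of $a$ with its overlap partners to equal that of $b$, so $a$ and $b$ determine the \emph{same} centre and $C=C_2$, a contradiction. The residual disjoint sub-case I intend to close by invoking the up-chain principle (so that $\Top(C)$ and $\Top(C_2)$ themselves lie on $K$) together with disjointness of block interiors (Fact~\ref{obs:block-identity}, Lemma~\ref{lem:disjointB0}), reducing it once more to a violation of the uniqueness of $\max B$ (Lemma~\ref{lem:max-B-unique}). Making this final disjoint configuration yield a clean contradiction is where I expect the genuine difficulty to lie.
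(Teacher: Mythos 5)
Your overall strategy coincides with the paper's (biconnectivity from Lemma~\ref{lem:biconnected}, then maximality by ruling out boundary arcs of the unique block $B$ containing $\Hasse[\mathcal{B}(C)]$, using Lemma~\ref{lem:DD} and the cut-vertex Lemma~\ref{lem:cutvertex}), and your cases $F=C$ and ``$F\in\mathcal{B}^0(C)$ with $E$ overlapping no cluster'' are essentially correct, up to a fixable imprecision: you should run the cut-vertex argument on a cycle through the \emph{arc} $\{F,E\}$ (which exists since $B$ is a non-trivial block), not on ``any cycle of $B$ through $E$'' --- a cycle through $E$ could use two children of $E$ and then never meets $\overline{\mathcal{D}}(E)$; what you actually obtain is that $\{F,E\}$ is a cut arc and hence not an arc of $B$. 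The genuine gap is the remaining sub-case $F\in\mathcal{B}^0(C)$ with $E$ overlapping some cluster. There Lemma~\ref{lem:DD}(iii) gives $C'$ (your $C_2$) with $E\in\mathcal{B}^0(C')$ and $\Top(C')=F$, and you then assert that block-sharing ``forces $\max B=\Top(C')=F$''. That step is a non sequitur: Obs.~\ref{obs:identical-block} only yields $\mathcal{B}(C')\subseteq V(B)$ and says nothing about $\max B$; on the contrary, $\Top(C)\in V(B)$ forces $F\subsetneq\Top(C)\subseteq\max B$, so the configuration is entirely consistent with what you have derived and no contradiction appears. This defect propagates into your $F=\Top(C)$ case, since ``pinning $\max B=\Top(C)$'' explicitly relies on ``the previous paragraph'', i.e., on the broken sub-case. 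Your final fallback does not close the hole either: on an undirected cycle in a DAG the stretch from the topmost vertex $m$ to $C$ is not monotone, so nothing gives $C\subseteq a$ and $C_2\subseteq b$; and you yourself leave the disjoint-centres configuration open --- which is exactly the case that has to be excluded.

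The missing idea is the paper's detour through a set that is \emph{not} a cluster. With $C'$ and $F=\Top(C')$ as above, the paper forms $U\coloneqq U(C')=\bigcup_{D_1\in\mathcal{B}^0(C')}D_1$ and proves (via Lemma~\ref{lem:B0}, Cor.~\ref{cor:sandwich}, Lemma~\ref{lem:disjointB0}, and Cor.~\ref{cor:B0-C-no-overlap}) that $U\cap C=\emptyset$ and $U\subsetneq F$, whence $U\notin\mathscr{C}$. By Lemma~\ref{lem:U-nix-overlap}, $\mathscr{C}^*\coloneqq\mathscr{C}\cup\{U\}$ is again a closed clustering system satisfying (L), and $\Hasse[\mathscr{C}^*]$ arises from $\Hasse$ by inserting $U$ as a child of $F$ and re-attaching the clusters of $\mathcal{B}^0(C')$ below $U$. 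Applying Lemma~\ref{lem:cutvertex} to $U$ in $\Hasse[\mathscr{C}^*]$ shows that no cycle there meets both $\mathcal{D}(U)$, which contains $E$, and $\overline{\mathcal{D}}(U)$, which contains every cluster containing $C$; since $\Hasse$ is recovered by contracting the arc $(F,U)$, no cycle of $\Hasse$ can pass through both $E$ and a vertex of $\mathcal{B}(C)$, contradicting the existence of a cycle of $B$ through the arc $\{F,E\}$. In short, excluding two disjoint ``overlap centres'' from a common block needs this separation argument (or an equivalent one); the block-maximum bookkeeping your proposal uses at the critical point cannot deliver it. Incidentally, the paper sidesteps your $F=\Top(C)$ cases entirely by taking a maximal subpath $P$ of the offending cycle inside $\mathcal{B}(C)$ and choosing an endpoint of $P$ distinct from $\Top(C)$ --- a simplification worth adopting.
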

\begin{proof}
  By Lemma~\ref{lem:biconnected}, $\Hasse[\mathcal{B}(C)]$ is
  biconnected. Therefore and since $\mathcal{B}^0(C)\ne\emptyset$, the set
  $\mathcal{B}(C)$ contains at least four clusters, i.e., $C$, $\Top(C)$,
  and at least two overlapping clusters in $\mathcal{B}^0(C)$.  Thus, it
  only remains to show that $\Hasse[\mathcal{B}(C)]$ is a maximal
  biconnected subgraph of $\Hasse$.  Since moreover, by
  Cor.~\ref{cor:inducedsubgraph}, $\Hasse[\mathcal{B}(C)]$ is an induced
  subgraph of $\Hasse$, $\Hasse[\mathcal{B}(C)]$ is a maximal if and only
  if there is no undirected cycle in $\Hasse$ that contains an arc of
  $\Hasse[\mathcal{B}(C)]$ and a vertex not contained in $\mathcal{B}(C)$
  (cf.\ Obs.\ \ref{obs:identical-block}).  Assume, for contradiction, that
  such a cycle $K$ exists. Since $K$ contains at least one arc of
  $\Hasse[\mathcal{B}(C)]$, we can find a maximal subpath $P$ of $K$ on at
  least two vertices and where all vertices of $P$ are contained in
  $\mathcal{B}(C)$.  In particular, the two distinct endpoints of $P$ are
  both incident with one cluster in $\mathcal{B}(C)$ and one cluster that
  is not in $\mathcal{B}(C)$. Clearly, at least one of the endpoints of $P$
  must be distinct from $\Top(C)$.  Hence, we can pick an endpoint $D\in
  \mathcal{B}(C) \setminus\{\Top(C)\}$ of $P$ that is adjacent to
  $C'\in\mathcal{B}(C)$ and $D'\notin\mathcal{B}(C)$, where both $C'$ and
  $D'$ are vertices in $K$.  Therefore, it suffices to consider the two
  mutually exclusive cases (a) $D=C$ and (b) $D\in\mathcal{B}^0(C)$:

  (a) $D=C$. Hence, $C'\neq C$ and thus, by Cor.~\ref{cor:sandwich},
  $C'\in\mathcal{B}(C)\setminus\{C\}$ implies $D=C\subsetneq C'$ and thus,
  $C'\in\overline{\mathcal{D}}(C)$ (cf.\ Equ.\ \eqref{eq:D}).  Suppose, for
  contradiction, that $C$ overlaps with some cluster
  $D''\in\mathscr{C}$. Then, since $\mathscr{C}$ is closed, we have $C\in
  \mathcal{B}^0(E)$ for $E=C\cap C''\in\mathscr{C}$. However, this together
  with Lemma~\ref{lem:disjointB0} implies $\mathcal{B}^0(C)\ne\emptyset$; a
  contradiction. Hence, $C$ does not overlap any cluster.  Furthermore, by
  Lemma~\ref{lem:multiple-inneighbors}, all in-neighbors of $C$ are
  contained in $\mathcal{B}^0(C)\subsetneq \mathcal{B}(C)$.  Therefore,
  $D'$ must be an out-neighbor of $C$ and thus $D'\subsetneq C$, which
  implies $D'\in\mathcal{D}(C)$.  Hence, we can apply
  Lemma~\ref{lem:cutvertex} to conclude that there is no cycle $K$
  containing $D'\in\mathcal{D}(C)$ and $C'\in\overline{\mathcal{D}}(C)$; a
  contradiction.

  (b) $D\in\mathcal{B}^0(C)$. By Lemma~\ref{lem:DD}, $D$ is the unique
  in-neighbor of $D'$. However, since $D'$ is located on the cycle $K$, it
  must be adjacent to another vertex $D''\neq D$ in $K$.  Since $D$ is the
  unique in-neighbor of $D'$ it follows that $D''$ must be an out-neighbor
  of $D'$ and thus, $D''\subsetneq D'$. By construction,
  $D''\in\mathcal{D}(D')$ and $D\in\overline{\mathcal{D}}(D')$.  If $D'$
  does not overlap any cluster in $\mathscr{C}$, then we can apply
  Lemma~\ref{lem:cutvertex} to conclude that there is no cycle $K$ in
  $\Hasse$ containing $D''\in\mathcal{D}(D')$ and
  $D\in\overline{\mathcal{D}}(D')$; a contradiction. Hence, $D'$ must
  overlap with some cluster in $\mathscr{C}$.  Then Lemma~\ref{lem:DD}(iii)
  implies that there is $E\in\mathscr{C}$ such that $D'\in\mathcal{B}^0(E)$
  and $D=\Top(E)$.  In particular, since $D\ne \Top(C)$, we have $D\ne E$.
  Moreover, by Lemma~\ref{lem:DD}(ii), we have $D_1\cap C=\emptyset$ for
  all children $D_1$ of $\Top(E)=D\in \mathcal{B}^0(C)$ with
  $D_1\notin\mathcal{B}^0(C)$.  In particular, $C\subsetneq D$, and thus we
  have $U\coloneqq U(E)=\bigcup_{D_1\in\mathcal{B}^0(E)}D_1 \subsetneq
  \Top(E)=D$. On the other hand, we have $D_1\subsetneq U$ for each of the
  children of $\Top(E)$. Since $C\subsetneq D$, $D$ has at least one child
  $F$ such that $C\subseteq F$.
  We continue with showing that $C\cap D_1=\emptyset$ for all
  $D_1\in\mathcal{B}^0(E)$. Hence, let $D_1\in\mathcal{B}^0(E)$. By
  Cor.~\ref{cor:B0-C-no-overlap}, $C$ does not overlap with any cluster in
  $\mathscr{C}$. In particular, this yield $C\ne D_1\in \mathcal{B}^0(E)$
  and $C$ and $D_1$ do not overlap.  The case $C\subsetneq D_1$ is not
  possible since otherwise $C\subsetneq D_1 \subsetneq \Top(E)=D\subsetneq
  \Top(C)$ and Cor.~\ref{cor:sandwich} would imply that
  $D_1\in\mathcal{B}^0(C)$. Together with Lemma~\ref{lem:disjointB0}, this
  would imply $C=E$; a contradiction. Now suppose $D_1\subsetneq C$. Thus,
  we have $E\subsetneq D_1\subsetneq C\subsetneq D = \Top(E)$. By
  Cor.~\ref{cor:sandwich}, this implies $C\in \mathcal{B}^0(E)$. However,
  this is not possible because $C$ does not overlap with any other cluster.
  Hence, $C\cap D_1=\emptyset$ must hold for all $D_1\in\mathcal{B}^0(E)$.
  Therefore, we obtain $U\cap C=\emptyset$. Together with $U\subseteq D$
  and $C\subsetneq D$, this implies $U\subsetneq D=\Top(E)$.  To summarize,
  since $\mathscr{C}$ is closed, it holds by definition that $\cl(U)=U \iff
  U\in \mathscr{C}$.  The latter arguments taken together with
  $\Top(E)=\cl(U)$ imply $U\notin\mathscr{C}$.  consider
  $\mathscr{C}^*\coloneqq \mathscr{C}\cup\{U\}$. Clearly, the Hasse diagram
  $\Hasse^*$ of $\mathscr{C}^*$ is obtained from $\Hasse$ by inserting a
  extra vertex $U$ as child of $D=\Top(E)$ and re-attaching the children
  $D_1$ of $\Top(E)$ with $D_1\in\mathcal{B}^0(E)$ in $\Hasse$ as children
  of $U$ in $\Hasse^*$, while the children $D_2$ of $\Top(C)$ with
  $D_2\notin\mathcal{B}^0(E)$ remain attached to $D$.  In $\mathscr{C}^*$
  we therefore have $\Top(E)=U$. Since $U$ does not overlap any set in
  $\mathscr{C}$ by Lemma~\ref{lem:U-nix-overlap}, $\mathscr{C}^*$ is again
  a closed clustering system and satisfies (L).  Moreover, since
  $U\subsetneq D$ we have $U\neq X$ and since $\mathcal{B}^0(E)\neq
  \emptyset$ we have $\vert U\vert>1$. Hence, we can apply
  Lemma~\ref{lem:cutvertex}
  to conclude that $U$ is a cut vertex in $\Hasse^*$ and that there is no
  cycle in $\Hasse^*$ containing both a vertex in $\mathcal{D}(U)$ and in
  $\overline{\mathcal{D}}(U)$.  Since $C'\in\mathcal{B}(C)$, we have
  $C\subseteq C'$, which together with $U\cap C=\emptyset$ implies that
  $C'\in \overline{\mathcal{D}}(U)$. Furthermore, $D'\in\mathcal{B}^0(E)$
  implies that $D\subsetneq U$ and thus, $D' \in \mathcal{D}(U)$. Taking
  the latter arguments together, there is no cycle in $\Hasse^*$ that
  contains both $C'$ and $D'$.  Since $\Hasse$ is recovered from $\Hasse^*$
  by ``contracting'' the arc $UD$, there is no cycle in $\Hasse$ that
  contains both $C'$ and $D'$; a contradiction.
\end{proof}

\begin{lemma}
  \label{lem:B-C-trivial-blocks}
  Let $\mathscr{C}$ be a closed clustering system on $X$ satisfying (L). If
  $C\in \mathscr{C}\setminus \{X\}$, $\mathcal{B}^0(C)=\emptyset$, and
  $C\notin\mathcal{B}^0(C')$ for all $C'\in\mathscr{C}$, then the arc
  $(\Top(C),C)$ is a block in $\Hasse[\mathscr{C}]$.
\end{lemma}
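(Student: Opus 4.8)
The plan is to show that $(\Top(C),C)$ is a \emph{cut arc} of $\Hasse\coloneqq\Hasse[\mathscr{C}]$, i.e.\ that it lies on no undirected cycle. This suffices: by Obs.~\ref{obs:biConn-arc-disjoint} the arc lies in a unique block, and this block must be the trivial one consisting of the arc itself, since a non-trivial block is biconnected, hence $2$-edge-connected, so every one of its arcs would lie on an undirected cycle. First I would record the local structure at $C$: because $\mathcal{B}^0(C)=\emptyset$ and $C\ne X$, Lemma~\ref{lem:multiple-inneighbors} gives $\indeg_{\Hasse}(C)=1$, and the unique in-neighbor is the unique inclusion-minimal cluster properly containing $C$, which is exactly $\Top(C)$. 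Thus $(\Top(C),C)$ is genuinely an arc of $\Hasse$ and $\Top(C)$ is the only in-neighbor of $C$.

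The key step is to verify that $C$ overlaps \emph{no} cluster of $\mathscr{C}$; this is where the third hypothesis enters. Assume for contradiction that $C$ overlaps some $D\in\mathscr{C}$. Then $C\cap D\ne\emptyset$, so by Lemma~\ref{lem:simple-closed} (closedness) we get $E\coloneqq C\cap D\in\mathscr{C}$, and since $C$ and $D$ overlap, $E\ne C$ and $E\ne D$. By Eq.~\eqref{eq:B0(C)} this means $C\in\mathcal{B}^0(E)$, contradicting the assumption $C\notin\mathcal{B}^0(C')$ for all $C'\in\mathscr{C}$. I note that $\mathcal{B}^0(C)=\emptyset$ by itself is not enough to force non-overlap (contrast Cor.~\ref{cor:B0-C-no-overlap}, which covers the case $\mathcal{B}^0(C)\ne\emptyset$), so the extra hypothesis is essential here.

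Finally I would exclude cycles through the arc. Suppose an undirected cycle $K$ contained $(\Top(C),C)$; then $C$ lies on $K$ with two distinct incident vertices, one being $\Top(C)$. Since $\Top(C)$ is the only in-neighbor of $C$, the other must be an out-neighbor $D$, i.e.\ $D\in\mathcal{D}(C)$ (cf.\ Eq.~\eqref{eq:D}). If $C$ is a singleton then $\mathcal{D}(C)=\emptyset$, so no such $D$ exists and no cycle can pass through $C$ at all. If $|C|>1$, then $K$ meets both $\mathcal{D}(C)$ (at $D$) and $\overline{\mathcal{D}}(C)$ (at $\Top(C)$, as $C\subsetneq\Top(C)$); but $C$ overlaps no cluster, so Lemma~\ref{lem:cutvertex} forbids precisely such a cycle, a contradiction. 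Hence $(\Top(C),C)$ lies on no cycle and is a block. The point demanding the most care is the singleton case, where $\mathcal{D}(C)=\emptyset$ renders the cut-vertex conclusion of Lemma~\ref{lem:cutvertex} vacuous and one must argue separately via the undirected degree of $C$; I expect this split, rather than any genuine difficulty, to be the main obstacle.
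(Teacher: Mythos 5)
Your proof is correct and follows essentially the same route as the paper's: you establish that $C$ overlaps no cluster (via closedness and $C\notin\mathcal{B}^0(C')$ for all $C'$), use Lemma~\ref{lem:multiple-inneighbors} to see that $\Top(C)$ is the unique in-neighbor of $C$, and then invoke Lemma~\ref{lem:cutvertex} to rule out any cycle through $(\Top(C),C)$, which would have to meet both $\mathcal{D}(C)$ and $\overline{\mathcal{D}}(C)$. Your explicit singleton case split and the spelled-out overlap argument are just finer-grained versions of steps the paper compresses into single sentences, so there is no substantive difference.
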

\begin{proof}
  We show that the arc $(\Top(C),C)$ is not contained in any cycle in
  $\Hasse$.  Since $\mathscr{C}$ is closed and $C\notin\mathcal{B}^0(C')$
  for all $C'\in\mathscr{C}$, we know that $C$ does not overlap any
  cluster. By Lemma \ref{lem:cutvertex}, there is no cycle that intersects
  both $\mathcal{D}(C)$ and $\overline{\mathcal{D}}(C)$.  Since
  $C\subsetneq \Top(C)$, we have $\Top(C)\in \overline{\mathcal{D}}(C)$.
  Furthermore, Lemma~\ref{lem:multiple-inneighbors} and
  $\mathcal{B}^0(C)=\emptyset$ imply that $\Top(C)$ is the only in-neighbor
  of $C$ in $\Hasse[\mathscr{C}]$. Therefore, any cycle that contains
  $(\Top(C),C)$ must contain some child $C'$ of $C$. Clearly, $C'\in
  \mathcal{D}(C)$ and thus such a cycle cannot exist as it would intersect
  both $\mathcal{D}(C)$ and $\overline{\mathcal{D}}(C)$.  Hence,
  $(\Top(C),C)$ is a cut arc, and thus a block.
\end{proof}

We summarize Lemmas~\ref{lem:B-C-nontrivial-blocks}
and~\ref{lem:B-C-trivial-blocks} in
\begin{proposition}
  \label{prop:blocks-in-Hasse}
  Let $\mathscr{C}$ be a closed clustering system on $X$ satisfying (L) and
  with Hasse diagram $\Hasse$. Then, $B$ is a block of $\Hasse$ if and only
  if $\vert X\vert=1$ or $\vert X\vert>1$ and $B= \Hasse[\mathcal{B}(C)]$ for
  some $C\in
  \mathscr{C}$ that satisfies either (i) $\mathcal{B}^0(C)\ne\emptyset$ or
  (ii) $C\neq X$ does not overlap any cluster and
  $\mathcal{B}^0(C)=\emptyset$.  If $\vert X\vert=1$ or in Case (ii) $B$ is a
  trivial block and, otherwise, in Case (i) a non-trivial one.
\end{proposition}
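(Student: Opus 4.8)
The plan is to exploit that in a connected graph the blocks partition the arc set: by Obs.~\ref{obs:biConn-arc-disjoint} distinct blocks are arc-disjoint, and since every single arc is itself biconnected, each arc lies in exactly one block. Hence, once I know that each listed subgraph $\Hasse[\mathcal{B}(C)]$ is genuinely a block, it suffices to show that \emph{every} arc of $\Hasse$ is contained in one of them; uniqueness of the block through a given arc then forces an arbitrary block to coincide with a listed one. The degenerate case $\vert X\vert=1$ is immediate: here $\mathscr{C}=\{\{x\}\}$, so $\Hasse$ is a single vertex whose only (trivial) block is the vertex itself. So assume $\vert X\vert>1$; then $\Hasse$ is connected with root $X$ and has no isolated vertex, so every block contains at least one arc.

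For the \emph{if}-direction I would simply invoke the two preceding lemmas. If $\mathcal{B}^0(C)\ne\emptyset$, then Lemma~\ref{lem:B-C-nontrivial-blocks} says $\Hasse[\mathcal{B}(C)]$ is a non-trivial block (Case~(i)). If instead $C\neq X$ does not overlap any cluster and $\mathcal{B}^0(C)=\emptyset$ (Case~(ii)), I first note that $C\notin\mathcal{B}^0(C')$ for every $C'\in\mathscr{C}$, because membership in some $\mathcal{B}^0(C')$ would force $C$ to overlap a cluster; Lemma~\ref{lem:B-C-trivial-blocks} then yields that $(\Top(C),C)$ is a block, and since $\mathcal{B}(C)=\{C,\Top(C)\}$ in this case, this block is exactly $\Hasse[\mathcal{B}(C)]$.

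The substance is the \emph{only-if} direction, for which I would fix an arbitrary arc $(D,C)$ of $\Hasse$ (so $C\subsetneq D$) and locate a listed block containing it, arguing by cases on $\mathcal{B}^0(C)$. If $\mathcal{B}^0(C)\neq\emptyset$, Lemma~\ref{lem:multiple-inneighbors} places all in-neighbors of $C$ in $\mathcal{B}^0(C)$, so $D\in\mathcal{B}(C)$ and, since $\Hasse[\mathcal{B}(C)]$ is an induced subgraph (Cor.~\ref{cor:inducedsubgraph}), the arc $(D,C)$ lies in the non-trivial block $\Hasse[\mathcal{B}(C)]$. If $\mathcal{B}^0(C)=\emptyset$, then by Lemma~\ref{lem:multiple-inneighbors} $C$ has indegree at most one; as it has the in-neighbor $D$ it has indegree exactly one, whence $C\neq X$ and $D=\Top(C)$. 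Here I split further: if $C$ overlaps no cluster, then $C$ satisfies Case~(ii) and $(D,C)=(\Top(C),C)$ is the trivial block provided by Lemma~\ref{lem:B-C-trivial-blocks}; otherwise $C$ overlaps some $C'$, and closedness (Lemma~\ref{lem:simple-closed}) gives $E\coloneqq C\cap C'\in\mathscr{C}$ with $C,C'\in\mathcal{B}^0(E)$, so $\mathcal{B}^0(E)\neq\emptyset$ and, by Cor.~\ref{cor:B0-indegree1}, the unique in-neighbor $D$ of $C$ lies in $\mathcal{B}(E)$; thus $(D,C)$ lies in the non-trivial block $\Hasse[\mathcal{B}(E)]$. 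In every case the arc sits inside a listed block, which completes the argument.

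I expect the main obstacle to be organizational rather than computational: one has to notice that working with the \emph{lower} endpoint $C$ of an arc is what makes the case distinction clean, and to keep straight that the relevant defining cluster is sometimes $C$ itself and sometimes the intersection $E=C\cap C'$. The remaining care lies in justifying the reduction ``every arc in a listed block $\Rightarrow$ every block is listed'', which rests on the standard fact that blocks partition the arcs together with the already-established claims that the listed subgraphs are themselves blocks.
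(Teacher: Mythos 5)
Your proof is correct, and it reaches the conclusion by a different decomposition than the paper, although it draws on the same key lemmas. The paper's only-if direction cases on the \emph{type of an arbitrary block}: for a trivial block consisting of an arc $(C',C)$ it shows by two separate contradiction arguments that $\mathcal{B}^0(C)=\emptyset$ and that $C$ overlaps no cluster; for a non-trivial block it takes an undirected cycle, uses the DAG property to find a cluster $C$ on the cycle with two in-neighbors, obtains $\mathcal{B}^0(C)\ne\emptyset$ from Lemma~\ref{lem:multiple-inneighbors}, and concludes $B=\Hasse[\mathcal{B}(C)]$ via Obs.~\ref{obs:biConn-arc-disjoint}. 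You instead reduce everything to showing that \emph{every arc} of $\Hasse$ lies in a listed block --- a legitimate reduction, since for $\vert X\vert>1$ the connected graph $\Hasse$ has no singleton blocks, so every block contains an arc, and distinct blocks are arc-disjoint --- and then run a single positive case analysis on the head $C$ of an arbitrary arc $(D,C)$: if $\mathcal{B}^0(C)\ne\emptyset$ the arc sits in $\Hasse[\mathcal{B}(C)]$ by Lemma~\ref{lem:multiple-inneighbors} and Cor.~\ref{cor:inducedsubgraph}; if $\mathcal{B}^0(C)=\emptyset$ and $C$ overlaps nothing the arc is the trivial block $(\Top(C),C)$ of Lemma~\ref{lem:B-C-trivial-blocks}; and if $C$ overlaps some $C'$, closedness gives $E=C\cap C'\in\mathscr{C}$ with $C\in\mathcal{B}^0(E)$, and Cor.~\ref{cor:B0-indegree1} places the arc inside the non-trivial block $\Hasse[\mathcal{B}(E)]$. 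What your route buys is a transparent uniqueness step and the elimination of both contradiction arguments as well as the cycle-sink argument; what the paper's route buys is that it directly exhibits the defining cluster of a \emph{given} block (for a non-trivial block, the hybrid vertex on its cycle), which is slightly more constructive. Your side remarks --- that ``$C$ overlaps no cluster'' yields the hypothesis $C\notin\mathcal{B}^0(C')$ for all $C'$ required by Lemma~\ref{lem:B-C-trivial-blocks}, and that the unique in-neighbor of $C$ equals $\Top(C)$ when $\mathcal{B}^0(C)=\emptyset$ --- are exactly the translations needed to make the cases match the statement, so I see no gap.
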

\begin{proof}
  If $\vert X\vert=1$, then $B= \Hasse[\mathcal{B}(C)] = \Hasse $ consists a
  single
  vertex only and is, therefore, a block of $\Hasse$.  Assume that $\vert
  X\vert>1$.
  By Lemma~\ref{lem:B-C-nontrivial-blocks} and
  Lemma~\ref{lem:B-C-trivial-blocks}, each subgraph
  $\Hasse[\mathcal{B}(C)]$ with $\mathcal{B}^0(C)\ne\emptyset$ is a
  non-trivial block and $\Hasse[\mathcal{B}(C)]$ for which $C\in
  \mathscr{C}\setminus \{X\}$ does not overlap any cluster and
  $\mathcal{B}^0(C)=\emptyset$ is a trivial block of the Hasse diagram
  $\Hasse$.

  For the converse, suppose first that $B$ is a trivial block of $\Hasse$,
  i.e., it only consists of the single vertex $C$ or the single arc
  $(C',C)$.  In the first case, we have $\vert X\vert=1$. Otherwise,
  $\Hasse$ consists of $(C',C)$ and hence $\vert X\vert>1$. Moreover, we
  have $C\subsetneq C'\subseteq X$ and thus $C\in \mathscr{C}\setminus
  \{X\}$. If $\mathcal{B}^0(C)\ne\emptyset$, then, by
  Lemma~\ref{lem:multiple-inneighbors}, $C'\in\mathcal{B}^0(C)\subsetneq
  \mathcal{B}(C)$.  Moreover, $\Hasse[\mathcal{B}(C)]$ is a non-trivial
  block of $\Hasse$ by Lemma~\ref{lem:B-C-nontrivial-blocks}.  In
  particular, the arc $(C',C)$ is contained in this block, contradicting
  that $(C',C)$ forms a trivial block. Hence, we have $\mathcal{B}^0(C) =
  \emptyset$. Assume, for contradiction, that $C$ overlaps with some
  cluster $C''\in\mathscr{C}$. Then, by closedness of $\mathscr{C}$,
  $C\in\mathcal{B}^0(D)$ for some $D\in\mathscr{C}$.  Then, by
  Cor.~\ref{cor:B0-indegree1}, $C'$ is the unique in-neighbor of
  $C\in\mathcal{B}^0(D)$ in $\Hasse$ and $C'\in\mathcal{B}(D)$.  Hence, $C$
  and $C'$ are contained in $\Hasse[\mathcal{B}(D)]$, which is non-trivial
  as a consequence of $C\in\mathcal{B}^0(D)$ and
  Lemma~\ref{lem:B-C-nontrivial-blocks}.  This again contradicts that
  $(C',C)$ forms a trivial block. In summary, we have $C\in
  \mathscr{C}\setminus \{X\}$, $\mathcal{B}^0(C)=\emptyset$, and $C$ does
  not overlap any cluster.  Suppose now that $B$ is a non-trivial block of
  $\Hasse$.  Hence, $\vert X\vert>1$ and $B$ contains an undirected cycle
  $K$ on at least $3$ clusters.  Since $\Hasse$ is a DAG, $K$ contains at
  least one cluster $C$ with two in-neighbors $C'$ and $C''$ in $K$ (and
  thus in $\Hasse$).  By Lemma~\ref{lem:multiple-inneighbors}, we have
  $C',C''\in\mathcal{B}^0(C)$.  Therefore,
  Lemma~\ref{lem:B-C-nontrivial-blocks} implies that
  $\Hasse[\mathcal{B}(C)]$ is a non-trivial block of $\Hasse$.  In
  particular, $\Hasse[\mathcal{B}(C)]$ contains the arcs $C'C$ and $C''C$,
  which are also arcs in $B$.  By Obs.~\ref{obs:biConn-arc-disjoint}, we
  therefore obtain $B=\Hasse[\mathcal{B}(C)]$.
\end{proof}

\subsection{Characterization of Clustering Systems of Level-1 Networks}
\label{ssec:level1-char}

We start with showing that a regular network is level-1 provided that its
clustering is closed and satisfied (L).
\begin{proposition}
  \label{prop:Hasse-is-level-1}
  Let $\mathscr{C}$ be a closed clustering system on $X$ satisfying (L).
  Then the Hasse diagram $\Hasse$ of $\mathscr{C}$ is a phylogenetic
  level-1 network with leaf set $X_{\Hasse}\coloneqq \{ \{x\} \mid x \in X
  \}$.
\end{proposition}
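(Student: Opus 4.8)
The plan is to observe that, by Lemma~\ref{lem:Hasse-is-Network}, the Hasse diagram $\Hasse$ of any clustering system $\mathscr{C}$ is already a phylogenetic network with leaf set $X_{\Hasse}$; hence the only genuinely new content is to verify that $\Hasse$ is level-$1$. By Def.~\ref{def:level-k} this amounts to showing that every block $B$ of $\Hasse$ contains at most one hybrid vertex distinct from $\max B$. Since a hybrid vertex is precisely a vertex of indegree greater than one, and since Prop.~\ref{prop:blocks-in-Hasse} gives an explicit description of all blocks of $\Hasse$, I would proceed by running through the two block types and counting their hybrid vertices using Lemma~\ref{lem:multiple-inneighbors} and Cor.~\ref{cor:B0-indegree1}.

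First I would dispose of the trivial blocks. If $\vert X\vert=1$, then $\Hasse$ is a single vertex and the claim is immediate. Otherwise, by Prop.~\ref{prop:blocks-in-Hasse}, a trivial block is either a single vertex or an arc $(\Top(C),C)$ with $\mathcal{B}^0(C)=\emptyset$; in the latter case Lemma~\ref{lem:multiple-inneighbors} gives $\indeg_{\Hasse}(C)\le 1$, so $C$ is not a hybrid vertex and such a block contains no hybrid vertex at all.

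For the non-trivial blocks, Prop.~\ref{prop:blocks-in-Hasse} tells me that each of them has the form $B=\Hasse[\mathcal{B}(C)]$ for some $C\in\mathscr{C}$ with $\mathcal{B}^0(C)\ne\emptyset$, whose vertex set is $\mathcal{B}(C)=\mathcal{B}^0(C)\cup\{C,\Top(C)\}$. By Lemma~\ref{lem:biconnected}, $\Top(C)$ is the unique source of $B$, hence $\max B=\Top(C)$, while $C\subsetneq\Top(C)$ by Lemma~\ref{lem:BCinClusterSystem}, so $C\ne\max B$. Now I would classify the remaining vertices: every $C'\in\mathcal{B}^0(C)$ has a unique in-neighbor in $\Hasse$ by Cor.~\ref{cor:B0-indegree1}, hence $\indeg_{\Hasse}(C')=1$ and $C'$ is not a hybrid vertex; on the other hand $\mathcal{B}^0(C)\ne\emptyset$ forces $\indeg_{\Hasse}(C)>1$ by Lemma~\ref{lem:multiple-inneighbors}, so $C$ is a hybrid vertex. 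Thus $C$ is the single hybrid vertex of $B$ distinct from $\max B$, and $B$ properly contains exactly one hybrid vertex.

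Since both the trivial and the non-trivial blocks of $\Hasse$ contain at most one hybrid vertex distinct from $\max B$, Def.~\ref{def:level-k} yields that $\Hasse$ is level-$1$, completing the argument. The proof is therefore a short assembly of the block description in Prop.~\ref{prop:blocks-in-Hasse} with the indegree bookkeeping of Lemma~\ref{lem:multiple-inneighbors} and Cor.~\ref{cor:B0-indegree1}; all the real difficulty has already been absorbed into establishing those structural results, so I do not anticipate a substantive obstacle here, beyond ensuring that the indegree of $C$ computed in $\Hasse$ is indeed the quantity relevant to the level count. This last point holds because all in-neighbors of $C$ lie in $\mathcal{B}^0(C)\subseteq V(B)$, so $C$ is properly contained in $B$ in the sense of Lemma~\ref{lem:properly-contained}, and no in-neighbor is lost when passing from $\Hasse$ to the induced block.
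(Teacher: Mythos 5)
Your proof is correct and follows essentially the same route as the paper's: Lemma~\ref{lem:Hasse-is-Network} for the phylogenetic part, Prop.~\ref{prop:blocks-in-Hasse} to enumerate the blocks, and Lemma~\ref{lem:multiple-inneighbors} together with Cor.~\ref{cor:B0-indegree1} to identify $C$ as the unique hybrid vertex distinct from $\max B=\Top(C)$ in a non-trivial block $\Hasse[\mathcal{B}(C)]$ (the paper cites Cor.~\ref{cor:sandwich} rather than Lemma~\ref{lem:biconnected} for the identification $\Top(C)=\max B$, an immaterial difference). One cosmetic slip: your claim that a trivial block $(\Top(C),C)$ contains ``no hybrid vertex at all'' can fail, since $\Top(C)=\max B$ may itself be a hybrid vertex of $\Hasse$; however, the statement actually needed---at most one hybrid vertex distinct from $\max B$---holds trivially for any block on at most two vertices, as the paper observes, so this does not affect the argument.
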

\begin{proof}
  By Lemma~\ref{lem:Hasse-is-Network}, $\Hasse$ is a phylogenetic network
  with leaf set $X_{\Hasse}\coloneqq \{ \{x\} \mid x \in X \}$.  To show
  that $\Hasse$ is level-1, we have to demonstrate that each block $B$ of
  $\Hasse$ contains at most one hybrid vertex that is distinct from the
  unique maximum $\max B$.  This holds trivially if $B$ is a trivial block
  consisting of a single arc or, if $\vert X\vert=1$, an isolated vertex.
  Now suppose that $B$ is a non-trivial block, and thus, by
  Prop.~\ref{prop:blocks-in-Hasse}, it contains exactly the clusters in
  $\mathcal{B}(C)$ for some $C\in\mathscr{C}$ with
  $\mathcal{B}^0(C)\ne\emptyset$.  By Lemma~\ref{lem:multiple-inneighbors},
  $C$ is a hybrid vertex.  From Cor.~\ref{cor:sandwich} and the
  construction of the Hasse diagram, we conclude that $\Top(C)=\max B$.  By
  Cor.~\ref{cor:B0-indegree1}, none of the clusters in $\mathcal{B}^0(C)$
  is a hybrid vertex.  Hence, $C$ is the only hybrid vertex in
  $\mathcal{B}(C)=\mathcal{B}^0(C)\cup\{C,\Top(C)\}$ that is distinct from
  $\Top(C)=\max B$.
\end{proof}

\begin{corollary}
  \label{cor:closed-L-imply-level-1}
  For every closed clustering system $\mathscr{C}$ on $X$ that satisfies
  (L), there is a level-1 phylogenetic network $N$ such that
  $\mathscr{C}_N=\mathscr{C}$.
  In particular, the unique regular network with clustering system
  $\mathscr{C}$ is level-1 and phylogenetic in this case.
\end{corollary}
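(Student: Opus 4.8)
The plan is to deduce the corollary almost immediately from the preceding proposition together with the construction of the unique regular network. First I would invoke Prop.~\ref{prop:Hasse-is-level-1} to conclude that the Hasse diagram $\Hasse[\mathscr{C}]$ is a phylogenetic level-1 network with leaf set $X_{\Hasse}\coloneqq\{\{x\}\mid x\in X\}$. This already settles the structural heart of the matter; what remains is to pass from the Hasse diagram, whose leaves are the singletons $\{x\}$, to an honest network on the leaf set $X$ whose clustering system equals $\mathscr{C}$ exactly, and to check that the level-1 and phylogenetic properties survive this passage.

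Next I would appeal to Prop.~\ref{prop:regular-unique}, which supplies the unique regular network $N$ with $\mathscr{C}_N=\mathscr{C}$. By the construction used in its proof (and the remark following it, which justifies writing $\Hasse[\mathscr{C}]$ for this regular network), $N$ is obtained from the Hasse diagram $\Hasse[\mathscr{C}]$ simply by relabeling each leaf $\{x\}$ with the element $x$ it contains. This relabeling is a graph isomorphism: it is a bijection on the vertex sets that preserves all arcs and only renames the $\preceq_N$-minimal vertices.

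The key point to verify is then that the graph-theoretic properties established for $\Hasse[\mathscr{C}]$ transfer to $N$ under this isomorphism. Both being \emph{phylogenetic} (condition (N2), a statement purely about in- and outdegrees) and being \emph{level-1} (a statement about the number of hybrid vertices properly contained in each block) are invariant under graph isomorphism, since they refer only to degrees, blocks, and hybrid vertices. Hence $N$ is again a phylogenetic level-1 network, and by construction $\mathscr{C}_N=\mathscr{C}$. This simultaneously proves the existence claim and the sharper assertion that the unique regular network of $\mathscr{C}$ is itself phylogenetic and level-1.

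I expect no genuine obstacle here: the substantive argument is carried entirely by Prop.~\ref{prop:Hasse-is-level-1}, so the corollary is essentially bookkeeping. The only subtlety worth stating explicitly is the distinction between the Hasse diagram (a network on $X_{\Hasse}$) and the regular network (a network on $X$), together with the observation that relabeling leaves changes neither the block decomposition nor the degree sequence, so neither the level nor the phylogenetic property is affected.
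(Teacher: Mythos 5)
Your proposal is correct and follows the same route as the paper: invoke Prop.~\ref{prop:Hasse-is-level-1} to get that $\Hasse[\mathscr{C}]$ is a phylogenetic level-1 network, then transfer both properties to the unique regular network of Prop.~\ref{prop:regular-unique} via the leaf-relabeling graph isomorphism. Your write-up is merely more explicit than the paper's about why the isomorphism preserves the phylogenetic and level-1 properties, which is fine but not a different argument.
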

\begin{proof}
  By Prop.~\ref{prop:Hasse-is-level-1}, the Hasse diagram $\Hasse[\mathscr{C}]$
  is a phylogenetic level-1 network. Since $\Hasse[\mathscr{C}]$ is graph
  isomorphic to the regular network $N$ for $\mathscr{C}$, $N$ is also a
  level-1 phylogenetic network.
\end{proof}

We summarize Cor.~\ref{cor:l1->closed-weak-hierarchy}, Cor.~\ref{cor:L},
and Cor.~\ref{cor:closed-L-imply-level-1} in the following characterization
of clustering systems that can be derived from level-1 phylogenetic
networks.
\begin{theorem}
  \label{thm:L1}
  Let $\mathscr{C}$ be a clustering system. Then there is a level-1 network
  $N$ such that $\mathscr{C}_N=\mathscr{C}$ if and only if $\mathscr{C}$ is
  closed and satisfies (L).
\end{theorem}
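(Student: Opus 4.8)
The plan is to assemble the biconditional directly from the corollaries already established for level-1 networks together with the converse construction via the Hasse diagram. Since the statement is an ``if and only if'', I would treat each implication separately and verify that the hypotheses line up (in particular, that the forward direction does not secretly require $N$ to be phylogenetic).

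For the ``only if'' direction I would assume that some level-1 network $N$ realizes $\mathscr{C}$, i.e.\ $\mathscr{C}_N = \mathscr{C}$. The first step is to invoke Cor.~\ref{cor:l1->closed-weak-hierarchy}, which guarantees that $\mathscr{C}_N$ is a closed weak hierarchy; in particular $\mathscr{C} = \mathscr{C}_N$ is closed. The second step is to apply Cor.~\ref{cor:L}, which states that the clustering system of every level-1 network satisfies property (L); hence $\mathscr{C}$ satisfies (L). Together these establish that a realizable $\mathscr{C}$ is necessarily closed and satisfies (L). Both corollaries hold for arbitrary level-1 networks (they do not assume the phylogenetic property), so nothing further is needed here.

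For the ``if'' direction I would assume that $\mathscr{C}$ is closed and satisfies (L), and then simply apply Cor.~\ref{cor:closed-L-imply-level-1}, which produces a phylogenetic level-1 network $N$ (namely the unique regular network $\Hasse[\mathscr{C}]$) with $\mathscr{C}_N = \mathscr{C}$. Since a phylogenetic level-1 network is in particular a level-1 network, this is exactly what the theorem demands, and the biconditional is complete.

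Thus the theorem is a short assembly of three corollaries, and the genuine difficulty lives upstream rather than in this final step. The principal obstacle is Cor.~\ref{cor:closed-L-imply-level-1}, whose proof rests on Prop.~\ref{prop:blocks-in-Hasse} identifying the blocks of $\Hasse[\mathscr{C}]$ precisely with the subgraphs $\Hasse[\mathcal{B}(C)]$: property (L) is exactly what forces each nonempty $\mathcal{B}^0(C)$ to determine a single non-trivial block carrying $C$ as its unique properly contained hybrid vertex, which is the level-1 condition. In the forward direction the subtle ingredient is that cluster overlaps in a level-1 network are tightly controlled by Lemma~\ref{lem:overlaprule} and Lemma~\ref{lem:uvmin}, pinning $C_1\cap C_2$ to $\CC(\min B)$ for a common block $B$; this is what yields (L). Once these structural facts are granted, the statement follows immediately.
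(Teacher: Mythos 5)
Your proposal is correct and matches the paper exactly: the paper itself states Thm.~\ref{thm:L1} as a summary of Cor.~\ref{cor:l1->closed-weak-hierarchy} and Cor.~\ref{cor:L} (for the ``only if'' direction) together with Cor.~\ref{cor:closed-L-imply-level-1} (for the ``if'' direction), which is precisely your assembly. Your side remarks are also accurate, namely that these corollaries hold for arbitrary (not necessarily phylogenetic) level-1 networks and that the substantive work lives upstream in Lemma~\ref{lem:overlaprule}, Lemma~\ref{lem:uvmin}, and Prop.~\ref{prop:blocks-in-Hasse}.
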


We emphasize, however, that there is no 1-to-1 correspondence between
level-1 networks and clustering systems. Recall that a network $N$ is
regular if $\varphi\colon V\to V(\Hasse[\mathscr{C}_N])\colon v\mapsto
\CC(v)$ is a graph isomorphism.  In contrast to the unique regular network
$\Hasse[\mathscr{C}]$, a level-1 network might have shortcuts and thus,
could even be not semi-regular and, therefore, not regular
(cf.\ Prop.~\ref{prop:semi-reg} and
Thm.~\ref{thm:semiregular}). Nevertheless, a level-1 network $N$ can easily
be edited into a level-1 network $N'$ that is isomorphic to
$\Hasse[\mathscr{C}_N]$ using two simple operations as specified in
Prop.~\ref{prop:edit-N-to-Hasse}.

\begin{proposition}
  \label{prop:edit-level-1-to-Hasse}
  For every level-1 network $N$, the regular network $N'$ with clustering
  system $\mathscr{C}_{N'} = \mathscr{C}_N$ is level-$1$ and can be
  obtained from $N$ by repeatedly removing shortcuts and contracting arcs
  $(u,w)$ with $\outdeg(u)=1$.  In particular, $N'$ is the unique
  least-resolved network w.r.t.\ $\mathscr{C}_N$ that can be obtained from
  $N$ in this way.
\end{proposition}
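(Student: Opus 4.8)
The plan is to split the statement into two parts: first that the regular network $N'$ for $\mathscr{C}_N$ is itself level-1, and second that $N'$ is reached from $N$ by the two reduction operations. For the first part I would invoke the structural results already established for the clustering system of a level-1 network: by Cor.~\ref{cor:l1->closed-weak-hierarchy} the system $\mathscr{C}_N$ is closed, and by Cor.~\ref{cor:L} it satisfies property (L). Then Cor.~\ref{cor:closed-L-imply-level-1} immediately yields that the unique regular network with clustering system $\mathscr{C}_N$ is a phylogenetic level-1 network, which is exactly $N'$ (its existence and uniqueness coming from Prop.~\ref{prop:regular-unique}). Note that this part does not require $N$ to be phylogenetic, since the cited corollaries hold for arbitrary level-1 networks.

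For the reachability part I would consider the network $\hat N$ obtained from $N$ by repeatedly applying (1) shortcut removal and (2) contraction of an arc $(u,w)$ with $\outdeg(u)=1$, in any order, until neither applies; this terminates because each operation strictly decreases the number of arcs. By Lemmas~\ref{lem:shortcut-deletion} and~\ref{lem:outdeg-1-contraction}, every such step preserves the clustering system, so $\mathscr{C}_{\hat N}=\mathscr{C}_N$. The key invariant I would maintain is that every intermediate network stays level-1: for operation~(2) this is exactly Lemma~\ref{lem:contract-level-k} (an arc with $\outdeg(u)=1$ is never a shortcut), while for operation~(1) I would argue it directly (see below). Upon termination $\hat N$ has no shortcut and no vertex of outdegree~$1$; in particular it is phylogenetic, hence a phylogenetic level-1 network, and so it satisfies (PCC) by Lemma~\ref{lem:orderiff}. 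Being shortcut-free, satisfying (PCC), and having no outdegree-$1$ vertex, $\hat N$ is regular by Thm.~\ref{thm:semiregular}; by uniqueness of the regular network for $\mathscr{C}_N$ (Prop.~\ref{prop:regular-unique}) we obtain $\hat N\simeq N'$. Equivalently, once (PCC) and shortcut-freeness of the terminal network are in hand, this is precisely the setting of Prop.~\ref{prop:edit-N-to-Hasse}.

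The step I expect to be the main obstacle is showing that shortcut removal preserves being level-1, since there is no dedicated lemma for it in the text. The argument I would give is: if $N^\ast$ results from $N$ by deleting a shortcut, then $E(N^\ast)\subseteq E(N)$, so every non-trivial block $B^\ast$ of $N^\ast$ is a biconnected subgraph of $N$ and hence contained in a unique block $B$ of $N$ by Obs.~\ref{obs:identical-block}. Every hybrid vertex of $N^\ast$ has indegree $\ge 2$ in $N^\ast$, hence in $N$, so it is also a hybrid vertex of $N$; thus the hybrid vertices of $N^\ast$ lying in $B^\ast$ form a subset of those of $N$ lying in $B$. A short comparison of $\max B^\ast$ and $\max B$ (using $B^\ast\subseteq B$ together with $\preceq_N$-maximality, which forces $\max B\notin V(B^\ast)$ in the offending case) shows that at most one hybrid vertex of $N^\ast$ in $B^\ast$ can differ from $\max B^\ast$, so $N^\ast$ is level-1. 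Finally, for the uniqueness clause I would use that $N'$ is regular and therefore least-resolved by Cor.~\ref{cor:regular-least-resolved}; conversely, any network obtained from $N$ by the process that is least-resolved admits neither operation (otherwise a further step would yield a distinct network with the same clustering system, contradicting Def.~\ref{def:least-resolved}), hence is terminal and thus equals $N'$. This simultaneously establishes that $N'$ is the unique least-resolved network reachable in this way and that the process is confluent.
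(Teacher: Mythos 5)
Your proposal is correct and takes essentially the same route as the paper's own proof: establish that the regular network for $\mathscr{C}_N$ is level-1 via closedness and (L) (the paper bundles your two corollaries into Thm.~\ref{thm:L1}), run the two reduction operations to exhaustion, use Lemma~\ref{lem:contract-level-k}, Lemma~\ref{lem:orderiff}, and Thm.~\ref{thm:semiregular} to see that the terminal network is regular with the same clustering system, and finish with Prop.~\ref{prop:regular-unique} and the regular/least-resolved correspondence. The only notable difference is that where the paper dismisses the preservation of level-1 under shortcut removal as ``easy to verify'', you supply the block-containment argument explicitly (blocks of the reduced network sit inside blocks of $N$, hybrid vertices remain hybrid, and maximality forces the offending vertex to coincide with the block maximum) --- a sound filling of the one step the paper leaves to the reader.
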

\begin{proof}
  Let $N$ be a level-1 network.  By Thm.~\ref{thm:L1}, $\mathscr{C}_N$ is
  closed and satisfies (L). By Cor.~\ref{cor:closed-L-imply-level-1},
  therefore, the regular network with clustering system $\mathscr{C}_N$ is
  level-$1$.  Now let $N'$ be the network obtained from $N$ by repeatedly
  (1) removing a shortcut and (2) applying $\contract(u,w)$ for an arc
  $(u,w)$ with $\outdeg(u)=1$ until neither operation is possible.  By
  construction, $N'$ is phylogenetic, shortcut-free, and contains no vertex
  with outdegree $1$.  It is easy to verify that the removal of shortcuts
  cannot increase the level of the network. This together with
  Lemma~\ref{lem:contract-level-k} implies that $N'$ is still level-$1$. By
  Lemma~\ref{lem:orderiff}, $N'$ satisfies (PCC), and thus, it is
  semi-regular.  Thm.~\ref{thm:semiregular} now implies that $N'$ is
  regular. Moreover, by Lemma~\ref{lem:shortcut-deletion} and
  Lemma~\ref{lem:outdeg-1-contraction}, we have
  $\mathscr{C}_{N}=\mathscr{C}_{N'}$.  By Prop.~\ref{prop:regular-unique},
  $N'$ is the unique regular network with
  $\mathscr{C}_{N}=\mathscr{C}_{N'}$. The latter, in particular, implies
  that the order of the operations ``shortcut removal'' and
  ``contractions'' to obtain $N'$ from $N$ does not matter.  By
  Cor.~\ref{cor:regular-IFF-least-resolved}, $N'$ is least-resolved.
  Moreover, a network that still contains a shortcut or an arc $(u,w)$ with
  $\outdeg(u)=1$ cannot be least resolved by
  Lemma~\ref{lem:shortcut-deletion} and
  Lemma~\ref{lem:outdeg-1-contraction}, respectively.  Taken together, the
  latter arguments imply that $N'$ is the unique least-resolved network
  w.r.t.\ $\mathscr{C}_N$ that can be obtained from $N$ by repeatedly
  removing shortcuts and contracting arcs $(u,w)$ with $\outdeg(u)=1$.
\end{proof}

As a direct consequence of Thm.~\ref{thm:L1} and
Prop.~\ref{prop:edit-level-1-to-Hasse} together with the fact that regular
networks are phylogenetic, we obtain
\begin{corollary}
  \label{cor:phyloL1}
  Let $\mathscr{C}$ be a clustering system. Then there is a phylogenetic
  level-1 network $N$ such that $\mathscr{C}_N=\mathscr{C}$ if and only if
  $\mathscr{C}$ is closed and satisfies (L).
\end{corollary}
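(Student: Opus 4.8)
The plan is to deduce this corollary directly from the characterization in Thm.~\ref{thm:L1} by upgrading the level-1 network it produces to a phylogenetic one. For the \emph{only if} direction, I would simply observe that a phylogenetic level-1 network is in particular a level-1 network; hence if such an $N$ with $\mathscr{C}_N=\mathscr{C}$ exists, the \emph{only if} part of Thm.~\ref{thm:L1} immediately yields that $\mathscr{C}$ is closed and satisfies (L). No further work is needed here.

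For the \emph{if} direction, suppose $\mathscr{C}$ is closed and satisfies (L). First I would invoke Thm.~\ref{thm:L1} to obtain \emph{some} level-1 network $N$ with $\mathscr{C}_N=\mathscr{C}$; this $N$ need not be phylogenetic. Then I would apply Prop.~\ref{prop:edit-level-1-to-Hasse} to $N$: it yields the regular network $N'$ with $\mathscr{C}_{N'}=\mathscr{C}_N=\mathscr{C}$ and guarantees that $N'$ is again level-1. The final step is to note that $N'$, being regular, is phylogenetic; indeed, by Thm.~\ref{thm:semiregular} a regular network is shortcut-free and has no vertex of outdegree $1$, so that Lemma~\ref{lem:siblings} applies (equivalently, $N'\sim\Hasse[\mathscr{C}]$ is phylogenetic by Lemma~\ref{lem:Hasse-is-Network}). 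Thus $N'$ is a phylogenetic level-1 network with clustering system $\mathscr{C}$, completing this direction.

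I do not anticipate a genuine obstacle, since essentially all of the work is already packaged in the cited results; the only point requiring care is confirming that the network delivered by Prop.~\ref{prop:edit-level-1-to-Hasse} is simultaneously phylogenetic \emph{and} level-1. Here it is worth emphasizing that the editing operations used in that proposition (shortcut removal and contraction of out-arcs at outdegree-$1$ vertices) preserve the clustering system by Lemmas~\ref{lem:shortcut-deletion} and~\ref{lem:outdeg-1-contraction}, and do not raise the level by Lemma~\ref{lem:contract-level-k}, so the resulting regular network stays level-1 while acquiring the phylogenetic property for free. In fact one could bypass Prop.~\ref{prop:edit-level-1-to-Hasse} entirely and appeal directly to Cor.~\ref{cor:closed-L-imply-level-1}, which already asserts that the unique regular network of a closed clustering system satisfying (L) is phylogenetic and level-1; the route through Prop.~\ref{prop:edit-level-1-to-Hasse} merely makes explicit how an arbitrary level-1 representative is transformed into the phylogenetic one.
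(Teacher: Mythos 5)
Your proposal is correct and follows essentially the same route as the paper, which derives Cor.~\ref{cor:phyloL1} precisely as a direct consequence of Thm.~\ref{thm:L1} and Prop.~\ref{prop:edit-level-1-to-Hasse} together with the fact that regular networks are phylogenetic. Your closing remark that one may instead invoke Cor.~\ref{cor:closed-L-imply-level-1} directly is also consistent with the paper, since Thm.~\ref{thm:L1} itself is assembled from that corollary.
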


\begin{corollary}
  \label{cor:unique-N-closed-L}
  Let $\mathscr{C}$ be a closed clustering system that satisfies (L).  Then
  there is a unique shortcut-free phylogenetic level-1 network $N$ with
  $\mathscr{C}_N=\mathscr{C}$ that moreover contains no vertex $v$ with
  $\outdeg_{N}(v)=1$. This network $N$ is regular and least-resolved.
\end{corollary}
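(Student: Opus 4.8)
The plan is to reduce the entire statement to the characterization of regularity for level-1 networks in Prop.~\ref{prop:semi-reg} together with the uniqueness of regular networks in Prop.~\ref{prop:regular-unique}; once these are in place, everything follows by assembly.

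For existence, I would invoke Cor.~\ref{cor:closed-L-imply-level-1}: since $\mathscr{C}$ is closed and satisfies (L), the unique regular network $N$ with $\mathscr{C}_N=\mathscr{C}$ is phylogenetic and level-1. By Thm.~\ref{thm:semiregular}, every regular network is semi-regular, hence shortcut-free, and contains no vertex of outdegree $1$. Thus $N$ is already a shortcut-free, phylogenetic, level-1 network with the prescribed clustering system and no vertex of outdegree $1$, which witnesses existence and shows that this witness is regular.

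For uniqueness, I would take an arbitrary network $N'$ satisfying all of the stipulated properties, i.e.\ $N'$ is shortcut-free, phylogenetic, and level-1 with $\mathscr{C}_{N'}=\mathscr{C}$ and no vertex of outdegree $1$. The second assertion of Prop.~\ref{prop:semi-reg} states precisely that a level-1 network is regular if and only if it is shortcut-free and has no vertex with outdegree $1$; applied to $N'$, this yields that $N'$ is regular. By Prop.~\ref{prop:regular-unique} there is a unique regular network with clustering system $\mathscr{C}$, so $N'\simeq N$, establishing uniqueness. Finally, $N$ is regular by the existence argument, and by Cor.~\ref{cor:regular-least-resolved} every regular network is least-resolved, so $N$ is least-resolved as claimed.

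I expect no genuine obstacle here: all the substantive work has already been carried out in the cited results, and the corollary is essentially a repackaging of them. The only point that requires a moment's care is recognizing that the two hypotheses ``shortcut-free'' and ``no vertex of outdegree $1$'' match \emph{exactly} the two conditions appearing in the regularity criterion of Prop.~\ref{prop:semi-reg}, so that uniqueness is inherited from the uniqueness of regular networks (Prop.~\ref{prop:regular-unique}) rather than needing an independent combinatorial argument on the block structure of $\Hasse[\mathscr{C}]$.
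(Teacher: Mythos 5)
Your proposal is correct and follows essentially the same route as the paper's proof: existence from Cor.~\ref{cor:closed-L-imply-level-1} together with Thm.~\ref{thm:semiregular}, uniqueness by showing any such network is regular and invoking Prop.~\ref{prop:regular-unique}, and least-resolvedness from the corollary on regular networks. The only cosmetic differences are that you cite Prop.~\ref{prop:semi-reg} (itself just the packaging of Lemma~\ref{lem:orderiff} and Thm.~\ref{thm:semiregular} that the paper invokes directly) and Cor.~\ref{cor:regular-least-resolved} in place of Cor.~\ref{cor:regular-IFF-least-resolved}; neither changes the argument.
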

\begin{proof}
  By Cor.~\ref{cor:closed-L-imply-level-1}, the regular network $N$ with
  $\mathscr{C}_N=\mathscr{C}$ is level-$1$. By Thm.~\ref{thm:semiregular},
  $N$ is shortcut-free and contains no vertex $v$ with $\outdeg_{N}(v)=1$.
  Thus $N$ is phylogenetic.  Now let $N$ be a shortcut-free phylogenetic
  level-1 network with $\mathscr{C}_{N}=\mathscr{C}$ that moreover contains
  no vertex $v$ with $\outdeg_{N}(v)=1$. By Lemma~\ref{lem:orderiff} and
  Thm.~\ref{thm:semiregular}, $N$ is a regular network, which is unique by
  Prop.~\ref{prop:regular-unique}.  By
  Cor.~\ref{cor:regular-IFF-least-resolved}, $N$ is least-resolved.
\end{proof}

Most publications on phylogenetic networks assume that leaves always have
indegree $1$, see e.g.\ \cite{huson_rupp_scornavacca_2010}.
\begin{corollary}
  \label{cor:unique-N-closed-L-leaves=indeg1}
  Let $\mathscr{C}$ be a closed clustering system that satisfies (L).  Then
  there is a unique shortcut-free phylogenetic level-1 network $N$ with
  $\mathscr{C}_N=\mathscr{C}$ such that every leaf has indegree $1$
  and all vertices $v$ with $\outdeg_{N}(v)=1$ are adjacent to leaves.
\end{corollary}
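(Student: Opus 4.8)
The plan is to prove existence by an explicit construction starting from the regular network, and uniqueness by reducing to the multiset encoding of semi-regular networks. The key point distinguishing this statement from Cor.~\ref{cor:unique-N-closed-L} is that here we do \emph{not} expand every hybrid vertex (which would give the separated cluster network), but only the hybrid \emph{leaves}, so as to keep outdegree-$1$ vertices confined to the immediate vicinity of leaves.

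For existence, I would start from the unique regular network $N_0 \coloneqq \Hasse[\mathscr{C}]$ provided by Prop.~\ref{prop:regular-unique}, which is phylogenetic and level-$1$ by Cor.~\ref{cor:closed-L-imply-level-1} and has no vertex of outdegree $1$ by Thm.~\ref{thm:semiregular}; in particular every internal vertex has outdegree at least $2$ (Lemma~\ref{lem:Hasse-outdeg}) and every leaf outdegree $0$. The only obstruction to the desired conditions in $N_0$ is that a leaf $\{x\}$ may be a hybrid vertex, i.e., have indegree greater than $1$. I would therefore apply $\expand$ to each such hybrid leaf (the order is irrelevant, since expanding one leaf leaves the indegrees of the others unchanged). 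Since a leaf has outdegree $0\ne1$, Lemma~\ref{lem:expand} keeps the network phylogenetic and preserves all clusters (so $\mathscr{C}_N=\mathscr{C}$), Cor.~\ref{cor:expand-shortcuts} preserves shortcut-freeness, and Lemma~\ref{lem:expand-level-k} preserves the level. A short bookkeeping check then shows that each expansion turns the affected hybrid leaf into an indegree-$1$ leaf and introduces exactly one new outdegree-$1$ vertex, namely the freshly created parent of that leaf, while leaving every other vertex's outdegree unchanged. Hence in the resulting network $N$ every leaf has indegree $1$ and the outdegree-$1$ vertices are precisely these new leaf-parents, establishing all required properties.

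For uniqueness, let $N$ and $\tilde N$ be two networks satisfying the hypotheses. Both are phylogenetic level-$1$, hence satisfy (PCC) by Lemma~\ref{lem:orderiff}, and both are shortcut-free, so both are semi-regular. By Thm.~\ref{thm:semireg-multiset} it then suffices to prove $\mathscr{M}_N=\mathscr{M}_{\tilde N}$, i.e., that every cluster has the same multiplicity in both. By Lemma~\ref{lem:semireg-1or2-multiplicity} each multiplicity is $1$ or $2$, and a multiplicity of $2$ forces (via Lemma~\ref{lem:outdeg1}) an outdegree-$1$ vertex carrying that cluster. In our class such a vertex is a parent of a leaf, and hence has a singleton cluster; consequently every non-singleton cluster has multiplicity $1$ in both networks, and it remains only to pin down the multiplicity-$2$ singletons in terms of $\mathscr{C}$ alone.

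The crux, which I would establish using Prop.~\ref{prop:edit-level-1-to-Hasse}, is the characterization: a leaf $x$ has an outdegree-$1$ (hence hybrid) parent $p$ in $N$ if and only if $\{x\}$ has indegree at least $2$ in $\Hasse[\mathscr{C}]$. Since $N$ is shortcut-free, $\Hasse[\mathscr{C}]$ is obtained from $N$ by contracting exactly the outdegree-$1$ arcs, each of the form $(p,x)$ with $x$ a leaf; contracting $(p,x)$ hands the in-arcs of $p$ to $x$, so $\{x\}$ ends up with indegree at least $2$ in the Hasse diagram. Conversely, if the parent of $x$ has outdegree at least $2$ then no contraction touches $x$ (the class constraint forbids an outdegree-$1$ vertex sitting above such a parent, and a leaf can never be the contracted endpoint), so $\{x\}$ retains indegree $1$. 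As $\Hasse[\mathscr{C}]$ depends only on $\mathscr{C}$, the set of multiplicity-$2$ singletons coincides for $N$ and $\tilde N$; together with the non-singleton case this gives $\mathscr{M}_N=\mathscr{M}_{\tilde N}$, and Thm.~\ref{thm:semireg-multiset} yields $N\simeq\tilde N$. I expect the main obstacle to be exactly this characterization: one must argue carefully that contractions carried out for \emph{other} leaves neither add nor remove in-arcs of $x$ and introduce no shortcuts, which is precisely where the hypothesis that outdegree-$1$ vertices are adjacent to leaves is indispensable.
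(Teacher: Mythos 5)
Your proof is correct, and while the existence half coincides with the paper's (expand the hybrid leaves of the unique regular network $\Hasse[\mathscr{C}]$ and track the properties through Lemma~\ref{lem:expand}, Cor.~\ref{cor:expand-shortcuts}, and Lemma~\ref{lem:expand-level-k}), your uniqueness argument takes a genuinely different route. The paper instead contracts the outdegree-$1$ arcs of an arbitrary network $\tilde N$ in the class, observes via Prop.~\ref{prop:semi-reg} and Lemma~\ref{lem:outdeg-1-contraction} that the result is the regular network, invokes Cor.~\ref{cor:unique-N-closed-L} to identify it with $N'=\Hasse[\mathscr{C}]$, and then concludes $\tilde N\simeq N$ because the contractions performed are ``precisely the reversed'' expansions used in the existence part. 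Your argument instead pins down the multiset $\mathscr{M}_N$ purely in terms of $\mathscr{C}$ --- multiplicity $2$ exactly at the singletons $\{x\}$ with $\indeg_{\Hasse[\mathscr{C}]}(\{x\})\ge 2$, multiplicity $1$ elsewhere --- and then invokes the encoding theorem for semi-regular networks (Thm.~\ref{thm:semireg-multiset}). This mirrors how the paper handles cluster networks (Lemma~\ref{lem:separated-1or2} and Thm.~\ref{thm:unique-cluster-network}) rather than how it handles this corollary. What the paper's route buys is brevity, though its ``reversed operation'' step leaves implicit both that the set of leaves to be expanded is forced (exactly the hybrid leaves) and that contracting leaf-parent arcs never creates shortcuts, which is needed before Prop.~\ref{prop:semi-reg} can be applied. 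What your route buys is a sharper by-product: an explicit formula for $\mathscr{M}_N$ as a function of $\mathscr{C}$, and the fact that $N$ is encoded by its cluster multiset already within the broader class of semi-regular networks. Both routes hinge on the same delicate point --- that the contractions down to $\Hasse[\mathscr{C}]$ neither disturb the in-arcs of untouched leaves nor introduce shortcuts --- and you correctly identify the class hypothesis (no outdegree-$1$ vertex can sit above a leaf-parent, since such a vertex would not be adjacent to a leaf) as exactly what makes those checks go through.
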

\begin{proof}
  By Cor.~\ref{cor:unique-N-closed-L}, there is a unique shortcut-free
  phylogenetic level-1 network $N'$ with $\mathscr{C}_{N'}=\mathscr{C}$ and
  for which no vertex has outdegree $1$.  In $N'$, all vertices with
  outdegree $0$ are leaves.  Hence, we can simply apply $\expand(x)$ for all
  leaves $x$ with $\indeg_{N'}(x)>1$. We can repeatedly (i.e., in each
  expansion step) apply Lemma~\ref{lem:expand} to conclude that the
  resulting digraph $N$ is a phylogenetic network,
  Cor.~\ref{cor:expand-shortcuts} to conclude that $N$ is shortcut-free,
  Lemma~\ref{lem:expand-level-k} to conclude that $N$ is level-1 and
  Lemma~\ref{lem:expand} to conclude that $N$ satisfies
  $\mathscr{C}_{N}=\mathscr{C}_{N'}=\mathscr{C}$.  In particular, every
  leaf in $N$ has indegree $1$ by construction and all vertices with
  outdegree $1$ must be adjacent to leaves.

  It remains to show that $N$ is unique w.r.t.\ these properties.  Let
  $\tilde N$ be a phylogenetic shortcut-free level-1 network with
  $\mathscr{C}_{\tilde N}=\mathscr{C}$ and such that every leaf has
  indegree $1$ and all vertices $v$ with $\outdeg_{\tilde N}(v)=1$ are
  adjacent to leaves.  Hence, after application of $\contract(v,x)$ to all
  vertices $v$ with outdegree $1$, we obtain a phylogenetic level-$1$
  network $\tilde N'$ that has no vertex with outdegree $1$ at all.
  Prop.~\ref{prop:semi-reg} implies that $\tilde N'$ is regular and
  Lemma~\ref{lem:outdeg-1-contraction} implies that $\mathscr{C}_{\tilde
    N'}=\mathscr{C}_{\tilde N}=\mathscr{C}$. By
  Cor.~\ref{cor:unique-N-closed-L}, $\tilde N'\simeq N'$. To obtain $N'$
  from $\tilde N$, we applied precisely the ``reversed'' operation of the
  operation to obtain $N$ from $N'$, which together with $\tilde N'\simeq
  N'$ implies that $\tilde N\simeq N$. Hence, $N$ is the unique network
  with the desired properties.
\end{proof}

As an immediate consequence we obtain a characterization of the level-1
networks that are completely determined by the last common ancestor
function, and equivalently by their clusters.
\begin{proposition}
  Let $N$ be a level-1 network without shortcuts. Then the following
  statements are equivalent:
  \begin{itemize}
  \item[(i)] $\outdeg(v)\ne 1$ for all $v\in V$.
  \item[(ii)] For every $v\in V$ there is a pair of leaves $x,y\in X$ such
    that $v=\lca(\{x,y\})$.
  \end{itemize}
\end{proposition}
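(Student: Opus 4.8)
The plan is to prove the two implications separately, leaning on the fact that shortcut-free level-1 networks are extremely well-behaved with respect to least common ancestors. For \textbf{(i)} $\Rightarrow$ \textbf{(ii)}, I would first note that $\outdeg(v)\ne 1$ for all $v$ simply means $N$ has no vertex of outdegree $1$; combined with the hypothesis that $N$ is shortcut-free, the second statement of Prop.~\ref{prop:semi-reg} yields that $N$ is regular, and hence by Prop.~\ref{prop:Baroni}(i) we have $\CC(u)\subseteq\CC(v)\iff u\preceq_N v$ for all $u,v\in V$. The pivotal sub-step is then to establish that $\lca(\CC(v))=v$ for every vertex $v$. Indeed, $\lca(\CC(v))$ is by definition a common ancestor of the leaves in $\CC(v)$, so $\CC(v)\subseteq\CC(\lca(\CC(v)))$; on the other hand $\lca(\CC(v))\preceq_N v$ (Cor.~\ref{cor:lca-below} with $A=\CC(v)$, valid since every level-1 network is an lca-network by Lemma~\ref{lem:lca-unique}) together with Lemma~\ref{lem:inclusion} gives the reverse inclusion $\CC(\lca(\CC(v)))\subseteq\CC(v)$. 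Hence $\CC(\lca(\CC(v)))=\CC(v)$, and applying the regularity equivalence in both directions forces $\lca(\CC(v))=v$.

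With this identity in hand, the conclusion is immediate: since $N$ is a strong lca-network by Lemma~\ref{lem:lca-xy}, applying the defining property to the nonempty leaf set $A=\CC(v)$ produces leaves $x,y\in\CC(v)$ with $\lca(\{x,y\})=\lca(\CC(v))=v$, which is precisely \textbf{(ii)}. I would remark that the leaf case $\CC(v)=\{v\}$ is subsumed, yielding $x=y=v$, so no separate treatment is needed.

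For the converse \textbf{(ii)} $\Rightarrow$ \textbf{(i)}, I would argue by contradiction. Suppose some vertex $w$ satisfies $\outdeg(w)=1$, and let $u$ be its unique child, so that $\CC(u)=\CC(w)$ by Obs.~\ref{obs:outdeg-1-cluster}. By \textbf{(ii)} there are leaves $x,y$ with $\lca(\{x,y\})=w$, and Cor.~\ref{cor:lca-unique} then characterises $w$ as the \emph{unique} vertex satisfying $\{x,y\}\subseteq\CC(w)$ but $\{x,y\}\not\subseteq\CC(v')$ for every child $v'\in\child(w)$. This is contradicted at once by the single child $u$, since $\{x,y\}\subseteq\CC(w)=\CC(u)$. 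Therefore no vertex of outdegree $1$ can exist, establishing \textbf{(i)}.

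I do not anticipate a genuine obstacle, as the earlier results carry essentially all the weight; the one place requiring care is the verification $\lca(\CC(v))=v$, which is the hinge of the whole argument. If one preferred not to route through regularity, the same identity is available from Lemma~\ref{lem:lca(C(v))<=v} (noting that $N$ is phylogenetic under \textbf{(i)}, hence satisfies (PCC) by Lemma~\ref{lem:orderiff}, hence (CL)) together with the uniqueness of clusters guaranteed by Lemma~\ref{lem:outdeg1} in the absence of outdegree-$1$ vertices; but invoking Prop.~\ref{prop:semi-reg} is the cleanest route.
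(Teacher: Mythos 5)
Your proof is correct and takes essentially the same route as the paper: regularity via Prop.~\ref{prop:semi-reg}, the key identity $v=\lca(\CC(v))$, and Lemma~\ref{lem:lca-xy} (strong lca-network) for (i)$\Rightarrow$(ii), plus the contradiction via Obs.~\ref{obs:outdeg-1-cluster} for (ii)$\Rightarrow$(i). The only differences are cosmetic choices of citation (Prop.~\ref{prop:Baroni}(i) and Cor.~\ref{cor:lca-below} instead of the paper's appeal to injectivity of the cluster map and Eq.~\eqref{eq:lca-C-equals-C}; Cor.~\ref{cor:lca-unique} instead of a direct minimality argument).
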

\begin{proof}
  If $\outdeg(v)\ne 1$ for all $v\in V$, then Prop.~\ref{prop:semi-reg}
  implies that $N$ is regular, i.e., $\varphi\colon V\to
  V(\Hasse[\mathscr{C}_N])\colon v\mapsto \CC(v)$ is a graph isomorphism
  and thus a bijection.  Therefore, $\CC(u)=\CC(u')$ implies $u=u'$ for all
  $u,u'\in V$.  Together with Eq.~\eqref{eq:lca-C-equals-C}, i.e., the
  identity $\CC(v)=\CC(\lca(\CC(v)))$, we obtain, for all $v\in V$, that
  $v=\lca(\CC(v))$ and thus, by Lemma~\ref{lem:lca-xy}, there is a pair of
  leaves $x,y\in X$ such that $v=\lca(\CC(v))=\lca(\{x,y\})$.  Conversely,
  suppose there is a vertex $v\in V$ with a unique child $w$.  Moreover,
  assume for contradiction that there leaves $x,y\in X$ such that
  $v=\lca(\{x,y\})$. Using Obs.~\ref{obs:outdeg-1-cluster}, we have
  $\{x,y\}\subseteq \CC(v)=\CC(w)$. Together with $w\prec_N v$, this
  contradicts $v=\lca(\{x,y\})$.
\end{proof}

Finally, we show that every closed clustering system satisfying (L) is
represented by a unique ``minimal'' separated level-1 network.  More
precisely, we have
\begin{proposition}
  \label{prop:unique-separated-level-1}
  Let $\mathscr{C}$ be a closed clustering system satisfying (L). Then
  there is a unique separated phylogenetic shortcut-free level-1 network
  $N$ with $\mathscr{C}=\mathscr{C}_N$. The network $N$ is obtained from
  the unique regular network $\Hasse[\mathscr{C}]$ by applying $\expand(v)$
  to all hybrid vertices.
\end{proposition}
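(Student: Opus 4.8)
The plan is to derive both existence and uniqueness from the cluster-network machinery of Section~\ref{ssec:separated}, exploiting the fact that under the hypotheses the regular network of $\mathscr{C}$ is already level-1.

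For existence, I would begin with Cor.~\ref{cor:closed-L-imply-level-1}, which guarantees that the unique regular network $\Hasse[\mathscr{C}]$ is phylogenetic and level-1 with $\mathscr{C}_{\Hasse[\mathscr{C}]}=\mathscr{C}$. Applying $\expand$ to every hybrid vertex of $\Hasse[\mathscr{C}]$ produces, by Thm.~\ref{thm:unique-cluster-network}, the unique cluster network $N$ with $\mathscr{C}_N=\mathscr{C}$; by Thm.~\ref{thm:cluster-network-charac} this $N$ is semi-regular (in particular shortcut-free), separated, and phylogenetic. The one property not handed over directly by those theorems is that $N$ is level-1, and here I would invoke Lemma~\ref{lem:expand-level-k}: each single $\expand$ step preserves the level, so since $\Hasse[\mathscr{C}]$ is level-1 and $N$ arises from it by finitely many expansions, $N$ is level-1 as well. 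This establishes that the prescribed $N$ has all the required properties.

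For uniqueness, I would take an arbitrary separated, phylogenetic, shortcut-free, level-1 network $\tilde N$ with $\mathscr{C}_{\tilde N}=\mathscr{C}$ and argue that it must in fact be a cluster network. Since $\tilde N$ is phylogenetic and level-1, Lemma~\ref{lem:orderiff} shows that $\tilde N$ satisfies (PCC); together with shortcut-freeness this makes $\tilde N$ semi-regular. Being semi-regular, separated, and phylogenetic, $\tilde N$ is a cluster network by Thm.~\ref{thm:cluster-network-charac}. The uniqueness clause of Thm.~\ref{thm:unique-cluster-network} then forces $\tilde N\simeq N$, which both proves uniqueness and confirms that the unique such network is exactly the one constructed above.

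The proof is largely a chain of earlier results, so I expect the only genuine checks to be two. First, that the level is not increased when passing from $\Hasse[\mathscr{C}]$ to $N$ through the expansions; this is precisely Lemma~\ref{lem:expand-level-k}, so I would simply make sure the finitely-many-steps induction is phrased cleanly. Second, and the point I would be most careful about, is routing the four hypotheses on $\tilde N$ through (PCC) to semi-regularity: the property (PCC) must be \emph{obtained} from the level-1 and phylogenetic assumptions via Lemma~\ref{lem:orderiff}, not assumed, since semi-regularity is exactly what we are trying to deduce before applying Thm.~\ref{thm:cluster-network-charac} and Thm.~\ref{thm:unique-cluster-network}.
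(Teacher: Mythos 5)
Your proof is correct and follows essentially the same route as the paper's: existence via the regular network $\Hasse[\mathscr{C}]$ being level-1 (the paper cites Cor.~\ref{cor:unique-N-closed-L} where you cite the equivalent Cor.~\ref{cor:closed-L-imply-level-1}), then Thm.~\ref{thm:unique-cluster-network} together with Lemma~\ref{lem:expand-level-k} for the expanded network, and uniqueness by routing $\tilde N$ through Lemma~\ref{lem:orderiff} to (PCC), hence semi-regularity, hence the uniqueness clause of Thm.~\ref{thm:unique-cluster-network}. Your emphasis on deriving (PCC) rather than assuming it is exactly the step the paper's proof also hinges on.
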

\begin{proof}
  By Cor.~\ref{cor:unique-N-closed-L}, the unique regular network
  $\Hasse[\mathscr{C}]$ is a level-1 network.  By
  Thm.~\ref{thm:unique-cluster-network}, there is a unique
  semi-regular separated phylogenetic network $N$ with
  $\mathscr{C}=\mathscr{C}_N$, and this network is obtained from
  $\Hasse[\mathscr{C}]$ by applying $\expand(v)$ to all hybrid vertices.
  The latter and Lemma~\ref{lem:expand-level-k} imply that $N$ is also
  level-1.  Since $N$ is semi-regular, it is shortcut-free. Hence, a
  network with the desired properties exist.  To see that $N$ is unique,
  let $\tilde{N}$ be a separated phylogenetic shortcut-free level-1 network
  $\tilde{N}$ with $\mathscr{C}=\mathscr{C}_{\tilde{N}}$.  By
  Lemma~\ref{lem:orderiff}, the shortcut-free network $\tilde{N}$ satisfies
  (PCC), and thus, it is semi-regular.  In summary, $\tilde{N}$ is a
  semi-regular separated phylogenetic network with clustering system
  $\mathscr{C}$ which is unique by
  Thm.~\ref{thm:unique-cluster-network}.
\end{proof}

\subsection{Compatibility of Clustering Systems and Intersection Closure}
\label{sec:compatibility}

A frequent task in phylogenetics is the construction of networks based on
partial information of putative networks, e.g.\ subtrees
\cite{aho1981inferring,jansson2006algorithms,vanI2009constructing,
  jansson2006inferring,vanI2011constructing}, subnetworks
\cite{HIMSW:17,van2017binets,semple2021trinets}, metrics or full
information about clusters \cite{Gambette:12}.  A property or properties of
networks can be thought of as a subset $\mathbb{P}$ of the set of all
rooted DAGs such that $N$ has property $\mathbb{P}$ whenever
$N\in\mathbb{P}$. In this case we simply call $N$ a $\mathbb{P}$-network. A
clustering system $\mathscr{C}\subseteq 2^X$ is \emph{compatible
w.r.t.\ $\mathbb{P}$-networks} if there is $\mathbb{P}$-network $N$ on $X$
such that $\mathscr{C}\subseteq \mathscr{C}_N$.

\begin{problem}\label{prob:cluster2network}
  Is a given clustering system $\mathscr{C}\subseteq 2^X$
  compatible w.r.t.\ to (separated, phylogenetic) level-$k$ networks?
\end{problem}

We show that this question can easily be answered for level-1 networks by
computing the so-called ``intersection-closure'' \cite{Bandelt:89}. To be
more precise, to every clustering system $\mathscr{C}$ one can associate
the set $\IC$ consisting of all non-empty intersections of an
arbitrary subset of clusters in $\mathscr{C}$.
Note that $A\in \mathscr{C}$ implies $A\cap A = A\in \IC$ and so
$\mathscr{C}\subseteq \IC$. Recall that a clustering system satisfying (L)
is in particular a weak hierarchy (cf.\ Cor.~\ref{cor:L->wH}). In this
case, only pairwise intersections need to be considered since the
intersection of arbitrary subset of clusters coincides with a pairwise
intersection. As an immediate consequence, we have
\begin{fact}
  \label{obs:IC-L->closed}
  Let $\mathscr{C}$ be a clustering system satisfying (L).
  Then $\IC=\mathscr{C} \cup \{C\cap C' \mid C,C'\in\mathscr{C} \text{
  overlap}\}$.
\end{fact}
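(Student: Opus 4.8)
The plan is to establish the two inclusions separately. For the easy direction, $\mathscr{C} \cup \{C\cap C' \mid C,C'\in\mathscr{C} \text{ overlap}\} \subseteq \IC$, I would argue directly from the definition of the intersection closure. Every $C\in\mathscr{C}$ equals the intersection over the singleton subcollection $\{C\}$, hence lies in $\IC$; and if $C,C'\in\mathscr{C}$ overlap, then $C\cap C'\notin\{\emptyset,C,C'\}$ by definition of overlap, so in particular $C\cap C'\neq\emptyset$ is a non-empty intersection of the two-element subcollection $\{C,C'\}$ and therefore also belongs to $\IC$. This direction requires nothing beyond the definitions.

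The substance lies in the reverse inclusion $\IC \subseteq \mathscr{C} \cup \{C\cap C' \mid C,C'\in\mathscr{C} \text{ overlap}\}$. First I would invoke Cor.~\ref{cor:L->wH} to conclude that $\mathscr{C}$ is a weak hierarchy, i.e.\ $C_1\cap C_2\cap C_3\in\{C_1\cap C_2,C_1\cap C_3,C_2\cap C_3,\emptyset\}$ for all $C_1,C_2,C_3\in\mathscr{C}$. The key technical step is a reduction lemma proved by induction on $\vert\mathscr{C}'\vert$: for every non-empty finite subcollection $\mathscr{C}'\subseteq\mathscr{C}$, the intersection $\bigcap_{C\in\mathscr{C}'}C$ is either empty or coincides with a pairwise intersection $C_a\cap C_b$ of two (not necessarily distinct) members $C_a,C_b\in\mathscr{C}'$. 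The cases $\vert\mathscr{C}'\vert\le 2$ are immediate. For the inductive step with $\mathscr{C}'=\{C_1,\dots,C_n\}$ and $n\ge 3$, I would apply the induction hypothesis to $\{C_1,\dots,C_{n-1}\}$; if that partial intersection is empty then so is the whole intersection, and otherwise it equals some $C_a\cap C_b$, whence the full intersection is $C_a\cap C_b\cap C_n$, which by the weak-hierarchy property is empty or one of $C_a\cap C_b$, $C_a\cap C_n$, $C_b\cap C_n$.

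With the reduction lemma in hand, the reverse inclusion follows by case analysis. Take $Y\in\IC$, so $Y=\bigcap_{C\in\mathscr{C}'}C\neq\emptyset$ for some subcollection $\mathscr{C}'\subseteq\mathscr{C}$. If $\mathscr{C}'$ is empty or a singleton, then $Y\in\{X\}\cup\mathscr{C}=\mathscr{C}$, using $X\in\mathscr{C}$. Otherwise the reduction lemma yields $Y=C_a\cap C_b$ for some $C_a,C_b\in\mathscr{C}$; since $Y\neq\emptyset$, either one of $C_a,C_b$ is contained in the other, in which case $Y\in\{C_a,C_b\}\subseteq\mathscr{C}$, or $C_a$ and $C_b$ overlap, in which case $Y$ lies in $\{C\cap C'\mid C,C'\in\mathscr{C}\text{ overlap}\}$. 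This exhausts all cases and establishes the inclusion.

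I do not expect a genuine obstacle here: the only real content is the inductive reduction, and because the weak-hierarchy axiom is \emph{precisely} the three-set statement, the induction goes through verbatim. The points demanding care are merely bookkeeping: propagating non-emptiness through the induction and correctly sorting the final pairwise intersection into the ``one contains the other'' versus ``overlap'' cases.
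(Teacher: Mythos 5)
Your proof is correct and follows essentially the same route as the paper: the paper also deduces from Cor.~\ref{cor:L->wH} that $\mathscr{C}$ is a weak hierarchy and then uses the fact that, in a weak hierarchy, the intersection of an arbitrary subcollection coincides with a pairwise intersection, which makes the stated identity immediate. The only difference is that the paper treats this reduction as a known consequence of the weak-hierarchy property (following \cite{Bandelt:89}), whereas you prove it explicitly by induction --- a detail the paper leaves to the reader.
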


Lemma~1 of \cite{Bandelt:89} asserts that $\mathscr{C}$ is a weak hierarchy
if and only if $\IC$ is a weak hierarchy. We use this fact to prove an
analogous result for property (L).
\begin{lemma}\label{lem:C-L<->IC-L}
  A clustering system $\mathscr{C}$ satisfies (L) if and only if $\IC$
  satisfies (L).
\end{lemma}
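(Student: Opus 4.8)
The plan is to prove the two implications separately, and the backward one requires essentially no work. Property (L) is \emph{hereditary}: it only constrains triples of clusters together with their overlap relations (see Def.~\ref{def:L}). Since $\mathscr{C}\subseteq\IC$, any triple $C_1,C_2,C_3\in\mathscr{C}$ for which $C_1$ overlaps both $C_2$ and $C_3$ is also a triple in $\IC$; so if $\IC$ satisfies (L), reading off the conclusion $C_1\cap C_2=C_1\cap C_3$ shows that $\mathscr{C}$ satisfies (L).

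For the forward implication I would assume $\mathscr{C}$ satisfies (L) and invoke the explicit description $\IC=\mathscr{C}\cup\{C\cap C'\mid C,C'\in\mathscr{C}\text{ overlap}\}$ from Obs.~\ref{obs:IC-L->closed}; call the elements of $\IC\setminus\mathscr{C}$ the \emph{new} sets. The heart of the argument is a single claim: every new set $E=C\cap C'$ (with $C,C'\in\mathscr{C}$ overlapping) overlaps \emph{no} element of $\IC$. Granting the claim, (L) for $\IC$ follows immediately: if $D_1\in\IC$ overlaps both $D_2,D_3\in\IC$, then since overlap is symmetric and new sets overlap nothing, none of $D_1,D_2,D_3$ can be new; hence all three lie in $\mathscr{C}$, and $D_1\cap D_2=D_1\cap D_3$ is exactly (L) for $\mathscr{C}$.

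To prove the claim I would first show that a new set $E=C\cap C'$ overlaps no cluster $D\in\mathscr{C}$, arguing by contradiction from $E\cap D\neq\emptyset$, $E\setminus D\neq\emptyset$, and $D\setminus E\neq\emptyset$. Since $E\subseteq C$, the cases $C\cap D=\emptyset$ and $C\subseteq D$ are excluded at once (they force $E\cap D=\emptyset$ or $E\subseteq D$). If $C$ and $D$ overlap, then $C$ overlaps both $C'$ and $D$, so (L) gives $E=C\cap C'=C\cap D\subseteq D$, contradicting $E\setminus D\neq\emptyset$; the symmetric argument with $C'$ as pivot excludes $C'$ overlapping $D$. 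The only option left is $D\subsetneq C$ and $D\subsetneq C'$, whence $D\subseteq C\cap C'=E$, contradicting $D\setminus E\neq\emptyset$. To upgrade this to all of $\IC$, I note that a remaining $F\in\IC\setminus\mathscr{C}$ is itself of the form $A\cap B$ with $A,B\in\mathscr{C}$, and a purely set-theoretic check shows that if $E$ overlapped $A\cap B$ then $E$ would overlap $A$ or $B$ (since non-overlap of $E$ with each of $A,B$ would force $E\subseteq A\cap B$ or $A\cap B\subseteq E$), contradicting what was just proved.

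The main obstacle is the central claim, and inside it the case distinction for $E$ against a cluster $D$. The clean resolution hinges on using (L) with $C$ (resp.\ $C'$) as the overlapping pivot, which forces $E=C\cap D$ and collapses the overlap; the reduction of the mixed case (new $E$ versus new $F=A\cap B$) back to a cluster is then only elementary set theory. Once the claim is stated as ``new sets overlap nothing,'' the passage from an arbitrary $\IC$-triple to a $\mathscr{C}$-triple is routine, and the equivalence follows.
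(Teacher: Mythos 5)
Your proposal is correct and follows essentially the same route as the paper: hereditarity handles one direction, and for the other both arguments invoke the description of $\IC$ from Obs.~\ref{obs:IC-L->closed} and show that an intersection of two overlapping clusters overlaps nothing in $\IC$ --- first no cluster of $\mathscr{C}$ (using (L) in the overlap case, exactly as the paper does), then no other such intersection by reduction to the first case --- so that any overlapping triple in $\IC$ already lies in $\mathscr{C}$. The only differences are organizational (you pivot the case analysis on $C$ and $C'$, the paper pivots on the third cluster $C_3$), not substantive.
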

\begin{proof}
  Since (L) is a hereditary property, it suffices to show that, if
  $\mathscr{C}$ satisfies (L), then $\IC$ also satisfies (L).  We show
  first that the intersection of two overlapping clusters in $\mathscr{C}$
  cannot overlap any other cluster of $\mathscr{C}$.  To this end, consider
  $C_1,C_2,C_3\in\mathscr{C}$, and suppose that $C_1$ and $C_2$ overlap and
  $(C_1\cap C_2)\cap C_3\ne\emptyset$. Then, one easily verifies that
  either $C_3\subseteq C_1\cap C_2$, $C_1\cup C_2\subseteq C_3$, or $C_3$
  overlaps at least one of $C_1$ and $C_2$. In the latter case, (L) implies
  $C_1\cap C_3=C_1\cap C_2$ or $C_2\cap C_3=C_1\cap C_2$, and thus
  $(C_1\cap C_2)\cap C_3= C_1\cap C_2$. That is, the intersection of two
  overlapping clusters in $\mathscr{C}$ cannot overlap any other cluster of
  $\mathscr{C}$. It remains to show that the intersection $C_1\cap C_2$ of
  an overlapping pair of clusters $C_1,C_2\in\mathscr{C}$ also cannot
  overlap with the intersection $C_3\cap C_4$ of another overlapping pair
  $C_3,C_4\in\mathscr{C}$.  Assume, for contradiction, that $C_1\cap C_2$
  and $C_3\cap C_4$ overlap.  Hence we have $(C_1\cap C_2)\cap C_3\ne
  \emptyset$ and $(C_1\cap C_2)\cap C_4\ne \emptyset$ and also
  $C_3\setminus (C_1\cap C_2)\ne \emptyset$ and $C_4\setminus (C_1\cap
  C_2)\ne \emptyset$.  Moreover, $C_1\cap C_2 \subseteq C_3$ and $C_1\cap
  C_2 \subseteq C_4$ are not possible at the same time since otherwise
  $C_1\cap C_2 \subseteq C_3 \cap C_4$. Hence, $C_1\cap C_2$ overlaps at
  least one of $C_3$ and $C_4$; a contradiction. In summary, all
  overlapping pairs $C',C''\in \IC$ are formed by clusters
  $C',C''\in\mathscr{C}$, and thus $\IC$ also satisfies (L).
\end{proof}

\begin{algorithm}[t]
  \caption{\textnormal{\texttt{Check-L1-Compatibility}}}\label{algo1}
  \begin{algorithmic}[1]
    \Require Clustering system $\mathscr{C}$
    \Ensure Verifies if $\mathscr{C}$ is compatible w.r.t.\ to a
    phylogenetic level-1 network, and in the affirmative case, returns such
    a network
    \If{$\vert\mathscr{C}\vert>\binom{\vert X\vert}{2}+\vert X\vert$}
    \Return ``\texttt{no solution}''
    \EndIf
    \If{$\mathscr{C}$ satisfies Property (L)}
    \State $\IC \leftarrow \mathscr{C} \cup \{C\cap C' \mid C,C'\in\mathscr{C}
    \text{ overlap}\}$
    \State \Return $\Hasse[\IC]$
    \Else\ \Return 	\texttt{``no solution''}
    \EndIf
  \end{algorithmic}
\end{algorithm}

\begin{theorem}\label{thm:AlgL1Comp}
  Let $\mathscr{C}\subseteq 2^X$ be a clustering system.  Then,
  \textnormal{\texttt{Check-L1-Compatibility}} correctly verifies if there
  is a (separated, phylogenetic) level-1 network on $X$ such that
  $\mathscr{C}\subseteq \mathscr{C}_N$ and can be implemented to run in
  $O(\vert\mathscr{C}\vert^2\vert X\vert )\subseteq O(\vert X\vert^5)$ time.
  Moreover, such a
  network $N$ can be constructed in $O(\vert X\vert^5)$ time.
\end{theorem}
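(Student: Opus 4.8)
The plan is to prove correctness by reducing the compatibility question to property (L): I would show that a clustering system $\mathscr{C}\subseteq 2^X$ is compatible w.r.t.\ (separated, phylogenetic) level-1 networks \emph{if and only if} $\mathscr{C}$ satisfies (L). For the ``only if'' direction, suppose $\mathscr{C}\subseteq\mathscr{C}_N$ for some level-1 network $N$. Then $\mathscr{C}_N$ satisfies (L) by Cor.~\ref{cor:L}, and since (L) is a condition on triples of clusters it passes to every subfamily (cf.\ the proof of Lemma~\ref{lem:C-L<->IC-L}); hence $\mathscr{C}$ satisfies (L). For the ``if'' direction, assuming $\mathscr{C}$ satisfies (L), I would pass to the intersection closure $\IC$: it is a clustering system (it contains $X$ and every singleton and omits $\emptyset$), it is closed under intersection by construction and hence closed in the sense of Lemma~\ref{lem:simple-closed}, and it again satisfies (L) by Lemma~\ref{lem:C-L<->IC-L}. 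Prop.~\ref{prop:Hasse-is-level-1} then yields that $\Hasse[\IC]$ is a phylogenetic level-1 network with $\mathscr{C}_{\Hasse[\IC]}=\IC\supseteq\mathscr{C}$, while Prop.~\ref{prop:unique-separated-level-1} provides a separated phylogenetic shortcut-free level-1 network with the same clustering system; either one witnesses compatibility.

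Granting this equivalence, I would verify the three branches of \texttt{Check-L1-Compatibility} in turn. When $\mathscr{C}$ satisfies (L), the set assembled on line~4 equals $\IC$ by Obs.~\ref{obs:IC-L->closed}, so returning $\Hasse[\IC]$ returns a genuine witness. When $\mathscr{C}$ fails (L), the equivalence makes it incompatible, so ``no solution'' is correct. The remaining branch is the early rejection on line~1, which I would justify using that every clustering system satisfying (L) is a weak hierarchy (Cor.~\ref{cor:L->wH}) together with the classical bound that a weak hierarchy on $X$ has at most $\binom{\vert X\vert}{2}+\vert X\vert$ members \cite{Bandelt:89}: if $\vert\mathscr{C}\vert$ exceeds this bound then $\mathscr{C}$ is not a weak hierarchy, so it cannot satisfy (L) and is incompatible.

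For the running time I would represent each cluster as a characteristic vector (or sorted list) over $X$, so that a pairwise intersection and a pairwise comparison each cost $O(\vert X\vert)$. Property~(L) can then be checked in $O(\vert\mathscr{C}\vert^2\vert X\vert)$ time: for each $C_1\in\mathscr{C}$ I would compute $C_1\cap C$ for all $C\in\mathscr{C}$, retain those $C$ that overlap $C_1$, and compare all resulting intersections against a single representative, which is $O(\vert\mathscr{C}\vert\,\vert X\vert)$ per $C_1$. Since line~1 guarantees $\vert\mathscr{C}\vert\le\binom{\vert X\vert}{2}+\vert X\vert=O(\vert X\vert^2)$ on every input that reaches line~2, the decision bound $O(\vert\mathscr{C}\vert^2\vert X\vert)$ indeed lies in $O(\vert X\vert^5)$.

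Finally, for the construction I would form all $O(\vert\mathscr{C}\vert^2)$ pairwise intersections of overlapping clusters in $O(\vert\mathscr{C}\vert^2\vert X\vert)$ total time and deduplicate to obtain $\IC$; because $\IC$ is itself a weak hierarchy, $\vert\IC\vert=O(\vert X\vert^2)$. Building the Hasse diagram $\Hasse[\IC]$ reduces to computing the covering relation on $m\coloneqq\vert\IC\vert$ clusters, for which $O(m^2)$ containment tests at $O(\vert X\vert)$ each, followed by pruning of the non-covering comparisons, take $O(m^2\vert X\vert)=O(\vert X\vert^5)$ time. The main obstacle I anticipate is this final accounting: one must invoke the weak-hierarchy size bound to pin $\vert\IC\vert$ down to $O(\vert X\vert^2)$ (the naive count of pairwise intersections only gives $O(\vert X\vert^4)$), and then describe Hasse-diagram construction carefully enough to reach $O(\vert X\vert^5)$; the remaining steps are routine set manipulations.
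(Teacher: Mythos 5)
Your correctness argument is essentially the paper's own: the same reduction of compatibility to property (L) (heredity of (L) together with Cor.~\ref{cor:L} for necessity; passage to $\IC$ via Obs.~\ref{obs:IC-L->closed}, Lemma~\ref{lem:C-L<->IC-L} and Prop.~\ref{prop:unique-separated-level-1} for sufficiency), and the same justification of the early rejection via Cor.~\ref{cor:L->wH} and the Bandelt--Dress bound $\vert\mathscr{C}\vert\le\binom{\vert X\vert}{2}+\vert X\vert$ from \cite{Bandelt:89}. Your test for (L) -- for each $C_1$, compare the intersections with all clusters overlapping $C_1$ against a single representative -- is a valid $O(\vert\mathscr{C}\vert^2\vert X\vert)$ implementation (it is exactly Cor.~\ref{cor:L-more}) and is, if anything, simpler than the paper's auxiliary-graph bookkeeping.

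The genuine gap is in the step you yourself flag as the main obstacle. With $m\coloneqq\vert\IC\vert=O(\vert X\vert^2)$, you correctly get the full containment relation in $O(m^2\vert X\vert)$ time, but you then assert that ``pruning of the non-covering comparisons'' also fits into $O(m^2\vert X\vert)$. That pruning is a transitive reduction: a pair $C_i\subsetneq C_j$ survives if and only if there is no $C_k$ with $C_i\subsetneq C_k\subsetneq C_j$, and the naive check costs $O(m)$ per pair even with the containment matrix precomputed, i.e., $O(m^3)=O(\vert X\vert^6)$ in total, which breaks the claimed bound. The paper closes exactly this hole by invoking \cite{Aho:72}: transitive reduction of a DAG has the same complexity as transitive closure, namely $O(m^{\omega})$ with $\omega\le 2.3729$, giving $O(\vert X\vert^{2\omega})\subseteq O(\vert X\vert^5)$. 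Alternatively, you could repair your route without matrix multiplication by using structure you have already established: since $\IC$ is closed and satisfies (L), Lemma~\ref{lem:multiple-inneighbors} shows that a cluster $C$ has more than one in-neighbor in $\Hasse[\IC]$ only if $\mathcal{B}^0(C)\ne\emptyset$, in which case all its in-neighbors lie in $\mathcal{B}^0(C)$, and Lemma~\ref{lem:disjointB0} shows the sets $\mathcal{B}^0(C)$ are pairwise disjoint; hence $\Hasse[\IC]$ has only $O(m)$ arcs, and an output-sensitive cover search (repeatedly take the smallest-cardinality remaining proper superset of $C$ as a cover and strike out all its supersets, at $O(m)$ per discovered cover) computes all covering relations in $O(m\cdot\vert E(\Hasse[\IC])\vert)=O(m^2)$ time on top of the containment matrix. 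Either fix is needed; as written, the $O(\vert X\vert^5)$ construction bound is not established. A final small omission: $\Hasse[\IC]$ is regular, hence never separated once hybrid vertices exist, so to output a separated witness you must, as the paper does, additionally apply $\expand$ to each hybrid vertex, which costs only $O(\vert X\vert^2)$ extra time.
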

\begin{proof}
  The proof (in particular, the part concerning the time complexity) is
  rather lengthy and technical and is, therefore, placed to
  Section~\ref{sec:appx-algo} in the Appendix. We emphasize, that the
  proof, however, contains interesting insights for those readers who want
  to implement algorithm.
\end{proof}

\begin{theorem}\label{thm:comp-level1}
  For every clustering system $\mathscr{C}$ the following statements are
  equivalent:
  \begin{enumerate}[noitemsep,nolistsep]
  \item $\mathscr{C}$ is compatible w.r.t.\ to a (separated, phylogenetic)
    level-1 network;
  \item There is a (separated, phylogenetic) level-1 network with
    $\mathscr{C}_N = \IC$;
  \item $\mathscr{C}$ satisfies Property (L).
  \item $\Hasse[\IC]$ is a level-1 network.
  \end{enumerate}
\end{theorem}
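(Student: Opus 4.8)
The plan is to prove the four statements equivalent by running a single cycle of implications, since nearly all the structural work has already been carried out: the block analysis of $\Hasse[\mathscr{C}]$ culminating in Prop.~\ref{prop:Hasse-is-level-1}, the construction results Prop.~\ref{prop:regular-unique} and Prop.~\ref{prop:unique-separated-level-1}, and the transfer Lemma~\ref{lem:C-L<->IC-L}. Before starting the cycle I would record two elementary facts about the intersection closure $\IC$. First, $\IC$ is a clustering system: it contains $X$ and every singleton because $\mathscr{C}\subseteq\IC$, and it omits $\emptyset$ by its very definition. Second, $\IC$ is \emph{closed}: if $A,B\in\IC$ with $A\cap B\neq\emptyset$, then $A$ and $B$ are each non-empty intersections of clusters of $\mathscr{C}$, hence so is $A\cap B$, so $A\cap B\in\IC$ and Lemma~\ref{lem:simple-closed} applies. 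These two observations are what let me invoke Prop.~\ref{prop:Hasse-is-level-1} and Prop.~\ref{prop:unique-separated-level-1} with $\IC$ in place of their hypothesis $\mathscr{C}$.

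I would then close the cycle $(1)\Rightarrow(3)\Rightarrow(4)\Rightarrow(2)\Rightarrow(1)$. For $(1)\Rightarrow(3)$, compatibility yields a level-1 network $N$ with $\mathscr{C}\subseteq\mathscr{C}_N$; by Cor.~\ref{cor:L} the system $\mathscr{C}_N$ satisfies~(L), and since~(L) is a condition quantified only over triples of clusters it is inherited by the subfamily $\mathscr{C}$, giving~(3); this step is insensitive to the \emph{separated}/\emph{phylogenetic} qualifiers, so even the weakest form of~(1) suffices. For $(3)\Rightarrow(4)$, Lemma~\ref{lem:C-L<->IC-L} promotes~(L) for $\mathscr{C}$ to~(L) for $\IC$, and together with closedness of $\IC$, Prop.~\ref{prop:Hasse-is-level-1} shows $\Hasse[\IC]$ is a phylogenetic level-1 network. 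For $(4)\Rightarrow(2)$, Prop.~\ref{prop:regular-unique} identifies $\Hasse[\IC]$ as the regular network of $\IC$, so $\mathscr{C}_{\Hasse[\IC]}=\IC$ and $\Hasse[\IC]$ already witnesses the phylogenetic form of~(2); to reach the \emph{separated} form I would apply Cor.~\ref{cor:L} to the level-1 network $\Hasse[\IC]$ to conclude that $\IC$ satisfies~(L), whence $\IC$ is closed and satisfies~(L) and Prop.~\ref{prop:unique-separated-level-1} furnishes the unique separated phylogenetic shortcut-free level-1 network with clustering system $\IC$. Finally $(2)\Rightarrow(1)$ is immediate from $\mathscr{C}\subseteq\IC=\mathscr{C}_N$.

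The genuine subtlety—rather than a hard computation—will be threading the parenthetical qualifiers consistently, because they point in opposite directions across the cycle: in $(1)$ the \emph{weakest} class (plain level-1) must already force~(3), while in $(2)$ the \emph{strongest} class (separated phylogenetic) must be produced. The remaining point to keep in view is that $\mathscr{C}$ itself need not be closed, so one cannot feed $\mathscr{C}$ directly into the construction results; the whole argument is driven through $\IC$, and the step $(4)\Rightarrow(2)$ above is precisely where I must re-derive~(L) for $\IC$ from the level-1 hypothesis on $\Hasse[\IC]$ in order to legitimately invoke Prop.~\ref{prop:unique-separated-level-1}. Once those two bookkeeping matters are handled, the equivalence follows by assembly.
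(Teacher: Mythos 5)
Your proof is correct, and every invocation of the paper's machinery checks out: $\IC$ is indeed a clustering system, your argument that it is closed (pairwise non-empty intersections of members of $\IC$ are again non-empty intersections of members of $\mathscr{C}$, then Lemma~\ref{lem:simple-closed}) is sound, and each arrow of the cycle $(1)\Rightarrow(3)\Rightarrow(4)\Rightarrow(2)\Rightarrow(1)$ goes through with the cited results. The route differs from the paper's in two ways worth noting. First, the decomposition: the paper proves the three-cycle $(1)\Rightarrow(2)\Rightarrow(3)\Rightarrow(1)$ and then separately establishes $(2)\Leftrightarrow(4)$, whereas you thread all four statements through a single cycle; both are logically adequate, yours is slightly more economical. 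Second, and more substantively, the paper's step $(1)\Rightarrow(2)$ leans on the correctness of the algorithm \texttt{Check-L1-Compatibility} (Thm.~\ref{thm:AlgL1Comp}), whose proof is deferred to the appendix and itself rests on the same structural results; your replacement step $(1)\Rightarrow(3)$ instead uses only Cor.~\ref{cor:L} plus the heredity of property~(L), which makes the equivalence purely structural and independent of the algorithmic theorem. The remaining ingredients coincide: you use Prop.~\ref{prop:Hasse-is-level-1} where the paper uses its corollary Cor.~\ref{cor:unique-N-closed-L}, and both proofs deploy Lemma~\ref{lem:C-L<->IC-L} to lift~(L) from $\mathscr{C}$ to $\IC$ and Prop.~\ref{prop:unique-separated-level-1} to manufacture the separated phylogenetic witness. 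Your explicit handling of the parenthetical qualifiers---weakest form of~(1) as hypothesis, strongest form of~(2) as conclusion---is also the right reading and is only implicit in the paper.
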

\begin{proof}
  If Statement (1) is satisfied, then the network computed with
  \texttt{Check-L1-Compatibility} is a network with $\mathscr{C}_N = \IC$
  and thus, Statement (2) holds.  If there is a (separated, phylogenetic)
  level-1 network with $\mathscr{C}_N = \IC$, then Thm.~\ref{thm:L1}
  implies that $\IC$ satisfies (L).  Since (L) is a hereditary property,
  $\mathscr{C}$ must satisfy (L) as well. Hence, (2) implies (3).  Assume
  that $\mathscr{C}$ satisfies Property (L).  By Lemma
  \ref{lem:C-L<->IC-L}, $\IC$ satisfies (L) and, by definition, $\IC$ is
  closed.  By Thm.~\ref{thm:L1} and
  Prop.\ \ref{prop:unique-separated-level-1}, there is a (separated,
  phylogenetic) level-1 network such that $\IC=\mathscr{C}_N$. Since
  $\mathscr{C}\subseteq \IC= \mathscr{C}_N$, Item (1) is satisfied.  Hence,
  Statements (1), (2) and (3) are equivalent.  Assume that Statement (2)
  holds.  By Thm.~\ref{thm:L1}, $\IC$ is closed and satisfies (L).
  Cor.\ \ref{cor:unique-N-closed-L} implies that $\Hasse[\IC]$ is a level-1
  network and thus, Statement (4) holds.  Conversely, assume, that
  Statement (4) is satisfied.  Again, by Thm.~\ref{thm:L1}, $\IC$ is closed
  and satisfies (L).  Prop.\ \ref{prop:unique-separated-level-1} implies
  now Statement (2).  In summary, the four statements are equivalent.
\end{proof}

\section{Special Subclasses of Level-1 Networks}
\label{sec:specialL1}

\subsection{Galled Trees}
\label{sec:specialL1-galledtrees}

In level-1 networks, the structure a block $B$ is highly constrained if the
unique terminal vertex $v$ of $B$ has only two parents $v_1$ and $v_2$. The
absence of additional hybrid vertices implies, in particular, that the two
paths from $\max B$ to $v_1$ and $v_2$, are uniquely defined.
\begin{fact}
  \label{obs:cycle}
  Let $N$ be a level-1 network. Then every non-trivial block is an
  (undirected) cycle if and only if every hybrid vertex $v$ in $N$
  satisfies $\indeg(v)=2$.
\end{fact}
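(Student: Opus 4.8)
The plan is to prove both implications directly, exploiting Lemma~\ref{lem:eta=min}: in a level-1 network every non-trivial block $B$ has a unique $\preceq_N$-maximal vertex $\max B$ and a unique $\preceq_N$-minimal vertex $\min B$, and the latter is the \emph{only} hybrid vertex properly contained in $B$. I would first dispatch the short ``only if'' direction. Let $v$ be an arbitrary hybrid vertex. By Lemma~\ref{lem:hybrid-properly-contained}, $v$ and all of its parents lie in a common non-trivial block $B$; by assumption $B$ is an undirected cycle, so $\deg_B(v)=2$. Since $v$ is properly contained in $B$ and blocks are induced subgraphs, every arc into $v$ lies in $B$, whence $\indeg_N(v)=\indeg_B(v)\le\deg_B(v)=2$. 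Together with $\indeg_N(v)\ge 2$ this forces $\indeg(v)=2$.

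For the ``if'' direction I would fix a non-trivial block $B$ and set $h\coloneqq\min B$, the unique hybrid vertex properly contained in $B$; the hypothesis gives $\indeg_N(h)=2$. The crucial observation is that every $w\in V(B)\setminus\{\max B,h\}$ is a tree vertex: since $w\ne\max B$, if $w$ were hybrid then Lemma~\ref{lem:properly-contained} would make it properly contained in $B$, contradicting uniqueness of $h$. I would then pin down the in-degree profile of $B$. We have $\indeg_B(\max B)=0$ by $\preceq_N$-maximality, and $\indeg_B(h)=\indeg_N(h)=2$ by proper containment together with the fact that $B$ is induced. For each remaining vertex $w$, the bound $\indeg_B(w)\le\indeg_N(w)\le 1$ (tree vertex) combined with the existence of a directed $\max B$-to-$w$ path inside $B$ (Lemma~\ref{lem:max-B-unique}), which gives $\indeg_B(w)\ge 1$, yields $\indeg_B(w)=1$. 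Summing over all vertices produces $|E(B)|=\sum_{w}\indeg_B(w)=|V(B)|$.

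It then remains to convert this edge count into the cycle conclusion. Because $N$ is acyclic, the underlying undirected graph of $B$ is simple, so $\sum_{w}\deg_B(w)=2|E(B)|=2|V(B)|$, while Corollary~\ref{cor:block-cycle} guarantees $\deg_B(w)\ge 2$ for every $w$ (each vertex of a non-trivial block lies on a cycle). Hence every vertex has degree exactly $2$, and a connected graph all of whose vertices have degree $2$ is a single cycle; thus $B$ is an undirected cycle, as required. I expect the main obstacle to be the bookkeeping in this second direction --- specifically, establishing that $h$ is the only vertex of $B$ with in-degree exceeding $1$, so that no hidden chords or additional branchings can survive. This is exactly where the level-$1$ hypothesis is essential, entering through Lemmas~\ref{lem:eta=min} and~\ref{lem:properly-contained}; once the in-degree profile is fixed, the degree-counting argument closes the proof cleanly and avoids any delicate case analysis on the two directed $\max B$-to-$h$ paths.
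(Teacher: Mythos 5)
Your proof is correct. Note that the paper states this result as an Observation with no proof at all, so there is no authorial argument to compare against; your write-up is a valid way of filling that gap, and it uses exactly the machinery the paper develops for this purpose (Lemma~\ref{lem:hybrid-properly-contained} and the induced-subgraph property of blocks for the ``only if'' direction; Lemmas~\ref{lem:eta=min}, \ref{lem:properly-contained} and \ref{lem:max-B-unique} for the in-degree profile in the ``if'' direction). The handshake count $\sum_w \deg_B(w)=2\lvert E(B)\rvert=2\lvert V(B)\rvert$ combined with $\deg_B(w)\ge 2$ from Cor.~\ref{cor:block-cycle} is a clean way to conclude that $B$ is a cycle without explicitly tracing the two internally disjoint $\max B$-to-$\min B$ paths, which is presumably the argument the authors had in mind (cf.\ Obs.~\ref{obs:galled-outneighbors}). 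The only facts you use implicitly --- that blocks are induced subgraphs and that acyclicity of $N$ rules out anti-parallel arcs (so the underlying undirected graph of $B$ is simple) --- are both available in the paper, the former being invoked in the proof of Lemma~\ref{lem:block-prec-sandwich}.
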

We note that a similar result does not hold for level-$k$ networks with
$k>1$. As an example, Fig.~\ref{fig:level2-multiset}(A) shows two networks
whose hybrid vertices have all indegree $2$ but whose blocks are not
(undirected) cycles.

\begin{lemma}
  \label{lem:outerplanar}
  If $N$ is a level-1 network and every hybrid vertex $v$ in $N$ satisfies
  $\indeg(v)=2$, then $N$ is outerplanar.
\end{lemma}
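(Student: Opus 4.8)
The plan is to first reduce the claim to the block structure of $N$ and then invoke the forbidden-subgraph characterization of outerplanarity. By Obs.~\ref{obs:cycle}, the hypothesis that every hybrid vertex $v$ satisfies $\indeg(v)=2$ guarantees that every non-trivial block of $N$ is an (undirected) cycle; the remaining (trivial) blocks are single arcs or isolated vertices. Passing to the underlying undirected graph $G$ of $N$ (which is simple, since $N$ is a DAG and so has neither parallel arcs nor digons), we conclude that every block of $G$ is either an (undirected) cycle or a single edge. Outerplanarity refers to this underlying undirected graph, so it suffices to show that such a $G$ is outerplanar.

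Recall the Chartrand--Harary characterization: a graph is outerplanar if and only if it contains no subgraph that is a subdivision of $K_4$ or of $K_{2,3}$. I would argue by contradiction, supposing that $G$ contains such a subdivision $H$. Both $K_4$ and $K_{2,3}$ are biconnected, and a subdivision of a biconnected graph is again biconnected, so $H$ is a biconnected subgraph of $G$ on at least four vertices. Since blocks are, by definition, maximal biconnected subgraphs, $H$ is contained in a single block $B$ of $G$ (this is where Fact~\ref{obs:identical-block} is used to rule out $H$ spreading across several blocks). Because $\vert V(H)\vert\ge 4$, the block $B$ is neither a single vertex nor a single arc, hence non-trivial, and therefore an (undirected) cycle by the first step.

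The contradiction then follows from a degree count. Every vertex of a cycle has degree exactly $2$, so for the subgraph $H\subseteq B$ we have $\deg_H(w)\le \deg_B(w)=2$ for all $w\in V(H)$. However, a subdivision of $K_4$ retains four branch vertices of degree $3$, and a subdivision of $K_{2,3}$ retains (at least) two branch vertices of degree $3$; in either case $H$ has a vertex of degree $\ge 3$, which cannot lie inside a cycle. This contradiction shows that no such $H$ exists, so $G$, and hence $N$, is outerplanar.

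I expect the only genuine subtlety to be the claim that a biconnected subgraph lies inside a single block; this is immediate from the paper's definition of a block as a maximal biconnected subgraph together with Fact~\ref{obs:identical-block}, but it should be stated carefully so that $H\subseteq B$ is fully justified. An alternative, more constructive route would build an outerplanar embedding by induction along the block--cut-vertex tree, drawing each cyclic block with all its vertices on the outer face and gluing successive blocks at their shared cut vertices; this avoids citing Chartrand--Harary but requires extra bookkeeping to verify that every vertex remains on the outer boundary after each gluing step, which is why I prefer the forbidden-subgraph argument.
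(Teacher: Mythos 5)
Your proof is correct and follows essentially the same route as the paper: both reduce to Obs.~\ref{obs:cycle} (all non-trivial blocks are cycles) and then apply the Chartrand--Harary forbidden-subdivision characterization of outerplanarity. The only difference is that you explicitly justify the middle step---a $K_4$- or $K_{2,3}$-subdivision would be a biconnected subgraph on at least four vertices, hence contained in a single non-trivial block, which is impossible by the degree count in a cycle---whereas the paper asserts this implication without proof; your filled-in argument is sound.
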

\begin{proof}
  By Obs.~\ref{obs:cycle}, every non-trivial block on $N$ is a cycle.
  Therefore, the underlying undirected graph of $N$ does not contain a
  subdivision of the graph $K_4$, i.e., the complete graph on $4$ vertices,
  nor of the complete bipartite graph graph $K_{2,3}$.  By Thm.~1 in
  \cite{Chartrand:67}, $N$ is outerplanar.
\end{proof}

In \cite{Gusfield:03}, \emph{galled trees} were introduced as phylogenetic
networks in which all cycles are vertex disjoint. Here we consider a more
general version, where cycles are allowed to share a cutvertex and the
network is not required to be phylogenetic. More constrained types of
networks will be discussed in the subsequent sections.
\begin{definition}
  \label{def:galledtree}
  A \emph{galled tree} is a network in which every non-trivial block is an
  (undirected) cycle.
\end{definition}
\begin{lemma}
  \label{lem:galled-level-1}
  Every galled tree is level-1.
\end{lemma}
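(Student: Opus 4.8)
The plan is to use the proper-containment reformulation of the level-$k$ condition: by Lemma~\ref{lem:hybrid-properly-contained} and the discussion following Def.~\ref{def:level-k}, a network is level-1 precisely when each of its blocks properly contains at most one hybrid vertex. Trivial blocks properly contain no hybrid vertex, since by Lemma~\ref{lem:hybrid-properly-contained} every hybrid vertex is properly contained in exactly one block and that block is non-trivial. It therefore suffices to bound the number of hybrid vertices properly contained in each non-trivial block $B$. As $N$ is a galled tree, $B$ is an undirected cycle, so every vertex of $B$ has exactly two neighbors inside $B$.

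First I would identify the properly contained hybrid vertices of $B$ with the ``local minima'' of the cycle. By Lemma~\ref{lem:properly-contained}, a hybrid vertex $w$ lies in $V(B)\setminus\{\max B\}$ if and only if it is properly contained in $B$, in which case all of its (at least two) parents lie in $B$. Since $w$ has only two neighbors in the cycle, both must then be in-neighbors, i.e.\ both cycle-edges at $w$ are oriented toward $w$. Conversely, any vertex both of whose incident cycle-edges point inward has in-degree at least two and is hence a hybrid vertex, necessarily distinct from $\max B$. Thus the hybrid vertices properly contained in $B$ are exactly the cycle vertices at which both cycle-edges are in-arcs.

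Next I would show that the cycle has a unique ``local maximum,'' namely $\max B$. By Cor.~\ref{cor:maxB-outdegree}, $\max B$ has at least two out-neighbors in $B$; having only two neighbors in the cycle, both cycle-edges at $\max B$ point outward, so $\max B$ is a local maximum. Suppose some $w\neq\max B$ also had both incident cycle-edges pointing outward. By Lemma~\ref{lem:max-B-unique} we have $w\prec_N\max B$ and there is a directed path from $\max B$ to $w$ lying entirely in $B$; its final arc is an in-arc at $w$ contained in the cycle, contradicting that both cycle-edges at $w$ are out-arcs. Hence $\max B$ is the only local maximum.

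Finally, because $N$ is acyclic the cycle contains at least one forward and one backward arc, and in any cyclic orientation the number of local maxima equals the number of local minima. With exactly one local maximum there is then exactly one local minimum, i.e.\ exactly one properly contained hybrid vertex in $B$. Consequently every block properly contains at most one hybrid vertex, so $N$ is level-1. I expect the middle step to be the main obstacle: a cycle inside a DAG can in general carry several local maxima and minima, and the argument must exploit the uniqueness of $\max B$ (Lemma~\ref{lem:max-B-unique}) to force a single peak, and thereby a single valley, in the cyclic orientation.
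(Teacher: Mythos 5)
Your proof is correct and follows essentially the same route as the paper's: both identify the hybrid vertices properly contained in a cycle block with the cycle vertices whose two incident cycle-arcs point inward, and both derive the bound from the uniqueness of $\max B$ (Lemma~\ref{lem:max-B-unique}). The only difference is organizational --- the paper argues by contradiction (two such sinks on the cycle would force two $\preceq_N$-maximal vertices in $B$), whereas you argue directly that $\max B$ is the unique cycle source and invoke the equality of the numbers of sources and sinks in any orientation of a cycle; your version in fact spells out the step the paper dismisses as ``easy to see.''
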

\begin{proof}
  Let $N$ be a galled tree. Every trivial block $B$ contains at most one
  hybrid vertex distinct from $\max B$.  Thus consider a non-trivial block
  $B$ and assume, for contradiction, that $B$ properly contains two hybrid
  vertices $h$ and $h'$.  Since $B$ is a cycle, the two vertices in $B$
  that are adjacent with $h$ must exactly be the two in-neighbors of
  $h$. The same holds for $h$.  It is now easy to see that the two path in
  $B$ that connect $h$ and $h'$ must each contain a vertex whose two
  neighbors in the cycle $B$ are out-neighbors.  Hence, these two distinct
  vertices are $\preceq_{N}$-maximal in $B$, which contradicts the
  uniqueness of $\max B$.
\end{proof}
Lemma~\ref{lem:galled-level-1} and Obs.~\ref{obs:cycle} imply
\begin{corollary}
  \label{cor:galled-level-1}
  A network is a galled tree if and only if it is level-1 and satisfies
  $\indeg(v)=2$ for all hybrid vertices $v$.
\end{corollary}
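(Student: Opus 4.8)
The plan is to obtain both directions of the biconditional directly by combining Lemma~\ref{lem:galled-level-1} with Observation~\ref{obs:cycle}; no further structural analysis of blocks is needed, since all the work has already been done in those two results.

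For the forward direction I would assume that $N$ is a galled tree. By Lemma~\ref{lem:galled-level-1}, $N$ is then level-1, and this is precisely the hypothesis that permits me to invoke Observation~\ref{obs:cycle}. Since, by Definition~\ref{def:galledtree}, every non-trivial block of a galled tree is an (undirected) cycle, the equivalence stated in Observation~\ref{obs:cycle} immediately yields that $\indeg(v)=2$ for every hybrid vertex $v$ of $N$. Thus a galled tree is level-1 with all hybrid vertices of indegree $2$.

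For the converse I would assume that $N$ is level-1 and that $\indeg(v)=2$ holds for all hybrid vertices $v$. Because $N$ is level-1, Observation~\ref{obs:cycle} applies and, using the indegree hypothesis, gives that every non-trivial block of $N$ is an (undirected) cycle. This is exactly the defining condition of a galled tree in Definition~\ref{def:galledtree}, so $N$ is a galled tree.

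I do not anticipate any genuine obstacle: the substantive content is already contained in the two cited statements. The only point requiring attention is that Observation~\ref{obs:cycle} presupposes $N$ to be level-1, so in the forward direction one must first secure this hypothesis via Lemma~\ref{lem:galled-level-1} before the observation can legitimately be applied; in the converse direction the level-1 hypothesis is assumed outright, so the observation is available immediately.
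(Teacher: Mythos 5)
Your proof is correct and is exactly the paper's argument: the paper derives the corollary by combining Lemma~\ref{lem:galled-level-1} with Obs.~\ref{obs:cycle}, using the lemma in the forward direction to establish the level-1 hypothesis needed before the observation can be applied, just as you do. Your write-up merely spells out the two directions that the paper leaves implicit.
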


\begin{fact}\label{obs:galled-outneighbors}
In a galled tree, every non-trivial block consists of two internally vertex
disjoint paths connecting $\max B$ and $\min B$. Moreover, every vertex
contained in $B$ that is distinct from $\max B$ and $\min B$ has precisely
one out-neighbor in $B$.
\end{fact}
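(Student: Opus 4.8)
The plan is to exploit the fact that, by Lemma~\ref{lem:galled-level-1}, a galled tree is level-1, so by Lemma~\ref{lem:eta=min} every non-trivial block $B$ has a unique $\preceq_N$-maximal vertex $\max B$ (its root) and a unique $\preceq_N$-minimal vertex $\min B$, the latter being the unique properly contained hybrid vertex of $B$. By Def.~\ref{def:galledtree}, $B$ is an undirected cycle, so every vertex of $B$ has exactly two neighbors in $B$. I would classify each vertex of the cycle as a \emph{local source} (both incident cycle-arcs are out-arcs), a \emph{local sink} (both are in-arcs), or a \emph{transit} vertex (one in- and one out-arc). Since $N$ is acyclic, the cycle contains at least one vertex of each of the first two kinds.

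First I would pin down the sources and sinks. By Cor.~\ref{cor:maxB-outdegree}, $\max B$ has at least two out-neighbors in $B$, and since it has exactly two neighbors on the cycle, both of them are out-neighbors; hence $\max B$ is a local source. Conversely, if $w$ is any local sink, then both cycle-neighbors of $w$ are in-neighbors, so $\indeg_N(w)\ge 2$ and $w$ is a hybrid vertex; Cor.~\ref{cor:galled-level-1} forces $\indeg_N(w)=2$, so these two cycle-neighbors are \emph{all} the in-neighbors of $w$, whence $w$ is properly contained in $B$ and therefore $w=\min B$ by Lemma~\ref{lem:eta=min}. Thus $\min B$ is the unique local sink (and, in particular, is itself a local sink).

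Next I would argue that $\max B$ is the unique local source. Traversing the undirected cycle in a fixed direction, the orientation of the traversed arcs switches precisely at the local sources and local sinks; hence sources and sinks occur alternately around the cycle, and their numbers coincide. Since there is exactly one local sink, there is exactly one local source, which must be $\max B$. Consequently every vertex of $B$ other than $\max B$ and $\min B$ is a transit vertex.

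Finally, I would observe that the two arcs of the cycle emanating from the unique source $\max B$ initiate two walks which, containing no further source or sink, proceed monotonically (each transit vertex passing the walk from its single in-arc to its single out-arc) until they first meet a sink, necessarily $\min B$. These are two directed paths from $\max B$ to $\min B$ that together exhaust $B$ and share only their endpoints, establishing the first claim. For the ``moreover'' part, every $v\in V(B)\setminus\{\max B,\min B\}$ is a transit vertex, so exactly one of its two neighbors in $B$ is an out-neighbor; as $B$ is precisely the cycle, $v$ has exactly one out-neighbor in $B$. The only delicate point is the alternation (and hence equality of counts) of sources and sinks along the cycle; everything else is bookkeeping on the two directed segments.
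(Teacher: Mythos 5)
Your proof is correct, and it follows the route the paper intends: the paper records this statement as an unproved Observation, treating it as immediate from the facts that a non-trivial block of a galled tree is an undirected cycle with unique extremes $\max B$ and $\min B$ (Lemma~\ref{lem:eta=min}), that $\max B$ has two out-neighbors in $B$ (Cor.~\ref{cor:maxB-outdegree}), and that hybrid vertices have indegree $2$ (Cor.~\ref{cor:galled-level-1}), which is exactly the material you deploy. Your one added ingredient --- classifying cycle vertices as local sources, sinks, and transit vertices and noting that orientation reversals along the cycle alternate between source-type and sink-type, so a unique sink forces a unique source --- is sound and is precisely what is needed to turn the paper's ``immediate'' claim into a rigorous argument.
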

As we shall see below, this implies that its clustering system satisfies
the following property:
\begin{definition}
  \label{def:N3O}
  \begin{description}
    \item[(N3O)] $\mathscr{C}$ contains no three distinct pairwise
    overlapping clusters.
  \end{description}
\end{definition}

\begin{lemma}
  \label{lem:galled-implies-N3O}
  If $\mathscr{C}$ is the clustering system of a galled tree, then
  $\mathscr{C}$ satisfies (N3O).
\end{lemma}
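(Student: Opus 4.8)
The plan is to argue by contradiction. Suppose $\mathscr{C}=\mathscr{C}_N$ for a galled tree $N$ and that $\mathscr{C}$ contains three \emph{distinct} pairwise overlapping clusters $C_1,C_2,C_3$. First I would choose vertices $u_1,u_2,u_3\in V(N)$ with $\CC(u_i)=C_i$; since the $C_i$ are distinct, so are the $u_i$. For each overlapping pair $C_i,C_j$, Lemma~\ref{lem:overlap-B0} supplies the crucial local information: $u_i$ and $u_j$ are $\preceq_N$-incomparable and both lie in the interior $B_{ij}^0$ of a common non-trivial block $B_{ij}$ of $N$. In particular, the three vertices $u_1,u_2,u_3$ are \emph{pairwise} $\preceq_N$-incomparable.

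The first substantive step is to show that the three blocks $B_{12}$, $B_{13}$, $B_{23}$ actually coincide. Here I would use that $u_1\in B_{12}^0\cap B_{13}^0$, so $u_1\neq\max B_{12}$ and $u_1\neq\max B_{13}$, whence Lemma~\ref{lem:block-identity} gives $B_{12}=B_{13}$. The same argument applied to $u_2\in B_{12}^0\cap B_{23}^0$ gives $B_{12}=B_{23}$. Writing $B$ for this common non-trivial block, I then have $u_1,u_2,u_3\in B^0$.

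Next I would exploit the galled-tree structure of $B$. Since $N$ is a galled tree, $B$ is an undirected cycle, and by Obs.~\ref{obs:galled-outneighbors} it consists of two internally vertex-disjoint directed paths $P_1$ and $P_2$ joining $\max B$ to $\min B$, along which (again by Obs.~\ref{obs:galled-outneighbors}) every interior vertex has exactly one out-neighbour in $B$. Because $B^0=V(B)\setminus\{\max B,\min B\}$, each of $u_1,u_2,u_3$ lies in the interior of $P_1$ or of $P_2$, and any two vertices lying on the same $P_i$ are joined by a directed subpath and hence are $\preceq_N$-comparable. By the pigeonhole principle two of the three vertices lie on the same path, so they are comparable, contradicting the pairwise incomparability established above. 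This contradiction shows that no such triple exists, i.e.\ $\mathscr{C}$ satisfies (N3O).

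The step I expect to be the main obstacle is the block-identity argument: one must be sure that none of $u_1,u_2,u_3$ is the $\preceq_N$-maximal vertex of the block in question before invoking Lemma~\ref{lem:block-identity}, and this is exactly what membership in the interiors $B_{ij}^0$ guarantees. Everything after that is a short structural observation plus pigeonhole; notably, property (L) (Cor.~\ref{cor:L}) is not needed, as the two-path decomposition of a cycle already forces comparability of any two interior vertices sharing a path.
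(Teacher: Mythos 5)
Your proof is correct. The first half---passing from three distinct pairwise overlapping clusters to three pairwise $\preceq_N$-incomparable vertices lying in the interior of a single non-trivial block via Lemma~\ref{lem:block-identity}---matches the paper's setup, except that you reach the common block through Lemma~\ref{lem:overlap-B0} (which packages incomparability and membership in $B^0$ in one step, for arbitrary networks), whereas the paper uses Lemma~\ref{lem:inclusion} together with the level-1-specific Lemma~\ref{lem:uvmin}. The concluding contradiction, however, is genuinely different. The paper takes, for each of the three incomparable vertices, a directed path from it down to $\min B$, shows by a maximal-common-vertex argument that any two of these paths meet only in $\min B$, and concludes that $\min B$ would need indegree at least $3$, contradicting Obs.~\ref{obs:cycle}. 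You instead invoke Obs.~\ref{obs:galled-outneighbors}: the cycle $B$ is the union of two internally vertex-disjoint directed paths from $\max B$ to $\min B$, any two interior vertices on the same path are $\preceq_N$-comparable, and the pigeonhole principle then contradicts pairwise incomparability directly. Your route is shorter and makes the ``only two sides of the cycle'' intuition explicit, importing the structural work wholesale from Obs.~\ref{obs:galled-outneighbors}; the paper's route only needs the indegree-$2$ characterization of galled trees at the very end and in effect re-derives that structural fact inside the argument, which is why it reads longer. One small point worth making explicit in your write-up: the identity $B^0=V(B)\setminus\{\max B,\min B\}$ that you use is justified because galled trees are level-1 (Lemma~\ref{lem:galled-level-1}), and the paper records this identity precisely for level-1 networks; alternatively it follows from Obs.~\ref{obs:galled-outneighbors} itself, since every vertex of $B$ other than $\min B$ and $\max B$ has an out-neighbor in $B$ and hence is not terminal.
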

\begin{proof}
  Suppose there is a galled tree $N$ with $\mathscr{C}_{N}=\mathscr{C}$.
  In particular, $N$ is level-1 by Lemma~\ref{lem:galled-level-1}.  Now
  suppose, for contradiction, that (N3O) is not satisfied. Thus there are
  three distinct vertices $u_1,u_2,u_3\in V(N)$ such that
  $C_1\coloneqq\CC_N(u_1)$, $C_2\coloneqq\CC_N(u_2)$, and
  $C_3\coloneqq\CC_N(u_3)$ overlap pairwise.  By
  Lemma~\ref{lem:inclusion}, it must hold that $u_1$, $u_2$, and $u_3$ are
  pairwise $\preceq_{N}$-incomparable.  By Lemma~\ref{lem:uvmin},
  $C_1\cap C_2\ne \emptyset$ implies that $u_1$ and $u_2$ are located in a
  common block $B$. Clearly, $\preceq_{N}$-incomparability of $u_1$ and
  $u_2$ implies $u_1\ne \max B$.  By similar arguments, $u_1$ and $u_3$ are
  located in a common block $B'$ and $u_1\ne \max B'$.  Since $u_1\notin
  \{\max B, \max B'\}$, we can apply Lemma~\ref{lem:block-identity} to
  conclude that $B=B'$.  Hence, for every $i\in\{1,2,3\}$, there is a
  directed path $P_i$ in $B$ from $u_i$ to $\min B$.  Now consider, for
  distinct $i,j\in\{1,2,3\}$, the $\preceq_{N}$-maximal vertex $v$ in $P_i$
  that is also a vertex in $P_j$ (which exists since $\min B$ is a vertex
  of both paths). We have $v\notin\{u_i, u_j\}$ because $u_i$ and $u_j$ are
  $\preceq_N$-incomparable.  Therefore, the unique parents $v_i$ and $v_j$
  of $v$ in $P_i$ and $P_j$, resp., must be distinct. Therefore, $v$ is a
  hybrid vertex in $B$ and clearly distinct from $\max B$. Hence, it must
  hold that $v=\min B$.  Since $i$ and $j$ were chosen arbitrarily, the
  paths $P_1$, $P_2$, and $P_3$ only have vertex $\min B$ in common.  This
  together with the fact that $\min B\notin \{u_1,u_2,u_3\}$ implies that
  $\min B$ has at least indegree 3.  By Obs.~\ref{obs:cycle}, therefore,
  $N$ has a non-trivial block that is not an undirected cycle. Hence, $N$
  is not a galled tree; a contradiction.
\end{proof}
The converse of Lemma~\ref{lem:galled-implies-N3O} is not true since, in
addition to (N3O), closedness and (L) are required:
\begin{theorem}
  \label{thm:galled-no3overlap}
  $\mathscr{C}$ is the clustering system of a galled tree if and only if
  $\mathscr{C}$ is closed and satisfies (L) and (N3O).  Moreover, in this
  case, $\Hasse[\mathscr{C}]$ is a phylogenetic galled tree.
\end{theorem}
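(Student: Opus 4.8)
The plan is to prove both directions by reducing everything to the established theory of level-1 networks together with the block structure of the Hasse diagram developed in Section~\ref{ssec:Level1-L}, so that the only genuinely new step is translating the indegree constraint characterizing galled trees (Cor.~\ref{cor:galled-level-1}) into the cluster condition (N3O).

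For the forward direction I would argue that no new work is needed. Suppose $N$ is a galled tree. By Lemma~\ref{lem:galled-level-1}, $N$ is level-1, so its clustering system $\mathscr{C}=\mathscr{C}_N$ is closed by Cor.~\ref{cor:l1->closed-weak-hierarchy} and satisfies (L) by Cor.~\ref{cor:L}. Property (N3O) is exactly the content of Lemma~\ref{lem:galled-implies-N3O}. Hence all three conditions hold.

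For the converse, assume $\mathscr{C}$ is closed and satisfies (L) and (N3O). By Prop.~\ref{prop:Hasse-is-level-1}, the Hasse diagram $\Hasse[\mathscr{C}]$ is a phylogenetic level-1 network, so by Cor.~\ref{cor:galled-level-1} it suffices to show that every hybrid vertex, i.e.\ every cluster $C$ with $\indeg_{\Hasse}(C)\ge 2$, actually has indegree exactly $2$. The heart of the argument is the observation that any two distinct in-neighbors $D_1,D_2$ of $C$ in $\Hasse[\mathscr{C}]$ must overlap: since each $D_i$ covers $C$ in the inclusion order, we have $C\subsetneq D_i$ together with $D_1\not\subseteq D_2$ and $D_2\not\subseteq D_1$ (otherwise one of the $D_i$ would lie strictly between $C$ and the other, contradicting that both cover $C$), while $D_1\cap D_2\supseteq C\ne\emptyset$; thus $D_1\cap D_2\notin\{\emptyset,D_1,D_2\}$. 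Consequently, if $C$ had three distinct in-neighbors $D_1,D_2,D_3$, these would form three distinct pairwise overlapping clusters, contradicting (N3O). Hence $\indeg_{\Hasse}(C)\le 2$ for every $C$, and combined with $\indeg_{\Hasse}(C)\ge 2$ for hybrid vertices this forces indegree exactly $2$.

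Putting the pieces together, Cor.~\ref{cor:galled-level-1} then yields that $\Hasse[\mathscr{C}]$ is a galled tree, and it is phylogenetic by Prop.~\ref{prop:Hasse-is-level-1}; since its clustering system equals $\mathscr{C}$, this simultaneously establishes the converse and the ``Moreover'' claim. I expect the only subtle point to be the overlap argument for in-neighbors in the Hasse diagram---specifically verifying that the covering relations rule out inclusions among the $D_i$---whereas the invocations of the level-1 machinery are routine.
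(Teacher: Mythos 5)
Your proof is correct, and its overall architecture coincides with the paper's: identical forward direction (level-1 via Lemma~\ref{lem:galled-level-1}, then the established closedness, (L), and Lemma~\ref{lem:galled-implies-N3O}), and for the converse the same reduction to showing that a hybrid vertex of $\Hasse[\mathscr{C}]$ with three in-neighbors would force three pairwise overlapping clusters, contradicting (N3O). The one place you genuinely diverge is the mechanism behind that overlap claim. The paper works in the regular network $N=\Hasse[\mathscr{C}]$ and argues via network properties: shortcut-freeness plus Obs.~\ref{obs:shortcut} gives pairwise $\preceq_N$-incomparability of the three parents, and (PCC) (via Lemma~\ref{lem:orderiff}) converts incomparability into non-inclusion of their clusters. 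You instead argue purely order-theoretically: two distinct in-neighbors of $C$ in a Hasse diagram are both covers of $C$, so neither can contain the other, and both contain $C\ne\emptyset$, hence they overlap. Your version is more elementary and more general -- it shows that in the Hasse diagram of \emph{any} clustering system every vertex's in-neighbors pairwise overlap, with no appeal to (PCC), shortcut-freeness, or level-1 structure (those are only needed, via Prop.~\ref{prop:Hasse-is-level-1} and Cor.~\ref{cor:galled-level-1}, to know the Hasse diagram is level-1 and to translate the indegree bound back into the galled-tree property). What the paper's heavier route buys is consistency with its general toolkit: the same (PCC)/shortcut-free argument pattern is reused elsewhere for networks that are not Hasse diagrams, whereas your covering-relation argument is specific to $\Hasse[\mathscr{C}]$ -- which is all this theorem requires.
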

\begin{proof}
  Suppose first that $N$ is the clustering system of a shortcut-free galled
  tree. By Lemma~\ref{lem:galled-level-1}, $N$ is level-1, and thus, it is
  closed and $\mathscr{C}$ satisfies (L) by Thm.~\ref{thm:L1}.  By
  Lemma~\ref{lem:galled-implies-N3O}, $\mathcal{C}$ also satisfies (N3O).
  Now suppose, that $\mathscr{C}$ is closed, satisfies $(L)$, and does not
  contain three pairwise overlapping clusters.  By
  Cor.~\ref{cor:unique-N-closed-L}, the unique regular network $N\coloneqq
  \Hasse[\mathscr{C}]$ with clustering system $\mathscr{C}$ is a
  shortcut-free phylogenetic level-1 network.  Hence, it satisfies (PCC) by
  Lemma~\ref{lem:orderiff}.  Suppose, for contradiction that $N$ is not a
  galled tree, i.e., it contains a non-trivial block, that is not an
  undirected cycle. By Obs.~\ref{obs:cycle}, there is a hybrid vertex $w\in
  V(N)$ with (at least) three distinct in-neighbors $u_1$, $u_2$, and
  $u_3$. By Obs.~\ref{obs:shortcut} and since $N$ is shortcut-free, $u_1$,
  $u_2$, and $u_3$ must be pairwise $\preceq_{N}$-incomparable.  By
  (PCC), it therefore holds $\CC_N(u_i)\not\subseteq \CC_{N}(u_j)$ for all
  distinct $i,j\in\{1,2,3\}$.  Moreover, it holds
  $\emptyset\ne\CC_{N}(w)\subseteq \CC_{N}(u_i)$ for $i\in\{1,2,3\}$.
  Taken together, the latter two arguments imply that $\CC_{N}(u_1)$,
  $\CC_N(u_2)$, and $\CC_{N}(u_3)$ overlap pairwise; a contradiction.
  Hence, $N$ must be a galled tree.
\end{proof}

\begin{definition} \cite{Diday:86,Bertrand:13} A clustering system
  $(X,\mathscr{C})$ is \emph{pre-pyramidal} if there exists a total order
  $\lessdot$ on $X$ such that, for every $C\in\mathscr{C}$ and all
  $x,y\in C$, it holds that $x\lessdot u\lessdot y$ implies $u\in C$. That
  is, all clusters $C\in\mathscr{C}$ are intervals w.r.t.\ $\lessdot $.
\end{definition}

A necessary condition \cite{Nebesky:83,Changat:21w} for $\mathscr{C}$ to be
pre-pyramidal is
\begin{description}
\item[(WP)] If $C_1,C_2,C_3\in\mathscr{C}$ have pairwise non-empty
  intersections, then one of the three sets is contained in the union of
  the other two.
\end{description}
Taken together, (L) and (WP) imply (N3O). More precisely, we have
\begin{lemma}
  \label{lem:py-no3}
  Let $\mathscr{C}$ be a pre-pyramidal clustering system satisfying (L).
  Then $\mathscr{C}$ satisfies (N3O), i.e., there are no three pairwise
  overlapping sets.
\end{lemma}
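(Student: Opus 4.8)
The plan is to argue by contradiction, using (WP) to fix the inclusion structure of a hypothetical overlapping triple and then collapsing it with (L). First I would record the one external input: a pre-pyramidal clustering system satisfies (WP). This is exactly the necessary condition stated immediately above the lemma, so it may be invoked directly without further justification, reducing the task to showing that (L) together with (WP) forbids three pairwise overlapping clusters.

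So suppose, for contradiction, that $\mathscr{C}$ contains three distinct pairwise overlapping clusters $C_1,C_2,C_3$. Since two sets that overlap have non-empty intersection (as $C\cap C'\notin\{\emptyset,C,C'\}$ in particular forbids $\emptyset$), the three clusters have pairwise non-empty intersections. Hence (WP) applies and tells us that one of the three is contained in the union of the other two; after relabelling — which is harmless because the overlapping hypothesis is symmetric in $C_1,C_2,C_3$ — I may assume $C_1\subseteq C_2\cup C_3$. The next step is to feed $C_1$ into (L) as the cluster overlapping both of the others: since $C_1$ overlaps $C_2$ and $C_1$ overlaps $C_3$, property (L) gives $C_1\cap C_2 = C_1\cap C_3 \eqqcolon I$.

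From here the finish is a short set-chase. For any $x\in C_1$ we have $x\in C_2\cup C_3$ by the inclusion from (WP), so $x\in C_2$ or $x\in C_3$; in the first case $x\in C_1\cap C_2 = I$, in the second $x\in C_1\cap C_3 = I$, so in either case $x\in I$. Thus $C_1\subseteq I = C_1\cap C_2\subseteq C_1$, whence $C_1 = C_1\cap C_2$, i.e.\ $C_1\subseteq C_2$. This contradicts the assumption that $C_1$ and $C_2$ overlap, because overlapping requires $C_1\cap C_2\neq C_1$. Therefore no such triple exists and $\mathscr{C}$ satisfies (N3O).

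I do not expect a serious obstacle: once (WP) pins down which cluster lies inside the union of the other two, the collapse is immediate. The only points that require care — and thus the steps I would state explicitly — are checking that pairwise overlapping indeed licenses (WP) (pairwise non-empty intersections) and that $C_1$ is placed in the correct \emph{pivot} role when invoking (L), after which the relabelling needs the symmetry remark to be fully rigorous.
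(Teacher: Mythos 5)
Your proof is correct and follows essentially the same route as the paper: assume three pairwise overlapping clusters, invoke (WP) to get (w.l.o.g.) $C_1\subseteq C_2\cup C_3$, apply (L) with pivot $C_1$, and collapse $C_1$ into $C_1\cap C_2$ for a contradiction. The only cosmetic differences are that the paper applies (L) before (WP) and phrases the final collapse via the distributive law, $C_1=C_1\cap(C_2\cup C_3)=(C_1\cap C_2)\cup(C_1\cap C_3)$, where you argue elementwise — these are the same computation.
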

\begin{proof}
  Assume, for contradiction, that $C_1$, $C_2$, and $C_3$ overlap
  pairwise. Then (L) implies that $C_1\cap C_2 = C_2\cap C_3 = C_1\cap
  C_3 = C_1\cap C_2\cap C_3\eqqcolon C\ne\emptyset$. Since $\mathscr{C}$ is
  pre-pyramidal and the three pairwise intersections are non-empty, (WP)
  implies that one of the three sets is contained in the union of the other
  two. W.l.o.g., suppose $C_1\subseteq C_2\cup C_3$. Equivalently, $C_1 =
  C_1\cap (C_2\cup C_3)=(C_1\cap C_2)\cup (C_1\cap C_3)=C\subsetneq C_1$, a
  contradiction.
\end{proof}

Pre-pyramidal set systems are also known as ``interval hypergraphs''.  A
characterization in terms of an infinite series of forbidden
sub-hypergraphs has been developed in
\cite{Tucker:72,Trotter:76,Duchet:84}. It can be used to obtain a simple
necessary and condition in the presence of (L).
\begin{proposition}
  \label{prop:N3O-iff-py}
  Let $\mathscr{C}$ be a clustering system satisfying (L). Then
  $\mathscr{C}$ is pre-pyramidal if and only if it satisfies (N3O).
\end{proposition}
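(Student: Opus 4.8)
The forward implication is already in hand: if $\mathscr{C}$ is pre-pyramidal and satisfies (L), then Lemma~\ref{lem:py-no3} shows that it satisfies (N3O). Hence the entire content of the Proposition lies in the converse, which I would prove in contrapositive form: assuming (L), I would show that whenever $\mathscr{C}$ is \emph{not} pre-pyramidal it must violate (N3O), i.e.\ it contains three pairwise overlapping clusters.

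The plan is to invoke the forbidden-sub-hypergraph characterization of interval hypergraphs (pre-pyramidal set systems) of \cite{Tucker:72,Trotter:76,Duchet:84}: $(X,\mathscr{C})$ fails to be an interval hypergraph precisely if, after restricting to some vertex subset $Y\subseteq X$ and some subset $\mathscr{D}\subseteq\mathscr{C}$, the traces $\{C\cap Y\mid C\in\mathscr{D}\}$ realize one of the minimal obstructions (Tucker's matrices). Two preparatory observations make this usable. First, both (L) and (N3O) are inherited by such restrictions: if $C_1\cap Y$ and $C_2\cap Y$ overlap, then so do $C_1$ and $C_2$ in $\mathscr{C}$ (a nonempty common part survives, and a witness of incomparability inside $Y$ is one in $\mathscr{C}$); intersecting the identity $C_1\cap C_2=C_1\cap C_3$ from (L) with $Y$ then yields (L) for the traces, while lifting a pairwise-overlapping triple of traces yields one in $\mathscr{C}$. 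Consequently it suffices to show that no minimal obstruction can simultaneously satisfy (L) and (N3O).

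I would then split the obstructions into two regimes. The cyclic family $M_{I_k}$ is the incidence structure of a cycle, i.e.\ two-element clusters $E_i=\{v_i,v_{i+1}\}$ taken cyclically; here $E_1$ overlaps both neighbours $E_0$ and $E_2$, so (L) would force $E_1\cap E_0=E_1\cap E_2$, that is $\{v_1\}=\{v_2\}$, which is impossible. Thus every cyclic obstruction (the triangle included) already contradicts (L). For each of the remaining types $M_{II_k}$, $M_{III_k}$, $M_{IV}$, $M_{V}$ I would locate inside it either a pairwise overlapping triple, contradicting (N3O), or a chain of overlapping two-element clusters forcing two unequal intersections as above, contradicting (L). The prototype of the first situation is three clusters sharing a common core but extending it differently, e.g.\ $\{1,2,3\},\{2,3,4\},\{2,3,5\}$: this is non-interval, satisfies (L), and violates (N3O) --- which is exactly why (N3O) cannot be dropped. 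Since every obstruction is ruled out under (L) and (N3O) together, $\mathscr{C}$ is an interval hypergraph, i.e.\ pre-pyramidal.

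The main obstacle I anticipate is this last step: verifying uniformly, across the infinitely many obstruction matrices of types $M_{II_k}$ and $M_{III_k}$ together with the sporadic $M_{IV}$ and $M_{V}$, that each carries either a pairwise overlapping triple or an intersection-incompatible chain of overlapping pairs. This requires the explicit form of Tucker's matrices and a short but genuine case analysis; by contrast, the cyclic family $M_{I_k}$ is dispatched cleanly and uniformly by the single ``sliding intersection'' computation that (L) supplies.
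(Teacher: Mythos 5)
Your strategy coincides with the paper's: dispatch the forward direction via Lemma~\ref{lem:py-no3}, and prove the converse by invoking the forbidden-configuration characterization of interval hypergraphs and showing that (L) and (N3O) together exclude every minimal obstruction. Your preparatory steps are correct: overlapping traces do lift to overlapping clusters, so both (L) and (N3O) are inherited by restrictions, and your treatment of the cyclic family via the forced identity $\{v_1\}=E_1\cap E_0=E_1\cap E_2=\{v_2\}$ is exactly right. You have also correctly isolated the one kind of obstruction that (L) alone cannot kill, namely three pairwise overlapping sets with a common core, which is precisely why (N3O) enters the statement.

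However, the decisive step is the one you flag yourself and never carry out: verifying that every obstruction of type $M_{II_k}$, $M_{III_k}$, $M_{IV}$, $M_V$ contains either a pairwise overlapping triple or an (L)-violating pattern. Without that, the converse direction is not proved, and the omission is not routine, because the two hypotheses genuinely split the catalogue between them: the claw $\{1,2\},\{2,3\},\{2,4\}$ is a minimal non-interval configuration that \emph{satisfies} (L) and is excluded only by (N3O), while a configuration such as $\{1,2\},\{2,3\},\{3,4\},\{2,3,5\}$ is non-interval, satisfies (N3O), and is excluded only by (L) (the set $\{2,3\}$ overlaps $\{1,2\}$ and $\{3,4\}$ with unequal intersections $\{2\}\neq\{3\}$). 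So each family needs its own check, and the infinite families must be handled uniformly in $k$. The paper sidesteps exactly this labor by working with Duchet's catalogue of induced forbidden subhypergraphs, \cite[Thm.~7.2]{Duchet:84}: there the relevant observation is that (L) excludes every configuration except the single one called $M_1$ in \cite[Fig.~11]{Duchet:84}, and $M_1$ is by definition three pairwise overlapping sets sharing a common point, i.e., already packaged as an (N3O) violation. Your route is therefore completable, but to finish it you must either write down Tucker's matrices and perform the case analysis you anticipate, or switch to Duchet's formulation as the paper does, where the lone survivor of (L) is presented in exactly the form that (N3O) forbids.
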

\begin{proof}
  Starting from \cite[Thm.~7.2]{Duchet:84} one observed that condition (L)
  excludes all induced forbidden subhypergraphs with a single
  exception. The remaining configuration, called $M_1$ in
  \cite[Fig.~11]{Duchet:84}, comprises three pairwise overlapping sets
  that share at least one common point. Thus, if (N3O) holds, no
  $M_1$-subhypergraph is present in $\mathscr{C}$. Since (L) and (N3O)
  together exclude all forbidden subhypergraphs and thus $\mathscr{C}$ is
  pre-pyramidal. Lemma~\ref{lem:py-no3} now completes the proof.
\end{proof}

\begin{theorem}
  \label{thm:py}
  Let $N$ be a phylogenetic shortcut-free level-1 network with clustering
  system $\mathscr{C}$. Then $\mathscr{C}$ is pre-pyramidal if and only if
  $\indeg(v)\le 2$ for all $v\in V$, i.e., if and only if $N$ is a galled
  tree.
\end{theorem}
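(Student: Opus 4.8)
The plan is to reduce the statement to the interplay of three earlier facts and then to settle a single remaining equivalence by a direct block-structure argument. First I would dispose of the easy reformulation ``$N$ is a galled tree $\iff\indeg(v)\le 2$ for all $v\in V$''. Since hybrid vertices are by definition exactly those with $\indeg(v)\ge 2$, the condition $\indeg(v)\le 2$ for all $v$ is equivalent to saying that every hybrid vertex satisfies $\indeg(v)=2$, while all tree vertices automatically have $\indeg(v)\le 1$. As $N$ is level-1, Cor.~\ref{cor:galled-level-1} then yields that this is equivalent to $N$ being a galled tree.

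Next, since $N$ is a phylogenetic level-1 network, Cor.~\ref{cor:L} guarantees that $\mathscr{C}=\mathscr{C}_N$ satisfies (L). Hence Prop.~\ref{prop:N3O-iff-py} applies and gives that $\mathscr{C}$ is pre-pyramidal if and only if $\mathscr{C}$ satisfies (N3O). Therefore the theorem will follow once I establish the equivalence ``$N$ is a galled tree $\iff\mathscr{C}_N$ satisfies (N3O)''.

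For the forward implication of this last equivalence I would simply invoke Lemma~\ref{lem:galled-implies-N3O}, applied to the galled tree $N$ itself. The substantive direction is the converse, which I would prove by contraposition: assuming $N$ is not a galled tree, I produce three pairwise overlapping clusters. Since $N$ is level-1 and not a galled tree, Obs.~\ref{obs:cycle} supplies a hybrid vertex $w$ with $\indeg(w)\ge 3$, so $w$ has three distinct in-neighbors $u_1,u_2,u_3$. Because $N$ is shortcut-free, Obs.~\ref{obs:shortcut} forces $u_1,u_2,u_3$ to be pairwise $\preceq_N$-incomparable. Lemma~\ref{lem:orderiff} gives (PCC), so pairwise incomparability rules out $\CC(u_i)\subseteq\CC(u_j)$ for $i\neq j$; in particular $\CC(u_1),\CC(u_2),\CC(u_3)$ are pairwise distinct. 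At the same time, Lemma~\ref{lem:inclusion} together with $w\prec_N u_i$ yields $\emptyset\neq\CC(w)\subseteq\CC(u_i)$ for each $i$. Combining these observations shows that $\CC(u_1),\CC(u_2),\CC(u_3)$ overlap pairwise, contradicting (N3O).

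The main obstacle is organisational rather than conceptual: the converse argument reuses the reasoning from the second half of the proof of Thm.~\ref{thm:galled-no3overlap}, and I must take care to run it on $N$ itself rather than merely on some network realising $\mathscr{C}_N$, since the existence statement in Thm.~\ref{thm:galled-no3overlap} does not by itself force the given $N$ to be a galled tree. The one point needing attention is verifying that the three overlapping clusters are genuinely distinct, so that (N3O) is actually violated; this I obtain from the pairwise $\preceq_N$-incomparability of $u_1,u_2,u_3$ via (PCC). Assembling the three equivalences established above then completes the proof.
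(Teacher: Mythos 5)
Your proof is correct and follows essentially the same route as the paper: the paper likewise combines Cor.~\ref{cor:L}, the (N3O)/pre-pyramidal equivalence (via Prop.~\ref{prop:N3O-iff-py} and Lemma~\ref{lem:py-no3}), Lemma~\ref{lem:galled-implies-N3O}, and the identical core argument that a hybrid vertex of indegree $\ge 3$ in a shortcut-free network satisfying (PCC) yields three pairwise overlapping clusters. Your only deviation is organizational --- factoring both directions through the single equivalence ``galled tree $\iff$ (N3O)'' rather than arguing each direction of the theorem separately --- which changes nothing of substance.
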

\begin{proof}
  ($\implies$) Suppose that $\mathscr{C}$ is pre-pyramidal with
  corresponding total order $\lessdot$ of $X$ and, moreover, assume, for
  contradiction, that $w$ is hybrid vertex with $\indeg_N(w)\ge 3$.  Hence,
  let $u_1,u_2,u_3\in\parent_N(w)$ be pairwise distinct. Since $N$ is
  shortcut-free, Obs.~\ref{obs:shortcut} implies that $u_1$, $u_2$, and
  $u_3$ are pairwise $\preceq_{N}$-incomparable.  Together with
  Lemma~\ref{lem:orderiff}, this implies $\CC(u_i)\not\subseteq \CC(u_j)$
  for distinct $i,j \in\{1,2,3\}$.  Moreover, $u_1,u_2,u_3\in\parent_N(w)$
  and Lemma~\ref{lem:inclusion} yield $\emptyset\ne \CC(w)\subseteq
  \CC(u_i)$ for all $i \in \{1,2,3\}$, i.e., $\mathscr{C}$ contains three
  pairwise overlapping clusters. On the other hand, $\mathscr{C}$
  satisfies (L) by Cor.~\ref{cor:L}, thus Lemma~\ref{lem:py-no3} implies
  that $\mathscr{C}$ cannot contain three pairwise overlapping clusters;
  a contradiction.

  ($\impliedby$) Suppose $\indeg(v)\le 2$ for all $v\in V$. By
  Obs.~\ref{obs:cycle}, this holds if and only if $N$ is a galled tree.  By
  Lemma~\ref{lem:galled-level-1}, Cor.~\ref{cor:L}, and
  Lemma~\ref{lem:galled-implies-N3O}, $\mathscr{C}$ satisfies (L) and
  (N3O).  Hence, $\mathscr{C}$ is pre-pyramidal by
  Prop.~\ref{prop:N3O-iff-py}.
\end{proof}

\begin{definition}\cite{Bertrand:08}\label{def:pairedH}
A clustering system $\mathscr{C}$ is a \emph{paired hierarchy} if a cluster
$C\in\mathscr{C}$ overlaps with at most one other cluster in $\mathscr{C}$.
\end{definition}

\begin{fact}
  \label{obs:paired-hierarchy}
  Every hierarchy is a paired hierarchy and every paired hierarchy
  satisfies (L) and (N30).
\end{fact}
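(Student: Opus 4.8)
The plan is to verify the three assertions separately, each following directly from the definitions of hierarchy, paired hierarchy, (L), and (N3O), together with the convention that $A$ and $B$ \emph{overlap} precisely when $A\cap B\notin\{\emptyset,A,B\}$. First I would treat the implication ``hierarchy $\implies$ paired hierarchy''. In a hierarchy we have $C\cap C'\in\{\emptyset,C,C'\}$ for all $C,C'\in\mathscr{C}$, so no pair of clusters satisfies the overlap condition at all. Hence every cluster overlaps exactly zero other clusters, which is in particular at most one, and $\mathscr{C}$ is a paired hierarchy by Def.~\ref{def:pairedH}.

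Next I would show that a paired hierarchy satisfies property (L). Fix $C_1,C_2,C_3\in\mathscr{C}$ such that $C_1$ overlaps both $C_2$ and $C_3$; this is the only situation in which Def.~\ref{def:L} makes a nontrivial demand. Since $\mathscr{C}$ is a paired hierarchy, $C_1$ overlaps \emph{at most one} cluster distinct from itself. As $C_2$ and $C_3$ both overlap $C_1$, they must therefore be one and the same cluster, i.e.\ $C_2=C_3$. Consequently $C_1\cap C_2=C_1\cap C_3$ holds trivially, which is exactly property (L).

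Finally, for (N3O) I would argue by contradiction. Suppose $\mathscr{C}$ contained three \emph{distinct} pairwise overlapping clusters $C_1,C_2,C_3$. Then $C_1$ would overlap the two distinct clusters $C_2$ and $C_3$, contradicting the defining property of a paired hierarchy that each cluster overlaps at most one other cluster. Hence no such triple exists and $\mathscr{C}$ satisfies (N3O).

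There is no genuine obstacle here: every step is immediate from the definitions, and the entire statement is routine. The only point that deserves a moment's care is that Def.~\ref{def:L} does not require $C_2$ and $C_3$ to be distinct, so the degenerate collapse $C_2=C_3$ forced by the paired-hierarchy hypothesis is precisely what makes (L) hold rather than being a gap to be closed.
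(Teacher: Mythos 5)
Your proof is correct and is precisely the routine verification the paper leaves implicit: the statement appears as an Observation with no written proof, and your three steps (a hierarchy has no overlapping pairs at all, the paired-hierarchy condition forces $C_2=C_3$ so that (L) holds vacuously, and a triple of pairwise overlapping clusters would give one cluster two distinct overlap partners) are exactly the immediate consequences of Definitions~\ref{def:Csys}, \ref{def:pairedH}, \ref{def:L}, and \ref{def:N3O} that the authors had in mind. Your closing remark that (L) does not require $C_2\ne C_3$, so the collapse $C_2=C_3$ is what makes (L) hold rather than a gap, is also the right observation.
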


\begin{proposition}
  \label{prop:paired-hierarchy}
  Let $\mathscr{C}$ be a closed clustering system. Then, $\mathscr{C}$ is a
  paired hierarchy if and only if there is a shortcut-free phylogenetic
  galled tree $N$ with $\mathscr{C}_N=\mathscr{C}$ where all non-trivial
  blocks consist of four vertices.
\end{proposition}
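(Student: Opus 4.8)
The plan is to treat the two implications separately, in each case passing through the regular network $\Hasse[\mathscr{C}]$ and the structural results on galled trees already available. For the forward direction, suppose $\mathscr{C}$ is a closed paired hierarchy. By Obs.~\ref{obs:paired-hierarchy} it satisfies (L) and (N3O), so Thm.~\ref{thm:galled-no3overlap} gives that $N\coloneqq\Hasse[\mathscr{C}]$ is a phylogenetic galled tree with $\mathscr{C}_N=\mathscr{C}$; being regular, it is shortcut-free by Prop.~\ref{prop:Baroni}. It then remains only to control the block sizes. First I would fix an arbitrary non-trivial block $B$; by Obs.~\ref{obs:galled-outneighbors} it is a cycle formed by two internally vertex-disjoint directed paths from $\max B$ to $\min B$, and in a level-1 network $B^0=B\setminus\{\min B,\max B\}$. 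Writing the interior vertices of the two paths as $a_1\succ_N\cdots\succ_N a_p$ and $b_1\succ_N\cdots\succ_N b_q$, the crucial observation is that any $a_i$ and any $b_j$ are $\preceq_N$-incomparable (the two paths meet only at $\max B$ and $\min B$, and $N$ is acyclic) and lie in the common block $B$; hence Lemma~\ref{lem:uvmin} yields $\CC(a_i)\cap\CC(b_j)=\CC(\min B)\neq\emptyset$ with neither cluster containing the other, so $\CC(a_i)$ and $\CC(b_j)$ overlap.

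Since the clusters along a single path are strictly nested (Lemma~\ref{lem:inclusion} together with regularity), $\CC(b_1),\dots,\CC(b_q)$ are pairwise distinct; thus if $p\ge 1$ and $q\ge 2$, then $\CC(a_1)$ would overlap the two distinct clusters $\CC(b_1)$ and $\CC(b_2)$, contradicting the paired-hierarchy property. Hence $p\ge 1$ forces $q\le 1$ and, symmetrically, $q\ge 1$ forces $p\le 1$. The remaining point is that both paths must have length at least two, i.e.\ $p,q\ge 1$: as $N$ is a Hasse diagram it is a simple graph, so the two paths cannot both equal the single arc $(\max B,\min B)$; and if one path were this arc while the other carried an interior vertex $w$, then $\CC(\min B)\subsetneq\CC(w)\subsetneq\CC(\max B)$ would witness that $(\max B,\min B)$ is \emph{not} an arc of $\Hasse[\mathscr{C}]$, a contradiction. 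Combining $p,q\ge 1$ with the inequalities above gives $p=q=1$, so $B$ consists of exactly the four vertices $\max B,\min B,a_1,b_1$. As $B$ was arbitrary, every non-trivial block of $N$ has four vertices.

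For the converse, assume $N$ is a shortcut-free phylogenetic galled tree with $\mathscr{C}_N=\mathscr{C}$ all of whose non-trivial blocks have four vertices, and suppose for contradiction that some cluster $C_1$ overlaps two distinct clusters $C_2$ and $C_3$. Choosing vertices $u_1,u_2,u_3$ with $\CC(u_i)=C_i$ (pairwise distinct), Lemma~\ref{lem:overlap-B0} places $u_1$ in the interior $B^0$ of a non-trivial block $B$; since the interiors of distinct blocks are disjoint (Fact~\ref{obs:block-identity}), both overlaps must be realized inside this same $B$, forcing $u_2,u_3\in B^0$ as well. But $B$ has four vertices, so $B^0=B\setminus\{\min B,\max B\}$ has exactly two elements, and three distinct vertices $u_1,u_2,u_3$ cannot all lie in it; a contradiction. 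Hence every cluster overlaps at most one other, i.e.\ $\mathscr{C}$ is a paired hierarchy.

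The main obstacle will be the forward-direction block-size analysis, and within it specifically the claim that both sides of a cycle have length at least two: one must argue purely from the simplicity and the inclusion structure of the Hasse diagram that a direct arc $(\max B,\min B)$ cannot coexist with an interior vertex on the opposite path. Everything else reduces to overlap bookkeeping already packaged in Lemma~\ref{lem:uvmin}, Lemma~\ref{lem:overlap-B0}, and the disjointness of block interiors.
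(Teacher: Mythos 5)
Your proof is correct and follows essentially the same route as the paper's: the forward direction builds $\Hasse[\mathscr{C}]$ via Obs.~\ref{obs:paired-hierarchy} and Thm.~\ref{thm:galled-no3overlap} and then bounds block sizes (at least $4$ because a direct arc $(\max B,\min B)$ alongside an interior vertex is forbidden, at most $4$ because two interior vertices on one side would give a cluster overlapping two distinct clusters), while the backward direction uses Lemma~\ref{lem:overlap-B0} together with disjointness of block interiors to force three overlapping-cluster vertices into a single $B^0$ of size two. The only cosmetic differences are that you invoke Lemma~\ref{lem:uvmin} and the Hasse-arc definition where the paper uses (PCC) via Lemma~\ref{lem:orderiff} and shortcut-freeness, which amount to the same facts.
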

\begin{proof}
  Suppose first that $\mathscr{C}$ is a paired hierarchy. Since
  $\mathscr{C}$ satisfies (L) and (N3O) by Obs.~\ref{obs:paired-hierarchy}
  and is closed, we can apply Thm.~\ref{thm:galled-no3overlap} to conclude
  that $N \coloneqq \Hasse[\mathscr{C}]$ is a shortcut-free phylogenetic
  galled tree with $\mathscr{C}_{N} = \mathscr{C}$. Since $N$ is, in
  particular, a phylogenetic level-1 network
  (cf.\ Lemma~\ref{lem:galled-level-1}), Lemma~\ref{lem:orderiff} implies
  that $N$ satisfies (PCC).  By definition, every non-trivial block
  contains at least $3$ vertices. If a block $B$ woulds contain exactly
  three vertices, then one easily sees that $N$ contains the shortcut
  $(\max B, \min B)$; a contradiction. Hence, every non-trivial block in
  $N$ contains at least $4$ vertices.  Assume, for contradiction, that $N$
  contains a non-trivial block $B$ with at least $k\geq 5$ vertices. Note,
  $B$ refers to an (undirected) cycle in $N$. Hence, there are two internal
  vertex disjoint paths in $B$ connecting $\max B$ and $\min B$. Since $N$
  is shortcut-free and $k\geq 5$, we can conclude that one path contains a
  vertex $v$ and the other path contains vertices $u_1,u_2$ that are all
  distinct from $\max B$ and $\min B$.  Since $B$ is an undirected cycle,
  one easily verifies that $v$ and $u_1$ as well as $v$ and $u_2$ are
  $\preceq_N$-incomparable.  By Lemma~\ref{lem:inclusion}, we have
  $\emptyset \neq \CC(\min B) \subseteq \CC(v),\CC(u_1),\CC(u_2)$. This
  together with (PCC) and the fact that $v$ and $u_1$ as well as $v$ and
  $u_2$ are $\preceq_N$-incomparable implies that $\CC(v)$ must overlap
  with both $\CC(u_1)$ and $\CC(u_2)$; a contradiction.

  Assume now that there is a shortcut-free phylogenetic galled tree $N$
  with $\mathscr{C}_N=\mathscr{C}$ where all non-trivial blocks consists of
  four vertices.  Suppose, for contradiction, $\CC(v)\in \mathscr{C}$
  overlaps with two distinct clusters $\CC(u_2)$ and $\CC(u_2)$ in
  $\mathscr{C}$. Note, $v$, $u_1$, and $u_2$ must be pairwise distinct. By
  Lemma~\ref{lem:overlap-B0}, we have $v,u_1\in B_1^0$ and $v,u_2\in B_2^0$
  for non-trivial blocks $B_1$ and $B_2$ in $N$.  In particular, we have
  $v, u_1\notin \{\min B_1,\max B_1\}$ and $v, u_2\notin \{\min B_2,\max
  B_2\}$. We can therefore apply Lemma~\ref{lem:block-identity} to conclude
  that $B_1=B_2\eqqcolon B$. In particular, $B$ contains at least five
  pairwise distinct vertices $\min B$, $\max B$, $v$, $u_1$, and $u_2$; a
  contradiction.
\end{proof}

It is worth noting that for paired hierarchies, and in particular also for
hierarchies, $\mathscr{C}$ there are not only galled trees but also
shortcut-free and phylogenetic level-$k$ networks $N$ that are not
level-$(k-1)$ with
$\mathscr{C}_N=\mathscr{C}$. Fig.\ \ref{fig:tree-based}(A) serves as an
example.

\subsection{Conventional and Separated Level-1 Networks}
\label{sec:specialL1-sep}

The literature on phylogenetic networks often stipulates that the leaves
$v\in X$ have indegree $1$, see
e.g.\ \cite{huson_rupp_scornavacca_2010}. Furthermore, level-1 networks are
often defined such that every non-trivial block has exactly one hybrid
vertex.
\begin{definition}\label{def:conventional}
  A network $N$ is \emph{conventional} if (i) all leaves have indegree at
  most $1$ and (ii) every hybrid vertex is contained in a unique
  non-trivial block.
\end{definition}
We remark that, if $\vert X\vert>1$ all leaves have indegree $1$ in a
conventional network.  In Fig.~\ref{fig:network-8}, network $N$ is
conventional, while $N'$ is not.

\begin{proposition}
  \label{prop:almost-done2}
  Let $\mathscr{C}$ be a closed clustering system on $X$ satisfying (L).
  Then $\Hasse[\mathscr{C}]$ is conventional if and only if
  $\mathcal{B}^0(\{x\})=\emptyset$ for all $x\in X$ and
  $\mathcal{B}^0(\Top(C))=\emptyset$ for all $C\in\mathscr{C}$ with
  $\mathcal{B}^0(C)\ne\emptyset$.
\end{proposition}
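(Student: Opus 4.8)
The plan is to translate the two defining conditions of a conventional network directly into statements about the sets $\mathcal{B}^0(\cdot)$, using the description of indegrees and of the non-trivial blocks of $\Hasse[\mathscr{C}]$ already established. Throughout, recall that in $\Hasse\coloneqq\Hasse[\mathscr{C}]$ the leaves are the singletons and the hybrid vertices are exactly the clusters of indegree greater than one; by Lemma~\ref{lem:multiple-inneighbors} the latter are precisely the clusters $C$ with $\mathcal{B}^0(C)\ne\emptyset$. Moreover, by Prop.~\ref{prop:blocks-in-Hasse}, the non-trivial blocks of $\Hasse$ are exactly the subgraphs $\Hasse[\mathcal{B}(E)]$ with $\mathcal{B}^0(E)\ne\emptyset$, where $\mathcal{B}(E)=\mathcal{B}^0(E)\cup\{E,\Top(E)\}$, and by Lemma~\ref{lem:biconnected} such a block has unique sink $E$. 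The case $\vert X\vert=1$ is trivial (there are no hybrid vertices and the single vertex is a leaf of indegree $0$), so I would assume $\vert X\vert>1$.

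First I would dispose of condition~(i) of Def.~\ref{def:conventional}. Since the leaves of $\Hasse$ are exactly the singletons $\{x\}$, Lemma~\ref{lem:multiple-inneighbors} gives that $\{x\}$ has indegree at most $1$ if and only if $\mathcal{B}^0(\{x\})=\emptyset$. Hence condition~(i) holds for all leaves simultaneously if and only if $\mathcal{B}^0(\{x\})=\emptyset$ for all $x\in X$, which is the first asserted condition. (This is genuinely non-vacuous, cf.\ Fig.~\ref{fig:UC}, where a singleton is the sink of a non-trivial block.)

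The main work is condition~(ii). A hybrid vertex is a cluster $C$ with $\mathcal{B}^0(C)\ne\emptyset$, and $C$ always lies in the non-trivial block $\Hasse[\mathcal{B}(C)]$, since $C\in\mathcal{B}(C)$. The task is therefore to decide when such a $C$ can lie in a \emph{second} non-trivial block $\Hasse[\mathcal{B}(E)]$ with $E\ne C$ and $\mathcal{B}^0(E)\ne\emptyset$. Membership requires $C\in\mathcal{B}(E)=\mathcal{B}^0(E)\cup\{E,\Top(E)\}$, and since $C\ne E$ this leaves the alternatives $C\in\mathcal{B}^0(E)$ or $C=\Top(E)$. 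I would rule out the first alternative using Lemma~\ref{lem:disjointB0}: $C\in\mathcal{B}^0(E)$ forces $\mathcal{B}^0(C)=\emptyset$, contradicting that $C$ is hybrid. Thus the only way a hybrid vertex $C$ lies in a second non-trivial block is $C=\Top(E)$ for some $E$ with $\mathcal{B}^0(E)\ne\emptyset$; here $\Top(E)\ne E$ (the block $\Hasse[\mathcal{B}(E)]$ is non-trivial and contains at least four clusters), so $E\ne C$ is automatic, and as the two blocks have distinct sinks $C$ and $E$ they are genuinely distinct.

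Combining these observations, condition~(ii) \emph{fails} if and only if there exists $E$ with $\mathcal{B}^0(E)\ne\emptyset$ whose $\Top(E)$ is itself hybrid, i.e.\ $\mathcal{B}^0(\Top(E))\ne\emptyset$. Contraposing, condition~(ii) holds if and only if $\mathcal{B}^0(\Top(C))=\emptyset$ for every $C$ with $\mathcal{B}^0(C)\ne\emptyset$, which is the second asserted condition. Since $\Hasse[\mathscr{C}]$ is conventional exactly when both~(i) and~(ii) hold, conjoining the two equivalences yields the claim. I expect the only delicate point to be the bookkeeping in condition~(ii) --- specifically, invoking Lemma~\ref{lem:disjointB0} to exclude $C\in\mathcal{B}^0(E)$ and verifying that $C=\Top(E)$ really produces a second, distinct non-trivial block --- whereas the reduction of indegrees and of the block structure to the sets $\mathcal{B}^0(\cdot)$ is immediate from the cited results.
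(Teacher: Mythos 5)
Your proof is correct, and its overall architecture matches the paper's: both reduce condition~(i) of Def.~\ref{def:conventional} to the statement about singleton indegrees via Lemma~\ref{lem:multiple-inneighbors}, and both reduce condition~(ii) to a statement about the non-trivial blocks $\Hasse[\mathcal{B}(C)]$ via Prop.~\ref{prop:blocks-in-Hasse}. The one place where you genuinely diverge is the mechanism for condition~(ii). The paper argues graph-theoretically: by Obs.~\ref{obs:identical-block} and Lemma~\ref{lem:block-identity}, two distinct non-trivial blocks share at most one vertex, and that vertex must be the $\preceq$-maximum of one of them; hence a hybrid vertex lies in two non-trivial blocks exactly when $\max B = \min B'$ for some pair, which translates to $\mathcal{B}^0(\Top(C))\ne\emptyset$. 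You instead argue combinatorially inside the clustering system: membership of a hybrid $C$ in a second block means $C\in\mathcal{B}(E)=\mathcal{B}^0(E)\cup\{E,\Top(E)\}$ with $E\ne C$, the alternative $C\in\mathcal{B}^0(E)$ is killed by the disjointness statement of Lemma~\ref{lem:disjointB0}, and distinctness of the two blocks follows from the unique-sink property in Lemma~\ref{lem:biconnected}. Both routes are sound; yours stays entirely within the set-theoretic calculus of $\mathcal{B}^0(\cdot)$ and $\Top(\cdot)$ and makes the exclusion of the case $C\in\mathcal{B}^0(E)$ explicit (a point the paper's shorter argument absorbs into the block-intersection lemma), while the paper's version is more economical once Lemma~\ref{lem:block-identity} is available.
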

\begin{proof}
  By Lemma~\ref{lem:Hasse-is-Network}, $\Hasse\coloneqq\Hasse[\mathscr{C}]$
  is a phylogenetic network with leaf set $X_{\Hasse}\coloneqq \{ \{x\}
  \mid x \in X \}$.  Moreover, Lemma~\ref{lem:multiple-inneighbors} implies
  that $\indeg(\{x\})_{\Hasse}\ge 2$ if and only if
  $\mathcal{B}^0(\{x\})\ne\emptyset$.  By Obs.~\ref{obs:identical-block}
  and Lemma~\ref{lem:block-identity}, two distinct non-trivial blocks $B$
  and $B'$ share at most one vertex $v\in\{\max B,\max B'\}$.  Thus every
  hybrid vertex is contained in a unique non-trivial block if and only if
  $\max B\ne\min B'$ for any pair of non-trivial blocks. Since the
  non-trivial blocks in $\Hasse[\mathscr{C}]$ are exactly the blocks
  $\mathcal{B}(C)\ne\emptyset$, this is equivalent to requiring that, for
  every $C\in\mathscr{C}$ with $\mathcal{B}(C)\ne\emptyset$ we have
  $\Top(C)$ is not the minimum of another non-trivial block, i.e.,
  $\mathcal{B}(\Top(C))=\emptyset$.
\end{proof}

\begin{proposition}\label{prop:sep-conv}
  If $N$ is separated, then $N$ is conventional.
\end{proposition}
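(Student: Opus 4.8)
The plan is to verify the two defining conditions of a conventional network (Def.~\ref{def:conventional}) directly from separatedness, drawing on the block-structure results of Section~\ref{sec:blocks}.

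First I would dispatch condition~(i), that all leaves have indegree at most~$1$. This is essentially immediate and was already noted right after Def.~\ref{def:sep}: a leaf has outdegree~$0$, so it cannot be a hybrid vertex in a separated network, where every hybrid vertex has outdegree~$1$. Hence every leaf is a tree vertex and by Def.~\ref{def:N} has indegree at most~$1$.

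The substantive part is condition~(ii): every hybrid vertex is contained in a unique non-trivial block. Let $w$ be a hybrid vertex. By Lemma~\ref{lem:hybrid-properly-contained}, $w$ and all of its in-neighbors lie in a common non-trivial block, so $w$ is properly contained in exactly one non-trivial block $B'$; this already secures existence. It remains to rule out that $w$ lies in some other non-trivial block $B\ne B'$. For any non-trivial block $B$ containing $w$, Lemma~\ref{lem:properly-contained} gives a dichotomy: either $w\in V(B)\setminus\{\max B\}$, in which case $w$ is properly contained in $B$ and hence $B=B'$ by uniqueness of the block properly containing~$w$; or $w=\max B$.

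The only point where separatedness genuinely enters—and the step I expect to be the crux, though it is short—is excluding the case $w=\max B$. Here I would invoke Cor.~\ref{cor:maxB-outdegree}, which says that the maximum of a non-trivial block has at least two out-neighbors, i.e.\ outdegree at least~$2$. But $w$ is a hybrid vertex in a separated network, so $\outdeg(w)=1$; this contradiction shows $w\ne\max B$ for every non-trivial block $B$. Consequently the only non-trivial block containing $w$ is $B'$, which establishes uniqueness and hence condition~(ii). I anticipate no serious technical difficulty: the argument is an assembly of Lemmas~\ref{lem:hybrid-properly-contained} and~\ref{lem:properly-contained} together with the outdegree constraint from Cor.~\ref{cor:maxB-outdegree}.
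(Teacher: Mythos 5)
Your proof is correct and takes essentially the same approach as the paper's: both reduce condition~(ii) to showing that a hybrid vertex can never be $\max B$ of a non-trivial block, which Cor.~\ref{cor:maxB-outdegree} rules out because separatedness forces outdegree~$1$, and both handle condition~(i) by noting leaves cannot be hybrid vertices. The only cosmetic difference is that you establish block-uniqueness via Lemma~\ref{lem:hybrid-properly-contained} together with the dichotomy of Lemma~\ref{lem:properly-contained}, whereas the paper invokes Lemma~\ref{lem:block-identity} directly (a vertex in two blocks is the maximum of one of them); these are equivalent assemblies of the same block-structure machinery.
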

\begin{proof}
  Suppose $N$ separated. Then all leaves in $N$ must have indegree at most
  $1$ since they have outdegree $0$ and, by definition, hybrid vertices
  have outdegree $1$.  By Lemma~\ref{lem:block-identity}, a vertex $v$ that
  is contained in two non-trivial blocks $B$ and $B'$ must be the unique
  maximal vertex of one of them. In this case,
  Cor.~\ref{cor:maxB-outdegree} implies $\outdeg_{N}(v)\ge 2$. However,
  since $v$ is hybrid vertex in a separated network, we have
  $\outdeg(v)=1$. Therefore, such a hybrid vertex $v$ that is contained in
  two non-trivial blocks cannot exist. Hence, $N$ is conventional.
\end{proof}

\subsection{Binary Level-1 Networks}
\label{sec:specialL1-binary}

Recall that a network is \emph{binary} if it is phylogenetic, separated,
and all vertices have in- and outdegree at most 2.  Equivalently, in a
binary network, every tree vertex is either a leaf or has exactly two
children, and every hybrid vertex has exactly two parents and one child.
As an immediate consequence of the definition,
Prop.~\ref{prop:phy-level1->TC}, Prop.~\ref{prop:sep-conv}, and
Cor.~\ref{cor:galled-level-1} we have:
\begin{fact}\label{obs:BinN-GalledTree}
  Binary level-1 networks are always phylogenetic, separated, conventional,
  tree-child, galled trees.
\end{fact}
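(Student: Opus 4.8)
The plan is to fix a binary level-1 network $N$ and verify each of the five asserted properties in turn, in each case reducing it to the definition of binary networks (Def.~\ref{def:binary}) together with one of the cited results. Since the statement is flagged as an immediate consequence, the work is purely assembling the pieces, and no genuinely new argument is required.

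First I would dispatch \emph{phylogenetic} and \emph{separated} directly from Def.~\ref{def:binary}. By definition every hybrid vertex $v$ satisfies $\outdeg(v)=1$, which is exactly the condition of Def.~\ref{def:sep}, so $N$ is separated. For the phylogenetic property (N2) I would check that no vertex has $\outdeg(v)=1$ together with $\indeg(v)\le 1$: a tree vertex is either a leaf (with $\outdeg=0$) or has $\outdeg(v)=2$, while every vertex with $\outdeg(v)=1$ is a hybrid vertex and hence has $\indeg(v)=2$. Thus (N2) holds and $N$ is phylogenetic. (This is precisely the remark ``binary networks are always phylogenetic and separated'' made just after Def.~\ref{def:binary}.)

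Next I would obtain \emph{conventional} from \emph{separated} by a single appeal to Prop.~\ref{prop:sep-conv}, which states that every separated network is conventional. For \emph{tree-child}, I would use that $N$ is phylogenetic (just established) and level-1 (by hypothesis), so Prop.~\ref{prop:phy-level1->TC} applies and yields the tree-child property. Finally, for \emph{galled tree}, I would invoke Cor.~\ref{cor:galled-level-1}, which characterizes galled trees as exactly the level-1 networks in which $\indeg(v)=2$ for every hybrid vertex $v$; since $N$ is level-1 and, by Def.~\ref{def:binary}, all its hybrid vertices satisfy $\indeg(v)=2$, the corollary gives that $N$ is a galled tree.

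There is essentially no main obstacle here: each clause is a one-line citation once the binary hypothesis has been unpacked. The only point requiring a moment's care is the phylogenetic property, where one must confirm from the degree constraints of Def.~\ref{def:binary} that no offending outdegree-$1$, indegree-$\le 1$ vertex can arise; everything else is a direct application of Prop.~\ref{prop:sep-conv}, Prop.~\ref{prop:phy-level1->TC}, and Cor.~\ref{cor:galled-level-1}.
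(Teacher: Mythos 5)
Your proposal is correct and follows exactly the paper's intended argument: the paper states this observation as an immediate consequence of Def.~\ref{def:binary} (giving phylogenetic and separated), Prop.~\ref{prop:sep-conv} (conventional), Prop.~\ref{prop:phy-level1->TC} (tree-child), and Cor.~\ref{cor:galled-level-1} (galled tree), which are precisely the citations you assemble. The brief degree-check you perform for property (N2) is the same reasoning the paper uses in its remark immediately after Def.~\ref{def:binary}.
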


\begin{lemma}
  Let $N$ be a binary level-1 network. Then, $\Hasse[\mathscr{C}_N]\simeq N$
  if and only if $N$ is a tree.
\end{lemma}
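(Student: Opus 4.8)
The plan is to reduce the statement to the characterization of regular networks, since $\Hasse[\mathscr{C}_N]$ is, up to relabeling of leaves, precisely the unique regular network with clustering system $\mathscr{C}_N$. By Cor.~\ref{cor:hasse-iso}, $N\sim\Hasse[\mathscr{C}_N]$ holds if and only if $N$ is regular. Moreover, the isomorphism $\varphi\colon v\mapsto\CC(v)$ of Def.~\ref{def:regular-N} fixes every leaf $x$ (it sends $x$ to $\CC(x)=\{x\}$, which under the relabeling convention is $x$), so graph isomorphism to $\Hasse[\mathscr{C}_N]$ is in fact equivalent to the leaf-preserving isomorphism $N\simeq\Hasse[\mathscr{C}_N]$. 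Hence I would first argue that it suffices to show: a binary level-1 network $N$ is regular if and only if it is a tree.

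For the direction ``not a tree $\Rightarrow$ not regular'' I would argue by contraposition on the presence of hybrid vertices. If $N$ is not a tree, then (as recorded in the excerpt) it contains a hybrid vertex $w$. Since $N$ is binary, it is separated by Obs.~\ref{obs:BinN-GalledTree}, so every hybrid vertex, in particular $w$, satisfies $\outdeg(w)=1$. Thus $N$ has a vertex of outdegree $1$, and Thm.~\ref{thm:semiregular} immediately rules out regularity, because a regular network contains no vertex of outdegree $1$. Consequently $N\not\simeq\Hasse[\mathscr{C}_N]$.

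For the converse, suppose $N$ is a tree. Being binary, $N$ is phylogenetic, so no tree vertex has outdegree $1$; being a tree, it has no hybrid vertices at all. Hence $N$ has no vertex of outdegree~$1$. A tree has only trivial blocks and is therefore shortcut-free, and since a tree is level-$0$ and thus level-$1$, Lemma~\ref{lem:orderiff} yields that $N$ satisfies (PCC); together these give that $N$ is semi-regular. Thm.~\ref{thm:semiregular} then forces $N$ to be regular, so $N\simeq\Hasse[\mathscr{C}_N]$. (Alternatively, this direction is just the classical tree--Hasse correspondence recorded in Cor.~\ref{cor:hasse-tree}.)

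I do not expect a genuine obstacle: the substantive content is carried entirely by the combination of Thm.~\ref{thm:semiregular} with the fact that binarity forces separation, which converts ``$N$ has a hybrid vertex'' into ``$N$ has an outdegree-$1$ vertex.'' The only step requiring a little care is the passage from the graph isomorphism $\sim$ of Cor.~\ref{cor:hasse-iso} to the leaf-labelled isomorphism $\simeq$ in the statement, which I would justify by the explicit cluster map $v\mapsto\CC(v)$ of Def.~\ref{def:regular-N} fixing the leaves.
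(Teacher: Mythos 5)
Your proof is correct and takes essentially the same route as the paper's: both directions hinge on the fact that a binary non-tree network has a hybrid vertex of outdegree~$1$, whereas $\Hasse[\mathscr{C}_N]$ (equivalently, any regular network, by Thm.~\ref{thm:semiregular}) has no outdegree-$1$ vertex, with the tree direction supplied by Cor.~\ref{cor:hasse-tree}. Your only real addition is the explicit justification that the graph isomorphism $\sim$ of Cor.~\ref{cor:hasse-iso} upgrades to the leaf-preserving isomorphism $\simeq$ via the cluster map $v\mapsto\CC(v)$ --- a point the paper's own proof glosses over by working with $\sim$ throughout.
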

\begin{proof}
  Let $N$ be a binary level-1 network. By definition, $N$ is phylogenetic.
  If $N$ is a tree, then $\Hasse[\mathscr{C}_N]\sim N$
  (cf.\ Cor.~\ref{cor:hasse-tree}).  Assume now that $N$ is not a tree and
  thus, $N$ contains hybrid vertices all with outdegree $1$ in $N$.
  However, Prop.~\ref{prop:semi-reg} implies that $\Hasse[\mathscr{C}]$ does
  not contain any vertex with outdegree $1$.  Consequently,
  $\Hasse[\mathscr{C}_N]\not\sim N$.
\end{proof}
Hence, $\Hasse[\mathscr{C}_N]$ can never be binary in case $\mathscr{C}_N$
contains overlapping clusters.

\begin{definition}[2-Inc]
  \label{def:2-Inc}
  A clustering system $\mathscr{C}$ has Property (2-Inc) if, for all
  clusters $C\in\mathscr{C}$, there are at most two inclusion-maximal
  clusters $A,B\in\mathscr{C}$ with $A,B\subsetneq C$ and at most two
  inclusion-minimal clusters $A,B\in\mathscr{C}$ with $C\subsetneq A,B$.
\end{definition}
\begin{lemma}
  \label{lem:binary+PCC-->2inc}
  Let $N$ be a binary network that satisfies (PCC). Then $\mathscr{C}_N$
  satisfies (2-Inc).
\end{lemma}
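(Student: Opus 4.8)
The plan is to recast (2-Inc) as a bound on the in- and out-degrees of the Hasse diagram and then read those bounds off directly from the binary structure of $N$ via (PCC), without ever leaving $N$. Fix a cluster $C\in\mathscr{C}_N$. The inclusion-maximal clusters strictly contained in $C$ are exactly the children of $C$ in $\Hasse[\mathscr{C}_N]$, and the inclusion-minimal clusters strictly containing $C$ are exactly its parents; so (2-Inc) asserts that every vertex of $\Hasse[\mathscr{C}_N]$ has outdegree and indegree at most two. I would prove both halves by locating a single well-chosen vertex of $N$ carrying the cluster $C$ whose children (resp.\ parents) already realize all the maximal (resp.\ minimal) clusters in question.

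First I would collect the fibre $W\coloneqq\{w\in V(N)\mid \CC(w)=C\}$. Since $N$ satisfies (PCC), any two vertices with equal clusters are $\preceq_N$-comparable, so $W$ is a finite chain and possesses a unique $\preceq_N$-minimal vertex $w_{\min}$ and a unique $\preceq_N$-maximal vertex $w_{\max}$. For the out-degree bound I would argue at $w_{\min}$: every child $v$ of $w_{\min}$ satisfies $\CC(v)\subsetneq C$, since $\CC(v)=C$ would put $v\in W$ strictly below $w_{\min}$. Conversely, given an inclusion-maximal cluster $A\subsetneq C$, choose $u$ with $\CC(u)=A$; then $\CC(u)\subsetneq\CC(w_{\min})$ forces $u\prec_N w_{\min}$ by Obs.~\ref{obs:subsetneq-implies-below}, so some child $v$ of $w_{\min}$ lies on a $w_{\min}u$-path, giving $A=\CC(u)\subseteq\CC(v)\subsetneq C$ via Lemma~\ref{lem:inclusion}, and maximality of $A$ forces $A=\CC(v)$. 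Hence every inclusion-maximal cluster below $C$ is the cluster of a child of $w_{\min}$, and since $N$ is binary we have $\outdeg_N(w_{\min})\le 2$, yielding at most two such clusters.

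The in-degree bound is the exact mirror image, carried out at $w_{\max}$: every parent $u'$ of $w_{\max}$ has $C\subsetneq\CC(u')$ (equality would contradict maximality of $w_{\max}$ in $W$), and for an inclusion-minimal cluster $A\supsetneq C$ one takes $u$ with $\CC(u)=A$, uses Obs.~\ref{obs:subsetneq-implies-below} to get $w_{\max}\prec_N u$, and finds a parent $u'$ of $w_{\max}$ with $C\subsetneq\CC(u')\subseteq A$, so minimality of $A$ gives $A=\CC(u')$. As $N$ is binary, $\indeg_N(w_{\max})\le 2$, so there are at most two inclusion-minimal clusters above $C$.

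I expect no genuine obstacle in this argument, since both degree bounds reduce to the binary constraints $\indeg_N,\outdeg_N\le 2$ once (PCC) supplies the chain structure of $W$ and the comparability statement of Obs.~\ref{obs:subsetneq-implies-below}. The only points needing care are the two ``conversely'' directions, where I must verify that a \emph{distinct} maximal (resp.\ minimal) cluster cannot hide behind a single child (resp.\ parent)---which is exactly what the maximality/minimality argument rules out---together with the trivial bookkeeping that the singleton case $C=\{x\}$ contributes no clusters below and the case $C=X$ contributes none above.
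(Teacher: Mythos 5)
Your proof is correct and follows essentially the same route as the paper's: both arguments pick a $\preceq_N$-extremal vertex carrying the cluster $C$, use Obs.~\ref{obs:subsetneq-implies-below} (where (PCC) enters) together with Lemma~\ref{lem:inclusion} to force every inclusion-maximal cluster below $C$ (resp.\ inclusion-minimal cluster above $C$) to be realized at a child of the $\preceq_N$-minimal such vertex (resp.\ a parent of the $\preceq_N$-maximal one), and then invoke the binary degree bounds $\outdeg_N,\indeg_N\le 2$. The only difference is presentational: the paper argues by contradiction, applying a pigeonhole to three hypothetical clusters, whereas you directly exhibit the surjection from children/parents onto the maximal/minimal clusters.
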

\begin{proof}
  Let $N$ be a binary network on $X$ with clustering system $\mathscr{C}$.
  Assume, for contradiction, that $\mathscr{C}_N$ does not satisfy Property
  (2-Inc) for some cluster $C\in \mathscr{C}_N$.
  Assume first that there are (at least) three inclusion-minimal clusters
  $A_1,A_2,A_3\in \mathscr{C}$ that satisfy $C\subsetneq A_1,A_2,A_3$.
  Hence, $C\neq X$.  Since $C\in \mathscr{C}_N$, there is a
  $\preceq_N$-maximal vertex $v\in V(N)$ with $\CC(v) = C$. Note, $v$ has
  at least one but at most two parents in $N$ since $N$ is binary and
  $v\neq \rho_N$.  Let $v_i$ be a vertex in $N$ with $\CC(v_i) = A_i$,
  $i\in \{1,2,3\}$.  By Obs.~\ref{obs:subsetneq-implies-below}, we have
  $v\prec_N v_1,v_2, v_3$. Therefore, and because $v$ has at most two
  parents, at least two of $v_1,v_2,v_3$ must be ancestors of the same
  parent $w$ of $v$ in $N$.  W.l.o.g.\ assume that $w\preceq_N v_1,v_2$.
  Since $v$ is $\preceq_N$-maximal w.r.t.\ $\CC(v) = C$, it must hold that
  $\CC(v)\subsetneq \CC(w)$.  Lemma~\ref{lem:inclusion} implies that
  $\CC(w)\subseteq \CC(v_1),\CC(v_2)$.  Note, however, that
  $\CC(w) = \CC(v_1)$ is not possible, since then $A_1\neq A_2$ and
  $\CC(v)\subsetneq \CC(w)$ imply that
  $\CC(v)\subsetneq \CC(w) = \CC(v_1) \subsetneq \CC(v_2)$; a contradiction
  to the inclusion-minimality of $\CC(v_2)=A_2$.  By similar arguments,
  $\CC(w) = \CC(v_2)$ is not possible.  Hence,
  $\CC(v)\subsetneq \CC(w) \subsetneq \CC(v_1), \CC(v_2)$ must hold; again
  a contradiction to the inclusion-minimality of $\CC(v_1)=A_1$ and
  $\CC(v_2)=A_2$.

  Assume now that there are (at least) three inclusion-maximal clusters
  $A_1,A_2,A_3\in \mathscr{C}$ that satisfy $A_1,A_2,A_3\subsetneq C$.
  Hence, $C$ cannot be a singleton.  Since $C\in \mathscr{C}_N$, there is a
  $\preceq_N$-minimal vertex $v\in V(N)$ with $\CC(v) = C$.  Since $C$ is
  not a singleton and $N$ is binary, we can conclude that $v$ has at least
  one but at most two children in $N$.  Let $v_i$ be a vertex in $N$ with
  $\CC(v_i) = A_i$, $i\in \{1,2,3\}$.  Since $v$ has at most two children,
  at least two of $v_1,v_2,v_3$ must be descendants of the same child $w$
  of $v$ in $N$.  Since $v$ is $\preceq_N$-minimal w.r.t.\ $\CC(v) = C$ it
  must hold that $\CC(w)\subsetneq \CC(v)$.  Now, we can apply similar
  arguments as in the first case to obtain a contradiction.
\end{proof}

\begin{theorem}
  \label{thm:C-binary}
  Let $\mathscr{C}$ be a clustering system on $X$. Then, there is a binary
  level-1 network $N$ with $\mathscr{C}_N = \mathscr{C}$ if and only if
  $\mathscr{C}$ is closed and satisfies Properties~(L) and~(2-Inc).
  In this case, the (unique) cluster network with clustering system
  $\mathscr{C}$ is a binary level-1 network.
\end{theorem}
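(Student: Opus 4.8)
The plan is to prove the two implications separately, and in the backward direction to exhibit the desired network as the unique cluster network of $\mathscr{C}$.

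For the forward direction, suppose a binary level-1 network $N$ with $\mathscr{C}_N=\mathscr{C}$ exists. Since $N$ is level-1, Thm.~\ref{thm:L1} immediately gives that $\mathscr{C}$ is closed and satisfies (L). By Obs.~\ref{obs:BinN-GalledTree}, $N$ is phylogenetic, so Lemma~\ref{lem:orderiff} yields that $N$ satisfies (PCC). A binary network satisfying (PCC) has a clustering system satisfying (2-Inc) by Lemma~\ref{lem:binary+PCC-->2inc}, whence $\mathscr{C}=\mathscr{C}_N$ satisfies (2-Inc). This establishes the ``only if'' part with essentially no new work.

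For the backward direction, assume $\mathscr{C}$ is closed and satisfies (L) and (2-Inc). Since $\mathscr{C}$ is closed and satisfies (L), Prop.~\ref{prop:unique-separated-level-1} provides a separated, phylogenetic, shortcut-free level-1 network $N$ with $\mathscr{C}_N=\mathscr{C}$, namely the network obtained from the regular network $\Hasse[\mathscr{C}]$ by applying $\expand(v)$ to every hybrid vertex; by Thm.~\ref{thm:unique-cluster-network} this $N$ is precisely the unique cluster network of $\mathscr{C}$. It therefore remains only to show that $N$ is binary. The key observation is that (2-Inc) translates directly into degree bounds on $\Hasse[\mathscr{C}]$: the out-neighbors of a cluster $C$ are exactly the inclusion-maximal clusters properly contained in $C$, and the in-neighbors are exactly the inclusion-minimal clusters properly containing $C$, so (2-Inc) says that every vertex of $\Hasse[\mathscr{C}]$ has $\outdeg\le 2$ and $\indeg\le 2$. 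Combining this with Lemma~\ref{lem:Hasse-outdeg} (every non-singleton cluster has outdegree $\ge 2$) shows that each cluster in $\Hasse[\mathscr{C}]$ has outdegree exactly $2$ if it is a non-singleton and $0$ if it is a singleton, while each hybrid vertex (i.e. each vertex of indegree $\ge 2$) has indegree exactly $2$, every non-hybrid non-root vertex has indegree $1$, and the root $X$ has indegree $0$.

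Finally I would track these degrees through the expansion step. By the definition of $\expand(C)$, expanding a hybrid cluster $C$ creates a new vertex $C'$ whose in-neighbours are the two former in-neighbours of $C$ and whose unique out-neighbour is $C$; thus $C'$ becomes a hybrid vertex with $\indeg(C')=2$ and $\outdeg(C')=1$, while $C$ keeps its outdegree (which is $2$ or $0$) and acquires indegree $1$, making $C$ a tree vertex that is either a leaf or has outdegree $2$. The unexpanded vertices keep indegree $\le 1$ and outdegree $0$ or $2$, and for $\vert X\vert>1$ the root $X$ is a non-singleton tree vertex of outdegree $2$. Hence every tree vertex of $N$ is a leaf or has outdegree $2$ and every hybrid vertex has indegree $2$ and outdegree $1$, so $N$ is binary (the trivial case $\vert X\vert=1$ is the single-vertex network). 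This gives a binary level-1 network with $\mathscr{C}_N=\mathscr{C}$, and since $N$ is the unique cluster network of $\mathscr{C}$, the final sentence of the theorem follows. The main obstacle is conceptual rather than computational: one must recognize that (2-Inc) is exactly the bound ``$\indeg\le 2$ and $\outdeg\le 2$ in $\Hasse[\mathscr{C}]$'', and then check that the single expansion step converts the degree profile of the regular network into the binary profile without creating an offending vertex; everything else is an invocation of Prop.~\ref{prop:unique-separated-level-1}, Thm.~\ref{thm:unique-cluster-network}, and Lemma~\ref{lem:Hasse-outdeg}.
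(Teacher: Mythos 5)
Your proposal is correct and follows essentially the same route as the paper: forward direction via Thm.~\ref{thm:L1}, Lemma~\ref{lem:orderiff}, and Lemma~\ref{lem:binary+PCC-->2inc}; backward direction by reading (2-Inc) as the degree bound $\indeg,\outdeg\le 2$ on $\Hasse[\mathscr{C}]$ and then expanding all hybrid vertices to obtain the (unique) cluster network, which is binary and level-1. The only difference is presentational: you invoke the packaged Prop.~\ref{prop:unique-separated-level-1} rather than assembling Prop.~\ref{prop:edit-level-1-to-Hasse}, Thm.~\ref{thm:unique-cluster-network}, and Lemma~\ref{lem:expand-level-k} separately, and you spell out the degree bookkeeping (via Lemma~\ref{lem:Hasse-outdeg} and the effect of $\expand$) that the paper compresses into ``Hence, $N$ is binary.''
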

\begin{proof}
  Assume first that $\mathscr{C}$ is closed and satisfies Properties (L)
  and (2-Inc). Taken Thm.~\ref{thm:L1} and
  Prop.~\ref{prop:edit-level-1-to-Hasse} together, the regular network
  $\Hasse[\mathscr{C}]$ is a level-1 network.  Since $\mathscr{C}$
  satisfies (2-Inc), every vertex in $\Hasse[\mathscr{C}]$ must have in-
  and outdegree at most 2. By Thm.~\ref{thm:unique-cluster-network}, we
  can uniquely construct a cluster network $N$ by applying $\expand(v)$ to
  all hybrid vertices of $\Hasse[\mathscr{C}]$. Hence, $N$ is
  binary. Moreover, since $\Hasse[\mathscr{C}]$ is a level-1 network,
  Lemma~\ref{lem:expand-level-k} implies that $N$ is a level-1
  network. Hence, a binary level-1 network $N$ with
  $\mathscr{C}_N = \mathscr{C}$ exists. The latter, in particular, shows
  that the cluster network $N$ is a binary level-1 network.  Conversely,
  suppose that $N$ is a binary level-1 network on $X$ with
  $\mathscr{C}_N = \mathscr{C}$. By Thm.~\ref{thm:L1}, $\mathscr{C}$ is
  closed and satisfies Property~(L).  By Lemma~\ref{lem:orderiff}, $N$
  satisfies (PCC) and thus, by Lemma~\ref{lem:binary+PCC-->2inc}, is also
  satisfies (2-Inc).
\end{proof}

Since a phylogenetic level-1 network satisfies (PCC) by
Lemma~\ref{lem:orderiff}, it is semi-regular if and only if
it is shortcut-free. Theorem~\ref{thm:cluster-network-charac} therefore
yields the following characterization of level-1 cluster networks:
\begin{corollary}
  Let $N$ be a phylogenetic level-1 network. Then $N$ is a cluster network
  if and only if it is shortcut-free and separated.
\end{corollary}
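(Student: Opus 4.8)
The plan is to read this off directly from the characterization of cluster networks in Theorem~\ref{thm:cluster-network-charac}, which states that a network is a cluster network precisely when it is semi-regular, separated, and phylogenetic. Since $N$ is assumed to be phylogenetic and level-1 throughout, the only work that remains is to reconcile the property ``semi-regular'' with the properties ``shortcut-free and separated'' appearing in the corollary. The key leverage is Lemma~\ref{lem:orderiff}: every phylogenetic level-1 network satisfies (PCC) automatically. Recalling that semi-regularity (Def.~\ref{def:semi-regular}) means ``shortcut-free and (PCC)'', this tells us that for a phylogenetic level-1 network, being semi-regular is \emph{equivalent} to being shortcut-free, because the (PCC) half is always present. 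This is exactly the observation recorded in the paragraph immediately preceding the corollary.

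First I would handle the forward direction. Assume $N$ is a cluster network. By Theorem~\ref{thm:cluster-network-charac}, $N$ is semi-regular, separated, and phylogenetic. In particular, semi-regularity gives that $N$ is shortcut-free, and separatedness is handed to us directly, so $N$ is both shortcut-free and separated as claimed.

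For the converse, I would assume $N$ is shortcut-free and separated. Since $N$ is a phylogenetic level-1 network, Lemma~\ref{lem:orderiff} yields that $N$ satisfies (PCC). Combining (PCC) with the hypothesis that $N$ is shortcut-free, Def.~\ref{def:semi-regular} shows $N$ is semi-regular. As $N$ is moreover separated (by hypothesis) and phylogenetic (by the standing assumption), Theorem~\ref{thm:cluster-network-charac} applies and concludes that $N$ is a cluster network.

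I do not expect a genuine obstacle here: the statement is essentially a corollary obtained by substituting Lemma~\ref{lem:orderiff} into Theorem~\ref{thm:cluster-network-charac}. The only point requiring a moment's care is making sure that the hypotheses ``phylogenetic'' and ``level-1'' are invoked exactly where needed, namely to trigger (PCC) via Lemma~\ref{lem:orderiff}, so that shortcut-freeness alone upgrades to full semi-regularity. All the real content — the characterization of cluster networks and the fact that phylogenetic level-1 networks enjoy (PCC) — has already been established earlier in the paper, so the proof will be short.
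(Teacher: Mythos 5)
Your proposal is correct and follows exactly the paper's route: the paper derives this corollary by noting that Lemma~\ref{lem:orderiff} gives (PCC) for every phylogenetic level-1 network, so that semi-regularity reduces to shortcut-freeness, and then invoking the characterization of cluster networks in Theorem~\ref{thm:cluster-network-charac}. Nothing is missing; both directions are handled as the paper intends.
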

Obs.~\ref{obs:BinN-GalledTree} then implies
\begin{corollary}
  Let $N$ be a binary level-1 network. Then $N$ is a cluster network if and
  only if it is shortcut-free.
\end{corollary}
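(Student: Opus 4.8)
The plan is to specialize the immediately preceding corollary, which characterizes phylogenetic level-1 cluster networks as exactly the shortcut-free separated ones. The key observation is that for a \emph{binary} level-1 network, two of the three conditions appearing there come for free. Concretely, by Obs.~\ref{obs:BinN-GalledTree}, every binary level-1 network $N$ is both phylogenetic and separated. In particular, $N$ is a phylogenetic level-1 network, so the preceding corollary applies verbatim: $N$ is a cluster network if and only if it is shortcut-free and separated. Since $N$ is already separated by Obs.~\ref{obs:BinN-GalledTree}, the separatedness clause is automatically satisfied and may be dropped, leaving precisely the condition that $N$ be shortcut-free. This yields the claimed biconditional.

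First I would invoke Obs.~\ref{obs:BinN-GalledTree} to record that $N$ is phylogenetic and separated. Then I would cite the preceding corollary to reduce ``cluster network'' to the conjunction ``shortcut-free \emph{and} separated''. Finally I would note that, separatedness being guaranteed, the two sides of the equivalence collapse to ``$N$ is a cluster network $\iff$ $N$ is shortcut-free''. As an alternative route that avoids chaining through the intermediate corollary, one can argue directly: by Lemma~\ref{lem:orderiff} the phylogenetic level-1 network $N$ satisfies (PCC), so $N$ is semi-regular if and only if it is shortcut-free (Def.~\ref{def:semi-regular}); combining this with the separatedness and phylogeneticity supplied by Obs.~\ref{obs:BinN-GalledTree} and applying Theorem~\ref{thm:cluster-network-charac} gives that $N$ is a cluster network exactly when it is shortcut-free.

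I expect no real obstacle here: the statement is a direct corollary obtained by bookkeeping which defining properties of ``binary level-1'' are already implied. The only point requiring a moment of care is to confirm that ``binary'' indeed forces both phylogeneticity and separatedness, so that the earlier characterizations are genuinely applicable; this is exactly the content of Obs.~\ref{obs:BinN-GalledTree}, so the proof reduces to citing it together with the preceding corollary.
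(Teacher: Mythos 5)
Your proof is correct and matches the paper's own derivation exactly: the paper likewise obtains this corollary by combining Obs.~\ref{obs:BinN-GalledTree} (a binary level-1 network is phylogenetic and separated) with the immediately preceding corollary, so the separatedness clause drops out and only shortcut-freeness remains. Your alternative direct route via Lemma~\ref{lem:orderiff} and Theorem~\ref{thm:cluster-network-charac} is just an unrolling of how the paper proves that preceding corollary, so it is the same argument in substance.
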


We finally consider the problem as whether a clustering system
$\mathscr{C}\subseteq 2^X$ compatible w.r.t.\ to a binary level-1 network.
\begin{theorem}
  \label{thm:compatible-binary-l1}
  A given clustering system  $\mathscr{C}\subseteq 2^X$ is compatible w.r.t.\
  to a binary level-1 network if and only if $\mathscr{C}$ satisfies (L) and
  all hybrid vertices $w$ in $\Hasse[\IC]$ have $\indeg_{\Hasse[\IC]}(w)=2$.
\end{theorem}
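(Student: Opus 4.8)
The plan is to prove the equivalence by connecting compatibility to the intersection closure $\IC$ and then using the established block structure of $\Hasse[\IC]$. First I would recall that by Theorem~\ref{thm:comp-level1}, a clustering system $\mathscr{C}$ is compatible w.r.t.\ a (separated, phylogenetic) level-1 network if and only if $\mathscr{C}$ satisfies (L), in which case $\IC$ is a closed clustering system satisfying (L) and $\Hasse[\IC]$ is a level-1 network with $\mathscr{C}\subseteq \IC = \mathscr{C}_{\Hasse[\IC]}$. Thus the ``binary'' version should be reducible to controlling the indegrees of hybrid vertices in $\Hasse[\IC]$, since the obstruction to binarity of a level-1 cluster network is precisely a hybrid vertex of indegree greater than $2$.

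\smallskip
\noindent\textbf{Forward direction.} Suppose $\mathscr{C}$ is compatible w.r.t.\ a binary level-1 network, i.e., there is a binary level-1 network $N$ with $\mathscr{C}\subseteq\mathscr{C}_N$. Since $N$ is in particular a (separated, phylogenetic) level-1 network, Theorem~\ref{thm:comp-level1} immediately gives that $\mathscr{C}$ satisfies (L). It remains to show that every hybrid vertex $w$ in $\Hasse[\IC]$ has $\indeg_{\Hasse[\IC]}(w)=2$. The key observation here is that, by Corollary~\ref{cor:galled-level-1}, a binary level-1 network is a galled tree, so by Lemma~\ref{lem:galled-implies-N3O} its clustering system $\mathscr{C}_N$ satisfies (N3O). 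Since $\mathscr{C}\subseteq\mathscr{C}_N$ and (N3O) is hereditary, $\mathscr{C}$ satisfies (N3O), and I would then argue (using Lemma~\ref{lem:C-L<->IC-L} together with Observation~\ref{obs:IC-L->closed}) that $\IC$ also satisfies (N3O), because every overlapping pair in $\IC$ comes from an overlapping pair in $\mathscr{C}$. By Theorem~\ref{thm:galled-no3overlap}, since $\IC$ is closed and satisfies (L) and (N3O), the network $\Hasse[\IC]$ is a galled tree, so every hybrid vertex has indegree exactly $2$ by Corollary~\ref{cor:galled-level-1}.

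\smallskip
\noindent\textbf{Reverse direction.} Conversely, suppose $\mathscr{C}$ satisfies (L) and every hybrid vertex $w$ in $\Hasse[\IC]$ satisfies $\indeg_{\Hasse[\IC]}(w)=2$. By Lemma~\ref{lem:C-L<->IC-L}, $\IC$ satisfies (L), and $\IC$ is closed by construction, so $\Hasse[\IC]$ is a phylogenetic level-1 network by Corollary~\ref{cor:unique-N-closed-L} with $\mathscr{C}_{\Hasse[\IC]}=\IC$. Since all its hybrid vertices have indegree $2$, Corollary~\ref{cor:galled-level-1} tells us $\Hasse[\IC]$ is a galled tree, hence by Theorem~\ref{thm:galled-no3overlap} (or directly Lemma~\ref{lem:galled-implies-N3O}) $\IC$ satisfies (N3O). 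Now I would invoke Theorem~\ref{thm:C-binary}: to obtain a \emph{binary} level-1 network I must verify that $\IC$ satisfies (2-Inc) in addition to being closed and satisfying (L). The outdegree-$\le 2$ condition in (2-Inc) follows because $\Hasse[\IC]$ is phylogenetic and a galled tree (each non-trivial block splits into two internally disjoint paths, cf.\ Obs.~\ref{obs:galled-outneighbors}), while the indegree-$\le 2$ condition is exactly the translation, via Lemma~\ref{lem:multiple-inneighbors}, of the hypothesis that every hybrid vertex of $\Hasse[\IC]$ has indegree $2$. Thus $\IC$ satisfies (2-Inc), and Theorem~\ref{thm:C-binary} produces a binary level-1 network $N$ with $\mathscr{C}_N=\IC\supseteq\mathscr{C}$, establishing compatibility.

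\smallskip
\noindent The main obstacle I anticipate is the careful bookkeeping in translating the purely graph-theoretic indegree/outdegree conditions of $\Hasse[\IC]$ into the set-theoretic (2-Inc) property and vice versa; in particular one must confirm that the inclusion-maximal proper subclusters and inclusion-minimal proper superclusters of a cluster $C$ correspond precisely to the children and parents of $C$ in $\Hasse[\IC]$, so that ``at most two'' on each side matches outdegree $\le 2$ and indegree $\le 2$. This correspondence is essentially the definition of the Hasse diagram, but one must ensure closedness is used so that intersections of incomparable superclusters remain clusters, ruling out spurious extra parents---exactly the content of Lemma~\ref{lem:multiple-inneighbors}.
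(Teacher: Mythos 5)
Your forward direction is correct and close in spirit to the paper's. The paper argues directly: since $\mathscr{C}_N$ is closed and contains $\mathscr{C}$, it contains $\IC$, and a hybrid vertex of $\Hasse[\IC]$ with indegree $\ge 3$ would give three inclusion-minimal supersets of some $C\in\IC$, which pairwise overlap, contradicting (N3O) of the galled tree $N$. You instead push (N3O) down to $\mathscr{C}$ and back up to $\IC$ and then invoke Thm.~\ref{thm:galled-no3overlap}; this works, but note that the step ``every overlapping pair in $\IC$ comes from an overlapping pair in $\mathscr{C}$'' is established inside the \emph{proof} of Lemma~\ref{lem:C-L<->IC-L}, not by its statement, so you would need to reprove or explicitly extract that fact.

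The reverse direction, however, has a genuine gap: the claim that $\IC$ satisfies (2-Inc) is false under the stated hypotheses. The hypothesis on hybrid vertices controls only \emph{indegrees} in $\Hasse[\IC]$ (i.e., the number of inclusion-minimal superclusters, via Lemma~\ref{lem:multiple-inneighbors}); nothing bounds the \emph{outdegrees}. Take $\mathscr{C}=\{\{a\},\{b\},\{c\},\{a,b,c\}\}$: it satisfies (L) vacuously, $\IC=\mathscr{C}$, and $\Hasse[\IC]$ is a star with no hybrid vertices at all, so the hypotheses hold; yet $X=\{a,b,c\}$ has three inclusion-maximal proper subclusters, so (2-Inc) fails and Thm.~\ref{thm:C-binary} cannot be applied to $\IC$. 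Your appeal to Obs.~\ref{obs:galled-outneighbors} does not rescue this: it bounds out-neighbors \emph{within} a non-trivial block, but a vertex (in particular a tree vertex, or the maximum of one or several blocks) may have arbitrarily many additional children outside any non-trivial block.

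What you are missing is that compatibility only requires $\mathscr{C}\subseteq\mathscr{C}_N$, so one is free to \emph{add} clusters beyond $\IC$; insisting on a binary network $N$ with $\mathscr{C}_N=\IC$ exactly is too strong. The paper's proof exploits this freedom: starting from the galled tree $\Hasse[\IC]$, every vertex $v$ with $\outdeg(v)>2$ is ``resolved'' --- a hybrid vertex with indegree $2$ is expanded, and a tree vertex is replaced by a caterpillar whose leaves correspond to the non-trivial blocks having $v$ as their maximum together with the remaining children of $v$. This refinement preserves the level-1 property and the indegree-2 condition, and yields a binary level-1 network $N$ with $\mathscr{C}\subseteq\IC\subseteq\mathscr{C}_N$, where the second inclusion is in general strict. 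Your proof needs this (or an equivalent) construction to close the gap.
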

\begin{proof}
  Assume first that $\mathscr{C}$ is compatible w.r.t.\ to a binary level-1
  network and let $N$ be such a network with
  $\mathscr{C}\subseteq \mathscr{C}_N$.  By Thm.~\ref{thm:C-binary},
  $\mathscr{C}$ satisfies (L).  Moreover, by Thm.~\ref{thm:L1},
  $\mathscr{C}_N$ is closed and thus
  $\IC\subseteq \mathcal{I}(\mathscr{C}_N) = \mathscr{C}_N$.  To see that
  all hybrid vertices $w$ in $\Hasse[\IC]$ have
  $\indeg_{\Hasse[\IC]}(w)=2$, suppose for contradiction that there is a
  vertex $w$ in $\Hasse[\IC]$ with indegree larger than $2$, i.e., there is
  $C\in\IC$ with at least three distinct inclusion-minimal supersets
  $C_1,C_2,C_3\in\IC\subseteq \mathscr{C}_N$. Since
  $C\subsetneq C_1,C_2,C_3$ and these cluster are inclusion-minimal (and
  thus not contained in one another), they overlap pairwise.  By
  Obs.~\ref{obs:BinN-GalledTree}, $N$ is a galled tree. Hence, by
  Thm.~\ref{thm:galled-no3overlap}, $\mathscr{C}_N$ contains no three
  pairwise overlapping clusters; a contradiction.

  Assume now that $\mathscr{C}$ satisfies (L) and that all hybrid vertices
  $w$ in $\Hasse\coloneqq\Hasse[\IC]$ have $\indeg_{\Hasse}(w)=2$.  In the
  following we use caterpillars $\mathrm{CAT}_n$, i.e, binary trees on $n$
  leaves such that each inner vertex has exactly two children and the
  subgraph induced by the inner vertices is a directed path with the root
  $\rho_{\mathrm{CAT}_n}$ at one end of this path.  By
  Thm.~\ref{thm:comp-level1}, $\Hasse$ is a level-1 network.  This together
  with Obs.~\ref{obs:cycle} implies that every non-trivial block in
  $\Hasse$ is a cycle and thus, $\Hasse$ must be a galled tree.  In
  particular, for every non-trivial block $B$, $\max B$ has exactly two
  children in $B$.  Let $v$ be vertex in $\Hasse$ with
  $\outdeg_{\Hasse}(v)>2$. We now ``resolve'' $v$ as follows: If
  $\indeg_{\Hasse}(v)=2$, then expand $v$. Otherwise, $v$ is a tree vertex.
  In this case, let $\mathcal{B}$ be the set of all non-trivial blocks $B$
  in $\Hasse$ with $v= \max B$ and $\mathcal{C}$ be the children of $v$
  that are not contained in some $B\in \mathcal{B}$.  We now replace $v$ by
  a caterpillar $\mathrm{CAT}_n$ with $n =
  \vert\mathcal{C}\vert+\vert\mathcal{B}\vert$ and thus, we can find a
  1-to-1 correspondence between the $n$ leaves of the caterpillar and the
  elements in $\mathcal{C}\cupdot \mathcal{B}$. The elements in
  $\mathcal{C}$ are now identified with their corresponding leaves.
  Observe that $\vert\mathcal{B}\vert>1$ is possible, i.e., $v=\max B$ for
  more than one non-trivial block $B$. We therefore re-attach, for each
  block $B\in \mathcal{B}$, the two children of $v=\max B$ that are
  contained in $B$ as children of the leaf of the caterpillar that
  corresponds to $B$. Note that this construction is well-defined since, by
  Lemma~\ref{lem:block-identity}, no two such children can be children of
  two distinct blocks $B,B'\in\mathcal{B}$.  Since we do not change the
  structure of non-trivial blocks, the network $N'$ obtained in this way
  remains a level-1 network whose hybrid vertices still have indegree $2$.
  Moreover, it is an easy task to verify that $\mathscr{C}\subseteq
  \mathscr{C}_{\Hasse}\subseteq \mathscr{C}_{N'}$. Repeated application of
  the latter steps to all vertices eventually results in a binary level-1
  network $N$ with $\mathscr{C}\subseteq \mathscr{C}_{\Hasse} \subseteq
  \mathscr{C}_N$.
\end{proof}

Using \texttt{Check-L1-Compatibility} and the results in
Thm.~\ref{thm:AlgL1Comp}, we obtain
\begin{corollary}
  Determining if a clustering system $\mathscr{C}\subseteq 2^X$ is
  compatible w.r.t.\ to a binary level-1 network and, in the affirmative
  case, the construction of such a network can be done in $O(\vert X\vert^5)$
  time.
\end{corollary}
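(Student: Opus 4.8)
The plan is to obtain this complexity bound directly from the characterization in Thm.~\ref{thm:compatible-binary-l1} together with the running-time analysis of \texttt{Check-L1-Compatibility} established in Thm.~\ref{thm:AlgL1Comp}. Thm.~\ref{thm:compatible-binary-l1} reduces compatibility with a binary level-1 network to two verifiable conditions: that $\mathscr{C}$ satisfies (L), and that every hybrid vertex $w$ of $\Hasse[\IC]$ satisfies $\indeg_{\Hasse[\IC]}(w)=2$. Since the first condition and the object $\Hasse[\IC]$ are precisely what \texttt{Check-L1-Compatibility} already produces, most of the work is done within $O(\vert X\vert^5)$ time, and it remains to bound the cost of the extra indegree test and of the binary resolution.

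First I would invoke \texttt{Check-L1-Compatibility}: by Thm.~\ref{thm:AlgL1Comp} it decides whether $\mathscr{C}$ satisfies (L) and, in the affirmative case, returns $\Hasse[\IC]$, all in $O(\vert X\vert^5)$ time. If (L) fails, the algorithm already reports that no solution exists. Otherwise, because (L) forces $\IC$ to be a weak hierarchy (Cor.~\ref{cor:L->wH} together with Lemma~\ref{lem:C-L<->IC-L}), we have $\vert\IC\vert=O(\vert X\vert^2)$, just as in the cardinality bound used in \texttt{Check-L1-Compatibility}; hence $\Hasse[\IC]$ has $O(\vert X\vert^2)$ vertices. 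Scanning its arcs to record the indegree of every hybrid vertex therefore costs only polynomially many steps, comfortably inside the $O(\vert X\vert^5)$ budget. If some hybrid vertex has indegree greater than two, Thm.~\ref{thm:compatible-binary-l1} certifies incompatibility, and I would report failure.

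In the remaining case both conditions of Thm.~\ref{thm:compatible-binary-l1} hold, and I would build the witness network via the resolution procedure from that theorem's proof: expand each hybrid vertex and replace each tree vertex of outdegree exceeding two by a caterpillar $\mathrm{CAT}_n$, re-attaching the block children as described there (well-defined by Lemma~\ref{lem:block-identity}). Each local replacement acts on a single vertex of $\Hasse[\IC]$ and introduces $O(\vert X\vert)$ new vertices, so over all $O(\vert X\vert^2)$ vertices the construction adds at most $O(\vert X\vert^3)$ vertices and runs in time polynomial in $\vert X\vert$, again dominated by $O(\vert X\vert^5)$.

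The only genuinely delicate point is confirming that these auxiliary steps do not escape the $O(\vert X\vert^5)$ bound, and this reduces entirely to the size estimate $\vert\IC\vert=O(\vert X\vert^2)$ guaranteed by (L). With that in hand, the whole procedure is dominated by the cost of \texttt{Check-L1-Compatibility} itself, which yields the claimed $O(\vert X\vert^5)$ running time for both the decision and the construction.
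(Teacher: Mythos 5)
Your proposal is correct and follows essentially the same route as the paper: the corollary is obtained by combining \texttt{Check-L1-Compatibility} (Thm.~\ref{thm:AlgL1Comp}) with the characterization and resolution construction of Thm.~\ref{thm:compatible-binary-l1}, plus the observation that the indegree test and caterpillar replacements on the $O(\vert X\vert^2)$-vertex network $\Hasse[\IC]$ stay within the $O(\vert X\vert^5)$ budget. Your per-vertex claim that each caterpillar introduces only $O(\vert X\vert)$ new vertices is not strictly justified, but this is harmless: since $\Hasse[\IC]$ is level-1 it has $O(\vert X\vert^2)$ arcs in total, so the sum of all outdegrees---and hence the total size of all caterpillar replacements---is $O(\vert X\vert^2)$ anyway.
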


\begin{corollary}
  \label{cor:l1-deg2->binary-l1}
  For all phylogenetic level-1 networks $N$ whose hybrid vertices $w$ have
  $\indeg_{N}(w)=2$, there is a binary level-1 network $N'$ with
  $\mathscr{C}_N\subseteq \mathscr{C}_{N'}$.
\end{corollary}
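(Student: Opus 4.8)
The plan is to deduce the statement directly from the characterization of binary-level-1-compatibility in Theorem~\ref{thm:compatible-binary-l1}, applied to the clustering system $\mathscr{C}\coloneqq\mathscr{C}_N$ (which is a clustering system by Lemma~\ref{lem:C-N}). That theorem asserts that $\mathscr{C}$ is compatible w.r.t.\ a binary level-1 network — i.e., there is a binary level-1 network $N'$ with $\mathscr{C}\subseteq\mathscr{C}_{N'}$ — if and only if $\mathscr{C}$ satisfies (L) and every hybrid vertex $w$ of $\Hasse[\IC]$ satisfies $\indeg_{\Hasse[\IC]}(w)=2$. Since this conclusion is exactly what the corollary demands, it suffices to verify these two hypotheses for $\mathscr{C}=\mathscr{C}_N$.

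First I would record the structural facts about $N$. By hypothesis $N$ is a phylogenetic level-1 network in which every hybrid vertex has indegree $2$, so Cor.~\ref{cor:galled-level-1} shows that $N$ is a galled tree. Being level-1, its clustering system $\mathscr{C}$ satisfies (L) by Cor.~\ref{cor:L} and is closed by Thm.~\ref{thm:L1}; being a galled tree, $\mathscr{C}$ additionally satisfies (N3O) by Lemma~\ref{lem:galled-implies-N3O}. The property (L) is already the first of the two hypotheses of Thm.~\ref{thm:compatible-binary-l1}.

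For the second hypothesis I would first simplify the intersection-closure. Because $\mathscr{C}$ is closed, Lemma~\ref{lem:simple-closed} gives $A\cap B\in\mathscr{C}$ for all $A,B\in\mathscr{C}$ with $A\cap B\neq\emptyset$, and inductively every non-empty intersection of clusters already lies in $\mathscr{C}$; hence $\IC=\mathscr{C}$ and therefore $\Hasse[\IC]=\Hasse[\mathscr{C}]$. Now $\mathscr{C}$ is closed and satisfies both (L) and (N3O), so Thm.~\ref{thm:galled-no3overlap} shows that $\Hasse[\mathscr{C}]$ is a phylogenetic galled tree. Applying Cor.~\ref{cor:galled-level-1} to this galled tree yields $\indeg(w)=2$ for every hybrid vertex $w$ of $\Hasse[\IC]=\Hasse[\mathscr{C}]$, which is precisely the required second hypothesis.

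With both hypotheses of Thm.~\ref{thm:compatible-binary-l1} in hand, I would conclude that $\mathscr{C}_N$ is compatible w.r.t.\ a binary level-1 network, i.e., there is a binary level-1 network $N'$ on $X$ with $\mathscr{C}_N\subseteq\mathscr{C}_{N'}$, as claimed. Every step is a direct invocation of an already-established result, so there is no genuine technical obstacle; the only point requiring a little care is the reduction $\IC=\mathscr{C}_N$, which is what lets the galled-tree description of $\Hasse[\mathscr{C}_N]$ supply the indegree-$2$ condition that Thm.~\ref{thm:compatible-binary-l1} phrases in terms of $\Hasse[\IC]$. (Alternatively, the indegree bound can be seen directly: three distinct parents of a hybrid vertex in $\Hasse[\mathscr{C}_N]$ would be three pairwise incomparable, and hence pairwise overlapping, inclusion-minimal supersets of that cluster, contradicting (N3O), while hybrid vertices have indegree at least $2$ by definition.)
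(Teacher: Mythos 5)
Your proof is correct and follows essentially the same route as the paper's: identify $N$ as a galled tree, use Thm.~\ref{thm:galled-no3overlap} to get that $\mathscr{C}_N$ is closed (hence $\IC=\mathscr{C}_N$) and that $\Hasse[\IC]$ is a galled tree with all hybrid vertices of indegree~$2$, then invoke Thm.~\ref{thm:compatible-binary-l1}. The only cosmetic difference is that you assemble closedness, (L), and (N3O) from separate citations (Thm.~\ref{thm:L1}, Cor.~\ref{cor:L}, Lemma~\ref{lem:galled-implies-N3O}) where the paper cites Thm.~\ref{thm:galled-no3overlap} once, and you use Cor.~\ref{cor:galled-level-1} in place of Obs.~\ref{obs:cycle} for the indegree condition.
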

\begin{proof}
  If all hybrid vertices $w$ in $N$ have $\indeg_{N}(w)=2$, then $N$ is a
  galled tree by Obs.~\ref{obs:cycle}.
  By Thm.~\ref{thm:galled-no3overlap},
  $\mathscr{C}_N$ is closed (i.e., $\mathscr{C}_N=\IC$) and satisfies (L) and,
  moreover, $\Hasse[\mathscr{C}]= \Hasse[\IC]$ is a galled tree.
  Applying Obs.~\ref{obs:cycle} again to $\Hasse[\IC]$ yields that all hybrid
  vertices $w$ in $\Hasse[\IC]$ have $\indeg_{\Hasse[\IC]}(w)=2$. Now apply
  Thm.~\ref{thm:compatible-binary-l1}.
\end{proof}

The converse of Cor.~\ref{cor:l1-deg2->binary-l1} is not true, i.e., a
phylogenetic level-1 networks $N$ for which there is a binary level-1
network $N'$ with $\mathscr{C}_N\subseteq \mathscr{C}_{N'}$ may contain a
hybrid vertex $w$ with $\indeg_{N}(w)>2$.  To see this, consider a binary
level-1 network $N'$ with a non-trivial block $B$ such that $(\max B,\min
B)\notin E(N')$ and the non-binary network $N$ that is obtained from $N'$
by adding the shortcut $(\max B, \min B)$.  Clearly, $N$ is still
phylogenetic and level-$1$ and by Lemma~\ref{lem:shortcut-deletion}
satisfies $\CC_{N}=\CC_{N'}$ but the vertex $\min B$ has indegree $3$.

\subsection{Level-1 Networks Encoded by their Cluster Multisets}
\label{sect:BL1-MS}

In this part we focus on a particular subclass of networks:
\begin{definition}\label{def:quasi-bin}
  A network $N$ is a \emph{quasi-binary} if
  $\indeg_N(w)=2$ and $\outdeg_{N}(w)=1$ for every hybrid vertex $w\in V(N)$
  and,  additionally, $\outdeg_N(\max B) = 2$
  for every non-trivial block $B$ in $N$.
\end{definition}
We note that, in particular, all binary networks are quasi-binary.
Moreover, quasi-binary networks are separated. Therefore and by
Cor.~\ref{cor:galled-level-1}, we obtain
\begin{fact}\label{obs:quasiBinL1->Galled}
  Quasi-binary level-1 networks are separated
  galled trees.
\end{fact}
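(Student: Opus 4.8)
The plan is to verify the two asserted properties separately, both directly from the definition of quasi-binary networks together with earlier results, so that essentially no new work is required. First I would dispose of separatedness: by Def.~\ref{def:quasi-bin} every hybrid vertex $w$ of a quasi-binary network satisfies $\outdeg_N(w)=1$, which is precisely the defining condition for $N$ to be separated in the sense of Def.~\ref{def:sep}. This step uses neither the level-1 hypothesis nor the block condition on $\outdeg_N(\max B)$.

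For the galled-tree property I would invoke Cor.~\ref{cor:galled-level-1}, which characterizes galled trees as exactly those networks that are level-1 and in which every hybrid vertex has indegree $2$. Since $N$ is level-1 by hypothesis and, again by Def.~\ref{def:quasi-bin}, every hybrid vertex $w$ satisfies $\indeg_N(w)=2$, both hypotheses of the corollary are met, and I conclude immediately that $N$ is a galled tree. Combining the two parts yields the claim.

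I do not anticipate any genuine obstacle here: the statement is an immediate combination of the defining (in)equalities of quasi-binary networks with Def.~\ref{def:sep} and Cor.~\ref{cor:galled-level-1}, which is why it is recorded as an \emph{Observation} rather than a lemma. Notably, the extra condition $\outdeg_N(\max B)=2$ appearing in Def.~\ref{def:quasi-bin} is not needed for this observation at all; I expect it to become relevant only for the stronger encoding result in Section~\ref{sect:BL1-MS}, where the multiplicities of clusters at the roots of non-trivial blocks must be controlled.
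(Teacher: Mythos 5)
Your proof is correct and follows exactly the paper's own reasoning: separatedness is immediate from $\outdeg_N(w)=1$ for hybrid vertices (Def.~\ref{def:sep}), and the galled-tree property follows from Cor.~\ref{cor:galled-level-1} using the level-1 hypothesis together with $\indeg_N(w)=2$. Your side remark that the condition $\outdeg_N(\max B)=2$ is not used here is also accurate.
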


Theorem~\ref{thm:semireg-multiset} shows that all semi-regular networks are
encoded by their multisets of clusters. None of the conditions (PCC) or
shortcut-free can be omitted, as shown be the examples in
Fig.~\ref{fig:multiset-not-semireg}.  Nevertheless, replacing some of these
conditions by a different one might be possible. In the following, we
replace shortcut-freeness by requiring that $N$ is a phylogenetic
quasi-binary level-1 network. In this case, we obtain Property (PCC) as an
immediate consequence of Lemma~\ref{lem:orderiff}.  Again, we observe that
neither of the properties phylogenetic or quasi-binary or level-1 can be
dropped: Fig.~\ref{fig:level2-multiset}(A) shows two phylogenetic
quasi-binary networks that are not level-1;
Fig.~\ref{fig:level2-multiset}(B) shows two quasi-binary level-1 networks
that are not phylogenetic; Fig.~\ref{fig:level2-multiset}(C) shows two
phylogenetic level-1 networks that are not quasi-binary where, in all three
examples, the respective networks have the same multisets of clusters but
are not isomorphic.  As a by-product, we obtain the following result:
\begin{fact}
  Let $\mathbb{P}$ denote the class of \emph{all} networks $N$ for which
  either precisely one or at least one of the following conditions hold:
  \begin{enumerate}
  \item  $N$ is level-$k$, $k\geq 2$ but not level-1 and contains at least
    three
    leaves.
  \item $N$ does not satisfy (PCC);
  \item $N$ is not quasi-binary;
  \item $N$ is not shortcut-free;
  \item $N$ is not phylogenetic.
  \end{enumerate}
  Then, no network $N\in \mathbb{P}$ is encoded (w.r.t.\ $\mathbb{P}$) by
  its multiset $\mathscr{M}_N$ of clusters and thus, by its set
  $\mathscr{C}_N$ of clusters.
\end{fact}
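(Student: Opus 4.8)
The plan is to unfold the definition of encoding: to show that a network $N\in\mathbb{P}$ is \emph{not} encoded within $\mathbb{P}$, I must exhibit a second network $N'\in\mathbb{P}$ with $N'\not\simeq N$ and $\mathscr{M}_{N'}=\mathscr{M}_N$. The engine for producing $N'$ will be a single cluster-multiset-preserving operation, the toggling of a shortcut. Adding an arc $(u,w)$ whenever there is a vertex $v'$ with $w\prec_N v'\prec_N u$ leaves the set of leaves reachable from every vertex untouched, hence preserves each cluster together with its multiplicity, so $\mathscr{M}_N$ is unchanged; deleting a shortcut does the same by Lemma~\ref{lem:shortcut-deletion}. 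Since either operation changes the number of arcs, the result can never be graph isomorphic to $N$, so $N'\not\simeq N$ is automatic. The entire difficulty is thereby reduced to showing that, for every $N\in\mathbb{P}$, at least one such toggle again lands in $\mathbb{P}$.

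I would organize this around the observation that conditions~(1)--(4) each force a non-trivial block in $N$: level-$k$ with $k\ge 2$ places two hybrid vertices in one block; a violation of (PCC) rules out $N$ being a tree, since trees are tree-child and hence satisfy (PCC) by Cor.~\ref{cor:TC->PCC}; a failure of quasi-binarity produces a hybrid vertex, which is properly contained in a non-trivial block by Lemma~\ref{lem:hybrid-properly-contained}; and the endpoints of a shortcut lie on a common undirected cycle. In a non-trivial block $B$ the vertex $\max B$ has at least two out-neighbours (Cor.~\ref{cor:maxB-outdegree}), so a terminal vertex $t$ of $B$ is reached from $\max B$ by two internally vertex-disjoint directed paths, each carrying an interior vertex $v'$. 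This supplies a chain $\max B\succ_N v'\succ_N t$, hence a legal position $(\max B,t)$ — or, when $t$ has a descendant leaf $c$, the structurally different position $(v',c)$ — at which a cluster-preserving shortcut may be inserted.

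The dichotomy is then as follows. If some cluster-preserving shortcut can be added, I add it and the result lies in $\mathbb{P}$ through condition~(4). Otherwise $N$ is shortcut-saturated, in which case it already contains shortcuts (every ancestry of length $\ge 2$ is a direct arc, which is itself a shortcut) and I delete one. Here the two degeneracy remarks of the paper become essential: a leaf that acquires a second in-arc becomes a hybrid vertex of outdegree $0\ne 1$ and so fails quasi-binarity, keeping the modified network in $\mathbb{P}$ via condition~(3); and deleting the last shortcut over a subdivided arc exposes a vertex of in- and outdegree~$1$ (whose cluster duplicates that of its child, cf.\ Obs.~\ref{obs:outdeg-1-cluster}), so the result is non-phylogenetic and stays in $\mathbb{P}$ via condition~(5). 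The smallest concrete instances of all five single-condition situations are exactly the non-isomorphic pairs with equal cluster multisets shown in Fig.~\ref{fig:multiset-not-semireg} and Fig.~\ref{fig:level2-multiset}, which I would cite as base cases.

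The step I expect to be the main obstacle is precisely the bookkeeping guaranteeing membership of the toggled network in $\mathbb{P}$ in every sub-case, and in particular the treatment of condition~(5): a non-phylogenetic network need not contain a non-trivial block (a subdivided tree is a counterexample), so it cannot be routed through the block argument and must instead be handled directly by the shortcut-addition branch, using that a subdivided arc always admits insertion of a cluster-preserving shortcut into a descendant. The delicate point is to verify that for these minimal configurations at least one of the two toggles survives in $\mathbb{P}$, and that the modification never accidentally produces a \emph{nice} (phylogenetic, separated, quasi-binary, shortcut-free, level-$1$) network of the kind covered by the positive encoding theorem.
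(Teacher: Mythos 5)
Your proposal attacks the literal universal reading of the statement---for \emph{every} $N\in\mathbb{P}$ there is a non-isomorphic $N'\in\mathbb{P}$ with $\mathscr{M}_{N'}=\mathscr{M}_N$---and that reading is false, so the gap in your argument cannot be repaired. Take $N_0$ on $X=\{x\}$ with $V(N_0)=\{\rho,x\}$ and the single arc $(\rho,x)$. Since $\rho$ has outdegree $1$ and indegree $0$, $N_0$ is not phylogenetic, and it is shortcut-free, level-$1$, quasi-binary (vacuously) and satisfies (PCC); hence $N_0\in\mathbb{P}$, under both the ``precisely one'' and the ``at least one'' reading, via condition~(5) alone. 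But $\mathscr{M}_{N_0}$ consists of the cluster $\{x\}$ with multiplicity two, and \emph{any} network with this multiset has exactly two vertices, the leaf $x$ and one further vertex that must be an ancestor of $x$ and the root; such a network is isomorphic to $N_0$. So $N_0$ \emph{is} encoded within $\mathbb{P}$ (indeed within the class of all networks). Your machinery breaks at exactly this point: no shortcut can be added to $N_0$ (there is no $v'$ with $x\prec_{N_0} v'\prec_{N_0}\rho$) and none exists to delete, so the step ``otherwise $N$ is shortcut-saturated, in which case it already contains shortcuts'' is wrong---saturation is vacuous for any network without a directed path on three vertices, and $N_0$ is such a network lying in $\mathbb{P}$.

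What the paper asserts and proves is the existential statement that the encoding property of Thm.~\ref{thm:level-1-alt} is destroyed once any one of its hypotheses is dropped, i.e., that $\mathbb{P}$ contains non-encoded networks; its proof consists precisely of the example pairs in Fig.~\ref{fig:multiset-not-semireg}(A),(B) (conditions~(4) and~(2)) and Fig.~\ref{fig:level2-multiset}(A),(B),(C) (conditions~(1), (5), (3)). So what you relegate to ``base cases'' is the entire intended argument. A second concrete defect, independent of $N_0$: your only certificate that the toggled network lies in $\mathbb{P}$ is condition~(4), which serves the ``at least one'' reading but not the ``precisely one'' reading that the statement explicitly covers. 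If $N$ satisfies precisely condition~(1)---e.g., the binary level-$2$ tree-child networks of Fig.~\ref{fig:level2-multiset}(A)---then $N$ plus a shortcut satisfies (1) and (4), and typically (3) as well since the hybrid vertex acquires indegree $3$, so it leaves that version of $\mathbb{P}$; witnesses for the ``precisely one'' clause must be tailored, which is exactly what the figures do. To your credit, the toggle itself is sound: adding an arc underneath an existing length-two ancestry or removing a shortcut (Lemma~\ref{lem:shortcut-deletion}) preserves $\mathscr{M}_N$ and changes the arc count, and one can check that for the ``at least one'' reading it produces a witness for every $N\in\mathbb{P}$ except $N_0$ (if some long ancestry lacks its arc, add it; if all are saturated, deleting a shortcut $(u,w)$ leaves either another shortcut or a vertex of in- and outdegree~$1$). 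But as a proof of the stated Fact, the argument both overshoots (the universal claim is refuted by $N_0$) and undershoots (it cannot serve the ``precisely one'' clause).
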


\begin{figure}[t]
  \centering
  \includegraphics[width=0.95\textwidth]{./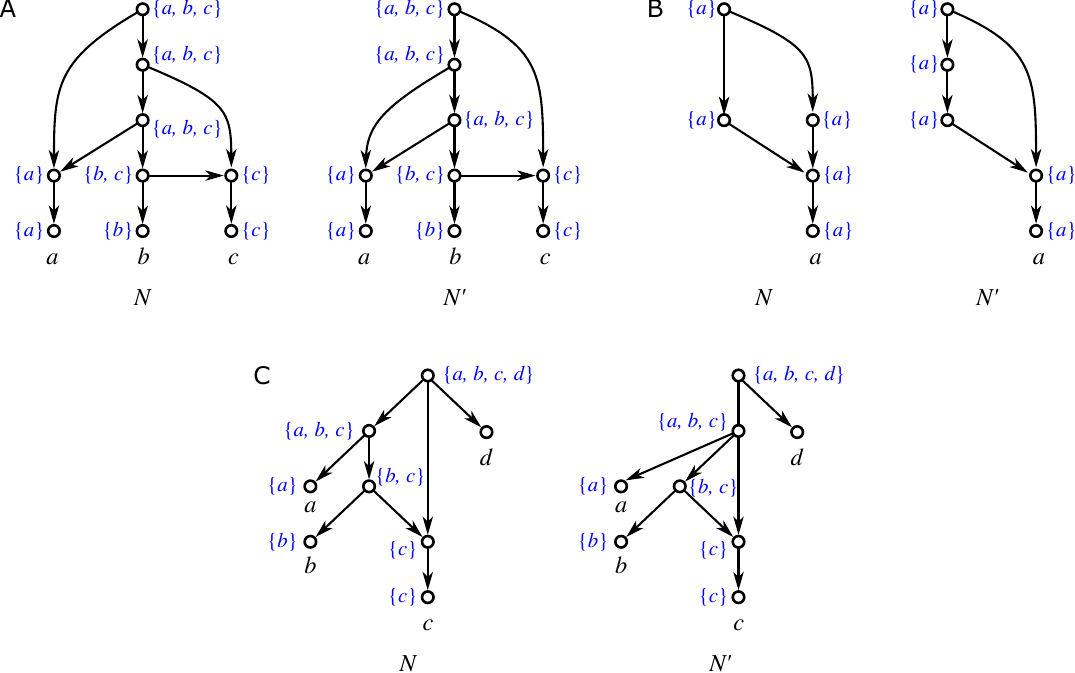}
  \caption{Three pairs of non-isomorphic networks $N$ and $N'$ for which
    $\mathscr{M}_{N} = \mathscr{M}_{N'}$.  (A) $N$ and $N'$ are binary
    level-2 tree-child networks. (B) $N$ and $N'$ are separated galled
    trees but not phylogenetic. (C) $N$ and $N'$ are separated phylogenetic
    level-1 networks.}
    \label{fig:level2-multiset}
\end{figure}

\begin{theorem}\label{thm:level-1-alt}
  Let $N$ be a  phylogenetic quasi-binary level-1 network. Then
  $N$ is the unique  phylogenetic quasi-binary level-1 network
  tree whose cluster multiset is $\mathscr{M}_N$.
\end{theorem}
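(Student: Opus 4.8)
The plan is to reduce the statement to the already proven uniqueness of semi-regular networks (Thm~\ref{thm:semireg-multiset}) by first deleting all shortcuts, and then to argue that the deleted shortcuts can only be re-inserted in one way. Throughout, let $N$ and $\tilde N$ be phylogenetic quasi-binary level-1 networks with $\mathscr{M}_N=\mathscr{M}_{\tilde N}$. By Lemma~\ref{lem:orderiff} both satisfy (PCC), and by Obs~\ref{obs:quasiBinL1->Galled} both are separated galled trees, so by Fact~\ref{obs:cycle} every non-trivial block is an (undirected) cycle whose unique terminal vertex $\min B$ is a hybrid vertex with $\indeg(\min B)=2$ and $\outdeg(\min B)=1$, and whose source $\max B$ has $\outdeg(\max B)=2$.

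First I would form $N^{-}$ by deleting every shortcut of $N$, one at a time. By Lemma~\ref{lem:shortcut-deletion} each deletion leaves the vertex set, all clusters $\CC(v)$, and the order $\preceq_N$ unchanged; hence (PCC) (which depends only on $\preceq_N$ and the clusters) is preserved, and arcs that were shortcuts remain shortcuts, so the process terminates in a shortcut-free network $N^{-}$ that still satisfies (PCC), i.e.\ a \emph{semi-regular} network, and that has $\mathscr{M}_{N^{-}}=\mathscr{M}_{N}$ (same vertices, same clusters). Defining $\tilde N^{-}$ analogously gives $\mathscr{M}_{N^{-}}=\mathscr{M}_{N}=\mathscr{M}_{\tilde N}=\mathscr{M}_{\tilde N^{-}}$, so Thm~\ref{thm:semireg-multiset} yields a graph isomorphism $\psi\colon N^{-}\to\tilde N^{-}$ that is the identity on $X$ and preserves clusters, in- and out-degrees, and the ancestor relation.

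It then remains to show that $\psi$ extends to an isomorphism $N\to\tilde N$, i.e.\ that $\psi$ carries the shortcut set $S$ of $N$ onto the shortcut set $\tilde S$ of $\tilde N$. Here I would use the rigid cycle structure: every shortcut of $N$ is the arc $(\max B,\min B)$ of a non-trivial block $B$ in which one of the two $\max B$–$\min B$ paths is a single arc and the other has length $\ge 2$. Deleting it leaves exactly this long path $\max B=v_0\to v_1\to\dots\to v_m=\min B$; here $\min B$ has $\indeg_{N^{-}}(\min B)=1$ (its indegree dropped from $2$) and $\outdeg_{N^{-}}(\min B)=1$, the source $\max B$ has $\outdeg_{N^{-}}(\max B)=1$ (it lost its second out-arc), while every interior vertex $v_i$ ($1\le i\le m-1$) is, by phylogeny together with Obs~\ref{obs:galled-outneighbors}, a tree vertex of outdegree $\ge 2$ (one out-neighbour in $B$ and at least one outside). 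Because $\min B$ has indegree $1$ in $N^{-}$, its strict ancestors form a chain $\min B\prec v_{m-1}\prec\dots\prec v_0$ up to $v_0=\max B$, and $v_0$ is the first outdegree-$1$ vertex met going upward; moreover the interior of the path contains no degree-deficient vertex. All of this is expressed purely through in/out-degrees and the ancestor order of $N^{-}$, which $\psi$ preserves, so $\psi$ sends $(\max B,\min B)$ to an arc $(\psi(\max B),\psi(\min B))$ that the identical analysis of the quasi-binary level-1 network $\tilde N$ must recognise as a shortcut of $\tilde N$. Hence $\psi(E(N))=E(\tilde N^{-})\cup\psi(S)=E(\tilde N)$ and $N\simeq\tilde N$.

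The main obstacle is precisely this last step: proving that the location \emph{and orientation} of each deleted shortcut are forced by $N^{-}$, equivalently that $N^{-}$ admits a \emph{unique} completion to a phylogenetic quasi-binary level-1 network by adding shortcuts. The difficulty is that a single degree-deficient vertex looks locally the same whether it must regain an in-arc (as a head $\min B$) or an out-arc (as a source $\max B$), since both lie at the top of a two-element cluster class. The key leverage I would exploit is twofold. Adding only shortcuts is mandatory, for any non-shortcut arc would alter some cluster and hence $\mathscr{M}_N$; and the constraint that every interior vertex of a cycle-block must have outdegree $\ge 2$ forbids a shortcut from spanning a path that contains another degree-deficient vertex. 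Together with the quasi-binary requirements ``$\outdeg(\max B)=2$'' and ``every hybrid vertex has outdegree $1$,'' this pins each head to the unique degree-deficient ancestor reachable through interior vertices of outdegree $\ge 2$, and each source to its matching descendant, so the matching of shortcut endpoints—and therefore the completion—is unique and respected by $\psi$.
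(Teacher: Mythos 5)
Your overall strategy is genuinely different from the paper's proof (which applies $\varphi_{PCC}$ directly to $N$ and $\tilde N$ and derives a contradiction from a $\preceq_N$-minimal ``missing'' arc), and its first two stages are correct: by Lemma~\ref{lem:shortcut-deletion}, shortcut deletion preserves vertices, clusters and $\preceq$, hence (PCC) and the cluster multiset, and since being a shortcut is an order-theoretic property ($(u,w)$ is a shortcut iff $w\prec v'\prec u$ for some $v'$), iterating the deletion yields semi-regular networks $N^-$ and $\tilde N^-$ with $\mathscr{M}_{N^-}=\mathscr{M}_{\tilde N^-}$, so Thm.~\ref{thm:semireg-multiset} supplies the isomorphism $\psi$. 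Also correct: in a galled tree the shortcuts are exactly the arcs $(\max B,\min B)$, and quasi-binarity forces that no interior vertex of a surviving long path is an endpoint of another deleted shortcut, so from a head $\min B$ the upward parent-chain meets $\max B$ as the first degree-deficient vertex.

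The last step, however, has a genuine gap, precisely at the point you flag as ``the difficulty'' and then dismiss. Knowing that $\psi(\min B)$ is degree-deficient in $\tilde N^-$ and that the first degree-deficient vertex above it is $\psi(\max B)$ does not make $(\psi(\max B),\psi(\min B))$ a shortcut of $\tilde N$: a priori, $\psi(\min B)$ could be the \emph{tail} of a deleted shortcut of $\tilde N$ (regaining an out-arc to some degree-deficient vertex below it) rather than a head. Your stated leverage does not close this. It pins a head to its unique degree-deficient ancestor, but the symmetric clause ``each source to its matching descendant'' is false as a local statement: a prospective tail $a$ has outdegree $1$, yet its unique child may have outdegree $\ge 2$ with two descending chains of indegree-$1$, outdegree-$\ge 2$ vertices, each ending in a degree-deficient vertex, so $a$'s partner (and with it the head/tail assignment) is not locally forced. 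What is missing is a uniqueness lemma for the completion, for instance: let $D$ be the set of degree-deficient vertices of $N^-$ and, for $d\in D$ of indegree $1$, let $f(d)$ be the first element of $D$ on the parent-chain above $d$; since $f(d)\succ d$, the pairs $\{d,f(d)\}$ form a forest on $D$; every valid completion must match each head $b$ to $f(b)$ (your chain argument), hence induces a perfect matching of this forest, oriented from ancestor to descendant; and a forest admits at most one perfect matching (a leaf must be matched to its unique neighbour; induct). This determines roles and partners simultaneously, and since $\psi$ preserves $D$, the chains, and $f$, it carries the unique matching of $N^-$ onto that of $\tilde N^-$, i.e., $\psi$ maps the shortcut set of $N$ onto that of $\tilde N$. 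With such a lemma your proof goes through; as written, the decisive step is asserted rather than proved.
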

\begin{proof}
  Suppose that both $N$ and $\tilde{N}$ are phylogenetic quasi-binary
  level-1 networks and $\mathscr{M}_N= \mathscr{M}_{\tilde{N}}$.
  Lemma~\ref{lem:orderiff} implies that both $N$ and $\tilde{N}$ satisfy
  (PCC).  We show that $\varphi\coloneqq \varphi_{PCC}\colon V(N) \to
  V(\tilde{N})$ is a graph isomorphism.  By Lemma~\ref{lem:varphi},
  $\varphi$ is a bijection between $V(N)$ and $V(\tilde{N})$ that is the
  identity on the common leaf set $X$.  In the following, we write
  $\tilde{v}\coloneqq \varphi(v)$ for all $v\in V(N)$, and make free use of
  the facts that, by Lemma~\ref{lem:varphi},
  $\CC_{N}(v)=\CC_{\tilde{N}}(\tilde{v})$ and $v$ is a leaf if and only if
  $\tilde{v}$ is a leaf, and moreover $u\prec_{N} v$ if and only if
  $\tilde{u}\prec_{\tilde{N}} \tilde{v}$ for all $u,v\in V(N)$.  In the
  following, we will make frequent use of the fact that both $N$ and
  $\tilde{N}$ are galled trees (cf.\ Obs.~\ref{obs:quasiBinL1->Galled}).

  It remains to show that, for all $u,v\in V(N)$, it holds $(v,u)\in E(N)$
  if and only if $(\tilde{v},\tilde{u})\in E(\tilde{N})$.  To this end,
  suppose $(v,u)\in E(N)$ and, for contradiction, that
  $(\tilde{v},\tilde{u})\notin E(\tilde{N})$.  Since $N$ is acyclic and
  finite, we can assume w.l.o.g.\ that $(v,u)\in E(N)$ is a
  $\preceq_{N}$-minimal arc that is ``missing'' in $\tilde{N}$, i.e., there
  is no arc $(v',u')\in E(N)$ with $u'\prec_{N} u$ and
  $(\tilde{v}',\tilde{u}')\notin E(\tilde{N})$.  We have $u\prec_{N} v$ and
  thus also $\tilde{u}\prec_{\tilde{N}} \tilde{v}$.  The latter together
  with $(\tilde{v},\tilde{u})\notin E(\tilde{N})$ implies that there is
  $\tilde{w}\in V(\tilde{N})$ such that $\tilde{u}\prec_{\tilde{N}}
  \tilde{w} \prec_{\tilde{N}} \tilde{v}$.  This in turn implies $u\prec_{N}
  w \prec_{N} v$, and thus, $(v,u)$ is a shortcut in $N$. In particular,
  $u$ is a hybrid vertex and, since by definition the non-trivial blocks in
  the galled tree $N$ correspond to the undirected cycles, $v=\max B$ for
  some non-trivial block $B$ of $N$ whose unique hybrid vertex is $u=\min
  B$.

  We continue with showing that $\tilde{u}$ is also a hybrid vertex.  Since
  $N$ is quasi-binary, the hybrid vertex $u$ has a unique
  child $c$. By the
  $\preceq_{N}$-minimal choice of $(v,u)\in E(N)$, $\tilde{c}$ must be a
  child of $\tilde{u}$ in $\tilde{N}$. Suppose, for contradiction, that
  $\tilde{u}$ is a tree vertex. Hence, since $\tilde{N}$ is phylogenetic,
  it must have a second child
  $\tilde{c}'\in\child_{\tilde{N}}(\tilde{u}) \setminus \{\tilde{c}\}$. Now
  $\tilde{c}'\prec_{\tilde{N}} \tilde{u}$ implies that $c'\prec_{N} u$.
  Therefore and by the choice of $(v,u)$, there is an arc $(p,c')\in V(N)$
  (where $p\ne u$ since $c$ is the only child of $u$) such that
  $(\tilde{p},\tilde{c}')$ is also an arc in $\tilde{N}$. Thus
  $\tilde{c}'$ is a hybrid vertex with distinct parents $\tilde{u}$ and
  $\tilde{p}$.  Hence, neither of $c'$ and $\tilde{c}'$ is a leaf.
  Therefore, and because $N$ is phylogenetic, $c'$ either has a second
  parent $p'$ (which is also distinct from $u$), or at least two
  children. By the choice of $(v,u)$, the images of these vertices are
  adjacent with $\tilde{c}'$ in both cases. Hence, if $c'$ has a second
  parent $p'$, then $\tilde{c}'$ has three distinct parents $\tilde{u}$,
  $\tilde{p}$, and $\tilde{p}'$.
  If on the other hand $c'$ has at least two children, then the
  hybrid vertex $\tilde{c}'$ also has at least two children.
  Both cases, therefore, contradict
  that $\tilde{N}$ is quasi-binary.
  Hence, $\tilde{u}$ must be a hybrid vertex.

  Let $z$ be the parent of the hybrid vertex $u$ that is not $v$ and
  observe that $z\preceq_{N} w\prec_{N}\max B=v$.  Moreover, $u\prec_{N} z$
  implies $\tilde{u}\prec_{\tilde{N}} \tilde{z}$.  Suppose, for
  contradiction, that $(\tilde{z}, \tilde{u})\notin E(\tilde{N})$.  In this
  case, $\tilde{u}\prec_{\tilde{N}} \tilde{z}$ implies that there is
  $\tilde{z}'$ with $\tilde{u}\prec_{\tilde{N}} \tilde{z}'\prec_{\tilde{N}}
  \tilde{z}$ and thus $u\prec_{N} z'\prec_{N} z$.  Hence, $u$ must have a
  parent $z''$ with $z''\preceq_{N} z'\prec_{N} z(\prec_{N} v)$. Since $N$
  is acyclic, it holds that $z'\neq v$ and thus, $z''\neq z,v$. This
  contradicts the fact that $z$ and $v$ are the only two parents of $u$ in
  $N$. Therefore, $\tilde{z}$ must be one of the two parents of $\tilde{u}$
  in $\tilde{N}$. In particular, it holds
  $\tilde{u}\prec_{\tilde{N}}\tilde{z}\prec_{\tilde{N}}\tilde{v}$.

  Now let $\tilde{q}$ be the parent of the hybrid vertex $\tilde{u}$ that
  is not $\tilde{z}$.  Since we assumed that $(\tilde{v},\tilde{u})\notin
  E(\tilde{N})$, it holds that $\tilde{q}\ne \tilde{v}$ (and thus $q\ne
  z$).  We have $\tilde{u}\prec_{\tilde{N}} \tilde{q}$ and thus $u\prec_{N}
  q$.  Therefore and since $q\notin \{v,z\}=\parent_{N}(u)$, we must have
  $z\prec_{N} q$ or $v \prec_{N} q$.  Since $z\prec_{N}\max B = v$, we
  have, in both of the latter two cases, that $z\prec_{N} q$. Moreover,
  since the non-trivial block $B$ consists of two paths that have only $v$
  and $u$ in common and which do not contain additional hybrid vertices and
  since $v$ and $z$ are the unique two parents of $u$, it holds that
  $z\prec_{N} q$ and (a) $q\prec_{N} v$ or (b) $v\prec_{N} q$. In
  particular, we also have $(\tilde{u}\prec_{\tilde{N}}) \;
  \tilde{z}\prec_{\tilde{N}} \tilde{q}$, which implies that
  $(\tilde{q},\tilde{u})$ is a shortcut.

  \begin{figure}[t]
    \begin{center}
      \includegraphics[width=0.9\textwidth]{./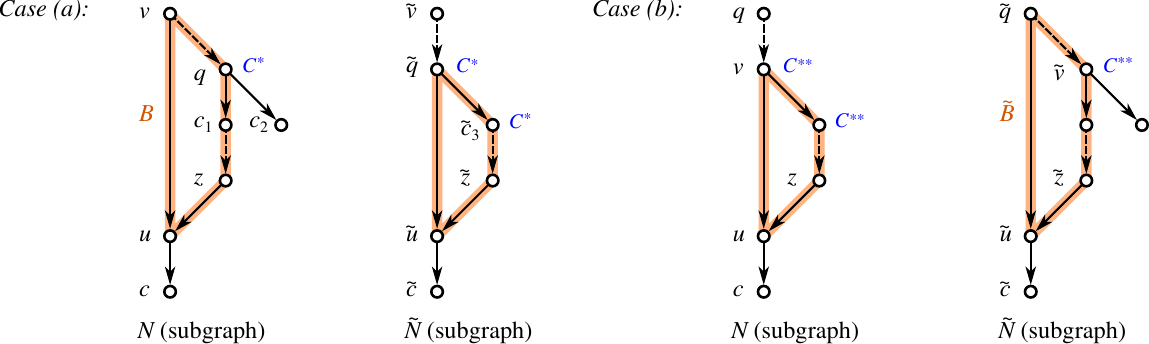}
    \end{center}
    \caption{Illustration of Cases~(a) and~(b) in the proof of
      Thm.~\ref{thm:level-1-alt}. Dashed arrows indicate directed paths
      (possibly consisting of a single vertex).}
    \label{fig:multiset-l1-binary-proof}
  \end{figure}

  \emph{Case~(a):} $q\prec_{N} v$. The situation that we will obtain in the
  following up to the final contradiction is illustrated in
  Fig.~\ref{fig:multiset-l1-binary-proof}.  Together with $z\prec_{N} q$
  and Lemma~\ref{lem:block-prec-sandwich}, $q\prec_{N} v$ implies that $q$
  is also contained in $B$. In particular, $\min B=u \prec_{N} q$ and thus
  $q$ is a tree vertex. Since $N$ is a galled tree, $B$ is exactly the
  undirected cycle that is formed by the shortcut $(v,u)$ and the directed
  path from $v=\max B$ to $u=\min B$ (which contains $q$ as an inner
  vertex).  Hence, $q$ has exactly one child $c_1$ in $B$ and, since $N$ is
  phylogenetic, at least one
  child that is not in $B$. Consider an arbitrary such child $c_2$ that is
  not in $B$.  Since all vertices of $B$ lie on a directed path and
  $z\prec_{N} q$, we must have $z\preceq_{N} c_1 \prec_{N} \max B$. In
  particular, $c_1$ is also a tree vertex since $N$ is level-$1$.  The
  vertices $c_1$ and $c_2$ must be $\preceq_{N}$-incomparable. To see this,
  suppose $c_1\prec_{N} c_2$. Then $(q,c_1)$ is a shortcut, and thus,
  because $N$ is a level-$1$ network whose non-trivial blocks correspond to
  undirected cycles, $N$ contains a non-trivial block $B'$ that is formed
  be the shortcut $(q,c_1)$ and a directed path from $q$ to $c_1$ that
  passes through $c_2$. In particular, $q=\max B'$ and both of $c_1$ and
  $c_2$ are contained in $B'$. Therefore, $B$ and $B'$ share the two
  vertices $q$ and $c_1$ and we have $B=B'$ by
  Obs.~\ref{obs:identical-block}; a contradiction to $c_2\notin V(B)$.
  Similarly, $c_2\prec_{N} c_1$ is not possible and thus $c_1$ and $c_2$
  are $\preceq_{N}$-incomparable.  Suppose, for contradiction, that there
  is a vertex $x\in \CC_{N}(c_1)\cap\CC_{N}(c_2)$.  Then,
  Lemma~\ref{lem:lower-path} implies that $c_1$ and $c_2$ are contained in
  some non-trivial block $B'$ of $N$. Since they are
  $\preceq_{N}$-incomparable, it holds $c_1\ne \max B'$, and thus, the
  unique parent $q$ of the tree vertex $c_1$ must also be contained in
  $B'$. Similar as before, we therefore obtain $B=B'$, a contradiction to
  $c_2\notin V(B)$.  Hence, $\CC_{N}(c_1)$ and $\CC_{N}(c_2)$ are
  disjoint. In particular, we have $\CC_{N}(c_1) \cupdot
  \CC_{N}(c_2)\subseteq \CC_{N}(q)$, and, because both of $\CC_{N}(c_1)$
  and $\CC_{N}(c_2)$ are non-empty, we obtain $\CC_{N}(c_i)\subsetneq
  \CC_{N}(q)$, $i=1,2$. Since $c_2$ was chosen arbitrarily, we have
  $\CC_{N}(c_i)\subsetneq\CC_{N}(q)$ for all children
  $c_i\in\child_{N}(q)$. Together with Lemma~\ref{lem:inclusion} and
  the fact that every vertex $w'$ with $w'\prec_{N} q$ satisfies
  $w'\preceq_{N} c_i$ for some $c_i\in\child_{N}(q)$, these
  inclusions imply that $\CC_{N}(w')\subsetneq \CC_{N}(q)$.  Hence, $q$
  is a $\preceq_{N}$-minimal vertex with cluster $C^*\coloneqq\CC_{N}(q) (
  = \CC_{\tilde{N}}(\tilde{q}))$. Since $(\tilde{q},\tilde{u})$ is a
  shortcut, $\tilde{q} = \max \tilde{B}$ of some non-trivial block
  $\tilde{B}$ in $\tilde{N}$. Since $\tilde{N}$ is quasi-binary,
  $\tilde{q}$ has precisely two children $\tilde{u}$ and $\tilde{c}_3$.
  Since $(\tilde{q},\tilde{u})$ is a shortcut and the non-trivial blocks of
  $N$ correspond to undirected cycles, we have $\tilde{u} \prec_{\tilde{N}}
  \tilde{c}_3$ and thus $\CC_{\tilde{N}}(\tilde{u}) \subseteq
  \CC_{\tilde{N}}(\tilde{c}_3)$ by Lemma~\ref{lem:inclusion}. Therefore, we
  obtain $\CC_{\tilde{N}}(\tilde{q}) = \CC_{\tilde{N}}(\tilde{u}) \cup
  \CC_{\tilde{N}}(\tilde{c}_3)= \CC_{\tilde{N}}(\tilde{c}_3)$. Together
  with $\tilde{c}_3 \prec_{N} \tilde{q}$, this implies that $\tilde{q}$ is
  not a $\preceq_{\tilde{N}}$-minimal vertex with cluster $C^*$ in
  $\tilde{N}$ (as opposed to $q$ in $N$); a contradiction to the
  construction of $\varphi$.  In summary, therefore, Case~(a) cannot occur.

  \emph{Case~(b):} $v\prec_{N} q$. This implies $\tilde{v}\prec_{\tilde{N}}
  \tilde{q}$. Since $(\tilde{q},\tilde{u})$ is a shortcut, it holds
  $\tilde{q}= \max \tilde{B}$ and $\tilde{u}= \min \tilde{B}$ for some
  non-trivial block $\tilde{B}$ in
  $\tilde{N}$. Lemma~\ref{lem:block-prec-sandwich}, together with
  $\tilde{v}\prec_{\tilde{N}} \tilde{q}$ and $\tilde{u}\prec_{\tilde{N}}
  \tilde{v}$, implies that $\tilde{v}$ is also contained in $\tilde{B}$.
  We can now apply similar arguments as in Case~(a), where the roles of $N$
  and $\tilde{N}$ are interchanged, to conclude that Case~(b) is also
  impossible. The situation up to the final contradiction is again
  illustrated in Fig.~\ref{fig:multiset-l1-binary-proof}.

  In summary, therefore, $(\tilde{v},\tilde{u})\in E(\tilde{N})$ must hold.
  By analogous arguments, $(\tilde{v},\tilde{u})\in E(\tilde{N})$ implies
  $(v,u)\in E(N)$. Hence, $\varphi$ is a graph isomorphism that is the
  identity on $X$ and thus $N\simeq \tilde{N}$. Therefore, $N$ is the
  unique phylogenetic quasi-binary level-$1$ network
  whose cluster multiset is $\mathscr{M}_N$.
\end{proof}

Our colleagues Simone Linz and Kristina Wicke drew our attention to an
alternative proof for Theorem \ref{thm:level-1-alt} that proceeds by
induction on the size of the leaf set of $N$.  It utilizes the concepts of
cherries and reticulated cherries that have been used extensively in the
literature, see e.g.~\cite{Bordewich:16,Murakami:19,semple2021trinets}.  We
opted for a non-inductive proof that remains closer to the construction
utilized throughout this contribution.

\begin{corollary}\label{cor:level-1-alt}
  Let $N$ be a binary level-1 network. Then $N$ is the unique binary
  level-1 network whose cluster multiset is $\mathscr{M}_N$.
\end{corollary}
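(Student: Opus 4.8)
The plan is to obtain the corollary as an immediate specialization of Theorem~\ref{thm:level-1-alt}, by observing that binary level-1 networks form a subclass of phylogenetic quasi-binary level-1 networks. First I would recall from Def.~\ref{def:binary} (together with the remark that binary networks are always phylogenetic and separated) that every binary network is phylogenetic, and from the remark following Def.~\ref{def:quasi-bin} that every binary network is quasi-binary. Hence a binary level-1 network $N$ is, in particular, a phylogenetic quasi-binary level-1 network.

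Next I would apply Theorem~\ref{thm:level-1-alt} directly to $N$. This yields that $N$ is the unique phylogenetic quasi-binary level-1 network whose cluster multiset equals $\mathscr{M}_N$, i.e.\ any phylogenetic quasi-binary level-1 network $N'$ with $\mathscr{M}_{N'}=\mathscr{M}_N$ satisfies $N'\simeq N$.

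To transfer this uniqueness statement to the smaller class of binary level-1 networks, I would take an arbitrary binary level-1 network $\tilde N$ with $\mathscr{M}_{\tilde N}=\mathscr{M}_N$. By the first step, $\tilde N$ is again a phylogenetic quasi-binary level-1 network, so Theorem~\ref{thm:level-1-alt} forces $\tilde N\simeq N$. Since $\tilde N$ was an arbitrary binary level-1 network with the same cluster multiset as $N$, this shows that $N$ is encoded by $\mathscr{M}_N$ within the class of binary level-1 networks, which is exactly the claim.

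I do not expect any genuine obstacle here: the only point that needs verification is the class inclusion \emph{binary} $\subseteq$ \emph{phylogenetic quasi-binary}, and this is immediate from the respective definitions. The uniqueness conclusion then comes for free, because uniqueness of $N$ within a class is automatically inherited by any subclass that still contains $N$.
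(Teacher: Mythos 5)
Your proposal is correct and matches the paper's intended derivation: the corollary is stated as an immediate consequence of Theorem~\ref{thm:level-1-alt}, using precisely the facts that binary networks are phylogenetic (remark after Def.~\ref{def:binary}) and quasi-binary (remark after Def.~\ref{def:quasi-bin}), so that uniqueness within the larger class of phylogenetic quasi-binary level-1 networks is inherited by the subclass of binary level-1 networks. Nothing further is needed.
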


Again, we can observe that neither of the properties binary or level-1 in
Cor.~\ref{cor:level-1-alt} can be dropped:
Fig.~\ref{fig:multiset-not-semireg}(A) shows two level-1 networks (where
one is not binary) and Fig.~\ref{fig:level2-multiset}(A) shows two binary
level-2 networks (that are not level-1) where, in both examples, the
respective networks have the same multisets of clusters but are not
isomorphic.

\section{Summary}\label{sec:sum}

\begin{table}[h!]
  \begin{center}
    \includegraphics[width=1.\textwidth]{./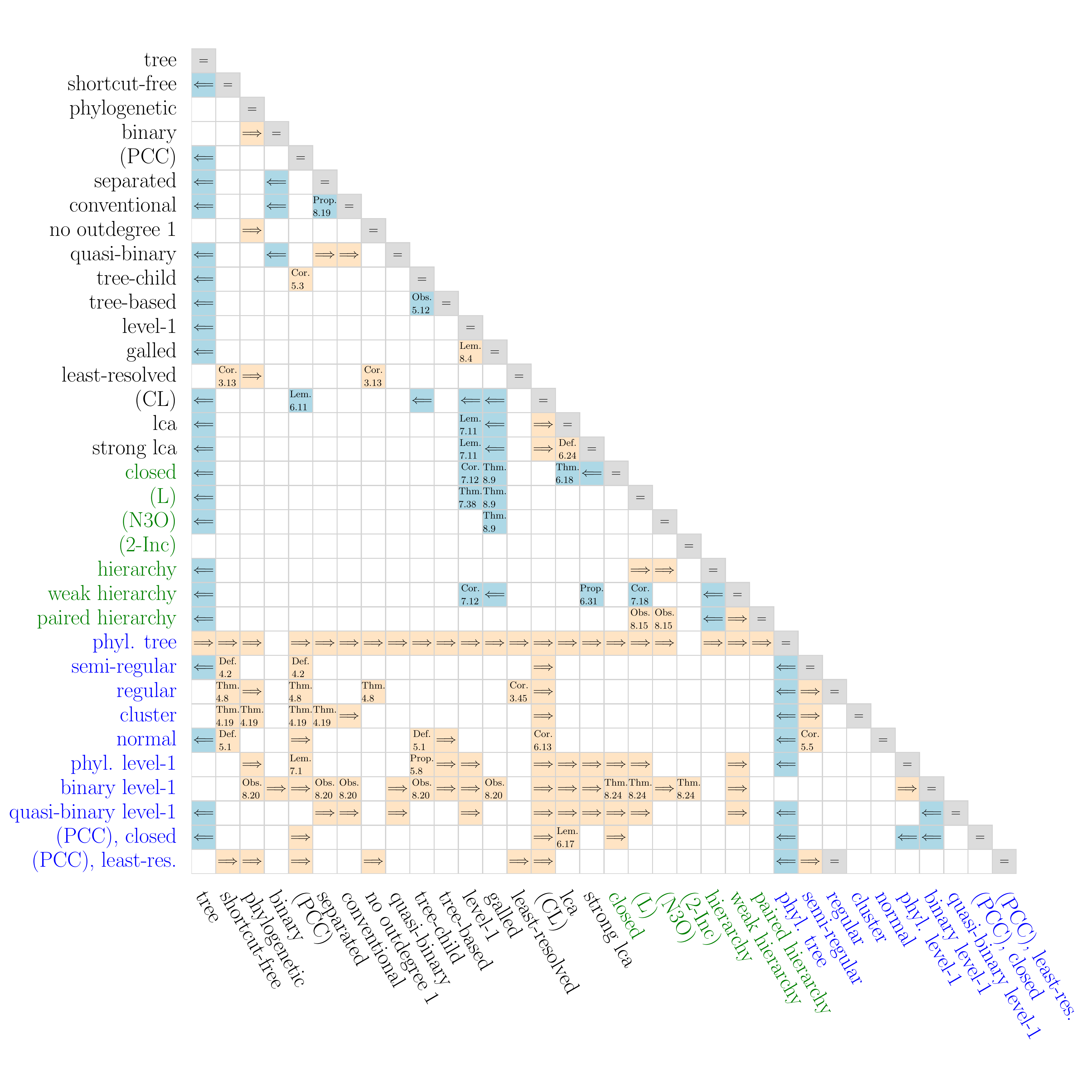}
  \end{center}
  \caption{Summary of main results: Mutual dependencies between the types
    of networks and clustering systems considered in this paper.
    The properties highlighted in black and green refer to
    properties of networks and clustering systems, respectively. Properties
    in blue text are combinations of ``basic'' properties of networks. An
    entry at position $(i,j)$ in the matrix is colored orange, turquoise,
    and white, if the property at pos.\ $i$ implies $j$, is implied by $j$,
    or does have a non-empty overlap with $j$, respectively.  Gray colored
    entries refer to equality. References within the matrix indicate the
    result where the respective dependencies are shown.  All other colors
    in the matrix are either trivial observations or were derived by
    computing the transitive closure over the proven implications.
    }
  \label{tab:summary-table}
\end{table}

In this contribution, we investigated the mutual dependencies between the
different concept of networks in the literature and their connection to
clustering systems. Most of our findings are summarized in
Table~\ref{tab:summary-table}. As one of the main results, level-1 networks
as well as some of their subclasses, such as galled trees or binary
phylogenetic level-1 networks, are characterized by the structure of their
clustering system $\mathscr{C}$.  Moreover, we showed that semi-regular
networks and phylogenetic quasi-binary level-1 (and thus, binary level-1
networks) are uniquely determined by their multiset of clusters.
Furthermore, regular and cluster networks (and their subclasses as
e.g.\ phylogenetic trees) are uniquely determined by their clustering
system.  We provided a plethora of examples that show, however, that most
classes of networks cannot be encoded in such way if there are not
sufficiently many extra restriction placed on such networks. In addition,
we showed that it is possible to determine in polynomial time whether a
clustering system is compatible with a level-1 network and to construct
such a network in the affirmative case.

It remains an open question as whether general level-$k$ networks can be
characterized by their clustering systems.  Moreover, under which
conditions is a clustering system compatible with other specified networks
and what is the computational complexity to determine them? While we have
shown that some types of networks can be encoded by the multisets of
clusters, a characterization of multisets that encode the underlying
networks as well as reconstruction algorithms are part of future research.

From the point of view of clustering systems, phylogenetic networks suggest
properties that may also be of relevance in practical data analysis beyond
applications in phylogenetics. A clustering systems between hierarchies and
weak hierarchies \cite{Bertrand:14}, (L)-clustering systems appear as an
attractive alternative e.g.\ to pyramidal clustering \cite{Bertrand:13} for
data that are not naturally linearly ordered.

\section*{Acknowledgments}

We are grateful to Simone Linz and Kristina Wicke for quite interesting
discussions about networks that, eventually, led to the main idea of
Theorem \ref{thm:level-1-alt}. This work was funded in part by the
  \emph{Deutsche Forschungsgemeinschaft} (DFG), proj.no.\ MI439/14-2.

\clearpage
\appendix

\section{Additional Results and Proofs}

\subsection{Expansion, Contraction, and Blocks}
\label{sec:appx-blocks}

We first provide the proof of
\begin{repeatedlemma}{\textnormal{\textbf{\ref{lem:contraction}}}}
  Let $N$ be a network and $(u,w) \in E(N)$ be an arc that is not a
  shortcut.  Then, $\contract(u,w)$ applied on $N$ results in a network
  $N'$ with leaf set $X$ or $X\setminus\{w\}$ and
  $V(N')=V(N)\setminus\{u\}$.  Moreover, for all $v,v'\in V(N')$,
  \begin{enumerate}
    \item $v\preceq_N v'$ implies $v \preceq_{N'} v'$, and
    \item $v\preceq_{N'} v'$ implies
    (i) $v \preceq_{N} v'$ or
    (ii) $w\preceq_N v'$ and $v\preceq_N w'$ for some
    $w'\in\child_N(u)\setminus \{w\}$ that is $\preceq_{N}$-incomparable
    with $w$.
  \end{enumerate}
  In particular, $v\prec_{N'} v'$ always implies $v\prec_{N} v'$ or $v$ and
  $v'$ are $\preceq_{N}$-incomparable.
\end{repeatedlemma}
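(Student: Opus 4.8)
The plan is to analyze the $\contract(u,w)$ operation by carefully tracking which arcs and reachability relations are created, destroyed, or preserved. Recall that $\contract(u,w)$ deletes the vertex $u$, redirects incoming arcs $(p,u)$ of $u$ (for $p\notin\parent_N(w)$) to become $(p,w)$, and redirects outgoing arcs $(u,c)$ of $u$ (for $c\in\child_N(u)\setminus\child_N(w)$, $c\neq w$) to become $(w,c)$. Since $(u,w)$ is not a shortcut, $w$ is the $\preceq_N$-minimal child of $u$ in the relevant sense, so by Obs.~\ref{obs:shortcut} the siblings $c\in\child_N(u)\setminus\{w\}$ are exactly the vertices $\preceq_N$-incomparable with $w$; this is precisely the set that provides the new out-neighbors of $w$. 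First I would verify that $N'$ is a network with $V(N')=V(N)\setminus\{u\}$ and that its leaf set is $X$ or $X\setminus\{w\}$ (the latter only when $w$ gains out-neighbors it did not have, i.e.\ when $w$ was a leaf that acquires the redirected children of $u$); acyclicity can be argued much as in the proof of Lemma~\ref{lem:expand} by showing any new directed cycle would force a directed cycle already present in $N$.

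For claim~(1), suppose $v\preceq_N v'$ and take a $v'v$-path $P$ in $N$. I would show $P$ induces a $v'v$-path in $N'$: any arc of $P$ not incident with $u$ survives unchanged, and if $P$ passes through $u$ via arcs $(p,u)$ and $(u,c)$, then after contraction these become arcs (or short paths) through $w$ in $N'$. Since $u$ has out-neighbor $w$, the portion of $P$ entering and leaving $u$ can be rerouted through $w$, yielding $v\preceq_{N'} v'$. The only subtlety is when $c=w$ itself, which is handled directly because the arc $(p,u)$ becomes $(p,w)$.

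For claim~(2), suppose $v\preceq_{N'} v'$ with a $v'v$-path $P'$ in $N'$. If $P'$ uses only arcs that already existed in $N$, then $v\preceq_N v'$, giving case~(i). Otherwise $P'$ uses at least one newly created arc. A new arc is either of the form $(p,w)$ (from a redirected in-arc of $u$) or $(w,c)$ (from a redirected out-arc of $u$, where $c\in\child_N(u)\setminus\{w\}$ is $\preceq_N$-incomparable with $w$). Tracing $P'$, the key observation is that the only way to obtain something not already implied by $\preceq_N$ is to enter $w$ along an old arc (or via redirected in-arcs, which correspond in $N$ to paths to $u$ hence down to $w$) and then leave $w$ along a newly created out-arc $(w,c)$. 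This forces $w\preceq_N v'$ (since the prefix of $P'$ from $v'$ reaches $w$, and such reachability corresponds to reachability of $w$ or $u$ in $N$, and $u\preceq_N$ nothing new below $w$) and $v\preceq_N c\preceq_N w'$ for the sibling $w'=c$, which is exactly case~(ii).

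The main obstacle will be the bookkeeping in case~(2): because redirected in-arcs of $u$ land on $w$, a path in $N'$ reaching $w$ may correspond either to a path reaching $w$ in $N$ or to a path reaching $u$ in $N$, and one must argue these collapse correctly so that leaving $w$ via $(w,c)$ reproduces precisely the stated alternative. I expect to handle this by decomposing $P'$ at its \emph{first} use of a new arc and arguing that the prefix certifies $w\preceq_N v'$ while the suffix (which then lies in $N$) certifies $v\preceq_N c$. Finally, the concluding ``in particular'' statement follows immediately: if $v\prec_{N'} v'$ then by~(2) either $v\prec_N v'$, or case~(ii) holds, in which case $v\preceq_N w'$ and $w\preceq_N v'$ with $w'$ and $w$ incomparable forces $v$ and $v'$ to be $\preceq_N$-incomparable (a comparability would, via $w'$ and $w$, contradict their incomparability).
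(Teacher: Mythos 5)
There is a genuine gap, and it sits at the heart of your derivation of alternative~(ii) in claim~(2). You assert that, because $(u,w)$ is not a shortcut, Obs.~\ref{obs:shortcut} makes the siblings $c\in\child_N(u)\setminus\{w\}$ \emph{exactly} the vertices $\preceq_N$-incomparable with $w$. This is false: Obs.~\ref{obs:shortcut} characterizes DAGs that are shortcut-\emph{free}, whereas the lemma only assumes that the single arc $(u,w)$ is not a shortcut. That hypothesis rules out siblings strictly \emph{above} $w$ (i.e.\ $w\prec_N c$), but a sibling strictly \emph{below} $w$ is perfectly possible --- then $(u,c)$ is a shortcut, which is not excluded. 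Concretely: if $u$ has children $w$ and $c$ with a path $w\to x\to c$ in $N$ and $c\notin\child_N(w)$, then after $\contract(u,w)$ the new arc $(w,c)$ points to a vertex with $c\prec_N w$. In that situation your conclusion ``$v\preceq_N c$ and $w\preceq_N v'$, which is exactly case~(ii) with $w'=c$'' is wrong, because case~(ii) demands that $w'$ be $\preceq_N$-incomparable with $w$. What actually happens is that $v\preceq_N c\prec_N w\preceq_N v'$, so alternative~(i) holds instead --- a fallback your argument never makes. The paper's proof does precisely this sub-case split: in its case~(a$'$) it notes that $w\prec_N v_{i+1}$ is impossible (that \emph{would} make $(u,w)$ a shortcut), and then concludes ``either $v\preceq_N v_{i+1}\prec_N w\preceq_N v'$ \emph{or} $w$ and $v_{i+1}$ are $\preceq_N$-incomparable,'' which is exactly the dichotomy (i)/(ii).

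Two smaller points. First, your plan to decompose $P'$ at its \emph{first} new arc and claim ``the suffix then lies in $N$'' fails when $P'$ uses two new arcs, namely a redirected in-arc $(p,w)$ immediately followed by a redirected out-arc $(w,c)$; this case must be treated separately (it collapses to alternative~(i), since $v\preceq_N c\prec_N u\prec_N p\preceq_N v'$ in $N$), and it is the paper's case~(c$'$). Second, in the ``in particular'' statement, case~(ii) does not ``force $v$ and $v'$ to be $\preceq_N$-incomparable'' --- (i) and (ii) are not mutually exclusive; it only excludes $v'\preceq_N v$, which is all one needs. These two are repairable slips, but the incomparability claim about the siblings of $w$ is a substantive error that invalidates the stated proof of claim~(2) as written.
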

\begin{proof}
  Assume that $(u,w)$ is not a shortcut in $N$ and recall that
  $\contract(u,w)$ consists in replacing arcs $(v,u)$ by $(v,w)$ for all
  $v\in\parent_N (u)$, replacing arcs $(u,v)$ by $(w,v)$ for all
  $v\in\child_N (u)\setminus \{w\}$, and deleting $(u,w)$ and $u$.  Observe
  that only the arcs incident with $u$ are removed and all inserted arcs
  are incident with $w$.  While $u$ does not exist in $N'$ anymore, both
  the in- and out-neighborhood of $w$ may change in such a way that $w$ may
  get additional in/out-neighbors of $u$.

  We show first that $N'$ has a single vertex with indegree $0$.  Suppose
  first $u\ne \rho_N$, and thus $\rho_N\in V(N')$. Since clearly
  $\rho_{N}\notin \child_N(u)$, we did not insert the arc $(w, \rho_N)$.
  Together with the fact that all inserted arcs are incident with $w$, this
  implies that $\rho_N$ still has indegree $0$ in $N'$. Since $u\ne
  \rho_N$, vertex $u$ has at least one in-neighbor, which becomes an
  in-neighbor of $w$ if it was not already an in-neighbor of $w$ in $N$.
  Now suppose, for contradiction, that there is a vertex $v\in
  V(N')\setminus\{\rho_N, w\}$ with $\indeg_{N'}(v)=0$.  Since $v\ne
  \rho_N$, there must be an arc $(u',v)$ in $N$ which is no longer
  contained in $N'$. By construction and since $v\ne u$, we must have
  $u'=u$. But then $(u,v)=(u',v)\in E(N)$ implies that $(w,v)$ is an arc in
  $N'$, contradicting that $\indeg_{N'}(v)=0$.  Now consider the case $u=
  \rho_N$.  Then $u$ is the unique in-neighbor of $w$ in $N$. To see this,
  assume, for contradiction, that there is $v\in\parent_N(w)\setminus\{u\}$
  and thus $w\prec_N v$.  Since $u=\rho_N$ is the unique root, there is a
  $uv$-path. Since $v\ne u$, this path passes through some child $w'$ of
  $u$. Since $w\prec_N v \preceq_{N} w'$, we must have $w\ne w'$ and thus
  $w\prec_N w'\prec_N u$, i.e., $(u,w)$ is a shortcut; a contradiction.
  Therefore, $u$ is the unique in-neighbor of $w$.  Moreover, since
  $\parent_N(u)=\emptyset$, $w$ does not get any new in-neighbors.  After
  deletion of $u$, $w$ has outdegree $0$ in $N'$.  Re-using the arguments
  from the case $u\ne \rho_N$, there is no vertex $v\in
  V(N')\setminus\{\rho_N, w\}$ with $\indeg_{N'}(v)=0$.  Hence, $N'$ is a
  directed graph that has a unique vertex with indegree $0$, i.e., a
  unique root $\rho_{N'}$, in both cases.

  We continue by showing that $N'$ is a DAG.  Assume, for contradiction,
  that $N'$ contains a directed cycle $K$ comprising the vertices $v_1,
  v_2, \dots, v_k$, $k\ge 2$, in this order, i.e., $(v_i,v_{i+1})$, $1\leq
  i\leq k-1$ and $(v_k,v_1)$ are arcs in $N'$.  If all arcs along $K$ are
  arcs of $N$, then $K$ is a directed cycle in $N$; a contradiction. Hence,
  at least one arc $e$ in $K$ cannot be contained in $N$. By construction,
  $e$ must be of the form $(v,w)$ with $v\in
  \parent_N(u)\setminus\parent_N(w)$ or of the form $(w,v)$ with $v\in
  \child_N(u)\setminus\child_N(w)$.  Since $w$ can appear in at most two
  arcs of this form, all other arcs in $K$ must also be arcs in $N$.
  Assume w.l.o.g.\ that $v_1=w$. Then the following cases have to be
  considered:
  \begin{description}[noitemsep,nolistsep]
  \item[{(a)}] $(w,v_2)=(v_1, v_2)\notin E(N)$ but all other arcs are
    contained in
    $N$,
  \item[{(b)}] $(v_k,w)=(v_k, v_1)\notin E(N)$ but all other arcs are
    contained in
    $N$, and
  \item[{(c)}] exactly the arcs $(w,v_2)=(v_1, v_2)$ and $(v_k,w)=(v_k,
    v_1)$ are not contained in $N$.
  \end{description}
  In case~(a), we must have $(u,v_2)\in V(N)$ by construction, i.e,
  $v_2\in\child_N(u)$. Since all arcs except $(v_1,v_2)$ exist in $N$ and
  $w=v_1\ne v_2$, there is a $v_2 w$-path in $N$, i.e., $w\prec_N
  v_2$. This together with $w, v_2\in\child_N(u)$ implies that $(u,w)$ is a
  shortcut; a contradiction.  In case~(b), we must have $(v_k,u)\in V(N)$
  by construction.  Then $N$ contains a directed cycle along
  $v_0\coloneqq u, v_1, v_2, \dots, v_k$ in this order, i.e.,
  $(v_i,v_{i+1})$, $0\leq i\leq k-1$ and $(v_k,v_0)=(v,u)$ are arcs in
  $N$; a contradiction.  In case~(c), we must have
  $(v_k,u),(u,v_2)\in V(N)$ by construction.  Replacing $v_1$ and its
  incident arcs in the cycle $K$ by vertex $u$ (which is not already in
  $V(K)\subseteq V(N')$) and arcs $(v_k,u)$ and $(u,v_2)$ thus yields a
  directed cycle $K'$ in $N$; a contradiction.  In summary, neither of the
  three cases is possible and thus $N'$ is acyclic. Since $N'$ has a unique
  root and acyclicity is preserved as well, $N'$ is a rooted network.

  Since $\outdeg_{N}(u)>1$, we do not delete any leaf of $N$, i.e.,
  $X\subseteq V(N')$. Moreover, the only vertices whose out-neighborhood
  changes are $w$ and the vertices in $\parent_N(u)$.  Since all vertices
  in $\parent_N(u)$ have $w$ as out-neighbor in $N'$, they do not become
  leaves. Hence, the leaf set of $N'$ is either $X$ or $X\setminus \{w\}$.

  We continue with showing that $v\preceq_N v'$ implies $v \preceq_{N'} v'$
  for all $v,v'\in V(N')$. Thus assume that $v\preceq_N v'$ i.e., there is
  a $v'v$-path $P\coloneqq (v'=v_1, \dots, v=v_k)$ in $N$.  If $P$ does not
  contain $u$, then $P$ is also a $v'v$-path in $N'$ since only arcs that
  are incident with $u$ are removed.  Now suppose $P$ contains $u$. Then
  clearly $u=v_i$ for some $1<i<k$.  Since $u$ appears in $P$ at most once
  and only arcs incident with $u$ were removed, all arcs in $P$ except
  $(v_{i-1},u)$ and $(u,v_{i+1})$ are also arcs in $N'$.  Observe that
  $(v_{i-1},w)\in E(N')$ holds by construction. If $v_{i+1}=w$, then
  $(v_{i-1}, v_{i+1})$ is an arc in $N'$ and thus
  $\tilde{P}=(v_1,\dots,v_{i-1}, v_{i+1}, \dots, v_k)$ is a $v'v$-path in
  $N'$. Otherwise, we have $v_{i+1}\in\child_N(u)\setminus \{w\}$ and thus
  $(w,v_{i+1})$ is an arc in $N'$. The vertex $w$ is not contained in
  $P$. To see this, observe first that $w\ne v_{i+1}$ and
  $w\in\child_N(u=v_i)$ implies $w\prec_N v_j$ for $1\le j\le i$.  If
  $w=v_j$ for some $i+1<j\le k$, then $w\prec_N v_{i+1}$ which implies that
  $(u,w)$ is a shortcut; a contradiction.  Taken together, these arguments
  imply that $\tilde{P}=(v_1,\dots,v_{i-1}, w, v_{i+1}, \dots, v_k)$ is a
  $v'v$-path in $N'$. Hence, we have $v\preceq_{N'} v'$ is all cases.

  Conversely, suppose $v \preceq_{N'} v'$, i.e., there is a $v'v$-path
  $P'\coloneqq (v'=v_1, \dots, v=v_k)$ in $N'$.  If all arcs in $P'$ are
  arcs in $N$, then $P'$ is a $v'v$-path in $N$ and thus $v \preceq_{N}
  v'$. Now suppose that $P'$ contains at least one arc that is not in
	$N$. By construction, any such arc must be incident with $w$ and thus
  $P'$ contains at most two arcs that are not in $N$. These are then
  consecutive in $P'$. We therefore have to consider three cases:
  \begin{description}[noitemsep,nolistsep]
  \item[{(a')}] $(w,v_{i+1})=(v_{i}, v_{i+1})\notin E(N)$ for some $1\le
    i<k$ but all other arcs in $P'$ are contained in $N$,
  \item[{(b')}] $(v_{i-1},w)=(v_{i-1}, v_{i})\notin E(N)$ for some $1< i\le
    k$ but all other arcs in $P'$ are contained in $N$, and
  \item[{(c')}] exactly the arcs $(v_{i-1},w)=(v_{i-1}, v_{i})$ and
    $(w,v_{i+1})=(v_{i}, v_{i+1})$ with $1< i< k$ in $P'$ are not
    contained in $N$.
  \end{description}
  In case~(a'), we have by construction that $(u,v_{i+1})\in E(N)$, i.e.,
  $v_{i+1}\in \child_N(u)$. Since all other arcs of $P'$ are arcs in $N$,
  the subpath of $P'$ from $v'$ to $w$ is a $v'w$-path in $N$ and thus
  $w\preceq_{N} v'$. Similarly, the subpath from $v_{i+1} (\ne w)$ to $v$
  is a $v_{i+1} v$-path in $N$ implying that $v\preceq_N v_{i+1}$ with
  $v_{i+1}\in\child_N(u)\setminus \{w\}$.  Clearly $w\prec_N v_{i+1}$ is
  not possible since otherwise $(u,w)$ would be a shortcut in $N$. Hence,
  we have either $v\preceq_N v_{i+1} \prec_N w \preceq_N v'$ or $w$ and
  $v_{i+1}$ are $\preceq_{N}$-incomparable children of $u$.
  In case~(b'), we have by construction that $(v_{i-1},u)\in E(N)$, i.e.,
  $v_{i-1}\in \parent_N(u)$.  Since all arcs of $P'$ except
  $(v_{i-1},v_{i})$ are also in $N$, we have $v\preceq_{N} v_{i}=w\prec_N u
  \prec_N v_{i-1} \preceq_N v'$.
  In case~(c'), we have by construction that $(v_{i-1},u), (u, v_{i+1})\in
  E(N)$. Since all arcs of $P'$ except $(v_{i-1}, v_{i})$ and $(v_{i},
  v_{i+1})$ are also arcs in $N$, we obtain $v\preceq_{N} v_{i+1}\prec_N u
  \prec_N v_{i-1} \preceq_N v'$.  In summary, in all cases it holds that at
  least one of (i) $v\preceq_N v'$ or (ii) $w\preceq_N v'$ and $v\preceq_N
  w'$ for some $w'\in\child_N(u)\setminus \{w\}$ that is
  $\preceq_{N}$-incomparable with $w$ is true.

  For the final statement, assume that $v\prec_{N'} v'$.  If $v'\preceq_{N}
  v$ would, then (1) implies $v'\preceq_{N'} v$; a
  contradiction. Consequently, we have $v\prec_{N} v'$ or $v$ and $v'$ are
  $\preceq_{N}$-incomparable.
\end{proof}

In the following, we show that the operation $\expand(w)$ does not
introduce shortcuts and that neither $\expand(w)$ nor $\contract(w',w)$
increases the level of a network.  Besides providing additional results, we
give here the proofs for statements that were omitted in the main text.

\begin{lemma}
  \label{lem:expand-shortcuts}
  Let $N_1$ be a network, $N_2$ be the network obtained from $N_1$ by
  applying $\expand(w_1)$ for some $w_1\in V(N_1)$, and $w_2$ be the unique
  vertex in $V(N_2)\setminus V(N_1)$.  Then, for every shortcut $(u,v) \in
  E(N_i)$, it holds either (i) $v\ne w_i$ and $(u,v)$ is a shortcut in
  $N_j$ or (ii) $v= w_i$ and $(u,w_j)$ is a shortcut in $N_j$ such that
  $i,j\in \{1,2\}$ are distinct.
\end{lemma}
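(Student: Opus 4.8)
The plan is to make the arc correspondence between $N_1$ and $N_2$ explicit and then transport the defining condition of a shortcut across it. First I would record what $\expand(w_1)$ does: every arc of $N_1$ survives except the in-arcs $(p,w_1)$, $p\in\parent_{N_1}(w_1)$, each of which is rerouted to $(p,w_2)$, and the single arc $(w_2,w_1)$ is added. Hence $w_2$ is the unique parent of $w_1$ in $N_2$, $\parent_{N_2}(w_2)=\parent_{N_1}(w_1)$, and $\child_{N_2}(w_1)=\child_{N_1}(w_1)$. The non-surviving arcs are thus in bijection via $(p,w_1)\leftrightarrow(p,w_2)$, and the only genuinely new arc is $(w_2,w_1)$. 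By Lemma~\ref{lem:expand}, $N_1$ and $N_2$ are $(N_2,N_1)$-ancestor-preserving, i.e.\ $x\preceq_{N_1}y\iff x\preceq_{N_2}y$ for all $x,y\in V(N_1)$; this is the workhorse.

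The technical heart is two equivalences that move the new vertex $w_2$ in and out of ancestry statements. Since $w_1$ is the unique child of $w_2$, for $x\in V(N_1)$ one has $x\prec_{N_2}w_2\iff x\preceq_{N_2}w_1\iff x\preceq_{N_1}w_1$; and since the in-neighbours of $w_2$ are exactly $\parent_{N_1}(w_1)$, for $y\in V(N_1)$ one has $w_2\prec_{N_2}y\iff w_1\prec_{N_1}y$. I would also observe immediately that $(w_2,w_1)$ is never a shortcut, since $w_1\prec_{N_2}v''\prec_{N_2}w_2$ would force $v''\preceq_{N_2}w_1$, impossible in a DAG. This removes the one arc not covered by the correspondence.

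With these tools the four cases are short substitutions into the characterisation ``$(u,v)$ is a shortcut iff some $v'$ satisfies $v\prec v'\prec u$''. For $i=1$: if $v\neq w_1$ the shortcut $(u,v)$ is a surviving arc, its witness $v'\in V(N_1)$ transfers by ancestor-preservation, so $(u,v)$ is a shortcut of $N_2$; if $v=w_1$ then $(u,w_2)\in E(N_2)$ and the same $v'$ satisfies $w_2\prec_{N_2}v'\prec_{N_2}u$ by the second equivalence. For $i=2$: since $(w_2,w_1)$ is not a shortcut and the rerouted arcs have head $w_2$, a shortcut with $v\neq w_2$ is a surviving arc with $v\neq w_1$; its witness is either in $V(N_1)$ (and transfers) or is $w_2$, in which case the two equivalences give $v\preceq_{N_1}w_1\prec_{N_1}u$, and $v\neq w_1$ upgrades this to the strict chain that exhibits $w_1$ as a witness in $N_1$. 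A shortcut $(u,v)=(p,w_2)$ gives $(u,w_1)\in E(N_1)$, and its witness $v''$ must lie in $V(N_1)$ (it is neither $w_2$ nor $w_1$, as $w_2\prec_{N_2}v''$), so the equivalences turn $w_2\prec_{N_2}v''\prec_{N_2}u$ into $w_1\prec_{N_1}v''\prec_{N_1}u$.

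The main obstacle is purely the bookkeeping around the two special vertices: the only steps that are not a direct application of ancestor-preservation are those where a shortcut witness happens to be the new vertex $w_2$, or where an endpoint is $w_1$ or $w_2$. The decisive subtlety is the surviving-arc case of direction $i=2$ with witness $w_2$: there one only obtains $v\preceq_{N_1}w_1$, and it is exactly the hypothesis $v\neq w_1$ (forced because the only $N_2$-arc into $w_1$ is the non-shortcut $(w_2,w_1)$) that makes this inclusion strict and thereby produces a legitimate witness. Once the two equivalences are in place, every remaining case is a one-line substitution.
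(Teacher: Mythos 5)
Your proof is correct and takes essentially the same route as the paper's: both arguments rest on the $(N_2,N_1)$-ancestor-preservation of Lemma~\ref{lem:expand} together with the explicit arc correspondence created by $\expand(w_1)$, and both resolve the same case distinctions by transporting shortcut witnesses across the expansion (your strictness step via $v\ne w_1$ is exactly the paper's indegree argument that the head of a shortcut cannot be $w_1$). The only difference is organizational: you front-load the two ancestry equivalences for $w_2$ and reuse them uniformly, e.g.\ transporting the original witness $v'$ in the case $i=1$, $v=w_1$, where the paper instead builds a new witness from a second parent of $w_1$ — a tidier presentation, but not a substantively different argument.
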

\begin{proof}
  In the following, we put $N=N_1$, $N'=N_2$, $w=w_1$ and $w'=w_2$.  Recall
  that by Lemma~\ref{lem:expand}, $N$ and $N'$ are
  $(N',N)$-ancestor-preserving.

  Suppose first that $(u,v) \in E(N')$ is a shortcut in
  $N'$, i.e., there is a $v'\in\child_{N'}(u)\setminus\{v\}$ such that
  $v\prec_{N'} v'$.  The fact that $N'$ is acyclic, $v\prec_{N'}
  v'\prec_{N'} u$ and $(u,v) \in E(N')$ together imply that
  $\indeg_{N'}(v)\ge 2$.  By construction, it holds
  $\parent_{N'}(w)=\{w'\}$ and $\child_{N'}(w')=\{w\}$ which yield $v\ne w$
  and $u\ne w'$, respectively.

  (i) Suppose first that $v\ne w'$. Then $(u,v)$ is also an arc in $N$
  since moreover $u\ne w'$ and all newly inserted arcs are incident with
  $w'$.  Similarly, if in addition $v'\ne w'$, then $(u,v')\in E(N)$ and
  moreover $v\prec_{N'} v'$ implies $v\prec_{N} v'$.  Hence, $(u,v)$ is a
  shortcut in $N$.  If on the other hand $v'=w'$, then $w\in\child_N(u)$
  and $v\prec_{N'} v'=w'$ implies that there is a $w'v$-path in $N'$. Since
  $w$ is the unique child of $w'$ and $v\ne w'$, this path must pass
  through $w$ and thus $v\prec_{N'} w$ which together with $v,w\in V(N)$
  implies $v\prec_{N} w$.  Hence, $(u,v)$ is a shortcut in $N$.

  (ii) Suppose now that $v= w'$. Then, by construction, $(u, w)\in V(N)$.
  Moreover, we have $w\prec_{N'} w'\preceq_{N'} v'$. In particular, $u$,
  $w$, and $v'$ are all vertices in $N'$.  Hence, $w\preceq_{N'} v'$
  implies $w\preceq_{N} v'$ and, since all newly inserted arcs are
  incident with $w'$, $(u,v')\in V(N')$ is also an arc in $N$. In summary,
  therefore, $(u, w)\in V(N)$ and $w\preceq_{N} v'$ for
  $v'\in\child_N(u)\setminus\{w\}$, i.e., $(u,w)$ is a shortcut in $N$.

  Suppose now that $(u,v) \in E(N)$ is a shortcut in
  $N$, i.e., there is a $v'\in\child_{N}(u)\setminus\{v\}$ such that
  $v\prec_{N} v'$.

  (i') Suppose $v\ne w$. By construction, the arc $(u,v)$ still exists in
  $N'$. Since moreover $u,v,v'\in V(N)\subset V(N')$, $v\prec_{N} v'
  \prec_{N} u$ and $N$ and $N'$ are $(N',N)$-ancestor-preserving
  (cf.\ Lemma~\ref{lem:expand}), we have $v\prec_{N'} v' \prec_{N'}
  u$. Hence, $(u,v)$ must be a shortcut in $N'$.

  (ii') Finally suppose $v= w$. Observe that $u'\preceq_{N} v'(\prec_{N}
  u)$ for some $u'\in\parent_{N}(v)\setminus\{u\}$ and that $u, u',v'\in
  V(N)$.  In particular, it holds $u'\prec_{N'} u$ since $N$ and $N'$ are
  $(N',N)$-ancestor-preserving (cf.\ Lemma~\ref{lem:expand}).  By
  construction, we have $(u,w'),(u',w')\in V(N')$ and thus $w'\prec_{N'}
  u'$.  In summary, we have $w'\prec_{N'} u' \prec_{N'} u$, which implies
  that $(u,w')$ is a shortcut in $N'$.
\end{proof}

\begin{lemma}
  \label{lem:contract-common-block}
  Let $N$ be a network, $(w',w) \in E(N)$ be an arc that is not a
  shortcut, and $N'$ be the network obtained from $N$ by applying
  $\contract(w',w)$.  If two distinct vertices $u,v\in V(N')$ are in a
  common non-trivial block of $N'$, then either (i) $w\notin \{u,v\}$ and
  $u,v\in V(N)$ are in a common non-trivial block of $N$ or (ii) $w\in
  \{u,v\}$ and the unique element in $\{u,v\}\setminus \{w\}$ and $w'$ are
  in a common block in $N$.
\end{lemma}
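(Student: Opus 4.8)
The plan is to realize a cycle of $N'$ back inside $N$ by locally undoing the contraction at $w$. Recall that $\contract(w',w)$ deletes $w'$, keeps $w$, and changes only the arcs at $w$: an arc $(x,w)\in E(N')$ with $(x,w)\notin E(N)$ must have arisen from $(x,w')\in E(N)$ (since then $x\in\parent_N(w')\setminus\parent_N(w)$), and an arc $(w,y)\in E(N')$ with $(w,y)\notin E(N)$ must have arisen from $(w',y)\in E(N)$; every arc of $N'$ that is \emph{not} incident with $w$ is already an arc of $N$ (it joins vertices of $V(N')\subseteq V(N)\setminus\{w'\}$), and $w',w$ are joined by $(w',w)\in E(N)$. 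Since $u$ and $v$ lie in a common non-trivial block $B'$ of $N'$, Cor.~\ref{cor:block-cycle} furnishes an undirected cycle $K'$ of $B'$ through $u$ and $v$, with at least three vertices, all lying in $V(N')$.

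If $w\notin V(K')$, then no arc of $K'$ is incident with $w$, so every arc of $K'$ is already an arc of $N$; hence $K'$ is an undirected cycle of $N$ through $u$ and $v$, and these therefore lie in a common non-trivial block of $N$. As $w\notin V(K')\supseteq\{u,v\}$, this is case~(i). Now suppose $w\in V(K')$ and let $x,y$ denote the two distinct neighbours of $w$ on $K'$. The subpath $P'\coloneqq K'-w$ from $x$ to $y$ avoids both $w$ and the deleted vertex $w'$, so all of its arcs are arcs of $N$ and $P'$ is an $xy$-path in $N$. For the $K'$-arc joining $x$ and $w$: if it is an arc of $N$ then $x$ is $N$-adjacent to $w$, otherwise it is inherited and $x$ is $N$-adjacent to $w'$; write $a\in\{w,w'\}$ for the resulting $N$-neighbour of $x$, and define $b\in\{w,w'\}$ analogously for $y$. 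Splicing $P'$ together with the short $N$-path running from $x$ to $a$, then (if $a\ne b$) across the arc $(w',w)$ to $b$, and finally from $b$ to $y$, yields an undirected cycle $K$ in $N$ on vertex set $(V(K')\setminus\{w\})\cup S$ with $\emptyset\ne S\subseteq\{w,w'\}$; the detour vertices in $S$ are genuinely new to $P'$, since $V(P')\subseteq V(N')\setminus\{w\}$ while $w'\notin V(N')$.

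It remains to read off the two cases from $K$. If $w\notin\{u,v\}$, then $u,v\in V(K')\setminus\{w\}\subseteq V(K)$, so $u$ and $v$ lie on the common undirected cycle $K$ and hence in a common non-trivial block of $N$: case~(i). If $w\in\{u,v\}$, say $w=u$, then $v''\coloneqq v\in V(K')\setminus\{w\}\subseteq V(K)$ lies on $K$, and it suffices to show $w'\in V(K)$; by the construction $w'\in S$ precisely when at least one cycle-arc at $w$ is inherited (that is, $a=w'$ or $b=w'$), and in that event $v''$ and $w'$ lie on the common cycle $K$, giving case~(ii). The main obstacle is exactly this last point: forcing some cycle-arc incident with $w$ in $B'$ to be inherited from $w'$ rather than already present in $N$. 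I would attack it by exploiting that $B'$ is non-trivial and that all incidences of $w$ in $N'$ are distributed between $w$ and $w'$ in $N$: using the two internally vertex-disjoint $w$--$v''$ paths guaranteed by biconnectivity of $B'$, I would argue that a cycle through $w$ and $v''$ using only original arcs at $w$ can be rerouted through an inherited arc, thereby pulling $w'$ onto the lifted cycle. Pinning down this rerouting, together with the borderline configurations in which both neighbours of $w$ attach to the same vertex of $\{w,w'\}$, is the delicate step of the proof.
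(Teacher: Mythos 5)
Up to the point where you read off the two cases, your argument is correct and is essentially the paper's proof in different packaging: your three possibilities $S=\{w\}$, $S=\{w'\}$, and $S=\{w,w'\}$ correspond exactly to the paper's case ``all arcs of $K'$ are arcs of $N$'', its cases (1a)--(1c) (replace $w$ by $w'$ in $K'$), and its cases (2a)--(2b) (insert $w'$ beside $w$ using the arc $(w',w)$). The genuine gap is the step you yourself flag as delicate --- $w\in\{u,v\}$ while both $K'$-arcs at $w$ are original, i.e.\ $S=\{w\}$ --- and it is not merely unfinished: the rerouting you propose is impossible, because in that configuration conclusion (ii) can be outright false. Concretely, let $N$ have root $w'$ with the single arc $(w',w)$, plus arcs $(w,a)$, $(w,b)$, $(a,c)$, $(b,c)$. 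Then $(w',w)$ is not a shortcut, and $\contract(w',w)$ yields for $N'$ the undirected $4$-cycle on $\{w,a,b,c\}$. Take $u=w$ and $v=c$; these lie in a common non-trivial block of $N'$. Since $\parent_N(w')=\emptyset$ and $\child_N(w')=\{w\}$, the contraction creates no inherited arcs whatsoever, so no rerouting can ever pull $w'$ onto a cycle; indeed $w'$ has degree $1$ in $N$, its only block is the cut arc $(w',w)$, and hence $v=c$ and $w'$ share no block of $N$ at all.

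The paper's proof resolves this sub-case by drawing the dichotomy differently: when all arcs of $K'$ already lie in $N$, it simply concludes that $u$ and $v$ --- even if $w\in\{u,v\}$ --- lie in a common non-trivial block of $N$, and it invokes the ``remaining vertex together with $w'$'' alternative only in the cases where the lift genuinely forces $w'$ onto the cycle (its cases (1a)--(1c) and (2a)--(2b)). So what is actually established, and what the application in Lemma~\ref{lem:contract-block-contained} relies on, is the disjunction ``$u,v$ lie in a common non-trivial block of $N$, or $w\in\{u,v\}$ and the remaining vertex together with $w'$ lie in a common non-trivial block of $N$'', with the split governed by how the cycle lifts rather than by whether $w\in\{u,v\}$. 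You took the lemma's literal case split (conclusion (ii) must hold whenever $w\in\{u,v\}$) as the proof obligation, which is exactly what made the last step look delicate; it is in fact unattainable, and the fix is not a cleverer rerouting but to allow the block statement of (i) also when $w\in\{u,v\}$, as the paper's proof does.
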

\begin{proof}
  Suppose two distinct vertices $u,v\in V(N')=V(N)\setminus\{w'\}$ are in a
  common non-trivial block of $N'$.
  Thus, $u$ and $v$ are contained in an undirected cycle $K'$ in $N'$.
  If (the directed versions of) all arcs in $K'$ are arcs of $N$, then
  $K'$ is an undirected cycle in $N$, and thus, $u$ and $v$ are contained
  in a non-trivial block of $N$.  Now suppose that at least one arc $e$ in
  $K'$ is not contained in $N$.  By construction, $e$ must be of the form
  $(p,w)$ with $p\in \parent_N(w')\setminus\parent_N(w)$ or of the form
  $(w,c)$ with $c\in \child_N(w')\setminus\child_N(w)$.  Since such an arc
  is incident with $w$, the only arcs of $N'$ that possibly are not
  contained in $N$ are the two arcs incident with $w$ in $K'$. Hence,
  we have to consider the following cases:
  \begin{description}[noitemsep,nolistsep]
    \item[(1)] both of the incident arcs of $w$ in $K'$ are not contained in
    $N$ but all other arcs are contained in $N$,
    \begin{description}
      \item[(a)] $w$ is incident in $K'$ with two distinct parents $p_1,
      p_2\in\parent_N(w')\setminus\parent_N(w)$,
      \item[(b)] $w$ is incident in $K'$ with two distinct children $c_1,
      c_2\in\child_N(w')\setminus\child_N(w)$, or
      \item[(c)] $w$ is incident in $K'$ with a parent $p\in
      \parent_N(w')\setminus\parent_N(w)$ and a child $c\in
      \child_N(w')\setminus\child_N(w)$.
    \end{description}
    \item[(2)] exactly one of the arcs incident with $w$ in $K'$ is not
    contained in $N$, while all other arcs of $N'$ are contained in $N$.
    \begin{description}
      \item[(a)] $w$ is incident in $K'$ with a parent $p\in
      \parent_N(w')\setminus\parent_N(w)$, or
      \item[(b)] $w$ is incident in $K'$ with a child $c\in
      \child_N(w')\setminus\child_N(w)$.
    \end{description}
  \end{description}
  Observe that $w'\notin V(N')$ and thus $w'$ is not contained in $K'$.
  One therefore easily verifies that, in each of cases \textbf{(1a)},
  \textbf{(1b)}, and \textbf{(1c)}, replacing $w$ by $w'$ in $K'$ yields an
  undirected cycle $K$ in $N$.  If $w\notin \{u,v\}$, then $u$
  and $v$ are contained in $K$ and thus contained in a common non-trivial
  block of $N$.  If on the other hand $w\in \{u,v\}$, then the unique
  element in $\{u,v\}\setminus \{w\}$ and $w'$ are contained in $K$ and
  thus contained in a common non-trivial block of $N$.  In
  case~\textbf{(2a)}, $K'$ contains the arc $(p,w)$ by construction of
  $N'$.  Replacing this arc by vertex $w'$ and the arcs $(p,w'),(w',w)\in
  V(N)$ therefore yields an undirected cycle $K$ in $N$ that contains all
  vertices in $K'$.  In particular, therefore, $u,v\in V(K')$ are contained
  in a common non-trivial block in $N$.  The latter is also true in
  case~\textbf{(2b)} by similar arguments.
\end{proof}

From Lemma~\ref{lem:max-B-unique}, we obtain
\begin{corollary}
  \label{cor:block-identity}
  Let $N$ be a network and $u,v,w\in V(N)$ three distinct vertices.  If $u$
  and $v$ are contained in block $B_{uv}$, $u$ and $w$ are contained in
  block $B_{uw}$, and $v$ and $w$ are contained in block $B_{vw}$, then
  $B_{uv}=B_{uw}=B_{vw}$.
\end{corollary}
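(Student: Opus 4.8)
The plan is to argue by contradiction, first disposing of the case in which two of the three blocks already coincide, and then ruling out the possibility that $B_{uv}$, $B_{uw}$, and $B_{vw}$ are pairwise distinct by exhibiting a forbidden $\preceq_N$-relation among $u$, $v$, $w$.

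First I would handle the case that two of the three blocks are equal. Suppose, say, $B_{uv}=B_{uw}=:B$. Then $B$ contains all three vertices, since $u,v\in B=B_{uv}$ and $u,w\in B=B_{uw}$. In particular $B$ and $B_{vw}$ share the two distinct vertices $v$ and $w$, and since two distinct blocks of a graph share at most one vertex \cite[Prop.~4.1.19]{West:01}, this forces $B=B_{vw}$; hence all three blocks coincide. The argument is symmetric in the three blocks, so for the remainder I may assume that $B_{uv}$, $B_{uw}$, $B_{vw}$ are pairwise distinct and aim for a contradiction.

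Next I would use Lemma~\ref{lem:block-identity} to constrain the three vertices. The vertex $u$ lies in the two distinct blocks $B_{uv}$ and $B_{uw}$, so the contrapositive of Lemma~\ref{lem:block-identity} gives $u=\max B_{uv}$ or $u=\max B_{uw}$, where the maxima are well defined by Lemma~\ref{lem:max-B-unique}. In the first case $v\in B_{uv}$ yields $v\preceq_N u$, and in the second $w\in B_{uw}$ yields $w\preceq_N u$; since $u,v,w$ are distinct the relation is strict, so $u$ lies strictly $\preceq_N$-above one of $v,w$. Running the same argument for $v$ (in $B_{uv}$ and $B_{vw}$) and for $w$ (in $B_{uw}$ and $B_{vw}$) shows that each of the three vertices is strictly $\preceq_N$-above one of the other two.

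Finally I would extract the contradiction. Because $N$ is acyclic, $\preceq_N$ is a partial order, so the finite set $\{u,v,w\}$ has a $\preceq_N$-minimal element $m$. By the previous step $m$ is strictly above some other vertex $x\in\{u,v,w\}\setminus\{m\}$, i.e.\ $x\prec_N m$; but then $x$ is a strictly smaller element of $\{u,v,w\}$, contradicting the minimality of $m$. Hence the three blocks cannot be pairwise distinct, and together with the first step this proves $B_{uv}=B_{uw}=B_{vw}$. I expect the only point requiring care to be the middle step: one must invoke Lemma~\ref{lem:block-identity} in contrapositive form and keep track of strictness, so that ``lies above'' becomes the strict relation $x\prec_N m$ needed to contradict $\preceq_N$-minimality.
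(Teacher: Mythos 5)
Your proof is correct; it rests on the same ingredients as the paper's -- Lemma~\ref{lem:block-identity}, the uniqueness of block maxima from Lemma~\ref{lem:max-B-unique}, and the fact that two distinct blocks share at most one vertex -- but it deploys them along a genuinely different route. The paper argues directly, via nested cases on which of $u,v,w$ is the maximum of which block: in every feasible configuration it produces an equality of two of the three blocks by a positive application of Lemma~\ref{lem:block-identity} (so that all three vertices lie in a common block, after which the shared-two-vertices fact concludes, just as in your first step), and the single infeasible configuration is eliminated by the cyclic chain $u\prec_N w\prec_N v\prec_N u$. You instead settle the case of two coinciding blocks up front, then assume pairwise distinctness and use the contrapositive of Lemma~\ref{lem:block-identity} symmetrically: each of the three vertices must be the maximum of one of its two blocks, hence each lies strictly $\preceq_N$-above another of the three, which is impossible since the finite poset $(\{u,v,w\},\preceq_N)$ has a minimal element. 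The two contradictions are the same fact (acyclicity of $\preceq_N$) in different guises, but your symmetric formulation buys freedom from the paper's subcase bookkeeping, while the paper's direct version additionally exhibits, in each feasible case, which pair of blocks coincides. Your strictness concern is handled correctly: $v\preceq_N \max B_{uv}=u$ together with $v\neq u$ gives $v\prec_N u$.
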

\begin{proof}
  If $u\notin \{\max B_{uv}, \max B_{uw}\}$, then
  Lemma~\ref{lem:block-identity} immediately implies that $B_{uv}= B_{uw}$.
  Thus suppose (a) $u= \max B_{uv}$ or (b) $u= \max B_{uw}$.  In Case~(a),
  we have $v\prec_N u$ and $v\ne\max B_{uv}$.  If $v\ne\max B_{vw}$, then
  $B_{uv}= B_{vw}$ by Lemma~\ref{lem:block-identity}.  Now assume $v=\max
  B_{vw}$ which implies $w\prec_N v$ and $w\ne\max B_{vw}$.  If $w\ne\max
  B_{uw}$, then $B_{uw}= B_{vw}$ by Lemma~\ref{lem:block-identity}.  The
  case $w=\max B_{uw}$ is not possible since otherwise $u\prec_N w$, and
  thus $u\prec_N w\prec_N v\prec_N u$; a contradiction.  One argues
  similarly in Case~(b).  Hence, in all possible case, $u$, $v$, and $w$
  are contained in a common non-trivial block $B$.  Since each of $B_{uv}$,
  $B_{uw}$, and $B_{vw}$ shares two vertices with $B$, it holds
  $B=B_{uv}=B_{uw}=B_{vw}$.
\end{proof}

\begin{lemma}
  \label{lem:contract-block-contained}
  Let $N$ be a network, $(w',w) \in E(N)$ be an arc that is not a shortcut,
  and $N'$ be the network obtained from $N$ by applying $\contract(w',w)$.
  Let $B'$ be a non-trivial block of $N'$. Then there is a non-trivial
  block $B$ of $N$ with $V(B')\setminus \{w\} \subseteq V(B)$.  Moreover,
  $w\in V(B')$ and $w\notin V(B)$ imply $w'\in V(B)$.
  \par\noindent
  If $v\in V(B')\setminus \{w\}$ is a hybrid vertex and properly contained
  in $B'$, then $v$ is a properly contained hybrid vertex in $B$.  If $w$
  is a properly contained hybrid vertex in $B'$, then at least one of $w$
  and $w'$ is a properly contained hybrid vertex in $B$.
\end{lemma}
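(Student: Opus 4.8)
The plan is to handle the two set-inclusion claims first, then the two claims about properly contained hybrid vertices, drawing throughout on Lemma~\ref{lem:contract-common-block} and Cor.~\ref{cor:block-identity}. Observe first that $V(B')\subseteq V(N')=V(N)\setminus\{w'\}$, so every vertex of $B'$ lies in $V(N)$ and differs from $w'$, while $w\in V(N)$ as well. Since $B'$ is non-trivial it has at least three vertices, hence $W\coloneqq V(B')\setminus\{w\}$ has at least two. I would fix two distinct $u_0,v_0\in W$; by Lemma~\ref{lem:contract-common-block}(i) they lie in a common non-trivial block $B$ of $N$. For any further $z\in W$, applying Lemma~\ref{lem:contract-common-block}(i) to the pairs $\{z,u_0\}$ and $\{z,v_0\}$ and then Cor.~\ref{cor:block-identity} to the triple $u_0,v_0,z$ forces all three blocks to coincide with $B$, so $z\in V(B)$. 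This yields $W\subseteq V(B)$, i.e.\ $V(B')\setminus\{w\}\subseteq V(B)$.

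For the moreover-claim, assume $w\in V(B')$. As $B'$ is non-trivial, $w$ lies on an undirected cycle in $B'$ (Cor.~\ref{cor:block-cycle}) and thus has two distinct neighbours $a,b\in W\subseteq V(B)$. Lemma~\ref{lem:contract-common-block}(ii) applied to $\{w,a\}$ and to $\{w,b\}$ places $a,w'$ in a common block and $b,w'$ in a common block of $N$; together with $a,b\in V(B)$, Cor.~\ref{cor:block-identity} on the distinct triple $a,b,w'$ gives $w'\in V(B)$, which in particular covers the asserted implication ($w\in V(B')$ and $w\notin V(B)$ imply $w'\in V(B)$).

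Next I would treat a properly contained hybrid $v\in W$. Under $\contract(w',w)$ every newly created arc is incident with $w$, and deleting $w'$ can only remove in-arcs, so $\indeg_N(v)\ge\indeg_{N'}(v)$ for every $v\neq w$; hence a vertex that is hybrid in $N'$ is hybrid in $N$. If $v\in W$ is hybrid and properly contained in $B'$, then all its parents in $N'$ lie in $V(B')$, and since $\indeg_{N'}(v)\ge 2$ at least one such parent $p$ differs from $w$. Because $p\neq w$ and $v\neq w$, the arc $(p,v)$ is already an arc of $N$, and $p\in W\subseteq V(B)$. Thus $v$ and its $N$-parent $p$ both lie in $B$, so Lemma~\ref{lem:properly-contained} shows $v$ is a properly contained hybrid vertex of $B$.

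The final part, concerning $w$, is the step I expect to be the main obstacle, since $w$ itself need not lie in $V(B)$ a priori and one must recover a genuine hybrid of $N$. Suppose $w$ is hybrid and properly contained in $B'$, so all of $\parent_{N'}(w)=(\parent_N(w)\cup\parent_N(w'))\setminus\{w'\}$ lies in $W\subseteq V(B)$ and has size at least $2$; split it into $P_1\coloneqq\parent_N(w)\setminus\{w'\}$ and $P_2\coloneqq\parent_N(w')\setminus\parent_N(w)$. If $P_1=\emptyset$, then at least two distinct elements of $P_2$ are parents of $w'$ in $N$, so $w'$ is hybrid in $N$; together with $w'\in V(B)$ and such a parent in $V(B)$, Lemma~\ref{lem:properly-contained} makes $w'$ a properly contained hybrid of $B$. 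If $P_1\neq\emptyset$, I would pick $p\in P_1$; then $p$ and $w'$ (an $N$-parent of $w$ since $(w',w)\in E(N)$) are two distinct $N$-parents of $w$ lying in $V(B)$, so $w$ is hybrid in $N$. Were $w\notin V(B)$, the undirected path with endpoints $p,w'$ and inner vertex $w$ would meet $B$ only in its endpoints, whence Prop.~\ref{prop:H-path} would properly enlarge the biconnected subgraph $B$, contradicting maximality; thus $w\in V(B)$, and having a parent in $B$ it is a properly contained hybrid of $B$ by Lemma~\ref{lem:properly-contained}. In either case at least one of $w,w'$ is a properly contained hybrid vertex of $B$, completing the argument.
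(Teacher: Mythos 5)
Your proof is correct. For the two set-inclusion claims it coincides with the paper's argument: two vertices of $V(B')\setminus\{w\}$ pin down a non-trivial block $B$ of $N$ via Lemma~\ref{lem:contract-common-block}(i), every further vertex and then $w'$ are forced into $B$ via Cor.~\ref{cor:block-identity}; your remark that this yields $w'\in V(B)$ unconditionally whenever $w\in V(B')$ is exactly what the paper's computation also establishes (and you correctly reuse it later). The hybrid-vertex claims are where you diverge in execution, though not in decomposition. For $v\in V(B')\setminus\{w\}$ you observe that contraction never increases the indegree of a vertex other than $w$, so hybrid in $N'$ implies hybrid in $N$; the paper instead argues case-wise on two parents $p,p'$ of $v$, but both rest on the fact that every newly inserted arc is incident with $w$. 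For the final claim, the paper's outer case split is $w\in V(B)$ versus $w\notin V(B)$, and in the latter case it rules out an $N$-parent of $w$ other than $w'$ by combining Lemma~\ref{lem:hybrid-properly-contained} with the fact that blocks sharing two vertices coincide (Obs.~\ref{obs:identical-block}); you split instead on whether $\parent_N(w)\setminus\{w'\}$ is empty, and in the non-empty case you force $w\in V(B)$ directly by attaching the path $p,w,w'$ (whose endpoints $p,w'$ lie in $B$) and invoking Prop.~\ref{prop:H-path} against maximality of $B$. Your organization is slightly leaner — it avoids the case distinction on membership of $w$ in $B$ and replaces the block-coincidence step by an elementary ear-addition argument — whereas the paper keeps all block reasoning routed through its dedicated block lemmas; both routes are sound and of comparable length.
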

\begin{proof}
  Suppose $B'$ is a non-trivial block of $N'$ and thus, it contains at
  least three vertices. In particular, we can find two distinct vertices in
  $u,v$ such that $w\notin \{u,v\}$.  By
  Lemma~\ref{lem:contract-common-block}, $u,v\in V(N)$ are in a common
  non-trivial block $B\coloneqq B_{uv}$ of $N$.  Now suppose there is
  $u'\in V(B')\setminus \{w,u,v\}$.  By
  Lemma~\ref{lem:contract-common-block}, $u,u'\in V(N)$ and $v,u'\in V(N)$,
  resp., are in common non-trivial blocks $B_{uu'}$ and $B_{vu'}$ of $N$.
  By Cor.~\ref{cor:block-identity}, $u$, $v$, and $u'$ are contained in
  $B=B_{uv}=B_{uu'}=B_{vu'}$ of $N$.  Since $u'\in V(B')\setminus
  \{w,u,v\}$ was chosen arbitrarily and blocks that share two vertices are
  equal by Obs.~\ref{obs:identical-block}, we have $V(B')\setminus
  \{w\}\subseteq V(B)$.

  Now suppose $w\in V(B')$ and $w\notin V(B)$.  By
  Lemma~\ref{lem:expd-common-block}(ii), $u,w'\in V(N)$ and $v,w'\in V(N)$,
  resp., are in common non-trivial blocks $B_{uw'}$ and $B_{vw'}$ of $N$.
  By Cor.~\ref{cor:block-identity}, $w'$ is contained in
  $B_{uw'}=B_{vw'}=B_{uv}=B$.

  Suppose $v\in V(B')\setminus \{w\}$ is a hybrid vertex and properly
  contained in $B'$. By Lemma~\ref{lem:properly-contained}, all of the at
  least two vertices in $\parent_{N'}(v)$ are contained in $B'$. Hence, let
  $p, p'\in\parent_{N'}(v)$ be two distinct parents, and assume
  w.l.o.g.\ that $p\ne w$.  Note that $v,p \in V(B')\setminus
  \{w\}\subseteq V(B)$. Moreover, since all newly inserted arcs to obtain
  $N'$ from $N$ are incident with $w$, the arc $(p,v)\in E(N')$ is also an
  arc in $N$.  Now consider $p'$. If $(p',v)\in E(N)$, then $v$ is a
  hybrid vertex in $N$.  If $(p',v)\notin E(N)$, then, by construction of
  $N'$, we must have $p'=w$ and $v\in \child_{N}(w')\setminus
  \child_{N}(w)$. Hence, we have $(w',v)\in E(N)$.  Together with $w'\ne p$
  (since $p\in V(N')=V(N)\setminus\{w'\}$) this implies that $v$ is a
  hybrid vertex in $N$ also in this case.  Therefore and since
  $p\in\parent_N(v)$ and $v,p\in V(B)$, Lemma~\ref{lem:properly-contained}
  implies that $v$ is properly contained in $B$.

  Suppose $w$ is a properly contained hybrid vertex in $B'$.  By
  Lemma~\ref{lem:properly-contained}, all of the at least two vertices in
  $\parent_{N'}(w)$ are contained in $B'$.  We distinguish cases~(a) $w\in
  V(B)$ and (b) $w\notin V(B)$.\\
  \textit{Case~(a):} $w\in V(B)$.  Suppose first that there is $p\in
  \parent_{N'}(w) \cap \parent_{N}(w)$.  We have $p\in V(B')\setminus\{w\}
  \subseteq V(B)$.  Moreover, $w$ has at least the two distinct parents
  $w'$ and $p$ in $N$, i.e., $w$ is a hybrid vertex in $N$.  By
  Lemma~\ref{lem:hybrid-properly-contained} and since $w$ and its parent
  $p$ are both contained in $B$, $w$ is properly contained in $B$.  Suppose
  now that $\parent_{N'}(w) \cap \parent_{N}(w)=\emptyset$.  By
  construction of $N'$, this implies that all of the at least two vertices
  in $\parent_{N'}(w)$ must be vertices in $\parent_{N}(w')$.  Hence, $w'$
  is a hybrid vertex in $N$ with at least two parents $p$ and $p'$ that are
  parents of $w$ in $N'$ and thus contained in $V(B')\setminus\{w\}
  \subseteq V(B)$.  By Lemma~\ref{lem:hybrid-properly-contained},
  $\{w',p,p'\}\in V(\tilde{B})$ for some block $\tilde{B}$ of $N$. Since
  $B$ and $\tilde{B}$ are blocks in $N$ that share the two vertices $p$ and
  $p'$, we conclude $B=\tilde{B}$. In particular, $w'$ is a properly
  contained hybrid vertex in $B$.\\
  \textit{Case~(b):} $w\notin V(B)$.  We have already seen that this
  implies that $w'\in V(B)$.  Since $w$ is properly contained in $B'$, all
  of its at least two parents in $N'$ are also contained in $B'$. Let $p\in
  \parent_{N'}(w)$.  Suppose, for contradiction, that $p\in
  \parent_{N}(w)$. Then $w$ has at least the two distinct parents $w'$ and
  $p$ in $N$, i.e., $w$ is a hybrid vertex in $N$. By
  Lemma~\ref{lem:hybrid-properly-contained}, $\{w,w',p\}\in V(\tilde{B})$
  for some block $\tilde{B}$ of $N$. Since $B$ and $\tilde{B}$ are blocks
  in $N$ that share the two vertices $w'$ and $p$, we conclude
  $B=\tilde{B}$ and thus $w\in V(B)$; a contradiction.  Hence, $p\notin
  \parent_{N}(w)$. Together with $p\in \parent_{N'}(w)$, this implies $p\in
  \parent_{N}(w')$.  Since the latter is true for all of the at least two
  vertices $p'\in \parent_{N'}(w)$, $w'$ must be a hybrid vertex in $N$.
  Therefore, and because $p\in\parent_N(w')$ and $w',p\in V(B)$,
  Lemma~\ref{lem:properly-contained} implies that $w'$ is properly
  contained in $B$.
\end{proof}

We are now in the position to prove Lemma~\ref{lem:contract-level-k}.  To
recall, Lemma~\ref{lem:contract-level-k} states that $\contract(w',w)$ of a
non-shortcut arc $(w',w)$ in a level-$k$ network $N$ preserves the property
of the resulting network $N'$ to be level-$k$.

\begin{proof}[\it Proof of Lemma~\ref{lem:contract-level-k}]
  Suppose, for contraposition, that $N'$ is not level-$k$.  Hence, there is
  a block $B'$ in $N'$ that properly contains at least $k+1$ hybrid
  vertices. Denote the set of these vertices by $A$.  By
  Lemma~\ref{lem:contract-block-contained}, it holds $V(B')\setminus \{w\}
  \subseteq V(B)$ for some non-trivial block $B$ of $N$, and in particular,
  all vertices in $A\setminus \{w\}$ are properly contained hybrid vertices
  in $B$.  If $w\notin A$, then $B$ properly contains at least $k+1$ hybrid
  vertices.  Otherwise, $w$ is a properly contained hybrid vertex in $B'$
  and $B$ properly contains at least $k$ hybrid vertices in $A\setminus
  \{w\}$, and, by Lemma~\ref{lem:contract-block-contained}, at least on of
  $w,w'\notin A\setminus \{w\}$ is an additional properly contained hybrid
  vertex in $B$.  Therefore, the block $B$ in $N$ properly contains at
  least $k+1$ hybrid vertices in both cases, and thus, $N$ is not
  level-$k$.
\end{proof}

\begin{lemma}
  \label{lem:expd-common-block}
  Let $N$ be a network, $N'$ be the network obtained from $N$ by applying
  $\expand(w)$ for some $w\in V(N)$, and $w'$ be the unique vertex in
  $V(N')\setminus V(N)$.  If two distinct vertices $u,v\in V(N')$ are in a
  common non-trivial block of $N'$, then either (i) $w'\notin \{u,v\}$ and
  $u,v\in V(N)$ are in a common non-trivial block of $N$ or (ii) $w'\in
  \{u,v\}$ and $w$ and the element in $\{u,v\}\setminus \{w'\}$ are in a
  common block in $N$.
\end{lemma}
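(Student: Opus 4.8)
The plan is to reduce the statement to transporting a single undirected cycle back to $N$ under the ``collapse'' of $w'$ onto $w$. Recall from the definition of $\expand(w)$ (and the proof of Lemma~\ref{lem:expand}) that in $N'$ the new vertex $w'$ satisfies $\parent_{N'}(w')=\parent_N(w)$ and $\child_{N'}(w')=\{w\}$, while $w$ has the unique parent $w'$ and retains its children, so that $N$ is recovered from $N'$ by $\contract(w',w)$. First I would invoke Cor.~\ref{cor:block-cycle}: since $u,v$ lie in a common non-trivial block $B'$ of $N'$, they lie on a common undirected cycle $K'$ in $B'$ (in $N'$). The whole argument then consists in mapping $K'$ back to $N$ via the vertex map $\pi$ that sends $w'\mapsto w$ and fixes every other vertex, and reading off the conclusion; note that asking whether $w'\in\{u,v\}$ is exactly asking whether $\pi$ identifies one of $u,v$ with $w$ (and if $w'\notin\{u,v\}$ then $u,v\in V(N)$, since $w'$ is the only vertex of $V(N')\setminus V(N)$).

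Next I would analyse how $\pi$ acts on the edges of $K'$. Every edge of $K'$ not incident with $w'$ is already an edge of $N$; every edge $(p,w')$ of $K'$ has $p\in\parent_N(w)$ and is sent to the edge $(p,w)\in E(N)$; and the edge $(w',w)$, if present, collapses. Because $\outdeg_{N'}(w')=1$, the two edges of $K'$ incident with $w'$ are either both of the form $(p_i,w')$ with $p_1\neq p_2$, or one such edge together with $(w',w)$. In the three benign configurations---namely $w'\notin V(K')$; or $(w',w)\in E(K')$; or both incident edges are parent-edges but $w\notin V(K')$---the image $\pi(K')$ is again a single undirected cycle $K$ in $N$ (a short check using acyclicity rules out degenerate two-vertex images). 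Then $u,v$, or, if $w'\in\{u,v\}$, the vertex $w=\pi(w')$ together with the other one, lie on the common cycle $K$ and hence in a common non-trivial block of $N$, yielding (i) or (ii) as appropriate.

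The remaining---and genuinely delicate---case is the ``figure-eight'': $w'\in V(K')$ with both incident edges parent-edges $(p_1,w')$, $(p_2,w')$, and simultaneously $w\in V(K')$, whose two incident edges in $K'$ are then child-edges $(w,c_1)$, $(w,c_2)$. Here $\pi(K')$ is not a single cycle but two cycles $C_1$ (through $p_1,\dots,c_1,w$) and $C_2$ (through $p_2,\dots,c_2,w$) in $N$ meeting only in $w$. I expect this to be the main obstacle, and the key to resolving it is Lemma~\ref{lem:block-identity}: each $C_i$ lies in a block $B_i$ of $N$, and since inside $C_i$ the vertex $w$ has the in-neighbour $p_i\in V(B_i)$ (so $w\prec_N p_i$), we have $w\neq\max B_i$ for $i=1,2$; as $w$ lies in both blocks and is the maximum of neither, Lemma~\ref{lem:block-identity} forces $B_1=B_2=:B$. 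Thus all vertices of $C_1\cup C_2$, i.e.\ all of $\pi(V(K'))$, lie in the single non-trivial block $B$ of $N$, and conclusion (i) or (ii) follows exactly as before. Assembling the cases completes the proof.
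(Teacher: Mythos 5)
Your proposal is correct and takes essentially the same approach as the paper's proof: both transport the cycle $K'$ back to $N$ by collapsing $w'$ onto $w$, both isolate the same ``figure-eight'' configuration (two parent-edges at $w'$ and two child-edges at $w$) as the only non-trivial case, and both resolve it via Lemma~\ref{lem:block-identity} using the observation that $w\neq\max B_i$ because each of the two resulting blocks contains a parent of $w$. The only (cosmetic) difference is that your map $\pi$ handles conclusions (i) and (ii) uniformly, whereas the paper treats $w'\notin\{u,v\}$ in full and then notes that the case $w'\in\{u,v\}$ reuses the same constructions.
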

\begin{proof}
  Suppose two distinct vertices $u,v\in V(N')$ are in a common non-trivial
  block $B'$ of $N'$.  Then $u$ and $v$ lie in an undirected cycle $K'$ in
  $N'$.

  (i) Assume first that $w'\notin \{u,v\}$, and thus $u,v\in V(N)$.  If
  $w'$ is not contained in $K'$, then all arcs in $K'$ are also arcs in
  $N$ since all newly inserted arcs are incident with $w'$. Hence, $u$ and
  $v$ lie on an undirected cycle in $N$ and are thus contained in a common
  non-trivial block $B$ in this case. Now suppose that $K'$ contains $w'$.
  Assume that $w'$ is incident with its unique child $w$ in $K'$. Then, by
  construction of $N'$, the second vertex that is incident with $w'$ in
  $K'$ must be a vertex $p\in\parent_{N'}(w')=\parent_N(w)$. In particular,
  we have $(p,w)\in E(N)$ and $(p,w),(w,p)\notin E(N')$. The latter implies
  that $p$ and $w$ are not incident in $K'$ and therefore $K'$ contains at
  least one additional vertex $x\notin\{p,w',w\}$. Together with the fact
  that $N$ contains all arcs in $K'$ except $(p,w')$ and $(w',w)$, the
  latter arguments imply that there is an undirected cycle $K$ in $N$
  formed by the vertices in $V(K')\setminus \{w'\}$ and
  %    (the undirected versions of)
  the arcs in $(E(K')\setminus\{(p,w'),(w',w)\})\cup \{(p,w)\}$.  Hence,
  $u,v\in V(K')\setminus \{w'\}$ lie in a common non-trivial block of $N$.
  Now suppose $w'$ is not incident with its unique child $w$ in $K'$, and
  thus the two incident vertices in $K'$ must be two distinct element
  $p_1,p_2\in \parent_{N'}(w')=\parent_N(w)$.  Thus, we have
  $(p_1,w),(p_2,w)\in E(N)$.  Since $(p_1,w')$ and $(p_2,w')$ are the only
  arcs incident with $w'$ in $K'$, all other arcs in $K'$ are also arcs
  in $N$.  Thus consider the (not necessarily induced) subgraph $K$ of $N$
  formed by the vertices $(V(K')\setminus \{w'\})\cup \{w\}$ and arcs
  $(E(K')\setminus\{(p_1,w'),(p_2,w')\})\cup \{(p_1,w),(p_2,w)\}$.  If
  $w\notin V(K')$, then one easily verifies that $K$ is an undirected cycle in
  $N$ that contains $u$ and $v$ and thus they are contained in a common block
  $B$ of $N$.  On the other hand, if $w\in V(K')$, then its two incident
  vertices in $K'$ are two distinct elements
  $c_1,c_2\in\child_{N'}(w)=\child_{N}(w)$ since, by assumption, $w$ is not
  incident in $K'$ with its unique parent $w'$.  In this case, one easily
  verifies that $K$ consists of two undirected cycles that share only
  the vertex $w$ and each of the two cycles contains
  exactly one of $p_1,p_2\in \parent_N(w)$ and one of
  $c_1,c_2\in\child_{N}(w)$.  If $u$ and $v$ are contained in the same of
  these two cycles, then they are contained in a common non-trivial block
  of $N$.  Otherwise, they are contained in non-trivial blocks $B$ and
  $B'$, resp., that each contain $w$ and one of its parents. The latter
  implies that $w\notin \{\max B, \max B'\}$. Hence, we have $B=B'$ by
  Lemma~\ref{lem:block-identity}.

  (ii) Suppose now that $w'\in \{u,v\}$, say $v=w'$.  We can essentially
  re-use the arguments from case~(i), with exception of the case that $w'$
  is not contained in $K'$ (which is impossible since $w'=v\in V(K')$),
  since we always have constructed an undirected cycle in $N$ that contains
  $u,w\in V(N)$. Hence, they are contained in a non-trivial block of $N$.
\end{proof}

\begin{lemma}
  \label{lem:expand-block-contained}
  Let $N$ be a network, $N'$ be the network obtained from $N$ by applying
  $\expand(w)$ for some $w\in V(N)$, and $w'$ be the unique vertex in
  $V(N')\setminus V(N)$.
  Let $B'$ be a non-trivial block of $N'$. Then there is a non-trivial block
  $B$ of $N$ with $V(B')\setminus \{w'\} \subseteq V(B)$.
  Moreover, $w'\in V(B')$ implies $w\in V(B)$.

  If $v\in V(B')\setminus \{w'\}$ is a hybrid vertex and properly contained
  in $B'$, then $v$ is a properly contained hybrid vertex in $B$.
  If $w'\in V(B')$ and $w'$ is a hybrid vertex, then $w$ is a properly
  contained hybrid vertex in $B$.
\end{lemma}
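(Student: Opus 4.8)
The plan is to mirror the proof of Lemma~\ref{lem:contract-block-contained}, replacing the contraction lemma by its expansion counterpart, Lemma~\ref{lem:expd-common-block}, and exploiting the fact that expansion is a very mild modification: the only vertex whose in-neighborhood is altered is $w$ (which loses its parents to the new vertex $w'$), and the only vertex with a genuinely new in-neighborhood is $w'$ (which inherits $\parent_N(w)$). Throughout I will use Cor.~\ref{cor:block-identity} to glue pairwise block-memberships together and Obs.~\ref{obs:identical-block} to identify blocks sharing two vertices.

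For the first assertion, since $B'$ is non-trivial it contains an undirected cycle and hence at least three vertices, so I can pick two distinct $u,v\in V(B')\setminus\{w'\}$. By Lemma~\ref{lem:expd-common-block}(i) these lie in a common non-trivial block $B$ of $N$. For any further $u'\in V(B')\setminus\{w',u,v\}$, Lemma~\ref{lem:expd-common-block}(i) places $u,u'$ and $v,u'$ in common blocks, and Cor.~\ref{cor:block-identity} forces all three pairwise blocks to coincide with $B$; hence $u'\in V(B)$ and $V(B')\setminus\{w'\}\subseteq V(B)$. If moreover $w'\in V(B')$, I again pick distinct $u_1,u_2\in V(B')\setminus\{w'\}\subseteq V(B)$; should $w\in\{u_1,u_2\}$ we are done, and otherwise Lemma~\ref{lem:expd-common-block}(ii) applied to $\{w',u_1\}$ and $\{w',u_2\}$ together with Cor.~\ref{cor:block-identity} yields $w\in V(B)$.

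For the second assertion, the key observation is that a hybrid $v\in V(B')\setminus\{w'\}$ cannot equal $w$: after $\expand(w)$ the vertex $w$ has $w'$ as its unique parent and therefore indegree $1$, so it is not hybrid in $N'$. Consequently $v\notin\{w,w'\}$, its in-neighborhood is untouched by the expansion, so $\indeg_N(v)=\indeg_{N'}(v)>1$ and $v$ is hybrid in $N$. Being properly contained in $B'$, $v$ has some parent $p\in\parent_{N'}(v)=\parent_N(v)$ with $p\in V(B')$; since $w'$ is not a parent of any vertex other than $w$, we have $p\ne w'$, whence $p\in V(B')\setminus\{w'\}\subseteq V(B)$. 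Now $v,p\in V(B)$, $p\in\parent_N(v)$ and $v$ hybrid give, via Lemma~\ref{lem:properly-contained}, that $v$ is properly contained in $B$.

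For the third assertion, note first that $\indeg_{N'}(w')=|\parent_N(w)|=\indeg_N(w)$, so if $w'$ is hybrid in $N'$ then $w$ is hybrid in $N$. The decisive step, and the one I expect to be the main obstacle to keep clean, is to rule out $w'=\max B'$: since $w$ is the unique child of $w'$ we have $\outdeg_{N'}(w')=1$, whereas Cor.~\ref{cor:maxB-outdegree} forces $\outdeg_{N'}(\max B')\ge 2$ inside the non-trivial block $B'$; hence $w'\ne\max B'$. By Lemma~\ref{lem:properly-contained}, $w'$ is then properly contained in $B'$, so every parent in $\parent_{N'}(w')=\parent_N(w)$ lies in $B'$; picking such a parent $p$ (necessarily $p\ne w'$) gives $p\in V(B')\setminus\{w'\}\subseteq V(B)$. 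Together with $w\in V(B)$ from the first assertion, $p\in\parent_N(w)$ and $w$ hybrid, Lemma~\ref{lem:properly-contained} yields that $w$ is properly contained in $B$. The remaining work is routine bookkeeping of which arcs the expansion adds or removes, exactly as in the contraction proof but considerably shorter.
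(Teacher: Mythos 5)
Your proof is correct and follows essentially the same route as the paper's: the same selection of two vertices of $V(B')\setminus\{w'\}$, the same gluing of the pairwise block memberships supplied by Lemma~\ref{lem:expd-common-block} via Cor.~\ref{cor:block-identity} and Obs.~\ref{obs:identical-block}, and the same use of Lemma~\ref{lem:properly-contained} for both hybrid-vertex claims. The only (equally valid) deviation is cosmetic: to see that $w'$ is properly contained in $B'$ you rule out $w'=\max B'$ via $\outdeg_{N'}(w')=1$ and Cor.~\ref{cor:maxB-outdegree}, whereas the paper observes directly that the undirected cycle through $w'$ must use one of its parents.
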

\begin{proof}
  Suppose $B'$ is a non-trivial block of $N'$ and thus, it contains at
  least three vertices. In particular, we can find two distinct vertices in
  $u,v$ such that $w'\notin \{u,v\}$.  By
  Lemma~\ref{lem:expd-common-block}, $u,v\in V(N)$ are in a common
  non-trivial block $B\coloneqq B_{uv}$ of $N$.  Now suppose there is
  $u'\in V(B')\setminus \{w',u,v\}$.  By Lemma~\ref{lem:expd-common-block},
  $u,u'\in V(N)$ and $v,u'\in V(N)$, resp., are in common non-trivial
  blocks $B_{uu'}$ and $B_{vu'}$ of $N$.  By Cor.~\ref{cor:block-identity},
  $u$, $v$, and $u'$ are contained in $B= B_{uv}$.  Since $u'\in
  V(B')\setminus \{w',u,v\}$ was chosen arbitrarily and blocks that share
  two vertices are equal by Obs.~\ref{obs:identical-block}, we have
  $V(B')\setminus \{w'\}\subseteq V(B)$.

  Now suppose $w'\in V(B')$. If $w\in V(B')$, then we have already seen
  that $w\in V(B)$. Hence, suppose $w\notin V(B')$.  By
  Lemma~\ref{lem:expd-common-block}, $u,w\in V(N)$ and $v,w\in V(N)$,
  resp., are in common non-trivial blocks $B_{uw}$ and $B_{vw}$ of $N$.  By
  Cor.~\ref{cor:block-identity}, $w$ must also be contained in $B$.

  Suppose $v\in V(B')\setminus \{w'\}$ is a hybrid vertex and properly
  contained in $B'$. Since $w$ has a unique parent in $N'$, we have $v\ne
  w$ and thus $\parent_{N'}(v)=\parent_N(v)$.  Moreover, since $v$ is
  properly contained in $B'$, we have $\parent_{N'}(v)\in V(B')$ and thus
  $\parent_{N'}(v)\in V(B)$. Taken together, the latter arguments imply
  that $v\ne \max B$, i.e., $v$ is properly contained in $B$.

  Suppose $w'\in V(B')$ and $w'$ is a hybrid vertex. Since $w'$ has a
  unique child and must lie on an undirected cycle in $N'$, $w'$ must be
  properly contained in $B'$. Therefore, all of its at least two parent are
  also contained in $B'$ and thus in $B$. Together with the fact that $w\in
  V(B)$ and $\parent_{N'}(w')=\parent_N(w)$, we obtain that $w$ is a hybrid
  vertex and properly contained in $B$.
\end{proof}

We now prove Lemma~\ref{lem:expand-level-k}.
To recall,  Lemma~\ref{lem:expand-level-k} states that two networks
$N$ and $N'$ are always level-$k$ whenever $N'$ is obtained from
$N$ by applying $\expand(w)$ for some $w\in V(N)$ 
and at least one of them is level-$k$.
\begin{proof}[\it Proof of Lemma~\ref{lem:expand-level-k}]
  Denote by $w'$ the unique vertex in $V(N')\setminus V(N)$.
  Suppose that $N'$ is not level-$k$.  Hence, there is a block $B'$ in $N'$
  that properly contains at least $k+1$ hybrid vertices. Denote the set of
  these vertices by $A$.  Observe that $w\notin A$ since it has a single
  parent $w$ in $N'$.  By Lemma~\ref{lem:expand-block-contained}, it holds
  $V(B')\setminus \{w'\} \subseteq V(B)$ for some non-trivial block $B$ of
  $N$, and in particular, all vertices $A\setminus \{w'\}$ are properly
  contained hybrid vertices in $B$.  If $w'\notin A$, then $B$ properly
  contains at least $k+1$ hybrid vertices.  Otherwise, $B$ properly
  contains at least $k$ hybrid vertices in $A\setminus \{w'\}$ and, by
  Lemma~\ref{lem:expand-block-contained}, additionally the hybrid vertex
  $w\notin A$.  Therefore, the block $B$ in $N$ properly contains at least
  $k+1$ hybrid vertices in both cases, and thus, $N$ is not level-$k$.

  Conversely, suppose that $N'$ is level-$k$. Observe that $N$ is recovered
  from $N'$ by applying $\contract(w',w)$.  By
  Lemma~\ref{lem:contract-level-k}, therefore, $N$ is also level-$k$.
\end{proof}

\subsection{Closed Clustering Systems}
\label{sec:appx-closed}

As promised, we provide here a short proof of Lemma~\ref{lem:simple-closed},
which states that a clustering system $\mathscr{C}$ is closed if and only if
$A,B \in \mathscr{C}$ and $A\cap B\ne\emptyset$ implies
$A\cap B\in \mathscr{C}$.
\begin{proof}[\it Proof of Lemma~\ref{lem:simple-closed}]
  Let $\mathscr{C}$ be closed and let $A,B\in \mathscr{C}$.  Since $\cl$ is
  enlarging, we have $A\cap B\subseteq \cl(A\cap B)$.  Moreover, since
  $A\cap B\subseteq A,B$, it holds by Equ.\ \eqref{eq:closure} that
  $\cl(A\cap B)\subseteq A\cap B$. Hence, $\cl(A\cap B)=A\cap B$ and the
  definition of ``closed'' implies $A\cap B \in \mathscr{C}$.  Assume now
  that $C\cap C'\in \mathscr{C}$ for all $C,C'\in \mathscr{C}$.
  Equ.~\eqref{eq:closure} implies $\cl(A)=A$ for all $A\in \mathscr{C}$. It
  remains to show that, for $A\in 2^X$, $\cl(A)=A$ implies
  $A\in\mathscr{C}$. By Equ.~\eqref{eq:closure}, $\cl(A)$ can be written as
  the intersection $\cl(A)=\bigcap_{i=1}^k C_i$ of a finite number $k$ of
  clusters $C_i\in \mathscr{C}$ with $A\subseteq C_i$. Since
  $A\subseteq X\in \mathscr{C}$ we have $k\ge 1$.  If $k\in\{1,2\}$,
  $\cl(A) = A \in\mathcal{C}$ follows immediately from the assumption that
  $C\cap C'\in \mathscr{C}$ for all $C,C'\in \mathscr{C}$.  Otherwise, we
  can construct a series of intersections $C'_i$, $1\leq i\leq k$, by
  setting $C'_1\coloneqq C_1$ and $C'_j\coloneqq C'_{j-1}\cap C_{j}$ for
  $2\leq j\leq k$.  By definition, $C'_1=C_1\in \mathscr{C}$. Moreover, if
  $C'_{j-1} \in \mathscr{C}$, then $C'_{j} \in \mathscr{C}$ holds by the
  assumption and the fact that $C_j\in\mathscr{C}$ for all $2\le j\le
  k$. By induction, therefore, we obtain $C'_k\in\mathcal{C}$.  We have
  $C'_k=\cl(A)$ by construction and thus $\cl(A) = A \in\mathcal{C}$.
\end{proof}

\subsection{Algorithmic Details}
\label{sec:appx-algo}

We show here the correctness and runtime results for
\texttt{Check-L1-Compatibility}.

\begin{proof}[\it Proof of Thm.~\ref{thm:AlgL1Comp}]
Let $\mathscr{C}\subseteq 2^X$ be a clustering system.  Suppose first
that \texttt{Check-L1-Compatibility} returns a network.  In particular,
$\mathscr{C}$ satisfies (L) in this case.  By Obs.~\ref{obs:IC-L->closed}
and Lemma~\ref{lem:C-L<->IC-L}, $\IC$ is correctly computed and satisfies
(L). By definition, $\IC$ is closed.  The latter two arguments together
with Prop.~\ref{prop:unique-separated-level-1} imply that there is a
(separated, phylogenetic) level-1 network on $X$ such that
$\mathscr{C}\subseteq \IC = \mathscr{C}_N $.

Conversely, suppose that \texttt{Check-L1-Compatibility} returns
\texttt{``no solution''}.  Assume for contradiction that there is a level-1
network $N$ with $\mathscr{C}\subseteq \mathscr{C}_N$. By
Thm.~\ref{thm:L1}, $\mathscr{C}_N$ is closed and satisfies (L).  By
Cor.~\ref{cor:L->wH}, Property (L) implies that $\mathscr{C}$ is weak
hierarchy.  As shown in \cite{Bandelt:89} (right below Lemma~1), this in
turn implies that $\vert\mathscr{C}\vert \le \vert\mathscr{C}_N\vert \le
\binom{\vert X\vert+1}{2} = \binom{\vert X\vert}{2}+\vert X\vert$.  Inputs
larger than this bound are therefore correctly rejected immediately.  If,
on the other hand, Property~(L) is not satisfied for $\mathscr{C}$, then by
the definition of Property~(L), its superset $\mathscr{C}_N$ also violates
(L); a contradiction. Therefore, such a network $N$ cannot exist and
\texttt{Check-L1-Compatibility} correctly exits with a negative answer.

We now proceed to show that the algorithm can be implemented to run in
$O(\vert\mathscr{C}\vert^2\vert X \vert)\subseteq O(\vert X\vert^5)$ time.
To this end, we first enumerate the elements in $X$ from $1$ to $\vert
X\vert$. We then initialize a list $\mathcal{L}$ containing a bitvector
$b_i$ of size $\vert X\vert$ for each $C_i\in\mathscr{C}$ that has a
$1$-entry at position $j$ if and only if the $j$th element of $X$ is
contained in $C_j$, and a $0$-entry otherwise; requiring a total effort of
$O(\vert\mathscr{C}\vert \vert X\vert)$ time. Moreover, we create an
initially arcless auxiliary graph $G$ whose vertices are (unique
identifiers of) the clusters in $\mathscr{C}$. In the end, two clusters
$C_i,C_j\in\mathscr{C}$ will be connected by an arc in $G$ precisely if
they overlap. Moreover, every arc $\{C_i,C_j\}$ will be associated with a
pointer to a bitvector that represents the intersection $C_i\cap C_j$.

To achieve this, we proceed as follows for every pair $C_i, C_j\in
\mathscr{C}$.  We compute the bitvector $b$ corresponding to the
intersection $C_i\cap C_j$ as $b\leftarrow b_i \wedge b_j$.  The
clusters $C_i$ and $C_j$ overlap if and only if the number of $1$-entries
in $b$ is greater than $0$ but less than $\vert C_i\vert$ and $\vert
C_j\vert$ (the latter cardinalities can be pre-computed for all clusters in
$\mathscr{C}$).  If the clusters do overlap, then we continue as follows.
If $C_i$ is already adjacent with some other cluster in
$\mathscr{C}\setminus\{C_i, C_j\}$, then we pick $C_k$ among them
arbitrarily. Let $b'$ be the bitvector associated with the arc
$\{C_i,C_k\}$ in $G$.  By (L), $b$ and $b'$ must be equal, which can be
checked in $O(\vert X\vert)$.  If they are not equal, we can exit
\texttt{``no solution''}.  If existent, we proceed analogously with some
neighbor $C_l\in\mathscr{C}\setminus\{C_i, C_j\}$ of $C_j$ in $G$ with
associated bitvector $b''$.  We add the arc $\{C_i,C_j\}$ to $G$. If at
least one of $C_i$ and $C_j$ previously had a neighbor $C_k$ or $C_j$,
respectively, then we associate the bitvector $b'$ or $b''$, respectively,
with the new arc $\{C_i,C_j\}$ and discard $b$. Otherwise, we associate $b$
with $\{C_i,C_j\}$ and add $b$ to $\mathcal{L}$.  In summary, for a pair
$C_i,C_j\in\mathscr{C}$, we only perform a constant number of operations,
all of which require at most $O(\vert X\vert)$ time.  Hence, we obtain a
total effort of $O(\vert\mathscr{C}\vert^2 \vert X\vert)$ time for
processing all pairs of clusters in $\mathscr{C}$.

One easily verifies that $\mathscr{C}$ satisfies (L) if the algorithms
did not exit at this point because all overlaps are represented by arcs
in $G$ and, for each $C\in\mathscr{C}$, the intersections with its
overlapping clusters are stepwisely added and compared to the overlaps
that where already computed.  Moreover, we have added at most $\lfloor
\vert\mathscr{C}\vert / 2 \rfloor$ bitvectors to $\mathcal{L}$.  To see
this, recall that we only added a new bitvector $b$ (associated with arc
$\{C_i,C_j\}$) if neither of $C_i$ and $C_j$ had any other neighbors at
that point. Since each arc is incident with two clusters, we can clearly
do this at most $\lfloor \vert\mathscr{C}\vert / 2 \rfloor$ times until
all but possible one clusters are adjacent with some other cluster.
Hence, $\mathcal{L}$ still contains only $O(\vert\mathscr{C}\vert)$
bitvectors.  In particular, $\mathcal{L}$ contains all clusters in
$\IC=\mathscr{C} \cup \{C\cap C' \mid C,C'\in\mathscr{C}\text{
	overlap}\}$ (represented by their bitvectors) at least once.

We now sort $\mathcal{L}$ lexicographically in $O( \vert X\vert
\vert\mathscr{C}\vert \log \vert\mathscr{C}\vert) = O( \vert X\vert\,
\vert\mathscr{C}\vert \log \vert X\vert)$, where the additional factor $\vert
X\vert$
originates from the fact that each comparison of bitvectors requires
$O(\vert X\vert)$ comparisons of their entries.  Removal of all duplicates in
$\mathcal{L}$ now takes $O(\vert\mathscr{C}\vert \vert X \vert)$ time, e.g.\ by
iterating through the list and comparing each vector to the last added
bitvector in a newly constructed list.  The bitvectors in this final list
$\mathcal{L}$ are thus in a 1-to-1 correspondence with the clusters in
$\IC$.

If we are only interested in the existence of a (separated, phylogenetic)
level-$1$ network $N$ such that $\mathscr{C}\subseteq \mathscr{C}_{N}$
and the clustering system $\mathscr{C}_{N}$, then we can stop here after
a total effort of $O(\vert\mathscr{C}\vert^2 \vert X\vert)\subseteq O(\vert
X\vert^5)$ time.

Otherwise we continue with the construction of the inclusion order, i.e.,
the Hasse diagram of $\IC$.  In the following, the cluster $C_i\in \IC$
corresponds to the $i$th bitvector in $\mathcal{L}$.  We initialize a
$\vert\IC\vert\times \vert\IC\vert$-matrix $M$ with all zero entries
(requiring $O(\vert\mathscr{C}\vert^2)$ time and space).  In the end, we
will have $M_{i,j}=1$ if and only if $C_i\subsetneq C_j$.  Since
$\mathcal{L}$ is still sorted lexicographically, observe that, if
$C_i\subsetneq C_j$, then $i<j$.  Hence, it suffices to check for all $1\le
i < j \le \vert\IC\vert$ whether all $1$-entries in $b_i$ are also
$1$-entries in $b_j$ and, if so, set $M_{i,j}=1$. Finally, the adjacency
matrix of $\Hasse[\IC]$ is obtained from $M$ by \textsc{Transitive
  Reduction}. As shown in \cite{Aho:72} for DAGs this task has the same
complexity as \textsc{Transitive Closure}, which, in our setting, is
bounded $O(\vert X\vert^{2\omega})$. Here, $\omega\le 2.3729$ is the
``matrix multiplication constant''. Thus $\Hasse[\IC]$ can also be
constructed in quintic time. Thus we obtain a regular (and thus
phylogenetic) level-1 network $N\sim\Hasse[\IC]$ in $O(\vert X\vert^5)$
time.  This network can be modified to be separated by arc expansion at all
hybrid vertices $v$ with $\outdeg(v)$. Since $\Hasse[\IC]$ has $O(\vert
X\vert^2)$ vertices, for each vertex, one checks in constant time whether
$\indeg(v)>1$ and $\outdeg(v)>1$, and $\expand(v)$ requires constant effort
for moving the list of out-neighbors from $v$ to the newly inserted vertex,
the total effort for the modification of the Hasse diagram is bounded by
$O(\vert X\vert^2)$.
\end{proof}

\bibliographystyle{plain}      % mathematics and physical sciences
\bibliography{Cluster-N}

\end{document}